\def\firstsub{first}%
\def\arxivsub{arxiv}%
\def\submissiontype{arxiv}
\def\submissiontype{final}%
\setlist{nosep}        
\newcommand*{\tarrow}[2][]{\arrow[Rrightarrow, #1]{#2}\arrow[dash, shorten >= 0.5pt, #1]{#2}}
\definecolor{MyDarkBlue}{rgb}{0.15,0.25,0.45}
\begin{document}


\title{Extended Functorial Field Theories and Anomalies in Quantum Field Theories}



\author{Lukas M{\"u}ller} 


\maketitle
\pagenumbering{gobble}
\ifx\submissiontype\firstsub%
\mbox{}
\fi%

\begin{abstract}
We develop a general framework for the description of anomalies using extended functorial 
field theories extending previous work by Freed and Monnier. In this framework, anomalies are described by invertible
field theories in one dimension higher and anomalous field theories live on their boundaries. 

We provide precise mathematical definitions for all concepts involved using the language of symmetric monoidal bicategories.
In particular, field theories with anomalies will be described by symmetric monoidal transformations. The use of higher 
categorical concepts is necessary to capture the Hamiltonian picture of anomalies. 
The relation to the path integral and the Hamiltonian description of anomalies will be explained in 
detail. 
Furthermore, we discuss anomaly inflow in detail. 

We apply the general framework to the parity anomaly in fermionic systems coupled to background gauge and gravitational fields
on odd-dimensional spacetimes. We use the extension of the Atiyah-Patodi-Singer index theorem to manifolds with corners due to Loya and Melrose
to explicitly construct an extended invertible field theory encoding the anomaly. 
This allows us to compute explicitly the 2-cocycle
of the projective representation of the gauge symmetry on the quantum state space, which
is defined in a parity-symmetric way by suitably augmenting the standard chiral fermionic
Fock spaces with Lagrangian subspaces of zero modes of the Dirac Hamiltonian that naturally
appear in the index theorem. 

As a second application, we study discrete symmetries of Dijkgraaf-Witten theories and their gauging. 
Non-abelian group cohomology is used to describe discrete symmetries and we derive concrete conditions for such a symmetry to admit 't Hooft anomalies in terms of the Lyndon-Hochschild-Serre spectral sequence. 
We give an explicit realization of a discrete gauge theory with 't Hooft anomaly as a state on the boundary of a higher-dimensional Dijkgraaf-Witten theory using a relative version of 
the pushforward construction of Schweigert and Woike. 
\end{abstract}
\ifx\submissiontype\firstsub%
\mbox{}
\fi%

\clearpage
\vspace*{12cm}
\hfil
\textit{To everyone who has sparked, encouraged or supported my interest in mathematics and natural sciences.}
\clearpage
\ifx\submissiontype\firstsub%
\mbox{}
\clearpage
\fi%

\begin{acknowledgements}
First and foremost, I would like to thank my advisor Richard Szabo for his guidance and support throughout my PhD,  his valuable advice and
for giving me the opportunity to follow my own ideas without ever feeling left alone.    
  
Furthermore, I would like to thank Christoph Schweigert for helpful discussions related to my research, inviting me to Hamburg multiple times
and his constant support of my academic career.  
   
I am grateful to my other collaborators: to Severin Bunk for explaining various results and concepts related to bundle gerbes, interesting  discussions on
mathematical physics in general and organizing the reading group which introduced me to model categories, 
to  L\'{o}r\'{a}nt  Szegedy for a fun collaboration on area dependent field theories and 2-dimensional Yang-Mills theory and 
to Lukas Woike, for listening to all my ideas on physics and mathematics and his valuable input into all my projects. 

I would like to thank Ehud Meir for inviting me to Aberdeen and 
helpful discussions on spectral sequences and topological field theories, and Nathalie Wahl for showing 
interest in my work, inviting me to Copenhagen and many interesting discussions on graph models and 
equivariant field theories.  

I would like to thank Simon Lentner, Ingo Runkel, Lennart Schmidt, Thomas Wasserman and Tim Weelinck for helpful discussions on my research projects.  

I am grateful to Lennart Schmidt and Lukas Woike for helpful comments on a draft of this thesis.  
I would like to thank all PhD students in the mathematical physics group at Heriot-Watt University for organising and 
participating in regular reading groups on a variety of topics.

I am thankful to Christian Sämann and Konrad Waldorf who agreed to be my examiners.

My work is supported by the Doctoral Training Grant ST/N509099/1 from the UK Science and Technology
Facilities Council (STFC).
\end{acknowledgements}
\ifx\submissiontype\firstsub%
\mbox{}
\fi%

\ifx\submissiontype\arxivsub%
\relax
\else%
\includepdf[pages={1}]{researchthesissubmission.pdf}
\ifx\submissiontype\firstsub%
\mbox{}
\fi%
\fi%

\clearpage
\setcounter{page}{1}
\pagenumbering{roman}

\tableofcontents

\clearpage
\pagenumbering{arabic}
\chapter{Introduction}
This thesis is concerned with the mathematical description of anomalies in quantum field theories. In the introduction we sketch parts of the physical background motivating our work and the mathematical
formalism of functorial field theories used throughout. Afterwards, we give an informal summary of our results.   

\section{Anomalies in quantum field theories}\label{Sec: Anomalies}
Usually quantum field theories are constructed from classical field 
theories via a process called ``quantization".  A classical field theory on a
manifold $M$ is specified by a collection of fields $\mathcal{F}(M)$ and a ``local"
action functional $S\colon \mathcal{F}(M) \longrightarrow \R$ which usually depends
on additional geometric structures on $M$ such as a metric (see e.g.~\cite[Chapter 5]{ClassicalFields}).  
The physical states of the system are described by those field configurations which extremise the
action and are usually solutions to a set of partial differential equations. 
The set of classical observables $\mathcal{O}$ consists of functions on the space of solutions
to these differential equations.\footnote{In general, the space of solution is a geometrically
ill-behaved object and should be replaced by the derived space of solutions~\cite{FA2}. In physical terms this is related to the BV description of classical field theories~\cite{Batalin:1984jr,FA2,BVLoo}.}

In the path integral formulation of quantum field theory~\cite[Chapter 9]{PSQFT} the quantum theory corresponding 
to a classical field theory is formally described by the integral over the space of classical fields
\begin{align}\label{Eq: path integral}
Z(M)=\int\displaylimits_{\psi \in \mathcal{F}(M)}\exp \left( \tfrac{\iu}{\hbar} S(\psi)\right) \mathcal{D} \psi \ \
\end{align}
 and the expectation value of an observable $f \in \mathcal{O}$ can be computed by inserting
 the observable into~\eqref{Eq: path integral}:
\begin{align}
\langle f \rangle = \int\displaylimits_{\psi \in \mathcal{F}(M)} f(\psi) \exp \left( \tfrac{\iu}{\hbar} S(\psi)\right) \mathcal{D} \psi \ \ .
\end{align} 
These expressions are in general mathematically ill-defined. In perturbative quantum field
theory the path integral can be interpreted as a formal power series in 
$\hbar$~\cite{RGCostello}. However, the terms appearing in the perturbation series 
are divergent in general and need to be renormalized. A choice of a way to perform this renormalization is called
a \emph{renormalization scheme}.  

The need to renormalize the divergences makes it unclear whether all properties of a classical 
field theory continue to hold in the quantum theory. Indeed, it turns out that
symmetries are in general not preserved under quantization. 
Symmetries of a classical field theories 
are transformations of the space of fields which leave the action invariant. A symmetry is 
called \emph{anomalous} if there is no renormalization scheme which extends the symmetry
to the quantum theory. By Noether's theorem~\cite{Noether:1918zz} there is for every classical symmetry a 
conserved quantity $Q^\mu$ satisfying (we use Einstein summation convention)
\begin{align}\label{EQ: Sym}
\partial_\mu Q^\mu = 0 
\end{align} 
on solutions to the equations of motion. Anomalies now manifest by a violation of
Equation \eqref{EQ: Sym} 
\begin{align}
\langle \partial_\mu Q^\mu \rangle = \alpha \ \ 
\end{align} 
in the quantized theory.
The function $\alpha$ does depend on the choice of renormalization scheme, and an
anomaly is present if $\alpha$ is non zero in every renormalization scheme. 
One of the first examples of an anomaly is the Adler-Bell-Jackiw (ABJ) anomaly~\cite{Adler:1969gk,Bell1969}. 
It appears in a quantum field theory defined on $\R^4$ equipped with its canonical Lorentzian metric $\eta_{\mu \nu}$ of
signature $(-,+,+,+)$ and a $U(1)$ background gauge field $A_\mu$. The dynamical field of the theory is a four component 
spinor $\psi_\mu$. The classical field theory is described by the action 
\begin{align}\label{EQ: Action QED}
S = \int_{\R^4} \bar{\psi}  \gamma^\mu (\iu \partial_\mu + e A_\mu )  \psi  \ \ ,  
\end{align}   
where $e$ is the electric charge, $\gamma_0, \dots , \gamma_3$ are the usual gamma matrices~\cite[Section 3.4]{PSQFT}, $\bar{\psi}=
\psi^{\dagger}\gamma^0$ and we used the metric $\eta_{\mu\nu}$ to raise and lower indices. The action~\eqref{EQ: Action QED} admits 
the symmetry 
\begin{align}
\psi \mapsto \exp(\iu \lambda \gamma_5) \psi \ \ ,
\end{align} 
where $\gamma_5= \iu \gamma_0\gamma_1 \gamma_2 \gamma_3 $ and $\lambda$ is a real constant. The corresponding conserved current
is
\begin{align}
j_\mu = \bar{\psi}\gamma_\mu \gamma_5 \psi \ \ .
\end{align}
A direct Feynman diagram computation shows~\cite[Section 4.3.1]{Bertlmann:1996xk} that this current is not preserved at the quantum level:
\begin{align}
\langle \partial_\mu j^\mu \rangle = \frac{e^2}{16 \pi^2} \epsilon^{\mu,\nu,\alpha, \beta} F_{\mu \nu} F_{\alpha, \beta} \ \ ,
\end{align} 
where $F=\dd A$ is the field strength. 
The ABJ anomaly can be used to explain the observed decay of a neutral pion into two photons which would be forbidden by 
the symmetry otherwise.  
This shows that anomalies of global symmetries are necessary to accurately describe physical observations. 
In contrast gauge symmetries can never be anomalous in a consistent 
quantum field theory, since 
the formulation of gauge theories depends on the existence of gauge 
symmetry and it is not possible (even at a physical level of rigour) to quantize gauge
theories with anomalies. Hence all gauge anomalies need to cancel each other out in
physically reasonable theories. This poses some restrictions on the properties of 
the matter content in the standard model, see e.g.~\cite[Section 4.9]{Bertlmann:1996xk}. 

In Chapter~\ref{Sec: General Theory} we will develop a framework for the mathematical description of anomalies using functorial field theories.
We will illustrate this general framework in Chapter~\ref{Cha: Index} and Chapter~\ref{Chapter: t Hooft} by applying it 
to two examples. The first example will be the parity anomaly of chiral gauge theories
on odd dimensional spacetimes. On the 3 dimensional manifold $M=\R^3$ the parity anomaly takes the following 
form: Equation~\eqref{EQ: Action QED} 
can be adapted to 3 dimensions by replacing the 4-dimensional gamma matrices with their 3-dimensional 
version:  
\begin{align}
S = \int_{\R^3} \bar{\psi} \D   \psi \ \ ,
\end{align} 
with the \emph{Dirac operator} $\D =  \gamma^\mu (\iu \partial_\mu + e A_\mu )$.
This theory admits a time-reversal symmetry $T\psi(t,x,y)= \gamma_0 \psi(-t,x_1,x_2)$ which turns out 
to be anomalous. 
So far our discussion took place in Lorentzian signature. However, for the derivation of
the parity anomaly on arbitrary spacetimes Euclidean signature is more convenient and we will use it from now on. 
Formally, the path integral on a spin manifold $M$ equipped with 
a background gauge field is the determinant of the Dirac operator 
\begin{align}\label{EQ: partition function parity} 
Z(M, A_\mu)= \det (\D)= \prod_{\lambda \in \text{spec} (\D) } \lambda \ \ .    
\end{align}
Orientation reversal acts on the partition function by complex conjugation.
Since all eigenvalues of $\D$ are real, the path integral is formally invariant with respect to this symmetry. 
However, the determinant in Equation~\eqref{EQ: partition function parity} is divergent and hence
needs to be renormalized. The parity anomaly now manifests itself as the statement that it is impossible 
to renormalise the determinant in such a way that the path integral is real and invariant under gauge transformations. 
Let us fix a renormalization such that $Z(M,A_\mu)$ is real and positive for a fixed background gauge field $A_\mu$. 
Furthermore, let $A'_\mu$ be an equivalent gauge field. We can compute the sign difference between $Z(M,A_\mu)$ and
$Z(M,A'_\mu)$ by following the path $(1-t)A_\mu +t A'_\mu$ in the space of background fields and changing the sign
every time an eigenvalue of $\D$ passes through zero. In general, this ``spectral flow" might lead to a non-trivial 
sign change breaking gauge invariance. The spectral flow can be computed from the index of a Dirac operator 
on the mapping torus constructed from $[0,1]\times M $ by gluing the ends together using the gauge transformation
to construct a background gauge field on $S^1\times M$~\cite{WittenFermionicPathInt}.      
 
The second class of anomalies we study are 't Hooft anomalies of Dijkgraaf-Witten
theories. Dijkgraaf-Witten
theories are gauge theories with finite gauge group $D$. The space of fields 
is the space $\Bun_D(M)$ of principal $D$-bundles which automatically carry a unique 
connection, since the Lie algebra of $D$ is $0$. The space of gauge equivalence classes 
of principal $D$-bundles is finite which makes it possible to define the path integral
\eqref{Eq: path integral} without the need to remove infinities. For this reason
any classical symmetry can be extended to the quantum theory. However, these 
symmetries can still have 't Hooft anomalies~\cite{Kapustin:Symmetries}. 't Hooft anomalies are slightly different in flavour to the anomalies discussed above.
There is no problem in extending a symmetry with 't Hooft anomaly to a global symmetry of the quantum theory.
However, the anomaly is an obstruction to gauging the symmetry, i.e.\  coupling the theory
to non-trivial background gauge fields for the symmetry group. Hence these anomalies 
are only visible in non-trivial backgrounds. Dijkgraaf-Witten theories 
can be mathematically rigorously defined and hence provide an ideal toy model 
for the mathematical study of 't Hooft anomalies.   

In the Hamiltonian picture of quantum field theory anomalies manifest themself in terms of 
a projective representation of the symmetry group on the Hilbert space $\mathcal{H}$ of the theory~\cite{Nelson:1984gu}.
In principle, projective symmetry actions do not cause any problems in a quantum theory since a physical state is described
by a ray in the Hilbert space. However, 
a projective action of a gauge symmetry makes it impossible to implement the Gauss law, which states that physical states
need to be invariant under the action of the gauge symmetry on the Hilbert space. This is the Hamiltonian analogue
of the statement that gauge theories with anomalous gauge symmetry are inconsistent. 
 
\section{Anomalies and symmetry-protected topological phases}
At the end of the last century it was realised that quantum phases of matter
exist which cannot be described by Landau's theory of symmetry breaking. 
Instead, these phases can be distinguished by `topological order'
parameters which prevent them from being deformed to a trivial system
whose ground state is a factorized state.
Recently, there has been renewed interest in anomalies, because of their relation to these
topological phases of matter. The structures studied in this thesis are 
closely related to and motivated by the point of view on anomalies in the context of condensed matter 
physics.
 
Since their discovery over 30 years ago, immense progress in understanding and 
classifying topological phases has been made. 
For instance, there exists a classification for non-interacting gapped
systems in 
terms of twisted equivariant K-theory~\cite{FreedMoore} (see also~\cite{Bunk2017}). 

A fruitful approach to the study of gapped interacting systems is to 
consider the effective low-energy (long-range) continuum theory of 
a lattice Hamiltonian model. Usually these field theories are topological.
A famous example is the effective 
description of the integer quantum Hall effect in terms of 
Chern-Simons gauge theory. In this sense a gapped quantum phase may be thought of as a path-connected
component of the moduli stack of topological quantum field theories.
However, in the interacting case no complete classification exists. For this reason 
one usually restricts to tractable subclasses. Anomalies in $n$-dimensional field theories 
are closely related to 
`short-range entangled' phases \cite{Chen:2010gda} in $n+1$-dimensions. 
A gapped phase $\Psi$ is short-range entangled if there exists a 
phase $\Psi^{-1}$ such that the `stacked' phase $\Psi \otimes \Psi^{-1}$
can be deformed by an adiabatic transformation of the Hamiltonian to a
trivial product state without closing the energy gap between the
ground state and the first excited state. 
The stacking operation of topological phases corresponds to the tensor product   
of their low-energy effective topological field theories. 
A topological field theory is called invertible if it admits an inverse with respect to 
the tensor product. This observation motivates a classification of
short-range entangled phases in terms of invertible topological field 
theories~\cite{Freed:2014eja, FreedHopkins}. 

Let $G$ be a group.
In the case of an additional global $G$-symmetry, a non-trivial short-range entangled
phase may be trivial when the symmetry is ignored. 
Such a phase is called 
`$G$-symmetry-protected'~\cite{Chen:2010gda,Chen:2011pg}.   
A $G$-symmetry-protected phase can be understood by studying its topological
response to non-trivial background $G$-gauge fields, which is called
`gauging' the $G$-symmetry. 
For a finite symmetry group $G$, the low-energy effective field theories 
are $G$-equivariant topological field theories~\cite{Kapustin:2015uma}. 
The correspondence between topological field theories and symmetry-protected topological phases of matter is discussed in e.g.~\cite{Wen:2013oza,Kapustin:2015uma,Gaiotto:2015zta,WittenFermionicPathInt}.
Classical Dijkgraaf-Witten theories provide a particularly tractable
class of invertible $G$-equivariant topological field theories. The corresponding lattice Hamiltonian
models have been constructed in e.g.~\cite{PhysRevB.92.045101,Cong2017,Cong:2017ffh,Wang2018}. 
They are classified by group cohomology. The corresponding classification 
of topological phases is called group cohomological classification~\cite{Levin2012}.
However, this is not a 
complete classification of symmetry-protected phases and more refined
classifications have been proposed, see e.g.~\cite{Kapustin:2015uma, FreedHopkins, Gaiotto:2017zba, PhysRevX.8.011055}.

An essential feature of symmetry-protected topological phases is that they exhibit
`topologically protected' boundary states. These boundary states can 
be effectively described by anomalous quantum field theories. The gapped bulk system
is then characterised by gapless boundary states, such as the chiral
quantum Hall edge states, which exhibit gauge or gravitational
anomalies; conversely, an $n-1$-dimensional system whose ground state topological
order is anomalous can only exist as the boundary of an $n$-dimensional
topological phase. While the boundary quantum field theory on its own
suffers from anomalies, the symmetry-protected boundary states are described by considering the anomalous 
theory `relative' to the higher-dimensional bulk theory, where it becomes a non-anomalous quantum field theory 
under the `bulk-boundary correspondence'~\cite{Ryu:2010ah,Ryu:2012he} in which the boundary states undertake anomaly 
inflow from the bulk field theory~\cite{Callan:1984sa,Faddeev:1986pc}. 
The standard examples are provided by topological insulators corresponding to the parity anomaly which are protected by fermion number conservation and time-reversal symmetry ($G=U(1)\times\mathbb{Z}_2$)~\cite{Hasan:2010xy,Qi:2011zya}. 
The presence of non-trivial global anomalies forces the boundary theory 
to be non-trivial and topologically protected.
Reversing this logic, it follows that $n{+}1$-dimensional invertible field 
theories should classify the possible anomalies in $n$ dimensions~\cite{Wen:2013oza}.

\section{Modern perspective on anomalies}
In the situations just described, a field theory with anomaly is 
well-defined as a theory living on the boundary of an invertible quantum field
theory in one dimension higher. 
The modern perspective on anomalies is that anomalies in $n-1$-dimensional quantum field theories
can be completely described by invertible $n$-dimensional field theories~\cite{FreedAnomalies,MonnierHamiltionianAnomalies, doi:10.1002/prop.201910012, WittenFermionicPathInt}. We have already seen a glance of this correspondence when we related the 
parity anomaly to the index of the Dirac operator on a mapping torus in one dimension higher. This 
perspective has the advantage that tools from the study of invertible field theories can be applied to anomalies.
For example, the partition function of an 
invertible field theory is a cobordism invariant~\cite{Freed:2004yc,Yonekura:2018ufj}: two $n$-dimensional manifolds $M$ and $M'$ with background fields 
are \emph{cobordant} if there exist an $n+1$-dimensional manifold $X$ such that $\partial X= M \sqcup -M'$ and the  
background fields on $M$ and $M'$ can be extended to $X$. The cobordism group $\Omega_n(\mathcal{F})$, depending on the
background fields $\mathcal{F}$, consists of equivalence classes of cobordant manifolds. The multiplication is given by the disjoint
union of manifolds. The partition function of an invertible field theory can now be described by a group homomorphism 
$\Omega_n(\mathcal{F})\longrightarrow U(1)$. 
This perspective is extremely helpful in arguing that certain anomaly field theories are trivial, because one only 
needs to check that the partition function vanishes on a hopefully simple set of generators of $\Omega_n(\mathcal{F})$. 
In practice it is even often possible to show that $\Omega_n(\mathcal{F})$ is zero and hence all anomalies 
vanish. Furthermore there are powerful algebraic tools for the computation of cobordism groups, such as the 
Atiyah-Hirzebruch spectral sequence~\cite{MR0139181}. We refer to~\cite{G-Etxebarria2019} for recent applications of these ideas to particle physics and to~\cite{MonnierMAnomalies,freed2019mtheory} for applications to M-theory. 

Knowledge of the partition function of the invertible field theory is enough to show the absence of anomalies. However,
to provide a concrete description of a theory with anomaly, such as the boundary states of a topological phase of matter, further information about the theory is required. In the next section we introduce the mathematical formulation of 
functorial field theories used throughout this thesis and sketch how anomalies can be described in this framework.

\section{Extended functorial field theories}
Functorial (quantum) field theory is one attempt to mathematically rigorously capture 
(some of) the structures of quantum field theory.  
The idea is to give an axiomatic framework for the partition function and state space of a
quantum field theory. Recall, that formally the partition function $Z(M)$ on an 
$n$-dimensional manifold $M$ is calculated by
the Feynman path integral of an exponentiated action functional over the space of
dynamical field configurations on $M$; so far there is no mathematically 
well-defined theory of such path integration in general. 
The axioms of functorial field theories are derived from the properties that
such an integration would satisfy in the case that the action
functional is an integral of a local Lagrangian density on $M$.\footnote{In the case of discrete gauge theory there exists
	a well-defined integration 
	theory (see for example Appendix~\ref{Sec: Homotop Groupoid}) which satisfies the axioms of a functorial field theory.}
A quantum field theory should also assign 
a Hilbert space of states $Z(\Sigma)$ 
to every $n{-}1$-dimensional manifold $\Sigma$. They satisfy
$Z(\Sigma\sqcup\Sigma')\cong Z(\Sigma)\otimes_\C Z(\Sigma')$, i.e.\  the
state space of non-interacting systems is given by the tensor product
of the corresponding Hilbert spaces.
In any quantum field theory there exists a time evolution 
operator (propagator)
\begin{align}
Z([t_0,t_1]\times \Sigma)\colon Z(\Sigma) \longrightarrow Z(\Sigma)
\end{align}         
from time $t_0$ to $t_1$. We think of this operator as associated to the 
cylinder $[t_0,t_1]\times \Sigma$. They satisfy 
$Z([t_1,t_2]\times \Sigma) \circ Z([t_0,t_1]\times \Sigma)=Z([t_0,t_2]\times \Sigma)$. 
The path integral should also allow for the construction of a more general operator 
$Z(M)\colon Z(\Sigma_-)\longrightarrow Z(\Sigma_+)$ for every manifold $M$
with boundary $\Sigma_-\sqcup \Sigma_+$, such that the gluing of
manifolds corresponds to the composition of linear maps. Such a manifold is called a \emph{cobordism} 
from $\Sigma_-$ to $\Sigma_+$.

These considerations motivate the definition of a functorial field theory,
generalising Atiyah's definition of topological field 
theories~\cite{Atiyah1988} and Segal's definition of conformal field
theories~\cite{Segal1988}, as a symmetric monoidal functor 
\begin{align}
Z \colon \Cob_n^\mathscr{F} \longrightarrow \Hilb \ ,
\end{align}    
where $\Cob_n^\mathscr{F}$ is a category modelling physical spacetimes with
non-dynamical background fields $\mathscr{F}$ and $\Hilb$ is the category of
complex Hilbert spaces. 
Roughly speaking, $\Cob_n^\mathscr{F}$ contains closed $n{-}1$-dimensional manifolds 
with background fields as objects, 
and as morphisms the $n$-dimensional cobordisms as well as additional limit morphisms corresponding
to diffeomorphisms which are compatible with the background fields.
The additional morphisms encode symmetries.  We give a precise definition in Section~\ref{Sec: Non-extended field theories}.  
Evaluating $Z$ on a closed $n$-dimensional manifold $M$ gives rise to a linear
map $\C \cong Z(\varnothing)\longrightarrow Z(\varnothing) \cong \C$ which 
can be identified with a complex number $Z(M)$, the partition function of 
$Z$ on $M$. 
From a mathematical standpoint this definition can be thought of as a prescription for computing a manifold invariant 
$Z(M)$ by cutting manifolds into simpler pieces and studying the quantum field theory
on these pieces.

We now turn our attention to the description of anomalies.
The partition function of an $n{-}1$-dimensional 
quantum field theory $Z$ with anomaly
described by an invertible field theory $\Aa \colon \Cob_n^\mathscr{F} \longrightarrow \Hilb$ 
on an $n{-}1$-dimensional manifold $\Sigma$ takes values in the one-dimensional vector space $\Aa(\Sigma)$, instead of $\C$. 
It is possible to pick a non-canonical isomorphism $\Aa(\Sigma)\cong \C$
to identify the partition function with a complex number. The group of
symmetries acts non-trivially on $\Aa(\Sigma)$ encoding the breaking of 
the symmetry in the quantum field theory $Z$. 

To also incorporate the description of the state space of $Z$ on an 
$n{-}2$-dimensional manifold $S$ we need to promote $\Aa$ to an extended field
theory which assigns $\C$-linear categories to $n{-}2$-dimensional manifolds
such that the anomalous state space $Z(S)$ can be considered as an object of $\Aa(S)$. 
In other words, $\Aa$ should be an extended functorial field theory, i.e.\  a 
symmetric monoidal 2-functor 
\begin{align}
\Aa \colon \Cob_{n,n-1,n-2}^\mathscr{F} \longrightarrow \Tvs \ ,
\end{align}  
where $\Cob_{n,n-1,n-2}^\mathscr{F}$ is a bicategorical extension of
$\Cob_{n}^\mathscr{F}$; see Section~\ref{Sec: Cob bicat} for details. 
There are different possible choices for the target bicategory. For simplicity we 
restrict ourselves to Kapranov-Voevodsky(KV) 2-vector spaces \cite{KV94}. 

Requiring that $\Aa$ is an invertible field theory implies that there is 
a non-canonical equivalence of categories $\Aa(S)\cong \fvs$ which
allows one
to identify the state space of the anomalous theory with a vector space. 
We can subsume the ideas sketched above in the following concise definition:
A \emph{quantum field theory with anomaly} is a natural symmetric monoidal 2-transformation
\begin{align}
Z\colon 1 \Longrightarrow \tr\, \Aa
\end{align}
between a trivial field theory and a certain truncation of $\Aa$; see 
Section~\ref{Sec: GT Anomalies} for details.
This formalism allows one to compute the 2-cocycle twisting the projective
representation of the symmetry group on the state space completely in 
terms of the extended field theory $\Aa$~\cite{MonnierHamiltionianAnomalies}.
Anomalous theories formulated in this way are a special case of relative field
theories~\cite{RelativeQFT} and are closely related to twisted quantum field 
theories~\cite{Stolz:2011zj,Johnson-Freyd:2017ykw}. 
The present thesis develops this framework for the description of anomalies via 
functorial field theories following previous work by~\cite{FreedAnomalies, MonnierHamiltionianAnomalies} and applies
it to the two examples mentioned in Section~\ref{Sec: Anomalies}. We shall now give an overview of our constructions
and findings. 

\section{Summary of results and outline}\label{Sec: Outline}
In Chapter~\ref{Sec: General Theory} which is based on the publications~\cite{parity, tHooft}, we develop the general theory underlying the following examples. 
In the first part of the chapter we define extended functorial field theories. 
One of the technical difficulties related to giving a concrete definition
is the construction of the higher cobordism category equipped with additional structure.
For cobordisms with tangential structure there exists an $(\infty,n)$-categorical definition~\cite{Lurie2009a,CS}. A categorical version with arbitrary background fields
taking into account families of manifolds has been defined by Stolz and Teichner~\cite{Stolz:2011zj}.
A bicategory of cobordisms equipped with elements of topological stacks is constructed in \cite{schommer2011classification}.
One of the main technical accomplishments of this part of Chapter~\ref{Sec: General Theory} is the explicit
construction of a geometric cobordism bicategory
$\ECobF$ which includes arbitrary
background fields in the form of a general stack\footnote{In~\cite{RelativeQFT} the more general situation of simplicial sheaves is considered.} $\mathscr{F}$
(Section~\ref{Sec: Cob bicat}). Although this is only
a slight generalisation of previous constructions, it is still quite
technically complicated, and its explicit form makes all of our
statements precise. We show how the definition of 
an extended functorial field theory as a 2-functor 
\begin{align}
\Za\colon \ECobF \longrightarrow \Tvs
\end{align}
encodes some of the properties of quantum field theories with a focus on the implementation of symmetries via
limit morphisms. 
We use the symmetric monoidal bicategory $\Tvs$ of KV 2-vector spaces as the target for our field theories.
A KV 2-vector space is an additive semi-simple $\C$-linear category with finitely many isomorphism classes of simple objects.   

Building on this bicategory, we then use
the theory of symmetric monoidal bicategories
following~\cite{schommer2011classification} and the ideas
of~\cite{MonnierHamiltionianAnomalies} to work out the concrete
form of the anomalous quantum field theories sketched above; this is described in
Section~\ref{Sec: GT Anomalies}. 
We will show in Corollary~\ref{Coro: Relative for 1 induces ordinary} that a field theory with trivial 
anomaly $\Aa \cong \mbf1$ is nothing but an ordinary field theory. 
In Section~\ref{Sec: P anomaly actions} we explain the relation to more traditional approaches to 
the description of anomalies in terms of non-trivial line bundles and gerbes over the space of 
field configurations.  
Furthermore, the relation to projective representations of the gauge group
in~\cite{MonnierHamiltionianAnomalies}, and its extension to projective groupoid
representations following~\cite{BoundaryConditionsTQFT}, will be explained. 
We conclude Chapter~\ref{Sec: General Theory} with an abstract 
description of anomaly inflow, i.e.\  the coupling of bulk and boundary degrees of freedom, at the level of partition functions and state spaces in Section~\ref{Sec: Anomaly inflow}: 
let $M$ be a manifold with boundary $\partial M= \Sigma$ and $\Za\colon 1 \Longrightarrow \tr \Aa$ an anomalous quantum 
field theory with anomaly theory $\Aa \colon \ECobF \longrightarrow \Tvs$. The partition function of $\Za$ on $\Sigma$
is an element $\Za(\Sigma)\in \Aa(\Sigma)$. Evaluating the anomaly field theory $\Aa$ on $M$ induces a linear map 
$\Aa(M)\colon \Aa(\Sigma)\longrightarrow \Aa(\emptyset)\cong \C $. This allows us to define the partition function of the
combined system as the complex number $\Aa(M)[\Za(\Sigma)]\in \C$.
Similarly, the state space of $\Za$ on an $n-2$-dimensional manifold $S$ is an element of the linear category $\Aa(S)$ and
the choice of a manifold $\Sigma$ with boundary $\partial \Sigma =-S$ allows us to define a combined state space
$\Aa(\Sigma)[\Za(S)]\in \Aa(\emptyset)\cong \fvs$. 
Proposition~\ref{Prop: bulk boundary} and Theorem~\ref{Thm: Bulk boundary} show that the combined system does 
not suffer from any anomalies. The coupling of bulk and boundary degrees of freedom depends on the full quantum field theory 
$\Aa$ describing the anomaly and not just its truncation.   

Chapter~\ref{Cha: Index} is concerned with the construction of a
concrete example of this general formalism describing the parity
anomaly in odd spacetime dimensions and is based on the publication~\cite{parity}. 
As the parity anomaly is related
to an index in one
dimension higher~\cite{Niemi,AlvarezGaume,WittenFermionicPathInt}, this
suggests that quantum field theories with parity anomaly should take
values in an extended field theory constructed from index theory.\footnote{This
naturally fits in with the classification of topological insulators
and superconductors using index theory and K-theory,
see~\cite{Ertem:2017lni} for a recent exposition of this.} We
build such a theory using the index theory for manifolds with corners
developed in~\cite{LoyaMelrose,Loyaindex}, which extends the well-known
Atiyah-Patodi-Singer index theorem~\cite{APS} to manifolds with
corners of codimension~$2$. Our construction produces an extended
quantum field theory $\mathcal{A}_{\rm parity}^\zeta$ depending on a
complex parameter $\zeta\in\C^\times$ in any even spacetime dimension
$n$; for $\zeta=-1$ this theory describes the parity anomaly in odd
spacetime dimensions.  

To exemplify how our constructions fit into the
usual treatments of the parity anomaly from the path integral
perspective, we first consider in
Section~\ref{Sec: path integral parity} the simpler construction of an ordinary
(unextended) invertible quantum field theory $\CobF\longrightarrow \fvs$ from the usual
Atiyah-Patodi-Singer index theorem for even-dimensional manifolds with boundary reviewed in Section~\ref{Sec: APS}. We show that the definition of the
partition function $Z_{\rm parity}^\zeta$ as a natural symmetric
monoidal transformation implies that the complex number $Z_{\rm
  parity}^\zeta(M)$ transforms under a gauge transformation $\phi:M\longrightarrow
M$ by multiplication with a 1-cocycle of the gauge group given by $\zeta$ to a power determined by the index
of the Dirac operator on the corresponding mapping cylinder
$\mathfrak{M}(\phi)$. For
$\zeta=-1$ this is precisely the same gauge anomaly that arises from the spectral flow of edge states under
adiabatic evolution signalling the presence
of the global parity
anomaly~\cite{Redlich,AlvarezGaume,WittenFermionicPathInt}, which is
a result of the sign ambiguity in the definition of the fermion
path integral in odd spacetime dimension as explained in Section~\ref{Sec: Anomalies}. We further illustrate how
the bulk-boundary correspondence from Section~\ref{Sec: Anomaly inflow} recovers a construction from~\cite{WittenFermionicPathInt}.

A key feature of the Hamiltonian formalism defined by our construction
of the extended quantum field theory $\mathcal{A}_{\rm parity}^\zeta$
is that the index of a Dirac operator on a manifold with corners (which we recall in Section~\ref{Sec: Index2}) 
depends on the choice of a unitary self-adjoint chirality-odd
endomorphism of the kernel of the induced Dirac operator on all
corners, whose positive eigenspace defines a Lagrangian subspace of
the kernel with respect to its natural symplectic structure. We assemble
all possible choices into a linear category $\Aa_{\rm
  parity}^\zeta(S)$ assigned to
$n-2$-dimensional manifolds $S$ by $\Aa^\zeta_{\rm parity}$. The
index theorem for manifolds with corners splits into a sum of a bulk integral over
the Atiyah-Singer curvature form and boundary contributions depending on
the endomorphisms. We use these boundary
terms to define the theory $\Aa_{\rm parity}^\zeta$ on 1-morphisms,
i.e.\  on $n-1$-dimensional manifolds $\Sigma$; the general idea is to use
categorical coends\footnote{We will define coends in Section~\ref{Sec: Extended Indext FQFT}.} to treat all possible boundary conditions at the same
time. The index theorem then induces a natural transformation between
linear functors, defining the theory $\Aa_{\rm parity}^\zeta$ on
2-morphisms, i.e.\  on $n$-dimensional manifolds.

A crucial ingredient
in the construction of the invertible extended field theory $\Aa_{\rm
  parity}^\zeta$ in Section~\ref{Sec: Extended Indext FQFT} is a natural linear map
\begin{align} 
\Phi_{T_0,T_1}(\Sigma_0,\Sigma_1):\Aa_{\rm parity}^\zeta(\Sigma_1)\circ \Aa_{\rm
  parity}^\zeta(\Sigma_0)(T_0) \longrightarrow \Aa_{\rm
  parity}^\zeta(\Sigma_1\circ \Sigma_0)(T_0)
\end{align} 
for every pair of 1-morphisms $\Sigma_0:S_0\longrightarrow S_1$ and
$\Sigma_1:S_1 \longrightarrow S_2$ with corresponding endomorphisms $T_i$ on
the corner manifolds $S_i$; it forms the components of a natural
linear isomorphism $\Phi$ which is associative. A lot of information
about the parity anomaly is contained in this map. The
construction of $\Aa_{\rm parity}^\zeta$ allows us to fix endomorphisms
for concrete calculations and still have a theory which is independent of these
choices. Viewing a field
theory with parity anomaly as a theory $Z_{\rm parity}^\zeta$ relative
to $\Aa_{\rm parity}^\zeta$ in the sense explained before, we then get a
vector space of quantum states $Z_{\rm parity}^\zeta(S) \in \Aa_{\rm parity}^\zeta$ for
every $n-2$-dimensional manifold $S$; the groupoid of gauge
transformations $\mathsf{Sym}(S)$ only acts projectively on this
space. Denoting this projective representation by $\rho$, for any pair
of gauge transformations $\phi_1,\phi_2:S \longrightarrow S$ one finds
\begin{align}
\rho(\phi_2)\circ\rho(\phi_1)=\Phi_{T_1,T_2}\big(\mathfrak{M}(\phi_1),
\mathfrak{M}(\phi_2)\big) \ \rho(\phi_2\circ\phi_1) \ ,
\end{align} 
where $\mathfrak{M}(\phi_i)$ is the mapping cylinder of $\phi_i$. Using
results of~\cite{RelationEtaInvariants,LeschWojciechowski}, we can
calculate the corresponding 2-cocycle $\alpha_{\phi_1,\phi_2}$
appearing in the conventional Hamiltonian description of anomalies~\cite{Faddeev:1984jp,Faddeev:1985iz,Mickelsson:1983xi} in
terms of the action of gauge transformations on Lagrangian subspaces
of the kernel of the Dirac Hamiltonian on $S$; the explicit
expression can be found in Equation~\eqref{eq:2cocycleexplicit}. 

In Chapter~\ref{Chapter: t Hooft}, which is based on the publications~\cite{tHooft,EHQFT}, we give a mathematical description of 
symmetries of Dijkgraaf-Witten theories and their gauging 
in the framework of functorial 
field theory which is motivated by physical considerations~\cite{Kapustin:Symmetries}. 
Let $D$ be a finite group and $n$ a natural number.
The possible topological actions for $n$-dimensional Dijkgraaf-Witten
theories with gauge group $D$ are classified by the group cohomology
of $D$ or equivalently by the singular cohomology of 
the classifying space $BD$ with coefficients 
in $U(1)$ \cite{DijkgraafWitten,FreedQuinn}. Let $\omega \in Z^n(BD;U(1))$ be an $n$-cocycle
and $M$ an $n$-dimensional manifold. 
Let $P$ be a $D$-gauge field on $M$ with classifying map 
$\psi_P\colon M \longrightarrow BD$.
The action of the Dijkgraaf-Witten theory $E_\omega$ evaluated at $P$
is given by
\begin{align}
\exp(2\pi \,\iu\, S_{\text{DW}})\coloneqq \int_M\, \psi_P^*\, \omega \ .
\end{align} 
The quantum theory can be defined by appropriately summing 
over isomorphism classes of $D$-bundles. We present a construction of (equivariant) Dijkgraaf-Witten theories
as extended functorial field theories 
from the parallel transport of higher flat gerbes in Section~\ref{Sec: Parallel transport} using 
the orbifold construction of Schweigert and Woike~\cite{OFK, EOFK}. 
The classical field theory will be defined on the bordism category $\EDCob$ constructed by specifying the background 
fields $\F$ to consist of orientations and principal $D$-bundles described by maps into $BD$. The only background
structure required for the quantum theory is an orientation.
We also analyse the algebraic structures underlying the resulting extended functorial field theory in 2 and 3-dimensions. 
We will show that in 2-dimensions the gauge theory is completely described 
by the $\omega$-twisted representation theory of $D$. In 3-dimensions the theory can be described 
in terms of the representation theory of the $\omega$-twisted Drinfeld double of $D$~\cite{Twisted_Drinfeld_double,TwistedDWandGerbs}. 

In Section~\ref{Sec: DW Sym} we study discrete symmetries of Dijkgraaf-Witten theories. 
In general, a physical symmetry group $G$ acts on gauge fields only up to gauge transformations. 
Since for finite gauge groups, gauge transformations can be naturally identified with 
homotopies of classifying maps, we define such an action as a
homotopy coherent action of $G$ on 
$BD$ (Definition~\ref{Def: homotopy Coherent action}). We show that,
up to equivalence, homotopy 
coherent actions on $BD$ are described by non-abelian group 2-cocycles. If $D$ is abelian,
this description agrees with~\cite{Kapustin:Symmetries}. 
Non-abelian 2-cocycles classify extensions of $G$ by $D$:
\begin{align}
1\longrightarrow D \overset{\iota}{\longrightarrow} \widehat{G} \overset{\lambda}{\longrightarrow} G \longrightarrow 1 \ .
\end{align} 
This extension has a natural physical interpretation: It describes how to 
combine $D$- and $G$-gauge fields into a single $\widehat{G}$-gauge field. When the extension is non-trivial, i.e.\  $\widehat{G}$ is not a product group $D\times G$, one says that the $G$-symmetry is `fractionalized'~\cite{Wang:2017loc}.

A homotopy coherent action on $BD$ induces a homotopy coherent action on the 
collection of classical $D$-gauge theories. 
Homotopy fixed points of this action are defined to be classical 
field theories with $G$-symmetry 
(Definition~\ref{Def: Field theory with k. symmetry}).
An essential feature of homotopy fixed points is that they are a
structure, not a property.  
In Proposition~\ref{Prop: Classical symmetry}
we show that if the topological action is preserved by the action of $G$ 
(Definition~\ref{Def: Preserved}), 
then the corresponding Dijkgraaf-Witten theory can be equipped with 
a homotopy fixed point structure.

An internal symmetry of a quantum field theory acts on its Hilbert space
of states. This motivates the definition of a functorial quantum
  field theory with {internal $G$-symmetry} as a functor 
\begin{align}
\Cob_n^\mathscr{F} \longrightarrow G\text{-}\mathsf{Rep}
\end{align}  
to the category $G\text{-}\mathsf{Rep}$ of representations 
of $G$.   
We show in Proposition \ref{Prop: symmetry extends to quantum} that 
classical symmetries of Dijkgraaf-Witten theories induce internal symmetries 
of the quantized theory. 
This shows that discrete gauge theories are anomaly-free in the sense that all symmetries 
extend to the quantum level. 

{'t Hooft anomalies} appear 
as an obstruction to gauging the $G$-symmetry, 
i.e.\  to coupling it to 
non-trivial background gauge fields (Definition \ref{Def: Gauging}). 
Gauging the $G$-symmetry can be achieved by finding a 
topological action $\widehat{\omega}$ for a $\widehat{G}$-gauge theory which restricts to $\omega$ and 
performing a path integral over
$D$-gauge fields. 
Mathematically, this can be described by       
the equivariant Dijkgraaf-Witten theories introduced in Section~\ref{Sec: Equivariant DW}. 
In Theorem \ref{Theorem: Gauging} we prove that 
the equivariant Dijkgraaf-Witten theory corresponding to $\widehat{\omega}$ gauges
the $G$-symmetry. We discuss the gauging of discrete symmetries
in Section \ref{Sec: Gauging}.

However, in general it might be impossible to find a
topological action which restricts correctly. In this 
case we say that the corresponding symmetry has a 
't Hooft anomaly. 
The obstructions for $\widehat{\omega}$ to exist are encoded
in the Lyndon-Hochschild-Serre spectral sequence.
For an $n$-dimensional field theory there are $n$ obstructions
which need to vanish. 
In Proposition \ref{Prop: Obstructions} we show that if all 
obstructions except the last one vanish then there exists an $n{+}1$-dimensional     
topological action $\theta $ for a discrete $G$-gauge theory,
together with an $n$-cochain $\omega'$ in $C^n(B\widehat{G}; U(1))$ satisfying
$\iota^* \omega' = \omega$ and $\delta \omega' = \lambda^* \theta$.
These obstructions are studied in Section \ref{Sec: Obstruction}. In Section~\ref{Sec: Fully extended}
we present some ideas on possible interpretations of the obstructions from the point of view of fully extended
topological quantum field theories and defects.

Based on a relative version of the push construction from~\cite{OFK} we construct a boundary 
quantum field theory $Z_{\omega'}$ encoding the anomaly in Section~\ref{Sec: Boundary DW} if 
all obstructions beside the last one vanish. 
Let $\lambda \colon \widehat{G} \longrightarrow G$ be a group homomorphism and 
$\Za_1, \Za_2 \colon \EGCob \longrightarrow \Tvs$ extended field theories. The group homomorphism $\lambda$ induces a 2-functor 
$\lambda \colon \EWGCob \longrightarrow \EGCob$. The relative push construction allows us to construct from a relative 
field theory $Z\colon \tr \lambda^*\Za_1 \Longrightarrow \tr \lambda^* \Za_2$ a relative field theory 
$\lambda_* Z \colon \tr \Za_1 \Longrightarrow \tr  \Za_2 $. In the case that $\Za_1=\Za_2=\mbf1$ 
this construction reduces to the one given in~\cite{OFK}. 
The relative push construction is a bit involved and hence we refer to Section~\ref{Sec: State space} for details.
However, let us give an informal description of the partition function of $Z_{\omega'}$ here.
The fact that $\omega'$ is not closed implies that 
$\int_M\, \psi_{\widehat P}^*\,{\omega'}$
is not gauge-invariant for a general $\widehat{G}$-bundle $\widehat P $ on $M$. 
Under a gauge transformation 
$\widehat{h}\colon \widehat P \longrightarrow \widehat P'$
the value of $\int_M\, \psi_{\widehat P}^*\,{\omega'}$
changes by multiplication with\footnote{This integral is not actually
well-defined as a complex number, see Section \ref{Sec: DW via Orbifold} for details.
We ignore this subtlety in the present section.} 
$\int_{[0,1]\times M}\, \widehat{h}^* \delta {\omega'}$, 
where we consider $\widehat{h}$ as a homotopy $[0,1]\times M \longrightarrow
B\widehat G$.
We can rewrite this integral as $\int_{[0,1]\times M}\, (\lambda_*\widehat{h}\,)^* {\theta}$.
This is exactly the value of $E_\theta$ evaluated on $\lambda_*\widehat{h}$,
which shows that the anomaly is controlled by the bulk classical gauge theory 
$E_\theta$. 
The rough idea for the construction of $Z_{\omega'}$ is to modify the partial 
orbifold construction used in Section \ref{Sec: Gauging} in a way suited
to the construction of boundary states. 
Let us fix a $G$-bundle $P$ on $M$. To define the partition function 
we want to perform an integration over the preimage of $P$ under $\lambda_*$.
However, in the presence of gauge transformations, requiring two bundles to be
the same is not natural. Hence we use the homotopy fibre 
$\lambda_*^{-1}[P]$ as a groupoid with objects given by pairs $(\widehat P, h)$ of a 
$\widehat G$-bundle $\widehat P$ and a gauge transformation 
$h\colon \lambda_*\widehat{P} \longrightarrow P $. Morphisms are gauge 
transformations $\widehat{h} \colon \widehat{P} \longrightarrow \widehat{P}'$
which are compatible
with $h$ and $h'$.
We show that 
\begin{align}
Z_{\omega'}(M)\coloneqq \int_M\, \psi_{\widehat{P}}^*\,\omega' \ \int_{[0,1]\times M}\, h^*\theta
\end{align}           
is gauge-invariant with respect to morphisms in $\lambda_*^{-1}[P]$. 
We define the partition function of $Z_{\omega'}$ on $M$ as\footnote{We recall integration over essentially finite groupoids in Appendix~\ref{Sec: Homotop Groupoid}.} 
\begin{align}
Z_{\omega'}(M) := \int_{\lambda_*^{-1}[P]}\, L_{\omega'}(M) \ .
\end{align}  
The state space of the anomalous theory is constructed in Section~\ref{Sec: State space}.
The groupoid of symmetries acts only projectively on this state space.
Using Theorem~\ref{Theorem: Transgression} we show
that the 2-cocycle twisting this
projective representation is the transgression of $\theta$ to the groupoid
of $G$-bundles. With this construction we provide an explicit
demonstration of the anomaly inflow mechanism developed in Section~\ref{Sec: Anomaly inflow} at the level of both
partition functions and state spaces, which renders the composite bulk-boundary 
field theory free from anomalies.

Appendix~\ref{Chap: Bicat} outlines the basic definitions and our conventions related to symmetric monoidal bicategories.
In Appendix~\ref{Sec: Homotop Groupoid} we collect some background material on the canonical model structure on the 
category of groupoids, homotopy (co)limits, stacks and integration over finite groupoids. Model categorical language 
is not required to understand the main part of the thesis as long as the reader is willing to accept the 
concrete models for homotopy fibres and pullbacks provided without derivation.  

\section{Conventions and notations}
For the convenience of the reader, we summarise here our notation and
conventions which are used throughout this paper.
\begin{itemize}
\item We denote by $\fvs$ the symmetric monoidal category of finite-dimensional
complex vector spaces.
\item We denote by $\Tvs$ the symmetric monoidal bicategory of KV 2-vector spaces.
\item We denote by $\Hilb$ the symmetric monoidal category of complex Hilbert spaces.

\item We denote by $\Grp$ the category of groups. 
\item We denote by $\mathsf{Grpd}$ the 2-category of (small) groupoids.

\item We denote by $\Cat$ the bicategory of (small) categories.

\item We denote by $\CobF$ the symmetric monoidal category of
$n$-dimensional geometric cobordisms with background fields
$\mathscr{F}$.
\item We denote by $\ECobF$ the symmetric monoidal bicategory of $n$-dimensional
geometric cobordisms with background fields $\mathscr{F}$.

\item Let $G$ be a group.
We denote by $B G$ the classifying space of $G$, which for finite $G$
is an Eilenberg-MacLane space $K(G,1)$, i.e.\  its only non-trivial
homotopy group is $\pi_1(BG)=G$.
Let $P$ be a principal $G$-bundle on a manifold $M$. 
We denote by $\psi_P\colon M \longrightarrow BG$ the corresponding
classifying map. 

\item Let $T$ be a topological space, $n$ a positive integer 
and $A$ an abelian group. 
We denote the pairing of chains and cochains on $T$ by 
$\langle \,\cdot\, , \,\cdot\, \rangle \colon C^n(T;A)\times C_n(T) \longrightarrow A$.

\item 
Let $G$ be a group.
The groupoid of $G$-bundles on a manifold $M$ is denoted by $\BunG(M)$. 

\item Most constructions in this thesis are done in a fixed dimension $n$. 
For such a fixed dimension, we use $M$, $\Sigma$ and $S$ to denote manifolds of dimensions $n$, $n-1$ and $n-2$, respectively.

\item 
Let $M$ be an oriented manifold. We denote by $-M$ the same manifold
equipped with the opposite orientation.

\item 
Let $\Ca$ be a monoidal category. We denote by $*\Ds \Ca$ the bicategory 
with one object and $\Ca$ as endomorphisms. 

\item 
Let $\lambda \colon G\longrightarrow G'$ be a homomorphism of groups. 
We denote the induced maps $BG\longrightarrow BG'$ and $*\Ds G \longrightarrow *\Ds G'$ again by $\lambda$. 
 
\item Let $F\colon \Ca \longrightarrow \Ca'$ be a functor. We write the 
limit of $F$ as an end $\int_\Ca\, F$. 

\item Let $G$ be a finite group. We denote by
$\EGCob$ the bicategory of cobordisms equipped with maps into $BG$.

\end{itemize}
\chapter{Description of anomalies via extended functorial quantum field theories}\label{Sec: General Theory}
In this chapter we will develop a general framework for the description of anomalies
using extended functorial quantum field theories following and extending previous work~\cite{FreedAnomalies, MonnierHamiltionianAnomalies}. In Section~\ref{Sec: EFQFT} we provide the necessary definitions before
turning to anomalies in Section~\ref{Sec: GT Anomalies}. We focus on the abstract description and postpone the discussion of
non-trivial examples to later chapters.  We will heavily 
use the language of symmetric monoidal bicategories. For our conventions concerning (and details about) symmetric monoidal 
bicategories see Appendix~\ref{Chap: Bicat}. 

\section{Extended functorial field theories}\label{Sec: EFQFT}
Quantum field theories are usually defined on smooth manifolds equipped with additional structure, 
such as an orientation or a metric. We call the additional non-dynamical structures required to define
a specific quantum field theory \emph{background fields}. 
These fields should be local, i.e.\ they form a sheaf (or higher versions thereof) on
the category of manifolds, and can have (higher) internal symmetries such as gauge
symmetries. We thus incorporate all data of background fields such as bundles with connections, spin structures and metrics into a stack
$
\mathscr{F}\colon \mathsf{Man}_n^{\mathrm{op}}\longrightarrow \mathsf{Grpd}
$
on the category $\mathsf{Man}_n$ of $n$-dimensional manifolds with
corners and local diffeomorphisms; we regard $\mathsf{Man}_n$ as a
2-category with only trivial 2-morphisms, and $\mathsf{Grpd}$ denotes
the 2-category of small groupoids, functors and natural
isomorphisms. One should think of elements of $\mathscr{F}(M)$ as the
collection of classical background fields on $M$, which in particular
satisfies the sheaf condition, i.e.\ for every open cover $\{U_a\}$ of
a manifold $M$, the diagram
$$
\Fscr(M)\longrightarrow \prod_a\, \Fscr(U_a)\doublearrow
\prod_{a,b}\, \Fscr(U_a\cap U_b) \triplearrow \prod_{a,b,c}\, \Fscr(U_a\cap U_b \cap U_c)
$$
is a weak/homotopy equalizer diagram in $\mathsf{Grpd}$. 
In Appendix~\ref{Sec: Stacks} we review some basic facts and definitions related to stacks.
The site we use to define stacks differs from the one usually used in the context of differential 
geometry. It contains local diffeomorphisms as morphisms instead of arbitrary smooth maps.
The reason for this is that we want to be able to describe structures such as metrics or orientations which 
cannot be pulled back along general smooth maps.    
Even-though desirable for applications to higher gauge theory, we decided not to work with 
higher stacks since they are not necessary to capture the examples discussed in this thesis. 

Throughout this chapter we fix a spacetime dimension $n\in \N_{>0}$ and a stack 
$\mathscr{F}$ defined on $n$-dimensional manifolds describing the classical background fields of
the theory under consideration. Usually, $\mathscr{F}$ is a product of different types of fields. Common types
appearing in physics are orientations, Riemannian metrics, principal bundles with connections
and spin structures. Note that most of these structures are actually sheaves considered as stacks
with only identity automorphisms.
\begin{remark}
Throughout this thesis we will ignore the following technical point: 
background fields on a manifold usually form smooth (infinite dimensional) manifolds and hence one
should work with smooth stacks and at various points demand constructions to depend 
smoothly on the background gauge fields. This can be achieved by working with (bi)categories fibred over the
category of smooth manifolds~\cite{Stolz:2011zj}.   
\end{remark}

\subsection{Non-extended field theories}\label{Sec: Non-extended field theories}
Before defining extended functorial quantum field theories we
briefly recall the non-extended definition~\cite{Stolz:2011zj,FreedAnomalies,FelixKleinSegal}. 
To model physical spacetimes with background fields, we introduce a symmetric 
monoidal category $\CobF$ whose objects are triples 
$(\Sigma,\epsilon, f_{\Sigma}\in \mathscr{F}((-\epsilon,\epsilon)\times \Sigma))$, where 
$\Sigma$ is an $n-1$-dimensional closed manifold with connected components $\Sigma_1,\dots , \Sigma_l$, $\epsilon=(\epsilon_1,\dots \epsilon_l)$ is an 
$l$-tuple of positive real numbers 
and $f$ an element constant along $(-\epsilon,\epsilon)$ of $\mathscr{F}((-\epsilon,\epsilon)\times \Sigma)$. Here we use the shorthand notation 
\begin{align}
(-\epsilon,\epsilon)\times \Sigma \coloneqq \sqcup_{i=1}^l (-\epsilon_i,\epsilon_i)\times \Sigma_i \ \ .
\end{align}
The precise definition of a constant element is not so important  
for the moment; it is enough to have the example of a structure which is pulled back from $\Sigma$ or a metric of the form $dt^2+g_\Sigma$ in mind. We will give a definition in
Definition~\ref{Def:Constant}.

Before defining the morphisms of $\CobF$ we introduce:
\begin{definition}
Let $M_1$ and $M_2$ be manifolds equipped with background fields $f_1\in \mathscr{F}(M_1)$
and $f_2\in \mathscr{F}(M_2)$.
A \emph{$\mathscr{F}$-diffeomorphism} 
$(M_1,f_1)\longrightarrow (M_2,f_2)$ consists of a diffeomorphism $\varphi\colon 
M_1\longrightarrow M_2$ together with a morphism $\varphi^*f_2\longrightarrow f_1$ in 
$\mathscr{F}(M_1)$.
\end{definition} 
There are two types of morphisms in $\CobF$. The first type is given by equivalence 
classes of $n$-dimensional cobordisms with background fields. A
\emph{cobordism with background fields} $(\Sigma_-,\epsilon_-,f_{\Sigma_-}) \longrightarrow 
(\Sigma_+,\epsilon_+,f_{\Sigma_+})$ is a 4-tuple 
\begin{equation}
(M,f_M\in \mathscr{F}(M), \varphi_-\colon [0,\epsilon_-)\times \Sigma_- \longrightarrow M_-, \varphi_+\colon (-\epsilon_+,0]\times \Sigma_+ \longrightarrow M_+ )
\end{equation} 
where $M$ is an $n$-dimensional compact manifold with boundary, 
$f_M$ is a background field on $M$, $M_-\cup M_+$ is a collar the boundary of $M$, i.e. an open neighbourhood which is 
diffeomorphic to $[0,1)\times \partial M$ of and 
$\varphi_-$ and $\varphi_+$ are $\mathscr{F}$-diffeomorphisms.     
Two cobordisms with background fields $(M,f,\varphi_-,\varphi_+)$ and $(M',f',\varphi'_-,\varphi'_+)$ from
$(\Sigma_-,\epsilon_-,f_{\Sigma_-})$ to 
$ (\Sigma_+,\epsilon_+,f_{\Sigma_+})$
are \emph{equivalent} if there exists a $\F$-diffeomorphism 
$\psi\colon M \longrightarrow M'$ compatible with the collars. Concretely, this means that the diagram
\begin{equation}
\begin{tikzcd}
 & M \ar[dd,"\psi"] & \\
 {[0,\epsilon_-) \times \Sigma_- }\ar[ru,"\varphi_-"]\ar[rd,"\varphi'_-",swap]& & {(-\epsilon_+,0] \times \Sigma_+}\ar[lu,"\varphi_+",swap]\ar[ld,"\varphi'_+"]\\
 & M' &  
\end{tikzcd}  
\end{equation}
of $\F$-diffeomorphisms commutes.
\begin{figure}[hbt]
\begin{overpic}[width=14.0cm,
scale=0.3]{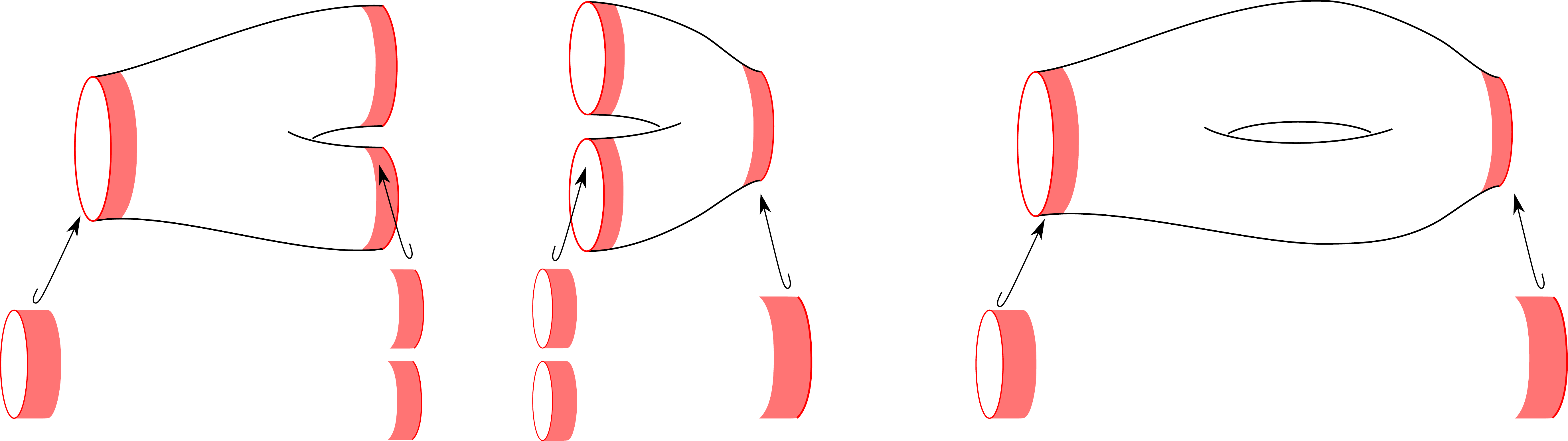}
\put(13,22){\scriptsize{$M$}}
\put(42,22){\scriptsize$M'$}
\put(5,11){\scriptsize$\varphi_{-}$}
\put(26,15){\scriptsize$\varphi_{+}$}
\put(33,15){\scriptsize$\varphi'_{-}$}
\put(51,12){\scriptsize$\varphi'_{+}$}
\put(67,11){\scriptsize$\varphi_{-}$}
\put(99,12){\scriptsize$\varphi'_{+}$}
\put(78,22){\scriptsize$M'  \circ M$}
\put(0,-2){\scriptsize$  [0,\epsilon_a) \hspace{-0.14cm} \times\hspace{-0.08cm}\Sigma_a$}
\put(17,-2){\scriptsize$ (-\epsilon_b,0] \hspace{-0.14cm} \times\hspace{-0.08cm} \Sigma_b $}
\put(31,-2){\scriptsize$ [0,\epsilon_b) \hspace{-0.14cm} \times\hspace{-0.08cm} \Sigma_b $}
\put(45,-2){\scriptsize$ (-\epsilon_c,0] \hspace{-0.14cm} \times\hspace{-0.08cm}\Sigma_c $}
\put(60,-2){\scriptsize$ [0,\epsilon_a) \hspace{-0.14cm} \times\hspace{-0.08cm} \Sigma_a $}
\put(90,-2){\scriptsize$ (-\epsilon_c,0] \hspace{-0.14cm} \times\hspace{-0.08cm}\Sigma_c $}
\end{overpic}
\bigskip
\caption{\small Illustration of two composable regular morphisms in $\CobF$ (on the left) and their composition (on the right).}
\label{Fig:Sketch CompositionA}
\end{figure}
We refer to morphisms described by equivalence classes of cobordisms with background fields as 
\emph{regular morphisms}. 

The second class of morphisms $(\Sigma_a,\epsilon_a,f_{\Sigma_a})\longrightarrow (\Sigma_b,\epsilon_b,f_{\Sigma_b})$ is only defined for $\epsilon_a=\epsilon_b$ 
and consist of a diffeomorphism $\varphi\colon \Sigma_a \longrightarrow \Sigma_b$
together with the structure of a constant $\F$-diffeomorphism on  $(-\epsilon_a,\epsilon_a)\times \varphi\colon
(-\epsilon_a,\epsilon_a)\times \Sigma_a \longrightarrow (-\epsilon_a,\epsilon_a)\times \Sigma_b$.
We call these morphisms \emph{limit morphisms}. 
\\
The composition of limit morphisms is induced from the composition of $\F$-diffeomor-phisms.
From a limit morphism $\varphi$ we can construct a regular 
morphism $\mathcal{M}_\epsilon(\varphi)$ for every $\epsilon > \epsilon_a$ by 
gluing $[0,\tfrac{3}{4}\epsilon)\times \Sigma_a$ and 
$(\tfrac{1}{4}\epsilon ,\epsilon]\times \Sigma_b$ together along the diffeomorphism 
$(\tfrac{1}{4}\epsilon ,\tfrac{3}{4}\epsilon )\times \varphi$. The constant 
$\F$-diffeomorphism $\varphi$  can be used  to equip 
$\mathcal{M}_\epsilon(\varphi)$ with a $\F$-background gauge field via the descent property of stacks. 
We call $\mathcal{M}_\epsilon(\varphi)$ the \emph{mapping cylinder of length $\epsilon$}
of $\varphi$. Note that $\mathcal{M}_\epsilon (\varphi)$ can be defined as a manifold for all 
$\epsilon>0$. 
Informally, we think of limit morphism $\varphi$ as the $\epsilon \longrightarrow 0$ limit of $\mathcal{M}_\epsilon(\varphi)$.

Let $\varphi$ be a limit morphism and $(M,f,\varphi_-,\varphi_+)$ a regular morphism. 
The composition $(M,f,\varphi_-,\varphi_+) \circ \varphi$ (when defined) 
is the regular morphism $(M,f,\varphi_-\circ [0,\epsilon)\times \varphi ,\varphi_+)$.
The composition $\varphi \circ (M,f,\varphi_-,\varphi_+)$ 
is the regular morphism $(M,f,\varphi_- ,  \varphi_+ \circ (-\epsilon,0]\times \varphi^{-1})$.
\\
The composition of regular morphisms $(M,f,\varphi_-,\varphi_+)\colon(\Sigma_a,\epsilon_a,f_{\Sigma_a}) \longrightarrow 
(\Sigma_b,\epsilon_b,f_{\Sigma_b})$ and $(M',f',\varphi'_-,\varphi'_+)\colon(\Sigma_b,\epsilon_b,f_{\Sigma_b}) \longrightarrow 
(\Sigma_C,\epsilon_c,f_{\Sigma_c})$ is given by gluing the underlying manifolds along collars (see Figure~\ref{Fig:Sketch CompositionA} for a sketch).
To equip the composition with a background field cover $M\sqcup_{\Sigma_b} M'$ with three 
open sets
\begin{align}\label{Eq: Cover gluing}
\begin{split}
U_-& = M \setminus \Sigma_b, \\ 
U_+ & = M'\setminus\Sigma_b, \\ 
U_{-+} & = (-\epsilon,\epsilon)\times \Sigma_b \ \ .
\end{split} 
\end{align} 
The background gauge field on $M\sqcup_{\Sigma_b} M'$ is now defined via the descent
property of the stack $\F$ for the cover $\{U_-,U_{-+},U_+\}$ from $f|_{U_-}$, $f'|_{U_+}$ and $f_{\Sigma_b}$. The required morphisms in 
$\F(U_-\cap U_{-+} )$ and $\F(U_{-+}\cap U_+ )$ are part of the structure of the $\F$-diffeomorphisms 
$\varphi_+$ and $\varphi_-$.  
The monidal structure on $\CobF$ is the disjoint union of manifolds. 
\begin{remark}\label{Rem: Composition in terms of mapping cylinders}
The composition of regular morphisms can also be described in terms of mapping cylinders. It corresponds to removing 
the collars for the boundaries along which the gluing takes place and then replacing them with mapping cylinders 
$\mathcal{M}_{\epsilon_b}(\varphi_+^{-1})$ and $\mathcal{M}_{\epsilon_b}(\varphi'_-)$, respectively. The composed morphism corresponds
to gluing the resulting manifolds along $\Sigma_b$.  
\end{remark}
\begin{remark}\label{Rem: Shortcommings CobF}
The definition of $\CobF$ is far from perfect. There are two important points one 
should improve to describe the actual structure of quantum field theories:
\begin{itemize}
\item The objects we use depend on the actual value of $\epsilon$. It would be better 
to work with germs of geometric structures. 

\item Our definition does not contain any information about the smooth structure of the 
background fields.
\end{itemize}
Both problems are solved in the approach by
Stolz and Teichner~\cite{Stolz:2011zj} using germs and categories fibred over manifolds. 
We do not work with their approach, because it is harder to generalize
to the bicategorical setting in Section~\ref{Sec: Cob bicat}.   
\end{remark}
Based on the symmetric monoidal category $\CobF$ describing the 
geometry of $n$-dimensional manifolds, we can give a definition of $n$-dimensional 
functorial quantum field theories.
\begin{definition}\label{Def: QFT}
An \emph{$n$-dimensional functorial quantum field theory} is a symmetric monoidal 
functor 
\begin{align}
Z\colon \CobF \longrightarrow \Hilb\ ,
\end{align}
where $\Hilb$ is the symmetric monoidal category of complex Hilbert spaces and bounded linear operators. 
\end{definition} 
The simplest example of a functorial quantum field theory is the trivial theory 
\begin{align}
1\colon \CobF \longrightarrow \Hilb
\end{align} 
sending every object to the one-dimensional Hilbert space $\C$ and every morphism 
to the identity map. 

Definition~\ref{Def: QFT} is an effective way to (partially) encode the structure generally expected to be part of a 
quantum field theory.
For example, the \emph{time evolution operator} ``$\exp(-\iu H t)$" on the Hilbert space $Z((\Sigma,\epsilon,f))$ for an 
object $(\Sigma,\epsilon,f)\in \CobF$
is the bounded linear operator 
assigned to the mapping cylinder $\mathcal{M}_t(\id_\Sigma)$. Note that in our formalism the time evolution operator 
is only defined for $t>  \epsilon$. This is a consequence of the shortcomings mentioned in 
Remark~\ref{Rem: Shortcommings CobF}. For a more detailed discussion of the functorial approach to
quantum field theory we refer to \cite{FelixKleinSegal}.

Let $\Sigma$ be an $n-1$-dimensional manifold.
We denote by $\SymF(\Sigma)$ the \emph{symmetry subgroupoid} of $\CobF$ with objects having $\Sigma$ as 
underlying manifold and only limit morphisms as morphisms. Furthermore, we denote by 
$\SymF$ the subgroupoid of $\CobF$ containing only limit morphisms. 
\begin{example}
\begin{itemize}
\item
Let $\F$ be the stack (sheaf) of (Riemannian) metrics. Every metric $g$ on $\Sigma$
defines an object $(\Sigma, 1, dt^2+g )\in \CobF$, where we denote the coordinate along $(-1,1)$ by $t$.
A morphism $(\Sigma, 1, dt^2+g )\longrightarrow (\Sigma', 1, dt^2+g' )$ in $\SymF(\Sigma)$ is an isometry $\Sigma 
\longrightarrow \Sigma'$. 

\item  
Let $G$ be a Lie group and $\F$ the stack of principal $G$-bundles with connection.
A bundle on $\Sigma$ defines again an object of $\SymF(\Sigma)$ via pullback along the projection
$(-1,1)\times \Sigma \longrightarrow \Sigma$. The 
morphism in $\SymF(\Sigma)$ with trivial underlying diffeomorphism corresponding to connection preserving gauge 
transformations. 
\end{itemize}
\end{example}
A \emph{representation of a groupoid $\G$} is a functor $\G \longrightarrow \Hilb \ (\text{or }\fvs)$.
A representation is called \emph{unitary} if all morphisms in its image are unitary operators.   
Every functorial quantum field theory induces by restriction to $\SymF$ a representation 
of its symmetry groupoid. However, in general there is no reason for this representation to be 
unitary. To get unitary representations from functorial field theories one needs to work
with so called reflection positive field theories~\cite{FreedHopkins}.
  
In physics, a symmetry is an automorphism of the state space
which commutes with the time evolution operator or equivalently the Hamiltonian of 
the theory. We show that this is captured by the framework functorial field theories. 
\begin{proposition}
Let $Z\colon \CobF \longrightarrow \Hilb$ be a functorial quantum field theory 
and $\varphi\colon (\Sigma,\epsilon,f)\longrightarrow (\Sigma',\epsilon,f')$ a limit morphism. 
The symmetry $Z(\varphi)\colon Z(\Sigma,\epsilon,f)\longrightarrow Z(\Sigma',\epsilon,f')$ 
commutes with the time evolution operator.
\end{proposition}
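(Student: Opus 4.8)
The plan is to reduce the assertion to an identity between two regular morphisms in $\CobF$ and then to verify that identity by producing an explicit equivalence of cobordisms. Recall that the time evolution operator on an object $(\Sigma,\epsilon,f)$ is $Z(\mathcal{M}_t(\id_\Sigma))$ for $t>\epsilon$, and likewise the time evolution on $(\Sigma',\epsilon,f')$ is $Z(\mathcal{M}_t(\id_{\Sigma'}))$. Since $\varphi$ relates $\Sigma$ to $\Sigma'$, the statement that $Z(\varphi)$ commutes with the time evolution is the intertwining relation
\begin{align}
Z(\varphi)\circ Z\big(\mathcal{M}_t(\id_\Sigma)\big)=Z\big(\mathcal{M}_t(\id_{\Sigma'})\big)\circ Z(\varphi)\ .
\end{align}
As $Z$ is a functor, it suffices to establish the corresponding equality
\begin{align}
\varphi\circ \mathcal{M}_t(\id_\Sigma)=\mathcal{M}_t(\id_{\Sigma'})\circ \varphi
\end{align}
of morphisms in $\CobF$.

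First I would evaluate both composites using the composition rules for a limit morphism with a regular morphism given above. The underlying manifold of $\mathcal{M}_t(\id_\Sigma)$ is the cylinder $[0,t]\times\Sigma$ with its standard incoming and outgoing collars, and similarly for $\Sigma'$. Postcomposing with the limit morphism $\varphi$ leaves the cylinder $[0,t]\times\Sigma$ untouched but replaces the outgoing collar by $\varphi_+\circ\big((-\epsilon,0]\times\varphi^{-1}\big)$, whereas precomposing $\mathcal{M}_t(\id_{\Sigma'})$ with $\varphi$ leaves the cylinder $[0,t]\times\Sigma'$ untouched but replaces the incoming collar by $\varphi_-\circ\big([0,\epsilon)\times\varphi\big)$. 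Thus both composites are cylinders whose boundary identifications are twisted by $\varphi$, at opposite ends.

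Next I would exhibit the equivalence of these two regular morphisms. The natural candidate is the diffeomorphism $\psi=\id_{[0,t]}\times\varphi\colon[0,t]\times\Sigma\longrightarrow[0,t]\times\Sigma'$, promoted to an $\F$-diffeomorphism by the morphism of background fields coming from the constant $\F$-diffeomorphism structure carried by $\varphi$. A direct check shows that $\psi$ intertwines the two incoming collars and the two outgoing collars, so the square defining the equivalence of cobordisms commutes; in fact both composites are thereby identified with the mapping cylinder $\mathcal{M}_t(\varphi)$ of the limit morphism itself, which is the conceptual reason the relation holds. Applying $Z$ then yields the desired commutation.

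The step I expect to be the main obstacle is the bookkeeping of the background field data rather than the underlying geometry. One must verify that $\psi$ is genuinely an $\F$-diffeomorphism, i.e.\ that the required morphism of background fields on $[0,t]\times\Sigma$ induced by the constant $\F$-diffeomorphism structure on $\varphi$ is compatible with the background fields on the two cylinders, which are themselves only specified up to the descent property of the stack $\F$ used in the mapping cylinder construction. Tracing this compatibility through the gluing that defines $\mathcal{M}_t$ is routine but must be done with care; once it is in place, the equivalence of the two regular morphisms follows, and with it the proposition.
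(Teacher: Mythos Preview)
Your proposal is correct and follows essentially the same approach as the paper: reduce to the equality $\varphi\circ\mathcal{M}_t(\id_\Sigma)=\mathcal{M}_t(\id_{\Sigma'})\circ\varphi$ in $\CobF$ and witness it by the $\F$-diffeomorphism $\id_{[0,t]}\times\varphi$ (the paper writes the inverse direction $\id\times\varphi^{-1}$, which amounts to the same thing). Your additional remark that both composites coincide with $\mathcal{M}_t(\varphi)$, and your explicit flagging of the $\F$-compatibility check for $\psi$, go slightly beyond what the paper spells out but are in the same spirit.
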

\begin{proof}
The mapping cylinder $\mathcal{M}_t(\id_{\Sigma,f})$ is $([0,t]\times \Sigma,f) \in \CobF$. 
We prove the statement by showing that
the morphisms $ ([0,t]\times \Sigma',f') \circ \varphi $ and $ \varphi \circ ([0,t]\times \Sigma,f)  $
agree as morphisms in $\CobF$. 
For this consider the diagram 
\begin{equation}
\begin{tikzcd}
 & {[0,t]\times \Sigma'} \ar[dd,"\id\times \varphi^{-1}"] & \\ 
{[0,\epsilon)\times \Sigma} \ar[ru,"\id \times \varphi"]\ar[rd] & & {(-\epsilon,0]\times \Sigma'} \ar[dl,"\cdot+t \times \varphi^{-1}"] \ar[lu] \\ 
 &{[0,t]\times \Sigma}& 
\end{tikzcd}
\end{equation}
of $\F$-manifolds which shows that both morphism lie in the same equivalence class.
\end{proof}   
Let $Z_1\colon \CobF \longrightarrow \Hilb$ and $Z_2\colon \CobF \longrightarrow \Hilb$ be $n$-dimensional functorial
quantum field theories. We define the tensor product $Z_1\otimes Z_2$ locally,
i.e.\ via 
\begin{align}
Z_1\otimes Z_2(\Sigma,\epsilon,f)\coloneqq Z_1(\Sigma,\epsilon,f)\otimes Z_2(\Sigma,\epsilon,f) \ \ .
\end{align} 
There is a simple class of functorial quantum field theories which is 
central for the description of anomalies in qunatum field theories.
\begin{definition}\label{Def: Invertible}
	A functorial quantum field theory $Z\colon \CobF \longrightarrow \Hilb$ is \emph{invertible}
	if there exists a functorial quantum field theory $Z^{-1}\colon \CobF \longrightarrow \Hilb$
	such that $Z\otimes Z^{-1}\cong 1$. 
\end{definition}
Recall that a Picard groupoid is a symmetric monoidal groupoid in which every object and 
morphism has an inverse with respect to the tensor product.  
Every invertible field theory factors through the maximal Picard subgroupoid of $\Hilb$.

\subsection{A cobordism bicategory}\label{Sec: Cob bicat}
In this section we make the first step towards extending the definition of
a functorial quantum field to include manifolds of codimension $2$ by 
defining a bicategorical version of $\CobF$. This requires the use of manifolds 
with corners. We introduce the necessary mathematical definitions now. 

\subsubsection*{Manifolds with corners}
It is possible to define manifolds with corners of arbitrary codimension. We restrict ourself
to codimension $2$ for notational convenience. 
Roughly speaking, a manifold of dimension $n$ is a topological space which locally looks like open subsets of $\R^n$. The idea behind manifolds 
with corners of codimension $2$ is to replace $\R^n$ by $\R^{n-2}\times \R_{\geq0}^2$; we denote by $\mathrm{pr}_{\R_{\geq0}^2}:\R^{n-2}\times \R_{\geq0}^2\longrightarrow \R_{\geq0}^2$ the obvious projection. 
A map between subsets of $\R^{n-2}\times \R_{\geq0}^2$ is \emph{smooth} if there exists a smooth extension to open subsets of $\R^n$. 
A \emph{chart} for a subset $U$ of a topological space $X$ is then a homeomorphism $\varphi \colon U \longrightarrow V \subset \R^{n-2}\times \R_{\geq0}^2$. 
Two charts $\varphi_1 \colon U_1 \longrightarrow V_1$ and $\varphi_2 \colon U_2 \longrightarrow V_2$ are \emph{compatible} if $\varphi_2 \circ \varphi_1^{-1}\colon \varphi_1(U_1\cap U_2)\longrightarrow \varphi_2(U_1\cap U_2)$ is a diffeomorphism. 
As for manifolds, a collection of charts covering $X$ is called an \emph{atlas}. 
An atlas is \emph{maximal} if it contains all compatible charts. 
\begin{definition}
A \emph{manifold with corners} of codimension $2$ is a second countable Hausdorff space $M$ together with a maximal atlas.
\end{definition}    
\begin{remark}
Closed manifolds and manifolds with boundary are, in particular, manifolds with corners.
\end{remark}

We define the \emph{tangent space} $T_x M$ at a point $x\in M$ as the space of derivatives on the real-valued functions $C^\infty(M)$ at $x$. We define \emph{embeddings} in the same way as for manifolds without corners, i.e.\ as smooth injective maps which are injective on all tangent spaces. 
We are now able to introduce an essential concept used throughout this thesis.
\begin{definition}\label{def:collar}
A \emph{collar} for a submanifold $Y\subset \partial M$ is a diffeomorphism $\varphi \colon [0,\epsilon) \times Y \longrightarrow   U_Y$ for some fixed $\epsilon>0$ and a neighbourhood $U_Y$ of $Y$.  
\end{definition}
\begin{figure}
\begin{center}
\includegraphics[scale=2]{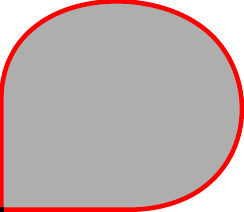}
\caption{A manifold with corners such that the red submanifold of the boundary does not admit a collar.}
\label{Fig: Collar}
\end{center}
\end{figure}
\begin{remark}
In general collars do not exist as can be seen from the simple example in Figure~\ref{Fig: Collar}.
\end{remark}
Given $x\in M$ we define the \emph{index of $x$} $\mathrm{index}(x)\in\{ 0,1,2\}$ to be the number of coordinates of $\big(\mathrm{pr}_{\R_{\geq0}^2}\circ \varphi \big) (x)$ equal to $0$ for a (and hence every) chart $\varphi$ of a neighbourhood of $x$. 
The \emph{corners} of $M$ are the collection of all points of index~$2$. A \emph{connected face} of $M$ is the closure of a maximal connected subset of points of index~$1$. 
\begin{definition}
A manifold with corners is a \emph{manifold with faces} if each $x\in M$ belongs to exactly $\mathrm{index}(x)$ connected faces.
\end{definition}
In this case we define a \emph{face} of $M$ to be a disjoint union of connected faces,  which is a manifold with  boundary.  
A \emph{boundary defining function} for a face $H_i$ is a function $\rho_i \in C^\infty(M)$ such that
$\rho_i(x)\geq 0$ and $\rho_i(x)=0$ if and only if $x\in H_i$.
\begin{definition}
A \emph{$\langle 2 \rangle$-manifold} is a manifold $M$ with faces together with two faces $\partial_0 M$ and $\partial_1 M$ such that $\partial M = \partial_0 M \cup \partial_1 M$ and $\partial_0 M \cap \partial_1 M$ are the corners of $M$. 
\end{definition}  
Denote by $\Delta_1$ the category corresponding to the ordered set $\{0,1\}$. Concretely $\Delta_1$
has two objects $0$ and $1$ and one non identity morphism $0 \longrightarrow 1$. A $\langle 2 \rangle$-manifold $M$ then defines a diagram $M \colon \Delta_1^2 \longrightarrow \Manc $ of shape $\Delta_1^2$ in the category $\Manc $ of manifolds with corners and smooth embeddings:
\begin{equation}
\begin{tikzcd}
& M & \\
\partial_0 M \ar[ru] & & \partial_1 M \ar[lu] \\
& \partial_0 M\cap \partial_1 M \ar[ru] \ar[lu] & 
\end{tikzcd}
\end{equation}
Moreover, $\langle 2 \rangle$-manifolds always admit a compatible set of collars:
\begin{proposition}
Let $M$ be an $n$-dimensional $\langle 2 \rangle$-manifold and $\epsilon_0,\epsilon_1$ positive real 
numbers. There are collars, $ [0,\epsilon_0) \times (\partial_0 M  \cap \partial_1 M)  \hookrightarrow \partial_0 M $, $ [0,\epsilon_1) \times (\partial_0 M  \cap \partial_1 M)  \hookrightarrow \partial_1 M $,
$ [0,\epsilon_1) \times \partial_0 M \hookrightarrow M $ and $ [0,\epsilon_0) \times \partial_1 M \hookrightarrow M $ such that the following diagram commutes
\begin{equation}
\begin{tikzcd}[column sep= -4]
& M & \\
 {[0,\epsilon_1) \times \partial_0 M}  \ar[ru] & &  {[0,\epsilon_0) \times \partial_1 M} \ar[lu] \\
&  {[0,\epsilon_0)\times [0,\epsilon_1) \times (\partial_0 M\cap \partial_1 M)}  \ar[ru] \ar[lu] & 
\end{tikzcd} \ \ 
\end{equation}
\end{proposition}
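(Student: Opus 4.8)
The plan is to obtain all four collars from a single product neighbourhood of the corner $c:=\partial_0 M \cap \partial_1 M$, so that commutativity of the square becomes automatic. The essential point is to replace two \emph{unrelated} applications of the ordinary collar theorem by a pair of \emph{commuting} flows near $c$. First I would record the available structure: since $M$ is a manifold with faces, each face $\partial_i M$ is a manifold with boundary whose boundary is exactly $c$, and there are boundary defining functions $\rho_0,\rho_1$ with $\partial_i M = \{\rho_i=0\}$; along $c$ the faces meet transversally, so in a corner chart $M$ looks like $\R^{n-2}\times \R_{\geq 0}^2$ with $\partial_0 M = \{u=0\}$, $\partial_1 M = \{v=0\}$ and $c=\{u=v=0\}$.

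Next I would construct the commuting vector fields. Choose a vector field $V_v$ on a neighbourhood of $c$ in $M$ that is inward-transverse to $\partial_1 M$ and tangent to $\partial_0 M$ along $\partial_0 M$ (locally $\partial_v$), so that its flow $\Phi^{V_v}$ preserves $\partial_0 M$. On $\partial_1 M$ near $c$ choose a vector field $\tilde V_u$, tangent to $\partial_1 M$ and inward-transverse to $c$ within $\partial_1 M$, and define $V_u := (\Phi^{V_v})_*\tilde V_u$ on the region swept out by the $V_v$-flow. By construction $[V_u,V_v]=0$; one checks that $V_u$ is inward-transverse to $\partial_0 M$ and agrees with $\tilde V_u$ (hence is tangent to $\partial_1 M$) along $\partial_1 M$, so $\Phi^{V_u}$ preserves $\partial_1 M$. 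Setting
\[
K(v,u,x) := \Phi^{V_u}_u \Phi^{V_v}_v(x) = \Phi^{V_v}_v \Phi^{V_u}_u(x), \qquad x\in c,
\]
the differential of $K$ along $c$ sends $\partial_v,\partial_u, T_xc$ to $V_v(x),V_u(x),T_xc$, which span $T_xM$ by the transversality and tangency just arranged; hence $K$ is a diffeomorphism onto a neighbourhood of $c$ for small parameters by the inverse function theorem. Restricting gives the inner collars $a_0(v,x):=K(v,0,x)=\Phi^{V_v}_v(x)\in\partial_0 M$ and $a_1(u,x):=K(0,u,x)=\Phi^{V_u}_u(x)\in\partial_1 M$.

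Finally I would extend to the full faces. Extend $V_v$ (resp. $V_u$) to a vector field defined near all of $\partial_1 M$ (resp. $\partial_0 M$) that is inward-transverse there and agrees with the original field near $c$; a partition-of-unity interpolation suffices because at each point the inward-transverse vectors form an open convex half-space. Their time flows define full collars $b_1\colon [0,\epsilon_0)\times\partial_1 M\to M$ and $b_0\colon [0,\epsilon_1)\times\partial_0 M\to M$, and near $c$ these flows coincide with $\Phi^{V_v}$ and $\Phi^{V_u}$ (so in particular $b_0$ is a genuine collar near the corner, its restriction to $c$ staying inside $\partial_1 M$ by the tangency of $V_u$). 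Consequently $b_1(v,a_1(u,x))=\Phi^{V_v}_v\Phi^{V_u}_u(x)=K(v,u,x)$ and $b_0(u,a_0(v,x))=\Phi^{V_u}_u\Phi^{V_v}_v(x)=K(v,u,x)$, which is precisely the commutativity of the square. A common reparametrisation of each collar coordinate fixing $0$ then rescales the widths to the prescribed $\epsilon_0,\epsilon_1$ while preserving the diagram.

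The main obstacle is exactly this commutativity. The ordinary collar theorem produces the four collars individually but gives no control over how they interlock near the corner, and a naive partition-of-unity construction of the two normal vector fields generically yields $[V_u,V_v]\neq 0$, which destroys the square. Forcing $[V_u,V_v]=0$---here by transporting one field along the flow of the other---while simultaneously verifying that the transported field retains the correct transversality to $\partial_0 M$ and tangency to $\partial_1 M$ is the crux of the argument; the remaining steps are the standard flow/collar machinery.
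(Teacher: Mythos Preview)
The paper does not supply its own proof of this proposition; it simply refers the reader to \cite[Lemma~2.1.6]{LauresCorners}. Your argument is correct and follows the standard route: arrange two inward normal vector fields that commute near the corner $c$ (by transporting one along the flow of the other) so that their flows yield a genuine product chart $[0,\epsilon_0)\times[0,\epsilon_1)\times c \hookrightarrow M$, then extend each to a full collar of the respective face away from $c$ by partition of unity. The key observation you isolate---that two independent applications of the ordinary collar theorem do not in general give commuting flows, and that commutativity must be enforced by the transport construction---is exactly the crux, and your check that the transported field $V_u$ remains inward-transverse to $\partial_0 M$ and tangent to $\partial_1 M$ is the one place that requires care; it goes through because $\Phi^{V_v}$ preserves $\partial_0 M$. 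This is in the same spirit as the argument in Laures, so there is nothing substantive to compare: you have simply supplied what the thesis leaves to the reference.
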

For a proof see for example~\cite[Lemma 2.1.6]{LauresCorners}. 

\subsubsection*{Invariant background fields}
Informally, we want to think of objects in the bicategory $\ECobF$ as 
$(n-2)$-dimensional manifolds equipped with $\F$-background fields. However,
it might not be possible to evaluate the stack $\F$ on manifolds of lower dimensions
directly. For this reason our objects will be $(n-2)$-dimensional manifolds $S$ together 
with an $\F$-background field on 
$(-\epsilon_1,\epsilon_1)\times (-\epsilon_2,\epsilon_2)\times S$. We still want
to keep the intuition that these fields only depend on the manifold $S$ and hence 
require the background fields to be ``constant" in the 
$(-\epsilon_1,\epsilon_1)\times (-\epsilon_2,\epsilon_2)$ direction. To formulate 
what we mean by constant we make the following definition:     
\begin{definition}
\label{Def:Invariant element stack}
Let $\mathscr{F}$ be a stack, $M$ a manifold, and $\mathscr{S}(M)$ a groupoid which consists of a collection of open subsets of $M$ including $M$ and a collection of diffeomorphisms as morphisms. 
\begin{itemize}
\item[(a)] An \emph{invariant structure} with respect to $\mathscr{S}(M)$ for an element $\mathsf{f} \in \mathscr{F}(M)$ is a natural 2-transformation
\begin{equation}
\begin{tikzcd}
 &\; \ar[dd, Leftarrow, shorten >= 20 ,shorten <= 15, " \mathsf{f}\,",swap] & \\
\mathscr{S}(M)^{\mathrm{op}} \ar[rr, bend left, "{\mathscr{F}}"] \ar[rr, swap, bend right, "\mathsf{1}"]& & \mathsf{Grpd} \\
 & \; &
\end{tikzcd}
\end{equation}
from the constant 2-functor sending every object to the groupoid $\mathsf{1}$ with one object and one morphism, such that the natural transformation is induced on objects by $f$ (see the following remark for an explaination). Here we regard $\mathscr{S}(M)$ as a 2-category with trivial 2-morphisms.\footnote{This is the same as a higher fixed point for the groupoid action corresponding to $\mathscr{F}$, as discussed for example in \cite{HesseSchweigertValentino}.} We call an element $\mathsf{f}\in \mathscr{F}(M)$ together with the choice of an invariant structure an \emph{invariant element}.

\item[(b)] A morphism $\mit\Theta\colon \mathsf{f}\longrightarrow \mathsf{f}'$ between invariant elements $\mathsf{f},\mathsf{f}' \in \mathscr{F}(M)$ is \emph{invariant} under $\mathscr{S}(M)$ if it induces a modification (see Definition~\ref{Definition Modification Bicategory}) between the natural 2-transformations corresponding to $\mathsf{f}$ and~$\mathsf{f}'$.
\end{itemize}    
\end{definition}
\begin{remark}
Let us spell out in detail what we mean by saying that $\mathsf{f}\in\mathscr{F}(M)$ induces a natural 2-transformation on objects. A map $\mathsf{f}_U \colon \mathsf{1} \longrightarrow \mathscr{F}(U)$ is an element of $\mathscr{F}(U)$. We set $\mathsf{f}_U = \mathsf{f}|_U$ for all $U \in \mathrm{Obj}\big(\mathscr{S}(M)\big)$. To equip this with the structure of a natural 2-transformation we have to fix natural transformations (see Definition~\ref{Definition transformation Bicategory})
 \begin{equation}
\begin{tikzcd}
\Hom_{\mathscr{S}(M)^{\rm op}}(U_1,U_2) \arrow{r}{\mathsf{1} }\arrow[d,swap,"\mathscr{F}(\, \cdot\, )"] & \Hom_{\mathsf{Grpd}}(\mathsf{1},\mathsf{1})\arrow{d}{\, \mathsf{f}_{U_2\ast}}\\
\Hom_{\mathsf{Grpd}}\big(\mathscr{F}(U_1),\mathscr{F}(U_2)\big)\arrow[ru,Rightarrow, "\mathsf{f}_{U_1 U_2}"] \arrow[r,swap, "\mathsf{f}_{U_1}^\ast"] & \Hom_{\mathsf{Grpd}}\big(\mathsf{1},\mathscr{F}(U_2)\big)
\end{tikzcd}
 \end{equation}  
They can be described by a collection of morphisms
 $\mathsf{f}_{U_1 U_2}(t)\colon t^\ast \, \mathsf{f}_{U_1} \longrightarrow \mathsf{f}_{U_2}$ 
for every morphism $t\colon U_2 \longrightarrow U_1$ of $\mathscr{S}(M)$ which have to 
satisfy the coherence conditions \eqref{Equation1: Definition Transformation} and \eqref{Equation2: Definition Transformation}:
\begin{align}
\label{EQ1:Structure Invariant}
\mathsf{f}_{U_2 U_3}(t_2) \circ t_2^\ast\, \mathsf{f}_{U_1 U_2}(t_1) &= \mathsf{f}_{U_1U_3}(t_1\circ t_2) \circ \Phi_{\mathscr{F}(U_3)\mathscr{F}(U_2)\mathscr{F}(U_1)}(t_1\times t_2) \ , \\[4pt]
\label{EQ2:Structure Invariant}
\mathsf{f}_{U U}(\id_U)\circ \Phi_{\mathscr{F}(U)}(\id_\star) &= \id\,_{\mathsf{f}_U} 
\end{align}
for morphisms $t_2 \colon U_3 \longrightarrow U_2$ and $t_1\colon U_2
\longrightarrow U_1$. For a sheaf considered as a stack, the maps $\mathsf{f}_{U_1U_2}(t)$ must be identity maps and we reproduce, for example, the definition of an invariant function. 
\end{remark}
\begin{example}
Let $\mathscr{F}=\mathsf{Bun}_G$ be the stack of principal $G$-bundles, and let $M$ be a manifold equipped with an action $\rho\colon \Gamma \longrightarrow \Diff(M)$ of a group $\Gamma$ by diffeomorphisms of $M$. We can encode the action into a groupoid $\mathscr{S}(M)$ as in Definition \ref{Def:Invariant element stack} with one object $M$ and morphisms $\{\rho(\gamma)\mid \gamma \in \Gamma \}$. A $G$-bundle $P$ which is invariant under $\mathscr{S}(M)$ comes with gauge transformations $\mit\Theta_\gamma \colon \rho(\gamma)^\ast P \longrightarrow P$ satisfying \eqref{EQ1:Structure Invariant} and \eqref{EQ2:Structure Invariant}. This is just a $\Gamma$-equivariant $G$-bundle.
An invariant morphism between two $\Gamma$-equivariant $G$-bundles is then a $\Gamma$-equivariant gauge transformation.    
\end{example}
\begin{definition}\label{Def:Constant}
Let $k<n$ be a positive number and $\Sigma $ an $n-k$-dimensional manifold (with boundary). For every collection of (not necessarily open) intervals $I_1,\dots , I_k \subset \R$, we say that an element $\mathsf{f} \in \mathscr{F}(I_1 \times \dots \times I_k\times \Sigma)$ is \emph{constant} along $I_1 \times \dots \times I_k$ if it is invariant under translations in the direction along $I_1 \times \dots \times I_k$, i.e.\ invariant with respect to the groupoid with open subsets of $I_1 \times \dots \times I_k\times \Sigma$ as objects and translations along $I$ as morphisms.
\end{definition}
\begin{example}
For example metrics of the form $dt_1^2+\dots dt_k^2 + g_\Sigma $ on $I_1 \times \dots \times I_k \times \Sigma$ are constant
along $I_1 \times \dots \times I_k$. Furthermore, every pullback of a structure defined on $\Sigma$ to 
$I_1 \times \dots \times I_k \times \Sigma$ along the projection will be constant in a canonical way.   
\end{example}
Now we have all the mathematical prerequisites at our disposal to define 
\subsubsection*{The bicategory $\ECobF$}

Inspired by \cite{schommer2011classification} and the sketch of~\cite[Appendix~A]{MonnierHamiltionianAnomalies}, we introduce a bicategory of manifolds equipped with geometric fields. For the definition of a Dirac operator, a metric on the underlying manifold is crucial, whence we cannot assume that the field content is topological. This leads to technical problems in defining 2-morphisms. We make the assumption that the field content is constant near gluing boundaries and use a specific choice of collars to get around these problems.   

We define the bicategory $\ECobF$ with objects given by quadruples 
\begin{align}
\big(S,\mathsf{f}^{n-2}, \epsilon_1, \epsilon_2\big)
\end{align}
consisting of a closed $(n-2)$-dimensional manifold $S$ with $l$ connected components $S_1,\dots ,S_l$, $l$-tuples $\epsilon_{1},\epsilon_{2}\in \R_{>0}^l$ and an element $\mathsf{f}^{n-2}\in \mathscr{F}\big((-\epsilon_1,\epsilon_1)\times (-\epsilon_2,\epsilon_2) \times S \big)$ which is constant along $(-\epsilon_1,\epsilon_1)\times (-\epsilon_2,\epsilon_2)$. Here we introduced the notation
\begin{align}
(-\epsilon_1,\epsilon_1)\times (-\epsilon_2,\epsilon_2) \times S = \bigsqcup_{i=1}^l \,  (-\epsilon_{1,i},\epsilon_{1,i})\times(-\epsilon_{2,i},\epsilon_{2,i})\times S_i \ ,
\end{align} 
which we will continue to use throughout this section.

There are two different kinds of 1-morphisms in $\ECobF$:
\begin{itemize}
\item[(1a)]
Regular 1-morphisms 
\begin{align}
\Sigma \colon \big(S_-,\mathsf{f}^{n-2}_-, \epsilon_{-1},\epsilon_{-2}\big) \longrightarrow \big(S_+,\mathsf{f}^{n-2}_+,\epsilon_{+1},\epsilon_{+2}\big)
\end{align}
consist of 7-tuples
\begin{align}
\big(\Sigma,\varphi^{n-1}_-,\varphi^{n-1}_+,\mathsf{f}^{n-1},{\mit\Theta}^{n-1}_-,{\mit\Theta}^{n-1}_+, \epsilon\big) \ ,
\end{align}
where $\Sigma$ is a $(n-1)$-dimensional manifold with boundary and $k$ connected components together with a decomposition of a collar of its boundary into $N^{n-1}_-$ and $N^{n-1}_+$, 
$\varphi^{n-1}_- \colon [0,\epsilon_{-1}) \times S_- \longrightarrow N^{n-1}_-$  and $\varphi^{n-1}_+ \colon  (-\epsilon_{+2},0]\times S_+ \longrightarrow N^{n-1}_+$ are diffeomorphisms, $\epsilon \in \R_{>0}^k$, $\mathsf{f}^{n-1}\in \mathscr{F}\big( (-\epsilon,\epsilon) \times \Sigma \big)$ is constant along $(-\epsilon,\epsilon)$ and\footnote{For this statement to make sense we require that $\epsilon$ is compatible with $\epsilon_{\pm 2}$ on the boundary.} ${\mit\Theta}^{n-1}_\pm \colon \mathsf{f}^{n-2}_\pm\longrightarrow \varphi_\pm^\ast \mathsf{f}^{n-1}$ are constant morphisms. Here we use ${\mit\Theta}^{n-1}_\pm$ to implicitly define the structure of a constant object on $N^{n-1}_\pm$. 

\item[(1b)]
Limit 1-morphisms consist of diffeomorphisms $\phi \colon S \longrightarrow S'$ together with the structure of a constant $\F$-diffeomorphism on $(-\epsilon_1, \epsilon_1)\times (-\epsilon_2, \epsilon_2)\times \phi$.\footnote{For this to make sense we require that all $l$-tuples $\epsilon$ are equal.}
\end{itemize}
For the composition of regular 1-morphisms $\Sigma_1\colon S\longrightarrow S'$ and 
$\Sigma_2\colon S'\longrightarrow S''$ we glue the underlying manifolds using their collars. The manifold $\Sigma_2 \circ \Sigma_1$ comes with a natural open cover (compare also Equation~\eqref{Eq: Cover gluing})
$U_1 = \Sigma_1\setminus S'$, 
$U_{12} =  (-\epsilon_{1+},\epsilon_{2-}) \times S'$ and
$U_2 = \Sigma_2 \setminus S' $.
We equip the resulting manifold with a $\F$-background field using covers $(-\epsilon,\epsilon) \times U_1 $, $(-\epsilon,\epsilon) \times U_2 $ and $(-\epsilon,\epsilon) \times U_{12} $ 
and the descent property of the stack $\mathscr{F}$. 
Note that the isomorphisms required for the descent are part of the regular 1-morphisms. 
Composition of limit 1-morphisms is given by composition of $\F$-diffeomorphisms. The composition of a limit 1-morphism with a regular 1-morphism is given by changing the identification of the collars and ${\mit\Theta}_\pm^{n-1} $ using the limit 1-morphism.

There are also two different kinds of 2-morphisms in $\ECobF$:
\begin{itemize}
\item[(2a)]
Regular 2-morphisms (see Figure~\ref{Fig: 2-Morphism}) 
\begin{eqnarray*}
&& M \colon \big(\Sigma_1,\varphi^{n-1}_{1-},\varphi^{n-1}_{1+},\mathsf{f}^{n-1}_1,{\mit\Theta}^{n-1}_{1-},{\mit\Theta}^{n-1}_{1+}, \epsilon_1\big) \\ && \qquad \qquad \qquad \qquad \qquad \qquad  \Longrightarrow \big(\Sigma_2,\varphi^{n-1}_{2-},\varphi^{n-1}_{2+},\mathsf{f}^{n-1}_2,{\mit\Theta}^{n-1}_{2-},{\mit\Theta}^{n-1}_{2+}, \epsilon_2\big) \ ,
\end{eqnarray*}
with $\Sigma_{i} \colon\big(S_-,\mathsf{f}^{n-2}_-, \epsilon_{-1},\epsilon_{-2}\big) \longrightarrow \big(S_+,\mathsf{f}^{n-2}_+, \epsilon_{+1},\epsilon_{+2}\big)$ regular 1-morphisms for $i=1,2$, consist of equivalence classes of 6-tuples 
\begin{align}
\big(M,\mathsf{f}^n,\varphi^n_-,\varphi^n_+,{\mit\Theta}^n_-,{\mit\Theta}^n_+\big) \ ,
\end{align}
where $M$ is an $n$-dimensional $\langle 2 \rangle $-manifold with corners equipped with
collars $N^{n}_-$ and $N^{n}_+$ of part of the $0$-boundary such that the closure of $N^{d}_\pm$ contains the $1$-boundary, $\mathsf{f}^{n}$ is an element of $\mathscr{F}(M)$, $\varphi^n_- \colon   [0,\epsilon_{-1}) \times \Sigma_1 \longrightarrow N^n_-$ and $\varphi^n_+ \colon  (-\epsilon_{+2},0] \times \Sigma_2  \longrightarrow N^d_+$ are diffeomorphisms, and ${\mit\Theta}^n_- \colon \mathsf{f}^{n-1}_1 \longrightarrow \varphi^{n\, \ast}_- \, \mathsf{f}^n$ and ${\mit\Theta}^n_+ \colon \mathsf{f}^{n-1}_2 \longrightarrow \varphi^{n\, \ast}_+ \, \mathsf{f}^n$ are constant morphisms. All of these structures have to be compatible, in the sense that the diagram 
\begin{equation}
\begin{footnotesize}
\label{EQ: Compatibility condition 2morphisms}
\begin{tikzcd}
& &  {(-\epsilon_2,0) \times \Sigma_2}  \ar[dd,"\varphi_+^d"]& & \\ 
& &  & & \\
 {(-\epsilon_{-2},0)\times [0,\epsilon_{-1}) \times  S_-} \ar[rr, "{\imath_-}"]\ar[rruu, "{\id \times \varphi_{2-}^{n-1} }"] \ar[rrdd,"{(\cdot +\epsilon_{-2}) \times \varphi_{1-}^{n-1} }",swap] & & M & & {(-\epsilon_{+2},0) \times [0,\epsilon_{+1}) \times  S_+}  \ar[ll,swap, "{\imath_+}"] \ar[lldd,"{(\cdot + \epsilon_{+2}) \times \varphi_{1+}^{n-1} }"] \ar[lluu,"{\id \times \varphi_{2+}^{n-1}}",swap] \\ 
& & & & \\
& & {(0, \epsilon_1) \times \Sigma_1 } \ar[uu,swap,"\varphi_-^n"] & &
\end{tikzcd}
\end{footnotesize}
\end{equation}
commutes, where $\imath_\pm$ are inclusions. We change the sign of the coordinates corresponding to both intervals in the lower embedding. This induces a diagram of functors in groupoids and we require that all morphisms $\mit\Theta$ are compatible with this diagram.  Note that the collars of the 1-morphisms induce collars for the 1-boundaries which agree by \eqref{EQ: Compatibility condition 2morphisms}. Two such 6-tuples are equivalent if they are $\mathscr{F}$-diffeomorphic relative to half of the collars.
\begin{figure}
\footnotesize
\begin{center}
\begin{overpic}[width=12cm,
scale=1]{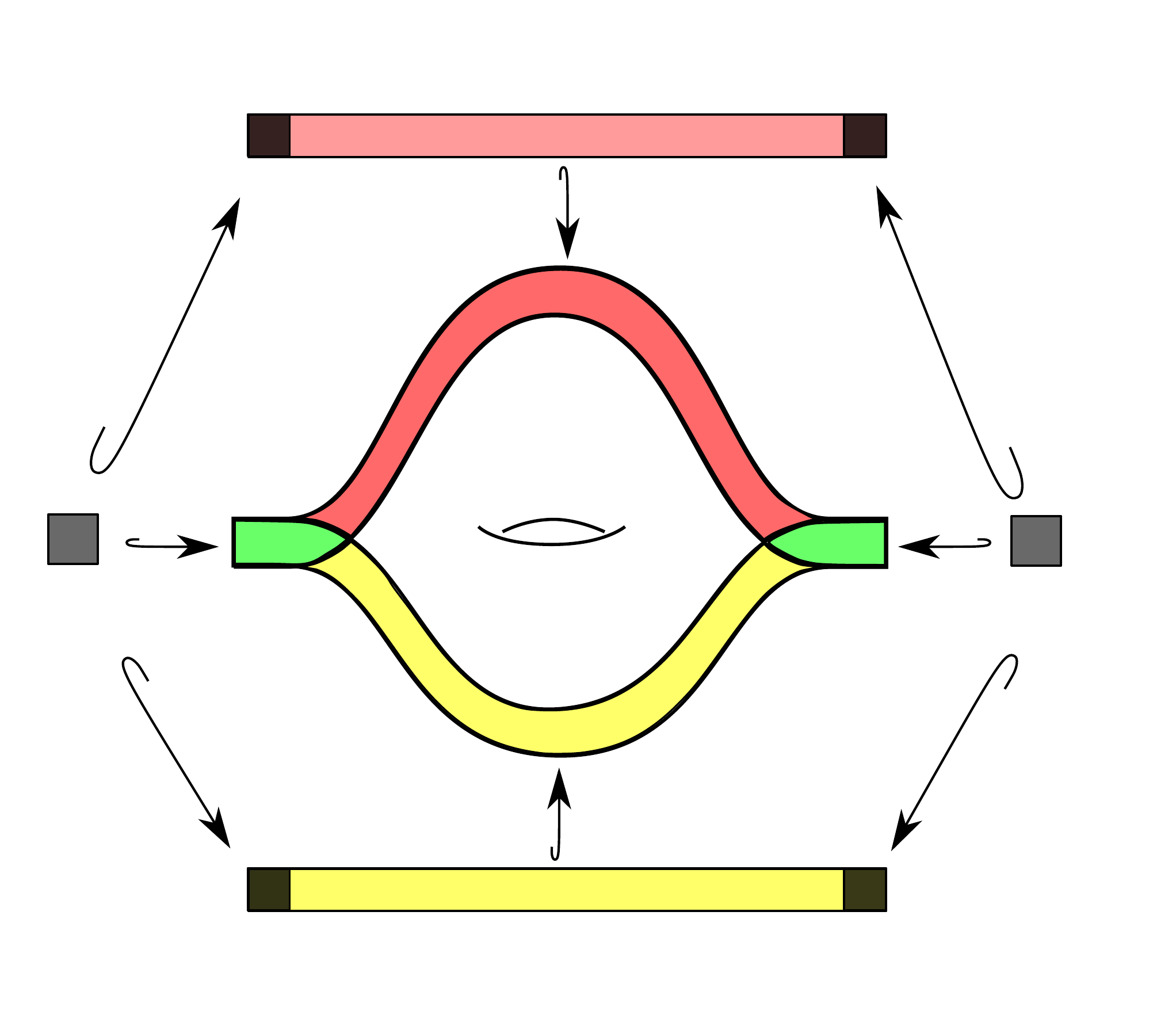}
\put(40,5){$ [0,\epsilon_1) \times \Sigma_1 $}
\put(40,80){$ (-\epsilon_2,0] \times \Sigma_2 $}
\put(-10,35){$(-\epsilon_{-2},0)\!\times  \! [0,\epsilon_{-1})\!\times \! S_- $}
\put(75,35){$(-\epsilon_{+2},0) \!\times\![0,\epsilon_{+1})\!\times \! S_+ $}
\put(46,50){$M$}
\put(49,16){$\varphi^n_-$}
\put(50,69){$\varphi^n_+$}
\end{overpic}
\end{center}
\caption{\small Illustration of a regular 2-morphism in $\ECobF$.}
\normalsize
\label{Fig: 2-Morphism}
\end{figure}

\item[(2b)]
Limit 2-morphisms consist of pairs $(\phi, {\mit\Theta})$, where $\phi: \Sigma_1 \longrightarrow \Sigma_2$ is a diffeomorphism relative to the collars and  ${\mit\Theta}$ is the structure of a constant $\F$-diffeomorphism on $(-\epsilon,\epsilon)\times \phi$. There are no non-trivial 2-morphisms between limit 1-morphisms. 
\end{itemize}
We define horizontal and vertical composition of 2-morphisms as follows:
\begin{itemize}
\item[(Ha)]
Horizontal composition of regular 2-morphisms is given by gluing along 1-boundaries.

\item[(Hb)]
Horizontal composition of limit 2-morphisms is defined by ``gluing together'' diffeomorphisms and the descent condition for morphisms in the stack $\mathscr{F}$. This uses the open cover defined in \eqref{Eq: Cover gluing}.

\item[(Hc)]
Horizontal composition of a limit 2-morphism with a regular 2-morphism is defined by the attachment of a mapping cylinder to the 1-boundary. 

\item[(Va)]
Vertical composition of limit 2-morphisms is given by composition of diffeomorphisms, pullback and composition of morphisms in the stack $\mathscr{F}$. 

\item[(Vb)]
Vertical composition of regular 2-morphisms is a slightly more complicated. Simple gluing of $M$ and $M'$ along a common 1-morphism does not return a 2-morphism, since the resulting 1-boundaries are ``too long". In the context of topological field theories a solution to this problem \cite{schommer2011classification} consists of picking once and for all a diffeomorphism $[0,2] \longrightarrow [0,1]$. We are unable to use this trick here, since the stacks we consider in this thesis may contain metrics. Instead, we will use collars to circumvent this problem. 
Given two regular 2-morphisms $M_1 \colon \Sigma_1 \Longrightarrow \Sigma_2$ and $M_2 \colon \Sigma_2 \Longrightarrow \Sigma_3$, we define 
\begin{align}
&\tilde{M}_1 = M_1 \setminus \varphi^n_{1+}\big((-\tfrac{\epsilon}{2},0] \times \Sigma_2 \big) \ ,\\[4pt]
&\tilde{M}_2 = M_2 \setminus \varphi^n_{2-}\big( [0,\tfrac{\epsilon}{2}) \times \Sigma_2  \big) \ .
\end{align}
We define the vertical composition $M_{2}\circ M_{1}$ to be the manifold resulting from gluing $\tilde{M}_1$ and $\tilde{M}_2$ along $\Sigma_2$. We have to equip this manifold with appropriate collars: 
Write $\tilde{N}^n_1= N^n_1 \cap \varphi^n_{1+}\big([-\tfrac{\epsilon}{2},0] \times \Sigma_2 \big)$, where $N^n_1$ is the incoming collar of $M_{1}$. We set  
\begin{align}
C = \big( \varphi^n_{2-} \circ (  \, \cdot \, +\epsilon \times \id) \circ (\varphi^n_{1+})^{-1}\big)\big(\tilde{N}^n_1\big) \ .
\end{align} 
We can glue $C$ to the remainder of the collar of $M_1$ to get
a new collar; this is only possible because we assumed that the
corresponding elements of $\mathscr{F}$ are constant along the
collars. It is possible to define a new collar for $\Sigma_3$ in
the same way.

\item[(Vc)]
Vertical composition of a limit 2-morphism with a regular 2-morphism
is defined by changing the identification of collars as in Section~\ref{Sec: Non-extended field theories}.
\end{itemize}
This completes the definition of the geometric cobordism bicategory $\ECobF$.
We will show that the disjoint union of manifolds turns $\ECobF$ into a symmetric monoidal bicategory.

We prove that $\ECobF$ is a symmetric monoidal bicategory using a method developed by Schulman~\cite{Schulman, FunctorialConstructionBicat} 
following a similar proof in the topological case by Schommer-Pries~\cite[Section 3.1.4]{schommer2011classification}. 
For this we recall 
\begin{definition}[\cite{Schulman}]
A \emph{symmetric monoidal pseudo-double category} $\cD$ consists of a symmetric monoidal category of ``objects" $\cD_0$,
a symmetric monoidal category of ``arrows" $\cD_1$, symmetric monoidal functors 
\begin{align}
\begin{split}
\id \colon \cD_0 \longrightarrow \cD_1,  \\
S,T \colon \cD_1 \longrightarrow \cD_0, \\ 
\odot \colon \cD_1 \times_{\cD_0}\cD_1 \longrightarrow \cD_1,  
\end{split}
\end{align}
and symmetric monoidal natural transformations 
\begin{align}\label{Eq: Natural transformations for double bicat}
\begin{split}
\alpha \colon (-\odot-)\odot - &\Longrightarrow -\odot(-\odot-), \\ 
\lambda \colon (\id(-) \odot - ) &\Longrightarrow \id_{\cD_1}, \\ 
\rho \colon ( - \odot \id(-)  ) &\Longrightarrow \id_{\cD_1},
\end{split}
\end{align}
such that $S$ and $T$ are strict, $S(\id(A))=T(\id(A))=A$ for all $A\in \cD_0$, 
$S(M\odot N)=S(N)$ and $T(M\odot N)=T(M)$ 
for all $N,M\in \cD_1$, applying $S$ and $T$ to the 
natural transformations in Equation~\eqref{Eq: Natural transformations for double bicat} gives identities, 
$\alpha, \lambda , \rho$ satisfy the usual coherence conditions for monoidal categories (Pentagon and Triangle relation) 
and $\id$ strictly preserves the unit object of $\cD_0$.  
\end{definition}
We interpret the data of a symmetric monoidal pseudo-double category as follows:
\begin{itemize}
\item 
Its objects are the elements of $\cD_0$. 
\item The morphisms $f\colon A \longrightarrow B $ in $\cD_0$ are its vertical 1-morphisms. 
\item An object $g$ of $\cD_1$ is a horizontal 1-morphisms $g\colon S(g)\longrightarrow T(g)$. 
\item A morphisms $F\colon g \longrightarrow g' \in \cD_1$ is a 2-morphisms filling the square 
\begin{equation}
\begin{tikzcd}[row sep=0.3cm, column sep=0.3cm]
S(g) \ar[rr,"g"] \ar[dd,"S(F)", swap] & & T(g) \ar[dd,"T(F)"] \\
 & F & \\  
S(g') \ar[rr,"g'",swap] & & T(g')
\end{tikzcd}
\end{equation}    
\end{itemize}
2-morphism admit a strict vertical composition and a coherent horizontal composition. Let $\cD$ be a symmetric monoidal pseudo-double category. Restricting to 
2-morphisms $F$ such that $S$ and $T$ applied to it are identities defines a bicategory $H(\cD)$~\cite{Schulman}. 
The advantage of using symmetric monoidal pseudo-double categories is that there is simple criterion for the bicategory 
$H(\cD)$ to be symmetric monoidal. Before we can state this result we need to recall one more definition from~\cite{Schulman}:
\begin{definition}
Let $\cD$ be a symmetric monoidal pseudo-double category. Furthermore, let $f\colon A \longrightarrow B$ be a vertical 1-morphism. 
A \emph{companion} of $f$ is a horizontal 1-morphism $\widehat{f}\colon A \longrightarrow B $ together with 2-morphisms
\begin{equation}
\begin{tikzcd}[row sep=0.3cm, column sep=0.3cm]
A \ar[rr,"\widehat{f}"] \ar[dd,"f", swap] & & B \ar[dd,"\id"] \\
& F_B & \\  
B \ar[rr,"\id(B)",swap] & & B
\end{tikzcd} \hspace{1cm}
\text{   and   } \hspace{1cm} 
\begin{tikzcd}[row sep=0.3cm, column sep=0.3cm]
A \ar[rr,"\id(A)"] \ar[dd,"\id", swap] & & A \ar[dd,"f"] \\
& F_A & \\  
A \ar[rr,"\widehat{f}",swap] & & B
\end{tikzcd}
\end{equation}
such that 
\begin{equation}
\begin{tikzcd}[row sep=0.3cm, column sep=0.3cm]
A \ar[rr,"\id(A)"] \ar[dd,"\id", swap] & & A \ar[dd,"f"] \\
& F_A & \\  
A \ar[rr,"\widehat{f}"] \ar[dd,"f", swap] & & B \ar[dd,"\id"] \\
& F_B & \\  
B \ar[rr,"\id(B)",swap] & & B
\end{tikzcd} 
= 
\begin{tikzcd}[row sep=0.3cm, column sep=0.3cm]
A \ar[rr,"\id(A)"] \ar[dd,"f", swap] & & A \ar[dd,"f"] \\
& \id(f) & \\  
B \ar[rr,"\id(B)",swap] & & B
\end{tikzcd}
\end{equation}
and
\begin{equation}
\begin{tikzcd}[row sep=0.3cm, column sep=0.3cm]
A \ar[rr, "\id(A)"] \ar[dd,"\id",swap] & & A \ar[rr,"\widehat{f}"] \ar[dd,"f", swap] & & B \ar[dd,"\id"]  \\
 & F_A  & & F_B & \\  
A \ar[rr, "\widehat{f}",swap] & & B \ar[rr,"\id(B)",swap] & & B
\end{tikzcd} 
= 
\begin{tikzcd}[row sep=0.3cm, column sep=0.3cm]
A \ar[rr,"\widehat{f}"] \ar[dd,"\id", swap] & & B \ar[dd,"\id"] \\
& \id_{\widehat{f}} & \\  
A \ar[rr,"\widehat{f}",swap] & & B
\end{tikzcd} 
\end{equation}
A \emph{conjoint} of $f$ is a companion of $f$ in the symmetric monoidal pseudo-double category constructed from 
$\cD$ by reversing the direction of the horizontal 1-morphisms but not of the vertical 1-morphisms. 
\end{definition}  
\begin{theorem}[Theorem 5.1 of \cite{Schulman}]
Let $\cD$ be a symmetric monoidal pseudo-double category such that every vertical 1-morphism admits a companion and 
a conjoint. Then $H(\cD)$ is a symmetric monoidal bicategory. 
\end{theorem}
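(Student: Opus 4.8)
The plan is to promote the symmetric monoidal structure that $\cD$ carries in its \emph{vertical} direction into the genuinely weak structure of a symmetric monoidal bicategory on $H(\cD)$, using companions and conjoints as the device for turning vertical $1$-morphisms and $2$-cells into horizontal $1$-morphisms and globular $2$-cells. First I would record the data that descends with no effort. The underlying bicategory $H(\cD)$ has the objects of $\cD_0$ as objects, the objects of $\cD_1$ as $1$-morphisms, globular cells as $2$-morphisms, horizontal composition $\odot$ as composition, and the transformations $\alpha,\lambda,\rho$ --- which are globular by hypothesis --- as associator and unitors; the pentagon and triangle for $H(\cD)$ are exactly the stated coherence conditions on $\alpha,\lambda,\rho$. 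Since $\otimes\colon\cD_1\times\cD_1\to\cD_1$ is symmetric monoidal and strict over $S$ and $T$, it sends pairs of horizontal $1$-morphisms to horizontal $1$-morphisms and pairs of globular cells to globular cells, so it restricts to a pseudofunctor $\otimes\colon H(\cD)\times H(\cD)\to H(\cD)$; the unit is the monoidal unit $I\in\cD_0$ regarded through $\id$.

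The engine of the argument is the calculus of companions and conjoints. I would first establish the standard facts that follow formally from the companion identities displayed above and their conjoint duals: a companion $\widehat f$ is determined up to unique globular isomorphism, the turning operation $f\mapsto\widehat f$ is pseudofunctorial with comparison cells $\widehat g\odot\widehat f\cong\widehat{g\cdot f}$ and a companion of a vertical identity isomorphic to the horizontal identity, and the companion of a vertical isomorphism is an adjoint equivalence in $H(\cD)$ whose pseudo-inverse is the corresponding conjoint. The hypothesis that \emph{every} vertical $1$-morphism admits both a companion and a conjoint is what makes $\cD$ an equipment: it supplies, for each vertical morphism, a coherent turning of horizontal data, and in particular a bijection sending any $2$-cell whose vertical boundaries are prescribed to a globular cell with the corresponding turned horizontal boundaries.

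With this calculus in place I would assemble the monoidal coherence data. The associativity equivalence $a$, the unitors $\ell,r$, and the braiding $b$ are the companions of the monoidal associator, unitors, and symmetry of $\cD_0$, each a vertical isomorphism and hence an adjoint equivalence by the previous step; their pseudonaturality on $H(\cD)$ comes from turning the naturality squares of the corresponding vertical transformations. The invertible modifications --- the pentagonator, the two hexagonators, and the syllepsis --- are produced by combining the pseudofunctoriality comparison cells of the turning operation with the strict coherence already present in $\cD_0$ and $\cD_1$: each strict pentagon, triangle, hexagon, or symmetry equation in these symmetric monoidal categories, once its vertical boundary is turned into horizontal composites, holds only up to the comparison cells $\widehat g\odot\widehat f\cong\widehat{g\cdot f}$, and these comparison cells are precisely the components of the required modification.

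The main obstacle, and the overwhelming bulk of the work, is the verification of the coherence axioms of a symmetric monoidal bicategory: the non-abelian $4$-cocycle (pentagonator) axiom, the two hexagonator axioms, the conditions relating the unitors to the pentagonator, the swallowtail identity, and the syllepsis and symmetry axioms. For each I would paste the corresponding coherence equation that already holds strictly in $\cD_0$ or $\cD_1$, transport it through the turning correspondence, and check that the zig-zags introduced by the companion unit/counit cells cancel by the companion identities $F_B\circ F_A=\id(f)$ and $F_A\odot F_B=\id_{\widehat f}$, so that a single strict equation in $\cD$ yields the desired equation of modifications in $H(\cD)$. The delicate point throughout is organisational rather than conceptual: one must fix the chosen companions and conjoints once and for all and track the turning isomorphisms carefully enough that the many comparison cells assemble without leaving an uncancelled coherence cell, which is exactly where the strictness hypotheses on $S,T,\id$ and the two companion (and conjoint) equations are indispensable.
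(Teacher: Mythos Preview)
The paper does not supply its own proof of this statement: it is quoted verbatim as Theorem~5.1 of Shulman and then immediately applied to deduce that $\ECobF$ is symmetric monoidal. There is therefore no proof in the paper to compare your proposal against.

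That said, your outline is an accurate high-level summary of Shulman's own argument: the monoidal product and unit descend to $H(\cD)$ directly, the associator, unitors and braiding are obtained as companions of the corresponding vertical isomorphisms in $\cD_0$ (hence adjoint equivalences), the modifications arise from the pseudofunctoriality cells $\widehat{g}\odot\widehat{f}\cong\widehat{g\cdot f}$ of the companion assignment, and the coherence axioms are verified by transporting the strict equations in $\cD_0$ and $\cD_1$ through the companion/conjoint calculus and cancelling the resulting zig-zags via the two companion identities. Your emphasis that the bulk of the work is bookkeeping rather than conceptual, and that the strictness assumptions on $S$, $T$, and $\id$ are what make the cancellations go through cleanly, is exactly the point Shulman makes. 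So your proposal is correct and follows the same route as the cited source; the paper itself simply invokes that source.
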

\begin{corollary}
$\ECobF$ is a symmetric monoidal bicategory.
\end{corollary}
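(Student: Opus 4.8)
The plan is to realise $\ECobF$ as the bicategory $H(\cD)$ of a symmetric monoidal pseudo-double category $\cD$ and then invoke Theorem 5.1 of~\cite{Schulman}, exactly in the spirit of the topological construction of~\cite{schommer2011classification}. Concretely, I would let $\cD_0$ be the symmetric monoidal category whose objects are the objects $(S,\mathsf{f}^{n-2},\epsilon_1,\epsilon_2)$ of $\ECobF$ and whose morphisms are the limit $1$-morphisms (the constant $\F$-diffeomorphisms of corner data), with disjoint union as tensor product. I would let $\cD_1$ be the symmetric monoidal category whose objects are the regular $1$-morphisms $\Sigma$ and whose morphisms are the regular $2$-morphisms $M$, now allowed to carry non-identity constant $\F$-diffeomorphisms of the two corner manifolds as the left and right (vertical) edges of the square. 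The source and target functors $S,T\colon\cD_1\longrightarrow\cD_0$ read off these two corner edges, $\id\colon\cD_0\longrightarrow\cD_1$ sends an object to its cylinder (identity cobordism), and $\odot$ is the horizontal gluing of cobordisms along a common corner manifold. By construction the morphisms $F$ of $\cD_1$ with $S(F)$ and $T(F)$ identities are precisely the globular regular $2$-morphisms of $\ECobF$, so $H(\cD)=\ECobF$.

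Next I would check that $\cD$ is a symmetric monoidal pseudo-double category. The composition in $\cD_1$ is the vertical composition of $2$-morphisms of item~(Vb); here the essential point, and the place where the argument departs from the topological setting, is that after gluing two bordisms along $\Sigma_2$ the collar of $\Sigma_2$ has the wrong length, so one cuts it back and reglues via the explicit construction of~(Vb). Because all background fields are constant along the collars, this cut-and-reglue is canonical, and after passing to equivalence classes of $2$-morphisms (under $\F$-diffeomorphism relative to half the collars) it becomes strictly associative and unital, so that $\cD_1$ is a genuine category and $S,T$ are strict symmetric monoidal functors with $S(M\odot N)=S(N)$ and $T(M\odot N)=T(M)$. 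The weak structure lives entirely in $\odot$: the associator $\alpha$ and unitors $\lambda,\rho$ are the canonical $\F$-diffeomorphisms reidentifying the glued collars, and they satisfy the pentagon and triangle identities because these collar reparametrisations compose coherently. The symmetry is inherited from the braiding of disjoint union on both $\cD_0$ and $\cD_1$.

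It then remains to verify that every vertical $1$-morphism admits a companion and a conjoint, which is the key step feeding Theorem 5.1. Given a limit morphism $f$ with underlying diffeomorphism $\phi\colon S\longrightarrow S'$, I would take as companion the horizontal morphism $\widehat{f}\coloneqq\mathcal{M}_\epsilon(\phi)$, the mapping cylinder of $\phi$ regarded as a regular $1$-morphism; the two structure squares $F_A,F_B$ are the tautological identifications of $\mathcal{M}_\epsilon(\phi)$ with a collar glued onto $\id(B)$ and onto $\id(A)$ respectively, and the two defining companion equations hold on the nose after cutting collars, once more using constancy of the fields. The conjoint is the companion of $f$ in the double category obtained by reversing the horizontal $1$-morphisms, namely the same mapping cylinder with its two ends exchanged (equivalently the mapping cylinder of $\phi^{-1}$) together with the analogous structure squares. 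With companions and conjoints in hand, Theorem 5.1 of~\cite{Schulman} applies and yields that $H(\cD)=\ECobF$ is symmetric monoidal.

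I expect the main obstacle to be the verification that the vertical composition of~(Vb) is strictly, and not merely weakly, associative and unital in the presence of non-topological field content such as metrics, where the rescaling trick $[0,2]\longrightarrow[0,1]$ of~\cite{schommer2011classification} is unavailable; controlling this requires a uniform choice of collar lengths together with a proof that the cut-and-reglue operation respects the equivalence relation on $2$-morphisms. By contrast, the construction of companions and conjoints from mapping cylinders and the coherence of the weak associator should be routine once this collar bookkeeping is in place.
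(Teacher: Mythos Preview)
Your overall strategy---build a symmetric monoidal pseudo-double category $\cD$ with $H(\cD)=\ECobF$ and apply Shulman's theorem---is exactly what the paper does. The gap is in the choice of $\cD_1$. You take the objects of $\cD_1$ to be the \emph{regular} 1-morphisms only, and the morphisms of $\cD_1$ to be the regular 2-morphisms (with relaxed corner compatibility). But $\ECobF$ as defined in this paper has \emph{both} regular and limit 1-morphisms, and both regular and limit 2-morphisms. With your $\cD_1$, the horizontal bicategory $H(\cD)$ contains no limit 1-morphisms and no limit 2-morphisms, so the claimed equality $H(\cD)=\ECobF$ is false. Your construction produces a symmetric monoidal structure on a strictly smaller bicategory, and you would then need a separate argument (transport along a biequivalence) to put one on $\ECobF$ itself; you do not address this.

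The paper's fix is simple and avoids all of the mapping-cylinder bookkeeping you anticipate: it takes the objects of $\cD_1$ to be \emph{all} 1-morphisms of $\ECobF$ (regular and limit), and enlarges the morphisms of $\cD_1$ to include both the relaxed regular 2-morphisms you describe (where the corner diagram may fail to commute by a constant $\F$-diffeomorphism) \emph{and} limit 2-morphisms that need not be relative to the boundary (but remain constant on collars). Restricting to identity vertical edges then recovers exactly the globular 2-morphisms of $\ECobF$, both regular and limit, so $H(\cD)=\ECobF$ on the nose. With limit 1-morphisms available horizontally, the companion and conjoint of a vertical morphism $f$ are simply $f$ and $f^{-1}$ themselves viewed as horizontal 1-morphisms, and the structure squares are identities---no mapping cylinders, no collar-length matching, no verification of the companion equations beyond a tautology. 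This makes the final step essentially one line, whereas your mapping-cylinder companions would require genuine work (and would still leave you with the wrong $H(\cD)$).
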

\begin{proof}
Note that $\ECobF$ can be constructed from the following symmetric monoidal pseudo-double category:
\begin{itemize}
\item The category $\cD_0$ consists of the objects of $\ECobF$ and limit 1-morphisms between them. 
\item The horizontal 1-morphisms (objects of $\cD_1$) are all 1-morphisms in $\ECobF$. 
\item 2-morphism are an extension of the 2-morphisms in $\ECobF$ allowing the morphisms to be non constant at the corners. 
More concretely, there are the following two types of 2-morphisms: 
\begin{itemize}
\item[(2a$'$)] Regular 2-morphisms are as above with the Condition~\ref{EQ: Compatibility condition 2morphisms} replaced with the weaker 
condition that the failure 
of the Diagram~\ref{EQ: Compatibility condition 2morphisms} to commute is given by constant $\F$-diffeomorphisms.   
\item[(2b$'$)]Limit 2-morphisms are $\F$-diffeomorphism which do not need to be relative to the boundaries. However, they still 
need to be constant on the collars.  
\end{itemize}  
\end{itemize}
The monoidal structure is induced by the disjoint union of manifolds. We need to show that every vertical morphism $f$ in $\cD$ admits companions and conjoints. These are just $f$ and $f^{-1}$ considered as horizontal 1-morphisms.  
\end{proof}
\begin{remark}\label{Rem: Shortcomings bicategory}
The bicategory $\ECobF$ has the same shortcomings as its categorical analogue $\CobF$. Concretely, the explicit dependence 
on the 
collar size $\epsilon$ and the fact that we ignore the smooth structure of the background fields are unsatisfactory.
The smoothness problem could be solved by working with bicategories fibred over the category of smooth manifolds. However,
we are not aware of an approach which solves the $\epsilon$-dependence in general. The problem is that as soon as metrics are 
involved the 1-boundary has a length and it seems hard to glue 1-boundaries of different length together. 
In the context of 2-dimensional extended conformal field theories a similar problem can be solved by allowing specific kinds of singular 
manifolds~\cite{AH}. 
A possible general solution is to work with double bicategories instead of bicategories~\cite{double}. All the results in this thesis should carry over to other definitions of the cobordism bicategory.       
\end{remark}

\subsection{The definition}
There is still one ingredient missing for the definition of extended functorial 
field theories: the target bicategory. This should be an appropriate categorification
of the category of Hilbert spaces. To our knowledge there is no universally accepted 
target. In this thesis we will restrict ourself to a simple target bicategory $\Tvs$
which is 
standard in the context of extended topological field theories, see the appendix of \cite{bartlett2015modular} for
a discussion of possible targets for topological field theories. 
The elements of $\Tvs$ are called Kapranov-Voevodsky 2-vector spaces and can be understood
as a categorification of finite dimensional vector spaces.  
\begin{definition}
\label{Def:2Vect}
A \emph{Kapranov-Voevodsky 2-vector space}~\cite{KV94} is a $\C$-linear semi-simple additive category $\Vscr$
with finitely many isomorphism classes of simple objects; in particular, a 2-vector space is also an abelian category.  
There is a 2-category $\Tvs$ of 2-vector spaces, $\C$-linear functors and natural transformations. 
Given two 2-vector spaces $\Vscr_1$ and $\Vscr_2$ we can define their tensor product $\Vscr_1 \boxtimes \Vscr_2$~\cite[Definition~1.15]{bakalov2001lectures} to 
be the category with objects given by finite formal sums 
\[
\bigoplus_{i=1}^n\, V_{1i}\boxtimes V_{2i} \ , 
\]
with $V_{1i}\in \mathrm{Obj}(\Vscr_1)$ and $V_{2i}\in \mathrm{Obj}(\Vscr_2)$. The space of morphisms is given by
\small
\[
\mathrm{Hom}_{\Vscr_1\boxtimes \Vscr_2}\Big(\, \mbox{$\bigoplus\limits_{i=1}^n\, V_{1i}\boxtimes V_{2i}\,,\,\bigoplus\limits_{j=1}^m \, V'_{1j}\boxtimes V'_{2j} $} \, \Big)= \bigoplus_{i=1}^n \ \bigoplus_{j=1}^m\, \mathrm{Hom}_{\Vscr_1}(V_{1i},V'_{1j})\otimes_\C \mathrm{Hom}_{\Vscr_2}(V_{2i},V'_{2j}) \ .
\] 
\normalsize
This tensor product coincides with the Deligne product of abelian categories. It furthermore satisfies the universal property with respect to bilinear functors that one would expect from a tensor product.
We can also take tensor products of $\C$-linear functors and of natural transformations.
 Then the 2-category $\Tvs$ with $\boxtimes$ is a symmetric monoidal bicategory with monoidal unit~$1$ given by the category of finite-dimensional vector spaces $\fvs$.  
\end{definition}
\begin{remark}
It would be more appropriate to work with ``2-Hilbert space" as defined for example in~\cite{2Hilb}. We choose here to work with $\Tvs$ since it reduces some of the technical complexity while still capturing all essential features.  
\end{remark}
Let $\Vscr$ be a 2-vector space. The natural functor 
\begin{align}
\Hom_{\Vscr}(\cdot,\cdot) \colon \Vscr^{\opp} \times \Vscr \longrightarrow \fvs 
\end{align}
is part of a two variable adjunction 
\begin{align}
\Hom_{\fvs}(S,\Hom_{\Vscr}(V_1,V_2))\cong \Hom_{\Vscr}(S*V_{1},V_2) \cong \Hom_{\Vscr}(V_1, V_2^S)
\end{align}
for all $V_1,V_2 \in \Vscr$ and $S\in \fvs$. The linear functor 
\begin{align}
* \colon \fvs \boxtimes \Vscr \longrightarrow \Vscr 
\end{align}
equips $\Vscr$ with the structure of a $\fvs$-module category. This explains in which sense KV 2-vector spaces are
a categorification of vector spaces: The ground field $\C$ is replaced with the category $\fvs$ and 
vector spaces ($\C$-modules) are replaced by a particularly nice class of $\fvs$-module categories.

The following remark provides a concrete description for the objects, 1-morphism
and 2-morphisms in $\Tvs$.
\begin{remark}\label{Rem: 2VS}
Being semi-simple with a finite number of simple objects means that
every 2-vector space $\Vscr$ is equivalent as a linear category to $\fvs^k$ for 
a natural number $k$.
Linear functors preserve direct sums. This allows us to describe a linear functor
$\fvs^k\longrightarrow \fvs^l$ up to equivalence by a matrix 
$(V_{ij})_{i=1,\dots k, j=1,\dots,l}$ 
of vector spaces. Composition of linear functors is given by matrix multiplication
replacing the multiplication of numbers with the tensor product $\otimes$ of vector space.   

A natural transformation $(V_{ij})\Longrightarrow (V'_{ij})$ can be described by 
a matrix of linear maps $f_{ij}\colon V_{ij}\longrightarrow V'_{ij}$. Vertical composition
is the component-wise composition of linear maps and horizontal composition is given by
matrix multiplication combined with the tensor product of linear maps. 
\end{remark}
All examples of extended functorial field theories considered in this thesis can be formulated using the 
2-category $\Tvs$.
However, it is too restrictive for a general quantum field theory. 
The 2-category $\Tvs$ should embed into any reasonable target bicategory and hence it
seems reasonable to work with the 2-category $\Tvs$ whenever possible. 
Even though 
$\Tvs$ is ``finite dimensional" in nature, it is still able to capture crucial properties
about the infinite dimensional state space of quantum field theories with anomalies,
see \cite{MonnierHamiltionianAnomalies} for more details.  

Now we can state the central definition for this thesis.
\begin{definition}\label{Def: EQFT}
An \emph{$n$-dimensional extended functorial quantum field theory} with background fields $\mathscr{F}$ (or \emph{extended quantum field theory} for short) is a symmetric monoidal 2-functor 
$$
\mathcal{Z} \colon \ECobF\longrightarrow \Tvs
$$
from a geometric cobordism bicategory to the 2-category of 2-vector spaces.
\end{definition}  
The simplest example is again the trivial theory $1\colon \ECobF\longrightarrow \Tvs$
which assigns to every object $S\in \ECobF$ the category of finite dimensional 
vector spaces, to every 1-morphism in $\ECobF$ the identity functor $\fvs\longrightarrow \fvs$ and to every 2-morphism the identity natural transformation.

The following remark explains how Definition~\ref{Def: EQFT} extends Definition~\ref{Def: QFT}.
\begin{remark}\label{Rem: EQFT induces QFT}
The endomorphism category of the monoidal unit $\emptyset \in \ECobF$ has as objects pairs 
$(\Sigma,\mathsf{f}^{n-1},\epsilon)$ where $\Sigma$ is a closed $(n-1)$-dimensional 
manifold and $\mathsf{f}^{n-1}$ is an element of $\F((-\epsilon,\epsilon)\times \Sigma)$. 
A morphism is a cobordism equipped with a compatible $\F$-background field. 
This category does not agree with $\CobF$ directly since the composition in 
$\End_{\ECobF}(\emptyset)$ involves deleting half of the collar. However, there is a functor 
$\CobF \longrightarrow \End_{\ECobF}(\emptyset)$ which sends an object $(\Sigma,\mathsf{f}^{n-1},\epsilon)$ to the object $(\Sigma,\mathsf{f}^{n-1},2\epsilon)$ 
and sends a regular morphism to the cobordism with a cylinder of length $\epsilon$ added to the ingoing and outgoing boundary. 

By restriction and pullback along this functor every extended functorial field theory $\mathcal{Z}\colon \ECobF
\longrightarrow \Tvs$ induces a functor $\CobF \longrightarrow \End_\Tvs(\fvs)\cong \fvs$,
i.e.\ an ordinary quantum field theory with values in the category of finite dimensional 
vector spaces.   
\end{remark}   
We define the \emph{tensor product} $\mathcal{Z}_1\boxtimes \mathcal{Z}_2$ of two extended 
quantum field theories $\mathcal{Z}_1,\mathcal{Z}_2\colon \ECobF \longrightarrow \Tvs$ pointwise, i.e.\
by $(\mathcal{Z}_1\boxtimes \mathcal{Z}_2)[S]=\mathcal{Z}_1(S) \boxtimes \mathcal{Z}_2(S)$ for 
all objects $S \in \ECobF$.
We generalize Definition~\ref{Def: Invertible} to the extended case:
\begin{definition}\label{Def: E Invertible}
Let $\mathcal{Z}\colon \ECobF \longrightarrow \Tvs$ be an extended functorial quantum field theory. The theory $\Za$ is \emph{invertible}
if there exists a functorial quantum field theory $\mathcal{Z}^{-1}\colon \ECobF \longrightarrow \Tvs$
such that $\mathcal{Z}\boxtimes \mathcal{Z}^{-1}\cong 1$. 
\end{definition}
For a symmetric monoidal bicategory $(\Ba,\otimes_\Ba)$ we call the maximal 2-subgroupoid of $\Ba$
containing only elements, 1-morphisms and 2-morphisms which are (weakly) invertible with respect 
to the tensor product the \emph{maximal Picard 2-subgroupoid of $\Ba$}. Every invertible extended quantum field theory factors through the maximal Picard 2-subgroupoid of $\Tvs$, which is equivalent to 
a bicategory $B\C^\times$ with one object, one 1-morphism and elements of $\C^\times$ as 2-morphisms. 
  
A \emph{2-representation} of a groupoid $\mathcal{G}$ is a 2-functor $\mathcal{G}\longrightarrow \Tvs$.
Let $\ESymF$ be the groupoid consisting of all objects of $\ECobF$ and limit 1-morphisms. Note that this
is actually a groupoid, since limit morphisms compose strictly. Every extended functorial quantum 
field theory induces a 2-representation of $\ESymF$.    
We will provide examples of extended field theories in Chapter~\ref{Cha: Index} and \ref{Chapter: t Hooft}, but for the moment focus on 
the abstract description of anomalies in this framework. 
 
\section{Description of anomalies}\label{Sec: GT Anomalies}
We will now give a general description of anomalies in the framework of functorial quantum field theory. The point of view we take in this thesis is that anomalies in $n-1$ dimensions can be described by invertible extended field theories in $n$ dimensions \cite{FreedAnomalies, BoundaryConditionsTQFT, MonnierHamiltionianAnomalies}. This is naturally formulated in the language of symmetric monoidal bicategories (or $(\infty,n)$-categories, see \cite{BoundaryConditionsTQFT}). 

\subsection{Relative field theories}\label{Sec: RFT}
Let $\Ba$ be a bicategory. We denote by $\tr \Ba$, the \emph{truncation of $\Ba$}, the bicategory 
with the same objects 
and 1-morphisms, but only invertible 2-morphisms. 
Let $\Fa\colon \Ba \longrightarrow \Ba'$ be a 2-functor. 
We denote by $\tr \Fa \colon \tr \Ba \longrightarrow \Ba'$ its restriction to $\tr \Ba$. Note that
$\tr \Fa$ factors through $\tr \Ba'$. However, it will be important that we consider $\tr \Fa$ as a functor to
$\Ba'$.  
The truncation operation is essential for the following definition~\cite{RelativeQFT}:
\begin{definition}\label{Def: relative field theory}
Let $\mathcal{A}_1,\mathcal{A}_2\colon \ECobF \longrightarrow \Tvs$ be two extended quantum field 
theories. A \emph{functorial quantum field theory relative to $\mathcal{A}_1$ and $\mathcal{A}_2$}
is a symmetric monoidal natural transformation 
\begin{align}
\mathcal{Z}\colon \tr \Aa_1 \Longrightarrow \tr \Aa_2 \ \  . 
\end{align}
\end{definition}
\begin{remark}
This definition depends on what we mean by a `symmetric monoidal natural transformation'. We use 
Definition~\ref{Def:2sym transformation}. In particular, we do not require the natural transformations 
in the definition to be invertible and hence its direction is important. The kind of transformations we consider are sometimes called lax in the
literature. We refer to~\cite{Johnson-Freyd:2017ykw} for a discussion of the different definitions in the more general framework of 
$(\infty,n)$-categories. Our choice is justified by Corollary~\ref{Coro: Relative for 1 induces ordinary} below,
see also \cite[Theorem 1.5]{Johnson-Freyd:2017ykw}. 
\end{remark}
The picture one should have in mind is that the theory $\mathcal{Z}$ lives on an interface 
or codimension 1 defect between the theories $\Aa_1$ and $\Aa_2$. 
A field theory living on an interface between a theory $\Aa$ and the trivial theory 
is the same thing as a theory living on the boundary of $\Aa$. 
Anomalies can be described by a special type of relative field theories.
\begin{definition}
\label{Definition Anommalous field theory}
An \emph{anomalous quantum field theory} with anomaly described
by an invertible extended quantum field theory $\mathcal{A}\colon
\ECobF\longrightarrow \Tvs$ is a natural symmetric monoidal 2-transformation 
\begin{align}
\mathcal{Z} \colon \mbf1 \Longrightarrow \mathsf{tr}\mathcal{A} \ .
\end{align} 
We call $\mathcal{A}$ the \emph{anomaly quantum field theory} describing the anomaly of $\mathcal{Z}$.
\end{definition} 
In practice, anomaly field theories are of topological nature. For the description of some anomalies 
the invertibility condition has to be relaxed, e.g.\ in the context of 2-dimensional 
conformal field theories and 6-dimensional super conformal field theories~\cite{MonnierHamiltionianAnomalies, MonnierMAnomalies}. 
These generalisations will not be discussed in this thesis.     

Before we unpack the definition, we will look at the corresponding categorical definition.
We define the truncation $\tr \Ca$ of a category $\Ca$ to be its maximal subgroupoid. 
\begin{definition}\label{Def: Anomalous partition function}
Let $\Aa\colon \CobF \longrightarrow \fvs$ be an $n$-dimensional invertible 
functorial quantum field theory. 
An \emph{anomalous partition function with anomaly $\Aa$} is a natural symmetric monoidal 
transformation 
\begin{align}
\mathcal{Z}\colon \mbf1 \Longrightarrow \tr \Aa \ \ .
\end{align}   
\end{definition} 
Unpacking this definition, we get for every object $\Sigma\in \CobF$ a linear map 
$\mathcal{Z}(\Sigma)\colon \C
=\mbf1(\Sigma) \longrightarrow \mathcal{A}(\Sigma) $, such that the diagram
\begin{equation}\label{Eq: Diagram APF}
\begin{tikzcd}
\C \ar[d, swap, "{\id}"] \ar{rrr}{\Za(\Sigma)} & & & \mathcal{A}(\Sigma) \ar{d}{\mathcal{A}(\phi)} \\
 \C \ar[rrr, swap, "{\mathcal{Z}(\phi( \Sigma))}"] & & & \mathcal{A}\big(\phi( \Sigma)\big)
\end{tikzcd}
\end{equation} 
commutes for all limit morphisms $\phi$. Since
$\mathcal{A}$ is an invertible field theory, $\mathcal{A}(\Sigma)$ is
a one-dimensional vector space and, as such, isomorphic to $\C$, though not
necessarily in a canonical way. This translates into an ambiguity in
the definition of the partition function as a complex number, which is
the simplest manifestation of an anomaly.

We now unpack Definition~\ref{Definition Anommalous field theory} using the conventions and
notations outlined in Appendix~\ref{Chap: Bicat}.
For every
object $S\in \ECobF$ we get a $\C$-linear functor
\begin{equation}
\mathcal{Z}(S) \colon \fvs=\mbf1(S) \longrightarrow
\mathcal{A}(S) \ \ ,
\end{equation} which can be (non-canonically) identified with
a complex vector space in $\fvs$. 
For $S_-,S_+\in \ECobF$ we get a natural transformation 
\begin{equation}
\begin{tikzcd}
\Hom_{\tr \ECobF} \big(S_- , S_+\big) \ar[dd,swap,"\mathcal{A}"] \ar{rr}{\mbf1}& &\Hom_\Tvs (\fvs,\fvs)\ar{dd}{\mathcal{Z}(S_+)_\ast} \\ 
 & & \\
 \Hom_\Tvs \big(\mathcal{A}( S_-) ,\mathcal{A} (S_+)\big) \ar[uurr, Rightarrow, "\mathcal{Z}",shorten <= 2em, shorten >= 2em] \ar[rr,swap,"\mathcal{Z}(S_-)^\ast"] & & \Hom_\Tvs \big(\fvs, \mathcal{A}(S_+)\big)  
\end{tikzcd}
\end{equation}
which consists of a natural linear transformation
\[ 
\mathcal{Z}\big(\Sigma\big): \mathcal{A}\big(\Sigma\big)\circ \mathcal{Z}\big(S_-\big) \Longrightarrow A\big(S_+\big) 
\] 
for every 1-morphism $\Sigma \colon S_- \longrightarrow
S_+$. 
The definition further includes a modification $\mit\Pi_\mathcal{Z}$
consisting of natural isomorphisms 
\[
{\mit\Pi}_\mathcal{Z}\big(S_-,S_+\big)\colon \chi_\mathcal{A}\circ \mathcal{Z}\big(S_-\big)\boxtimes \mathcal{Z}\big(S_+\big)\Longrightarrow \mathcal{Z}\big(S_- \sqcup S_+\big) \circ \lambda_{\tr \ECobF}
\]
and a natural isomorphism
\[
M_\mathcal{Z}^{-1}\colon A(\emptyset)\Longrightarrow \iota_\mathcal{A} \ .
\]  
All of these structures have to satisfy appropriate compatibility conditions, which we summarize in
\begin{proposition}
\label{Lemma Compatibility conditions}
For every anomalous quantum field theory $\mathcal{Z}$ with anomaly $\mathcal{A}$, there are identities
\begin{align}
\label{EQ1: Lemma Compatibility conditions}
\mathcal{Z} \big(\Sigma_2\big) \circ \mathcal{Z}\big(\Sigma_1\big) &= \mathcal{Z}\big(\Sigma_2 \circ \Sigma_1\big) \circ \Phi_{\mathcal{A}}\big(\Aa(\Sigma_2) \circ \Aa(\Sigma_1)\big) \ , \\[4pt]
\label{EQ2: Lemma Compatibility conditions}
\mathcal{Z}\big(\id_{S}\big)\circ\Phi_\Aa\big(\Aa(\id_{S}) \big)&= \id_{\mathcal{Z}(S)} \ , \\[4pt]
\label{EQ3: Lemma Compatibility conditions}
\Za\big(\Sigma_1'\big) &= \Za\big(\Sigma_2'\big)\circ \big(\mathcal{A}(f)\bullet \id_{\Za(S_-')}\big) \ , 
\end{align}
for any 2-isomorphism $f\colon \Sigma_1' \Longrightarrow \Sigma_2'$, together with the following commutative diagrams wherein we suppress obvious structure 2-morphisms and identity 2-morphisms:
\begin{equation}
\label{EQ3': Lemma Compatibility conditions}
\begin{footnotesize}
\begin{tikzcd}
\mathcal{A}(\Sigma_1 \sqcup \Sigma_2)\bullet \chi_\mathcal{A} \bullet A(S_{1-})\boxtimes A(S_{2-}) \ar[rrr, Rightarrow,"{{\mit\Pi}_\Za(S_{1-}, S_{2-})}"]
 \ar[dd,swap,Rightarrow, "\Za{(\Sigma_1 \sqcup \Sigma_2)}"] 
& &  &\mathcal{A}(\Sigma_1 \sqcup \Sigma_2)\bullet \Za(S_{1-}\sqcup S_{2-})
\ar[dd,Rightarrow, "{\Za(\Sigma_1) \boxtimes \Za(\Sigma_2)}"] \\ 
 & & &
\\
\chi_\mathcal{A} \bullet \Za(S_{1+})\boxtimes \Za(S_{2+})
\ar[rrr,swap,Rightarrow, "{{\mit\Pi}_\Za(S_{1+},S_{2+})}"] 
& & & \Za(S_{1+}\sqcup S_{2+})
\end{tikzcd}
\end{footnotesize}
\end{equation}
\begin{equation}
\label{EQ4: Lemma Compatibility conditions}
\begin{scriptsize}
\begin{tikzcd}
 \mathcal{A}(\alpha_{\tr\ECobF})\!\bullet\! \Za((S_1\sqcup S_2)\sqcup S_3)
 \ar[r, Rightarrow,"{\Za(\alpha_{\tr \ECobF})}"]& 
 \Za((S_1\sqcup S_2)\sqcup S_3)
  \\
   & \\
\mathcal{A}(\alpha_{\tr \ECobF})\!\bullet\! \chi_\mathcal{A} \!\bullet\! ( A( S_1\sqcup S_2)\boxtimes A(S_3) )
\ar[uu,Rightarrow,"{{\mit\Pi}_\Za(S_1\sqcup S_2, S_3 )} "] &  \chi_\mathcal{A} \!\bullet\! (\Za(S_1)\boxtimes \Za(S_2 \sqcup S_3 ) )\ar[uu,swap,Rightarrow,"{{\mit\Pi}_\Za( S_1,S_2 \sqcup S_3) }"] \\
   & \\
\mathcal{A}(\alpha_{\tr\ECobF})\!\bullet\! \chi_\mathcal{A} \!\bullet\!  ((\chi_\mathcal{A}\!\bullet\! \Za(S_1)\boxtimes \Za(S_2))\boxtimes \Za(S_3)) \ar[uu,Rightarrow,"{{\mit\Pi}_\Za(S_1,S_2)} "] \ar[r,swap,Rightarrow, "{{\mit\Omega}_\mathcal{A} }"] & \chi_\mathcal{A} \!\bullet\!  ( \Za(S_1)\boxtimes (\chi_\mathcal{A}\!\bullet\! \Za( S_2)\boxtimes \Za(S_3) )) \ar[uu,swap,Rightarrow, "{{\mit\Pi}_\Za( S_2,S_3) }"]
\end{tikzcd}
\end{scriptsize}
\end{equation}
\begin{equation}
\label{EQ5: Lemma Compatibility conditions}
\begin{tikzcd}
 \mathcal{A}\big(\lambda_{\tr\ECobF}\big)\bullet \Za(\emptyset \sqcup S) \arrow[rr,Rightarrow,"\Za(\lambda_\ECobF)"] & & \Za\big(S\big)\ar[d, Rightarrow,"{{\mit\Gamma}_\mathcal{A}^{-1} }"] \\
\mathcal{A}\big(\lambda_{\tr\ECobF}\big)\bullet \chi_\mathcal{A} \bullet \Za(\emptyset )\boxtimes \Za\big(S\big) \ar[u,Rightarrow, "{{\mit\Pi}_\Za(\emptyset,S) }"]\arrow[rr,swap, Rightarrow,"M^{-1}_\Za\boxtimes \id"] & &  \iota_\mathcal{A} \boxtimes \Za\big(S\big)
\end{tikzcd}
\end{equation}
\begin{equation}
\label{EQ6: Lemma Compatibility conditions}
\begin{tikzcd}
\Za\big(S \sqcup \emptyset\big) \arrow[rr,Rightarrow,"\Za(\rho_\ECobF)"] & & \mathcal{A}\big(\rho_\ECobF\big)\bullet \Za\big(S\big)\ar[d,Rightarrow,"{{\mit\Delta}_\mathcal{A} }"] \\
\chi_\mathcal{A} \bullet \Za\big(S\big) \boxtimes \Za(\emptyset )\ar[u,Rightarrow,"{{\mit\Pi}_\Za(S,\emptyset)}"]\arrow[rr,swap, Rightarrow,"\id \boxtimes M^{-1}_\Za"] & & \Za \big(S\big)\boxtimes \iota_\mathcal{A}  
\end{tikzcd}
\end{equation}
and
\begin{equation}
\label{EQ7: Lemma Compatibility conditions}
\begin{small}
\begin{tikzcd}
\mathcal{A}( \beta_{\tr\ECobF})\bullet\chi_\mathcal{A}\bullet \Za(S_1)\boxtimes \Za(S_2) \ar[d,swap,Rightarrow,"{{\mit\Upsilon}^{-1}_\mathcal{A} }"] \ar[rrr,Rightarrow,"{{\mit\Pi}_\Za(S_1, S_2)}"] & & & \mathcal{A}( \beta_{\tr \ECobF})\bullet \Za(S_1\sqcup S_2)\ar[d,Rightarrow, "{\Za(\beta_{\tr \ECobF})}"]\\
\chi_\mathcal{A}\bullet  \Za(S_2)\boxtimes \Za(S_1) \ar[rrr,swap,Rightarrow, "{{\mit\Pi}_\Za(S_2,S_1)}"] & & & \Za(S_2\sqcup S_1)
\end{tikzcd}
\end{small}
\end{equation}
\end{proposition}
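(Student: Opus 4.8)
The plan is to prove the proposition by directly unpacking the definition of a symmetric monoidal 2-transformation (Definition~\ref{Def:2sym transformation}) in the special case where the source 2-functor is the trivial theory $\mbf1\colon \ECobF \longrightarrow \Tvs$. Recall that such a transformation $\mathcal{Z}\colon \mbf1 \Longrightarrow \tr\mathcal{A}$ consists of a 1-morphism $\mathcal{Z}(S)\colon \mbf1(S) \longrightarrow \mathcal{A}(S)$ for each object $S$, a naturality 2-morphism for each 1-morphism, a monoidal modification ${\mit\Pi}_\mathcal{Z}$, and a unit comparison $M_\mathcal{Z}$, all subject to the coherence axioms of a (lax) symmetric monoidal transformation. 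The proposition asserts that, once specialized, these axioms become exactly the identities \eqref{EQ1: Lemma Compatibility conditions}--\eqref{EQ7: Lemma Compatibility conditions}. Thus the proof is a translation: there is no hidden mathematical content beyond careful bookkeeping of the coherence 2-morphisms.

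First I would record how each piece of data simplifies. Since $\mbf1$ sends every 1-morphism to the identity functor $\id_{\fvs}$, the naturality 2-morphism attached to a 1-morphism $\Sigma\colon S_-\longrightarrow S_+$ has target $\mathcal{Z}(S_+)\circ \mbf1(\Sigma) = \mathcal{Z}(S_+)$, yielding the displayed transformation $\mathcal{Z}(\Sigma)\colon \mathcal{A}(\Sigma)\circ \mathcal{Z}(S_-) \Longrightarrow \mathcal{Z}(S_+)$. Crucially, the compositor, unitor, associator, unit and braiding structure 2-morphisms of $\mbf1$ are all identities, so every appearance of such a structure morphism of the source in the general axioms drops out, while the corresponding structure morphisms of the target functor $\mathcal{A}$ — namely $\Phi_\mathcal{A}$, $\chi_\mathcal{A}$, $\iota_\mathcal{A}$ and the bicategorical coherence 2-cells ${\mit\Omega}_\mathcal{A}$, ${\mit\Gamma}_\mathcal{A}$, ${\mit\Delta}_\mathcal{A}$, ${\mit\Upsilon}_\mathcal{A}$ — survive. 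This is precisely what accounts for the asymmetric appearance of $\Phi_\mathcal{A}$ and the structure morphisms of $\mathcal{A}$ throughout \eqref{EQ1: Lemma Compatibility conditions}--\eqref{EQ7: Lemma Compatibility conditions}.

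Next I would match each coherence axiom to its specialized form. The axiom expressing compatibility of the naturality 2-morphisms with horizontal composition of 1-morphisms, after inserting the trivial compositor of $\mbf1$, becomes \eqref{EQ1: Lemma Compatibility conditions}, with the compositor $\Phi_\mathcal{A}$ of the target the only structure morphism remaining; compatibility with identity 1-morphisms gives \eqref{EQ2: Lemma Compatibility conditions}. The axiom asserting that the naturality 2-morphisms are themselves natural with respect to 2-morphisms yields \eqref{EQ3: Lemma Compatibility conditions}; here the restriction to a 2-\emph{isomorphism} $f$ is forced because $\mathcal{Z}$ is a transformation out of $\tr\ECobF$, whose only 2-cells are invertible. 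The remaining diagrams come from the data and axioms making $\mathcal{Z}$ \emph{monoidal}: compatibility of the modification ${\mit\Pi}_\mathcal{Z}$ with the naturality 2-morphisms gives \eqref{EQ3': Lemma Compatibility conditions}; the associativity axiom relating ${\mit\Pi}_\mathcal{Z}$ to the associators and to ${\mit\Omega}_\mathcal{A}$ gives \eqref{EQ4: Lemma Compatibility conditions}; the left and right unit axioms, involving $M_\mathcal{Z}$ together with ${\mit\Gamma}_\mathcal{A}$ and ${\mit\Delta}_\mathcal{A}$, give \eqref{EQ5: Lemma Compatibility conditions} and \eqref{EQ6: Lemma Compatibility conditions}; and compatibility of ${\mit\Pi}_\mathcal{Z}$ with the braidings, through ${\mit\Upsilon}_\mathcal{A}$, gives \eqref{EQ7: Lemma Compatibility conditions}.

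The one point requiring real care — and the main obstacle — is the bookkeeping of exactly which coherence 2-morphisms of the ambient symmetric monoidal bicategory $\Tvs$ and of the functor $\mathcal{A}$ appear, and in which order, once the trivial structure of $\mbf1$ has been cancelled. In writing out \eqref{EQ3': Lemma Compatibility conditions}--\eqref{EQ7: Lemma Compatibility conditions} I would suppress the manifestly invertible structure 2-morphisms (as the statement already does) and track only the named morphisms $\chi_\mathcal{A}$, $\iota_\mathcal{A}$, $M_\mathcal{Z}$ and the bicategorical 2-cells ${\mit\Omega}_\mathcal{A}$, ${\mit\Gamma}_\mathcal{A}$, ${\mit\Delta}_\mathcal{A}$, ${\mit\Upsilon}_\mathcal{A}$; their placement is dictated by the general definition collected in Appendix~\ref{Chap: Bicat}. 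With this translation in hand, each displayed diagram is literally one of the defining axioms of a symmetric monoidal transformation evaluated at $\mbf1 \Longrightarrow \tr\mathcal{A}$, which completes the proof.
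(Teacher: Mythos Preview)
Your proposal is correct and follows essentially the same approach as the paper: both arguments simply identify each displayed identity or diagram as the specialization to $\mbf1 \Longrightarrow \tr\mathcal{A}$ of one of the defining coherence conditions of a natural symmetric monoidal 2-transformation (Definitions~\ref{Definition transformation Bicategory}, \ref{Definition Modification Bicategory} and \ref{Def:2sym transformation}), with \eqref{EQ1: Lemma Compatibility conditions}--\eqref{EQ2: Lemma Compatibility conditions} coming from \eqref{Equation1: Definition Transformation}--\eqref{Equation2: Definition Transformation}, \eqref{EQ3: Lemma Compatibility conditions} from naturality, \eqref{EQ3': Lemma Compatibility conditions} from the modification axiom \eqref{EQ: Modification}, and \eqref{EQ4: Lemma Compatibility conditions}--\eqref{EQ7: Lemma Compatibility conditions} from \eqref{EQ:1 Definition s.m. transformation}--\eqref{EQ:4 Definition s.m. transformation}. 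Your additional emphasis on why the source-side structure morphisms disappear is a helpful elaboration of the same bookkeeping.
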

\begin{proof}
Writing out the coherence diagrams \eqref{Equation1: Definition
  Transformation} and \eqref{Equation2: Definition Transformation} for
$\Za$ implies \eqref{EQ1: Lemma Compatibility conditions} and
\eqref{EQ2: Lemma Compatibility conditions}. The identity \eqref{EQ3:
  Lemma Compatibility conditions} is the naturality condition for the
natural symmetric monoidal 2-transformation $\Za$. The diagram \eqref{EQ3': Lemma Compatibility conditions} follows from the diagram \eqref{EQ: Modification} for the modification ${\mit\Pi}_\Za$.
The diagrams \eqref{EQ4: Lemma Compatibility conditions}--\eqref{EQ7: Lemma Compatibility conditions} follow from writing out the coherence conditions \eqref{EQ:1 Definition s.m. transformation}--\eqref{EQ:4 Definition s.m. transformation} for $\Za$.
\end{proof}
\begin{remark}
These conditions should be understood as a projective (or twisted) version of the
definition of a symmetric monoidal functor. For this reason we have
drawn the diagrams \eqref{EQ4: Lemma Compatibility
  conditions}--\eqref{EQ7: Lemma Compatibility conditions} in close
analogy to the diagrams appearing in the definition of a braided monoidal
functor. 
\end{remark}
\begin{remark}\label{Rem: concrete description of anomalous field theories }
This remark provides a more concrete description of the data and 
axioms of an anomalous field theory using Remark~\ref{Rem: 2VS}.
The functor $\Za(S)\colon \fvs \longrightarrow \Aa(S)$ can be
described by an object $Z(S)[\C]\in \Aa(S)$ which by a slight
abuse of notation we denote again by $\Za(S)$. The natural
transformation $\Za(\Sigma)\colon \mathcal{A}(\Sigma) \circ \Za(S_1) \Longrightarrow \Za(S_2)$ 
can be described by a morphism
$\Za(\Sigma)\colon
\Aa(\Sigma)[\Za(S_1)]\longrightarrow \Za(S_2)$ in
$\Aa(S_2)$. Requiring $\Za$ to be a natural 2-transformation explicitly reduces
to the following:
Let $S$, $S_1$, $S_2$ and $S_3$ be
objects of $\ECobF$, and $\Sigma_a\colon S_1
\longrightarrow S_2$ and $\Sigma_b \colon S_2 \longrightarrow S_3$ be 1-morphisms in $\ECobF$. Then the diagrams
\begin{equation}\label{Condition twisted functoriality}
\begin{tikzcd}
\Aa(\Sigma_b)\circ \Aa(\Sigma_a)[\Za(S_1)] \ar[dd,
"{\Aa(\Sigma_b)[\Za(\Sigma_a)]}",swap] \ar[r] &
\Aa(\Sigma_b \circ \Sigma_a)[\Za(S_1)]
\ar[dd, "{\Za(\Sigma_b \circ \Sigma_a)}"] \\ & \\
\Aa(\Sigma_b)[\Za(S_2)] \ar[r, swap,"{\Za(\Sigma_b)}"] & \Za(S_3)
\end{tikzcd} 
\end{equation}
and
\begin{equation}\label{Condition twisted preservation of identities}
\begin{tikzcd}[column sep=tiny]
\Za(S)\ar[rr] \ar[rd, "\id", swap] &  & \Aa(\id_{S})[\Za(S)] \ar[ld, "{\Za(\id_{S})}"] \\
 &\ \ \ \Za(S)
\end{tikzcd}
\end{equation}
commute, where the unlabelled morphisms are part of the structure of the extended field theory $\Aa$. 
The modification $M^{-1}_\Za$ can be described explicitly by specifying natural morphisms 
\[ 
M^{-1}\colon Z(\varnothing) \longrightarrow \iota_E (\C)  
\] 
in $\Aa (\varnothing)$. The modification $\Pi_\mathcal{Z}$ is described by natural morphisms 
\[
\Pi_{\Za}\big(S_1\,,\,S_2\big)\colon \chi_{\Aa}[ \Za(S_1)\boxtimes \Za(S_2)] \longrightarrow \Za(S_1 \sqcup S_2) \ 
\]
in $\Aa(S_1 \sqcup S_2)$. We do not spell out the condition corresponding to the monoidal structure 
explicitly. 
\end{remark}
We have the following important corollary of Proposition~\ref{Lemma Compatibility conditions} justifying
Definition~\ref{Definition Anommalous field theory}.
\begin{corollary}\label{Coro: Relative for 1 induces ordinary}
An anomalous quantum field theory with trivial anomaly $\Aa \colon \mbf1
\Longrightarrow \mbf1$ is an $n-1$-dimensional quantum field theory.
\end{corollary}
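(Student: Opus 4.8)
The plan is to unpack the data of the transformation $\mathcal{Z}\colon \mbf1 \Longrightarrow \tr\mbf1$ via Remark~\ref{Rem: concrete description of anomalous field theories } and to observe that, because every structural $1$- and $2$-morphism of the trivial extended theory $\mbf1$ is an identity, all the projective (twisted) conditions collected in Proposition~\ref{Lemma Compatibility conditions} collapse to the honest axioms of a symmetric monoidal functor out of the $(n-1)$-dimensional cobordism category.

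First I would extract the underlying data. Since $\mathcal{A}=\mbf1$, for every object $S\in\ECobF$ (a closed $(n-2)$-manifold with its constant background field) the $\C$-linear functor $\mathcal{Z}(S)\colon \fvs\longrightarrow\mbf1(S)=\fvs$ is determined up to the canonical identification by the finite-dimensional vector space $\mathcal{Z}(S)\coloneqq \mathcal{Z}(S)[\C]$. For a $1$-morphism $\Sigma\colon S_-\longrightarrow S_+$ the component $\mathcal{Z}(\Sigma)\colon \mathcal{A}(\Sigma)[\mathcal{Z}(S_-)]\longrightarrow \mathcal{Z}(S_+)$ has source $\mbf1(\Sigma)[\mathcal{Z}(S_-)]=\mathcal{Z}(S_-)$, so it is merely a linear map $\mathcal{Z}(\Sigma)\colon \mathcal{Z}(S_-)\longrightarrow\mathcal{Z}(S_+)$. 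Thus $\mathcal{Z}$ assigns a vector space to each $(n-2)$-manifold and a linear map to each $(n-1)$-cobordism, which is precisely the data of an $(n-1)$-dimensional theory valued in $\fvs$ (cf.\ Remark~\ref{Rem: EQFT induces QFT}).

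Next I would verify functoriality. Specialising the diagrams \eqref{Condition twisted functoriality} and \eqref{Condition twisted preservation of identities} (equivalently \eqref{EQ1: Lemma Compatibility conditions} and \eqref{EQ2: Lemma Compatibility conditions}) to $\mathcal{A}=\mbf1$, the coherence morphism $\Phi_{\mbf1}$ and the comparison data of $\mbf1$ are identities, so the two diagrams read $\mathcal{Z}(\Sigma_b)\circ\mathcal{Z}(\Sigma_a)=\mathcal{Z}(\Sigma_b\circ\Sigma_a)$ and $\mathcal{Z}(\id_S)=\id_{\mathcal{Z}(S)}$. The naturality identity \eqref{EQ3: Lemma Compatibility conditions} involves $\mathcal{A}(f)$ for a $2$-isomorphism $f\colon\Sigma_1'\Longrightarrow\Sigma_2'$; as $\mathcal{A}(f)=\id$ it forces $\mathcal{Z}(\Sigma_1')=\mathcal{Z}(\Sigma_2')$. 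Hence $\mathcal{Z}$ is invariant under invertible $2$-morphisms and therefore descends along the passage from $\tr\ECobF$ to the $1$-category obtained by identifying $1$-morphisms related by an $\mathscr{F}$-diffeomorphism; its morphisms are equivalence classes of $(n-1)$-cobordisms together with limit morphisms, which are exactly the morphisms of the $(n-1)$-dimensional category $\CobF$ of Definition~\ref{Def: QFT}.

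Finally I would treat the symmetric monoidal structure and point to the delicate step. With $\mathcal{A}=\mbf1$ the comparison data $\chi_{\mbf1}$ and $\iota_{\mbf1}$ reduce to the canonical identifications $\fvs\boxtimes\fvs\simeq\fvs$ and $\fvs\simeq\mbf1(\varnothing)$, so the modifications $\Pi_{\mathcal{Z}}$ and $M^{-1}_{\mathcal{Z}}$ become natural isomorphisms $\mathcal{Z}(S_1)\otimes\mathcal{Z}(S_2)\cong\mathcal{Z}(S_1\sqcup S_2)$ and $\mathcal{Z}(\varnothing)\cong\C$. The coherence diagrams \eqref{EQ3': Lemma Compatibility conditions}--\eqref{EQ7: Lemma Compatibility conditions} feature the structure $2$-isomorphisms ${\mit\Omega}_{\mathcal{A}}, {\mit\Gamma}_{\mathcal{A}}, {\mit\Delta}_{\mathcal{A}}, {\mit\Upsilon}_{\mathcal{A}}$ of $\mathcal{A}$; for $\mathcal{A}=\mbf1$ these are all identities, so the diagrams reduce to the naturality of the tensor comparison and to the associativity, unit, and symmetry axioms, making $(\mathcal{Z},\Pi_{\mathcal{Z}},M_{\mathcal{Z}}^{-1})$ a symmetric monoidal functor $\CobF\longrightarrow\fvs$, i.e.\ an $(n-1)$-dimensional quantum field theory; conversely such a functor manifestly defines a transformation $\mbf1\Longrightarrow\tr\mbf1$, and the two assignments are mutually inverse. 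I expect the main obstacle to be exactly this bookkeeping: confirming that each a priori nontrivial structure $2$-isomorphism of $\mbf1$ is literally an identity, so that the twisted coherence conditions collapse with no residual data, together with checking that the invertible $2$-morphisms of $\ECobF$ are precisely the $\mathscr{F}$-diffeomorphisms between $(n-1)$-cobordisms — which is what guarantees that the quotient of $\tr\ECobF$ reproduces the morphisms of $\CobF$ rather than a coarser category.
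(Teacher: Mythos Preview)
Your proposal is correct and follows essentially the same approach as the paper: identify $\mathcal{Z}(S)$ with the vector space $\mathcal{Z}(S)(\C)$, identify $\mathcal{Z}(\Sigma)$ with a linear map, and then invoke the compatibility conditions of Proposition~\ref{Lemma Compatibility conditions} to conclude that these data assemble into a symmetric monoidal functor. The paper's own proof is much terser---it simply states that the compatibility conditions imply the result---whereas you spell out explicitly how each twisted coherence condition collapses to the corresponding untwisted axiom when $\mathcal{A}=\mbf1$, and you also address the converse direction and the passage to equivalence classes of $1$-morphisms; this additional care is appropriate and does not constitute a different argument.
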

\begin{proof}
 We can canonically identify the functor $\Za(S)\colon \fvs\longrightarrow \fvs$ with the vector space $\Za(S)(\C )$ and the natural transformation $\Za(\Sigma)\colon \text{id}_{\fvs}\circ \Za(S_-)\Rightarrow \Za(S_+)$ with a linear map $\Za(\Sigma) \colon  \Za(S_-)(\C)\longrightarrow \Za(S_+)(\C)$. The compatibility conditions summarised by Proposition~\ref{Lemma Compatibility conditions} then imply that the vector spaces and linear maps defined in this way form a quantum field theory.
\end{proof} 
\begin{remark}
Let $\Aa \colon \ECobF \longrightarrow \Tvs$ be an invertible extended field theory and 
$A\colon \CobF \longrightarrow \fvs$ 
the non-extended field theory induced by restricting $\Aa$ to the endomorphisms of the monoidal
unit in $\ECobF$ as explained in Remark~\ref{Rem: EQFT induces QFT}.  
Restricting Definition~\ref{Definition Anommalous field theory} to the endomorphisms of
$\emptyset$ induces a natural transformation $\tr\, A \Longrightarrow
1$ as can be seen from Remark~\ref{Rem: concrete description of anomalous field theories }. 
The direction for this transformation does not agree with the one in 
Definition~\ref{Def: Anomalous partition function}. 
Since in most physically relevant examples all vector spaces are
Hilbert spaces, this discrepancy can be resolved by taking the
adjoint. Here we stick to Definition~\ref{Def: Anomalous partition function}, 
because it has a natural geometric interpretation in terms of line bundles as 
explained in the next section.     
\end{remark}

\subsection{Projective anomaly actions}\label{Sec: P anomaly actions}

In the Lagrangian description of anomalies the partition function of an anomalous field theory can be identified with 
a section of a line bundle over the space of background gauge fields. The anomaly 
is then realised as the non-triviality of this line bundle~\cite{NashBook,CM94}. 
In the Hamiltonian description of anomalies~\cite{Mickelsson:1983xi, CM96,Carey1997,CM94} the state space of the theory cannot 
be defined in a gauge invariant way. The obstruction to this is the non-triviality of 
a gerbe over the space of background fields. Gerbes are higher analogues of line bundles. 
To recover these perspectives from the functorial approach we briefly review the theory 
of line bundles and 2-line bundles over groupoids~\cite{TwistedDWandGerbs,SWParallel}.

A \emph{vector bundle over a groupoid $\cG$} is a functor $\rho : \cG \longrightarrow \fvs$. Although in algebraic terms this is just a representation of $\cG$, the geometric viewpoint has proven to be profitable, see e.g.\ \cite{TwistedDWandGerbs} or \cite{OFK}. 

A \emph{line bundle} $L :  \cG \longrightarrow \fvs$ is a vector bundle for which all fibres, i.e.\ images $\rho(x)$ for $x\in \cG$, are 1-dimensional vector spaces. 
Formulated differently, a line bundle takes values in the maximal Picard subgroupoid of $\fvs$. 
The simplest example is the trivial line bundle $\mbf1 \colon \cG \longrightarrow \fvs$ sending 
every object to $\C$ and every morphism to the identity map. A \emph{section} of a line bundle
$L\colon \cG \longrightarrow \fvs$ is a natural transformation $\mbf1 \Longrightarrow L$.

The Picard groupoid corresponding to $\fvs$ is equivalent to the category $\C \DS \C^\times$ with one object $\C$ and $\C^\times$ as endomorphisms. Hence, we
can factor every line bundle up to a natural isomorphism as
\begin{equation}\label{EQ: Classification of line bundles}
\begin{tikzcd}
\, \cG  \ar{r}{L}\ar{d}{\tilde{L}} & \fvs \\
\C\DS \C^\times \ar[hookrightarrow]{ru}{} &
\end{tikzcd}
\end{equation} 
This shows that the groupoid of line bundles over $\cG$ is equivalent to the groupoid $[\cG,\C\DS\C^\times]$ of functors $\cG \longrightarrow \C\DS\C^\times$. Hence, line bundles over $\cG$ are classified by 
\begin{align}
\pi_0[\cG,\C\DS\C^\times] =  [|B\cG|,K(\C^\times,1)] = H^1(\cG;\C^\times) \, 
\end{align}
where 
\begin{itemize}
	\item $|B\cG|$ is the geometric realisation of the nerve $B\cG$ of $\cG$,
	\item  $K(\C^\times,1)$ is the aspherical Eilenberg-MacLane space with fundamental group $\C^\times$
	\item and $H^1(\cG;\C^\times)$ the first groupoid cohomology with coefficients in $\C^\times$.
\end{itemize}
The restriction of every invertible field theory $\Aa \colon \CobF \longrightarrow \fvs$ 
to $\tr \CobF$ induces a line bundle over the groupoid $ \tr \CobF$. Note that $\tr \CobF$ is nothing else
than the groupoid $\SymF$ of symmetries. An anomalous partition function $\Za \colon \mbf1 \Longrightarrow \tr \Aa$ is then just a section of this line bundle. 

\begin{remark}
If the stack $\F$ is 
a smooth stack, i.e.\ the groupoid of background field configurations is a (infinite dimensional)
Lie groupoid, such as the stack of connections on a principal $G$-bundle, then it is natural 
to require the line bundle $\Aa|_\SymF$ and the section corresponding to the anomalous partition function
to be smooth in an appropriate sense. We do not develop this idea in any detail in this thesis.         
\end{remark}
Categorifying the definition of a vector bundle over a groupoid we arrive at the following notion, see \cite{SWParallel}:
\begin{definition}
A \emph{2-vector bundle over a groupoid $\cG$} is a 2-functor
\begin{align}
\rho : \cG \longrightarrow \Tvs \ ,
\end{align}
where we consider $\cG$ as a 2-category with only trivial 2-morphisms. 
\end{definition}
This is again just a 
2-representation of $\cG$. However, the geometric perspective makes the relation to classical 
approaches to the description of anomalies more apparent.
    
A \emph{2-line bundle over $\cG$} is a 2-vector bundle which takes values in the full Picard sub-2-groupoid of $\Tvs$. A \emph{section $s$} of a 2-line bundle $L\colon \cG \longrightarrow \Tvs$ is
a natural 2-transformation $s \colon \mbf1 \Longrightarrow L$. 
 
A 2-vector space is invertible with respect to the Degline tensor product if and only if it is 1-dimensional, i.e.\ equivalent to the category of vector spaces. Every linear functor between two 1-dimensional 2-vector spaces can be described up to natural isomorphism by a vector space, see
Remark~\ref{Rem: 2VS}. A functor is invertible if and only if this vector space is 1-dimensional.
This shows that the Picard 2-subgroupoid of $\Tvs$ is equivalent to $\fvs\DS \id \DS \C^\times$.  
Hence, for every 2-line bundle $L: \cG \longrightarrow \Tvs$ there is a diagram 
\begin{equation}\label{EQ: Classification of 2-line bundles}
\begin{tikzcd}
\, \cG \ar{r}{L}  \ar[swap]{d}{\tilde{L}} & \fvs \\
\fvs\DS \id \DS \C^\times \ar[hookrightarrow]{ru}{}  &
\end{tikzcd}
\end{equation} 
commutative up to a 2-isomorphism.

Using the higher categorical analogue of \eqref{EQ: Classification of line bundles}, we arrive at a classification of 2-line bundles in terms of
\begin{align}
\pi_0[\cG,\fvs\DS \id \DS \C^\times] = [|B\cG|, K(\C^\times,2)] = H^2(\cG;\C^\times)\ .
\end{align} 
Let us be a bit more explicit on how 2-functors $  \cG \longrightarrow \fvs\DS \id 
\DS \C^\times=\mathsf{B}^2\C^\times$
are related to 2-cocycles and their coboundaries.
\begin{remark}\label{rem:2cocycle}
\begin{itemize}
\item[(a)]
Let $\alpha\colon \cG \longrightarrow \mathsf{B}^2\C^\times$ be a 2-functor. Writing out Definition \ref{Definition Morphism Bicategory} we get for every pair $(g,g')\in \mathrm{Hom}_\cG(G_1,G_2)\times \mathrm{Hom}_\cG(G_2,G_3)$ a non-zero complex number $\alpha_{g,g'}$ such that 
\begin{align}
\alpha_{g_3\circ g_2, g_1} \, \alpha_{g_3,g_2} = \alpha_{g_3, g_2\circ g_1} \,  \alpha_{g_2,g_1} \ , 
\end{align}
for all composable morphisms $g_1,g_2,g_3$, and 
\begin{align}
\alpha_{\id_{\sft(g)},g}=\alpha_{\id_{\sft(g)},\id_{\sft(g)}}=\alpha_{g,\id_{\sfs(g)}} \ . 
\end{align}
Note that the 2-morphism $\alpha_1 \colon \alpha(\id)\Longrightarrow \id$ is completely fixed by the coherence condition \eqref{EQ2: Definition 2Functor} and takes the value $\alpha_{\id,\id}^{-1}$.

\item[(b)]
The data contained in a natural 2-transformation $\sigma :\alpha\Longrightarrow\alpha'$ between two 2-cocycles is given by a collection $\sigma_g \in \C^\times$ for all morphisms $g$ in $\cG$ such that
\begin{align}
\sigma_{g_2\circ g_1} \ \alpha'_{g_2,g_1}= \alpha_{g_2,g_1} \ \sigma_{g_1} \, \sigma_{g_2} 
\end{align}
for all composable morphisms $g_1,g_2$. This is the coherence condition \eqref{Equation1: Definition Transformation} which also implies \eqref{Equation2: Definition Transformation}.
We see that natural 2-transformations restrict to the usual coboundaries on endomorphisms of an object.

\item[(c)]
The data contained in a modification $\theta:\sigma\Rrightarrow\sigma'$ between two natural 2-transformations is an assignment of an element $\theta_G \in \C^\times $ to every $G\in \mathrm{Obj}(\mathscr{G})$ such that
\begin{align}
\theta_{\sft(g)} \ \sigma_g = \sigma'_g \ \theta_{\sfs(g)} \ ,
\end{align}
which is the condition \eqref{EQ: Modification}.
\end{itemize}
\end{remark} 
For an anomalous quantum field theory the symmetry group (groupoid) only acts projectively.
Projective representations of groupoids have the following concrete definition 
(see e.g.~\cite[Section 2.3.1]{TwistedDWandGerbs}). 

\begin{definition}
A \emph{projective representation} $\rho $ of a groupoid $\cG$ twisted by a 2-cocycle $\alpha\colon \cG \longrightarrow \sfB^2\C^\times$ consists of the following data:
\begin{itemize}
\item[(a)]
A complex vector space $V_G$ for all $G\in \Obj(\cG)$.

\item[(b)]
A linear map $\rho (g)\colon V_{\sfs(g)}\longrightarrow V_{\sft(g)}$ for each morphism $g$ of $\cG$ such that 
\begin{align}
\rho(g_2) \circ \rho(g_1) = \alpha_{g_2,g_1} \ \rho (g_2 \circ g_1)
\end{align}
for all composable morphisms $g_1,g_2$.
\end{itemize}
\end{definition}
There is a reformulation of this definition adapted to the study of anomalous functorial field theories. 
\begin{proposition}\label{prop:projrep}
A projective groupoid representation with 2-cocycle $\alpha\colon \cG\longrightarrow \sfB^2\C^\times \subset \Tvs$ is the same as a natural 2-transformation $\mbf1 \Longrightarrow \alpha$, where $\alpha$ is considered as a 2-functor to $\Tvs$.\footnote{This is the same as a higher fixed point for the representation $\alpha $ of $\cG$.}  
\end{proposition}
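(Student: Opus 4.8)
The plan is to unpack both sides of the claimed equivalence into concrete linear-algebraic data using Remark~\ref{Rem: 2VS}, and then to check that the defining relation of a projective representation matches, term by term, the coherence conditions of a natural 2-transformation $\mbf1 \Longrightarrow \alpha$. First I would record the components of such a transformation $\eta \colon \mbf1 \Longrightarrow \alpha$ in the sense of Definition~\ref{Definition transformation Bicategory}. For every object $G \in \Obj(\cG)$ it supplies a $\C$-linear functor $\eta_G \colon \mbf1(G) = \fvs \longrightarrow \alpha(G) = \fvs$, which by Remark~\ref{Rem: 2VS} is the same datum as a complex vector space $V_G$; this reproduces item~(a) of the definition of a projective representation. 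For every morphism $g \colon G_1 \longrightarrow G_2$ of $\cG$ it supplies a 2-morphism $\eta_g \colon \alpha(g) \circ \eta_{G_1} \Longrightarrow \eta_{G_2} \circ \mbf1(g)$; since $\alpha(g) = \id_\fvs$ and $\mbf1(g) = \id_\fvs$, this is just a natural transformation $\eta_{G_1} \Longrightarrow \eta_{G_2}$ between the two functors, i.e.\ by Remark~\ref{Rem: 2VS} a linear map $\rho(g) \colon V_{\sfs(g)} \longrightarrow V_{\sft(g)}$, reproducing item~(b).

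Next I would translate the coherence condition \eqref{Equation1: Definition Transformation} for $\eta$ into the cocycle relation. For composable $g_1, g_2$ this condition equates the two ways of filling the composite naturality square, and the only non-trivial ingredients are the compositors of the two 2-functors. The compositor of $\mbf1$ is the identity, while by Remark~\ref{rem:2cocycle}(a) the compositor of $\alpha$ is the scalar $\alpha_{g_2,g_1} \in \C^\times$. Because all the 1-morphisms involved are identity functors on $\fvs$, every whiskering in \eqref{Equation1: Definition Transformation} is trivial and horizontal composition of the relevant 2-cells collapses to vertical composition, which under Remark~\ref{Rem: 2VS} is exactly composition of linear maps. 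Inserting the scalar compositor then yields
\[
\rho(g_2) \circ \rho(g_1) = \alpha_{g_2,g_1} \, \rho(g_2 \circ g_1) \ ,
\]
which is precisely relation~(b) in the definition of a projective representation twisted by $\alpha$.

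Finally I would dispatch the unit coherence \eqref{Equation2: Definition Transformation} and the converse. Unwinding \eqref{Equation2: Definition Transformation} fixes $\rho(\id_G)$ in terms of the identitor of $\alpha$, which by the normalization in Remark~\ref{rem:2cocycle}(a) equals $\alpha_{\id,\id}^{-1}$; since $\cG$ is a groupoid this is consistent with, and already forced by, the cocycle relation applied to $g_1 = g_2 = \id_G$, so it imposes no further data. The construction is manifestly reversible: starting from a projective representation $(V_G, \rho(g))$ one reads the $V_G$ and $\rho(g)$ back as the components $\eta_G$ and $\eta_g$ via Remark~\ref{Rem: 2VS}, and relation~(b) together with the unit normalization guarantees that \eqref{Equation1: Definition Transformation} and \eqref{Equation2: Definition Transformation} hold. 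This sets up a bijection between the two kinds of data, which is the assertion of the proposition. I expect the main obstacle to be the bookkeeping: carefully identifying which structure 2-morphism of $\alpha$ plays the role of the compositor, and verifying that the whiskerings and associators in the appendix's coherence diagrams all collapse to identities, so that the scalar $\alpha_{g_2,g_1}$ lands on the correct side of the equation.
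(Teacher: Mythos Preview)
Your proposal is correct and takes exactly the approach the paper indicates: the paper's proof is the one-line ``This follows immediately from spelling out Definition~\ref{Definition transformation Bicategory},'' and you have simply carried out that unpacking in detail, matching the components $\eta_G$, $\eta_g$ to $V_G$, $\rho(g)$ and reading off the cocycle relation from \eqref{Equation1: Definition Transformation}. There is nothing to add.
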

\begin{proof}
This follows immediately from spelling out Definition \ref{Definition transformation Bicategory}. 
\end{proof}
\begin{remark}
We can use Proposition~\ref{prop:projrep} to define intertwiners between projective representations as modifications between the corresponding 2-transformations. 
\end{remark}
To apply this general formalism to the anomalous field theories at
hand, first note that by restriction every extended invertible quantum field theory 
$\Aa \colon \ECobF \longrightarrow \Tvs$ induces 
a 2-line bundle over $\ESymF$. An anomalous field theory $\Za \colon \mbf1 \Longrightarrow \tr\Aa$
induces a section of this 2-line bundle. 
We denote by $\Pic_2(\Tvs)$ the Picard 2-groupoid of the 2-category of 2-vector 
spaces; there is a
canonical embedding $\Pic_2(\Tvs)\longrightarrow\Tvs$. 
An extended quantum field theory $\Aa$ is invertible if and only if 
it factors uniquely through $\Aa:\CobF\longrightarrow\mathsf{Pic}_2(\Tvs)\hookrightarrow\Tvs$. 
 
We can pick an equivalence of 2-categories $\Pic_2(\Tvs)\longrightarrow \mathsf{B}^2\C^\times$ by choosing a non-canonical equivalence between every invertible 2-vector space and $\fvs$. This identifies every 
invertible linear functor between invertible 2-vector spaces with a one dimensional vector space. 
Picking a linear isomorphism from every 1-dimensional vector space to $\C$ induces this 
equivalence. An inverse to this equivalence is given by the embedding 
$\iota \colon \mathsf{B}^2\C^\times \longrightarrow \mathsf{Pic}_2(\Tvs)$.
The invertibililty of the anomaly quantum field theory $\mathcal{A}$ and this equivalence induces a 2-cocycle of the symmetry groupoid with values in $\C^\times$: 
\begin{align}
\alpha^\Aa:\ESymF  \longrightarrow \mathsf{B}^2\C^\times \ .
\end{align}
The cohomology class of this 2-cocycle is independent of the choices involved: let $\phi,\phi' \colon \Pic_2(\Tvs)\longrightarrow \mathsf{B}^2\C^\times$ be two equivalences of bicategories which both 
have the inclusion $\iota\colon \mathsf{B}^2\C^\times \longrightarrow \Pic_2(\Tvs)$ as weak inverse. 
We then get 
a chain of natural isomorphisms
\begin{align}
\phi \Longrightarrow \phi \circ \iota \circ \phi' \Longrightarrow \phi' 
\end{align} 
inducing a natural isomorphism $\alpha^\Aa \Longrightarrow \alpha'^\Aa$, which is by 
Remark~\ref{rem:2cocycle} a coboundary. 
Combining these facts with Proposition~\ref{prop:projrep} we can then infer
\begin{proposition}
Every anomalous quantum field theory $\Za:\mbf1\Longrightarrow\tr\Aa$ induces a projective
representation of the symmetry groupoid $\ESymF$. The
2-cocycle $\alpha^\Aa$ corresponding to this representation is unique up to coboundary.
\end{proposition}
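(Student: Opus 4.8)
The plan is to assemble the statement from the three ingredients developed immediately above: the restriction of $\Za$ to the symmetry groupoid, the identification of the restricted anomaly theory with a $\C^\times$-valued $2$-cocycle via the chosen equivalence, and Proposition~\ref{prop:projrep}. First I would restrict the natural symmetric monoidal $2$-transformation $\Za\colon \mbf1 \Longrightarrow \tr\Aa$ along the inclusion of $\ESymF$ into $\tr\ECobF$. Since $\ESymF$ consists of the objects of $\ECobF$ together with only limit $1$-morphisms, and since these compose strictly with no nontrivial $2$-morphisms between them, $\ESymF$ is a genuine groupoid; restricting $\mbf1$ yields the trivial $2$-line bundle, while restricting $\tr\Aa$ yields the $2$-line bundle $\Aa|_{\ESymF}$ over $\ESymF$ discussed above. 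The restricted transformation is then a section $\mbf1 \Longrightarrow \Aa|_{\ESymF}$ of this $2$-line bundle.

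Next I would use invertibility of $\Aa$ to replace $\Aa|_{\ESymF}$ by a strict $2$-cocycle. Because $\Aa$ is invertible it factors through $\Pic_2(\Tvs)$, and composing with the chosen equivalence $\phi\colon \Pic_2(\Tvs)\longrightarrow \mathsf{B}^2\C^\times$ produces, by definition, the $2$-cocycle $\alpha^\Aa\colon \ESymF \longrightarrow \mathsf{B}^2\C^\times$. Post-composing (whiskering) the section above with $\phi$ transports it to a natural $2$-transformation $\mbf1 \Longrightarrow \alpha^\Aa$, where I use that $\phi$ sends the monoidal unit $\fvs$ of $\Tvs$ to the trivial cocycle, so that $\phi$ applied to $\mbf1$ is again $\mbf1$. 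Applying Proposition~\ref{prop:projrep} then identifies this natural $2$-transformation with a projective representation of $\ESymF$ twisted by $\alpha^\Aa$, which is exactly the asserted representation.

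Finally, uniqueness up to coboundary is precisely the content of the paragraph preceding the statement: any two admissible equivalences $\phi,\phi'$ sharing the common weak inverse $\iota$ are linked by the chain $\phi \Longrightarrow \phi\circ\iota\circ\phi' \Longrightarrow \phi'$ of natural isomorphisms, which induces a natural isomorphism $\alpha^\Aa \Longrightarrow \alpha'^\Aa$. By Remark~\ref{rem:2cocycle}(b) such a natural isomorphism of $2$-cocycles is exactly a coboundary, so the class $[\alpha^\Aa]\in H^2(\ESymF;\C^\times)$ is well defined independently of the choices involved.

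I expect the only genuine subtlety to lie in the second step, namely verifying that whiskering with the weak equivalence $\phi$ truly lands in $\alpha^\Aa$ rather than merely in something equivalent to it. This is automatic here since $\alpha^\Aa$ is \emph{defined} to be $\phi\circ\Aa|_{\ESymF}$, but one should keep track of the coherence $2$-isomorphisms witnessing $\phi\circ\mbf1\cong\mbf1$ to confirm that the transported data still satisfies the projective-representation axioms; this amounts to a routine diagram chase using the compatibility conditions recorded in Proposition~\ref{Lemma Compatibility conditions}.
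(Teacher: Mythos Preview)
Your proposal is correct and follows exactly the route the paper takes: the paper simply states that the proposition follows by ``combining these facts with Proposition~\ref{prop:projrep}'', and your argument is precisely the unpacking of that sentence---restrict $\Za$ to $\ESymF$, transport along the chosen equivalence $\phi$ to land in $\alpha^\Aa$, invoke Proposition~\ref{prop:projrep}, and cite the preceding paragraph for independence of choices. Your final remark about tracking the coherence isomorphism $\phi\circ\mbf1\cong\mbf1$ is a fair point of hygiene but not something the paper addresses explicitly.
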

We have seen in Proposition~\ref{prop:projrep} that natural 2-transformations $\mbf1\Longrightarrow \alpha$ are the same as projective representations of groupoids, so it should come as no surprise that these cocycles appear in the description of anomalies. The interesting prospect is that we can extend these cocycles to invertible extended field theories. This allows us to calculate quantities related to anomalies using the machinery of extended quantum field theories. Furthermore, we can couple such a theory to a bulk theory cancelling the anomaly as we explain in the next section. It is not clear that every anomaly admits such an extension, but all anomalies should give a projective representation of the symmetry groupoid.

\subsection{Anomaly inflow in functorial field theories}\label{Sec: Anomaly inflow}
The definition of an anomalous functorial quantum field theory (Definition~\ref{Definition Anommalous field theory}) 
only depends on the truncation of the extended invertible field theory $\Aa$. One could 
wonder why we still require $\Aa$ to be a full field theory and indeed the definition of 
a twisted field theory due to Stolz and Teichner~\cite{Stolz:2011zj} only requires a functor from 
the truncated cobordism category. The advantage of having a full quantum field theory is
that the evaluation on $n$-dimensional manifolds allows for interesting constructions 
related to the anomalous field theory. The most important example is the coupling of bulk 
and boundary degrees of freedom to produce anomaly-free field theories.   
We shall now discuss in more detail how to couple the bulk field
theory $\Aa$ and boundary field theory $\Za$ to construct an anomaly-free theory. 
We start with the unextended framework corresponding to 
Definition~\ref{Def: Anomalous partition function}.
This involves the full quantum field theory $\Aa \colon \CobF \longrightarrow \fvs$ 
and not just its truncation. 

Let $(\Sigma,\mathsf{f}^{n-1})$ be an object of $\CobF$
and $(M,\mathsf{f}^n)$ a morphism $(\Sigma,\mathsf{f}^{n-1})\longrightarrow \emptyset$. In
particular, $M$ is an $n$-dimensional manifold 
with boundary $\partial M = -\Sigma $, and background gauge fields $\mathsf{f}^n \in \F(M)$ extending
the background gauge field $\mathsf{f}^{n-1}$ on the boundary. 
An anomalous field theory $\Za \colon \mbf1 \Longrightarrow \tr \Aa$ defines an element $\Za(\Sigma,\mathsf{f}^{n-1})\in \Aa(\Sigma,\mathsf{f}^{n-1})$. 
The partition function of the composite system can now be defined as
\begin{align}\label{Def: Combined partition function}
\Za_{\rm bb}(M,\mathsf{f}^{n}, \Sigma)= \Aa(M,\mathsf{f}^n)[\Za(\Sigma,\mathsf{f}^{n-1})] \ \in \ \Aa(\varnothing)\cong \C \ .
\end{align}     
This definition does not depend on any additional choices. 
\begin{proposition}\label{Prop: bulk boundary}
The combined partition function $Z_{\rm bb}(M,\mathsf{f}^{n}, \Sigma)$ is invariant under
$\F$-diffeomorphism, which do not need to be the identity at the boundary.\footnote{However, they need to be 
compatible with the collars, i.e.\ invariant.} 
\end{proposition}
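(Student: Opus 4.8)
The plan is to reduce the asserted invariance to two facts already at our disposal: the functoriality of the invertible theory $\Aa$ on $\CobF$, and the naturality of the anomalous partition function expressed by the commuting square~\eqref{Eq: Diagram APF}. So let $\varphi\colon (M,\mathsf{f}^n)\longrightarrow (M',\tilde{\mathsf{f}}^n)$ be an $\F$-diffeomorphism compatible with the collars, where $\partial M=-\Sigma$ and $\partial M'=-\Sigma'$. First I would restrict $\varphi$ to the boundary: since $\varphi$ need not be the identity there, it restricts to an $\F$-diffeomorphism $\phi\colon \Sigma\longrightarrow\Sigma'$, and the hypothesis that $\varphi$ be invariant along the collars guarantees that $\phi$ carries the structure of a constant $\F$-diffeomorphism on $(-\epsilon,\epsilon)\times\phi$, i.e.\ that $\phi$ is a genuine limit morphism $(\Sigma,\mathsf{f}^{n-1})\longrightarrow(\Sigma',\tilde{\mathsf{f}}^{n-1})$ in $\CobF$.

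The central step is to identify the regular morphism $M$ with the composite $M'\circ\phi$ as morphisms $(\Sigma,\mathsf{f}^{n-1})\longrightarrow\emptyset$ in $\CobF$. Recall that composing the regular morphism $M'$ with the limit morphism $\phi$ leaves the underlying manifold and background field of $M'$ untouched and merely re-identifies its incoming collar through $\phi$. I claim that $\varphi$ itself witnesses the equivalence $M\cong M'\circ\phi$: being compatible with the collars and restricting to $\phi$ on the boundary, $\varphi$ maps the incoming collar $\varphi_-$ of $M$ coordinatewise to the re-identified collar $\varphi'_-\circ([0,\epsilon)\times\phi)$ of $M'\circ\phi$, so that the collar-compatibility square defining equivalence of cobordisms commutes. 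Checking this square in full — in particular that the morphism of background fields carried by $\varphi$ is compatible with the descent data used to glue the composite — is the main technical obstacle, and it is precisely here that the constancy of $\varphi$ on the collars is indispensable.

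With $M=M'\circ\phi$ established, the conclusion is a short computation. Functoriality of $\Aa$ gives $\Aa(M)=\Aa(M')\circ\Aa(\phi)$, so that
\begin{align}
\Za_{\rm bb}(M,\mathsf{f}^n,\Sigma)=\Aa(M')\big[\Aa(\phi)[\Za(\Sigma,\mathsf{f}^{n-1})]\big] \ .
\end{align}
Evaluating the naturality square~\eqref{Eq: Diagram APF} on the limit morphism $\phi$ yields $\Aa(\phi)[\Za(\Sigma,\mathsf{f}^{n-1})]=\Za(\Sigma',\tilde{\mathsf{f}}^{n-1})$, the boundary partition function transported along $\phi$. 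Substituting this gives
\begin{align}
\Za_{\rm bb}(M,\mathsf{f}^n,\Sigma)=\Aa(M')[\Za(\Sigma',\tilde{\mathsf{f}}^{n-1})]=\Za_{\rm bb}(M',\tilde{\mathsf{f}}^n,\Sigma') \ ,
\end{align}
which is the claimed invariance. Note that invertibility of $\Aa$ plays no role in this argument; only its functoriality and the naturality of $\Za$ are used.
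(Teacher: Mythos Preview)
Your proof is correct and follows essentially the same line as the paper's: both use the naturality square~\eqref{Eq: Diagram APF} together with the fact that the $\F$-diffeomorphism $\varphi$ witnesses $M$ and $M'\circ\phi$ as the same morphism in $\CobF$, so that functoriality of $\Aa$ gives $\Aa(M)=\Aa(M')\circ\Aa(\phi)$. Your account is somewhat more explicit about the collar-compatibility verification, but the underlying argument is identical.
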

\begin{proof}
Let $(M',\mathsf{f}'^{n})$ be a morphism $(\Sigma',\mathsf{f}'^{n-1})\longrightarrow \emptyset$ and $\nu\colon (M,\mathsf{f}^{n})\longrightarrow (M',\mathsf{f}'^{n})$ a $\F$-diffeomorphism.
We then calculate
\begin{align}
\begin{split} 
Z_{\rm bb}(M',\mathsf{f}'^{n},\Sigma' )&= \Aa(M',\mathsf{f}'^{n}\,)[\Za(\Sigma',\mathsf{f}'^{n-1})] \\[4pt]
&= \Aa(M',\mathsf{f}'^{n}\,)\circ \Aa( \nu|_{\Sigma})[\Za(\Sigma,\mathsf{f}^{n-1})]\\[4pt]
&=\Aa(M,\mathsf{f}^{n})[\Za(\Sigma,\mathsf{f}^{n-1})]\\[4pt]
&=\Za_{\rm bb}(M,\mathsf{f}^{n}, \Sigma)\ , 
\end{split} 
\end{align}
where in the second
equality we used~\eqref{Eq: Diagram APF} and in the third equality the fact that $\Aa$ is invariant
under $\F$-diffeomorphisms relative to the boundary. 
This shows that the composite partition function is anomaly-free. 
\end{proof}
Definition \ref{Definition Anommalous field theory} also allows us to formulate the
composite system at the level of state spaces. Let $\Aa \colon \ECobF
\longrightarrow \Tvs$ be an invertible extended field theory. Consider
a $1$-morphism $(\Sigma,\mathsf{f}^{n-1}) \colon (S,\mathsf{f}^{n-2})\longrightarrow \emptyset$. An
anomalous field theory $\Za \colon \mbf1 \longrightarrow \tr \Aa$ defines an element 
$\Za(S,\mathsf{f}^{n-2})\in \Aa(S,\mathsf{f}^{n-2})$. The composite state space is given by
\begin{align}
\Za_{\rm bb}(\Sigma,\mathsf{f}^{n-1} , S)= \Aa(\Sigma,\mathsf{f}^{n-1} )[Z(S,\mathsf{f}^{n-2})] \
  \in \ \Aa(\varnothing) \cong \fvs \ . \label{Eq: Def combined state space}
\end{align} 
This vector space does not depend on any additional choices. 
To construct an honest action on the combined state space, the following observation (which we will also use
later on) is helpful:
\begin{lemma}\label{Lem: non-constant morphism}
Let 
\begin{align}
(\Sigma,f, \epsilon, \varphi_-,\varphi_+)\colon (S_-,\epsilon_-,\epsilon, f_-)\longrightarrow (S_+,\epsilon_+,\epsilon,f_+)
\end{align}
and 
\begin{align} 
(\Sigma',f', \epsilon, \varphi'_-,\varphi'_+)\colon (S'_-,\epsilon_-,\epsilon, f'_-)\longrightarrow (S'_+,\epsilon_+,\epsilon , f'_+)
\end{align} 
be 1-morphisms in $\ECobF$. Furthermore, let $\id \times \nu\colon ((-\epsilon,\epsilon)\times \Sigma,f)\longrightarrow ((-\epsilon,\epsilon)\times \Sigma',f')$ be an $\F$-diffeomorphism
constant on the collars. 
Then 
\begin{equation}
\begin{tikzcd}
(S_-,\epsilon_-,\epsilon, f_-) \ar[d, "{\nu|_{S_-}}",swap] \ar[r,"{(\Sigma, f)}"] & (S_+,\epsilon_+,\epsilon,f_+) \ar[d, "{\nu|_{S_+}}"] \\ 
(S'_-,\epsilon_-,\epsilon,f'_-) \ar[r,"{(\Sigma', f')}",swap] & (S'_+,\epsilon_+,\epsilon,f'_+) 
\end{tikzcd}
\end{equation}
commutes up to a limit 2-morphism corresponding to $\nu$.
\end{lemma}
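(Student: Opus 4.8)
The plan is to exhibit the required limit 2-morphism explicitly, taking its underlying diffeomorphism to be $\nu$ itself and its $\F$-diffeomorphism structure to be the one carried by $\id\times\nu$. First I would write out the two composite 1-morphisms as honest regular 1-morphisms, using the rule that composing a limit 1-morphism with a regular 1-morphism only changes the relevant collar identification and the associated boundary morphism ${\mit\Theta}^{n-1}_\pm$, leaving the underlying manifold, its field $\mathsf{f}^{n-1}$, and the opposite collar untouched.

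Carrying this out, the top-right composite $\nu|_{S_+}\circ(\Sigma,f)$ is the regular 1-morphism with underlying manifold $\Sigma$, unchanged incoming data $(\varphi_-,{\mit\Theta}^{n-1}_-)$, and outgoing collar $\varphi_+\circ\big((-\epsilon_{+2},0]\times(\nu|_{S_+})^{-1}\big)$ together with the correspondingly twisted ${\mit\Theta}^{n-1}_+$. Dually, the left-bottom composite $(\Sigma',f')\circ\nu|_{S_-}$ is the regular 1-morphism with underlying manifold $\Sigma'$, unchanged outgoing data $(\varphi'_+,{\mit\Theta}^{n-1}_+)$, and incoming collar $\varphi'_-\circ\big([0,\epsilon_{-1})\times\nu|_{S_-}\big)$ with twisted ${\mit\Theta}^{n-1}_-$. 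I would then propose as the filling 2-cell the limit 2-morphism $(\nu,\Theta_\nu)$, where $\nu\colon\Sigma\longrightarrow\Sigma'$ is the diffeomorphism underlying $\id\times\nu$ and $\Theta_\nu$ is the constant $\F$-diffeomorphism structure on $(-\epsilon,\epsilon)\times\nu$ supplied by the datum $\id\times\nu$. Since its source has underlying manifold $\Sigma$ and its target has underlying manifold $\Sigma'$, it has exactly the right source and target to fill the square, going from the top-right composite to the left-bottom one.

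The key verification is that $\nu$ is a diffeomorphism \emph{relative to the collars} of the two composites, and this is precisely where the hypothesis that $\id\times\nu$ is constant on the collars enters. Constancy on the collars gives the identities $\nu\circ\varphi_\pm=\varphi'_\pm\circ(\id\times\nu|_{S_\pm})$ on a neighbourhood of each boundary component. On the incoming side this reads $\varphi'_-\circ\big([0,\epsilon_{-1})\times\nu|_{S_-}\big)=\nu\circ\varphi_-$, so $\nu$ carries the incoming collar $\varphi_-$ of the source onto the incoming collar of the target; on the outgoing side the same identity gives $\nu\circ\varphi_+\circ\big((-\epsilon_{+2},0]\times(\nu|_{S_+})^{-1}\big)=\varphi'_+$, so $\nu$ carries the (twisted) outgoing collar of the source onto $\varphi'_+$. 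Hence $\nu$ is relative to the collars. Finally I would record that the morphism $\Theta_\nu\colon(\id\times\nu)^\ast f'\longrightarrow f$ is constant by hypothesis and, via the same collar identities, intertwines the twisted ${\mit\Theta}^{n-1}_\pm$ of the two composites, so $(\nu,\Theta_\nu)$ is a genuine limit 2-morphism of type (2b).

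The main obstacle is not conceptual but purely a matter of bookkeeping: one must track carefully which collar identification and which boundary morphism ${\mit\Theta}^{n-1}_\pm$ gets twisted by which restriction of $\nu$ under the two composition operations, and then match these data precisely against the two clauses in the definition of a limit 2-morphism. Once the collar-constancy identities $\nu\circ\varphi_\pm=\varphi'_\pm\circ(\id\times\nu|_{S_\pm})$ are used to cancel the twists introduced by the compositions, the square commutes up to the limit 2-morphism $\nu$, as claimed.
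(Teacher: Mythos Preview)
Your proposal is correct and takes essentially the same approach as the paper: the paper's proof consists of a single commutative diagram displaying exactly the collar-compatibility identities $\nu\circ\varphi_-=\varphi'_-\circ(\id\times\nu|_{S_-})$ and $\nu\circ\varphi_+\circ(\id\times\nu|_{S'_+}^{-1})=\varphi'_+$ that you spell out in words. Your version is more explicit about unpacking the two composites and identifying $(\nu,\Theta_\nu)$ as the limit 2-morphism, but the underlying argument is identical.
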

\begin{proof}
This follows directly from the commutativity of 
\begin{equation}
\begin{tikzcd}
 & (-\epsilon,\epsilon)\times\Sigma  \ar[dd, "{\id\times\nu}"] & \\ 
 {(-\epsilon,\epsilon)\times [0,\epsilon_-)\times S_-} \ar[ru, "{\id\times\varphi_-}"] \ar[rd, "{\id\times (\varphi_-' \circ \nu|_{S_-})}", swap] & & {(-\epsilon,\epsilon)\times(-\epsilon_+,0]\times S_+' } \ar[lu,swap, " \id\times (\varphi_+ \circ {\nu|^{-1}_{S_+'})}"] \ar[ld, "{\id\times \varphi_+'}"] \\ 
  & (-\epsilon,\epsilon)\times\Sigma' &
\end{tikzcd} 
\end{equation}
\end{proof}
In the previous lemma we were careful to include all the necessary $\epsilon$'s. In the reminder of this section we
return to suppressing them in the notation. 
Let $ (\Sigma,\mathsf{f}^{n-1}) \colon (S,\mathsf{f}^{n-2})\longrightarrow \emptyset $ and $ (\Sigma',\mathsf{f}'^{n-1})\colon (S',\mathsf{f}'^{n-2})\longrightarrow \emptyset$ be a 1-morphisms in $\ECobF$ and $\nu
\colon (\Sigma,\mathsf{f}^{n-1}) \longrightarrow (\Sigma', \mathsf{f}'^{n-1})$ a $\F$-diffeomorphism. 
Then there is an induced linear map $\Za_{\rm bb}(\nu)$
\begin{align}
\begin{split} 
\Za_{\rm bb}(\Sigma,\mathsf{f}^{n-1} , S)=  \Aa(\Sigma,\mathsf{f}^{n-1} )&[\Za(S,\mathsf{f}^{n-2})]\xrightarrow{\Aa(\nu)} \Aa\big((\Sigma', \mathsf{f}'^{n-1}\,)\circ \nu|_{S} \big)[\Za(S,\mathsf{f}^{n-2})]\\[4pt]
&\xrightarrow{\hspace{0.65cm}} \Aa(\Sigma', \mathsf{f}'^{n-1}\,)\circ \Aa(\nu|_{S})[Z(S,\mathsf{f}^{n-2})]\\[4pt]
&\xrightarrow{\Za(\nu|_{S})} \Aa(\Sigma',\mathsf{f}'^{n-1}\,)[\Za(S',
  \mathsf{f}'^{n-2})]=\Za_{\rm bb}(\Sigma',\mathsf{f}'^{n-1} , S') \ , \label{Eq: Def action combined state space}
\end{split} 
\end{align}
where the first map is induced by the limit 2-morphism constructed in Lemma~\ref{Lem: non-constant morphism}.  
\begin{theorem}\label{Thm: Bulk boundary}
The linear maps $\Za(\nu)$ provide an honest representation of the groupoid of $\F$-background fields on 1-morphisms 
to $\emptyset$ and their symmetries,
i.e.\ $\F$-diffeomorphisms.
\end{theorem}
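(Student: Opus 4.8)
The claim is that the assignment sending a $1$-morphism $(\Sigma,\mathsf{f}^{n-1})\colon (S,\mathsf{f}^{n-2})\longrightarrow \emptyset$ to the vector space $\Za_{\rm bb}(\Sigma,\mathsf{f}^{n-1},S)=\Aa(\Sigma,\mathsf{f}^{n-1})[\Za(S,\mathsf{f}^{n-2})]$, and an $\F$-diffeomorphism $\nu$ to the linear map $\Za_{\rm bb}(\nu)$ of \eqref{Eq: Def action combined state space}, is a functor from the groupoid in question to $\fvs$. The plan is to verify the two defining properties of a functor: preservation of identities and of composition. Throughout I suppress the background fields from the notation and write $\Phi_\Aa$ for the compositor of the $2$-functor $\Aa$, whose component $\Aa(\Sigma_2\circ\Sigma_1)\Rightarrow\Aa(\Sigma_2)\circ\Aa(\Sigma_1)$ is exactly the map appearing in the middle step of \eqref{Eq: Def action combined state space}.

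For the identity $\nu=\id_\Sigma$, the limit $2$-morphism produced by Lemma~\ref{Lem: non-constant morphism} is the identity $2$-morphism, so the first map in \eqref{Eq: Def action combined state space} is the identity; the restriction $\nu|_S=\id_S$ identifies the second map with the canonical $\Phi_\Aa(\Aa(\id_S))$; and the third map is $\Aa(\Sigma)$ applied to $\Za(\id_S)$. Their composite is the identity by the twisted unit law \eqref{EQ2: Lemma Compatibility conditions}, which states precisely that $\Za(\id_S)\circ\Phi_\Aa(\Aa(\id_S))=\id_{\Za(S)}$.

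The substance of the proof is preservation of composition: given $\nu_1\colon \Sigma\longrightarrow\Sigma'$ and $\nu_2\colon\Sigma'\longrightarrow\Sigma''$ I must show $\Za_{\rm bb}(\nu_2\circ\nu_1)=\Za_{\rm bb}(\nu_2)\circ\Za_{\rm bb}(\nu_1)$. I would expand the left-hand side as the three maps attached to $\nu_2\circ\nu_1$ and the right-hand side as the six maps coming from the two factors, and then reconcile them using three inputs. First, restriction to the corner commutes with composition, $(\nu_2\circ\nu_1)|_S=\nu_2|_{S'}\circ\nu_1|_S$, and since limit $1$-morphisms compose strictly these are genuine limit $1$-morphisms. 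Second, the limit $2$-morphisms of Lemma~\ref{Lem: non-constant morphism} are themselves compatible with composition: whiskering the $2$-morphism relating $\Sigma'$ to $\Sigma''\circ\nu_2|_{S'}$ by $\nu_1|_S$ and pasting it with the $2$-morphism relating $\Sigma$ to $\Sigma'\circ\nu_1|_S$ reproduces the $2$-morphism relating $\Sigma$ to $\Sigma''\circ(\nu_2\circ\nu_1)|_S$; applying the $2$-functor $\Aa$ turns this pasting identity into the corresponding relation among the first maps. Third, the twisted functoriality \eqref{EQ1: Lemma Compatibility conditions} of $\Za$, namely $\Za(\Sigma_2)\circ\Za(\Sigma_1)=\Za(\Sigma_2\circ \Sigma_1)\circ\Phi_\Aa$, applied to $\Sigma_1=\nu_1|_S$ and $\Sigma_2=\nu_2|_{S'}$, controls how the two copies of the third map combine.

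The main obstacle, and the whole point of coupling to the bulk, is the bookkeeping of the compositors $\Phi_\Aa$: the factor $\Za_{\rm bb}(\nu_2)$ introduces one copy of $\Phi_\Aa$ through its middle step, and this is exactly the copy that the twisted functoriality \eqref{EQ1: Lemma Compatibility conditions} removes when the two third maps are merged into $\Za\big((\nu_2\circ\nu_1)|_S\big)$. Concretely, the plan is to display $\Za_{\rm bb}(\nu_2)\circ\Za_{\rm bb}(\nu_1)$ as a single large pasting diagram of natural transformations between the functors $\Aa(\Sigma)$, $\Aa(\Sigma'')\circ\Aa(\nu_2|_{S'})\circ\Aa(\nu_1|_S)$ and $\Aa(\Sigma'')$, evaluated at $\Za(S)$, and then to rewrite it using the associativity (pentagon-type coherence) of $\Phi_\Aa$ together with the naturality of $\Phi_\Aa$ with respect to the limit $2$-morphisms. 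After these rewrites the inner $\Phi_\Aa$ cancels against the one produced by \eqref{EQ1: Lemma Compatibility conditions}, leaving precisely the three maps that define $\Za_{\rm bb}(\nu_2\circ\nu_1)$. This cancellation is the functorial incarnation of the statement that the bulk theory $\Aa$ trivialises the $2$-cocycle $\alpha^\Aa$ twisting the boundary representation, so that the combined action is honest rather than projective.
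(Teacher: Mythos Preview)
Your proposal is correct and follows essentially the same route as the paper: write out the six-step composition $\Za_{\rm bb}(\nu_2)\circ\Za_{\rm bb}(\nu_1)$, use the interchange law to commute natural transformations past one another, and then invoke the $2$-functoriality of $\Aa$ together with the twisted functoriality condition \eqref{EQ1: Lemma Compatibility conditions} to collapse it to the three-step map $\Za_{\rm bb}(\nu_2\circ\nu_1)$. You are somewhat more explicit than the paper about the pasting compatibility of the limit $2$-morphisms from Lemma~\ref{Lem: non-constant morphism} and about the role of the compositor $\Phi_\Aa$, and you add the identity check via \eqref{EQ2: Lemma Compatibility conditions}, which the paper's proof omits; but the argument is the same.
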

\begin{proof}
Let $\nu_a \colon (\Sigma,\mathsf{f}^{n-1})\longrightarrow (\Sigma',\mathsf{f}'^{n-1})$ and $\nu_b \colon (\Sigma',\mathsf{f}'^{n-1})\longrightarrow (\Sigma'',\mathsf{f}''^{n-1})$
be $\F$-diffeomorphisms as above. The composition
\begin{align} 
\begin{split} 
  \Aa(\Sigma,\mathsf{f}^{n-1} )[\Za(S,\mathsf{f}^{n-2})]&\xrightarrow{\Aa(\nu_a)} \Aa\big((\Sigma', \mathsf{f}'^{n-1}\,)\circ \nu_a|_{S} \big)[\Za(S,\mathsf{f}^{n-2})]\\[4pt]
	&\xrightarrow{\hspace{0.65cm}} \Aa(\Sigma', \mathsf{f}'^{n-1}\,)\circ \Aa(\nu_a|_{S})[Z(S,\mathsf{f}^{n-2})]\\[4pt]
	&\xrightarrow{\Za(\nu_a|_{S})} \Aa(\Sigma',\mathsf{f}'^{n-1}\,)[\Za(S',
	\mathsf{f}'^{n-2})]  \\
	& \xrightarrow{\Aa(\nu_b)} \Aa\big((\Sigma'', \mathsf{f}''^{n-1}\,)\circ \nu_b|_{S'} \big)[\Za(S',\mathsf{f}'^{n-2})]\\[4pt]
	&\xrightarrow{\hspace{0.65cm}} \Aa(\Sigma'', \mathsf{f}''^{n-1}\,)\circ \Aa(\nu_b|_{S'})[Z(S',\mathsf{f}'^{n-2})]\\[4pt]
	&\xrightarrow{\Za(\nu_b|_{S'})} \Aa(\Sigma'',\mathsf{f}''^{n-1}\,)[\Za(S'',
	\mathsf{f}''^{n-2})]
\end{split} 
\end{align}
can be rewritten using the commutativity of linear maps corresponding to natural transformations applied at different positions
in the composition of functors as 
\begin{align} 
\begin{split} 
 \Aa(\Sigma,\mathsf{f}^{n-1} )&[\Za(S,\mathsf{f}^{n-2})]\xrightarrow{\Aa(\nu_a)} \Aa\big((\Sigma', \mathsf{f}'^{n-1}\,)\circ \nu_a|_{S} \big)[\Za(S,\mathsf{f}^{n-2})]\\[4pt]
&\xrightarrow{\hspace{0.65cm}} \Aa(\Sigma', \mathsf{f}'^{n-1}\,)\circ \Aa(\nu_a|_{S})[Z(S,\mathsf{f}^{n-2})]\\[4pt]
&\xrightarrow{\Aa(\nu_b)} \Aa(\Sigma'', \mathsf{f}''^{n-1}\,)\circ \Aa(\nu_b|_{S'})\circ \Aa(\nu_a|_{S})[\Za(S,
\mathsf{f}^{n-2})]  \\ 
&\xrightarrow{\Za(\nu_a|_{S})} \Aa(\Sigma'', \mathsf{f}''^{n-1}\,)\circ \Aa(\nu_b|_{S'})[\Za(S',
\mathsf{f}'^{n-2})] \\ 
&\xrightarrow{\Za(\nu_b|_{S'})} \Aa(\Sigma'', \mathsf{f}''^{n-1}\,)[\Za(S'',
\mathsf{f}''^{n-2})] \ \ .
\end{split}  
\end{align}
Next we use that $\Aa$ is a 2-functor and Equation~\eqref{EQ1: Lemma Compatibility conditions} to rewrite this as 
\begin{align}
\begin{split}  \Aa(\Sigma,\mathsf{f}^{n-1} )[\Za(S,\mathsf{f}^{n-2})]&\xrightarrow{\Aa(\nu_b \circ \nu_a)} \Aa\big((\Sigma'', \mathsf{f}''^{n-1}\,)\circ (\nu_b \circ \nu_a)|_{S} \big)[\Za(S,\mathsf{f}^{n-2})]\\[4pt]
&\xrightarrow{\hspace{0.65cm}} \Aa(\Sigma'', \mathsf{f}''^{n-1}\,)\circ \Aa((\nu_b \circ \nu_a)|_{S})[Z(S,\mathsf{f}^{n-2})]\\[4pt]
&\xrightarrow{\Za((\nu_b \circ \nu_a)|_{S})} \Aa(\Sigma'',\mathsf{f}''^{n-1}\,)[\Za(S'',
\mathsf{f}''^{n-2})] \ . 
\end{split} 
\end{align} 
This finishes the proof. 
\end{proof}
This theorem describes a way of coupling bulk and boundary degrees of freedom to an anomaly-free state space. In condensed matter physics applications the invertible field theory $\Aa$ arises as the low-energy effective theory of the bulk system. 
\chapter{The parity anomaly}\label{Cha: Index}
In this chapter we deploy the general theory developed in Chapter~\ref{Sec: General Theory} to
describe the parity anomaly of fermionic gauge theories defined on odd-dimensional manifolds. 
The parity anomaly has recently received renewed interest due to its
relation to topological insulators and more generally topological phases of matter~\cite{WittenFermionicPathInt}.   
The anomaly field theory can be constructed using the index of a Dirac operator on even dimensional 
manifolds. 

In Section~\ref{Sec: APS} we collect some results about index theory on manifolds with boundaries, which
suffice to construct a non-extended field theory describing the parity anomaly in Section~\ref{Sec: path integral parity}.
To capture the Hamiltonian perspective, we use the index theorem on manifolds with corners~\cite{LoyaMelrose} reviewed in Section~\ref{Sec: Index2} to construct an extended field theory describing the parity anomaly in Section~\ref{Sec: Extended Indext FQFT}. 
The Hamiltonian description of the parity anomaly has been largely unexplored in the literature with the exception of \cite{Chang:1986ri} and the recent study in~\cite{Lapa19}. One of the achievements of our approach is the computation
of the 2-cocycle twisting the projective representation of the symmetry group on the state space of any quantum field 
theory with parity anomaly in terms of geometric quantities, see Equation~\eqref{eq:2cocycleexplicit}.   
   
\section{Index theory part I: the APS-index theorem}\label{Sec: APS}
The description of the extended field theory encoding the parity anomaly relies on 
index theory for manifolds with corners. In this section we present the 
theory on manifolds with boundaries. 
This is already enough to formulate the invertible field theory in the 
non-extended setting. 

We set the stage by recalling a few standard definitions related to spin geometry 
mostly following~\cite{LMspin}.
\begin{definition}
Let $V$ be a vector space equipped with a quadratic form $q\colon V \longrightarrow \R$.
The \emph{Clifford algebra $\Cl(V,q)$} is the quotient of the tensor algebra
\begin{align}
T(V) = \bigoplus_{i=0}^\infty V^{\otimes i}
\end{align}
by the ideal generated by elements of the form $v\otimes v + q(v)1$ for $v\in V$. 
\end{definition}
Note that the $\N$-grading of $T(V)$ induces a natural $\Z_2$-grading
\begin{align}
\Cl(V,q) = \Cl^0(V,q) \bigoplus \Cl^1(V,q)
\end{align}
of the Clifford algebra $\Cl(V,q)$.
\begin{example}
Let $V$ be the vector space $\R^n$ and $q$ the quadratic form induced by
the canonical scalar product $\langle e_i , e_j \rangle=\delta_{ij}$ on $\R^n$. 
The corresponding Clifford algebra, which we denote by $\Cl_n$, is generated by 
the elements $\{e_i\}_{i=0,\dots ,n}$
modulo the relations \begin{align}
\{e_i, e_j \} \coloneqq e_i\cdot e_j + e_j\cdot e_i = -2 \delta_{ij} \ \ . 
\end{align} 
We can identify the vector space underlying the exterior algebra $\Lambda^* \R^n$
with $\Cl_n$ via the linear map 
\begin{align}
\begin{split}
\Lambda^* \R^n & \longrightarrow \Cl_n \\ 
\dd e_{i_1} \wedge \dots \wedge \dd e_{i_k} & \longmapsto e_{i_1} \cdot \dots \cdot e_{i_k} \ \ . 
\end{split}
\end{align}
However, this is not an isomorphism of algebras.
\end{example}

The \emph{Pin group} $\Pin_n$ is the subgroup of the unit group $\Cl_n^\times$ of $\Cl_n$ generated by 
all elements $v\in V$ with $q(v)=1$. The \emph{Spin group} $\Spin_n$ is 
the intersection of $\Pin_n$ with $\Cl^0_n$ in $\Cl_n$. The group $\Spin_n$
acts via conjugation on $\Cl_n$. This action preserves the subspace $\R^n \subset \Cl_n$,
its orientation and the canonical scalar product on it. For this reason, we get an 
induced group homomorphism $\Spin_n \longrightarrow SO(n)$. One can 
show~\cite[Theorem 2.9]{LMspin} that
this morphism fits into a short exact sequence 
\begin{align}
1 \longrightarrow \Z_2 \longrightarrow \Spin_n \longrightarrow SO(n) \longrightarrow 1 \ \ .
\end{align}  
Let $\C l_n= \Cl_n \otimes_\R \C$ be the complexified Clifford algebra. The representation
theory of $\C l_n$ is extremely well-behaved~\cite[Section I.5]{LMspin} due to the
following periodicity result. 
\begin{theorem}[Theorem 4.3 of \cite{LMspin}]
For all $n\geq 0$ there are isomorphisms 
\begin{align}
\C l_{n+2}\cong \C l_{n}\otimes_\C \C l_2 \ \ 
\end{align}
of algebras.
\end{theorem}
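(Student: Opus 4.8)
The plan is to construct an explicit algebra homomorphism $\phi \colon \C l_{n+2} \longrightarrow \C l_n \otimes_\C \C l_2$ and then promote it to an isomorphism by a dimension count. Write $e_1,\dots,e_{n+2}$ for the generators of $\C l_{n+2}$, write $e_1,\dots,e_n$ for the generators of the factor $\C l_n$, and write $f_1,f_2$ for the two generators of $\C l_2$, all subject to the relations $\{e_i,e_j\}=-2\delta_{ij}$, so that in particular every generator squares to $-1$.

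The key is to exploit the complexification. Set $\omega \coloneqq \iu\, f_1 f_2 \in \C l_2$; using $f_1^2=f_2^2=-1$ and $f_1 f_2 = -f_2 f_1$ one finds $\omega^2 = \iu^2 (f_1 f_2)^2 = (-1)(-1) = 1$, and a short computation shows that $\omega$ anticommutes with both $f_1$ and $f_2$. First I would define $\phi$ on generators by
\[
\phi(e_j) = e_j \otimes \omega \ \ (1\le j \le n), \qquad \phi(e_{n+1}) = 1\otimes f_1, \qquad \phi(e_{n+2}) = 1 \otimes f_2 \ .
\]
By the universal property of the Clifford algebra, $\phi$ extends to an algebra homomorphism provided these images satisfy the defining relations. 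I would then verify this case by case: $\phi(e_j)^2 = e_j^2 \otimes \omega^2 = -1$ since $\omega^2=1$; for distinct $j,k\le n$ the images $\phi(e_j),\phi(e_k)$ anticommute because $e_j,e_k$ do and $\omega^2=1$; the pair $\phi(e_{n+1}),\phi(e_{n+2})$ satisfies the $\C l_2$ relations directly; and the decisive cross terms $\phi(e_j)\phi(e_{n+1})+\phi(e_{n+1})\phi(e_j)$ (and likewise with $e_{n+2}$) vanish precisely because $\omega$ anticommutes with $f_1$ and $f_2$.

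Once $\phi$ is established as a homomorphism, I would finish by comparing dimensions. Both algebras have complex dimension $2^{n+2}$: the left side directly, and the right side as $\dim_\C(\C l_n)\cdot \dim_\C(\C l_2) = 2^n\cdot 2^2$. It therefore suffices to show that $\phi$ is surjective. The image contains $1\otimes f_1$ and $1\otimes f_2$, hence all of $1\otimes \C l_2$, and in particular $1\otimes \omega = \iu\,(1\otimes f_1)(1\otimes f_2)$; since $\omega^2=1$, multiplying $\phi(e_j)=e_j\otimes\omega$ by $1\otimes\omega$ recovers $e_j\otimes 1$. Thus the image contains a generating set of $\C l_n \otimes_\C \C l_2$, so $\phi$ is surjective and, by equality of dimensions, an isomorphism.

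The main obstacle, and the reason the period is $2$ here rather than $8$ as over $\R$, is the construction of an element $\omega$ with $\omega^2=1$ that anticommutes with $f_1,f_2$. Over $\R$ the product $f_1 f_2$ squares to $-1$ and no real scalar rescales it into a square root of $+1$; the factor $\iu$ available after complexification is exactly what makes $\phi(e_j)$ square to $-1$ while preserving the cross-relations with $\phi(e_{n+1})$ and $\phi(e_{n+2})$. Everything else is a routine but careful bookkeeping of signs.
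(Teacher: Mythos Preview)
Your proof is correct and is essentially the standard argument: define $\phi$ on generators using the volume element $\omega=\iu f_1 f_2$ of $\C l_2$ to absorb the needed sign, check the Clifford relations, and conclude by surjectivity plus a dimension count. The paper itself does not supply a proof of this statement; it is quoted from \cite{LMspin} (Theorem~I.4.3 there), and the argument in that reference is exactly the one you have written down, so there is nothing to contrast.
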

Together with $\C l_1=\C \oplus \C$ and $\C l_2= \C (2)$, where $\C (2)$ is the 
complex algebra of $2\times 2$-matrices with complex coefficients, this theorem 
completely classifies complexified Clifford algebras. This implies that Clifford
algebras are semisimple, for $n=2k$ there is
only one non-trivial irreducible representation of $\C l_n$ of dimension $2^k$ and
for $n=2k+1$ there are two non-trivial irreducible representations of $\C l_n$ of
dimension $2^k$.    
\begin{definition}
The \emph{complex spin representation} $\Delta_n\colon \Spin_n \longrightarrow \End_\C(S)$
is the restriction of a non-trivial irreducible representation $S$ of $\C l_n$ to 
$\Spin_n$. For odd $n$ this definition does not depend on the choice of representation. 
\end{definition}
\begin{example}
\begin{itemize}
\item 
The 1-dimensional Spin group is $\Spin_1= \Z_2= \{ \pm 1 \}$. The spin representation is 
the sign representation of $\Z_2$ on $\C$.   

\item 
The 2-dimensional Spin group is $\Spin_2= U(1)$. The map $\Spin_2\longrightarrow SO(2)=U(1)$
is given by sending $u\in U(1)$ to $u^2$. The spin representation is given by
multiplication $U(1)\times \C^2 \longrightarrow \C^2$. 

\item The 3-dimensional Spin group is $SU(2)$. The 2-dimensional spin representation
is the fundamental representation of $SU(2)$ on $\C^2$. 
\end{itemize}
\end{example}

\begin{figure}
\begin{center}
\begin{subfigure}[b]{0.3\textwidth}
\begin{overpic}[scale=1]{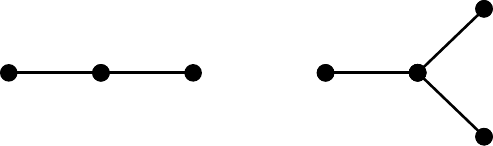}
			\put(44,14){ $\dots$}
		\end{overpic}
		\caption{$D_k$}
	\end{subfigure} \hspace{2cm}
	\begin{subfigure}[b]{0.3\textwidth}
		\begin{overpic}[scale=1]{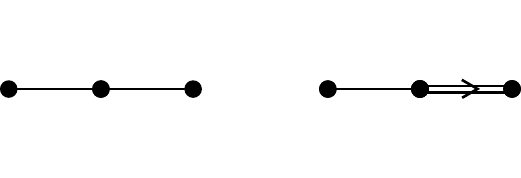}
			\put(44,16){ $\dots$}
		\end{overpic}
		\caption{$B_k$}
	\end{subfigure}
\end{center}
\caption{Dynkin diagram for the Lie algebra of $\Spin_n$. }
\label{Fig: DD Spin}
\end{figure}

\begin{remark}
Let $n$ be larger or equal to 3.
The Lie-algebra of $\Spin_n$ is described by the Dynkin diagram $D_k$ for $n=2k$ and
$B_k$ for $n=2k+1$, see Figure~\ref{Fig: DD Spin}. For $n=2k+1$ the spin representation is the
exponential of the fundamental representation attached to the right most vertex. 
For $n=2k$ the spin representation is the direct sum of the fundamental representations of the
two right most vertices~\cite{DeligneSpinor}.    
\end{remark}

The previous remark shows that for even $n>3$ the Spin representation is reducible. 
A way to distinguish the two irreducible subrepresentations is through the \emph{chirality operator} $\Gamma = 
\iu^{m}e_1\cdot \dots \cdot e_n \in \C l_n$ with $m=\tfrac{n}{2}$ or $m=\tfrac{n+1}{2}$
for $n$ even or odd, respectively. It is straightforward to check that $\Gamma^2=\id$,
$\Gamma v = -v\Gamma $ for even $n$ and $\Gamma v = v\Gamma $ for odd $n$. 
For $n$ even, the spin 
representation $S$ decomposes into the positive and negative eigenspace for the 
action of $\Gamma$: $S= S^+\oplus S^-$, where $S^+$ and $S^-$ are the two irreducible
representations of $\Spin_n$ building up $S$. 

Let $M$ be a compact oriented $n$-dimensional manifold with Riemannian metric $g\in \Gamma(\Sym^2(T^*M))$. The metric induces
a reduction of the frame bundle to an $SO(n)$-bundle $P_{SO(n)}$. 
A \emph{spin-structure} on $M$ is a principal $\Spin_n$ bundle 
$P_{\Spin_n}$ together with a map $P_{\Spin_n}\longrightarrow P_{SO(n)}$
compatible with the map $\Spin_n\longrightarrow SO(n)$. A manifold
together with the choice of a spin structure is called a \emph{spin manifold}. 
The \emph{spinor bundle $S_M$} on a spin manifold $M$ is the associated vector bundle to
$P_{\Spin_n}$ for the spin representation. For even dimensional spin manifolds
it decomposes as $S_M^+\oplus S_M^-$. 
Pulling back the Levi-Civita connection on $P_{SO(n)}$ along the map 
$P_{\Spin_n} \longrightarrow P_{SO(n)}$ induces a connection on $P_{\Spin_n}$
and $S_M$. We call this connection the \emph{spin-connection}.

The group $SO(n)$ acts on the Clifford algebra $\Cl_n$ and its complexification
$\C l_n$ allowing us to define 
for every oriented Riemannian manifold associated Clifford bundles $\Cl_M$ and $\C l_M$. 
If furthermore $M$ is a spin manifold the Clifford bundle $\C l_M$ acts naturally 
on the spinor bundle $S_M$, see e.g.~\cite[Section II.3]{LMspin} for details. 
  
Let $G$ be a compact Lie group
with Lie algebra $\mathfrak{g}$. We fix a unitary finite dimensional representation $\rho_G\colon G \longrightarrow \End(V)$ of $G$. Furthermore, let $M$ be an oriented $n$-dimensional spin manifold with Riemannian metric $g\in \Gamma(\Sym^2(T^*M))$ 
equipped with 
a principal $G$-bundle $P_M$ with connection $A_M\in \Omega^1(P_M; \mathfrak{g})$.
We denote by $\langle \cdot , \cdot \rangle \colon T^*M \longrightarrow TM $ the
isomorphism between the cotangent and tangent bundle induced by $g$. 
We define the \emph{twisted spinor bundle} to be $S^G_M= S\otimes V$, where $V$ is 
the associated vector bundle to $P$ and the representation $\rho_G$. The spin connection
and the connection on $P$ induce a connection $\nabla^{S^G_M}$ on $S^G_M$.   
This data defines a Dirac operator $\slashed{D }_{M}$ on $M$ as the following
composition 
\begin{align}
\Gamma \big(S^G_M\big) \overset{\nabla^{S^G_M}}{\longrightarrow} \Gamma\big(T^*M \otimes S^G_M\big)
\overset{\langle \cdot , \cdot \rangle}{\longrightarrow} \Gamma\big(TM \otimes S^G_M\big)
\overset{\cdot}{\longrightarrow} \Gamma \big(S^G_M\big)
\end{align} 
where $\cdot$ denotes the Clifford action of the tangent bundle on the spinor bundle, 
which is induced by considering the tangent bundle as a subbundle of the Clifford bundle
$\Cl_M$. We can also express $\slashed{D }_{M}$ in terms of a local orthonormal frames 
$\{ e_1,\dots e_n \}$ of $TM$ as 
\begin{align}
\slashed{D }_{M} = \sum_{i=1}^n e_i \cdot \nabla^{S^G_M}_{e_i}  \ \ .
\end{align} 
\begin{example} 
The spinor bundle on $\R^3$ is the trivial vector bundle $\R^3\times \C^2$. The 3-dimensional Dirac operator is 
\begin{align}
\D_{\R^3}= \iu \left(  \sigma_x \partial_x +\sigma_y \partial_y +\sigma_z \partial_z \right) \ \ ,
\end{align}
where $\sigma_i$ are the Pauli matrices 
\begin{align}
\sigma_x=  \bigg( \begin{matrix}
0 & 1 \\ 
1 & 0
\end{matrix} 
\bigg) \ , \quad
\sigma_y= \bigg(
\begin{matrix}
0 & -\,\iu\, \\ 
\,\iu\, & 0
\end{matrix} 
\bigg) \qquad \mbox{and} \qquad
\sigma_z=  \bigg(
\begin{matrix}
1 & 0 \\ 
0 & -1
\end{matrix} 
\bigg) \ . 
\end{align}
\end{example}

For even dimension $n$, the splitting of the spinor bundle $S_{M}= S_{M}^+\oplus S^-_{M}$ induces a splitting of $S_{M}^G= S_{M}^{G+}\oplus S^{G-}_{M}$ into positive and negative chirality spinors; the Dirac operator is odd with respect to this $\Z_2$-grading. 
On a closed manifold $M$, the chiral Dirac operator $\slashed{D
}^+_{M} \colon H^1(S^{G+} _{M})\longrightarrow L^2(S^{G-} _{M})$ is
a first order elliptic differential operator, where $H^1(S^{G+}
_{M})$ is the first Sobolev space of sections of $S^{G+} _{M}$,
i.e.\ spinors $\Psi$ whose image $\slashed{D }^+_{M}\Psi$ is square-integrable, and $L^2(S^{G-} _{M})$ is the Hilbert space of square-integrable sections of $S^{G-} _{M}$; the integration is with respect to the Hermitian structure on $S^G _{M}$ induced by the metric and the unitary representation $\rho_G$. Every elliptic operator acting on sections of a vector bundle of finite rank over a closed manifold $M$ is Fredholm. For
every Fredholm operator $D$ the index is defined by 
$$
\text{ind}(D)= \text{dim}\,\text{ker}(D)-\text{dim}\,\text{coker}(D) \ \ .
$$ 
The Atiyah-Singer index theorem~\cite{Atiyah-Singer} expresses the index of $\slashed{D }^+_{M}$ in
terms of local quantities:
\begin{theorem}
The index of $\slashed{D }^+_{M}$ is given by
\begin{align}
\ind(\slashed{D }^+_{M})= \int_M K_{\rm AS}
\end{align} where the Atiyah-Singer density 
$$
K_{\rm AS}= \ch\big(V \big)\wedge \widehat{A}\big(TM \big)\big|_n
$$ 
is the homogeneous differential form of top degree in $\Omega^n(M)$ occurring in the exterior product of the Chern character of the vector bundle $V$ associated to 
$P_M$ with the $\widehat{A}$-genus of the tangent bundle $TM$.
\end{theorem}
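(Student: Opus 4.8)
The plan is to prove this by the heat-kernel method, which meshes with the analytic setup already established. The starting point is the McKean--Singer formula: let $\slashed{D}_M = \slashed{D}^+_M + \slashed{D}^-_M$ be the full Dirac operator on $S^G_M = S^{G+}_M \oplus S^{G-}_M$, which is self-adjoint and odd with respect to the $\Z_2$-grading, so that $(\slashed{D}^+_M)^\ast = \slashed{D}^-_M$. Then for every $t>0$ one has
\begin{align}
\ind(\slashed{D}^+_M) = \operatorname{Str}\big(\mathrm{e}^{-t\,\slashed{D}_M^2}\big) = \operatorname{Tr}\big(\mathrm{e}^{-t\,\slashed{D}^-_M\slashed{D}^+_M}\big) - \operatorname{Tr}\big(\mathrm{e}^{-t\,\slashed{D}^+_M\slashed{D}^-_M}\big) \ ,
\end{align}
where $\operatorname{Str}$ is the graded trace. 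The right-hand side is independent of $t$ because the non-zero eigenvalues of $\slashed{D}_M^2$ occur in cancelling pairs interchanged by $\slashed{D}_M$, so only the kernels survive. First I would check that $\mathrm{e}^{-t\,\slashed{D}_M^2}$ is a smoothing operator with smooth integral kernel $k_t(x,y)\in S^G_{M,x}\otimes (S^G_{M,y})^\ast$, so that the supertrace collapses to a diagonal integral $\operatorname{Str}(\mathrm{e}^{-t\,\slashed{D}_M^2}) = \int_M \operatorname{str} k_t(x,x)\,\mathrm{dvol}$ of the pointwise supertrace.

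Since the left-hand side is constant in $t$, the strategy is to compute $\lim_{t\to 0^+}\operatorname{Str}(\mathrm{e}^{-t\,\slashed{D}_M^2})$. The key analytic input is the short-time asymptotic expansion
\begin{align}
\operatorname{str} k_t(x,x) \;\underset{t\to 0}{\sim}\; (4\pi t)^{-n/2}\sum_{i\geq 0} t^i\, \operatorname{str} a_i(x) \ ,
\end{align}
obtained by a parametrix construction for the heat equation, where each $a_i(x)$ is a universal polynomial in the Riemannian and gauge curvatures and their covariant derivatives at $x$. The crucial feature is locality: the $a_i$ depend only on the germ of the geometry at $x$. This already yields $\ind(\slashed{D}^+_M)=\int_M \operatorname{str} a_{n/2}(x)\,\mathrm{dvol}$, and the entire content of the theorem is the identification of this local density with the degree-$n$ component $[\ch(V)\wedge \widehat{A}(TM)]_n$.

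That identification is carried out by \emph{Getzler's rescaling}, which I regard as the technical heart of the argument. Working in geodesic normal coordinates about a fixed point $x_0$ and a synchronous frame trivialising $S^G_M$, one uses the Clifford filtration on $\operatorname{End}(S_M)\cong \C l_M$ and rescales simultaneously the time parameter $t\mapsto u$, the coordinates $x\mapsto u^{1/2}x$, and the Clifford generators by their filtration degree. Under this rescaling the operator $u\,\slashed{D}_M^2$ converges, as $u\to 0$, to a generalised harmonic oscillator
\begin{align}
\mathcal{H} = -\sum_{i=1}^n\Big(\partial_i - \tfrac{1}{4}\sum_{j=1}^n R_{ij}\,x^j\Big)^2 + F^V \ ,
\end{align}
where $R=(R_{ij})$ is the Riemannian curvature two-form and $F^V$ is the curvature of the twisting bundle $V$ at $x_0$, both now endomorphism-valued constants. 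The supertrace selects exactly the top Clifford degree, which under the symbol map is the top-degree differential form, so the rescaling converts the $t\to 0$ limit of the supertrace into the evaluation of the heat kernel of $\mathcal{H}$ at the origin.

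The heat kernel of $\mathcal{H}$ at coincident points is given explicitly by Mehler's formula, and its Gaussian evaluation produces
\begin{align}
\lim_{t\to 0^+}\operatorname{Str}\big(\mathrm{e}^{-t\,\slashed{D}_M^2}\big) = \int_M \Big[\sqrt{\det\big(\tfrac{R/2}{\sinh(R/2)}\big)}\;\wedge\;\operatorname{tr}\exp(-F^V)\Big]_n \ ,
\end{align}
up to the usual normalisation conventions. Recognising the first factor as the $\widehat{A}$-genus $\widehat{A}(TM)$ and the second as the Chern character $\ch(V)$, and noting that only the degree-$n$ part survives integration over the closed manifold $M$, completes the identification $\ind(\slashed{D}^+_M)=\int_M [\ch(V)\wedge\widehat{A}(TM)]_n$. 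The main obstacle throughout is making the rescaling rigorous: one must establish the uniform convergence of the rescaled heat kernels and justify that all lower terms of the Clifford filtration drop out under the supertrace. This is where essentially all the labour lies, the remaining ingredients being either standard elliptic theory or the explicit Gaussian computation furnished by Mehler's formula.
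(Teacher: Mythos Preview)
Your proof outline via the heat-kernel method and Getzler rescaling is a correct and standard approach to the Atiyah--Singer index theorem for twisted Dirac operators. The McKean--Singer formula, the short-time asymptotic expansion, the rescaling argument leading to the harmonic oscillator, and the evaluation via Mehler's formula are all accurately described, and you correctly identify where the real work lies (uniform control of the rescaled kernels and the supertrace selecting the top Clifford degree).

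However, the paper does not actually prove this theorem at all: it is stated as a classical result with a citation to Atiyah--Singer, and the paper explicitly remarks that the concrete form of $K_{\rm AS}$ is not important for what follows, only its existence. The theorem serves as background input for the subsequent discussion of the Atiyah--Patodi--Singer theorem and its extension to manifolds with corners, which are the results the thesis actually uses. So there is nothing to compare on the paper's side; you have supplied a genuine proof where the paper simply quotes the literature.
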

Even-though the concrete form of $K_{\rm AS}$ will not be important for the rest of this thesis (we only need the existence of such a differential form) we briefly 
define the characteristic classes involved. Let $W\longrightarrow M$ be a vector bundle over $M$ and $\nabla^W\colon 
\Gamma(W)\longrightarrow \Gamma(T^*M \otimes W )$ a 
connection on $W$. We denote by $K^{\nabla^W} \in \Omega^2(M; \End(W))$ the curvature of $\nabla^W$. 
The \emph{Chern character of $W$} $\ch(\nabla^W)\in \Omega^\bullet(M)$ is 
\begin{align}
\ch (W)\coloneqq \tr \left(\exp\left( \tfrac{K^{\nabla^W}}{2\pi \iu}\right)\right)  \in  \Omega^\bullet(M) \ \ ,
\end{align}
where the exponential is defined by the usual power series. Note that this is well-defined, since only finitely many terms
in the power series are non zero. Furthermore, the cohomology class $\ch (\nabla^W)$ is independent of the concrete choice of 
$\nabla^W$.    
The $\widehat{A}$-genus $\widehat{A}(W)\in \Omega^\bullet(M)$ is defined by the expression
\begin{align}
\widehat{A}(W) \coloneqq \det {}^{\tfrac{1}{2}} \left( \frac{K^{\nabla^W}/2}{\sinh(K^{\nabla^W}/2)} \right) \in \Omega^\bullet(M) \ \ .
\end{align} 
Again the expression has to be understood in terms of its Taylor expansion and its cohomology class is independent of the
choice of $\nabla^W$.

We now turn our attention to manifolds with boundary. 
Let $M$ be an $n$-dimensional oriented Riemannian manifold equipped with a spin structure and boundary $\partial M$. Recall that the spin structure consists of a double cover of the frame bundle $P_{SO(n)}(M)$ by a principal $\Spin_n$-bundle $P_{\Spin_n}(M) \longrightarrow M$. We can include the frame bundle $P_{SO(n-1)}(\partial M )$ into $P_{SO(n)}(M)$ by adding the inward pointing normal vector to an orthonormal frame of $\partial M$. The pullback of the double cover $P_{\Spin_n}(M)$ along this inclusion induces a spin structure on $\partial M$. 

We assume from now on that all structures are of product form on a fixed 
collar of the boundary $\partial M$. 
To describe the relation between the Dirac operator on the boundary and on the bulk manifold 
we use the embedding of Clifford bundles 
\begin{align}
\Cl_{\partial M} \longrightarrow \Cl_{M} \ , \quad 
T_x (\partial M) \ni v \longmapsto  v \, n_x \ ,  
\end{align}         
where $n$ is the inward pointing normal vector field corresponding to the collar. This gives $S_M\big|_{\partial M}$ the structure of a Clifford bundle over $\partial M$.
For the relation to the spinor bundle over the boundary we need to distinguish between even and odd dimensions.

If the dimension $n$ of $M$ is odd then we can identify $S_M\big|_{\partial M}$ with the spinor bundle over $\partial M$.
In this case the Dirac operator can be described in a neighbourhood of $\partial M$ by
\begin{align}
\label{Equation: Dirac operator on a product}
\D_M = n\cdot \big(\D_{\partial M} +\partial_n\big) \ .
\end{align} 
On the other hand, if the dimension $n$ of $M$ is even then the spinor bundle $S_M=S^+_M\oplus S^-_M$ decomposes into spinors of positive and negative chirality. The Clifford action of $\Cl_{\partial M}$ leaves this decomposition invariant and we can identify the spinor bundle over $\partial M$ with the pullback of the positive spinor bundle $S^+_M\big|_{\partial M}$. As the Clifford action of $\Cl_{\partial M}$ commutes with the chirality operator $\Gamma$, an identification with the negative spinor bundle is possible as well. Near the boundary the Dirac operator is given by
\begin{align}
\D_M = n\cdot\begin{pmatrix}
\D_{\partial M} +\partial_n & 0 \\ 
0 &  \Gamma|_{S^+_M} \, \D_{\partial M} \, \Gamma|_{S^-_M} +\partial_n
\end{pmatrix} \ .
\end{align}
When trying to extend the index of $\slashed{D }_{M}$ to manifolds with boundary one runs
into the problem that the Dirac operator on a manifold with boundary is never Fredholm. 
This can be solved by introducing suitable boundary condition. We will explain an alternative 
solution via attaching cylindrical ends.\footnote{This is equivalent to the introduction of
Atiyah-Patodi-Singer spectral boundary conditions on the spinors~\cite{APS}. We use the method of cylindrical ends, since it can be generalised to manifolds with corners and gives a natural cancellation of certain terms later on.}
We are mostly interested in the situation when $M$ is a cobordism, which
comes with a decomposition of its boundary $\partial M = \partial M_+ \sqcup 
\partial M_-$ and corresponding collars $[0,\epsilon) \times \Sigma_-  \longrightarrow M$
and  $(-\epsilon',0] \times \Sigma_+\longrightarrow M$.  
 We define
\begin{align}
\widehat{M}= M \sqcup_{\partial M} \big(   (-\infty,0] \times \Sigma_- \sqcup [0,\infty ) \times \Sigma_+   \big)  
\ ,
\end{align} 
where we use the collar and the fact that all structures on it are of product 
form to perform the gluing.
We extend the metric, spin structure and principal bundle with connection as products to $\widehat{M}$. The structure of the collars of the cobordism makes it natural to attach inward and outward pointing cylinders to the incoming and outgoing boundary, respectively, contrary to what is normally done in the index theory literature. It is further natural, again in contrast to what is normally done in index theory, to glue the cylinders along the identification of the collars with cylinders; this means that the gluing could ``twist'' bundles. Alternatively, we could first attach a mapping cylinder for the identification and then an infinite cylinder.

The Dirac operator $\slashed{D }^+_{\widehat{M}} \colon H^1(S^{G+}_{\widehat M})\longrightarrow L^2(S^{G-}_{\widehat M})$ is Fredholm if and only if the kernel of the induced Dirac operator on the boundary of $M$ is trivial~\cite{Loyaindex}. If the kernel is non-trivial, then we have to regularize the index in an appropriate way, which corresponds physically to introducing small masses for the massless fermions on $\widehat{M}$. This is done precisely by picking, for every connected component $\partial M_i$ of the boundary, a small number $\alpha_i$ with $0<\alpha_i  < \delta_i$, where $\delta_i$ is the smallest magnitude $|\lambda_i|$ of the non-zero eigenvalues $\lambda_i$ of the induced Dirac operator on $\partial M_i$. Now we can attach weights $\e^{\alpha_i \, s_i}$ to the integration measure on the cylindrical ends, where $s_i$ is the coordinate on the cylinder over $\partial M_i $. Denoting the corresponding weighted Sobolev spaces by $\e^{\alpha\cdot s}H^1(S^{G+}_{\widehat M})$ and $\e^{\alpha \cdot s}L^2(S^{G-}_{\widehat M})$, we then have\footnote{To be more precise, we have to first attach a mapping cylinder before we can apply~\cite[Theorem 5.60]{MelrosebGeo}.}
\begin{theorem}(\cite[Theorem 5.60]{MelrosebGeo})\label{Thm:D is Fredholm}
 \ The Dirac operator 
$$ 
\slashed{D }^+_{\widehat{M}} \colon \e^{\alpha\cdot s}H^1\big(S^{G+}_{\widehat M}\big)\longrightarrow \e^{\alpha \cdot s}L^2\big(S^{G-}_{\widehat M}\big)
$$ 
is Fredholm and its index is independent of the masses $\alpha_i$.
\end{theorem}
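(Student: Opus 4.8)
The plan is to reduce the assertion to the standard Fredholm theory for Dirac-type operators on manifolds with cylindrical ends, as developed in the $b$-calculus of Melrose. The whole statement is local at infinity: everything is governed by the model operator on the attached cylinders.

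First, I would exploit the product structure near the cylindrical ends. By Equation~\eqref{Equation: Dirac operator on a product} and its even-dimensional counterpart, on each end over a boundary component $\partial M_i$ the operator takes the translation-invariant form $\slashed{D}^+_{\widehat M} = n\cdot(\partial_s + \slashed{D}_{\partial M_i})$, where $s$ is the coordinate along the cylinder, $n$ denotes Clifford multiplication by the normal, and $\slashed{D}_{\partial M_i}$ is the induced self-adjoint Dirac operator on $\partial M_i$, which has discrete spectrum since $\partial M_i$ is closed. This \emph{model operator} controls the behaviour of $\slashed{D}^+_{\widehat M}$ at infinity and hence its Fredholm properties.

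Second, conjugation by the weight turns the problem on weighted spaces into an unweighted problem for a shifted operator: on $H^1(S^{G+}_{\widehat M})$ the conjugate $e^{-\alpha\cdot s}\,\slashed{D}^+_{\widehat M}\,e^{\alpha\cdot s}$ differs from $\slashed{D}^+_{\widehat M}$ only by a zeroth-order term supported on the ends, whose model on the end over $\partial M_i$ is $n\cdot(\partial_s + \slashed{D}_{\partial M_i} \pm \alpha_i)$, with the sign determined by whether the end is outgoing or incoming. By the Lockhart--McOwen Fredholm criterion for cylindrical operators, the model $\partial_s + \slashed{D}_{\partial M_i} \pm \alpha_i$ is invertible on $L^2$ precisely when $\mp\alpha_i$ is not an eigenvalue of $\slashed{D}_{\partial M_i}$; the hypothesis $0 < \alpha_i < \delta_i$ guarantees this for both signs, since $\delta_i$ is the smallest magnitude of a nonzero eigenvalue and $\pm\alpha_i \neq 0$. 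Invertibility of all the model operators then yields that $\slashed{D}^+_{\widehat M}$ is Fredholm.

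Third, I would deduce independence of the index from the homotopy invariance of the Fredholm index. As each $\alpha_i$ varies over the open interval $(0,\delta_i)$ the operators form a norm-continuous family, and since no eigenvalue of any $\slashed{D}_{\partial M_i}$ is crossed, the family stays Fredholm throughout; the integer-valued index is therefore locally constant, hence constant on the connected parameter region. The main obstacle is the precise Fredholm criterion in terms of indicial roots together with the correct mapping properties between the weighted Sobolev spaces $e^{\alpha\cdot s}H^1$ and $e^{\alpha\cdot s}L^2$; this is exactly the analytic content of the $b$-calculus that we import from~\cite[Theorem 5.60]{MelrosebGeo}, after first attaching a mapping cylinder, as in the footnote, so that the ends are genuinely of product form.
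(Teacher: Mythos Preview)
The paper does not give its own proof of this statement: it is simply quoted from Melrose's book (Theorem~5.60 of \cite{MelrosebGeo}) and used as a black box. There is therefore nothing in the paper to compare your argument against.

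That said, your sketch is the correct outline of the standard argument. The reduction to the model operator on each cylindrical end, the conjugation by the exponential weight to shift the boundary operator by $\pm\alpha_i$, the Fredholm criterion in terms of invertibility of the model (equivalently, avoidance of the indicial set), and the homotopy-invariance argument for independence of the index are exactly the ingredients of the $b$-calculus proof. Your final paragraph is honest about where the real analytic work lies: the parametrix construction in the $b$-calculus that underlies the Fredholm criterion. One small point worth making explicit is why the index is also independent of the \emph{sign} convention for the weight on incoming versus outgoing ends (the paper's orientation conventions introduce opposite signs there); this follows from the same continuity argument since $(0,\delta_i)$ and $(-\delta_i,0)$ are each connected and contain no spectral points, but the two components are not joined, so one still needs a separate check---or the observation that the sign is absorbed into Clifford multiplication by $n$ and does not affect the spectral gap condition.
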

There is an extension of the index theorem to the case of manifolds with boundary.
One new ingredient is the $\eta $-invariant of the Dirac operator on a closed manifold $\Sigma$ of odd dimension equipped with appropriate geometric structures which calculates the number of positive eigenvalues minus the number of negative eigenvalues of $\slashed{D }_{\Sigma}$, and is defined by 
\begin{align}
\eta \big( \slashed{D }_{\Sigma}\big)= \lim_{s\rightarrow 0} \ \sum_{\substack{\lambda \in \text{spec}( \slashed{D }_{\Sigma}) \\ \lambda\neq0}} \, \frac{\text{sign}(\lambda)}{|\lambda|^s} \ . 
\end{align}
The limit here should be understood as the value of the analytic continuation of the meromorphic function $\sum_{\lambda\neq0} \, \tfrac{\text{sign}(\lambda)}{|\lambda|^s}$ at $s=0$; the regularity of this value is proven in \cite{APS}.
The $\eta $-invariant can be reformulated as an integral over the trace of the corresponding heat kernel operator as
\begin{align}
\label{Equation: Reformulation eta invariant}
\eta \big( \slashed{D }_{ \Sigma }\big) = \frac{1}{\sqrt{\pi}} \, \int_0^\infty\, t^{-{1}/{2}} \ \text{Tr}\Big(\slashed{D }_{ \Sigma} \, \e^{-t\, \slashed{D }{}_{\Sigma}^2} \Big) \ \diff t \ .
\end{align}
Now we can formulate the Atiyah-Patodi-Singer index theorem for manifolds where the cylindrical ends are attached along the
identity:  
\begin{theorem}[Atiyah-Patodi-Singer index theorem \cite{APS}]
The index of $\slashed{D }^+_{\widehat{M}}$ can be computed by the concrete formula: 
\begin{align}
\text{ind} \big(\slashed{D }^+_{\hat{M}^{d,1}} \big) =
  \int_{M}\,  K_{\rm AS}-\frac{1}{2} \, \Big( \eta \big(
  \slashed{D }_{\partial M}\big) +\dim \ker \big(\slashed{D
  }_{\partial_- M} \big)- \dim \ker \big(\slashed{D }_{\partial_+ M}\big) \Big) \ .
\label{APS Theorem}
\end{align} 
\end{theorem}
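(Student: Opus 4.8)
The plan is to prove the index formula by the heat-kernel (McKean--Singer) method, adapted to the cylindrical-end setting that is guaranteed to be Fredholm by Theorem~\ref{Thm:D is Fredholm}. First I would invoke the McKean--Singer formula: since $\slashed{D }^+_{\widehat{M}}$ is Fredholm on the weighted Sobolev spaces, writing $\slashed{D }^-_{\widehat{M}}=(\slashed{D }^+_{\widehat{M}})^\ast$, its index equals the regularised heat supertrace
\begin{align}
\ind\big(\slashed{D }^+_{\widehat{M}}\big) = \text{Tr}\big(\e^{-t\,\slashed{D }^-_{\widehat{M}}\,\slashed{D }^+_{\widehat{M}}}\big) - \text{Tr}\big(\e^{-t\,\slashed{D }^+_{\widehat{M}}\,\slashed{D }^-_{\widehat{M}}}\big)
\end{align}
for every $t>0$, the trace being taken against the weight $\e^{\alpha\cdot s}$. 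The strategy is then to evaluate this $t$-independent quantity in the limit $t\to 0$, after splitting $\widehat{M}$ into the compact core $M$ and the attached half-cylinders.

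Next I would localise the heat kernel. On the compact interior, the standard short-time asymptotic expansion of the heat kernel produces the Atiyah--Singer integrand, so the interior contribution converges to $\int_M K_{\rm AS}$ exactly as in the closed case. The subtlety lives entirely on the cylindrical ends, where the product structure assumed on the collar lets me write the bulk operator in the form~\eqref{Equation: Dirac operator on a product}, namely $\slashed{D }_{\widehat{M}} = n\cdot(\slashed{D }_{\partial M}+\partial_n)$. This factorises the heat operator into a transverse one-dimensional part and the boundary operator $\slashed{D }_{\partial M}$.

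I would then extract the boundary contribution by decomposing the transverse heat kernel spectrally against the eigenmodes of $\slashed{D }_{\partial M}$. The nonzero eigenvalues contribute a term which, after the $t\to 0$ integration over the cylinders, assembles precisely into the heat-kernel presentation~\eqref{Equation: Reformulation eta invariant} of $-\tfrac12\,\eta(\slashed{D }_{\partial M})$. The zero modes of $\slashed{D }_{\partial M}$ are controlled by the weight $\e^{\alpha\cdot s}$: a zero mode on an inward-pointing cylinder (attached to $\partial_- M$) and one on an outward-pointing cylinder (attached to $\partial_+ M$) survive the weighted $L^2$ condition with opposite effective sign, which is exactly the source of the split $+\tfrac12\dim\ker(\slashed{D }_{\partial_- M}) - \tfrac12\dim\ker(\slashed{D }_{\partial_+ M})$. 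Assembling the interior and boundary pieces yields~\eqref{APS Theorem}, and the independence of the masses $\alpha_i$ from Theorem~\ref{Thm:D is Fredholm} guarantees the result is well-defined.

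The hard part will be the precise analysis of the cylindrical contribution: controlling the transverse heat kernel uniformly on the noncompact ends, justifying the interchange of the $t\to 0$ limit with the integration along the cylinders, and above all tracking the signs of the zero-mode contributions under the two opposite orientations of the attached cylinders. This is exactly where the $b$-calculus of Melrose~\cite{MelrosebGeo} is needed, and I would lean on its regularised-trace formalism to make the zero-mode bookkeeping rigorous.
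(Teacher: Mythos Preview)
The paper does not prove this theorem at all: it is simply quoted from \cite{APS}, with the cylindrical-end formulation and the particular sign conventions justified by reference to \cite{MelrosebGeo}. So there is no ``paper's own proof'' to compare against --- the result is treated as a black box input to the later constructions.

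That said, your sketch is a faithful outline of one of the standard proofs (indeed essentially the approach in \cite{MelrosebGeo}), and your identification of the hard parts is accurate. One point worth flagging: in the usual APS formula both boundary components enter with the same sign $+\tfrac{1}{2}\dim\ker$. The sign split $+\dim\ker(\slashed{D}_{\partial_-M})-\dim\ker(\slashed{D}_{\partial_+M})$ in \eqref{APS Theorem} is a convention specific to this paper, coming from attaching inward-pointing cylinders on $\partial_-M$ and outward-pointing ones on $\partial_+M$; in Melrose's language this corresponds to choosing negative weight parameters $\alpha_i$ on the outgoing components (the paper's Remark immediately following the theorem makes exactly this point). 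Your zero-mode bookkeeping paragraph gestures at this correctly, but in an actual proof you would need to be quite careful here: the sign does not come from ``which zero modes survive the weighted $L^2$ condition'' in a naive sense, but rather from how the choice of weight shifts the index when crossing the critical value $\alpha_i=0$, which is the content of the relative index theorem in the $b$-calculus.
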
 
\begin{remark}
The sign difference between the dimensions of the kernels in \eqref{APS Theorem} comes from the fact that we attach cylinders with opposite orientation to the incoming and outgoing boundary; this corresponds to a negative sign for the numbers $\alpha_i$ on the outgoing boundary $\partial_+M$ in the version of the Atiyah-Patodi-Singer index theorem given in \cite{MelrosebGeo}. 
\end{remark}

\section{Path integral description}\label{Sec: path integral parity}
In this section we will use the index theory for manifolds with boundaries
reviewed in the previous section to construct an invertible functorial quantum field
theory describing the parity-anomaly. 
For this we fix a compact Lie group $G$ and a unitary representation $\rho_G \colon G \longrightarrow 
\End(V)$ describing the matter content of the theory with anomaly.

The background fields consist of an orientation, a metric, a spin structure and a principal
$G$-bundle with connection described by the stack\footnote{This stack is not really of product form, since the spin structure depends on the choice of Riemannian metric.} 
\begin{align}
\mathscr{F}= \mathsf{Bun}_G^\nabla \, \times \, \mathsf{Met} \, \times \, \mathsf{Spin} \, \times \, \mathsf{Or} \ \ .
\end{align}
The stack $\F$-describes exactly the geometric structures required in Section~\ref{Sec: APS} to 
define the Dirac operator $\slashed{D}_M$. 
The field theory is defined on the category $\CobF$ constructed in 
Section~\ref{Sec: Non-extended field theories} where we make the additional assumption
that all structures are of product form on the collars; not just invariant. 
We make this assumption so that the index theory discussed in the previous 
section can be applied. By a slight abuse of notation we denote this 
category again by $\CobF$. The theory 
\begin{align}
\Aa_{\rm parity}^\zeta \colon \CobF \longrightarrow \fvs
\end{align}
depends on a complex parameter $\zeta \in \C^\times$. 
It assigns to every object $\Sigma \in \CobF$ the one dimensional 
vector space $\C$.  

To a regular morphism $M \colon \emptyset \longrightarrow \emptyset$, i.e.\ a 
closed manifold equipped with background fields we assign the linear map
\begin{align}
\Aa_{\rm parity}^\zeta(M)\colon \C \longrightarrow \C , \quad z\mapsto \zeta^{\text{ind}(\slashed{D }^+_{M})} \cdot z \ \ .
\end{align}
Let $M\colon \Sigma_- \longrightarrow \Sigma_+$ be regular morphisms in $\CobF$. Recall that $\widehat{M}$ is constructed from
$M$ by attached cylindrical ends of the form $(\infty,0]\times \Sigma_-$ 
and $[0,\infty)\times \Sigma_+$ to $M$ using the identification of $[0,\epsilon_-)\times \Sigma_-  \sqcup (-\epsilon_+,0] \times  \Sigma_+ $ with a collar of $\partial M$ which 
is part of the regular morphism $M$. Alternatively, we could first attach a mapping cylinder for the identification and then an infinite cylinder.

Having at hand the well-defined notion of an index for manifolds with boundaries in the form of Theorem~\ref{Thm:D is Fredholm}, we can now define
\begin{align}
\Aa_{\rm parity}^\zeta(M)\colon \C  \longrightarrow \C \ , \qquad z \longmapsto \zeta^{\text{ind}(\slashed{D }^+_{\widehat{M}})}\cdot z \ . 
\end{align}
We assign to a limit morphism $\phi$ the value of $\Aa_{\rm parity}^\zeta$ on a corresponding mapping cylinder; in order for $\Aa_{\rm parity}^\zeta$ to be well-defined, this construction must then be independent of the length of the mapping cylinder. We prove this as part of
\begin{theorem}\label{A is field theory}
 \ $\Aa_{\rm parity}^\zeta:\CobF \longrightarrow \fvs$ is an invertible quantum field theory.
\end{theorem}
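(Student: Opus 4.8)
The plan is to verify in turn that $\Aa_{\rm parity}^\zeta$ is (i) well defined on morphisms, (ii) functorial, (iii) symmetric monoidal, and (iv) invertible, with all the analytic input supplied by Theorem~\ref{Thm:D is Fredholm} and the Atiyah--Patodi--Singer theorem~\eqref{APS Theorem}. The only genuinely non-formal point is the gluing behaviour of the index, so I would organise the argument around that.

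First I would settle well-definedness. By Theorem~\ref{Thm:D is Fredholm} the index $\ind(\slashed{D}^+_{\widehat M})$ exists and is independent of the regularising masses $\alpha_i$, so $\Aa_{\rm parity}^\zeta(M)=\zeta^{\ind(\slashed{D}^+_{\widehat M})}$ is a well-defined element of $\C^\times$. An $\F$-diffeomorphism compatible with the collars carries the metric, spin structure and bundle-with-connection of one representative to those of another, hence induces a unitary intertwining the two Dirac operators and preserves the index; thus $\Aa_{\rm parity}^\zeta$ descends to equivalence classes of regular morphisms. For a limit morphism $\phi$ I would take the value on the mapping cylinder $\mathcal{M}_\epsilon(\phi)$ and show it is independent of $\epsilon$: since $\phi$ is a constant $\F$-diffeomorphism, in particular an isometry preserving all data, the metric on $\mathcal{M}_\epsilon(\phi)$ is of product form $\diff t^2+g_\Sigma$, so the top-degree component of $\ch(V)\wedge\widehat{A}(TM)$ carries no $\diff t$ leg and the bulk integral $\int_{\mathcal{M}_\epsilon(\phi)}K_{\rm AS}$ vanishes; the remaining $\eta$- and kernel-dimension terms in~\eqref{APS Theorem} depend only on $\Sigma$, $\Sigma'$ and $\phi$, not on $\epsilon$, so the index is constant in $\epsilon$.

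The heart of the argument is functoriality on regular morphisms, which I expect to be the main obstacle because it is where the orientation bookkeeping must be controlled. Given composable $M\colon\Sigma_-\to\Sigma_+$ and $M'\colon\Sigma_+\to\Sigma_{++}$, I would apply~\eqref{APS Theorem} to $M$, $M'$ and $M'\circ M$ separately. Locality of $K_{\rm AS}$ gives $\int_{M'\circ M}K_{\rm AS}=\int_M K_{\rm AS}+\int_{M'}K_{\rm AS}$, and it remains to show that the boundary contributions along the glued hypersurface $\Sigma_+$ cancel. Here I would use that $\Sigma_+$ enters the formula for $M$ with its outgoing orientation and the formula for $M'$ with the opposite (incoming) orientation, together with $\eta(\slashed{D}_{-\Sigma_+})=-\eta(\slashed{D}_{\Sigma_+})$ and $\dim\ker(\slashed{D}_{-\Sigma_+})=\dim\ker(\slashed{D}_{\Sigma_+})$; the sign pattern $+\dim\ker(\partial_-M)-\dim\ker(\partial_+M)$ in~\eqref{APS Theorem} is precisely what makes the $\eta$- and kernel-terms at $\Sigma_+$ annihilate, leaving $\ind(\slashed{D}^+_{\widehat{M'\circ M}})=\ind(\slashed{D}^+_{\widehat M})+\ind(\slashed{D}^+_{\widehat{M'}})$. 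Exponentiating with $\zeta$ gives $\Aa_{\rm parity}^\zeta(M'\circ M)=\Aa_{\rm parity}^\zeta(M')\circ\Aa_{\rm parity}^\zeta(M)$. Functoriality involving limit morphisms then follows formally: by Remark~\ref{Rem: Composition in terms of mapping cylinders} the composition of limit morphisms and the mixed compositions are realised by gluing mapping cylinders, so they reduce to the regular gluing formula just established, while the computation for $\mathcal{M}_\epsilon(\id_\Sigma)$ gives index $0$ and hence the identity map.

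Finally, symmetric monoidality is immediate: disjoint union sends the one-dimensional spaces to their tensor product, additivity of the index over disjoint unions yields $\Aa_{\rm parity}^\zeta(M\sqcup M')=\Aa_{\rm parity}^\zeta(M)\otimes\Aa_{\rm parity}^\zeta(M')$, and all coherence and symmetry constraints hold automatically because every vector space in sight is one-dimensional. Invertibility is then settled by exhibiting $\Aa_{\rm parity}^{\zeta^{-1}}$ as an inverse, since $\zeta^{\ind}\cdot\zeta^{-\ind}=1$ shows $\Aa_{\rm parity}^\zeta\otimes\Aa_{\rm parity}^{\zeta^{-1}}\cong 1$; equivalently, as $\zeta\in\C^\times$ the theory factors through the maximal Picard subgroupoid of $\fvs$ and is therefore invertible.
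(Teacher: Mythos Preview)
Your proof is correct and follows essentially the same strategy as the paper: additivity of the index via the Atiyah--Patodi--Singer formula~\eqref{APS Theorem}, with the $\eta$-term flipping sign under orientation reversal and the $\dim\ker$ signs arranged to cancel at the glued hypersurface; monoidality from additivity over disjoint union; invertibility via $\Aa_{\rm parity}^{\zeta^{-1}}$.

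The one place you differ in execution is the $\epsilon$-independence for mapping cylinders. You compute termwise in~\eqref{APS Theorem}, arguing that $K_{\rm AS}$ is pulled back from $\Sigma$ (hence has vanishing top-degree component) and the boundary contributions depend only on $\Sigma,\Sigma',\phi$. The paper instead observes directly that for different $\epsilon$ the manifolds $\widehat{\mathcal{M}_\epsilon(\phi)}$ obtained after attaching infinite cylindrical ends are $\F$-diffeomorphic, so the indices agree without invoking~\eqref{APS Theorem} at all. Your argument is perfectly valid but slightly more computational; the paper's is shorter. Conversely, the paper is a little more careful than you are about the collar bookkeeping in the composition of regular morphisms: because gluing in $\CobF$ uses the identification maps $\varphi_\pm$ on collars, cutting $M'\circ M$ really returns $M$ and $M'$ with mapping cylinders for those identifications attached (cf.\ Remark~\ref{Rem: Composition in terms of mapping cylinders}), and one needs that the resulting indices coincide with those of $\widehat{M}$, $\widehat{M'}$ where the cylindrical ends are attached via the twisted identifications. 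Your appeal to ``locality of $K_{\rm AS}$'' is correct in substance (the overlap is a product region on which $K_{\rm AS}$ vanishes), but it would be worth making this collar point explicit.
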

\begin{proof}
The value of $\Aa_{\rm parity}^\zeta$ on a mapping cylinder is independent of its length, since the manifolds constructed by attaching cylindrical ends are $\mathscr{F}$-diffeomorphic. This proves that $\Aa_{\rm parity}^\zeta$ is well-defined on limit morphisms $\phi$.

If we cut a manifold $M$ along a hypersurface $H$ into two pieces $M_1$ and $M_2$, then from the Atiyah-Patodi-Singer index theorem \eqref{APS Theorem} we get\footnote{For this we need a cylindrical neighbourhood of $H$ on which all of the field content $\Fscr$ is of product form.} 
\begin{align}
\text{ind}\big(\slashed{D}^+_{\widehat M}\big)=\text{ind}\big(\slashed{D}^+_{\widehat M_1}\big)+\text{ind}\big(\slashed{D}^+_{\widehat M_2} \big) \ ,
\end{align} 
since the integration is additive and $\eta\big( \slashed{D }_{\Sigma}\big) = - \eta\big( \slashed{D }_{-\Sigma}\big)$, where $-\Sigma$ is the manifold $\Sigma$ with opposite orientation. The contributions from the boundary along which the cutting takes place cancel in (\ref{APS Theorem}) because of the opposite signs of the dimensions of the kernel of the boundary Dirac operator for incoming and outgoing boundaries.

For regular morphisms $M\colon \Sigma_1 \longrightarrow \Sigma_2$
and $M'\colon \Sigma_2 \longrightarrow \Sigma_3$, we can cut the
manifold $M'\circ M$ into $M$ and $M'$, with
the collar around $\Sigma_2$ removed and mapping cylinders
corresponding to the identification attached. This uses the
description of the gluing process in terms of mapping cylinders (see
Section~\ref{Sec: Non-extended field theories}), but as mentioned earlier, the index of
such pieces is the same as the index corresponding to a manifold where
the attachment is twisted by the identification of the collars with
cylinders. This implies 
\begin{align}
\Aa_{\rm parity}^\zeta\big(M'\circ M\big)= \Aa_{\rm parity}^\zeta\big(M'\big)\cdot \Aa_{\rm parity}^\zeta\big(M\big) \ .
\end{align}   
This proves that $\Aa_{\rm parity}^\zeta$ is a functor, which
is furthermore symmetric monoidal since all our constructions are
multiplicative under disjoint unions. The inverse functor (with respect to
the tensor product of field theories) is 
$\big(\Aa_{\rm parity}^\zeta\big)^{-1}=\Aa_{\rm parity}^{\zeta^{-1}}$.
\end{proof}
\begin{remark}
It may seem unnatural for $\Aa_{\rm parity}^\zeta$ to
assign the one-dimensional vector space $\C $ to every closed $n-1$-dimensional manifold $\Sigma$. 
Rather one would expect a complex line generated by all boundary conditions via an inverse limit construction as for example in~\cite{FreedQuinn,MonnierHamiltionianAnomalies}. Assigning $\C$ to every closed manifold is 
only possible due to the presence of canonical APS-boundary conditions related to the $L^2$-condition on the non-compact manifolds~$\widehat{M} $.
\end{remark}

\subsubsection*{Partition functions and symmetry-protected topological phases}

We turn our attention now to the partition function for a quantum field theory with parity anomaly.
According to Definition~\ref{Def: Anomalous partition function}, it is a
natural symmetric monoidal transformation
$Z_{\rm parity}^\zeta \colon \mbf1 \Longrightarrow \tr\Aa_{\rm parity}^\zeta$. This yields, for every closed $n-1$-dimensional manifold $\Sigma$ equipped
with background fields, a linear map \begin{align}
Z_{\rm parity}^\zeta(\Sigma)\colon \C \longrightarrow \Aa_{\rm parity}^\zeta(\Sigma)=\C \ \ .
\end{align}
 A linear map $Z_{\rm
  parity}^\zeta(\Sigma) \colon \C \longrightarrow \C$ can be canonically
identified with a complex number $Z_{\rm parity}^\zeta(\Sigma)\in \C$. There is no ambiguity in the
definition of the partition function as a complex number. The essence
of the parity anomaly, like most anomalies associated with the
breaking of a classical symmetry in quantum field theory, is the
lack of invariance of $Z_{\rm parity}^\zeta$ under limit morphisms $\phi$: the naturality of the partition function
implies that it transforms under gauge transformations $\phi $ by
multiplication with a 1-cocycle $\Aa_{\rm parity}^\zeta(\phi)\in\C^\times$; note that
in the present context `gauge transformations' also refer to
isometries and isomorphisms of the spinor bundle $S_{\Sigma}$. Since
$\Aa_{\rm parity}^\zeta$ depends only on topological data,
this multiplication is given by
\begin{eqnarray}
\Aa_{\rm parity}^\zeta(\phi) = \zeta^{{\rm ind}(\slashed{D}{}^+_{\mathfrak{M}(\Sigma,\phi)})}
\label{eq:Zparityphi}\end{eqnarray}
where $\mathfrak{M}(\Sigma, \phi)$ is the corresponding mapping torus constructed by identifying the boundary components of $ [0,1] \times \Sigma $ using $\phi$. 

We shall now illustrate how the functorial formalism of this section
connects with the more conventional treatments of the parity anomaly
in the physics literature, following~\cite{WittenFermionicPathInt} (see also~\cite{SeibergWitten});
indeed, what mathematicians call `invertible quantum field theories'
are known as `short-range entangled topological phases' to
physicists. A partition function with parity anomaly can be defined by
fixing its value on a representative for every gauge equivalence class of field configurations and applying \eqref{eq:Zparityphi} to determine all other values. 
Now consider the partition function with parity anomaly defined by 
\begin{align}
Z_{\rm parity}^{(-1)}\big(\Sigma \big)=\big|\text{det}\big(\slashed{D }_{\Sigma}\big)\big|  
\end{align} 
for an arbitrary chosen background $(A_{\Sigma},g_{\Sigma})$ in
every gauge equivalence class, where the definition of the determinant requires a suitable regularization.
Formally, this is the absolute value of the contribution to the path
integral measure from a massless Dirac fermion in $n-1$ dimensions coupled to a background $(A_{\Sigma},g_{\Sigma})$.
 There is an ambiguity in defining the phase of $Z_{\rm
   parity}^{(-1)}\big(\Sigma\big)$. Time-reversal (or space-reflection)
 symmetry forces $Z_{\rm parity}^{(-1)}\big(\Sigma\big)$ to be real. Here we
 chose the phase to make the partition function positive at the fixed representative.

From a physical perspective, having set the phase of the partition function at a fixed background $(A_{\Sigma},g_{\Sigma})$ we can calculate the phase at a gauge equivalent configuration $\phi(A_{\Sigma},g_{\Sigma})$, by following the path
\begin{equation}
(1-t)\, (A_{\Sigma},g_{\Sigma})+t\, \phi(A_{\Sigma},g_{\Sigma}) \ , \quad t\in[0,1]
\label{eq:spectralflow}\end{equation}
in the configuration space of the field theory, and changing the sign
every time an eigenvalue of the Dirac operator crosses through
zero. It is well-known that this spectral flow can be calculated by
the index of the Dirac operator on the corresponding mapping
cylinder~\cite{APS}. This physical intuition is formalised by the definition
above for $\zeta=-1$: The phase ambiguity is determined by requiring
the partition function to define a natural symmetric
monoidal transformation.

We can preserve gauge invariance by using Pauli-Villars regularization~\cite{WittenFermionicPathInt} leading to the gauge invariant partition function  
\begin{align}
Z_{\rm parity}\big(\Sigma\big)=\big|\text{det}\big(\slashed{D }_{\Sigma}\big)\big| \ (-1)^{\eta (\slashed{D }_{\Sigma})/2} \ .
\end{align}
The global parity anomaly is due to the fact that the fermion path integral is in general not a real number, whereas classical orientation-reversal (or `parity') symmetry, which acts by complex conjugation on path integrals, would imply that the path integral is real. 
Hence, the parity anomaly can be understood as the result that it is not possible to quantize 
the theory in such a way that gauge symmetry and parity symmetry are preserved. 

We can now apply the general framework from Section~\ref{Sec: Anomaly inflow} to cancel the parity anomaly: We combine bulk and boundary degrees of freedom by introducing for the bulk fields the action 
\begin{align}
\label{EQ: Action topological isulator bulk fields}
 S_{\rm bulk}\big(M\big) = \ii \pi \, \int_{M}\, K_{\rm AS} \ ,
\end{align} 
where $M$ is a regular morphism from $\Sigma$ to $\emptyset$, i.e.\ $\partial M=-\Sigma$. Then after integrating out the boundary fermion fields, the contribution to the path integral measure for the combined system is given by
\begin{align}
\begin{split} 
Z_{\rm bb}\big(M\big) &  = \Za^{(-1)}_{\rm parity}(M)[Z^{(-1)}_{\rm parity}(\Sigma)] 
\\[4pt] & = \big|\text{det}\big(\slashed{D }_{\Sigma}\big)\big| \ (-1)^{\text{ind}(\slashed{D }^+_{\hat M})}
\\[4pt] & = 
\big|\text{det}(\slashed{D }_{\Sigma})\big| \, \exp \Big(\,\frac{\ii\pi}2\, \eta \big(\slashed{D }_{\Sigma}\big) - \ii\pi\, \int_{M}\, K_{\rm AS}\, \Big)  \\
&= \big|\text{det}\big(\slashed{D }_{\Sigma}\big)\big| \ (-1)^{ \eta (\slashed{D }_{\Sigma})/2} \ \e^{-S_{\rm bulk}(M)}  \ ,
\end{split} 
\end{align}
where we used that $\big|\text{det}\big(\slashed{D }_{\Sigma}\big)\big|$ is zero 
as soon as the kernel of  $\slashed{D }_{\Sigma}$ is non-trivial and the Atiyah-Patodi-Singer index formula \eqref{APS Theorem}. This expression is real. Thus the combined bulk-boundary system is invariant under orientation-reversal and gauge transformations, since now its path integral is real, due to `anomaly inflow' from the bulk to the boundary. In particular, the non-anomalous partition function of the combined system
requires the full $n$-dimensional quantum field theory $\mathcal{A}^{(-1)}_{\rm parity}$, rather than just the truncation $\mathsf{tr}\mathcal{A}^{(-1)}_{\rm parity}$ in which the original partition function $Z^{(-1)}_{\rm parity}$ lives, to be well defined. 
Looking at this from a different perspective, we see that the existence of an effective long wavelength action \eqref{EQ: Action topological isulator bulk fields} for the bulk gauge and gravitational fields implies the existence of gapless charged boundary fermions with an anomaly cancelling the anomaly of the bulk quantum field theory under orientation-reversing transformations. 

This example provides a simple model for the general feature of some topological states of matter: Symmetry-protected topological phases in $n$ dimensions are related to global anomalies in $n-1$ dimensions. In the simplest case $n=2$, the quantum mechanical time-reversal anomaly on the $0+1$-dimensional boundary is encoded by the $1+1$-dimensional symmetry-protected topological phase in the bulk whose topological response action \eqref{EQ: Action topological isulator bulk fields} evaluates to $\ii\pi\,\Phi$, where $\Phi$ is the magnetic flux of the background gauge field through $M$. This sets the two-dimensional $\theta$-angle equal to $\pi$.

For the $n=4$ example of the time-reversal (or space-reflection)
invariant $3+1$-dimensional fermionic topological insulator with
$2+1$-dimensional boundary~\cite{WittenFermionicPathInt}, the integral
of the Atiyah-Singer index density $K_{\rm AS}$ in four dimensions
yields the sum of the instanton number $I$ of the background gauge
field and a gravitational contribution related to the signature
$\sigma$ of the four-manifold $M$~\cite{NashBook}. For the cancellation of the parity anomaly we had to introduce the term $\ii\pi\, I$ in the action, which is the anticipated statement that the $\theta$-angle parameterising the axionic response action is equal to $\pi$ inside a topological insulator. The bulk-boundary correspondence discussed above then resembles the well-known situation from three-dimensional Chern-Simons theory, to which the bulk theory reduces on $\partial M=\Sigma$~\cite{Niemi,AlvarezGaume}.

The present formalism generalises this perspective to systematically
construct quantum field theories with global parity symmetry that
characterise gapless charged fermionic boundary states of certain
symmetry-protected topological phases of matter in all higher even
dimensions $n\geq6$. Indeed, the anomaly of a quantum field theory in
$n=2k$ dimensions involving an action that integrates the
Atiyah-Singer index density $K_{\rm AS}$ reduces on the boundary $\partial M=\Sigma$ to coupled combinations of gauge and gravitational Chern-Simons type terms. The bulk action \eqref{EQ: Action topological isulator bulk fields} will now also involve couplings between gauge and gravitational degrees of freedom through intricate combinations of Chern and Pontryagin classes, such that the
bulk symmetry-protected topological phase completely captures the
parity anomaly of the boundary theory. Some examples of such mixed
gauge-gravity phases can be found e.g.\ in~\cite{Wang:2014pma}.

\section{Index theory part II: manifolds with corners}\label{Sec: Index2}
In the remainder of this chapter we extend the field theory $\mathcal{A}^{(-1)}_{\rm parity}$ to also capture 
the parity anomaly at the Hamiltonian level. It should not come as a surprise that this involves
index theory on manifolds with corners, which we review in this section.
We will present the index theory already tailored 
to the construction of an extended field theory in the next section, i.e.\ we formulate
the results for regular 2-morphisms in the bicategory $\ECobF$ again assuming
that all structures are of product form on the collars. Furthermore, we make the technical assumption
that the index of the Dirac operator on all corners vanishes.    
We use an index theorem for manifolds with corners based on 
b-geometry. Before stating the theorem we provide some background
on b-geometry.  

\subsection{b-Geometry}\label{Sec: b-geometry}
b-geometry (for `boundary geometry') is concerned with the study of geometric structures on manifolds with corners which can be singular at the boundary. 
We fix an $n$-dimensional $\langle 2\rangle$-manifold $M$ and an ordering of its connected faces $\{H_1, \dots, H_k\}$.
The central objects in b-geometry are b-vector fields. These are vector fields which are tangent to all boundary hypersurfaces. 
We denote by $\mathrm{Vect}_{\mathrm{b}}(M)$ the projective $C^\infty (M)$-module of b-vector fields. Then $\mathrm{Vect}_{\mathrm{b}}(M)$ is closed under the Lie bracket of vector fields.  
By the Serre-Swan theorem, the b-vector fields are naturally sections of the \emph{b-tangent bundle} with fibres
\begin{align}
^{\mathrm{b}}T_x M := \mathrm{Vect}_{\mathrm{b}}(M) \setminus \mathcal{I}_x(M) \cdot \mathrm{Vect}_{\mathrm{b}}(M) \ , 
\end{align}
where $\mathcal{I}_x(M)= \{f\in {C}^\infty(M) \mid f(x)=0 \}$ is the ideal of functions vanishing at $x\in M$. This allows us to define arbitrary b-tensors as in classical differential geometry. The inclusion $\mathrm{Vect}_{\mathrm{b}}(M) \hookrightarrow \mathrm{Vect}(M)$ induces a natural vector bundle map $\alpha_{\mathrm{b}} \colon ^{\mathrm{b}}TM \longrightarrow TM$.

The structures introduced so far can be summarized by saying that $(^{\mathrm{b}}TM,\alpha_{\mathrm{b}})$ is a boundary tangential Lie algebroid. The b-tangent bundle is isomorphic to the tangent bundle in the interior of $M$, and $(M, \mathrm{Vect}_{\mathrm{b}}(M))$ is an example of a manifold with Lie structure at infinity~\cite{LieManifolds}.
   
Using a set of boundary defining functions $\{x_i\}_{i=1\dots k}$, the Lie algebra $\mathrm{Vect}_{\mathrm{b}}(M)$ is locally spanned near a point $x\in H_i$ of index~$1$ by $\{ x_i\, \partial_{x_i}, \partial_{h_1}, \dots , \partial_{h_{n-1}}\}$, where $\{h_l\}_{l=1}^{n-1}$ is a local coordinate system for $H_i$. 
In a neighbourhood of $x\in H_i\cap H_j$, $i\neq j$, we can form a basis given by $\{x_i\, \partial_{x_i},x_j\, \partial_{x_j},\partial_{y_1}, \dots , \partial_{y_{n-2}} \}$, where $\{y_l\}_{l=1}^{n-2}$ is a local coordinate system on $Y_{ij}=H_i\cap H_j$. 
The dual basis for the b-cotangent bundle $^{\mathrm{b}}T^*M$ is denoted by $\big\{ \tfrac{\diff x_i}{x_i},\tfrac{\diff x_j}{x_j}, \diff y_1 ,\dots , \diff y_{d-2} \big\}$. 

A \emph{b-metric} $g$ is now simply a metric on the vector bundle $^{\mathrm{b}} TM$ over $M$. This defines an ordinary metric in the interior of $M$. The general expression in local coordinates near a corner point is
\begin{align}
g= \sum_{i,j=0,1}\, a_{ij} \ \frac{\diff x_i}{x_i} \otimes \frac{\diff x_j}{x_j} + 2 \, \sum_{i=0,1} \ \sum_{j=1}^{n-2}\, b_{ij} \ \frac{\diff x_i}{x_i} \otimes \diff y_j + \sum_{i,j=1}^{n-2}\, c_{ij} \ \diff y_i \otimes \diff y_j \ .
\end{align}
A b-metric $g$ is \emph{exact} if there exists a set of boundary defining functions $x_i$ such that it takes the form
\begin{align}
g= \begin{cases}
\displaystyle \frac{\diff x_i}{x_i} \otimes \frac{\diff x_i}{x_i} +h_{H_i} & \text{ near } H_i \ , \\[10pt]
\displaystyle  \frac{\diff x_i}{x_i} \otimes \frac{\diff x_i}{x_i}+ \frac{\diff x_j}{x_j} \otimes \frac{\diff x_j}{x_j} + h_{H_i\cap H_j} & \text{ near } H_i\cap H_j \ ,
\end{cases}
\end{align}
where $h_Y$ denotes a metric on $Y$.

A \emph{b-differential operator} is an element of the universal enveloping algebra of $\mathrm{Vect}_{\mathrm{b}}(M)$, the collection of which act naturally on $C^\infty (M)$. 
A \emph{b-differential operator $D\in \text{Diff}^{\,k}_{\mathrm{b}}(M,E_1,E_2)$ of order $k$} between two vector bundles $E_1$ and $E_2$ over $M$ is a smooth fibre-preserving map, which in any local trivialisations of $E_1$ and $E_2$ is given by a matrix of linear combinations of products of up to $k$ b-vector fields. 
Most concepts from differential geometry such as connections, symbols and characteristic classes can be generalized to the b-geometry setting. 

Since exact b-metrics are singular at the boundary it is necessary to define a renormalised b-integral. Heuristically, the problem stems from the fact that the integral $\int_0^1\, \tfrac{\diff x}{x}$ is divergent. The cure for this is to multiply with $x^z$ for $\text{Re}(z)> 0$.
\begin{lemma}(\cite[Lemma 4.1]{Loyaindex}) \ 
Let $M$ be a manifold with corners and an exact b-metric~$g$. Then for all $f\in {C}^\infty(M)$ and $z\in \C$ with $\operatorname{Re}(z)> 0$, the integral
\begin{align}
F(f,z):=\int_M\, x^z\, f\ \diff g
\end{align}
exists and extends to a meromorphic function $F(f,z)$ of $z\in\C$.
\end{lemma}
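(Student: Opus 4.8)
The plan is to reduce the global claim to a family of explicit one- and two-dimensional model integrals by localising with a partition of unity subordinate to a cover of $M$ adapted to its face structure, and then to establish the meromorphic continuation of each model integral by hand. First I would fix a total boundary defining function $x$, which near a face $H_i$ agrees up to a smooth positive factor with the boundary defining function $x_i$, and near a corner $H_i\cap H_j$ agrees up to such a factor $\phi$ with $x_i\,x_j$; thus $x^z$ behaves like $x_i^z$ (respectively $x_i^z x_j^z\,\phi^z$) times a smooth amplitude. For an exact b-metric the volume density is clean: by the local normal forms of $g$ recalled above, $\diff g$ equals $\tfrac{\diff x_i}{x_i}\,\diff h_{H_i}$ near $H_i$ and $\tfrac{\diff x_i}{x_i}\,\tfrac{\diff x_j}{x_j}\,\diff h_{H_i\cap H_j}$ near the corner, with no cross terms to worry about. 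Hence on the support of each cut-off the integrand is a smooth function times $x_i^{z-1}\,\diff x_i\,\diff h$ or $x_i^{z-1} x_j^{z-1}\,\diff x_i\,\diff x_j\,\diff h$, and since $\int_0^\epsilon t^{z-1}\,\diff t$ converges for $\operatorname{Re}(z)>0$ this already proves existence of $F(f,z)$ on the right half-plane. The interior cut-off contributes $\int_K x^z f\,\diff g$ over a compact set on which $x$ is bounded away from $0$, which is entire in $z$.

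For the continuation near a single face, in coordinates $(x_i,y)$ with $y\in H_i$ I would Taylor-expand $f$ to order $N$, writing $f(x_i,y)=\sum_{m=0}^{N}f_m(y)\,x_i^m + x_i^{N+1}R_N(x_i,y)$. The explicit terms integrate to $\sum_{m=0}^{N}\tfrac{\epsilon^{z+m}}{z+m}\int_{H_i}f_m\,\diff h$, a meromorphic function of $z$ with at worst simple poles at $z\in\{0,-1,-2,\dots\}$, while the remainder integral converges and is holomorphic for $\operatorname{Re}(z)>-(N+1)$. Letting $N\to\infty$ yields the meromorphic continuation of the single-face contribution to all of $\C$. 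Near a corner the same idea applies to the two-variable integral: a joint Taylor expansion of the amplitude $\phi^z f$ in both $x_i$ and $x_j$ produces explicit terms of the form $\tfrac{\epsilon^{z+m}}{z+m}\cdot\tfrac{\epsilon^{z+l}}{z+l}\int_{Y_{ij}}f_{ml}\,\diff h$, meromorphic with poles (now possibly of order two, when $m=l$) at the non-positive integers, together with a remainder holomorphic on a half-plane receding to $-\infty$ as the expansion order grows.

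Finally I would assemble the pieces: by the identity theorem the local continuations are unique, so summing the interior, face, and corner contributions over the partition of unity produces a single meromorphic function on $\C$ agreeing with $F(f,z)$ on $\operatorname{Re}(z)>0$. The main obstacle I anticipate is the rigorous control of the corner contribution, where one must perform the joint expansion in the two normal variables with a remainder whose integral is genuinely holomorphic on an expanding half-plane, rather than merely a formal double sum. A secondary point to check is that writing $x=x_i\,x_j\,\phi$ with $\phi$ smooth and positive makes the effective amplitude $\phi^z f$ depend on $z$; one verifies that this dependence is entire and locally uniform, so that the poles arise solely from the elementary factors $\tfrac{\epsilon^{z+m}}{z+m}$ and the resulting double sum genuinely continues to an honest meromorphic function on $\C$.
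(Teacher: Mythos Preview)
The paper does not prove this lemma; it is quoted verbatim from \cite[Lemma~4.1]{Loyaindex} and used as a black box to define the b-integral in the subsequent definition. Your proposal supplies what the paper omits, and the approach you outline---localisation by a partition of unity adapted to the face structure, reduction to the model integrals $\int_0^\epsilon t^{z-1}\,\diff t$ via the product form of the exact b-metric, and meromorphic continuation by Taylor expansion in the normal variables---is exactly the standard argument (going back to Melrose's treatment of the b-calculus) and is correct as sketched. The points you flag as needing care are the right ones: the joint Taylor remainder at corners and the entire dependence of $\phi^z$ on $z$ are both routine but must be checked; neither hides a genuine obstruction.
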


\begin{definition}
Let $M$ be a manifold with corners and an exact b-metric $g$. The \emph{b-integral} of a function $f\in C^\infty(M)$ is 
\begin{align}
\label{Definition b-integral}
{{}^{\mathrm{b}}}\!\!\int_M\, f\ \diff g = \operatorname{Reg}_{z=0} \ F(f,z)\ .
\end{align}
\end{definition}

This allows us to define the \emph{b-trace} of a pseudo-differential operator $D$ in terms of its kernel $D(x,y)$ as
\begin{align}
^{\mathrm{b}}\text{Tr}(D)= {{}^{\mathrm{b}}}\!\!\int_M\, \text{tr} \big(D(x,x)\big)\ \diff g(x) \ , 
\end{align}
where the trace $\mathrm{tr}$ is over the fibres of the vector bundle on which $D$ acts.

We conclude by describing the relation between $\langle 2\rangle$-manifolds with exact b-metrics and $\langle 2\rangle$-manifolds equipped
with $\F$-background fields. To a manifold $M$ equipped with a Riemannian metric we attach infinite cylindrical ends $H_i\times (-\infty,0]$ to the boundary hypersurfaces and $Y_{ij}\times (-\infty,0]^2$ to the corners. The coordinate transformation $x_i= \e^{t_i}$ for $t_i\in(-\infty,0]$ maps this non-compact manifold to the interior of a manifold $X$ with corners. The product metric on the cylindrical ends induces a b-metric on $X$, since $\diff t_i\otimes\diff t_i = \tfrac{\diff x_i}{x_i} \otimes \tfrac{\diff x_i}{x_i}$. For this reason one can view the study of manifolds with exact b-metrics as the study of manifolds with cylindrical ends. For regular 2-morphisms in $\ECobF$ it is again natural to attach the collars using the $\F$-diffeomorphisms 
which are part of the regular 2-morphism.

\subsection{Loya-Melrose index theorem for manifolds with corners}  
We have seen in Section~\ref{Sec: path integral parity} that it is helpful to attach mapping cylinders to a manifold encoding the data of the identification of the boundary components with lower-dimensional objects. In the extended case we also need mapping boxes at the corners. 
Let $Y_i$, $i=1,2,3, 4$ be four closed manifolds equipped with $\mathscr{F}$-fields of product form $\mathsf{f}_i \in \mathscr{F}\big((-\epsilon_1, \epsilon_1)^2 \times Y_i  \big)$, and a diagram of $\mathscr{F}$-diffeomorphisms $\varphi_{ij}$:
\begin{equation}
\begin{tikzcd}
Y_1 \ar[d,swap,"\varphi_{13}"] \ar[r,"\varphi_{12}"] & Y_2\ar[d,"\varphi_{24}"] \\
Y_3 \ar[r,swap,"\varphi_{34}"] & Y_4 
\end{tikzcd}
\end{equation}
Then the \emph{mapping box} $\mathfrak{M}(Y, \varphi)$ of length
$\epsilon$ corresponding to this data is constructed by gluing
$\big[0, \tfrac{3}{4}\, \epsilon\big)^2 \times Y_1 $, $\big(\tfrac{1}{4}\, \epsilon,\epsilon\big] \times \big[0,
\tfrac{3}{4}\, \epsilon\big) \times Y_2
$, $ \big[0, \tfrac{3}{4}\,
\epsilon\big)\times  \big(\tfrac{1}{4}\, \epsilon,\epsilon\big] \times Y_3$ and
$ \big( \tfrac{1}{4}\, \epsilon, \epsilon\big]^2 \times Y_4$ along
$\varphi_{ij}$. Using descent we can construct an element $\mathsf{f}
\in \mathscr{F}\big(\mathfrak{M}(Y, \varphi)\big)$.
   
Given a regular 2-morphism $M$ from $\Sigma_1\colon S_- \longrightarrow S_+$ to $\Sigma_2\colon S_-\longrightarrow S_+$ in $\ECobF$, by definition $M$ comes with collars $N_- \cong [0,\epsilon_1) \times \Sigma_1  $ and $N_+ \cong (-\epsilon_1,0] \times \Sigma_+$.
We first attach mapping cylinders of a fixed length $\epsilon \in \R_{>0}$ to $\Sigma_1$, $\Sigma_2$ and the $0$-boundary. In a second step we attach mapping boxes of length $\epsilon$ to the corners of $M$. We denote this new manifold by $M'$ (see Figure \ref{Fig: Extended Indextheorem}). For this to be well-defined we need compatibility of all collars involved.
The new manifold has four distinct boundaries which we denote by
$\Sigma'_1$, $\Sigma'_2$, $C\big(S_-\big)= \big[-\epsilon-\tfrac{1}{2}\, \epsilon_1, \epsilon+\tfrac{1}{2}\, \epsilon_1\big] \times S_- $ and $C\big(S_+\big)=  \big[-\epsilon-\tfrac{1}{2}\, \epsilon_1, \epsilon+\tfrac{1}{2}\, \epsilon_1\big] \times S_+ $. 
\begin{figure}
\footnotesize
\begin{center}
\begin{overpic}[width=6cm,
scale=1]{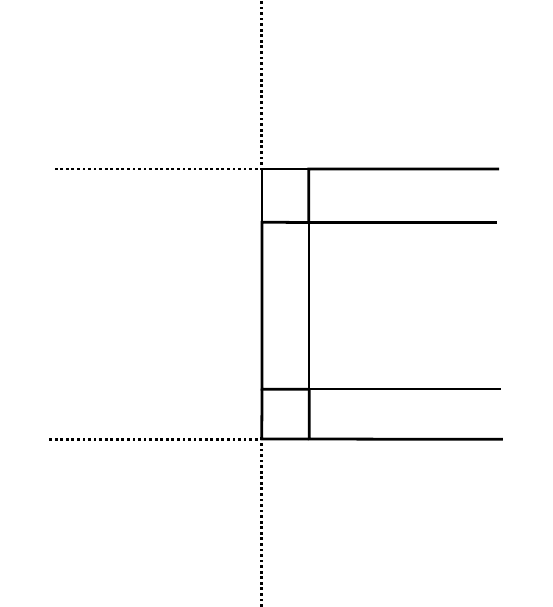}
\put(65,50){$M$}
\put(75,23){$\nwarrow$}
\put(80,20){$\Sigma_1$}
\put(75,74){$\swarrow$}
\put(80,77){$\Sigma_2$}
\put(-5,10){$ (-\infty,0]^2 \times S_-$}
\put(-7,50){$ (-\infty,0]^2 \times C(S_-)$}
\put(-15,87){$(-\infty,0]\times [0,\infty) \times S_- $}
\put(55,10){$ (-\infty,0] \times \Sigma'_1$}
\put(55,87){$[0, \infty) \times \Sigma'_2 $}
\end{overpic}
\end{center}
\caption{\small Illustration of the construction of $\hat{M}'$ near $\Sigma_-$.}
\label{Fig: Extended Indextheorem}
\end{figure} 
We can now attach cylindrical ends to $M'$. For this, we first define
\begin{align}
\begin{split} 
\hat{M}'{}^{\circ}= M'\ \sqcup_{\partial M'}\ & \big(  (-\infty, 0] \times  \Sigma'_1  \sqcup  [0, \infty ) \times  \Sigma'_2 \\
& \qquad \sqcup (-\infty,0] \times C(S_-)  \sqcup  [0,\infty) \times C(S_+) \big) \ ,
\end{split} 
\end{align} 
where we use the collars to glue the manifolds and extend all fields as products. Then $\hat{M}'{}^{\circ}$ is a non-compact manifold with corners. Further gluing (see Figure~\ref{Fig: Extended Indextheorem}) produces 
\begin{align}
\begin{split}
\hat{M}' = \hat{M}'{}^{\circ}\ &\sqcup_{\partial \hat{M}'{}^{\circ}}\ \big( (-\infty,0]^2 \times S_-  \sqcup (-\infty,0]\times [0,\infty)  \times S_-  \\ & \qquad \qquad \qquad \sqcup   [0 , \infty)^2 \times S_+ \sqcup (-\infty, 0] \times [0, \infty) \times  S_+  \big)
\end{split}  
\end{align}
with all structures extended as products. As in the case of manifolds with boundaries, the Dirac operator $\slashed{D }_{\hat{M}'}$ is not Fredholm in general, and one can prove analogously that $\slashed{D }_{\hat{M}'}$ is Fredholm if and only if the induced Dirac operators on the corners and boundaries are invertible~\cite{LoyaMelrose}. 

When the kernel of the corner Dirac operator is non-trivial, we have
to add again a mass perturbation~\cite{Loyaindex}. The induced twisted spinor bundle over $Y= S_+  \sqcup -S_- $ decomposes into spinors of positive and negative chirality. We pick a unitary self-adjoint isomorphism $T_i\colon \ker(\slashed{D }_{Y_i}) \longrightarrow \ker(\slashed{D }_{Y_i})$, for every connected component $Y_i$ of the corner $Y$, which is odd with respect to the $\Z_2$-grading of the spinor bundle; this is possible since the index of $\slashed{D}_Y$ is $0$ by assumption. We define
\begin{align}
 T_\pm= \bigoplus_{i=1}^n\, T_{\pm,i} \qquad \text{ and } \qquad T=T_-\oplus T_+ \ ,
\end{align} 
where $T_{\pm,i}:\ker\big(\slashed{D }_{S_{\pm,i}}\big)
\longrightarrow \ker\big(\slashed{D }_{S_{\pm ,i}}\big)$. Now the
operator $\slashed{D }_Y - T$ is invertible on $Y$. This suggests extending
$T$ to an operator $\hat{T}$ on $\hat{M}'$ such that the
massive Dirac operator $\slashed{D }_{\hat{M}'{}}-\hat{T}$ is Fredholm on weighted Sobolev spaces. A concrete construction of $\hat{T}$ can be found in \cite[Section 2.3]{LoyaMelrose}\footnote{$\hat{T}$ corresponds to $-S$ constructed in \cite[Section 2.3]{LoyaMelrose}, where we choose the same operators for the two remaining corners. Furthermore, the construction depends 
on an ordering of
the faces, which we chose such that 0-boundary is smaller than the 1-boundary.}, from which it is clear that $\hat{T}$ is independent of the length $\epsilon$ of the attached mapping cylinders and boxes up to a choice which is involved in the
construction and does not change the index theory. When we choose for every boundary component a small mass $\alpha_i$ as in Section~\ref{Sec: APS}, then 
\begin{align} 
\slashed{D }_{\hat{M}'{}}^+ -\hat{T}^+ \colon \e^{\alpha\cdot s}H^1\big(\hat S^{G+}_{\hat M'{}}\big)\longrightarrow \e^{\alpha\cdot s}L^2\big(\hat S^{G-}_{\hat M'{}}\big)
\end{align}  
is a Fredholm operator on weighted Sobolev spaces~\cite[Theorem
2.6]{LoyaMelrose}.
We restrict ourselves to a description of the corresponding index theorem on manifolds which are of the form $M'$ for a regular 2-morphism $M$ in $\ECobF$; the more general version can be found in~\cite[Theorem~6.13]{LoyaMelrose}.  

To define the $\eta$-invariant on a manifold $\Sigma$ with boundary
we proceed as in Section~\ref{Sec: APS} and define
$\hat{\Sigma}$ by attaching cylindrical ends to $\Sigma$. In
general, the Dirac operator $\slashed{D}_{\hat{\Sigma}}$ has a
continuous spectrum, so we have to use the expression \eqref{Equation: Reformulation eta invariant} to define the $\eta$-invariant as an integral
\begin{align}
{}^{\mathrm{b}}\eta \big(\slashed{D }_{\hat{\Sigma}}\big)= \frac{1}{\sqrt{\pi}} \, \int_0^\infty \, t^{-{1}/{2}} \ {}^{\mathrm{b}}\mathrm{Tr}\Big( \slashed{D} _{\hat{\Sigma}}\, \e^{-t\,\slashed{D }_{\hat{\Sigma}}^2} \Big) \ \diff t \ , 
\end{align}
where we have to replace the usual trace by the b-geometric trace (see
Section \ref{Sec: b-geometry}) because its argument is not a
trace-class operator on $\hat{\Sigma}$ in general. There are other
ways of defining $\eta $-invariants for manifolds with boundaries
using appropriate boundary conditions \cite{DaiFreed,
  LeschWojciechowski, MuellerEtaInvariant}. The
$^{\mathrm{b}}\eta$-invariant agrees with the canonical boundary
conditions on spinors induced by scattering Lagrangian subspaces, which we describe in Remark~\ref{Remark: Relation eta Invariants}.
 
There is a further contribution to the index theorem coming from the
corners. We define for $\Sigma'_i$, $i=1,2$ the scattering
Lagrangian subspace
\begin{align}
\begin{split} 
\Lambda_{C_i} = \Big\lbrace \, \lim_{s_- \rightarrow -\infty}& \, \Psi(y_-,s_-)\oplus  \lim_{s_+ \rightarrow \infty}\, \Psi(y_+,s_+) \ \Big| \\ 
&\ \Psi\in C^\infty\big(\hat{S}^G_{\hat{\Sigma}'_i}\big) \cap \ker\big(\slashed{D } _{\hat{\Sigma}'_i}\big) \ \text{bounded} \, \Big\rbrace \subseteq \ker\big(\slashed{D }_Y\big) 
\end{split} 
\end{align}
where $s_-\in(-\infty,0]$, $s_+\in[0,\infty)$ and $y_\pm\in S_\pm$.
The set $\Lambda_{C_i}\subset \ker \big(\slashed{D }_Y\big)$ is a Lagrangian subspace of $\ker \big(\slashed{D }_{\partial \Sigma_i}\big)$ with respect to the symplectic form
$\omega (\, \cdot\, , \, \cdot\, )= (\iu \Gamma \, \cdot\, ,\, \cdot\, )_{L^2}$~\cite{MuellerEtaInvariant}, where $\Gamma$ is the chirality operator
on the spinor bundle over the corners. 
We define an odd unitary self-adjoint isomorphism $C_i$ of $\ker
\big(\slashed{D }_{Y}\big)$, called the scattering matrix of
$\slashed{D } _{\hat{\Sigma}'_i}$, by setting $C_i=\id$ on
$\Lambda_{C_i}$ and $C_i=-\id$ on $\Lambda_{C_i}^\perp$. We denote by
$\Lambda_{T}\subset \ker \big(\slashed{D }_Y\big)$ the $+1$-eigenspace
of $T$. Following~\cite{LeschWojciechowski,Bunke}, we introduce the
`exterior angle' between Lagrangian subspaces by the spectral formula
\begin{align}
\label{EQ: Definition of m( , )}
\mu(\Lambda_{T}, \Lambda_{C_i})=-\frac{1}{\pi} \ \sum_{\substack{\e^{\ii\theta}\in \mathrm{spec}(-T^-\,C_i^+)\\ -\pi<\theta <\pi}} \, \theta \ ,
\end{align} 
where the grading is with respect to the $\Z_2$-grading of the twisted spinor bundle over the corners. 

With this notation, we can now formulate the index theorem for
manifolds of the form $M'$ as:
\begin{theorem}
\label{Theorem extended index theorem}
Let $M$ be a regular 2-morphism from $\Sigma_1\colon S_- \longrightarrow S_+$ to $\Sigma_2\colon S_-\longrightarrow S_+$ in $\ECobF$. Then
\begin{align}
\label{Index theorem corners}
\begin{split}
\mathrm{ind} \big(\slashed{D }_{\hat{M}'}^+ -\hat{T}^+
\big) = &\int_{M'} \, K_{\rm AS} -\frac{1}{2}\, \Big( -\,
^{\mathrm{b}}\eta \big(\slashed{D}_{\hat{\Sigma}'_1}
\big)+{}^{\mathrm{b}}\eta \big(\slashed{D }_{\hat{\Sigma}'_2}
\big) \\ & +\, \dim \ker \big(
\slashed{D}_{\hat{\Sigma}'_1} \big) -\dim \ker
\big(\slashed{D}_{\hat{\Sigma}'_2} \big) \\ & +\,
\dim(\Lambda_{T} \cap \Lambda_{C_1}) -\dim(\Lambda_T \cap
\Lambda_{C_2} ) + \mu(\Lambda_{T}, \Lambda_{C_1}) + \mu(\Lambda_{ T},
\Lambda_{C_2}) \, \Big)  .
\end{split}
\end{align}
\end{theorem}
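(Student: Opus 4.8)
The plan is to obtain Theorem~\ref{Theorem extended index theorem} as a specialization of the general Loya--Melrose index theorem \cite[Theorem 6.13]{LoyaMelrose} to the exact b-manifolds $\hat{M}'$ arising from regular 2-morphisms, keeping careful track of the orientation conventions dictated by the cobordism structure. First I would set up $\hat{M}'$ as an exact b-manifold with corners of codimension~$2$. Using the coordinate change $x_i=\e^{t_i}$ of Section~\ref{Sec: b-geometry}, the cylindrical ends attached to $M'$ in the incoming ($s\to-\infty$), outgoing ($s\to+\infty$) and corner directions turn the product metric into an exact b-metric on a compact manifold whose interior is $\hat{M}'$. Its four codimension-$1$ faces are $\hat{\Sigma}'_1$, $\hat{\Sigma}'_2$, $\hat{C}(S_-)$ and $\hat{C}(S_+)$, and its corners are copies of $Y=S_+\sqcup -S_-$. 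Because all $\F$-fields are of product form on the collars, the twisted spinor bundle and its connection extend as products, so $\slashed{D}_{\hat{M}'}$ is a b-Dirac operator in the sense required by \cite{LoyaMelrose}.

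Second I would record the Fredholm property. By construction the $T_{\pm,i}$ are odd, unitary and self-adjoint, so $\slashed{D}_Y-T$ is invertible on the corners; the extension $\hat{T}$ of \cite[Section 2.3]{LoyaMelrose} then makes $\slashed{D}^+_{\hat{M}'}-\hat{T}^+$ Fredholm on the weighted Sobolev spaces by \cite[Theorem 2.6]{LoyaMelrose}, which is precisely the hypothesis under which Theorem~6.13 applies.

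Third I would apply \cite[Theorem 6.13]{LoyaMelrose} and simplify each term. The general theorem writes the index as a b-integral of $K_{\rm AS}$, plus face contributions of the shape $\pm\tfrac12\big({}^{\mathrm{b}}\eta+\dim\ker\big)$, plus corner contributions built from an exterior angle and the intersection dimension of the relevant Lagrangian subspaces. \emph{(a)} Since the metric and all geometric data are products on the cylindrical ends, $K_{\rm AS}$ vanishes there in top degree, so the b-integral collapses to the ordinary integral $\int_{M'}K_{\rm AS}$ with no surviving regularization defect. \emph{(b)} The relative signs of the $\hat{\Sigma}'_1$ and $\hat{\Sigma}'_2$ terms come from attaching an inward-pointing cylinder to the incoming boundary and an outward-pointing one to the outgoing boundary, which flips the sign of the mass parameter $\alpha$ exactly as in the remark following \eqref{APS Theorem}, yielding $-{}^{\mathrm{b}}\eta(\slashed{D}_{\hat{\Sigma}'_1})+{}^{\mathrm{b}}\eta(\slashed{D}_{\hat{\Sigma}'_2})$ together with $+\dim\ker(\slashed{D}_{\hat{\Sigma}'_1})-\dim\ker(\slashed{D}_{\hat{\Sigma}'_2})$. \emph{(c)} The faces $\hat{C}(S_\pm)$ are product cylinders over $S_\pm$, whose induced reduced operators have symmetric spectrum, so their $\eta$-invariants vanish and their kernel contributions are reorganized into the corner data; the corner terms then match $\mu(\Lambda_T,\Lambda_{C_i})+\dim(\Lambda_T\cap\Lambda_{C_i})$ once the scattering matrix $C_i$ of $\slashed{D}_{\hat{\Sigma}'_i}$ is identified with the Loya--Melrose scattering data and $\Lambda_T$ with the $+1$-eigenspace of $T$, again with the relative sign between $i=1,2$ fixed by the cylinder orientations.

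The main obstacle I anticipate is the bookkeeping of step three: verifying that the orientation conventions adopted here (inward cylinders on incoming faces, outward on outgoing, and the analogous choice at the corners) reproduce exactly the signs and the cancellation of the $\hat{C}(S_\pm)$ contributions claimed in \eqref{Index theorem corners}, and that our definitions of $\mu$ in \eqref{EQ: Definition of m( , )} and of the scattering Lagrangians $\Lambda_{C_i}$ coincide with those of \cite{LoyaMelrose,LeschWojciechowski}. This is purely a matter of matching conventions, but it is delicate, since it is exactly these sign choices that later ensure the boundary and corner contributions cancel under horizontal and vertical composition of 2-morphisms, so they must be pinned down unambiguously.
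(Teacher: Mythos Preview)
Your overall strategy matches the paper's: apply \cite[Theorem~6.13]{LoyaMelrose} to $\hat{M}'$ and then argue that the contributions from the cylinder faces $\hat{C}(S_\pm)$ drop out. Steps one and two are fine, and so is most of (a) and (b). The gap is in (c), and it is more than bookkeeping.

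First, when you write out the Loya--Melrose formula for $\hat{M}'$ in full, the cylinder faces $\hat{C}(S_\pm)$ come with their \emph{own} corner terms, and the relevant boundary endomorphism there is not $T_\pm$ but the twisted operator $\ii\Gamma\,T_\pm$ (this comes from the face ordering in the construction of $\hat T$). You never mention this twist, but it is precisely what makes the intersection term vanish: the scattering Lagrangian $\Lambda_{C_\pm}$ for the product cylinder is the diagonal $\Delta(\ker\slashed{D}_{S_\pm})\subset\ker\slashed{D}_{S_\pm}\oplus\ker\slashed{D}_{-S_\pm}$, and since $\Gamma$ has opposite sign on the two boundary components one gets $\dim(\Lambda_{\ii\Gamma T_\pm}\cap\Lambda_{C_\pm})=0$. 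Second, the cylinder contributions are not ``reorganized into the corner data''; they vanish outright. The paper establishes $\dim\ker\slashed{D}_{\hat{C}(S_\pm)}=0$ by a direct separation-of-variables argument on $(-\infty,\infty)\times S_\pm$, showing there are no $L^2$ harmonic spinors. Third, your ``symmetric spectrum'' claim for the vanishing of ${}^{\mathrm b}\eta(\slashed{D}_{\hat{C}(S_\pm)})$ is not how the argument goes and is not obviously valid for the b-trace on a non-compact cylinder. Instead one uses the identification \eqref{EQ: b-eta= eta with boundary condition} (Remark~\ref{Remark: Relation eta Invariants}) to rewrite ${}^{\mathrm b}\eta(\slashed{D}_{\hat{C}(S_\pm)})+\mu(\Lambda_{\ii\Gamma T_\pm},\Lambda_{C_\pm})$ as the $\eta$-invariant of a finite cylinder with \emph{identical} boundary conditions at the two ends, and then invokes \cite[Theorem~2.1]{LeschWojciechowski} to conclude it is zero. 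These three explicit computations are the actual content of the proof; without them your step (c) is a plausibility argument rather than a derivation of \eqref{Index theorem corners}.
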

\begin{remark}
The extra corner contributions in the last line of \eqref{Index theorem corners} to the usual
(b-geometric) Atiyah-Patodi-Singer formula \eqref{APS Theorem} can be
understood as follows. Let $\Sigma$ be a regular
1-morphism in the bicategory $\ECobF$. Then for every $T\in
\mathrm{End}_\C\big(\ker (\slashed{D }_{\partial \Sigma}) \big)$ as
above we can relate the spectral data of the massive Dirac operator on
$\hat \Sigma$ to their massless counterparts as
\begin{align}
{}^{\rm b}\eta\big(\slashed{D}_{\hat \Sigma} - \hat T \big)&= {}^{\rm b}\eta\big(\slashed{D}_{\hat \Sigma} \big)+\mu(\Lambda_{T},
  \Lambda_{C}) \ , \\[4pt]
\dim \ker \big(\slashed{D}_{\hat \Sigma} - \hat T \big) &= \dim
                                                             \ker
                                                             \big(\slashed{D}_{\hat
                                                             \Sigma}\big)+\dim(\Lambda_T\cap\Lambda_C)
                                                             \ ,
\end{align}
where $\Lambda_C$ is the scattering Lagrangian subspace for
$\Sigma$. 
\end{remark}
\begin{remark}\label{Remark: Relation eta Invariants}
We describe the relation between $^{\rm b}\eta $-invariants and
$\eta $-invariants with boundary conditions~\cite{RelationEtaInvariants}. We denote by $\Pi_+$ the
projection onto the space spanned by the positive eigenspinors of
$\slashed{D }_{\Sigma}$, and by $\Pi_T$ the projection onto the positive eigenspace $\Lambda_T$. This
allows us to define a Dirac operator $\D_T$, which coincides with $\slashed{D
}_{\partial \Sigma}$, on the domain
\begin{align}
\big\{
\Psi\in H^1\big(\hat S_{\hat \Sigma }\big) \ \big| \ (\Pi_+ + \Pi_T)\Psi\big|_{\partial \Sigma}=0
\big\} \ .
\end{align}
The operator $\D_T$ is self-adjoint and elliptic for all $T$. It is shown in \cite[Theorem~1.2]{RelationEtaInvariants} that
\begin{align}\label{EQ: b-eta= eta with boundary condition}
\eta(\D_T)= {}^{\rm b} \eta \big(\slashed{D}_{\hat \Sigma} \big)+\mu(\Lambda_{T},
  \Lambda_{C}) \ ,
\end{align}  
so that we can combine the ${}^{\rm
  b}\eta$-invariant and the exterior angle $\mu$ in \eqref{Index
  theorem corners} into an $\eta$-invariant for a Dirac operator with suitable boundary conditions induced by the Lagrangian subspace $\Lambda_T$.    
\end{remark}

\begin{proof}[Proof of Theorem \ref{Theorem extended index theorem}] 
From the general index theorem for manifolds with corners \cite[Theorem 6.13]{LoyaMelrose} we get 
\begin{align}
\begin{split} 
&\text{ind} \big(\slashed{D }_{\hat{M}'}^+ -\hat{T}^+
  \big) \\ &  = \int_{M'}\, K_{\rm AS} -\frac{1}{2}\, \Big( \, {}^{\rm
          b}\eta \big(\slashed{D}_{-\hat{\Sigma}'_1} \big)+{}^{\rm
          b}\eta \big(\slashed{D }_{\hat{\Sigma}'_2} \big) +
          {}^{\rm b}\eta \big(\slashed{D}_{\hat C(S_-)} \big)+
          {}^{\rm b}\eta \big(\slashed{D}_{\hat C(S_+)} \big)\\ 
&  + \dim \ker \big(\slashed{D}_{\hat{\Sigma}'_1}
  \big)+\dim(\Lambda_{T} \cap \Lambda_{C_1})-\dim \ker
  \big(\slashed{D}_{\hat{\Sigma}'_2} \big)-\dim(\Lambda_T \cap
  \Lambda_{C_2}) \\
&  +\dim \ker \big(\slashed{D}_{\hat{C}(S_-)} \big)
  +\dim(\Lambda_{\ii\Gamma\, T_-} \cap \Lambda_{C_-}) -\dim \ker
  \big(\slashed{D}_{\hat{C}(S_+)} \big) -\dim(\Lambda_{\ii\Gamma\,
  T_+} \cap \Lambda_{C_+}) \\
&  +\mu(\Lambda_{T}, \Lambda_{C_1})+ \mu(\Lambda_{T}, \Lambda_{C_2})
  +\mu(\Lambda_{\ii\Gamma\, T_+}, \Lambda_{C_+}) +
  \mu(\Lambda_{\iu \Gamma\, T_-}, \Lambda_{C_-}) \, \Big)
  \end{split}
\end{align}
where $\Lambda_{C_\pm}$ are the scattering Lagrangian subspaces for
$C(S_\pm)$, respectively. We can calculate the contributions
from the boundaries $C(S_\pm)$ explicitly and show that they all
vanish:
Attaching infinite cylindrical ends to $C(S_\pm)$ leads to the manifolds $ (-\infty,\infty) \times S_\pm $.
The Dirac operator on the manifold $(-\infty, \infty) \times S_\pm $ of odd dimension $n-1$ is given by \eqref{Equation: Dirac operator on a product}:
\begin{align}
\label{EQ: Proof index theorem}
\slashed{D}_{ {S_\pm} \times (-\infty, \infty)  }= \sigma_t\,
  \big(\slashed{D}_{S_\pm} + \partial_t \big) \ , 
\end{align}
where $t\in(-\infty,\infty)$. 
We are interested in the dimension of the space of harmonic
spinors $\Psi(y_\pm,t)$. By elliptic regularity there exists a basis of smooth
sections. Multiplying \eqref{EQ: Proof index theorem} with
$\sigma_t^{-1}$, we get
\begin{align}
\big(\slashed{D}_{S_\pm}+ \partial_t \big)\Psi(y_\pm,t)=0 \ .
\end{align} 
Using separation of variables $\Psi(y_\pm,t)=\psi(y_\pm)\, \alpha(t)$, this
equation reduces to a pair of equations
\begin{align}
\label{EQ: Dirac operator on cylinder Proof index theorem}
\slashed{D}_{S_\pm} \, \psi(y_\pm)= \lambda \, \psi(y_\pm)  \qquad
  \mbox{and} \qquad
\frac{\diff \alpha(t)}{\diff t}=-\lambda \, \alpha(t) \ , 
\end{align}
for an arbitrary constant $\lambda $ which must be real since $\slashed{D}_{S_\pm} $ is an elliptic operator. 
The second equation has solution (up to a constant) $\alpha(t)= \e^{-\lambda\, t}$, and
we finally see that there are no non-zero square-integrable spinors
$\Psi(y_\pm,t)$ with eigenvalue $0$.
Hence, the contributions from the terms $\dim \ker \big(\slashed{D}_{\hat{C}(S_\pm)}\big)$ are $0$.

A solution of \eqref{EQ: Dirac operator on cylinder Proof index
  theorem} is bounded if and only if $\lambda=0$, and so the
scattering Lagrangian subspace takes the form 
\begin{align}
\Lambda_{C_\pm}= \Delta\big(\ker (\slashed{D}_{S_\pm}) \big)=
  \big\{ \psi \oplus \psi \ \big| \ \psi \in \ker
  \big(\slashed{D}_{S_\pm} \big) \big\} \subset \ker
  \big(\slashed{D}_{S_\pm} \big) \oplus \ker
  \big(\slashed{D}_{-{S_\pm}} \big) \ .
\end{align}
This implies that 
\begin{align}
\dim(\Lambda_{\ii\Gamma \, T_\pm} \cap \Lambda_{C_\pm})= 0 \ ,
\end{align}
since the chirality operator $\Gamma$ on the outgoing and ingoing
boundaries differs by a sign while $T_\pm$ is the same over both boundaries.

Finally, by Remark~\ref{Remark: Relation eta Invariants}, ${}^{\rm
  b}\eta \big(\slashed{D}_{\hat{C}(S_\pm)}\big)+
\mu(\Lambda_{\ii\Gamma \, T_\pm}, \Lambda_{C_{\pm}})$ is the
$\eta$-invariant on a cylinder with identical boundary conditions at
both ends, which vanishes by~\cite[Theorem~2.1]{LeschWojciechowski}. 
\end{proof}

For later use, we derive here a formula for the index of a 2-morphism
under cutting. For this, we first have to study the behaviour of the
various quantities in the index formula \eqref{Index theorem corners} under orientation-reversal.
\begin{lemma}\label{Lemma: Properties terms in the index theorem}
Let $\Sigma$ be a regular 1-morphism in $\ECobF$ with fixed boundary condition
$T \in \mathrm{End}_\C \big(\ker(\slashed{D}_{\partial \Sigma}) \big)$ as
above. If we reverse the orientation of $\Sigma$, then $T$ still
defines a suitable boundary condition of $\slashed{D}_{\partial (-\Sigma)}$ and 
\begin{align}
\dim \ker\big(\slashed{D}_{\hat\Sigma} \big) &= \dim
                                                     \ker\big(\slashed{D}_{-\hat
                                                    \Sigma} \big)
                                                     \ \qquad
                                                     \mbox{and} \qquad
{}^{\rm b}\eta \big(\slashed{D}_{\hat{\Sigma}} \big)=-{}^{\rm
                                                     b}\eta
                                                     \big(\slashed{D}_{-\hat{\Sigma}}
                                                     \big) \ , \\[4pt] 
\dim(\Lambda_T \cap \Lambda_C)&=\dim(\Lambda_T \cap \Lambda_{-C})
                                \qquad \mbox{and} \qquad
\mu(\Lambda_{T}, \Lambda_{C})=-\mu(\Lambda_{T}, \Lambda_{-C}) \ ,
\end{align}
where $\Lambda_{-C}$ is the scattering Lagrangian subspace for $-\Sigma$ and in the last equation $\mu$ is computed on 
$Y$ and $-Y$, respectively. 
\end{lemma}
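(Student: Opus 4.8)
The plan is to reduce everything to the single observation that reversing the orientation of $\Sigma$ flips the sign of the chirality operator, $\Gamma\mapsto-\Gamma$, since $\Gamma=\iu^m\,e_1\cdots e_n$ is proportional to the volume element. Near the even-dimensional corner $Y=\partial\Sigma$ all structures are of product form, so $\slashed{D}_{\partial\Sigma}$ and $\slashed{D}_{\partial(-\Sigma)}$ act on the same Hilbert space and are intertwined by conjugation with $\Gamma$, which fixes $\ker(\slashed{D}_Y)$ setwise. The claim that $T$ remains a suitable boundary condition is then immediate: unitarity and self-adjointness are properties of $T$ as an endomorphism of the fixed space $\ker(\slashed{D}_Y)$ and are orientation-independent, while oddness with respect to the grading is unchanged because the grading operator merely passes from $\Gamma$ to $-\Gamma$, and $\Gamma T=-T\Gamma$ is the same relation as $(-\Gamma)T=-T(-\Gamma)$.

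For the spectral quantities on $\hat\Sigma$ I would note that conjugation by $\Gamma$ identifies $\slashed{D}_{-\hat\Sigma}$ with $\pm\slashed{D}_{\hat\Sigma}$, hence preserves the kernel, giving $\dim\ker(\slashed{D}_{\hat\Sigma})=\dim\ker(\slashed{D}_{-\hat\Sigma})$, and reverses the sign of every nonzero eigenvalue. Substituting this into the heat-kernel formula \eqref{Equation: Reformulation eta invariant} for the $^{\mathrm{b}}\eta$-invariant, the integrand ${}^{\mathrm{b}}\mathrm{Tr}\big(\slashed{D}_{\hat\Sigma}\,\e^{-t\,\slashed{D}_{\hat\Sigma}^2}\big)$ changes sign while the Gaussian factor is unchanged, so ${}^{\mathrm{b}}\eta\big(\slashed{D}_{\hat\Sigma}\big)=-{}^{\mathrm{b}}\eta\big(\slashed{D}_{-\hat\Sigma}\big)$. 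This is the b-geometric refinement of the identity $\eta(\slashed{D}_\Sigma)=-\eta(\slashed{D}_{-\Sigma})$ already used in the proof of Theorem~\ref{A is field theory}.

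The substance is in the corner terms. Writing $T$ and the scattering matrix $C$ in blocks with respect to the $\Gamma$-grading, the operator controlling \eqref{EQ: Definition of m( , )} is $A=-T^-C^+$ on the positive-chirality part. Reversal of the corner orientation swaps the two chirality summands, so the corresponding operator for $-\Sigma$ is $B=-T^+C^-$. Using $T^-=(T^+)^\ast$, $C^-=(C^+)^\ast$ and unitarity of $T^+$, a direct computation gives $B=T^+A^\ast(T^+)^{-1}$, so $B$ is conjugate to the adjoint of the unitary $A$; hence $\mathrm{spec}(B)=\overline{\mathrm{spec}(A)}$. Each eigenphase therefore obeys $\theta\mapsto-\theta$, yielding $\mu(\Lambda_{T},\Lambda_{C})=-\mu(\Lambda_{T},\Lambda_{-C})$, while the multiplicity of the real eigenvalue $+1$, which computes $\dim(\Lambda_T\cap\Lambda_C)$, is preserved under complex conjugation, giving $\dim(\Lambda_T\cap\Lambda_C)=\dim(\Lambda_T\cap\Lambda_{-C})$.

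The main obstacle I anticipate is exactly this last block computation: one must pin down how the scattering Lagrangian $\Lambda_C$ and the grading decomposition transform under reversal of the corner orientation, and verify that the resulting operator is genuinely conjugate to $A^\ast$ rather than merely similar to $A$, so that its spectrum is the complex conjugate rather than an arbitrary unitary image. Once $B=T^+A^\ast(T^+)^{-1}$ is established, both corner identities follow simultaneously from the single spectral fact $\mathrm{spec}(B)=\overline{\mathrm{spec}(A)}$, and the lemma is complete.
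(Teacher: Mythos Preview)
Your argument for the first two identities agrees with the paper's: both rest on the fact that $\slashed{D}_{\hat\Sigma}=-\slashed{D}_{-\hat\Sigma}$ as operators on the same underlying bundle, so the kernel is preserved and the ${}^{\mathrm b}\eta$-invariant flips sign. (A small caveat: $\hat\Sigma$ is odd-dimensional, so there is no chirality operator on it to conjugate by; the sign flip of the Dirac operator under orientation reversal is the correct statement, and your conclusion stands.)

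For the corner terms you take a genuinely different route. The paper observes directly that the Lagrangian subspaces $\Lambda_T$ and $\Lambda_C$ are orientation-independent as subspaces of $\ker(\slashed D_Y)$, which immediately gives $\dim(\Lambda_T\cap\Lambda_C)=\dim(\Lambda_T\cap\Lambda_{-C})$; for the $\mu$-identity it invokes the Lesch--Wojciechowski interpretation of $\mu(\Lambda_T,\Lambda_C)$ as the $\eta$-invariant of a cylinder with these boundary conditions, and then uses once more that $\eta$ changes sign under orientation reversal. Your approach instead tracks the block decomposition under the swap $\Gamma\mapsto-\Gamma$ on the even-dimensional corner, shows that the unitary $A=-T^-C^+$ is replaced by $B=-T^+C^-=T^+A^\ast(T^+)^{-1}$, and reads off both identities from $\mathrm{spec}(B)=\overline{\mathrm{spec}(A)}$. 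This is correct and self-contained, avoiding the cylinder interpretation; the paper's argument is shorter and more geometric but imports an external result. One minor correction: the map $v\mapsto v^+$ actually identifies $\Lambda_T\cap\Lambda_C$ with the $(-1)$-eigenspace of $A$, not the $(+1)$-eigenspace; since $-1$ is equally real this does not affect your conclusion.
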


\begin{proof}
There is an equality $\slashed{D}_{\hat \Sigma}=-\slashed{D}_{-\hat \Sigma}$
of operators acting on sections of the underlying twisted
spinor bundle $\hat S_{\hat \Sigma}^G$, which implies the first two
equations. The Lagrangian subspaces $\Lambda_{C}$ and $\Lambda_{T}$
are independent of the orientation, which implies the third equation. 

We can interpret the exterior angle $\mu(\Lambda_{T}, \Lambda_{C})$ as
the $\eta$-invariant of a cylinder with boundary conditions induced by
$\Lambda_{T}$ and $\Lambda_{C}$~\cite{LeschWojciechowski}. Reversing
the orientation of this cylinder corresponds to $\mu(\Lambda_{T},
\Lambda_{-C})$. The last equation then follows from the fact that the
$\eta$-invariant changes sign under orientation-reversal.
\end{proof}

\begin{proposition}\label{Corollary: Index theorem for composition of regular 2-morphisms}
The index is additive under vertical composition of regular 2-morphisms in
$\ECobF$ if we choose identical boundary conditions on the corners. 
\end{proposition}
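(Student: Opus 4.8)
The plan is to prove additivity exactly as in the non-extended case (Theorem~\ref{A is field theory}): apply the Loya--Melrose index theorem \eqref{Index theorem corners} to the three manifolds $\hat M_1'$, $\hat M_2'$ and $\widehat{(M_2\circ M_1)'}$ built from $M_1\colon\Sigma_1\Rightarrow\Sigma_2$, $M_2\colon\Sigma_2\Rightarrow\Sigma_3$ and their vertical composite, and then compare the three right-hand sides. Since each right-hand side computes the corresponding index, it suffices to show that the right-hand sides are additive: the contributions attached to the outer boundaries $\Sigma_1,\Sigma_3$ and to the corners $S_\pm$ must match those of the composite, while the contributions attached to the gluing hypersurface $\Sigma_2$ must cancel between $\hat M_1'$ and $\hat M_2'$.

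First I would fix the geometry of the cut. By the description of vertical composition in item~(Vb) and the fact that all fields are of product form on the collars, cutting $\widehat{(M_2\circ M_1)'}$ along the infinite cylinder over $\Sigma_2$ reproduces $\hat M_1'$ and $\hat M_2'$ up to $\mathscr{F}$-diffeomorphism, just as in the cutting step of Theorem~\ref{A is field theory} (recall that the index is independent of the length of the attached mapping cylinders and boxes). The hypothesis of identical boundary conditions on the corners guarantees that the same endomorphism $T$, hence the same Lagrangian $\Lambda_T$ and the same extension $\hat T$, is used on all three manifolds, and that the corners $S_\pm$ of $M_1$, $M_2$ and $M_2\circ M_1$ coincide. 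The bulk term is then additive, $\int_{(M_2\circ M_1)'}K_{\rm AS}=\int_{M_1'}K_{\rm AS}+\int_{M_2'}K_{\rm AS}$, because $K_{\rm AS}$ is a local density and the two pieces are glued along a region where everything is product; the corner terms $C(S_\pm)$ vanish identically in each application of \eqref{Index theorem corners}, exactly as shown in the proof of Theorem~\ref{Theorem extended index theorem}.

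For the decisive cancellation I would first regroup the spectral data of $\Sigma_2$. Using Remark~\ref{Remark: Relation eta Invariants} (identity~\eqref{EQ: b-eta= eta with boundary condition}) together with the kernel identity $\dim\ker(\slashed{D}_{\hat\Sigma}-\hat T)=\dim\ker(\slashed{D}_{\hat\Sigma})+\dim(\Lambda_T\cap\Lambda_C)$, the pair $\big({}^{\mathrm b}\eta(\slashed{D}_{\hat\Sigma_2'})+\mu(\Lambda_T,\Lambda_{C_2}),\ \dim\ker(\slashed{D}_{\hat\Sigma_2'})+\dim(\Lambda_T\cap\Lambda_{C_2})\big)$ assembles into the $\eta$-invariant $\eta(\slashed{D}_{T})$ and the kernel dimension of the single boundary-conditioned operator $\slashed{D}_{T}$ on $\Sigma_2$. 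In $\hat M_1'$ the surface $\Sigma_2$ is the outgoing boundary while in $\hat M_2'$ it is the incoming boundary, so it carries opposite orientations in the two pieces; by Lemma~\ref{Lemma: Properties terms in the index theorem} the combined $\eta$-invariant is anti-invariant and the combined kernel dimension is invariant under this orientation reversal. Matching these against the opposite incoming/outgoing signs in \eqref{Index theorem corners} shows that the full $\Sigma_2$-contribution to the right-hand side of $\mathrm{ind}(\slashed{D}^+_{\hat M_1'}-\hat T^+)$ cancels that of $\mathrm{ind}(\slashed{D}^+_{\hat M_2'}-\hat T^+)$, while the $\Sigma_1$- and $\Sigma_3$-contributions reproduce those of the composite. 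Summing the three applications of \eqref{Index theorem corners} then gives $\mathrm{ind}(\slashed{D}^+_{\widehat{(M_2\circ M_1)'}}-\hat T^+)=\mathrm{ind}(\slashed{D}^+_{\hat M_1'}-\hat T^+)+\mathrm{ind}(\slashed{D}^+_{\hat M_2'}-\hat T^+)$.

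The main obstacle I anticipate is precisely this sign-and-orientation bookkeeping for the corner contributions: treating ${}^{\mathrm b}\eta$ and the exterior angle $\mu$ separately leads to an apparent mismatch, since $\mu$ occurs with the \emph{same} sign for incoming and outgoing boundaries in \eqref{Index theorem corners} whereas ${}^{\mathrm b}\eta$ does not. The key to resolving this is to combine them beforehand, via Remark~\ref{Remark: Relation eta Invariants}, into the genuine $\eta$-invariant of $\slashed{D}_T$, which is honestly odd under orientation reversal; the same combination turns $\dim\ker$ and $\dim(\Lambda_T\cap\Lambda_C)$ into the manifestly orientation-invariant kernel dimension of $\slashed{D}_T$. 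A secondary, more routine, point to verify is the compatibility of the $(\,\cdot\,)'$ construction (attaching mapping cylinders and boxes) with cutting, so that the two pieces are literally $\hat M_1'$ and $\hat M_2'$; this is handled exactly as in the non-extended argument.
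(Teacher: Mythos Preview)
Your strategy is correct and is the same as the paper's: apply \eqref{Index theorem corners} to $M_1$, $M_2$ and $M_2\circ M_1$ and check that the right-hand sides add up. The difference lies in emphasis and in how the $\Sigma_2$-cancellation is argued. The paper dispatches the boundary cancellation in one line by directly invoking Lemma~\ref{Lemma: Properties terms in the index theorem}, without your intermediate repackaging of $({}^{\rm b}\eta,\mu)$ and $(\dim\ker,\dim(\Lambda_T\cap\Lambda_C))$ into $\eta(\slashed{D}_T)$ and $\dim\ker(\slashed{D}_T)$ via Remark~\ref{Remark: Relation eta Invariants}. Your repackaging is a clean way to sidestep the orientation bookkeeping you correctly flag for the $\mu$-term, but the paper's four separate identities in Lemma~\ref{Lemma: Properties terms in the index theorem} are designed to do the same job term by term.

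The emphasis is in fact reversed: you treat the bulk additivity $\int_{(M_2\circ M_1)'}K_{\rm AS}=\int_{M_1'}K_{\rm AS}+\int_{M_2'}K_{\rm AS}$ as routine and relegate the compatibility of the $(\,\cdot\,)'$ construction with cutting to a secondary remark, whereas the paper makes this the main point of the proof. It stresses that this equality is \emph{not} obvious, because vertical composition deletes half of the gluing collar, so $(M_2\circ M_1)'$ is not literally $M_1'\cup_{\Sigma_2'}M_2'$. The paper's resolution is to observe from \eqref{Index theorem corners} that $\int_{M'}K_{\rm AS}$ is independent of collar length and then cover $(M_2\circ M_1)'$ by $\tilde M_1'$ and $\tilde M_2'$, the manifolds $M_i'$ with $\tfrac{3}{4}$ of the gluing collar removed. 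Your argument would be complete once this step is made explicit.
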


\begin{proof}
The contributions from the gluing boundary cancel each other by Lemma \ref{Lemma: Properties terms in the index theorem}.  We still have to show that 
\begin{align}
\int_{(M_2\circ M_1)'}\, K_{\rm AS}=\int_{M_1'}\,
  K_{\rm AS}+\int_{M_2'}\, K_{\rm AS} \ .
\end{align}
This is not completely obvious since the vertical composition also
involves deleting half of the collars of the gluing boundary. However,
from
\eqref{Index theorem corners} and the construction of
$\hat{M}'$ it is clear that $\int_{M'}\, K_{\rm AS}$
is independent of the length of the collars. Using the description of
gluing in terms of mapping cylinders we can cover $(M_2\circ
M_1)'$ by $\tilde{M}_1'$ and $\tilde{M}_2'$,
where $\tilde{M}_i'$ is the manifold $M_i'$ with $\tfrac{3}{4}$ of the collar corresponding to the gluing boundary removed. 
\end{proof}

\section{The extended index field theory and Hamiltonian description of the parity anomaly}\label{Sec: Extended Indext FQFT}
We shall now proceed to extend the quantum field theory
$\Aa_{\rm parity}^\zeta$ to an anomaly quantum field theory
$\Aa_{\rm parity}^\zeta\colon \ECobF\longrightarrow\Tvs$ describing the parity anomaly in $n-1$ dimensions. In contrast to \cite{parity} we will use coends instead of limits 
for the construction of the field theory, 
because this makes the relation to the
theories constructed in Section~\ref{Sec: Parallel transport} clearer. 
This is nothing more than a matter of taste. It would have been equally possible
to rewrite Chapter~\ref{Chapter: t Hooft} in terms of limits. 
Some of the constructions
might seem odd due to the fact that there are no boundary conditions
to chose on $n-1$ dimensional manifolds.   
For the convenience of the reader we review the definition and some basic 
properties of coends following~\cite{MacLane,Coend}. 
\begin{definition}
Let $F\colon \cC^{\opp}\times \cC \longrightarrow\cD$ be a functor. A \emph{wedge} for 
$F$ consists of an object $d\in \cD$ together with a family of morphisms
$\{ \alpha_c \colon d \longrightarrow F(c,c) \}_{c\in \cC}$  such that for all morphisms 
$f\colon c \longrightarrow c'$ in $\cC$
\begin{equation}
\begin{tikzcd}[row sep=1.25cm, column sep=1.25cm]
d \ar[r,"\alpha_{c'}"] \ar[d,"\alpha_{c}",swap] & F(c',c') \ar[d, "{F(f,\id_c')}"] \\
F(c,c) \ar[r, "{F(\id_c,f)}",swap] & F(c,c')
\end{tikzcd}
\end{equation}
commutes.
A \emph{morphism between wedges} $f\colon d \longrightarrow d'$ consists 
of a morphism $d\longrightarrow d'$ in $\cD$ such that 
\begin{equation}
\begin{tikzcd}[row sep=1.25cm, column sep=1.25cm]
d \ar[rd]\ar[rrd, bend left=20, "\alpha_{c'}"] \ar[ddr, bend right=20, "\alpha_c",swap] & & \\
& d' \ar[r,"\alpha'_{c'}"] \ar[d,"\alpha'_{c}",swap] & F(c',c') \ar[d, "{F(f,\id_c')}"] \\
& F(c,c) \ar[r, "{F(\id_c,f)}",swap] & F(c,c')
\end{tikzcd}
\end{equation}
commutes. 
\end{definition}
There is a dual notion of a \emph{cowedge} consisting of an object $d$ together with
morphisms $\alpha_c\colon F(c,c)\longrightarrow d$ such that the obvious diagram 
commutes. Morphisms of cowedges are defined as morphisms $d\longrightarrow d'$
such that the obvious diagrams commute. 
(Co)Ends are universal (co)wedges.
\begin{definition}
Let $F\colon \cC^{\opp}\times \cC \longrightarrow\cD$ be a functor. An \emph{end} of
$F$ written as $\int_{c\in \cC} F(c,c)$ is a terminal object in the category of wedges of $F$. 

An \emph{coend} of $F$ written as $\int^{c\in \cC} F(c,c)$ is an initial object in the category of 
cowedges of $F$. 
\end{definition} 
\begin{remark}\label{Rem: Coends are functorial}
Let us spell out the universal property of the coend $\int^{c\in \cC}F(c,c)$. Being a cowedge
it comes with morphisms $F(c,c)\longrightarrow \int^{c\in \cC}F(c,c)$ for all $c \in \cC$ such 
that for every other cowedge $F(c,c)\longrightarrow d$ there exists a unique morphism 
$\int^{c\in \cC}F(c,c)\longrightarrow d$ making 
\begin{equation}
\begin{tikzcd}
F(c,c') \ar[r] \ar[d] & F(c',c' ) \ar[d] \ar[ddr, bend left=20]& \\ 
F(c,c) \ar[rrd, bend right=20] \ar[r] & \int^{c\in \cC} F(c,c) \ar[dr] & \\ 
 & & d
\end{tikzcd}
\end{equation}
commute. This shows that coends are unique up to unique isomorphism. For this reason
we will speak of \underline{the} coend sometimes. The universal property also ensures
that coends are functorial, i.e.\ (assuming that all coends in $\cD$ exist) there is a functor
\begin{align}
\int^{c \in \cC} \colon [\cC^{\opp}\times \cC,\cD] \longrightarrow \cD \  \ .
\end{align}
\end{remark}
\begin{example}\label{Exa: limits as ends}
\begin{itemize}
\item
Let $F \colon \cC \longrightarrow \cD$ be a functor between categories. 
$F$ induces a functor $\widehat{F} \colon \cC^{\opp}\times \cC\xrightarrow{\pr_\cC} \cC \overset{F}{\longrightarrow} \cD$. In Chapter~\ref{Chapter: t Hooft} we will drop 
the  $\widehat{\phantom{a} }$ . Spelling out the definitions shows that the end $ \int_{c\in \cC} \widehat{F}(c,c)$ agrees with the limit of $F$ and the coend $\int^{c \in \cC} \widehat{F}(c,c)$ with the colimit.   

\item Let $\cC$ be a category equivalent to the category with one object 
and one morphism and $F\colon \cC^{\opp} \times \cC \longrightarrow D$. The (co)end 
is given by the value of $F(c,c)$ at an arbitrary element $c\in \cC$ with structure maps induced 
from $F$.

\item Let $F\colon \Vscr \longrightarrow \Vscr '$ be a linear functor between 2-vector spaces.
The value of $F$ at an object $V\in \Vscr$ can be computed by the coend 
\begin{align}
F(V) \cong \int^{V'\in \Vscr} \Hom(V',V)\otimes F(V') \ \ .
\end{align} 
Furthermore, this is natural in $V$ inducing a natural isomorphism 
\begin{align}
F(\cdot) \cong \int^{V'\in \Vscr} \Hom(V',\cdot)\otimes F(V') \ \ .
\end{align} 
Statements of this type are called \emph{generalized Yoneda lemmas}. The special
case $F=\id_\Vscr$ is sometimes called the (enriched) coYoneda lemma. We refer to
\cite[Section 2.3]{FSS16} for a proof in the more general context of finite tensor categories. 
\end{itemize}
\end{example}
(Co)ends can be expressed as (co)limits ensuring there existence in a lot of interesting 
examples. 
\begin{proposition}\label{Prop: End as equalizer}
Let $\cD$ be a complete and cocomplete category and $F\colon \cC^{\opp}\times \cC\longrightarrow \cD$ a functor. The end of $F$ exists and is given by the equalizer 
\begin{equation}
\int_{c\in \cC} F(c,c) \cong \operatorname{eq}\left( \prod_{c\in \cC} F(c,c)  \rightrightarrows \prod_{f\colon c \rightarrow c'} F(c,c') \right) \ \ . 
\end{equation}
Dually, the coend of $F$ exists and is given by the coequalizer 
\begin{align}
\int^{c \in \cC}F(c,c) \cong \operatorname{coeq} \left(   \coprod_{f\colon c \rightarrow c'} F(c,c') \rightrightarrows  \coprod_{c\in \cC} F(c,c) \right) \ \ .
\end{align} 
\end{proposition}
One of the advantages of the calculus of (co)ends is that iterated (co)ends are well behaved, as indicated by the integral notation.

\begin{theorem}[Fubini's theorem for (co)ends]\label{Thm: Fubini}
Let $F\colon \cC \times \cC^{\opp}\times \cE \times \cE^{\opp}\longrightarrow \cD$
be a functor. There are canonical natural isomorphisms 
\begin{align}
\int^{c\in \cC} \int^{e\in \cE} F(c,c,e,e) \cong \int^{(c,e)\in \cC\times  \cE} F(c,c,e,e) 
\cong \int^{e\in \cE} \int^{c\in \cC} F(c,c,e,e) \ \ .
\end{align}
The same statement holds for ends.
\end{theorem}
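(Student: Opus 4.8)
The plan is to prove the result purely through the universal property of the coend together with the Yoneda lemma, avoiding any explicit manipulation of the coequalizer presentation from Proposition~\ref{Prop: End as equalizer}, which I will invoke only to guarantee that the relevant coends exist when $\cD$ is cocomplete. Recall that a coend is by definition an initial cowedge, so for any object $d \in \cD$ the set $\Hom_\cD\big(\int^{x} G(x,x), d\big)$ is naturally in bijection with the set of cowedges from $G$ to $d$, i.e.\ families $\{\omega_x \colon G(x,x) \longrightarrow d\}$ satisfying the cowedge condition. My strategy is to show that each of the three expressions in the statement co-represents one and the same functor $\cD \longrightarrow \mathsf{Set}$, namely the functor sending $d$ to the set of \emph{joint cowedges} of $F$ over $\cC \times \cE$: families $\{\omega_{c,e}\colon F(c,c,e,e)\longrightarrow d\}_{(c,e)}$ forming a cowedge for $F$ regarded, after the evident permutation of factors, as a functor $(\cC\times\cE)^{\opp}\times(\cC\times\cE)\longrightarrow \cD$. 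Once all three Hom-functors are identified naturally in $d$, the Yoneda lemma yields the canonical isomorphisms.

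The crucial observation is a comparison of dinaturality conditions. A morphism in $\cC\times\cE$ is a pair $(f,g)$ with $f\colon c\to c'$ and $g\colon e\to e'$, and it factors as $(f,g)=(f,\id_{e'})\circ(\id_c,g)=(\id_{c'},g)\circ(f,\id_e)$. First I would check that a family $\{\omega_{c,e}\}$ is a joint cowedge precisely when it is a cowedge separately in $c$ (for each fixed $e$, using morphisms $(f,\id_e)$) and separately in $e$ (for each fixed $c$). One direction is immediate by restricting to morphisms of the two special forms; for the converse one pastes the two separate cowedge hexagons along the factorization above, using functoriality of $F$ to rewrite $F((f,g),\id)$ and $F(\id,(f,g))$ as composites. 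This reduction of joint to separate dinaturality is the technical heart of the argument and the step I expect to be the main obstacle, since one must track the variances carefully and confirm that the two factorizations produce compatible pasted diagrams.

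With this in hand I would analyse the iterated coend $\int^{c}\int^{e}F(c,c,e,e)$. By the universal property of the \emph{outer} coend, a map out of it to $d$ is the same as a cowedge in $c$ of the functor $(c,c')\mapsto \int^{e}F(c,c',e,e)$, i.e.\ a family $\{\alpha_c\colon \int^{e}F(c,c,e,e)\to d\}$ dinatural in $c$. Applying the universal property of the \emph{inner} coend to each $\alpha_c$ turns it into a family $\{\omega_{c,e}\colon F(c,c,e,e)\to d\}$ that is a cowedge in $e$ for fixed $c$; the dinaturality of $\alpha$ in $c$ translates, via functoriality of the inner coend in its parameter $c$ (Remark~\ref{Rem: Coends are functorial}), into the cowedge condition in $c$ for fixed $e$. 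Thus maps $\int^{c}\int^{e}F\to d$ correspond naturally to families that are separately cowedges in each variable, which by the previous paragraph are exactly the joint cowedges. The identical argument with the roles of $\cC$ and $\cE$ exchanged handles $\int^{e}\int^{c}F$, and the definition of the joint coend $\int^{(c,e)}F$ directly gives joint cowedges. All three bijections are natural in $d$ because they are built from universal properties, so Yoneda delivers the asserted canonical isomorphisms; the statement for ends follows by applying this result in $\cD^{\opp}$.
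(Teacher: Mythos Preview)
The paper does not supply a proof of this theorem: it is stated as a background result in the review of (co)end calculus, with the general reference~\cite{MacLane,Coend} for this material. Your argument is correct and is essentially the standard proof (as in Mac~Lane, IX.8, or Loregian's monograph): one shows that each of the three objects co-represents the functor sending $d$ to the set of joint cowedges, and the only substantive step is the equivalence between joint dinaturality in $(c,e)$ and separate dinaturality in each variable, which you handle via the factorisation $(f,g)=(\id,g)\circ(f,\id)$. The dualisation to ends via $\cD^{\opp}$ is also correct. There is nothing further to compare against in the paper itself.
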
   
After this short detour we come back to the index field theory. Following Section~\ref{Sec: path integral parity},
we would like to define something like $\Aa_{\rm parity}^\zeta(M) = \zeta^{\text{ind} (\slashed{D
  }_{\hat{M}'}^+ -\hat{T}^+ )}$ for a fixed
$\zeta\in\C^\times$ and every regular 2-morphism
$M$ of $\ECobF$.
The problem with this definition is that the index may depend on our
choice of $T \in \text{End}_\C\big(\ker (\slashed{D}_Y) \big)$. The
resolution is to include the data about the choice of $T$ into our
extended quantum field theory. 

We do this by combining, for each object $S$ of $\ECobF$, all possible boundary conditions $T$ into a
category $\Aa_{\rm parity}^\zeta(S)$ in the following way: Let $\mathsf{T}(S)$ 
be the category with one
object for every odd self-adjoint unitary $T\in \text{End}_\C\big(\ker (\slashed{D}_{S}) \big) $, which is local in the
sense that $T$ is the direct sum over the connected components $S_i$ of $S$ of odd self-adjoint unitary operators acting on the kernel  
$ \ker \big(\slashed{D}_{S_i}  \big)$.
There is exactly one 
morphism $I_{ij}\colon T_i \longrightarrow T_j$ between every pair of objects in $\mathsf{T}$. The
category $\Aa_{\rm parity}^\zeta(S)$ is then defined to be the finite completion 
of the $\C$-linearisation of the category $\mathsf{T}(S)$. 

We will frequently use the following concrete model for $\Aa_{\rm parity}^\zeta(S)$: 
\begin{itemize}
\item 
Objects of $\Aa_{\rm parity}^\zeta(S)$ are formal finite linear 
combinations $\bigoplus_{T_i\in \mathsf{T}(S)} V_i * T_i$, with $V_i \in \fvs$. We will sometimes 
write $T_i$ for $\C*T_i$.   

\item The space of morphism from $\C*T_i$ to $\C*T_j$ is the one dimensional vector 
space $\C[I_{ij}]$ generated by the morphism $I_{ij}$. In order to obtain the morphism spaces between all objects in $\Aa_{\rm parity}^\zeta(S)$, this definition 
has to be extended bilinearly, i.e.\
\begin{align}
\Hom_{\Aa_{\rm parity}^\zeta(S)} \left( \bigoplus_{i=1}^n V_i * T_i\ ,\ \bigoplus_{j=1}^m V_j * T_j \right) = \bigoplus_{i,j} \Hom(V_i,V_j) \otimes \C[I_{ij}]
\end{align} 
for all formal finite sums. We can see a morphism in $\Aa_{\rm parity}^\zeta(S)$  
as a matrix of morphisms between the the individual parts of the finite formal sums.
Composition is defined by matrix multiplication and composition in $\mathsf{T}(S)$.  
\end{itemize}
The linear category 
$\Aa_{\rm parity}^\zeta(S)$ carries a canonical $\fvs$-module structure
\begin{align}
\begin{split}
\fvs \boxtimes \Aa_{\rm parity}^\zeta(S) &\longrightarrow \Aa_{\rm parity}^\zeta(S) \\
V\boxtimes \bigoplus_{i=1}^n V_i * T_i & \longmapsto \bigoplus_{i=1}^n (V \otimes_\C V_i) * T_i\ \ .
\end{split} 
\end{align} 
We can construct a $\C$-linear functor $\Aa_{\rm parity}^\zeta (\Sigma):\Aa_{\rm parity}^\zeta(S_-)\longrightarrow\Aa_{\rm parity}^\zeta(S_+)$ for
a regular 1-morphism $ \Sigma \colon S_- \longrightarrow S_+$
in $\ECobF$ by using the corresponding boundary and corner contributions to the index formula \eqref{Index theorem corners}. For elements $T_\pm$ of $\mathsf{T}(S_\pm)$, we define 
\begin{align}
\begin{split}
I_{\Sigma}(T_-,T_{+})=&-\tfrac{1}{2} \, \big( \, {}^{\rm
                                               b}\eta(\slashed{D}_{\hat{\Sigma}'}
                                               )-\dim \ker
                                               (\slashed{D}_{\hat{\Sigma}'}
                                               ) \\ & \qquad \qquad - \dim(\Lambda_{T_-\oplus T_{+}} \cap \Lambda_C)+ \mu(\Lambda_{T_-\oplus T_{+}}, \Lambda_{C}) \, \big) \ .
\end{split} 
\end{align}
We will sometimes drop the subscript $\Sigma$ when it is clear from the context.
With this convention the index theorem for manifolds with corners Equation~\eqref{Index theorem corners} 
can be rewritten as 
\begin{align}\label{Eq: Index theorem with I}
\ind(\slashed{D}^+_{\hat{M}'}-\widehat{T_-\oplus T_{+}}{}^+) = \int_{M'} K_{\rm AS} + I_{\Sigma_2}(T_-,T_+) - I_{\Sigma_1}(T_-,T_+) \ \ .
\end{align}  
To construct $\mathcal{A}^\zeta_{\rm parity}$ on $\Sigma$ we define the functor 
\begin{align}
\begin{split}
\Sigma (\cdot , \cdot ) \colon \mathsf{T}(S_+)^{\opp} \times \mathsf{T}(S_-) &
\longrightarrow \fvs \\ 
T_+\times T_- & \longmapsto \C \\ 
I_{+ij}\times I_{-ij}\colon T_{+j}\times T_{-i}  \longrightarrow  T_{+i}\times T_{-j} 
& \longmapsto \id_\C \cdot \zeta^{I_\Sigma(T_{-j},T_{+i})-I_\Sigma(T_{-i},T_{+j})}  
\end{split} \ \ .
\end{align}
We fix a generator $T_-$ of $\Aa_{\rm parity}^\zeta (S_-)$ and define 
\begin{align}
\Aa_{\rm parity}^\zeta (\Sigma)[T_-] \coloneqq \int^{T_+\in \mathsf{T}(S_+)}
\Sigma(T_+,T_-)*T_+ \ \ .
\end{align}
This assignment extends linearly to a functor 
$\Aa_{\rm parity}^\zeta (\Sigma)\colon \Aa_{\rm parity}^\zeta (S_-) 
\longrightarrow \Aa_{\rm parity}^\zeta (S_+)$. 

To define $\Aa_{\rm parity}^\zeta$ on regular 2-morphisms we use the following
proposition:
\begin{proposition}
Let $M:\Sigma_1\Longrightarrow \Sigma_2$ be a regular 2-morphism. The 
collection of linear maps
\begin{align}
\xi(M)_{T_-,T_+} = \zeta^{\ind
  (\slashed{D}^+_{\hat{M}'}-\widehat{T_-\oplus T_{+}}{}^+)}\cdot
  \id_{\C} \colon \C \longrightarrow \C \ .
\end{align}
defines a natural transformation $\xi(M)\colon \Sigma_1(\cdot, \cdot) \Longrightarrow \Sigma_2(\cdot, \cdot)$. 
\end{proposition}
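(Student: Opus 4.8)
The plan is to verify directly that the scalars $\xi(M)_{T_-,T_+}$ satisfy the naturality square for the functors $\Sigma_1(\cdot,\cdot),\Sigma_2(\cdot,\cdot)\colon\mathsf{T}(S_+)^{\opp}\times\mathsf{T}(S_-)\longrightarrow\fvs$. Since every object of either functor is the one-dimensional space $\C$ and every structure morphism is multiplication by a nonzero scalar, the entire verification collapses to a single scalar identity, and because $\zeta\in\C^\times$ it is enough to compare exponents. For a morphism $I_{+ij}\times I_{-ij}\colon T_{+j}\times T_{-i}\longrightarrow T_{+i}\times T_{-j}$ of the product category I would write down the square whose two composites are $\Sigma_2(I_{+ij},I_{-ij})\circ\xi(M)_{T_{-i},T_{+j}}$ and $\xi(M)_{T_{-j},T_{+i}}\circ\Sigma_1(I_{+ij},I_{-ij})$; using the definition of $\Sigma_i(\cdot,\cdot)$ on morphisms, commutativity reduces to the equality of exponents
\begin{align}
&I_{\Sigma_2}(T_{-j},T_{+i})-I_{\Sigma_2}(T_{-i},T_{+j})+\ind\big(\slashed{D}^+_{\hat{M}'}-\widehat{T_{-i}\oplus T_{+j}}{}^+\big)\notag\\
&\quad=I_{\Sigma_1}(T_{-j},T_{+i})-I_{\Sigma_1}(T_{-i},T_{+j})+\ind\big(\slashed{D}^+_{\hat{M}'}-\widehat{T_{-j}\oplus T_{+i}}{}^+\big)\ .
\end{align}

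The substantive ingredient, which does all the real work, is the index formula \eqref{Eq: Index theorem with I}: it writes each index as a bulk term $\int_{M'}K_{\rm AS}$ plus the two corner contributions $I_{\Sigma_2}-I_{\Sigma_1}$, the essential point being that the bulk integral is \emph{independent} of the chosen boundary condition $T_-\oplus T_+$. Substituting \eqref{Eq: Index theorem with I} for the two indices above, the two copies of $\int_{M'}K_{\rm AS}$ cancel, and after collecting terms both sides reduce to the same expression $\int_{M'}K_{\rm AS}+I_{\Sigma_2}(T_{-j},T_{+i})-I_{\Sigma_1}(T_{-i},T_{+j})$. Hence the identity holds tautologically, and the naturality square commutes for every morphism of $\mathsf{T}(S_+)^{\opp}\times\mathsf{T}(S_-)$, which is exactly the assertion that $\xi(M)$ is a natural transformation.

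I expect the only real difficulty to be bookkeeping rather than analysis. Because the first factor carries the opposite orientation $\mathsf{T}(S_+)^{\opp}$, one must track carefully which of $T_{\pm}$ labels the source and which the target in each component, and confirm that the boundary-condition-dependent terms $I_{\Sigma_1}$ and $I_{\Sigma_2}$ pair up across the square in precisely the pattern dictated by the separated form of \eqref{Eq: Index theorem with I}. No input beyond the index theorem \eqref{Eq: Index theorem with I} itself is required; the proposition is essentially a restatement of the fact that the boundary-condition dependence of the index is governed by the \emph{difference} of the two corner terms $I_{\Sigma_1}$ and $I_{\Sigma_2}$ attached to the incoming and outgoing $1$-morphisms.
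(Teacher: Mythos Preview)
Your proposal is correct and follows essentially the same approach as the paper. The paper's proof simply draws the naturality square with the four scalar arrows and states that its commutativity follows immediately from the index theorem in the form \eqref{Eq: Index theorem with I}; you carry out the same reduction, making the cancellation of the bulk term $\int_{M'}K_{\rm AS}$ and the pairing of the corner contributions $I_{\Sigma_1},I_{\Sigma_2}$ explicit.
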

\begin{proof}
Naturality of $\xi(M) $ means that the diagram
\begin{equation}
\begin{tikzcd}[row sep= 1cm, column sep = 2cm]
\C \ar[rr,"{\zeta^{\ind(\slashed{D}^+_{\hat{M}'}-\widehat{T_-\oplus T_{+}}{}^+)}\cdot \id_{\C}}"] \ar[dd, "\zeta^{I_{\Sigma_1}(T'_{-},T_{+})-I_{\Sigma_1}(T_{-},T'_{+}) } \cdot \id_\C ", swap] & & \C \ar[dd, "\zeta^{I_{\Sigma_2}(T'_{-},T_{+})-I_{\Sigma_2}(T_{-},T'_{+})}\cdot \id_\C  ", ] \\ 
 & & \\
 \C  \ar[rr,"{\zeta^{\ind(\slashed{D}^+_{\hat{M}'}-\widehat{T'_-\oplus T'_{+}}{}^+)}\cdot \id_{\C}}"] & & \C 
\end{tikzcd}
\end{equation} 
has to commute.
This follows immediately from the index theorem for manifolds with corners rewritten as in Equation~\eqref{Eq: Index theorem with I}. 
\end{proof}
The natural transformation $\xi(M)$ induces by Remark~\ref{Rem: Coends are functorial} a natural transformation $\Aa_{\rm parity}^\zeta(M)\colon \Aa_{\rm parity}^\zeta(\Sigma_1)\Longrightarrow \Aa_{\rm parity}^\zeta(\Sigma_2)$.

We define the theory $\Aa_{\rm parity}^\zeta$ on limit 1-morphisms as the functor corresponding to a mapping cylinder. From the definition of $\Aa_{\rm parity}^\zeta(\Sigma)$ it is clear that this is independent of the length of the mapping cylinder since only the behaviour at infinity is important. On limit 2-morphisms we define the theory to be the value of $\Aa_{\rm parity}^\zeta$ on a mapping cylinder of length $\epsilon$. 
This completes the definition of $\Aa_{\rm parity}^\zeta$.

Before demonstrating that $\Aa_{\rm parity}^\zeta$ is an extended quantum field theory, we explicitly calculate the functors corresponding to limit 1-morphisms.

\begin{proposition}
\label{Proposition: A on limit 1-morphisms}
Let $\phi \colon S_- \longrightarrow S_+$ be a limit 1-morphism in $\ECobF$ with mapping cylinder $\mathfrak{M}(\phi)$, and let $T_\pm$ be fixed objects in $\sfT(S_\pm)$. Then 
\begin{align}
\label{EQ: A on limit 1-morphisms}
I_{\mathfrak{M}(\phi)}(T_-,T_+) = -\tfrac{1}{2} \, \big( \dim (\Lambda_{T_-} \cap \Lambda_{\phi^\ast T_+}) + \mu(\Lambda_{T_-} ,\Lambda_{\phi^\ast T_+}) \big) \ .
\end{align}
\end{proposition}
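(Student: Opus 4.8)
The plan is to evaluate the general definition of $I_\Sigma(T_-,T_+)$ on $\Sigma=\mathfrak{M}(\phi)$ and to exploit that, after attaching the infinite cylindrical ends, the completed manifold $\hat{\mathfrak{M}}(\phi)'$ is isometric---via the very diffeomorphism $\phi$ used to build the mapping cylinder---to the straight product cylinder $(-\infty,\infty)\times S_-$, now carrying boundary data at its two ends that are related by $\phi$. This reduces the entire computation to the cylinder analysis already performed for the faces $C(S_\pm)$ in the proof of Theorem~\ref{Theorem extended index theorem}, the only new feature being that the two ends are no longer equipped with identical boundary conditions.

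First I would show that the two product terms in the definition of $I_\Sigma$ vanish. Running the separation-of-variables argument of Theorem~\ref{Theorem extended index theorem} on the cylinder, a harmonic spinor $\Psi(y_-,t)=\psi(y_-)\,\alpha(t)$ satisfies $\slashed{D}_{S_-}\psi=\lambda\psi$ and $\alpha(t)=\e^{-\lambda t}$, so there are no square-integrable solutions and $\dim\ker(\slashed{D}_{\hat{\mathfrak{M}}(\phi)'})=0$. The vanishing of ${}^{\mathrm b}\eta(\slashed{D}_{\hat{\mathfrak{M}}(\phi)'})$---a statement about the massless b-operator, independent of the boundary data---follows from the reflection symmetry $t\mapsto -t$ of the product cylinder, exactly as in the identical-boundary-condition case of \cite[Theorem~2.1]{LeschWojciechowski} invoked there. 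The same analysis isolates the bounded ($\lambda=0$) solutions and thereby identifies the scattering Lagrangian $\Lambda_C\subset\ker(\slashed{D}_Y)$ on the corner $Y=S_+\sqcup -S_-$ as the graph of the isomorphism $\phi_\ast\colon\ker(\slashed{D}_{S_-})\longrightarrow\ker(\slashed{D}_{S_+})$ induced by $\phi$, in place of the straight diagonal $\Delta$ appearing for $C(S_\pm)$.

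Next I would push the two corner quantities on the doubled space $Y$ down to quantities on $S_-$ through this graph. Since $\Lambda_{T_-\oplus T_+}=\Lambda_{T_+}\oplus\Lambda_{T_-}$ and $\Lambda_C$ is the graph of $\phi_\ast$, an element $\phi_\ast\psi\oplus\psi$ lies in the intersection exactly when $\psi\in\Lambda_{T_-}$ and $\psi\in\phi_\ast^{-1}\Lambda_{T_+}=\Lambda_{\phi^\ast T_+}$, giving $\dim(\Lambda_{T_-\oplus T_+}\cap\Lambda_C)=\dim(\Lambda_{T_-}\cap\Lambda_{\phi^\ast T_+})$. For the exterior angle I would use Remark~\ref{Remark: Relation eta Invariants}: since ${}^{\mathrm b}\eta$ vanishes, $\mu(\Lambda_{T_-\oplus T_+},\Lambda_C)$ equals the $\eta$-invariant of the cylinder carrying the two Lagrangian boundary conditions $\Lambda_{T_-}$ and $\Lambda_{\phi^\ast T_+}$, which \cite{LeschWojciechowski} expresses through $\mu(\Lambda_{T_-},\Lambda_{\phi^\ast T_+})$. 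Substituting these identities into the definition of $I_\Sigma$ and collecting terms yields~\eqref{EQ: A on limit 1-morphisms}.

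I expect the main obstacle to be the precise bookkeeping in this last step: reconciling the exterior angle computed on the disconnected double $Y=S_+\sqcup -S_-$ with the one computed on $S_-$ via the graph of $\phi_\ast$. The orientation reversal on $-S_-$ flips the chirality operator, and hence the symplectic form entering $\mu$, while the strict spectral cutoff $-\pi<\theta<\pi$ in~\eqref{EQ: Definition of m( , )} must be matched against the intersection term so that $\dim(\Lambda_{T_-}\cap\Lambda_{\phi^\ast T_+})$ emerges with the correct sign in~\eqref{EQ: A on limit 1-morphisms}. Getting this relative sign right is where the cylinder $\eta$-invariant computation of \cite{LeschWojciechowski} does the real work.
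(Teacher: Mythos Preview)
Your proposal is correct and follows the same route as the paper: transport the mapping cylinder to the straight product $(-\infty,\infty)\times S_-$ via $\phi$, then run the cylinder analysis from the proof of Theorem~\ref{Theorem extended index theorem} and invoke \cite[Theorem~2.1]{LeschWojciechowski} together with Remark~\ref{Remark: Relation eta Invariants}. The only organisational difference is that the paper first bundles ${}^{\mathrm b}\eta+\mu$ into the boundary-condition $\eta$-invariant $\eta(\D_{T_-,T_+})$ via Remark~\ref{Remark: Relation eta Invariants} and \emph{then} transports, whereas you transport first and treat the four terms separately; your version is more explicit about identifying the scattering Lagrangian as the graph of $\phi_\ast$, which the paper leaves inside the phrase ``similar arguments to those used in the proof of Theorem~\ref{Theorem extended index theorem}''. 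The sign bookkeeping you flag as the main obstacle is exactly the point the paper absorbs into that phrase, so your caution there is well placed.
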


\begin{proof}
By Remark \ref{Remark: Relation eta Invariants} we get a term in $I_{\mathfrak{M}(\phi)}(T_-,T_+)$ corresponding to the $\eta$-invariant with boundary conditions induced by the Lagrangian subspaces $\Lambda_{T_\pm}$.
There is a diffeomorphism induced by $\phi^{-1}$ and $\text{id}$ from the mapping cylinder of $\phi$ with length $1$ to the cylinder $  [0,1] \times \Sigma_-$. The boundary conditions change to new boundary conditions induced by $\Lambda_{T_-}$ and $\Lambda_{\phi^\ast T_+}$. 
The $\eta$-invariant for this situation was calculated in~\cite[Theorem 2.1]{LeschWojciechowski}, from which we get
\begin{align}
\eta (\D_{T_-,\phi^\ast T_+})= \mu(\Lambda_{T_-},\Lambda_{\phi^\ast T_+}) \ .
\end{align}
We can extend the diffeomorphism induced by $\phi^{-1}$ and $\id$ above to manifolds with cylindrical ends attached. The expression (\ref{EQ: A on limit 1-morphisms}) then follows from similar arguments to those used in the proof of Theorem~\ref{Theorem extended index theorem}. 
\end{proof}

\begin{theorem}\label{Theorem: A is extended field theory}
 \ $\Aa_{\rm parity}^\zeta \colon \ECobF \longrightarrow \Tvs$ is an invertible extended quantum field theory.
\end{theorem}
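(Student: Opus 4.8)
The plan is to verify the three defining properties from Definition~\ref{Def: EQFT} and Definition~\ref{Def: E Invertible}: that $\Aa_{\rm parity}^\zeta$ is a $2$-functor, that it is symmetric monoidal for the disjoint union, and that it is invertible. First I would record that each $\Aa_{\rm parity}^\zeta(S)$ is indeed a KV $2$-vector space: the category $\mathsf{T}(S)$ has a unique morphism between any two objects, so the finite completion of its $\C$-linearisation is equivalent to $\fvs$ (one isomorphism class of simple object), which is both a $2$-vector space and an invertible object of $\Tvs$. The functors $\Aa_{\rm parity}^\zeta(\Sigma)$ are $\C$-linear and preserve the $\fvs$-module structure by Remark~\ref{Rem: Coends are functorial}, and they are well defined because all coends involved are finite and the defining quantity $I_\Sigma(T_-,T_+)$ is independent of the collar length $\epsilon$ and of the mass perturbation $\hat T$ (see the discussion following Theorem~\ref{Theorem extended index theorem}).

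The heart of the proof is the construction of the compositor, i.e.\ the natural isomorphism $\Phi$ whose components $\Phi_{T_0,T_2}(\Sigma_0,\Sigma_1)\colon \Aa_{\rm parity}^\zeta(\Sigma_1)\circ\Aa_{\rm parity}^\zeta(\Sigma_0)\Longrightarrow \Aa_{\rm parity}^\zeta(\Sigma_1\circ\Sigma_0)$ realise $2$-functoriality on composable regular $1$-morphisms $\Sigma_0\colon S_0\to S_1$ and $\Sigma_1\colon S_1\to S_2$. Using that $\Aa_{\rm parity}^\zeta(\Sigma_1)$ is linear and hence commutes with coends, together with Fubini (Theorem~\ref{Thm: Fubini}), I would compute
\begin{align}
\Aa_{\rm parity}^\zeta(\Sigma_1)\circ\Aa_{\rm parity}^\zeta(\Sigma_0)(T_0)\cong \int^{T_2}\Big(\int^{T_1}\Sigma_0(T_1,T_0)\otimes\Sigma_1(T_2,T_1)\Big)*T_2 .
\end{align}
The comparison with $\Aa_{\rm parity}^\zeta(\Sigma_1\circ\Sigma_0)(T_0)=\int^{T_2}(\Sigma_1\circ\Sigma_0)(T_2,T_0)*T_2$ then reduces to producing a natural isomorphism $\int^{T_1}\Sigma_0(T_1,T_0)\otimes\Sigma_1(T_2,T_1)\cong(\Sigma_1\circ\Sigma_0)(T_2,T_0)$ in $\fvs$. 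This is where the geometry enters: the additivity of the index under vertical cutting, Proposition~\ref{Corollary: Index theorem for composition of regular 2-morphisms}, rewritten through \eqref{Eq: Index theorem with I}, gives $I_{\Sigma_1\circ\Sigma_0}(T_0,T_2)=I_{\Sigma_0}(T_0,T_1)+I_{\Sigma_1}(T_1,T_2)$ for matched boundary condition $T_1$ on the shared corner $S_1$, so the coend over $T_1$ (a connected linearised groupoid) collapses to a single one-dimensional space by the coYoneda lemma of Example~\ref{Exa: limits as ends}. I would then check that $\Phi$ is associative and compatible with the coherence data by unpacking the relevant diagrams, all of which become numerical identities in the exponents of $\zeta$ guaranteed by the additivity of the index, the $^{\rm b}\eta$-invariant and $\mu$.

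Next I would treat the remaining cells. Functoriality on regular $2$-morphisms under vertical composition is again Proposition~\ref{Corollary: Index theorem for composition of regular 2-morphisms}; under horizontal composition (gluing along $1$-boundaries) it follows from the analogous additivity of $\ind(\slashed{D}^+_{\hat M'}-\hat T^+)$, while identity $2$-morphisms (cylinders) contribute index $0$. Compositions involving limit morphisms are handled using Proposition~\ref{Proposition: A on limit 1-morphisms} and the fact that mapping cylinders and mapping boxes do not change the index up to the choices already shown to be irrelevant. For the symmetric monoidal structure I would use that $\ker(\slashed{D}_{S\sqcup S'})=\ker(\slashed{D}_S)\oplus\ker(\slashed{D}_{S'})$ and that $K_{\rm AS}$, $^{\rm b}\eta$, $\dim\ker$, $\mu$ and the Lagrangian data are all additive under disjoint unions; this yields an equivalence $\Aa_{\rm parity}^\zeta(S\sqcup S')\simeq \Aa_{\rm parity}^\zeta(S)\boxtimes\Aa_{\rm parity}^\zeta(S')$ and the comparison $2$-isomorphisms $\chi$, whose coherence (associativity, unit and braiding constraints) reduces once more to additivity identities.

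Finally, invertibility is immediate from what precedes: every $\Aa_{\rm parity}^\zeta(S)\simeq\fvs$ is an invertible object of $\Tvs$, every $\Aa_{\rm parity}^\zeta(\Sigma)$ sends the generator to a one-dimensional object and is therefore an equivalence, and every $\Aa_{\rm parity}^\zeta(M)$ acts by the nonzero scalar $\zeta^{\ind}$, hence is an invertible $2$-morphism; thus $\Aa_{\rm parity}^\zeta$ factors through the maximal Picard $2$-subgroupoid $\Pic_2(\Tvs)\simeq\mathsf{B}^2\C^\times$, with inverse $\Aa_{\rm parity}^{\zeta^{-1}}$ as in Theorem~\ref{A is field theory}. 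I expect the main obstacle to be not any single geometric input but the sheer bookkeeping of the coherence axioms of a symmetric monoidal $2$-functor: one must verify that the coend-theoretic compositor $\Phi$, the monoidal comparison $\chi$, and their interaction satisfy all the pentagon, triangle and hexagon constraints, and that every such constraint follows from the additivity of the terms in the corner index formula \eqref{Index theorem corners}, uniformly in the auxiliary choices of masses and collar lengths.
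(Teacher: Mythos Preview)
Your overall architecture is right—build the compositor via coends and Fubini, verify coherence, then monoidality and invertibility—but there is a genuine error at the core step. You claim that Proposition~\ref{Corollary: Index theorem for composition of regular 2-morphisms} together with \eqref{Eq: Index theorem with I} gives
\[
I_{\Sigma_1\circ\Sigma_0}(T_0,T_2)=I_{\Sigma_0}(T_0,T_1)+I_{\Sigma_1}(T_1,T_2).
\]
This is false, and neither cited result says it. Proposition~\ref{Corollary: Index theorem for composition of regular 2-morphisms} concerns \emph{vertical} composition of $2$-morphisms (stacking $M_1,M_2$ along a shared $1$-morphism), not composition of $1$-morphisms $\Sigma_1\circ\Sigma_0$ along a shared object $S_1$. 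In fact the failure of additivity of $I_\Sigma$ under composition of $1$-morphisms is precisely the content of the parity anomaly: the $2$-cocycle \eqref{eq:2cocycleexplicit} is $\zeta$ raised to $I_{\mathfrak M(\phi_2\circ\phi_1)}-I_{\mathfrak M(\phi_1)}-I_{\mathfrak M(\phi_2)}$, which is nontrivial in general. If your additivity held, the compositor would be the identity and the anomaly would vanish.

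What the paper actually does is \emph{define} the compositor by the nontrivial scalar
\[
\Sigma_2(T_+,T)\otimes\Sigma_1(T,T_-)\longrightarrow(\Sigma_2\circ\Sigma_1)(T_+,T_-),\qquad 1\otimes 1\longmapsto \zeta^{\,I_{\Sigma_2\circ\Sigma_1}(T_-,T_+)-I_{\Sigma_1}(T_-,T)-I_{\Sigma_2}(T,T_+)} .
\]
Associativity and naturality in $T_\pm$ of this $\Phi$ are then purely formal (the relevant triangle and pentagon of scalars commute tautologically by cancellation of the added and subtracted $I$-terms); no geometry is needed there. The index theorem enters at a different place than you put it: it is used to show that $\Phi$ is natural with respect to \emph{$2$-morphisms}, i.e.\ compatibility with $\xi(M_2\bullet M_1)$ under horizontal gluing. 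Here one uses only additivity of the bulk integral $\int K_{\rm AS}$, not additivity of the index or of $I_\Sigma$; indeed $\log_\zeta\xi(M_2\bullet M_1)$ differs from $\log_\zeta\xi(M_1)+\log_\zeta\xi(M_2)$ exactly by the compositor terms. Your sentence ``under horizontal composition it follows from the analogous additivity of $\ind$'' is therefore also incorrect. Once you replace the false additivity claim by the correct nontrivial compositor and relocate the use of the index theorem, the rest of your outline (monoidality from additivity under disjoint union, invertibility via factoring through $\mathsf{Pic}_2(\Tvs)$) goes through as in the paper.
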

\begin{proof}
We construct a family of natural isomorphisms $\Phi_{S} \colon \id \Longrightarrow \Aa_{\rm parity}^\zeta(\id_{S})$ for all objects $S\in \ECobF$. 
 
For this, it is enough to construct a natural isomorphism $\Phi'_{S}\colon \Hom(\cdot, 
\cdot) \Longrightarrow [0,1] \times S (\cdot, \cdot) $ by the enriched coYoneda lemma from Example~\ref{Exa: limits as ends} 
\begin{align}
\id_{\Aa_{\rm parity}^\zeta(S)}(T_-) \cong \int^{T_+\in \mathsf{T}(S)} \Hom (T_+, T_-)*T_+ \ \ .
\end{align}
Sending $I_{ij}\in 
\Hom (T_i, T_j) 
$ to the complex number $\zeta^{I_{[0,1]\times S}(T_i,T_j)}$ induces such an isomorphism.

The naturality follows immediately from the commuting diagram
\begin{equation}
\begin{tikzcd}
\Hom(T_i,T_j) \ar[dd,swap,"I_{ij}\mapsto I_{i'j'}"] \ar[rrr,"\zeta^{I(T_i,T_j)}"] & & &  \C \ar[dd,"{\zeta^{I(T_{i'},T_{j'})-I(T_{i},T_{j})}}"] \\
& & & \\
\Hom(T_{i'},T_{j'}) \ar[rrr,swap,"\zeta^{I(T_{i'},T_{j'})}"] & & & \C
\end{tikzcd}
\end{equation} 
Let $\Sigma_1 \colon S_-\longrightarrow S$ and $\Sigma_1 \colon S\longrightarrow S_+$
be two 1-morphisms in $\ECobF$.
For the composition we have to construct
natural $\C$-linear isomorphisms 
\begin{align}\label{eq:Phinatural}
\Phi_{\Sigma_1, \Sigma_2} \colon \Aa_{\rm parity}^\zeta\big(\Sigma_2\big) \circ \Aa_{\rm parity}^\zeta\big(\Sigma_1\big)\Longrightarrow \Aa_{\rm parity}^\zeta\big(\Sigma_2\circ \Sigma_1\big) \ .
\end{align}
Using the definition of $\Aa_{\rm parity}^\zeta$ on 1-morphisms 
we get 
\begin{align}
\begin{split}
\Aa_{\rm parity}^\zeta(\Sigma_2)\circ \Aa_{\rm parity}^\zeta(\Sigma_1)
[T_-] &=  \Aa_{\rm parity}^\zeta(\Sigma_2) \left[\int^{T\in \sfT(S)} \Sigma_1(T, T_-)*T \right ] \\
& \cong \int^{T\in \sfT(S)} \Sigma_1(T,T_-) \otimes \int^{T_+\in \sfT(S_+)} \Sigma_2(T_+,T) * T_+ \\ 
& \cong \int^{T_+\in \sfT(S_+)} \left( \int^{T\in \sfT(S)} \Sigma_2(T_+,T) \otimes \Sigma_1(T,T_-) \right) * T_+
\end{split} 
\end{align}
where we used the linearity of the coend, the symmetric monoidal structure
of $\fvs$ and Fubini's theorem~\ref{Thm: Fubini} for coends. 

On the other side we get  
\begin{align}
\Aa_{\rm parity}^\zeta(\Sigma_2\circ \Sigma_1)
[T_-] &=   \int^{T_+\in \sfT(S_+)} (\Sigma_2 \circ \Sigma_1)(T_+, T_-)*T_+ \ \ .
\end{align}
The natural isomorphism $\Phi_{\Sigma_1, \Sigma_2}$ is induced by the family
of isomorphisms 
\begin{align}
\begin{split}
\Sigma_2(T_+,T)\otimes \Sigma_1(T,T_-) & \longrightarrow (\Sigma_2 \circ \Sigma_1)(T_+,T_-) \\
1\otimes 1 & \longmapsto  \zeta^{I_{\Sigma_2 \circ \Sigma_1}(T_-,T_+)-I_{\Sigma_1}(T_-,T)-I_{\Sigma_2}(T,T_+)}
\end{split}
\end{align}
To show that this construction is natural, it is enough to observe that the diagram
\begin{equation}
\begin{tikzcd}
\C \ar[dd,swap,  "{\zeta^{I(T'_-,T')+I(T',T'_+)-I(T_-,T)-I(T,T_+)}}" ]\ar[rrrr, "{\zeta^{I(T_-,T_+)-I(T_-,T)-I(T,T_+)}}"] & & & &\C \ar[dd,  "{\zeta^{I(T'_-,T_+')-I(T_-,T_+)}}" ] \\
& & & & \\
\C \ar[rrrr, "\zeta^{I(T'_-,T'_+)-I(T'_-,T')-I(T',T'_+)}", swap]  & & &  & \C
\end{tikzcd}
\end{equation}
commutes.
This completes the construction of the natural isomorphism $\Phi$. 

In order for $\Phi$ to equip $\Aa_{\rm parity}^\zeta$ with the structure of a 2-functor, we need to check naturality with respect to 2-morphisms and associativity. We start with the compatibility with 2-morphisms. For this fix regular 
2-morphisms $M_1\colon \Sigma_- \longrightarrow \Sigma_+$ and $M_2\colon \Sigma'_- \longrightarrow \Sigma'_+$ between
regular 1-morphisms $\Sigma_-,\Sigma_+\colon S_-\longrightarrow S$ and  $\Sigma'_-,\Sigma'_+\colon S\longrightarrow S_+$. By the naturality of all our
constructions it is enough to check this for fixed corner conditions $T_-\in \mathsf{T}(S_-)$, $T\in \mathsf{T}(S)$ and $T_+\in \mathsf{T}(S_+)$. 
Using the index theorem, this follows from the calculation
\begin{align}
\begin{split}
&\log_{\zeta}\xi\big(M_2\bullet M_1\big)_{T_-,T_+} \\
& \qquad \qquad \quad = \int_{(M_2\bullet M_1)'} \, K_{\rm AS} +I_{ \Sigma'_+ \circ \Sigma_+}(T_-,T_+)-I_{\Sigma'_-\circ \Sigma_-}(T_-,T_+) 
\\[4pt]
& \qquad \qquad \quad = \int_{M'_1{}} \, K_{\rm AS}+\int_{M'_2{}} \, K_{\rm AS} + I_{ \Sigma'_+ \circ \Sigma_+}(T_-,T_+)-I_{\Sigma'_-\circ \Sigma_-}(T_-,T_+) 
\\& \qquad \qquad \qquad + \big(I_{\Sigma_+}(T_-,T)+I_{\Sigma'_+}(T,T_+) \big) - \big(I_{\Sigma_+}(T_-,T) +I_{\Sigma'_+}(T,T_+) \big)
\\& \qquad \qquad \qquad + \big(I_{\Sigma_-}(T_-,T)+I_{\Sigma'_-}(T,T_+) \big)- \big(I_{\Sigma_-}(T_-,T)+I_{\Sigma_-'}(T,T_+) \big) 
\\[4pt]
& \qquad \qquad \quad = \log_{\zeta}\xi \big(M_1\big)_{T_-,T_1}+ \log_{\zeta}\xi\big(M_2\big)_{T_1,T_+} \\ 
& \qquad \qquad \qquad +\big(I_{\Sigma_+'\circ \Sigma}(T_-,T_+)-I_{\Sigma_+}(T_-,T) -I_{\Sigma_+'}(T,T_+)\big) \\
& \qquad \qquad \qquad -\big(I_{\Sigma'_- \circ \Sigma_-}(T_-,T_+)-I_{\Sigma_-}(T_-,T) -I_{\Sigma_+}(T,T_+)\big) \\[4pt]
& \qquad \qquad \quad = \log_{\zeta} \xi \big(M_1\big)_{T_-,T_1}+ \log_{\zeta}\xi \big(M_2\big)_{T_1,T_+} \\ 
& \qquad \qquad \qquad +\log_{\zeta}\big(\Phi_{\Sigma_+,\Sigma_+'}\big)_{T_-,T_+}+\log_{\zeta}\big(\Phi^{-1}_{\Sigma_-,\Sigma_-'}\big)_{T_-,T_+} \ .
\end{split}
\end{align}
It remains to demonstrate compatibility with associativity: $\Phi \circ (\Phi \bullet\id) = \Phi \circ ( \id\bullet \Phi) $, i.e.\ the coherence condition \eqref{EQ1: Definition 2Functor}. For this, we fix three composable regular 1-morphisms $\Sigma_i$, $i=1,2,3$ from $S_{i-}$ to $S_{i+}$ in $\ECobF$. By the naturality of all constructions, it is enough to check the relation for fixed objects $T_-$ of $ \sfT(S_{1-})$, $T_1$ of $ \sfT(S_{2-})=\sfT(S_{1+})$, $T_2 $ of $ \sfT(S_{3-})= \sfT(S_{2+})$, and $T_+$ of $ \sfT(S_{3+})$. This follows immediately from the commutative diagram
\begin{equation}
\begin{tikzcd}
& & & & \C \ar[ddddrrrr, "{\zeta^{I(T_1,T_+)-I(T_2,T_1)-I(T_2,T_+)}}" description] \ar[ddddllll, "{\zeta^{I(T_-,T_2)-I(T_-,T_1)-I(T_1,T_2)}}" description] \ar[dddddddd, "{\zeta^{I(T_-,T_+)-I(T_2,T_1)-I(T_2,T_+)-I(T_-,T_1)}}" description] & & & & \\
 & & & &    & & & & \\
  & & & &    & & & & \\
   & & & &    & & & & \\
 \C \ar[ddddrrrr, "{\zeta^{I(T_-,T_+)-I(T_-,T_2)-I(T_2,T_+)}}" description] & & & & & & & & \C \ar[ddddllll, "{\zeta^{I(T_-,T_+)-I(T_-,T_1)-I(T_1,T_+)}}" description] \\
  & & & &    & & & & \\
   & & & &    & & & & \\
 & & & &    & & & & \\
 & & & & \C & & & &
\end{tikzcd} 
\end{equation}   
We finally have to check the coherence condition \eqref{EQ2: Definition 2Functor}.
We fix a regular 1-morphism $\Sigma \colon S_-\longrightarrow S_+$. We evaluate the resulting natural transformation at a fixed object $T_-$ of $\sfT(S_-)$. 
By naturality we can check the equation for the terms appearing in the coend. For this we fix 
an object $T_+$ of $\sfT(S_+)$. Then the composition gives
\begin{equation}
\label{EQ:Proof of C2}
\begin{small}
\begin{tikzcd}
\big( T_+ \ar[rr, "{\zeta^{I_{\id}(T_+,T_+)}}"] & & T_+ \ar[rrrrrrr,
  "{\zeta^{I_{ \mathfrak{M}_1(\id) \circ \Sigma}(T_-,T_+)-I_{\Sigma}(T_-,T_+)-I_{\id}(T_+,T_+)}}"]
  & & & & & & & T_+\big) = \big( T_+ \ar[r,"\id"]& T_+\big) \ ,
\end{tikzcd}
\end{small}
\end{equation}
where we used $I_{ \mathfrak{M}_1(\id) \circ \Sigma}(T_-,T_+)-I_{\Sigma}(T_-,T_+)=0$.
This proves the condition \eqref{EQ2: Definition 2Functor}. The coherence condition for $\Aa_{\rm parity}^\zeta(\id)\circ \Aa_{\rm parity}^\zeta(\Sigma)$ can be proven in the same way.

Next we come to the vertical composition of regular 2-morphisms. It is
enough to show that the composition is given by multiplication for
fixed objects $T_\pm $ of $\Aa_{\rm parity}^\zeta(S_\pm)$. This follows immediately from Proposition~\ref{Corollary: Index theorem for composition of regular 2-morphisms} by an argument similar to the one used in the proof of Theorem~\ref{A is field theory}.
The conditions for limit 1-morphisms and limit 2-morphisms follow now from their representations as mapping cylinders. 

Now we check compatibility with the monoidal structure. There are canonical $\C$-linear equivalences of categories given on objects by
\begin{align}
\chi^{-1}_{S, S'}\colon \Aa_{\rm parity}^\zeta(S \sqcup S') \longrightarrow \Aa_{\rm parity}^\zeta(S)\boxtimes \Aa_{\rm parity}^\zeta(S')
\end{align}
sending $(T, T')\in \sfT(S)\times \sfT(S') \cong \sfT(S\sqcup S')$ to $T \boxtimes T'$, and 
\begin{align}
\iota^{-1} \colon \Aa_{\rm parity}^\zeta(\emptyset) \longrightarrow \fvs
\end{align}
sending $0\in\{0\}=\text{End}_\C\big(\ker (\slashed{D}_{\emptyset}) \big)$ to $\C$. All further structures required for $\Aa_{\rm parity}^\zeta$ to be a symmetric monoidal 2-functor are trivial. It is straightforward if tedious to check that all diagrams in the definition of a symmetric monoidal 2-functor commute, but we shall not write them out explicitly.
Finally, it is straightforward to see that $\Aa_{\rm parity}^\zeta$ factors through the Picard 2-groupoid $\mathsf{Pic}_2(\Tvs)$, and hence $\Aa_{\rm parity}^\zeta$ is invertible. 
\end{proof}
\color{black}
\begin{remark}
The proof of Theorem \ref{Theorem: A is extended field theory} is more or less independent of the concrete form of $I_{\Sigma}(T_-,T_+)$ and the index theorem. It only uses additivity under vertical composition and the decomposition 
\begin{align}
\text{ind}\big(\slashed{D}^+_{\hat{M}'}-\widehat{T_-\oplus T_{+}}{}^+ \big) = \int_{M'{}} \, K_{\rm AS} +I_{\partial_+M}(T_-,T_+)-I_{\partial_-M}(T_-,T_+) \ ,
\end{align}
into a local part and a global part depending solely on boundary conditions. Hence, it should be possible to apply this or a similar construction to a large class of invariants depending on boundary conditions. Indeed, we apply the same construction in 
Section~\ref{Sec: Parallel transport} to build an extended functorial field theory
from the parallel transport on higher flat gerbes. This construction will be topological
in nature. 

A different example of particular interest would involve $\eta$-invariants on odd-dimensional manifolds with corners, which should be related to chiral anomalies in even dimensions and extend Dai-Freed theories~\cite{DaiFreed}. For steps towards constructing
such a field theory using different methods  
see~\cite{ChiralAnomaly, MonnierHamiltionianAnomalies}. 
\end{remark}

\subsubsection*{Field theories with parity anomaly and projective representations}\label{sec:projparity}

A quantum field theory with parity anomaly is now regarded as a theory relative to $\Aa_{\rm parity}^\zeta$ as described in Chapter~\ref{Sec: General Theory}, i.e.\ a natural
symmetric monoidal 2-transformation $Z^\zeta:\mbf1\Longrightarrow\mathsf{tr}\Aa_{\rm parity}^\zeta$. The concrete description of the extended quantum field theory
$\Aa_{\rm parity}^\zeta$ given in the proof of Theorem~\ref{Theorem:
  A is extended field theory} allows us to calculate the corresponding
groupoid 2-cocycle along the lines discussed in
Section~\ref{Sec: P anomaly actions}; this information about the parity anomaly is contained in the isomorphism \eqref{eq:Phinatural}. We choose a $\C$-linear equivalence of
categories $\chi \colon \Aa_{\rm parity}^\zeta(S) \longrightarrow
\fvs$ sending all objects $T$ of $\sfT(S)$ to $\C$ and all
morphisms $f$ of $\sfT(S)$ to $\id_\C$; a weak inverse is
given by picking a particular object $T_{S}$ in $\sfT(S)$ and
mapping $V\in \fvs$ to $V*T_S$. The functor $\Aa_{\rm parity}^\zeta(\phi)$ corresponding to a limit 1-morphism
$\phi \colon S_1 \longrightarrow S_2$ in the symmetry groupoid
$\Sym(\ECobF)$ is given by the coend
\begin{align}
\Aa_{\rm parity}^\zeta(\phi)(T_-)= \int^{T_+ \in \sfT(S)} \mathfrak{M}(\phi) (T_+,T_-)*T_+
\end{align}
where $\mathfrak{M}(\phi)$ is the mapping cylinder of $\phi $.
A choice of an object $T_{S} $ of $\sfT(S)$ defines an
isomorphism $\varphi_S \colon T_S \longrightarrow \Aa_{\rm parity}^\zeta(\phi)(T_S)  
$, which for simplicity we pick to be the same boundary mass
perturbation as chosen for the weak inverse above.  The groupoid
cocycle evaluated at $\phi_1 \colon S_1 \longrightarrow S_2$
and $\phi_2 \colon S_2 \longrightarrow S_3$ corresponding to
this choice is then given by
\begin{align}
\alpha^{\Aa_{\rm parity}^\zeta}_{\phi_1, \phi_2}= \zeta^{I_{\mathfrak{M}(\phi_2 \circ \phi_1)}(
  T_{S_1}, T_{S_3})-I_{\mathfrak{M}(\phi_1)}(
  T_{S_1}, T_{S_2})-I_{\mathfrak{M}(\phi_2)}(
  T_{S_2}, T_{S_3})} \ .
\end{align}
We can evaluate this expression explicitly by using \eqref{EQ: A on limit
  1-morphisms} to get
\begin{align}
\log_\zeta \alpha^{\Aa_{\rm parity}^\zeta}_{ \phi_1, \phi_2}=& -\frac{1}{2} \, \Big(\dim
                                      \big(\Lambda_{T_{S_1}}\cap \Lambda_{                                      
                                      \phi_1^\ast\,\phi_2^\ast T_{S_3}} \big) +
  \mu\big(\Lambda_{T_{S_1}},
  \Lambda_{\phi_1^\ast\, \phi_2^\ast T_{S_3}} \big) \nonumber \\ & \qquad \qquad 
  -\dim \big(\Lambda_{T_{S_1}
                                      }\cap \Lambda_{
                                      \phi_1^\ast T_{S_2}} \big)-\mu\big(\Lambda_{T_{S_1}}, \Lambda_{\phi_1^\ast
  T_{S_2}} \big) \nonumber \\ & \qquad \qquad 
  -\dim
                                      \big(\Lambda_{T_{S_2}}\cap
                                      \Lambda_{\phi_2^\ast
                                      T_{S_3}}\big)
                                       -\mu\big(\Lambda_{T_{S_2}},\Lambda_{ \phi_2^\ast T_{S_3}}
  \big) \Big) \ .
\label{eq:2cocycleexplicit}\end{align} 
To calculate the part of the 2-cocycle involving identity 1-morphisms we can use \eqref{EQ:Proof of C2} to get
\begin{align}
\alpha^{\Aa_{\rm parity}^\zeta}_{\phi, \id_{S}}= \alpha^{\Aa_{\rm parity}^\zeta}_{\id_{S},\phi} =
  \zeta^{-I_{ [0,1]\times S}(T_{S},T_{S})}=
  \zeta^{-\frac{1}{4} \dim\ker (\D_{S})} \ ,
\end{align}  
where the last equality follows from \eqref{EQ: A on limit
  1-morphisms}. From a physical point of view, it is natural to assume
this to be equal to $1$ since the identity limit morphism should
still be a non-anomalous symmetry of every quantum field theory. 
We can achieve this by normalising our anomaly quantum field theory
$\Aa_{\rm parity}^\zeta$ to the theory $\tilde{\mathcal{A}}_{\rm parity}^\zeta$ obtained by
redefining 
\begin{align}
\tilde I_{\Sigma}(T_-,T_+) = I_{\Sigma}(T_-,T_+)+
\tfrac{1}{8}\, \big( \dim \ker (\D_{S_-}) + \dim \ker
(\D_{S_+}) \big) \ .
\end{align}   
The proof of Theorem \ref{Theorem: A is extended field theory} then
carries through verbatum with $I_{\Sigma}$ replaced by $\tilde
  I_{\Sigma}$ everywhere.

\begin{example}
We conclude by illustrating how to extend
the partition function discussed in Section~\ref{Sec: path integral parity} to the
anomaly quantum field theory $\Aa_{\rm parity}^{(-1)}$, glossing
over many technical details.
To construct the second quantized Fock space of a quantum
field theory of fermions coupled to a background gauge field on a
Riemannian manifold
$S$, one needs a polarization
\begin{align}
H=H^+\oplus H^-
\end{align}  
of the one-particle Hilbert space $H$ of
wavefunctions, which we take to be the sections of the twisted spinor
bundle $S^{G}_{S}$. If the Dirac Hamiltonian $\D_{S}$ has no zero
  modes, then there exists a canonical polarization given by taking $H^+=H^{>0}$
  (resp. $H^-=H^{<0}$) to be the space spanned by the positive
  (resp. negative) energy eigenspinors. 
Given such a polarization we can define 
\begin{align}
Z_{\rm parity}^{(-1)}(S)= \mbox{$\bigwedge$} H^+\otimes \mbox{$\bigwedge$}
(H^-)^\ast \ ,
\end{align}  
where $\bigwedge H$ denotes the exterior algebra generated by the
vector space $H$. Now time-reversal (or orientation-reversal) symmetry acts
by interchanging $H^+$ and $H^-$, and there is no problem extending
this symmetry to the Fock space $Z_{\rm parity}^{(-1)}(M)$. 

In the case that $\ker(\D_{S})$ is non-trivial, as is the case
for fermionic gapped quantum phases of matter, one could try to declare all zero modes to belong to $H^{>0}$ or $H^{<0}$ and use the corresponding polarization to define a Fock space. We cannot apply this method of quantization, since it breaks orientation-reversal symmetry. 
Therefore, we are forced to use a different polarization compatible
with orientation-reversal symmetry. There is no canonical choice for
such a polarization, but rather a natural family parameterized by Lagrangian subspaces $\Lambda_T \subset \ker(\D_{S})$:
\begin{align}
H^+(\Lambda_T) = H^{>0}\oplus \Lambda_T \qquad \text{and} \qquad H^-(\Lambda_T) = H^{<0}\oplus \Gamma \Lambda_T \ . 
\end{align}  
Since orientation reversion acts proportionally to the chirality
operator $\Gamma$ on spinors, these polarizations are compatible with the symmetry. 
We then get a family of Fock spaces
\begin{align}
Z_{\rm parity}^{(-1)}(S,T)= \mbox{$\bigwedge$} H^+(\Lambda_T)\otimes \mbox{$\bigwedge$}
H^-(\Lambda_T)^\ast = \mbox{$\bigwedge$} H^{>0}\otimes
\mbox{$\bigwedge$} \big(H^{<0}\big)^{ \ast } \otimes F(S,T) \ ,
\end{align} 
where the essential part for our discussion is encoded in the
finite-dimensional vector space
\begin{align} 
F(S,T)=\mbox{$\bigwedge$} \Lambda_T \otimes \mbox{$\bigwedge$} (
\Gamma \Lambda_T)^\ast \ .
\end{align} 
These vector spaces define an element
\begin{align}
F(S)\coloneqq \int^{T\in \mathsf{T}(S)} F(S,T)*T 
\end{align}
of $\mathcal{A}^{(-1)}_{\text{parity}}(S)$, or equivalently a $\C$-linear functor
\begin{align}
Z_{\rm parity}^{(-1)}(S) : \fvs \longrightarrow \mathcal{A}^{(-1)}_{\rm parity}(S) \ .
\end{align}
To define the functor appearing in the coend on morphisms we
fix an ordered basis for every $\Lambda_T$ and assign to a morphism $T_1 \longrightarrow T_2$ 
the linear map induced by sending the fixed basis of $\Lambda_{T_1}$ to the basis of $\Lambda_{T_2}$.    

We sketch how to extend this to a natural symmetric
monoidal 2-transformation,
realising an anomalous quantum field theory $Z_{\rm parity}^{(-1)}$ with parity anomaly
according to Definition~\ref{Definition Anommalous field theory}.
For a 1-morphism $\Sigma \colon S_- \longrightarrow S_+$ we have to construct a natural transformation \begin{align}
 Z^{(-1)}_{\rm parity} (\Sigma)\colon \mathcal{A}^{(-1)}_{\rm parity}(\Sigma)\circ Z^{(-1)}_{\rm parity} (S_-) \Longrightarrow Z^{(-1)}_{\rm parity} (S_+)
\ \ .
\end{align} 
The left-hand side is given by the coend
\begin{align}
\int^{T_+ \in \mathsf{T}(S_+)} \int^{T_-\in \mathsf{T}(S_-)} \left(\Sigma(T_+,T_-)\otimes F(S,T_-)\right) *T_+ \ \ .
\end{align}
This implies that a natural transformation can be defined from a family of compatible linear maps $Z^{(-1)}_\text{\rm parity}(\Sigma)_{T_-,T_+}\colon \left(\Sigma(T_+,T_-)\otimes F(S,T_-)\right)\longrightarrow F(S_+,T_+)$. These should again be given by an appropriate regularization of path integrals, i.e.\ the determinant of the Dirac operator
with appropriate boundary conditions. As before we assume that these maps are well-defined up to a sign. 
To fix the sign, we have to consistently fix reference background
fields on all 1-morphisms. This is possible, for example, by using a connection on the universal bundle and pullbacks along classifying maps. 
Again we can fix the sign at these reference fields to be positive. Using a spectral flow similar to \eqref{eq:spectralflow} with boundary conditions $T_-$ and $T_+$, we can fix the sign for all other field configurations. 
Assuming that this spectral flow can be calculated by the index with
appropriate boundary conditions, we see that these sign ambiguities
satisfy the coherence conditions encoded by
$\mathcal{A}^{(-1)}_{\rm{parity}}$, i.e.\ they define a natural symmetric
monoidal 2-transformation. 
This demonstrates in which sense a field theory with parity anomaly takes values in $\mathcal{A}^{(-1)}_{\rm{parity}}$. 
\end{example}

\chapter{'t Hooft anomalies of discrete gauge theories}\label{Chapter: t Hooft}
This final chapter is concerned with the study of anomalies of 
Dijkgraaf-Witten theories. These are gauge theories with discrete gauge group. 
They provide a mathematically tractable toy modle for the topological aspects of 
quantum gauge theories, but also have applications in the description of symmetry protected
topological phases of matter~\cite{Wang:2014oya,PhysRevB.92.045101,Yoshida:2017xqa,He:2016xpi,Cong2017,Cong:2017ffh,Delcamp2017,Tiwari2017,Wen2017}. We start in Section~\ref{Sec: Parallel transport} by defining
Dijkgraaf-Witten theories as extended functorial field theories by orbifolding an 
invertible field theory constructed from higher flat gerbes. 
In Section~\ref{Sec: DW Sym} we study symmetries of discrete gauge theories and their 't Hooft anomalies. 
In Section~\ref{Sec: Boundary DW} we realize gauged versions of Dijkgraaf-Witten theories with 't Hooft anomaly as 
boundary states of higher dimensional invertible field theories. For this we develop a pushforward construction
for relative field theories generalizing previous work by Schweigert and Woike~\cite{OFK} from ordinary
to relative field theories.  

\section{Parallel transport of higher flat gerbes as an invertible homotopy quantum field theory}\label{Sec: Parallel transport}
In this section we define classical and quantum Dijkgraaf-Witten theories~\cite{DijkgraafWitten} as extended
functorial field theories. The non-extended case goes back to the work of Freed and Quinn~
\cite{FreedQuinn}.
The once extended case has been constructed in 3-dimensions by Morton~\cite{Morton}; 
see also~\cite{FLHT} for a discussion of the fully extended 3-dimensional version. 

We construct the classical Dijkgraaf-Witten theory as a special example of a more general construction.
Let $T$ be a manifold. 
For a line bundle with connection over $T$ we can compute the holonomy along closed
loops in $T$, describing the coupling of a point particle to an electromagnetic background
gauge field. For higher dimensional sigma models appearing for example in string theory one requires 
gauge fields which can be coupled to higher dimensional world volumes. Their global 
geometry is described by gerbes. For an $n$-gerbe we can compute its holonomy over 
an $n+1$-dimensional world volume.  

Gerbes are higher categorical generalizations of line bundles. They are classified 
by the Deligne hypercohomology group $H^{n+1}(T;\mathcal{D}(n+1))$~\cite{Deligne}, i.e.\ 
the hypercohomology of the complex $\mathcal{D}(n+1)$
\begin{align}
\underline{U(1)}_T \xrightarrow{\ \dd \log \ } \Omega^1(T) \overset{\dd }{\longrightarrow} \Omega^2(T)\overset{\dd }{\longrightarrow}  \dots \overset{\dd }{\longrightarrow}  \Omega^{n+2}(T)
\end{align} 
of sheaves on $T$, where $\underline{U(1)}_T$ is the sheaf of smooth $U(1)$-valued 
functions. The curvature of an $n$-gerbe is a $n+2$-form on $T$ and a gerbe is called 
\emph{flat} if its curvature vanishes. Flat $n$-gerbes on $T$ are classified by the 
ordinary (singular) cohomology group $H^{n+1}(T;U(1))$, see for example~\cite[Section 3.1]{Konrad}. For 
$\theta\in H^{n+1}(T;U(1))$ we can compute the holonomy of the gerbe corresponding to $\theta$ for 
an embedded $n+1$-dimensional compact oriented worldvolume 
$\iota \colon M\longrightarrow T$ by 
\begin{align}
\int_M \iota^* \theta = \langle \iota^*\theta, \sigma_M \rangle , 
\end{align}  
where we denote by $\sigma_M\in H_{n+1}(M)$ the fundamental class\footnote{
Recall that the choice of an orientation for a connected orientable compact $n$-dimensional manifold  
$M$ is equivalent to the choice of a generator $\sigma_M$ of $ H_n(M)\cong \Z$. The
generator $\sigma_M $ is called the \emph{fundamental class} of $M$.} of $M$ and by 
\begin{align}
\langle - ,- \rangle\colon H^*(M;U(1))\otimes H_*(M) \longrightarrow U(1) 
\end{align} 
the evaluation of cochains on chains. Hence, flat gerbes and their parallel transport can
be described purely using algebraic topology. For this reason, when working with 
flat gerbes, we can replace the manifold $T$ by an arbitrary topological space. In this section we 
construct for every $n$-cocycle $\theta$ in singular cohomologoy 
an invertible extended functorial field theory describing the parallel transport 
for the corresponding flat gerbe. 

Classical Dijkgraaf-Witten theory is obtained by setting $T=BG$ for a finite group $G$.
We construct the corresponding quantum theory using the 
extended orbifold construction of Schweigert and Woike~\cite{EOFK} and comment on its relation 
to representation theory. 
   
\subsection{Simplifications of the cobordism bicategory}
The background fields $\F$ relevant for classical Dijkgraaf-Witten theories 
are principal bundles with finite structure 
group $G$ and orientations. These background fields are topological in nature and hence lead to huge
simplifications in the description of the bicategory $\ECobF$ introduced in Section~\ref{Sec: Cob bicat}. Furthermore, the simplified version  
does not suffer from the shortcomings mentioned in Remark~\ref{Rem: Shortcomings bicategory}.
In this section we define this simplified bicategory denoted by $\EGCob$.
A convenient way to describe principal $G$-bundles are classifying maps to $BG$. The construction for the 
bordism category works with arbitrary target space $T$. We work in this generality as long as 
it does not cause any technical problems. Functorial field theories based on bordism categories
of this type are called \emph{homotopy quantum field theories} in \cite{turaev2010homotopy}.

The domain of definition for an extended homotopy quantum field theory with target space $T$ is the symmetric monoidal bordism bicategory $\ETCob$ of $T$-bordisms.

\begin{definition}\label{defmscbordcattarget}
For $n\ge 2$ and any non-empty topological space $T$, which we will refer to as the \emph{target space}, we define the bicategory $\ETCob$ as follows: 
	\begin{enumerate}
		\item[(0)] Objects are pairs $(S,\xi)$ consisting of an $n-2$-dimensional oriented closed manifold $S$ and a continuous map $\xi: S \longrightarrow T$.

		\item[(1)]
		A 1-morphism $(\Sigma,\varphi) : (S_0,\xi_0) \longrightarrow (S_1,\xi_1)$ is an oriented compact collared bordism $(\Sigma,\chi_-,\chi_+) : S_0 \longrightarrow S_1$ (by this we mean a compact oriented $n-1$-dimensional manifold $\Sigma$ with boundary equipped with orientation preserving diffeomorphisms $\chi_-\colon [0,1) \times S_0  \longrightarrow \Sigma_-$ and $\chi_+:  (-1,0]\times S_1  \longrightarrow \Sigma_+$ with $\Sigma_- \cup \Sigma_+$ being a collar of $\partial \Sigma$),
and a map $\varphi : \Sigma \longrightarrow T$ making the diagram
\begin{equation}
\begin{tikzcd}
\, & \Sigma \arrow{dd}{\varphi}  & \\
\{0\}\times S_0  \arrow{ru}{\chi_-} \arrow[swap]{dr}{\xi_0\circ \operatorname{pr}_{S_0}} & &  \{0\} \times S_1   \arrow[swap]{lu}{\chi_+} \arrow{dl}{\xi_1\circ \operatorname{pr}_{S_1}} \\
& T  &
\end{tikzcd}
\end{equation}
commute. No compatibility on the collars is assumed.  We define composition of 1-morphisms by gluing of bordisms along collars and maps. The collars are needed to define a smooth structure on the composition. The identities are  cylinders equipped with the trivial homotopy, where trivial means constant along the cylinder axis. 
		
\begin{figure}[hbt]\centering
\begin{overpic}[width=15.5cm]
	{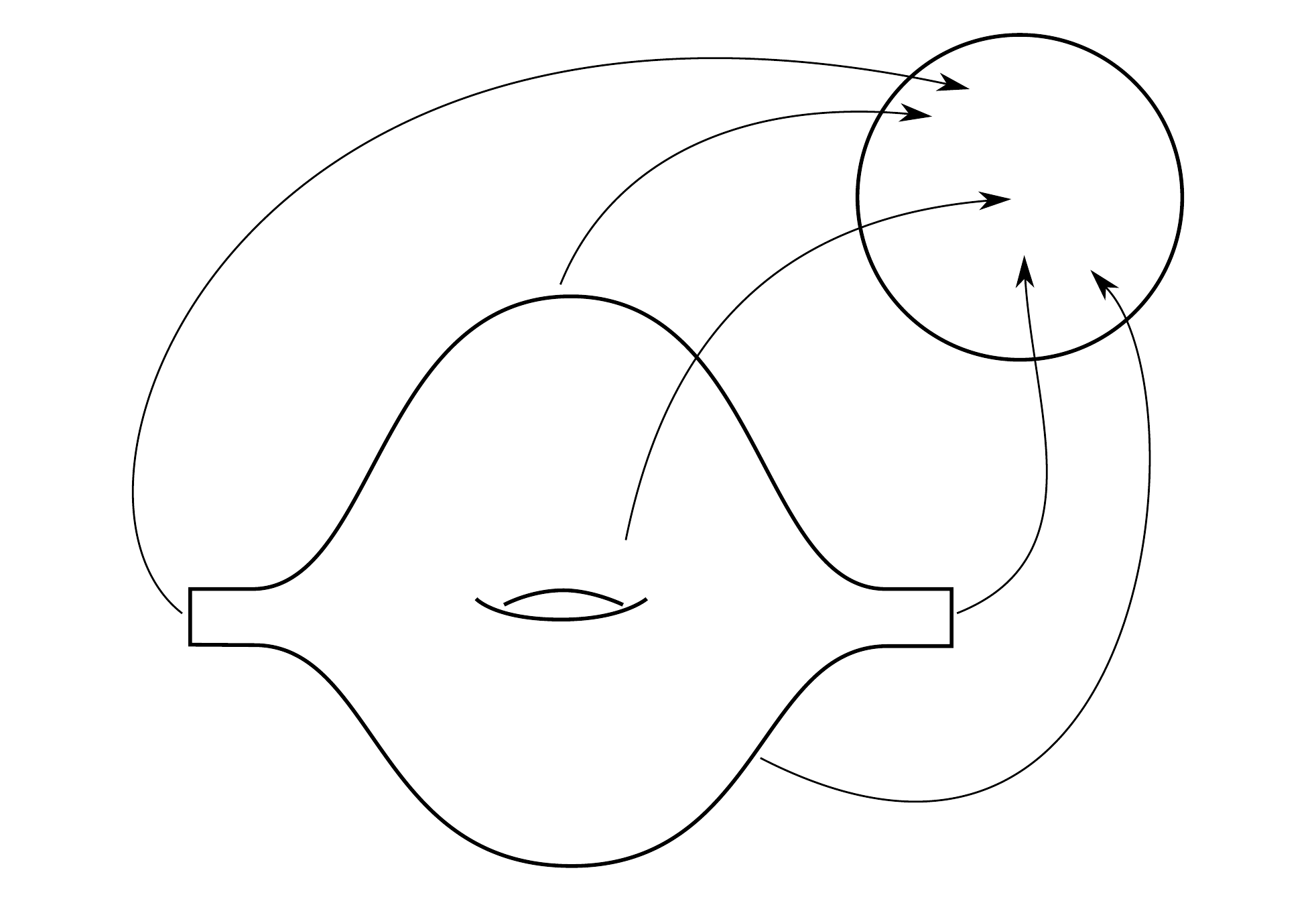}\put(79,53){$T$}\put(40,30){$M$}\put(40,0){$\Sigma_a$}
\put(34,45){$\Sigma_b$}
\put(80,10){$\varphi_a$}\put(40,51){$\varphi_b$}
\put(60,45){$\psi$}\put(-2,35){$ [0,1] \times \xi_0 $}
\put(68,30){$ [0,1] \times \xi_1$}
\end{overpic}
\caption{Sketch of a 2-morphism.}
\label{Fig:Sketch 2-Morphism}
\end{figure}

		\item[(2)]
		A 2-morphism $(\Sigma,\varphi) \Longrightarrow (\Sigma',\varphi')$ between 1-morphisms $(S_0,\xi_0) \longrightarrow (S_1,\xi_1)$ is defined to be an equivalence class of pairs $(M,\psi)$ consisting of an $n$-dimensional collared compact oriented bordism $M : \Sigma \longrightarrow \Sigma'$ with corners and a map $\psi : M \longrightarrow T$ (see Figure \ref{Fig:Sketch 2-Morphism}). Concretely, this consists of a compact oriented 
$\langle 2 \rangle$-manifold $M$ equipped with
\begin{itemize}
\item a decomposition of the 0-boundary $\partial_0 M = \partial_0 M_- \cup \partial_0 M_+$ and orientation preserving diffeomorphisms $\delta_- :  [0,1) \times \Sigma \longrightarrow M_-$ and $\delta_+ : (-1,0] \times\Sigma'   \longrightarrow M_+$ to collars of this decomposition,
			
			\item a decomposition of the 1-boundary $\partial_1 M = \partial_1 M_- \cup \partial_1 M_+$ and orientation preserving diffeomorphisms $\alpha_- :  [0,1] \times [0,1)  \times S_0  \longrightarrow M'_-$ and $\alpha_+ : [0,1]\times  (-1,0] \times S_1 \longrightarrow M'_+$ to collars of this decomposition
		making for some $ \varepsilon >0$ the diagrams
			\begin{equation}
			\label{Condition Collars 1}
			\begin{tikzcd}
			 { [0,\varepsilon) \times [0,1) \times S_0}  \ar{r}{\alpha_-}  \ar[swap]{rd}{\id \times \chi_- }& M &  { [0,\varepsilon) \times  (-1,0] \times S_1}  \ar[swap]{l}{\alpha_+} \ar{ld}{ \id \times \chi_+ } \\
			& {[0,\varepsilon) \times \Sigma}  \ar[swap]{u}{\delta_-}&
			\end{tikzcd},
			\end{equation}
			\begin{equation}
			\label{Condition Collars 2}
			\begin{tikzcd}
			\,          & { (-\varepsilon,0] \times \Sigma'}  \ar{d}{\delta_+}   & \\
			 { (1-\varepsilon,1] \times [0,1)  \times S_0 }  \ar{r}{\alpha_-} \ar{ru}{  \id-1 \times \chi'_- }& M &  (1-\varepsilon,1] \times (-1,0] \times S_1   \ar[swap]{l}{\alpha_+} \ar[swap]{lu}{\id-1 \times \chi'_+ }
			\end{tikzcd}
			\end{equation} and
			\begin{equation}
			\begin{tikzcd}
			\,        & M \ar{dd}{\psi} & \\
			{[0,1]\times S_0}  \sqcup \Sigma  \ar{ru}{\alpha_- \sqcup \delta_-} \ar[swap]{dr}{\xi_0\circ \text{pr}_{S_0} \sqcup \varphi} & & {[0,1] \times S_1  \sqcup \Sigma'} \ar[swap]{lu}{\alpha_+ \sqcup \delta_+} \ar{dl}{\xi_1\circ \text{pr}_{S_1} \sqcup \varphi'} \\
			& T &
			\end{tikzcd}
			\end{equation}
			 commute. Again no compatibility on the collars is assumed.
		\end{itemize}

		We define two pairs $(M,\psi)$ and $(\widetilde M,\widetilde \psi)$ to be equivalent if we can find an orientation-preserving diffeomorphism $\Phi : M \longrightarrow \widetilde M$ such that the diagram
		\begin{equation}
		\begin{tikzcd}
		\,        & M \ar{dd}{\Phi} & \\
		{[0,\epsilon)  \times  \Sigma} \ar{ru}{\delta_-} \ar[swap]{dr}{\widetilde \delta_-} & & {(-\epsilon,0] \times \Sigma' } \ar[swap]{lu}{\delta_+} \ar{dl}{\widetilde \delta_+} \\
		& \widetilde M &
		\end{tikzcd} 
		\end{equation}
		and a similar diagram involving the collars of the 1-boundary commute for small enough $\epsilon > 0$ and if moreover there exists a homotopy relative boundary from $\psi\colon M \longrightarrow T$ to $ \widetilde \psi \circ \Phi \colon M \longrightarrow T$.

\end{enumerate}
In order to define the vertical composition of 2-morphisms we fix a diffeomorphism $[0,2]\longrightarrow [0,1]$ which is equal to the identity on a neighborhood of $0$ and given by $x\longmapsto x-1$ in a neighborhood of $2$.
Now we define the vertical composition by gluing using the collars of 0-boundaries. We use our fixed diffeomorphism to rescale both the ingoing and outgoing 1-collars. 

We define horizontal composition of 2-morphisms by gluing manifolds and maps along 1-boundaries, where the new 0-collars arise from the old ones by restriction to $[0,\varepsilon)$ in a way \eqref{Condition Collars 1} and \eqref{Condition Collars 2} allow us to glue them along the boundary and then rescale the interval.
By disjoint union the structure of a symmetric monoidal bicategory with duals on $\ETCob$ is obtained.
\end{definition} 

\begin{remark}
In contrast to the bicategory constructed in Section~\ref{Sec: Cob bicat} the bicategory $\ETCob$ does 
not contain limit morphisms. The reason for this is that homotopies between maps to $T$ can be implemented by putting the homotopies on cylinders. Furthermore, orientation preserving diffeomorphisms can be implemented via the mapping cylinder construction. 
In general, one has to replace the symmetry groupoid introduced in Section~\ref{Sec: General Theory} by the 2-groupoid consisting of morphisms of this type since homotopies do not compose strictly. In the case of $T=BG$ it is possible to still work 
with a groupoid as we explain in Section~\ref{Sec: Classical DW}. In the present context the truncation $\tr \ETCob$ should as 
well be understood as the restriction to 2-morphisms of this type. This excludes certain invertible cobordisms
not of the form $[0,1]\times \Sigma$, which can appear in high dimensions, see e.g.~\cite[Warning 2.2.8]{Lurie2009a}. 
\end{remark}

\begin{definition}\label{Def: Homotopy QFT}
An \emph{extended homotopy quantum field theory} is a symmetric monoidal 2-functor 
\begin{align}
\ETCob \longrightarrow \Tvs \ \ .
\end{align}
\end{definition}

We denote the endomorphism category of $\emptyset$ in $\ETCob$ by $\TCob$. 
A \emph{homotopy quantum field theory} is a symmetric monoidal functor $\TCob \longrightarrow \fvs$. Via restriction every extended homotopy quantum field theory
induces a homotopy quantum field theory. 
\subsection{The field theory}
To a flat $n-1$-gerbe on a topological space $T$ represented by a singular cocycle $\theta \in Z^n(T;U(1))$ we associate
an extended homotopy quantum field theory
\begin{align} 
T_\theta : \ETCob \longrightarrow \Tvs 
\end{align} which can be understood as the parallel transport operator of $\theta$. The evaluation of $T_\theta$ on an $n$-dimensional closed oriented manifold $M$ together with a map $\psi : M \longrightarrow T$ yields an element in $U(1)$, namely the holonomy of $\theta$ with respect to $\psi$. This section is concerned with the concrete construction
of the field theory $T_\theta$, using similar methods to Section~\ref{Sec: Extended Indext FQFT}. Our construction
has been generalized to non-oriented extended homotopy quantum field theories in~\cite{Young2019}. 

\subsection*{Definition on objects}
Let $(S,\xi)$ be an object of $\ETCob$, i.e.\ a closed oriented $n-2$-dimensional manifold $S$ equipped with a continuous map $\xi : S \longrightarrow T$.
Denote by $\Fund(S)$ the groupoid of fundamental cycles of $S$. Its objects are fundamental cycles of $S$, i.e.\ those elements of $Z_{n-2}(S)$ representing the fundamental class of $S$ in $H_{n-2}(S)$. A morphism $\sigma \longrightarrow \sigma'$ between two fundamental cycles is an $n-1$-chain $\tau$ such that $\partial \tau = \sigma'-\sigma$. Composition is given by addition of $n-1$-chains. 

The extended homotopy quantum field theory $T_\theta$ assigns to $(S,\xi)$ a 2-vector space $T_\theta(S,\xi)$ defined as follows.
The objects in $T_\theta(S,\xi)$ are formal finite sums $\bigoplus_{i=1}^n V_i* \sigma_i$, where the $V_i$ are finite-dimensional complex vector spaces and the $\sigma_i$ are objects in $\Fund (S)$. 
We write $\sigma$ for $\C * \sigma$. 
The space of morphisms between $\sigma,\sigma' \in \Fund(S)$ seen as objects of $T_\theta(S,\xi)$ is given by
\begin{align} 
\Hom_{T_\theta(S,\xi)}(\sigma,\sigma') := \frac{\mathbb{C}[\Hom_{\Fund(S)}(\sigma,\sigma')]}{\sim} \ ,      \label{defmorphismon0cells}
\end{align} 
where $\mathbb{C}[\Hom_{\Fund(S)}(\sigma,\sigma')]$ is the free complex vector space on the set 
$\Hom_{\Fund(S)}(\sigma,\sigma')$, and for two morphisms $\tau,\widetilde \tau : \sigma \longrightarrow \sigma'$ we make 
the identification
\begin{align} 
\widetilde \tau \sim \langle \xi^* \theta,\lambda\rangle \tau\ ,  \label{defmorphismon0cells2} 
\end{align} 
whenever $\widetilde \tau-\tau=\partial \lambda$ for some $\lambda \in C_n(S)$.
Note that in \eqref{defmorphismon0cells2} the choice of $\lambda$ does not matter. 
In order to obtain the morphism spaces between all objects in $T_\theta(S,\xi)$, \eqref{defmorphismon0cells} has to be extended bilinearly, i.e.\
\begin{align}
\Hom_{T_\theta(S,\xi)} \left( \bigoplus_{i=1}^n V_i * \sigma_i\ ,\ \bigoplus_{j=1}^m V_j * \sigma_j \right) = \bigoplus_{i,j} \Hom(V_i,V_j) \otimes \Hom_{T_\theta(S,\xi)} (\sigma_i, \sigma_j)
\end{align} for all formal finite sums. The elements of $\Hom_{T_\theta(S,\xi)} \left( \bigoplus_{i=1}^n \sigma_i\ ,\ \bigoplus_{j=1}^m \sigma_j \right)$ can be
interpreted as matrices with equivalence classes of morphisms in $\Hom_{T_\theta(S,\xi)}(\sigma,\sigma')$ as entries.
Composition is defined by matrix multiplication and composition in $\Fund (S)$. 

The $\fvs$-module structure is given by 
\begin{align}
\begin{split} 
* \colon \fvs\times T_{\theta}(S,\xi) &\longrightarrow T_{\theta}(S,\xi) \\
 V\times \left(\bigoplus_{i=1}^n V_i * \sigma_i\right) & \longmapsto \left(\bigoplus_{i=1}^n (V\otimes V_i) * \sigma_i\right) \ .  
\end{split} 
\end{align}
This completes the definition of the 2-vector space $T_\theta(S,\xi)$. Since all the
objects $\sigma_i$ are isomorphic, $T_\theta(S,\xi)$ has one simple object up to 
isomorphism, i.e.\ it is a 2-line.

\subsection*{Definition on 1-morphisms}
Let $(\Sigma,\varphi) : (S_0,\xi_0) \longrightarrow (S_1,\xi_1)$ be a 1-morphism in $\ETCob$. Again, we denote by $\Fund(\Sigma)$ the groupoid of fundamental cycles of $\Sigma$, i.e.\ the groupoid of relative cycles in $C_{n-1}(\Sigma)$ representing the fundamental class of $\Sigma$ in $H_{n-1}(\Sigma,\partial \Sigma)$. For fundamental cycles $\sigma_0$ and $\sigma_1$ of $S_0$ and $S_1$, respectively, we denote by $\Fund_{\sigma_0}^{\sigma_1}(\Sigma)$ the subgroupoid of $\Fund(\Sigma)$ spanned by all fundamental cycles $\mu$ of $\Sigma$ with $\partial \mu = \sigma_1-\sigma_0$. Here we suppress the inclusion of the ingoing and outgoing boundary into $\Sigma$ in the notation. By \cite[VI.,~Lemma~9.1]{Bredon} the groupoid $\Fund_{\sigma_0}^{\sigma_1}(\Sigma)$ is non-empty and connected. 

In order to define the 2-linear map $T_\theta(\Sigma,\varphi) : T_\theta (S_0,\xi_0) \longrightarrow T_\theta(S_1,\xi_1)$ we define on the free vector space $\mathbb{C}[\Fund_{\sigma_0}^{\sigma_1}(\Sigma)]$ the equivalence relation
\begin{align}
\mu' \sim \langle\varphi^* \theta ,\nu\rangle \mu 
\end{align}
 for any $\nu \in C_n(\Sigma)$ such that $\partial \nu =\mu-\mu'$. We use the notation
\begin{align}
\Sigma^\varphi \spr{\sigma_1, \sigma_0} := \frac{\mathbb{C}[\Fund_{\sigma_0}^{\sigma_1}(\Sigma)]}{\sim}
\end{align} for the quotient and observe that $\Sigma^\varphi \spr{-,-}$ extends to a functor $\Fund^{\opp}(S_1)\times \Fund(S_0)\longrightarrow \fvs$, which is
defined on a morphism $\lambda : \sigma_1 \longrightarrow \sigma_2$ in $\Fund(S_1)$ by 
\begin{align}
\begin{split} 
\Sigma^\varphi \spr{-,\sigma} (\lambda):  \Sigma^\varphi \spr{\sigma_2,\sigma} & \longrightarrow \Sigma^\varphi \spr{\sigma_1,\sigma}   \\
\mu &\longmapsto \mu- \lambda \
\end{split}  
\end{align}
and on a morphism $\lambda : \sigma_1 \longrightarrow \sigma_2$ in $\Fund(S_0)$ by 
\begin{align}
\begin{split} 
\Sigma^\varphi \spr{\sigma,-} (\lambda):  \Sigma^\varphi \spr{\sigma,\sigma_1} & \longrightarrow \Sigma^\varphi \spr{\sigma,\sigma_2}   \\
\mu &\longmapsto \mu - \lambda \
\end{split}  
\end{align}
A straightforward calculation shows that this is well-defined.

Now $T_\theta(\Sigma,\varphi) : T_\theta (S_0,\xi_0) \longrightarrow T_\theta(S_1,\xi_1)$ is defined on objects by the coend
\begin{align} 
T_\theta(\Sigma,\varphi) \sigma_0 := \int^{\sigma_1 \in \Fund(S_1)} \Sigma^\varphi \spr{\sigma_1,\sigma_0} * \sigma_1 
\end{align} 
and linear extension.
Here, the coend can be replaced by an end, since it is taken over a groupoid and limits and colimits over essentially finite groupoids taken in a 2-vector space coincide.

\subsection*{Definition on 2-morphisms}
Let $(M,\psi): (\Sigma_a,\varphi_a) \Longrightarrow (\Sigma_b,\varphi_b)$ be a  2-morphism  
between 1-morphisms $(\Sigma_a,\varphi_a) $ and $ (\Sigma_b,\varphi_b)$ from $ (S_0,\xi_0)$ to $ (S_1,\xi_1)$ in $\ETCob$. We assign to $(M,\psi)$ the 2-morphism $T_\theta(\Sigma_a,\varphi_a) \Longrightarrow T_\theta(\Sigma_b,\varphi_b)$ between the 1-morphisms $T_\theta(\Sigma_a,\varphi_a), T_\theta(\Sigma_b,\varphi_b) : T_\theta(S_0,\xi_0) \longrightarrow T_\theta(S_1,\xi_1)$ consisting of the natural maps
\begin{align}
T_\theta(\Sigma_a,\varphi_a) \sigma_0 \longrightarrow T_\theta(\Sigma_b,\varphi_b) \sigma_0
\end{align} for $\sigma_0 \in \Fund(S_0)$ which are the maps between the respective coends induced by the natural transformation
\begin{align}
T_\theta(M)_{\sigma_1,\sigma_0} : \Sigma^{\varphi_a}_a(\sigma_1,\sigma_0)  \longrightarrow\Sigma^{\varphi_b}_b(\sigma_1,\sigma_0) \label{eqnmapbetweencoendsinduce}  
\end{align} defined as follows: For $\mu_a \in \Fund_{\sigma_0}^{\sigma_1}(\Sigma_a)$  we can find a fundamental cycle 
$\nu$ of $M$ with 
\begin{align}
 \partial \nu = \mu_b - \mu_a +( [0,1] \times \sigma_0- [0,1] \times \sigma_1 )
 \label{EQ: Condition Fundamental cycle on manifolds with corners}
\end{align} 
for some fundamental cycle $\mu_b \in \Fund_{\sigma_0}^{\sigma_1}(\Sigma_b)$.
Mapping $\mu_a$ to $\langle \psi^* \theta,\nu\rangle [\mu_b]$ yields a well-defined linear map
$\mathbb{C}[\Fund_{\sigma_0}^{\sigma_1}(\Sigma_a)] \longrightarrow \Sigma_b^{\varphi_b}(\sigma_1,\sigma_0)$, which descends to $\Sigma_a^{\varphi_a}(\sigma_1,\sigma_0)$ and gives us the needed map \eqref{eqnmapbetweencoendsinduce}.
\subsection*{Proof that $T_\theta$ is an extended homotopy quantum field theory} 
To complete the definition of $T_\theta$ we still have to specified
the necessary coherence isomorphisms. We will do this in the proof of the 
following theorem  

\begin{theorem}\label{mainthm}
	For any topological space $T$ and any cocycle $\theta \in Z^n(T;\U(1))$,
	\begin{align} 
	T_\theta : \ETCob \longrightarrow \Tvs 
	\end{align} is an invertible homotopy quantum field theory with target $T$.
	\end{theorem}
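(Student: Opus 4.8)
The plan is to follow the template of the proof of Theorem~\ref{Theorem: A is extended field theory}, replacing the index-theoretic datum $I_\Sigma(T_-,T_+)$ by the holonomy pairing $\langle\varphi^*\theta,-\rangle$ and the boundary-condition category $\mathsf{T}(S)$ by the groupoid $\Fund(S)$ of fundamental cycles. The essential structural coincidence is that both assignments are of ``local term plus global boundary contribution'' type, combined through coends, so the same formal machinery applies: Fubini for coends (Theorem~\ref{Thm: Fubini}) and the enriched coYoneda lemma (Example~\ref{Exa: limits as ends}). The construction of $T_\theta$ on objects, $1$-morphisms and $2$-morphisms is already given; what remains is to supply the coherence isomorphisms, check the $2$-functor and symmetric-monoidal axioms, and establish invertibility.

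First I would supply the compositor. For composable $1$-morphisms $(\Sigma_1,\varphi_1)\colon(S_0,\xi_0)\to(S_1,\xi_1)$ and $(\Sigma_2,\varphi_2)\colon(S_1,\xi_1)\to(S_2,\xi_2)$, Fubini gives
\begin{align}
T_\theta(\Sigma_2,\varphi_2)\circ T_\theta(\Sigma_1,\varphi_1)\,\sigma_0 \cong \int^{\sigma_2\in\Fund(S_2)}\Big(\int^{\sigma_1\in\Fund(S_1)}\Sigma_2^{\varphi_2}\spr{\sigma_2,\sigma_1}\otimes\Sigma_1^{\varphi_1}\spr{\sigma_1,\sigma_0}\Big)*\sigma_2 \ ,
\end{align}
and the natural isomorphism $\Phi_{\Sigma_1,\Sigma_2}$ onto $T_\theta(\Sigma_2\circ\Sigma_1)\sigma_0=\int^{\sigma_2}(\Sigma_2\circ\Sigma_1)^{\varphi}\spr{\sigma_2,\sigma_0}*\sigma_2$ is induced by the gluing-of-cycles map $(\mu_2,\mu_1)\mapsto\mu_1+\mu_2$, which sends relative fundamental cycles with $\partial\mu_1=\sigma_1-\sigma_0$ and $\partial\mu_2=\sigma_2-\sigma_1$ to a fundamental cycle of $\Sigma_2\circ\Sigma_1$ with boundary $\sigma_2-\sigma_0$. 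Well-definedness modulo the relation $\sim$ follows because changing $\mu_i$ by $\partial\lambda_i$ rescales by $\langle\varphi^*\theta,\lambda_i\rangle$ and these factors multiply correctly. The unitor $T_\theta(\id_S)\cong\id$ is obtained exactly as in the parity case from coYoneda, using that $\Fund([0,1]\times S)$ is the cylinder on $\Fund(S)$.

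Next I would verify the coherence axioms \eqref{EQ1: Definition 2Functor} and \eqref{EQ2: Definition 2Functor}, the naturality of $\Phi$ with respect to $2$-morphisms, and the well-definedness of the action on $2$-morphisms. All of these reduce to additivity of chains under gluing together with the cocycle identity $\langle\theta,\partial(-)\rangle=1$, equivalently $\delta\theta=0$. Associativity of $\Phi$ holds since $(\mu_1+\mu_2)+\mu_3=\mu_1+(\mu_2+\mu_3)$ and the attendant holonomy factors cancel, precisely as the commuting square in the proof of Theorem~\ref{Theorem: A is extended field theory} closes. For the $2$-morphism assignment $\mu_a\mapsto\langle\psi^*\theta,\nu\rangle[\mu_b]$ of \eqref{EQ: Condition Fundamental cycle on manifolds with corners}, independence of the chosen filling $\nu$ is the statement that two admissible choices differ by a boundary on which $\theta$ pairs trivially, and invariance under the defining equivalence (orientation-preserving diffeomorphism plus homotopy rel boundary) follows because homotopic maps pull $\theta$ back to cohomologous cocycles. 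Vertical and horizontal composition of $2$-morphisms are then multiplicative because fundamental cycles of glued $\langle 2\rangle$-manifolds are sums and the pairing is additive in its chain argument.

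Finally, the monoidal structure comes from the canonical identifications $\Fund(S\sqcup S')\cong\Fund(S)\times\Fund(S')$ and $\Fund(\emptyset)\cong *$, inducing the equivalences $\chi_{S,S'}^{-1}\colon T_\theta(S\sqcup S')\to T_\theta(S)\boxtimes T_\theta(S')$ and $\iota^{-1}\colon T_\theta(\emptyset)\to\fvs$; the remaining symmetric-monoidal coherence data are forced and checked as in Theorem~\ref{Theorem: A is extended field theory}. Invertibility holds because each $T_\theta(S,\xi)$ has a single isomorphism class of simple objects, hence is a $2$-line, while $T_\theta$ sends $1$-morphisms to functors with one-dimensional Hom-spaces and $2$-morphisms to multiplication by the $U(1)$-valued holonomy; thus $T_\theta$ factors through $\mathsf{Pic}_2(\Tvs)$ and is invertible. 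The main obstacle is purely bookkeeping: verifying the $2$-functor associativity axiom and the interchange law for $2$-morphisms simultaneously, where one must track carefully which boundary fillings are contracted and confirm that every spurious holonomy factor is annihilated by the cocycle condition.
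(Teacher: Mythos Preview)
Your proposal is correct and follows essentially the same route as the paper: construct the compositor via Fubini for coends and the cycle-gluing map $(\mu_1,\mu_2)\mapsto\mu_1+\mu_2$, the unitor via coYoneda, and reduce all coherence and composition checks to additivity of the holonomy pairing on glued chains. The one substantive ingredient you invoke without comment is that $\mu_1+\mu_2$ really is a relative fundamental cycle of the glued bordism (and the analogous statement for $\langle 2\rangle$-manifolds underlying vertical composition of $2$-morphisms); the paper isolates this as a separate Gluing Lemma (Lemma~\ref{lemmagluing2man} and Corollary~\ref{lemmamscgluingcyclescob}), proved by a Mayer--Vietoris argument with the long exact sequence of a triple, and it is the only step that requires more than the formal bookkeeping you describe.
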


The proof of Theorem \ref{mainthm} will be split into different parts which will occupy a large part of the remainder of this section.
We start by proving a useful Gluing Lemma for $\langle 2\rangle$-manifolds:
\begin{lemma}[Gluing Lemma for $\langle 2\rangle$-manifolds]\label{lemmagluing2man}
Consider two $n$-dimensional $\langle 2 \rangle$-manifolds $M_1$ and $M_2$ with representatives for the fundamental class $\nu_1$ and $\nu_2$ such that $\partial \nu_i = \mu_{i,0} + \mu_{i,1}$ for $i=1,2$, where $\mu_{i,0}$ and $\mu_{i,1}$ are representatives for the fundamental class of the 0 and 1 boundary, respectively (note that this implies $\partial \mu_{i,0}= -\partial \mu_{i,1}$). 
Now assume we have an orientation reversing diffeomorphism from a connected component $\Sigma$ of, say, the 1 boundary of $M_1$ onto the 1 boundary of $M_2$ compatible with the fundamental cycles picked above, i.e.\ $\mu_{0,1}|_{\Sigma}= - \mu_{1,1}|_{\Sigma}$. Then $\nu_1+\nu_2$ is a representative of the fundamental class of the manifold obtained by gluing $M_1$ and $M_2$ (see Figure \ref{Fig:Sketch Composition}) along $\Sigma$. 
\end{lemma}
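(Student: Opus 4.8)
The plan is to characterize the fundamental class through local orientation data and then verify the characterizing condition point by point on the glued manifold $M := M_1 \cup_\Sigma M_2$, the only delicate point being at the gluing locus itself.

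First I would make the glued object precise. Since $\Sigma$ is a connected component of the boundary of $M_1$ glued to a component of the boundary of $M_2$ by an \emph{orientation-reversing} diffeomorphism, the orientations of $M_1$ and $M_2$ assemble into a global orientation of $M$, and $M$ is again a compact oriented $\langle 2 \rangle$-manifold whose boundary is obtained from $\partial M_1 \sqcup \partial M_2$ by deleting the two copies of $\Sigma$ and identifying the corner points lying on $\partial \Sigma$. I would then check that $\nu := \nu_1 + \nu_2$ is a relative $n$-cycle of $(M,\partial M)$. Indeed
\[
\partial \nu = \mu_{1,0} + \mu_{1,1} + \mu_{2,0} + \mu_{2,1},
\]
and the portions of $\mu_{1,1}$ and $\mu_{2,1}$ supported on $\Sigma$ cancel by the compatibility hypothesis $\mu_{1,1}|_\Sigma = -\mu_{2,1}|_\Sigma$ (this cancellation includes the corner cycle on $\partial \Sigma$, where the two representatives are identified with opposite signs). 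Hence $\partial\nu$ is supported on $\partial M$, so $\nu \in Z_n(M,\partial M)$.

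The heart of the argument is that $\nu$ represents the fundamental class, which I would establish via the standard local criterion: a relative $n$-cycle of a compact oriented $n$-manifold with corners is a fundamental cycle if and only if its image in $H_n(M, M \setminus \{x\}) \cong \mathbb{Z}$ is the local orientation generator for every interior point $x$. For $x$ in the interior of $M_1$, a small Euclidean neighbourhood $U$ is disjoint from $M_2$, so $\nu_2$ is a chain in $M \setminus \{x\}$ and maps to $0$; by excision $H_n(M, M \setminus \{x\}) \cong H_n(M_1, M_1 \setminus \{x\})$, and $\nu_1$ maps to the generator because it is a fundamental cycle of $M_1$. The case of $x$ in the interior of $M_2$ is symmetric. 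For $x \in \Sigma$, which is now an interior point of $M$, I would choose a Euclidean neighbourhood $U \cong \mathbb{R}^n$ with $U \cap M_i =: U_i$ a closed half-space and $U_1 \cap U_2 = U \cap \Sigma$ the bounding hyperplane; by excision $H_n(M, M \setminus \{x\}) \cong H_n(U, U \setminus \{x\})$. There $\nu|_U$ is the sum of two relative chains supported in $U_1$ and $U_2$ whose boundary components along $U \cap \Sigma$ cancel by the compatibility condition, and a short Mayer--Vietoris computation for the decomposition $U = U_1 \cup U_2$ (or an explicit small model) shows that this sum is exactly the local orientation generator.

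I expect the principal obstacle to be the bookkeeping at the gluing locus, in particular the behaviour near the corner points $\partial \Sigma$: one must confirm both that the relative-cycle condition of Step two holds there and that the local model at an interior point of $\Sigma$ genuinely produces the orientation generator rather than a cancelling or doubled contribution. Once the compatibility hypothesis is used to match the two half-space contributions along $U \cap \Sigma$, the remaining verifications are the routine excision and Mayer--Vietoris steps sketched above, completing the proof.
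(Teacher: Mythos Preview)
Your proposal is correct and takes a genuinely different route from the paper. You argue via the \emph{local} characterisation of fundamental classes: a relative $n$-cycle represents the fundamental class iff it restricts to the local orientation generator in $H_n(M, M\setminus\{x\})$ at every interior point $x$, and you verify this pointwise, splitting into points in the interior of $M_1$, of $M_2$, and on the interior of $\Sigma$. The paper instead works \emph{globally}: it uses the long exact sequence of the triple $\partial M \subset \partial M \cup \Sigma \subset M$, computes the relevant terms via excision to obtain the exact sequence
\[
0 \longrightarrow H_n(M,\partial M) \longrightarrow H_n(M_1,\partial M_1) \oplus H_n(M_2,\partial M_2) \longrightarrow H_{n-1}(\Sigma,\partial\Sigma),
\]
identifies $H_n(M,\partial M)$ with $\mathbb{Z}\,([\nu_1]\oplus[\nu_2])$, and then checks the resulting orientation agrees with the given one at a single interior point. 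Your approach is more elementary and avoids setting up the triple sequence and excision isomorphisms, at the cost of the case analysis and the local Mayer--Vietoris/half-space computation at points of $\Sigma$ (which you correctly flag as the only nontrivial step). The paper's approach is cleaner algebraically and yields the injectivity $H_n(M,\partial M)\hookrightarrow H_n(M_1,\partial M_1)\oplus H_n(M_2,\partial M_2)$ as a byproduct; your approach makes the pointwise compatibility with orientations completely explicit. Either is perfectly adequate here.
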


\begin{figure}[htb]\centering
\begin{overpic}[width=10.5cm
]{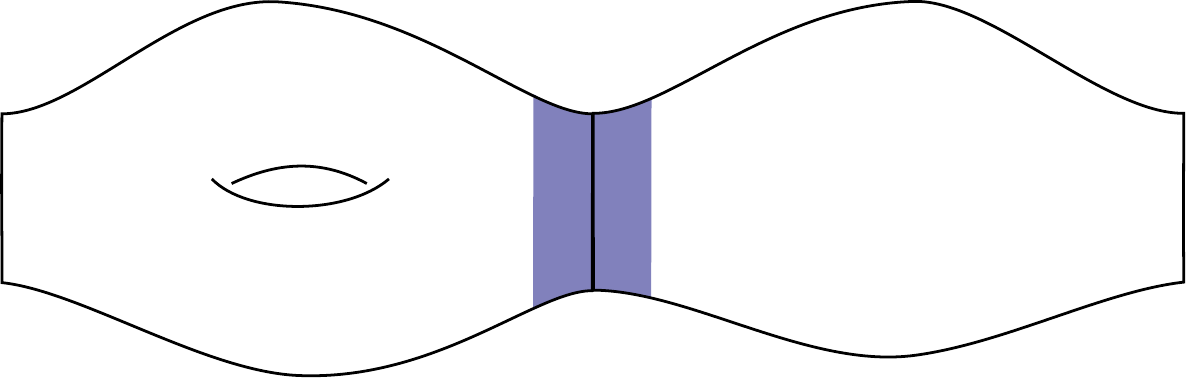}\put(25,20){$M_1$}\put(75,15){$M_2$}\put(50,2){${\color{blue}U}$}\put(50.5,14){$\Sigma$}\end{overpic}\caption{Sketch of the manifolds involved in lemma 2.8.}\label{Fig:Sketch Composition}\end{figure}

\begin{proof}
We denote the  composition of $M_1$ and $M_2$ by $M$.
Obviously, $\nu_1+\nu_2$ is a cycle relative $\partial M$. We have to show that it represents the fundamental class of $M$.
To this end, we use the long exact sequence 
\begin{small}
\begin{align}\label{eqnles1}
\dots \rightarrow H_n(\partial M \cup \Sigma, \partial M) \rightarrow H_{n}(M,\partial M)\rightarrow H_{n}(M, \partial M \cup \Sigma)\rightarrow H_{n-1}(\partial M \cup \Sigma, \partial M)\rightarrow \dots
\end{align}
\end{small}
in homology associated to the triple $\partial M \subset \partial M \cup \Sigma \subset M$ and compute the relevant terms occurring in it by means of a collar $U \cong  (-1,1)  \times \Sigma$ in $M$ (see Figure \ref{Fig:Sketch Composition} for a pictorial presentation):
\begin{itemize}
	\item For the computation of $H_*(\partial M \cup \Sigma, \partial M)$ we define $V:= \partial M \cap U$ and find by excision of the complement $W:= \partial M \setminus V$ of $V$ in $\partial M$
	\begin{align}
	H_*(\partial M \cup \Sigma, \partial M) \cong H_*((\partial M \cup \Sigma) \setminus W, \partial M \setminus W). 
	\end{align} Since the inclusion $(\Sigma,\partial \Sigma) \longrightarrow ((\partial M \cup \Sigma) \setminus W, \partial M \setminus W)$ is a homotopy equivalence, we arrive at
	\begin{align} H_*(\partial M \cup \Sigma, \partial M) \cong H_*(\Sigma,\partial \Sigma).\label{eqngluing1}\end{align} 
	\item For the computation of $H_*(M,\partial M \cup \Sigma)$ we use that the inclusion $(M,\partial M \cup \Sigma) \longrightarrow (M,\partial M \cup U)$ is a homotopy equivalence. After excising $\Sigma$ in $(M,\partial M \cup U)$ we find
	\begin{align}
	H_*(M,\partial M \cup \Sigma) \cong H_*(M_0 \setminus \Sigma,\partial M_0 \cup U_-) \oplus H_*(M_1 \setminus \Sigma,\partial M_1 \cup U_+),
	\end{align} where $U_-$ and $U_+$ is the image of $ (-1,0) \times \Sigma $ and $ (0,1) \times \Sigma $ in $U$, respectively. Since the inclusion $(M_0 \setminus \Sigma,\partial M_0 \cup U_-) \longrightarrow (M_0,\partial M_0 \cup U_- \cup \Sigma)$ induces an isomorphism in homology and since $\Sigma \longrightarrow U_-\cup \Sigma$ is a homotopy equivalence, we obtain $ H_*(M_0 \setminus \Sigma,\partial M_0 \cup U_-)\cong H_*(M_0,\partial M_0)$ and analogously $ H_*(M_1 \setminus \Sigma,\partial M_1 \cup U_+) \cong  H_*(M_1 ,\partial M_1 )$. Thus, we are left with
	\begin{align} H_*(M,\partial M \cup \Sigma) \cong H_*(M_0,\partial M_0) \oplus H_*(M_1 ,\partial M_1 ).\label{eqngluing2}\end{align}
 \end{itemize}
Using \eqref{eqngluing1} and \eqref{eqngluing2} we obtain from \eqref{eqnles1} the exact sequence
\begin{align}
 0 \longrightarrow H_{n}(M,\partial M)\longrightarrow H_{n}(M_0, \partial M_0) \oplus H_{n}(M_1, \partial M_1)\longrightarrow H_{n-1}(\Sigma, \partial \Sigma) \ \    , 
\end{align} where the morphism $H_{n}(M_0, \partial M_0) \oplus H_{n}(M_1, \partial M_1)\longrightarrow H_{n-1}(\Sigma, \partial \Sigma)$ takes $[\nu_1]\oplus [\nu_2]$ to $[\partial \nu_1 +\partial \nu_2]$. 
Evaluating the kernel of this morphism yields an isomorphism $H_{n}(M,\partial M)\cong \Z ([\nu_1]\oplus [\nu_2])$. This shows that $\nu_0+\nu_1$ is a generator of $H_{n}(M,\partial M)$, i.e.\ an orientation of $M$. This orientation agrees with the orientation of $M$ in a neighborhood of an arbitrary point away from the gluing boundary, hence they agree.
\end{proof}
As a corollary we get the following:

\begin{corollary}[Gluing Lemma for manifolds with boundary]\label{lemmamscgluingcyclescob}
	Let $\Sigma_a \colon S_0 \longrightarrow S_1$ and $\Sigma_b \colon S_1 \longrightarrow S_2$ be $n-1$-dimensional cobordism, and let $\nu \in C_{n-1}(\Sigma_a)$ and $\nu' \in C_{n-1}(\Sigma_b)$ be fundamental cycles with $\partial \nu =  \sigma_1 -  \sigma_0$ and $\partial \nu' =  \sigma_2 -  \sigma_1$ for fixed fundamental cycles $\sigma_j \in Z_{n-2}(S_j)$, $j=0,1,2$. Then by $\nu'\circ \nu:= \nu'+\nu \in  C_{n-1}(\Sigma_b \circ \Sigma_a) $ we get a fundamental cycle of $\Sigma_b \circ \Sigma_a$ satisfying $\partial (\nu'\circ \nu) =  \sigma_2 -  \sigma_0$.
\end{corollary}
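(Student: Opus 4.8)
The boundary computation is immediate: since the inclusions $\Sigma_a,\Sigma_b \hookrightarrow \Sigma_b\circ\Sigma_a$ let us regard $\nu$ and $\nu'$ as chains in $C_{n-1}(\Sigma_b\circ\Sigma_a)$, additivity of $\partial$ gives $\partial(\nu'+\nu)=\partial\nu'+\partial\nu=(\sigma_2-\sigma_1)+(\sigma_1-\sigma_0)=\sigma_2-\sigma_0$, as claimed. The real content is to show that $\nu'\circ\nu:=\nu'+\nu$ represents the fundamental class, i.e.\ generates $H_{n-1}(\Sigma_b\circ\Sigma_a,\partial(\Sigma_b\circ\Sigma_a))$. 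This is exactly the one-dimension-lower analogue of Lemma~\ref{lemmagluing2man}, with the gluing locus being the \emph{closed} manifold $S_1$ sitting in the interior of $\Sigma_b\circ\Sigma_a$, rather than a $1$-boundary face with its own corners. The plan is therefore to replay the homological argument of Lemma~\ref{lemmagluing2man} in this setting, where it in fact simplifies because $S_1$ has empty boundary.

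Concretely, I would write $W:=\Sigma_b\circ\Sigma_a$, note $\partial W=S_0\sqcup S_2$ (up to orientation) and that $S_1\subset W$ is disjoint from $\partial W$, and run the long exact sequence of the triple $\partial W\subset \partial W\sqcup S_1\subset W$. The two terms are computed just as in the lemma: by disjointness one has $H_*(\partial W\sqcup S_1,\partial W)\cong H_*(S_1)$, and choosing a two-sided collar $U\cong(-1,1)\times S_1$, deformation retracting $U$ onto $S_1$ and excising $S_1$ splits $H_*(W,\partial W\sqcup S_1)\cong H_*(\Sigma_a,\partial\Sigma_a)\oplus H_*(\Sigma_b,\partial\Sigma_b)$. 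Since $\dim S_1=n-2$ we have $H_{n-1}(S_1)=0$, so the relevant portion of the sequence reads
\[
0 \longrightarrow H_{n-1}(W,\partial W) \longrightarrow H_{n-1}(\Sigma_a,\partial\Sigma_a)\oplus H_{n-1}(\Sigma_b,\partial\Sigma_b) \overset{\delta}{\longrightarrow} H_{n-2}(S_1),
\]
where, exactly as in the lemma, $\delta$ sends $[\nu]\oplus[\nu']$ to the class of the $S_1$-part of $\partial\nu+\partial\nu'$.

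The key cancellation is that the $S_1$-parts of $\partial\nu=\sigma_1-\sigma_0$ and $\partial\nu'=\sigma_2-\sigma_1$ are $\sigma_1$ and $-\sigma_1$, so $\delta([\nu]\oplus[\nu'])=[\sigma_1]-[\sigma_1]=0$; hence $[\nu]\oplus[\nu']\in\ker\delta$. Evaluating this kernel (componentwise if $S_1$ is disconnected, so that the cancellation holds in each $\Z$-summand of $H_{n-2}(S_1)$) identifies $H_{n-1}(W,\partial W)\cong\Z$ with the subgroup generated by $[\nu]\oplus[\nu']$, whose preimage is $[\nu+\nu']$; thus $\nu+\nu'$ generates $H_{n-1}(W,\partial W)$ and is a fundamental cycle. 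I expect the main obstacle to be exactly this identification of $\delta$ together with the final orientation-sign check — namely, verifying that the generator we produce is the fundamental class for the \emph{given} orientation of $W$ rather than its negative. As in the lemma, this follows because $\nu,\nu'$ are fundamental cycles for the orientations of $\Sigma_a,\Sigma_b$ that glue to the orientation of $W$, so $[\nu+\nu']$ restricts to the correct local orientation at any point away from $S_1$ and therefore equals the fundamental class.
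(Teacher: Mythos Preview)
Your proposal is correct and essentially the same as the paper's approach: the paper presents this result as an immediate corollary of Lemma~\ref{lemmagluing2man} without a separate proof, since a cobordism $\Sigma_a\colon S_0\to S_1$ is a $\langle 2\rangle$-manifold with empty corners (taking $S_0$ and $S_1$ as the two faces), so the lemma applies verbatim in dimension $n-1$. You instead replay the homological argument in this simpler setting rather than invoking the lemma as a black box, but the underlying mechanism---the long exact sequence of the triple, the excision/collar computation, and the orientation check at a point away from the gluing locus---is identical.
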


Using Corollary~\ref{lemmamscgluingcyclescob} we show that $T_\theta$ respects the composition of 1-morphisms up to coherent natural isomorphism. This is a crucial part of the 2-functoriality of $T_\theta$. 

\begin{lemma}\label{lemmacompof1mor}
$T_\theta$ respects the composition of 1-morphisms up to coherent natural isomorphism.
\end{lemma}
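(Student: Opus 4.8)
The plan is to construct, for composable 1-morphisms $(\Sigma_a,\varphi_a)\colon (S_0,\xi_0)\longrightarrow(S_1,\xi_1)$ and $(\Sigma_b,\varphi_b)\colon(S_1,\xi_1)\longrightarrow(S_2,\xi_2)$, a natural isomorphism $\Phi_{\Sigma_a,\Sigma_b}\colon T_\theta(\Sigma_b)\circ T_\theta(\Sigma_a)\Longrightarrow T_\theta(\Sigma_b\circ\Sigma_a)$, in direct analogy with the isomorphism $\Phi$ built in the proof of Theorem~\ref{Theorem: A is extended field theory}. First I would unfold the left-hand side on a generator $\sigma_0\in\Fund(S_0)$. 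Since $T_\theta(\Sigma_b)$ is a $\C$-linear functor it preserves coends, so
\[
T_\theta(\Sigma_b)\,T_\theta(\Sigma_a)\sigma_0\cong\int^{\sigma_1\in\Fund(S_1)}\Sigma_a^{\varphi_a}\spr{\sigma_1,\sigma_0}\otimes\int^{\sigma_2\in\Fund(S_2)}\Sigma_b^{\varphi_b}\spr{\sigma_2,\sigma_1}*\sigma_2.
\]
Using that $\fvs$ is symmetric monoidal and applying Fubini's theorem for coends (Theorem~\ref{Thm: Fubini}) rewrites this as
\[
\int^{\sigma_2\in\Fund(S_2)}\Bigl(\int^{\sigma_1\in\Fund(S_1)}\Sigma_b^{\varphi_b}\spr{\sigma_2,\sigma_1}\otimes\Sigma_a^{\varphi_a}\spr{\sigma_1,\sigma_0}\Bigr)*\sigma_2,
\]
whereas by definition $T_\theta(\Sigma_b\circ\Sigma_a)\sigma_0=\int^{\sigma_2}(\Sigma_b\circ\Sigma_a)^{\varphi}\spr{\sigma_2,\sigma_0}*\sigma_2$, with $\varphi$ the map glued from $\varphi_a$ and $\varphi_b$. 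It therefore suffices to produce a natural isomorphism of the inner coend with $(\Sigma_b\circ\Sigma_a)^{\varphi}\spr{\sigma_2,\sigma_0}$.

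The comparison map is supplied by the Gluing Lemma. For representatives $\mu_a\in\Fund_{\sigma_0}^{\sigma_1}(\Sigma_a)$ and $\mu_b\in\Fund_{\sigma_1}^{\sigma_2}(\Sigma_b)$, Corollary~\ref{lemmamscgluingcyclescob} shows that $\mu_b+\mu_a$ is a fundamental cycle of $\Sigma_b\circ\Sigma_a$ with $\partial(\mu_b+\mu_a)=\sigma_2-\sigma_0$, so I set $[\mu_b]\otimes[\mu_a]\longmapsto[\mu_b+\mu_a]$. I would then verify three things. Well-definedness on the quotients defining $\Sigma_b^{\varphi_b}\spr{-,-}$ and $\Sigma_a^{\varphi_a}\spr{-,-}$ follows from additivity of the pairing: if $\partial\nu_a=\mu_a-\mu_a'$ in $\Sigma_a$ and $\partial\nu_b=\mu_b-\mu_b'$ in $\Sigma_b$, then $\partial(\nu_a+\nu_b)=(\mu_b+\mu_a)-(\mu_b'+\mu_a')$ in $\Sigma_b\circ\Sigma_a$ and $\langle\varphi_a^*\theta,\nu_a\rangle\langle\varphi_b^*\theta,\nu_b\rangle=\langle\varphi^*\theta,\nu_a+\nu_b\rangle$ because $\varphi$ restricts to $\varphi_a$ and $\varphi_b$. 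The cowedge (dinaturality) condition for the inner coend is a one-line boundary computation: for $\lambda\colon\sigma_1\to\sigma_1'$ acting covariantly on the $\Sigma_b$-slot by $\mu_b\mapsto\mu_b-\lambda$ and contravariantly on the $\Sigma_a$-slot by $\mu_a\mapsto\mu_a-\lambda$, both composites with the gluing map equal $[\mu_b+\mu_a-\lambda]$, so the map descends through the coend to a morphism $\overline{\Phi}$.

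The main obstacle is invertibility of $\overline{\Phi}$. All the spaces $\Sigma^{\varphi}\spr{-,-}$ are lines, since $\Fund_{\sigma_0}^{\sigma_1}(\Sigma)$ is non-empty and connected by \cite[VI, Lemma~9.1]{Bredon} and the defining relation identifies all generators up to a nonzero scalar; hence the inner coend, being a colimit over the connected groupoid $\Fund(S_1)$ of a functor valued in lines, is again a line, and so is the target. It then remains to check that the generator is hit, i.e.\ that every fundamental cycle $\mu$ of $\Sigma_b\circ\Sigma_a$ with $\partial\mu=\sigma_2-\sigma_0$ is, modulo a boundary in $C_n(\Sigma_b\circ\Sigma_a)$, of the split form $\mu_b+\mu_a$. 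This converse to the Gluing Lemma is the technical heart: I would establish it by a Mayer--Vietoris and subdivision argument along a collar of the gluing hypersurface $S_1$, exactly parallel to the exact-sequence computation in the proof of Lemma~\ref{lemmagluing2man}, taking the intermediate cycle to be the restriction $\sigma_1=\mu|_{S_1}$ and using connectedness of $\Fund_{\sigma_0}^{\sigma_1}(\Sigma_a)$ and $\Fund_{\sigma_1}^{\sigma_2}(\Sigma_b)$ to absorb the ambiguity of the splitting into the defining relations.

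Finally I would record naturality and coherence. Naturality of $\Phi_{\Sigma_a,\Sigma_b}$ in $\sigma_0$, and hence as a transformation of functors, is automatic from the functoriality of the coend (Remark~\ref{Rem: Coends are functorial}), since the gluing assignment is natural in all cycle arguments. Compatibility with the associator reduces to the evident equality $(\mu_c+\mu_b)+\mu_a=\mu_c+(\mu_b+\mu_a)$ of glued fundamental cycles together with the associativity built into Fubini, mirroring the associativity diagram verified for $\Phi$ in the proof of Theorem~\ref{Theorem: A is extended field theory}, while unitality follows from the trivial (cylinder) fundamental cycle representing the identity. This shows that $T_\theta$ respects composition of 1-morphisms up to coherent natural isomorphism.
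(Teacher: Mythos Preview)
Your approach is essentially the paper's: unfold via Fubini, then compare the inner coend with $(\Sigma_b\circ\Sigma_a)^{\varphi}\spr{\sigma_2,\sigma_0}$ via the gluing map $[\mu_b]\otimes[\mu_a]\mapsto[\mu_b+\mu_a]$ supplied by Corollary~\ref{lemmamscgluingcyclescob}. The difference lies in how you establish invertibility, and here you make life harder than necessary. The paper observes that for each fixed $\sigma_1$ the gluing map
\[
\Sigma_a^{\varphi_a}\spr{\sigma_1,\sigma_0}\otimes\Sigma_b^{\varphi_b}\spr{\sigma_2,\sigma_1}\longrightarrow(\Sigma_b\circ\Sigma_a)^{\varphi_b\cup\varphi_a}\spr{\sigma_2,\sigma_0}
\]
is already an isomorphism of lines (it is a nonzero map between one-dimensional spaces), and moreover it is natural in $\sigma_1$ with target the \emph{constant} functor. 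Hence $\int^{\sigma_1}$ of the source is isomorphic to $\int^{\sigma_1}$ of the constant functor, which equals $(\Sigma_b\circ\Sigma_a)^{\varphi}\spr{\sigma_2,\sigma_0}$ simply because $\Fund(S_1)$ is connected. No ``converse to the Gluing Lemma'', Mayer--Vietoris, or splitting argument is required: the surjectivity you worry about is automatic once you know the componentwise map is nonzero. Your proposed route is not wrong, but it trades a one-line observation for a genuine technical detour.

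One further point: the lemma also requires the unitor $\id_{T_\theta(S,\xi)}\Longrightarrow T_\theta(\id_{S,\xi})$, which the paper constructs explicitly via the enriched coYoneda lemma and the pushforward $\mu\mapsto -{p_S}_*\mu$ along the projection $p_S\colon[0,1]\times S\to S$. Your closing remark that ``unitality follows from the trivial (cylinder) fundamental cycle'' gestures at this but does not actually build the isomorphism or identify the comparison with $\Hom_{T_\theta(S,\xi)}(-,-)$; you should spell this out.
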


\begin{proof}
The coherence isomorphisms consist of natural isomorphisms (see Definition~\ref{Definition Morphism Bicategory})
\begin{align}\label{eqnnatisocohuni}
\Phi_{(S,\xi)}\colon \id_{T_\theta(S,\xi)} \Longrightarrow T_\theta(\id_{S,\xi}) 
\end{align}
for all objects $(S,\xi)\in \ETCob$ and \begin{align}
\Phi_{(\Sigma_a,\varphi_a),(\Sigma_b,\varphi_b)}\colon T_\theta(\Sigma_b,\varphi_b) \circ T_\theta(\Sigma_a,\varphi_a) \Longrightarrow T_\theta((\Sigma_b,\varphi_b)\circ (\Sigma_a,\varphi_a))\label{eqnnatisocohasso}
\end{align}
for all composable 1-morphisms
\begin{align}
(\Sigma_a,\varphi_a) : (S_0,\xi_0) \longrightarrow (S_1,\xi_1) \text{ and } (\Sigma_b,\varphi_b) : (S_1,\xi_1) \longrightarrow (S_2,\xi_2)
\end{align}
in $\ETCob$.

Using the enriched co-Yoneda lemma from Example~\ref{Exa: limits as ends} we can write the identity as the coend
\begin{align}
\id_{T_\theta(S,\xi)}(-) \cong \int^{\sigma \in {T_\theta(S,\xi)}} \Hom_{T_\theta(S,\xi)} (\sigma , -)* \sigma .
\end{align} 
Without loss of generality, we can evaluate this at a generator $\sigma_0\in \Fund(S_1)$
\begin{align}
\sigma_0 \cong \int^{\sigma \in {T_\theta(S,\xi)}} \Hom _{T_\theta(S,\xi)} ( \sigma , \sigma_0)* \sigma \cong \int^{\sigma \in {\Fund(S)}} \Hom _{T_\theta(S,\xi)}( \sigma , \sigma_0)* \sigma.\label{cohid1eqn}
\end{align}  
On the other hand, we have
\begin{align}
T_\theta(\id_{S,\xi})(\sigma_0) = \int ^{\sigma \in {\Fund(S)}}  ([0,1] \times S)^{  [0,1] \times \xi }  (\sigma , \sigma_0)*   \sigma \ \ . \label{cohid2eqn}
\end{align}
There is a natural isomorphism
\begin{align}
\begin{split} 
( [0,1] \times S)^{  [0,1] \times \xi} (\sigma , \sigma_0) & \longrightarrow \Hom_{T_\theta(S,\xi)} ( \sigma , \sigma_0) \\
\mu & \longmapsto - {p_S}_* \mu
\end{split} 
\end{align}
using the projection $p_S: [0,1] \times S \longrightarrow S$. It induces an isomorphism  between the coends in \eqref{cohid1eqn} and \eqref{cohid2eqn} and gives us the desired isomorphism \eqref{eqnnatisocohuni}.

To specify the natural isomorphism \eqref{eqnnatisocohasso} for 1-morphisms $(\Sigma_a,\varphi_a) : (S_0,\xi_0) \longrightarrow (S_1,\xi_1)$ and $(\Sigma_b,\varphi_b) : (S_1,\xi_1) \longrightarrow (S_2,\xi_2)$ in $\ETCob$, we note that for a generator $\sigma_0 \in \Fund(S_0)$
\begin{align}
\begin{split} 
(T_\theta(\Sigma_b,\varphi_b) \circ & T_\theta(\Sigma_a,\varphi_a)) \sigma_0 = \int^{(\sigma_1,\sigma_2) \in \Fund (S_1)\times \Fund (S_2)}  \Sigma^{\varphi_a}_a\spr{ {\sigma_1},{\sigma_0} } \otimes \Sigma_b^{\varphi_b}\spr{ {\sigma_2},{\sigma_1}}  * \sigma_2 \  \\ &\cong \int^{\sigma_2 \in \Fund (S_2)} \left(\int^{\sigma_1 \in \Fund (S_1)}  \Sigma^{\varphi_a}_a \spr{ {\sigma_1},{\sigma_0} } \otimes \Sigma_b^{\varphi_b}\spr{ {\sigma_2},{\sigma_1}} \right) * \sigma_2,
\end{split}
\label{eqnrefinnercoend}
\end{align} where we have used Fubini's Theorem for coends (Theorem~\ref{Thm: Fubini}). 

To compute the inner coend in \eqref{eqnrefinnercoend}, we observe that 
\begin{align}
\begin{split} 
\Phi_{\sigma_1,\sigma_2} :  \Sigma_a^{\varphi_a} \spr{\sigma_0,\sigma_1}\otimes \Sigma_b^{\varphi_b}\spr{\sigma_1,\sigma_2} & \longrightarrow (\Sigma_b \circ \Sigma_a)^{\varphi_b \cup \varphi_a} \spr{ \sigma_2,\sigma_0}  \\
\mu_1\otimes \mu_2 & \longmapsto \mu_1 +\mu_2 \ .
\end{split} 
\end{align} is a canonical isomorphism by Lemma~\ref{lemmamscgluingcyclescob}. Here $\varphi_b \cup \varphi_a : \Sigma_b \circ \Sigma_a\longrightarrow T$ is the map obtained from gluing $\varphi_a$ and $\varphi_b$. 
We now obtain
\begin{align} 
\begin{split} 
\int^{\sigma_1 \in \Fund (S_1)} \Sigma_a^{\varphi_a} \spr{ {\sigma_1},{\sigma_0} } \otimes \Sigma_b^{\varphi_b}\spr{{\sigma_2},{\sigma_1}} &\cong \int^{\sigma_1 \in \Fund (S_1)} (\Sigma_b \circ \Sigma_a)^{\varphi_b \cup \varphi_a} \spr{ \sigma_2,\sigma_0}  \\&\cong  \int^{\sigma_1 \in \Fund (S_1)} (\Sigma_b \circ \Sigma_a)^{\varphi_b \cup \varphi_a} \spr{ \sigma_2,\sigma_0} \otimes \mathbb{C}  \\&\cong (\Sigma_b \circ \Sigma_a)^{\varphi_b \cup \varphi_a} \spr{ \sigma_2,\sigma_0},
\end{split}  
\end{align} 
where in the last step we used that $\Fund (S_1)$ is connected. Insertion into \eqref{eqnrefinnercoend} yields isomorphisms
\begin{align}
(T_\theta(\Sigma_b,\varphi_b) \circ T_\theta(\Sigma_a,\varphi_a)) \sigma_0 \cong (T_\theta( (\Sigma_b,\varphi_b) \circ (\Sigma_a,\varphi_a)) \sigma_0,
\end{align} which give us after linear extension the natural isomorphism \eqref{eqnnatisocohasso}.
\end{proof}

In the next lemma we show that $T_\theta$ is well defined on the equivalence classes
corresponding to a 2-morphism in $\ETCob$:

\begin{lemma}\label{lemmahomotopyinvariance}
$T_\theta$ is well defined on 2-morphisms. 
\end{lemma}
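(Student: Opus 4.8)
The plan is to unpack the assignment $\mu_a \mapsto \langle \psi^*\theta,\nu\rangle\,[\mu_b]$ and verify three separate well-definedness claims: (i) the value does not depend on the auxiliary fundamental cycle $\nu$ of $M$ nor on the outgoing cycle $\mu_b$; (ii) the resulting linear map on $\mathbb{C}[\Fund_{\sigma_0}^{\sigma_1}(\Sigma_a)]$ respects the relation defining $\Sigma_a^{\varphi_a}\spr{\sigma_1,\sigma_0}$, hence descends to the quotient; and (iii) it is invariant under the equivalence relation defining a $2$-morphism, i.e.\ under orientation-preserving collar-compatible diffeomorphisms $\Phi$ and under homotopies relative boundary. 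The single mechanism driving all three is the cocycle identity $\delta(\psi^*\theta)=\psi^*(\delta\theta)=0$, which forces $\langle\psi^*\theta,-\rangle$ to vanish on boundaries of $(n+1)$-chains, used together with the connectivity of $\Fund_{\sigma_0}^{\sigma_1}(\Sigma_b)$ recalled above and the fact that any two fundamental cycles of $M$ are homologous relative to $\partial M$.

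First I would treat (i). Given two admissible pairs $(\nu,\mu_b)$ and $(\nu',\mu_b')$, subtracting the defining equations gives $\partial(\nu-\nu')=\mu_b-\mu_b'$, an absolute cycle supported in $\Sigma_b$. Connectivity of $\Fund_{\sigma_0}^{\sigma_1}(\Sigma_b)$ produces $\lambda\in C_n(\Sigma_b)$ with $\partial\lambda=\mu_b-\mu_b'$, so that $[\mu_b']=\langle\varphi_b^*\theta,\lambda\rangle\,[\mu_b]$ in $\Sigma_b^{\varphi_b}\spr{\sigma_1,\sigma_0}$. Writing $\iota_b\colon\Sigma_b\hookrightarrow M$ for the inclusion, the chain $(\nu-\nu')-\iota_{b*}\lambda$ is then an integral $n$-cycle which is null-homologous in $M$ (each connected component either has nonempty boundary, where the top homology vanishes, or is closed, where $\nu$ and $\nu'$ restrict to representatives of the same fundamental class). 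Expressing it as a boundary and invoking the cocycle condition yields $\langle\psi^*\theta,(\nu-\nu')-\iota_{b*}\lambda\rangle=1$, and since $\langle\psi^*\theta,\iota_{b*}\lambda\rangle=\langle\varphi_b^*\theta,\lambda\rangle$ this rearranges to $\langle\psi^*\theta,\nu\rangle\,[\mu_b]=\langle\psi^*\theta,\nu'\rangle\,[\mu_b']$.

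For (ii), if $\mu_a'=\langle\varphi_a^*\theta,\kappa\rangle\mu_a$ in the quotient with $\partial\kappa=\mu_a-\mu_a'$ and $\kappa\in C_n(\Sigma_a)$, then $\nu+\iota_{a*}\kappa$ is an admissible auxiliary cycle for $\mu_a'$ with the same $\mu_b$; by (i) the answer is choice-independent, and $\langle\psi^*\theta,\iota_{a*}\kappa\rangle=\langle\varphi_a^*\theta,\kappa\rangle$ shows the image scales by exactly $\langle\varphi_a^*\theta,\kappa\rangle$, matching the relation, so the map descends to $\Sigma_a^{\varphi_a}\spr{\sigma_1,\sigma_0}$. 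For (iii), invariance under a collar-compatible orientation-preserving $\Phi\colon M\longrightarrow\widetilde M$ is immediate, since $\Phi_*\nu$ is admissible for $\widetilde M$ with the same boundary data and $\langle(\widetilde\psi\circ\Phi)^*\theta,\nu\rangle=\langle\widetilde\psi^*\theta,\Phi_*\nu\rangle$. For a homotopy $H$ relative boundary from $\psi$ to $\widetilde\psi\circ\Phi$ I would use the cochain prism operator $\beta=P^*H^*\theta\in C^{n-1}(M)$, which satisfies $\delta\beta=(\widetilde\psi\circ\Phi)^*\theta-\psi^*\theta$ precisely because $\delta\theta=0$; then $\langle(\widetilde\psi\circ\Phi)^*\theta-\psi^*\theta,\nu\rangle=\langle\beta,\partial\nu\rangle$, and since $\partial\nu$ is supported on $\partial M$, where $H$ is constant along $[0,1]$, the prism chains project to degenerate simplices and $\beta$ pairs trivially with $\partial\nu$.

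The hard part will be the homological bookkeeping in (i): one must be certain that $(\nu-\nu')-\iota_{b*}\lambda$ is null-homologous in $M$ itself and not merely relative to $\partial M$, which is exactly where the connectivity of $\Fund$ on $\Sigma_b$ enters — it lets one realize the correction $\lambda$ inside $\Sigma_b$ rather than through the interior of $M$ — alongside the vanishing of $H_n$ of the components of the compact $\langle 2\rangle$-manifold $M$. The analogous delicate point in (iii) is the verification that the prism contribution dies on the boundary precisely because the homotopy is relative boundary. Once these pointwise maps $T_\theta(M)_{\sigma_1,\sigma_0}$ are known to be well defined, a routine check against the functoriality of $\Sigma^\varphi\spr{-,-}$ shows they are natural in $\sigma_0$ and $\sigma_1$, so by functoriality of coends they induce the desired $2$-morphism $T_\theta(\Sigma_a,\varphi_a)\Longrightarrow T_\theta(\Sigma_b,\varphi_b)$, completing the proof.
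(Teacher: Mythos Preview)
Your proof is correct and in fact more thorough than the paper's. The paper treats only the homotopy-invariance half of your point (iii) explicitly, calling it ``the non-trivial statement'' and leaving the rest to the reader; the independence of the auxiliary cycle $\nu$ and of $\mu_b$ (your (i)) and the descent to the quotient (your (ii)) are simply asserted in the definition of $T_\theta$ on $2$-morphisms. On the part the paper does prove, your argument and the paper's coincide: both use the prism/chain-homotopy identity to reduce $\langle(\widetilde\psi\circ\Phi)^*\theta,\nu\rangle-\langle\psi^*\theta,\nu\rangle$ to a pairing supported on $[0,1]\times\partial\nu$ where the homotopy is constant. The only divergence is in the final vanishing step. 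The paper observes that, since the homotopy factors through the projection $p_{\partial M}$, one is left with pairing the closed cochain $\psi^*\theta$ against $p_{\partial M *}([0,1]\times\partial\nu)$, which is an $n$-cycle in the $(n-1)$-manifold $\partial M$ and hence a boundary; no hypothesis on $\theta$ beyond closedness is needed. Your degeneracy argument reaches the same conclusion but tacitly assumes $\theta$ is normalized so that it vanishes on degenerate simplices---harmless, since one may always replace $\theta$ by a normalized representative, but worth stating. Your treatments of (i) and (ii), including the case distinction for closed components of $M$, correctly fill in what the paper omits.
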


\begin{proof}
The non-trivial statement to show is the invariance under gauge 
transformation relative to the boundary.
Consider 2-morphisms $(M,\psi), (M,\psi') \colon  (\Sigma_a,\varphi_a) \Longrightarrow (\Sigma_b,\varphi_b)$ between 1-morphisms $(S_0,\xi_0) \longrightarrow (S_1,\xi_1)$ with $\psi \stackrel{h}{\simeq} \psi'$ relative $\partial M$. Let $\sigma_0$ and $\sigma_1$ be fundamental cycles of $S_0$ and $S_1$, respectively. Now for $\mu_a \in \Fund_{\sigma_0}^{\sigma_1} (\Sigma_a)$ and $\mu_b \in\Fund_{\sigma_0}^{\sigma_1} (\Sigma_b)$ we can 
find a fundamental cycle $\nu$ of $M$ adapted to $\mu_a$ and $\mu_b$ as in equation \eqref{EQ: Condition Fundamental cycle on manifolds with corners}. By definition of $T_\theta$ on 2-morphisms it suffices to show
\begin{align}
\langle \psi^*\theta,\nu\rangle = \langle {\psi'}^*\theta,\nu\rangle . 
\end{align} 
Indeed, if we see $h$ as a map defined on $ [0,1] \times M$, we find a chain homotopy between the chain maps $\psi_*$ and $\psi'_*$ given by
\begin{align} 
{\psi_*}_{p}- {\psi'_*}_{p} = \partial H_p + H_{p-1} \partial \quad \text{for all $p\in\mathbb{Z}$}, 
\end{align}
where $H_p := {h_*}_{p+1} D_p$ and 
\begin{align} 
D_p : S_p(M) \longrightarrow S_{p+1}([0,1]\times M), \quad c \longmapsto [0,1] \times c 
\end{align}
 is defined using the cross-product on singular chains, see \cite[IV.16]{Bredon}. Hence,
\begin{align}
\langle \psi^*\theta,\nu\rangle - \langle {\psi'}^*\theta,\nu\rangle = \langle  \theta, H_{n-1} \partial \nu  \rangle = \langle h^* \theta,  [0,1] \times \partial \nu  \rangle .\label{eqnhtpinvvanish}
\end{align}
The homotopy $h$ being stationary on the boundary entails
\begin{align} 
h|_{[0,1] \times \partial M }= \psi \circ p_{\partial M}
\end{align} 
with the projection $p_{\partial M} \colon \partial [0,1] \times M  \longrightarrow \partial M$. This yields
\begin{align}
\langle h^* \theta,  [0,1] \times \partial\nu  \rangle = \langle \psi^* \theta, {p_{\partial M}}_* ([0,1] \times \partial\nu ) \rangle . 	\end{align} 
We have 
\begin{align}
 \partial \left( {p_{\partial M}}_* ([0,1] \times \partial\nu )\right) = 0,
\end{align}
where we use that the boundaries corresponding to the $[0,1]$ part cancel under the projection, i.e.\ 
\begin{align}
{p_X}_* (\partial [0,1]  \times X)= 0,
\end{align} 	
for any space $X$ with projection $p_X : [0,1] \times  X \longrightarrow X$.
This shows that ${p_{\partial M}}_* ( [0,1] \times \partial\nu)$ is a cycle.
For dimensional reasons it must be a boundary as well. This shows that \eqref{eqnhtpinvvanish} vanishes and finishes the proof.
\end{proof}
Using the Gluing Lemma for $\langle 2\rangle$-manifolds~\ref{lemmagluing2man} we prove that $T_\theta$ strictly preserves the vertical composition of 2-morphisms:
\begin{lemma}\label{lemmaverticalcomp}
	$T_\theta$ preserves the vertical composition of 2-morphisms strictly.
\end{lemma}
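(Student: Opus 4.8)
The plan is to evaluate both sides of the claimed identity on generators and compare the resulting scalars. Fix composable 2-morphisms $(M_1,\psi_1)\colon (\Sigma_a,\varphi_a)\Longrightarrow (\Sigma_b,\varphi_b)$ and $(M_2,\psi_2)\colon (\Sigma_b,\varphi_b)\Longrightarrow (\Sigma_c,\varphi_c)$ between 1-morphisms $(S_0,\xi_0)\longrightarrow(S_1,\xi_1)$, and denote by $(M,\psi)$ their vertical composite, obtained by gluing $M_1$ and $M_2$ along $\Sigma_b$ and rescaling the $1$-collars via the fixed diffeomorphism $[0,2]\longrightarrow[0,1]$ from Definition~\ref{defmscbordcattarget}. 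Since $T_\theta(M_2)\circ T_\theta(M_1)$ and $T_\theta(M)$ are natural transformations whose components are determined by their action on generators, it suffices to fix fundamental cycles $\sigma_0\in\Fund(S_0)$ and $\sigma_1\in\Fund(S_1)$ and to compare the induced maps $\Sigma_a^{\varphi_a}\spr{\sigma_1,\sigma_0}\longrightarrow\Sigma_c^{\varphi_c}\spr{\sigma_1,\sigma_0}$ on a representative $\mu_a\in\Fund_{\sigma_0}^{\sigma_1}(\Sigma_a)$.

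First I would unwind the two definitions. Choose $\mu_b\in\Fund_{\sigma_0}^{\sigma_1}(\Sigma_b)$ and $\mu_c\in\Fund_{\sigma_0}^{\sigma_1}(\Sigma_c)$ together with fundamental cycles $\nu_1$ of $M_1$ and $\nu_2$ of $M_2$ satisfying~\eqref{EQ: Condition Fundamental cycle on manifolds with corners}, that is $\partial\nu_1=\mu_b-\mu_a+([0,1]\times\sigma_0-[0,1]\times\sigma_1)$ and $\partial\nu_2=\mu_c-\mu_b+([0,1]\times\sigma_0-[0,1]\times\sigma_1)$. By definition $T_\theta(M_1)$ sends $\mu_a$ to $\langle\psi_1^*\theta,\nu_1\rangle\,\mu_b$ and $T_\theta(M_2)$ sends $\mu_b$ to $\langle\psi_2^*\theta,\nu_2\rangle\,\mu_c$, so the composite sends $\mu_a$ to $\langle\psi_1^*\theta,\nu_1\rangle\,\langle\psi_2^*\theta,\nu_2\rangle\,\mu_c$.

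The key step is to produce an admissible fundamental cycle for $M$. Applying the Gluing Lemma~\ref{lemmagluing2man} along the component $\Sigma_b$ of the $0$-boundary (the statement being symmetric in the two boundary faces, as signalled by its ``say, the $1$ boundary''), and using that $M_1$ contributes $+\mu_b$ while the incoming boundary of $M_2$ contributes $-\mu_b$ so that the gluing diffeomorphism is compatible with the chosen cycles, the sum $\nu_1+\nu_2$ represents the fundamental class of the glued manifold. Its boundary is $\partial(\nu_1+\nu_2)=\mu_c-\mu_a+[0,2]\times\sigma_0-[0,2]\times\sigma_1$, and pushing the $1$-collar part forward along the rescaling $[0,2]\longrightarrow[0,1]$ turns this into $\mu_c-\mu_a+([0,1]\times\sigma_0-[0,1]\times\sigma_1)$, so that the image $\nu$ of $\nu_1+\nu_2$ is a fundamental cycle of $M$ adapted to $\mu_a$ and $\mu_c$. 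Since $\psi$ restricts to $\psi_1$ on $M_1$ and to $\psi_2$ on $M_2$, and since $\langle-,-\rangle$ is a homomorphism from singular chains into the multiplicative group $U(1)$, we obtain $\langle\psi^*\theta,\nu\rangle=\langle\psi_1^*\theta,\nu_1\rangle\,\langle\psi_2^*\theta,\nu_2\rangle$, which is exactly the scalar produced by the composite. Hence $T_\theta(M)$ and $T_\theta(M_2)\circ T_\theta(M_1)$ agree on generators, and by bilinear extension on all of $\Sigma_a^{\varphi_a}\spr{\sigma_1,\sigma_0}$.

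The main obstacle I anticipate is the bookkeeping at the rescaled $1$-collars: one must check that the doubled cross-term $[0,2]\times\sigma_j$ arising from concatenating the two $1$-collars maps, under the fixed rescaling diffeomorphism, precisely to the single term $[0,1]\times\sigma_j$ required by~\eqref{EQ: Condition Fundamental cycle on manifolds with corners}, and that this rescaling is compatible with the way $\psi$ is glued from $\psi_1$ and $\psi_2$. Independence of all intermediate choices (of $\mu_b$, of $\nu_1,\nu_2$, and of representatives) is already guaranteed by Lemma~\ref{lemmahomotopyinvariance} together with the defining equivalence relations on the Hom-spaces, so nothing further is needed there; what remains is merely to observe that the comparison above involves no associator or unitor of $\Tvs$, whence the preservation of vertical composition is strict rather than only coherent.
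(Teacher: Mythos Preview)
Your proof is correct and follows essentially the same route as the paper's: reduce to generators, pick fundamental cycles $\nu_1,\nu_2$ adapted to $\mu_a,\mu_b,\mu_c$, invoke the Gluing Lemma~\ref{lemmagluing2man} to see that $\nu_1+\nu_2$ is a fundamental cycle of the glued manifold, and conclude from the multiplicativity $\langle\psi_1^*\theta,\nu_1\rangle\,\langle\psi_2^*\theta,\nu_2\rangle=\langle(\psi_2\cup\psi_1)^*\theta,\nu_1+\nu_2\rangle$. The paper's proof is just a terser version of yours.

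Your anticipated obstacle concerning the rescaling of the $1$-collars is in fact harmless, and you do not need to track it as carefully as you suggest. The map $\psi$ restricted to the $1$-boundary factors through the projection $[0,1]\times S_j\to S_j$, so if $c$ and $c'$ are two $(n-1)$-chains on $[0,1]\times S_j$ with the same boundary $\{1\}\times\sigma_j-\{0\}\times\sigma_j$, their difference is an absolute $(n-1)$-cycle on an $(n-1)$-manifold homotopy equivalent to the $(n-2)$-manifold $S_j$, hence a boundary, and closedness of $\theta$ gives $\langle\psi^*\theta,c\rangle=\langle\psi^*\theta,c'\rangle$. Thus whether the $1$-boundary contribution of your glued cycle is literally $[0,1]\times\sigma_j$ or merely the pushforward of the concatenated $[0,2]$-chain under the rescaling diffeomorphism makes no difference to the scalar; the paper simply suppresses this point.
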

\begin{proof}
Given two 2-morphisms 
\begin{align}
(M,\psi) :  (\Sigma_a,\varphi_a) \Longrightarrow (\Sigma_b,\varphi_b)\text{ and }(M',\psi') \colon (\Sigma_b,\varphi_b) \Longrightarrow (\Sigma_c,\varphi_c)
\end{align} 
between 1-morphisms $(S_0,\xi_0) \longrightarrow (S_1,\xi_1)$ it suffices to show that for fundamental cycles $\sigma_0$ and $\sigma_1$ of $S_0$ and $S_1$, respectively, the composition of linear maps
\begin{align}
	\Sigma^{\varphi_a}_a (\sigma_1,\sigma_0) \xrightarrow{T_\theta (M)_{\sigma_1,\sigma_0}} \Sigma^{\varphi_b}_b (\sigma_1,\sigma_0)   \xrightarrow{T_\theta (M')_{\sigma_1,\sigma_0}} \Sigma^{\varphi_c}_c (\sigma_1,\sigma_0) \end{align} as defined in \eqref{eqnmapbetweencoendsinduce} is equal to
\begin{align}
		\Sigma^{\varphi_a}_a (\sigma_1,\sigma_0) \xrightarrow{T_\theta (M'\circ M)_{\sigma_1,\sigma_0}}  \Sigma^{\varphi_c}_c (\sigma_1,\sigma_0) .
		\end{align} 
	Picking fundamental cycles $\nu$ and $\nu'$ for $M$ and $M'$ as in Lemma~\ref{lemmagluing2man} this follows from \begin{align}
	\langle \psi^* \theta ,\nu\rangle \cdot \langle {\psi'}^* \theta,\nu'\rangle  =\langle   (\psi'\cup \psi)^* \theta,\nu+\nu' \rangle,
	\end{align} where $\psi' \cup \psi : M'\circ M \longrightarrow T$ is the map obtained by gluing $\psi$ and $\psi'$.
\end{proof}
Now we can complete the proof of Theorem~\ref{mainthm}:
\begin{proof}[Proof of Theorem~\ref{mainthm}]
	Thanks to the Lemmata~\ref{lemmacompof1mor}, \ref{lemmahomotopyinvariance} and \ref{lemmaverticalcomp} it remains to prove the following: 
	\begin{itemize}

\item 
Horizontal composition:
Let 		
\begin{align}
(\Sigma_a,\varphi_a) : (S_0,\xi_0) \longrightarrow (S_1,\xi_1), \ \ 
(\Sigma_b,\varphi_b) : (S_1,\xi_1) \longrightarrow (S_2,\xi_2)
\end{align} 
be 1-morphisms and
\begin{align}
(M,\psi) : 	(\Sigma_a,\varphi_a) \Longrightarrow 	(\Sigma_b,\varphi_b), \ \
(M',\psi') : 	(\Sigma_b,\varphi_c) \Longrightarrow 	(\Sigma_c,\varphi_c)
\end{align} 
2-morphisms. 
We have to show that for fundamental cycles $\sigma_0$, $\sigma_1$ and $\sigma_2$ of $S_0$, $S_1$ and $S_2$, respectively, the square 
\begin{equation}
		\begin{tikzcd}
		\Sigma_a^{\varphi_a} (\sigma_1,\sigma_0)\otimes \Sigma_b^{\varphi_b} (\sigma_2,\sigma_1) \ar{rr}{\Phi} \ar[swap]{dd}{T_\theta (M)_{\sigma_1,\sigma_0}\otimes T_\theta (M')_{\sigma_2,\sigma_1}} & & (\Sigma_b \circ \Sigma_a)^{\varphi_b\cup \varphi_a} (\sigma_2,\sigma_1) \ar{dd}{T_\theta (M' \circ M)_{\sigma_2,\sigma_0}} \\
		& & \\
		\Sigma_b^{\varphi_b} (\sigma_1,\sigma_0)\otimes \Sigma_c^{\varphi_c} (\sigma_2,\sigma_1) \ar{rr}{\Phi} & & (\Sigma_c \circ \Sigma_b)^{\varphi_c\cup \varphi_b} (\sigma_2,\sigma_1)
		\end{tikzcd}
		\end{equation} featuring as the horizontal arrows the isomorphisms from the proof of Lemma~\ref{lemmacompof1mor} commutes.
This can be verified directly by picking representatives for the fundamental classes as in Lemma~\ref{lemmagluing2man}.
		
		\item Symmetric monoidal structure:
		there are natural equivalences of categories 
		\begin{align}
		\begin{split} 
		\iota_\theta : T_\theta (\emptyset) &\longrightarrow \fvs \\
		\sigma_\emptyset &\longmapsto \C
		\end{split}
		\end{align}	
		and 	  
		\begin{align}
		\begin{split}
		\chi_\theta((S_0, \xi_0),(S_1,\xi_1))\colon T_\theta (S_0,\xi_0)\boxtimes  T_\theta (S_1,\xi_1)  &\longrightarrow T_\theta(S_0\sqcup S_1, \xi_0 \sqcup \xi_1) \\
		\sigma_{S_0}\boxtimes \sigma_{S_1} & \longmapsto \sigma_{S_0} + \sigma_{S_1} ,
		\end{split} 
		\end{align}
		where we suppress the inclusion into the disjoint union and denote by $\sigma_\emptyset$ the unique fundamental cycle of the empty set (that it has by convention).
		The modifications which are part of the structure of a symmetric monoidal 2-functor (see Definition~\ref{Def: Symmetric monoidal 2-functor}) are trivial since the corresponding diagrams commute on generators.
		The simple form of the coherence isomorphism makes it straightforward to check that the corresponding diagrams commute.
	\end{itemize}
The field theory $T_\theta$ is obviously invertible.
\end{proof}

The following assertion shows that up to natural equivalence $T_\theta$ only depends on the cohomology
class of $\theta$. 
\begin{proposition}\label{Prop: Coboundary induces transformations}
	Let $\theta$ and $\theta'$ be $n$-cocycles on a topological space $T$ with values in $\U(1)$ and 
	$\Lambda$ an $n-1$-chain on $T$ satisfying $\operatorname{d}\Lambda= \theta'-\theta$. Then $\Lambda$ induces a 
	symmetric monoidal natural equivalence 
	\begin{align}
	T_\Lambda : T_\theta \longrightarrow T_{\theta'} \ \ .
	\end{align} 
\end{proposition}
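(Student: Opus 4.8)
The plan is to construct the natural equivalence $T_\Lambda$ explicitly, component by component, and verify that the chain-level identity $\operatorname{d}\Lambda = \theta' - \theta$ forces all the coherence data to agree. A symmetric monoidal natural equivalence between the two $2$-functors consists of: an equivalence of categories $T_\Lambda(S,\xi)\colon T_\theta(S,\xi)\longrightarrow T_{\theta'}(S,\xi)$ for every object, together with natural isomorphisms filling the squares for each $1$-morphism, and compatibility data for the monoidal structure. First I would observe that the two $2$-vector spaces $T_\theta(S,\xi)$ and $T_{\theta'}(S,\xi)$ have literally the same objects (formal sums of fundamental cycles) and differ only in the identifications imposed on morphism spaces via $\langle\xi^*\theta,\lambda\rangle$ versus $\langle\xi^*\theta',\lambda\rangle$. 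So on objects I would let $T_\Lambda(S,\xi)$ be the identity, and on a generating morphism $\tau\colon\sigma\longrightarrow\sigma'$ I would rescale by the pairing of $\Lambda$ against a suitable chain.

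The key computation is the following. Given two fundamental cycles $\sigma,\sigma'$ of $S$ and a morphism $\tau$ with $\partial\tau = \sigma'-\sigma$, a second representative $\widetilde\tau$ satisfies $\widetilde\tau - \tau = \partial\lambda$ for some $\lambda\in C_n(S)$. In $T_\theta$ one has $\widetilde\tau\sim\langle\xi^*\theta,\lambda\rangle\tau$, whereas in $T_{\theta'}$ one has $\widetilde\tau\sim\langle\xi^*\theta',\lambda\rangle\tau$. The ratio of these two scalars is $\langle\xi^*(\theta'-\theta),\lambda\rangle = \langle\xi^*\operatorname{d}\Lambda,\lambda\rangle = \langle\xi^*\Lambda,\partial\lambda\rangle = \langle\xi^*\Lambda,\widetilde\tau-\tau\rangle$, using that the coboundary is dual to the boundary. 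This tells me that the assignment sending a morphism $\tau$ to $\langle\xi^*\Lambda,\tau\rangle$ times $\tau$ intertwines the two equivalence relations: if I define $T_\Lambda$ on morphisms by $\tau\longmapsto \langle\xi^*\Lambda,\tau\rangle\,\tau$ (extended $\C$-bilinearly and then to all formal sums by the matrix description), it descends to a well-defined functor because the $\theta$-class of $\tau$ is mapped to the $\theta'$-class of $\langle\xi^*\Lambda,\tau\rangle\tau$. This is visibly an equivalence, with inverse given by pairing against $-\Lambda$.

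Next I would promote this to a natural transformation over $1$-morphisms. For a $1$-morphism $(\Sigma,\varphi)\colon(S_0,\xi_0)\longrightarrow(S_1,\xi_1)$, the functor $T_\theta(\Sigma,\varphi)$ is built from the spaces $\Sigma^\varphi\spr{\sigma_1,\sigma_0} = \C[\Fund_{\sigma_0}^{\sigma_1}(\Sigma)]/{\sim}$, where the relation uses $\langle\varphi^*\theta,\nu\rangle$. I would define the natural isomorphism by rescaling a class $[\mu]$ of a relative fundamental cycle $\mu$ by $\langle\varphi^*\Lambda,\mu\rangle$; exactly the same boundary-duality computation as above (now with $\partial\nu=\mu-\mu'$ and $\operatorname{d}\Lambda=\theta'-\theta$) shows this is well-defined modulo the respective equivalence relations, and that it is compatible with the functorial action on boundary morphisms $\Sigma^\varphi\spr{-,-}(\lambda)$, which simply subtract $\lambda$. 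The naturality square over a $2$-morphism $(M,\psi)$ then reduces to checking that the scalar $\langle\psi^*\theta,\nu\rangle$ appearing in $T_\theta(M)$, corrected by the boundary $\Lambda$-factors on $\Sigma_a$ and $\Sigma_b$, matches $\langle\psi^*\theta',\nu\rangle$; this is again $\langle\psi^*\operatorname{d}\Lambda,\nu\rangle=\langle\psi^*\Lambda,\partial\nu\rangle$, and equation \eqref{EQ: Condition Fundamental cycle on manifolds with corners} expresses $\partial\nu$ precisely as $\mu_b-\mu_a+([0,1]\times\sigma_0-[0,1]\times\sigma_1)$, so the correction terms cancel against the boundary rescalings.

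The main obstacle I anticipate is bookkeeping rather than conceptual: verifying that $T_\Lambda$ is compatible with the coherence isomorphisms $\Phi$ for composition of $1$-morphisms (Lemma~\ref{lemmacompof1mor}) and with the monoidal structure maps $\chi_\theta,\iota_\theta$. For composition, the gluing isomorphism $\Phi_{\sigma_1,\sigma_2}$ sends $\mu_1\otimes\mu_2\longmapsto\mu_1+\mu_2$, and since pairing is additive, $\langle\varphi^*\Lambda,\mu_1+\mu_2\rangle = \langle\varphi_a^*\Lambda,\mu_1\rangle\langle\varphi_b^*\Lambda,\mu_2\rangle$ (in additive-versus-multiplicative notation for $U(1)$), so the rescalings are multiplicative under gluing and the relevant square commutes on generators. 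The monoidal comparison is similar: $\chi_\theta$ sends $\sigma_{S_0}\boxtimes\sigma_{S_1}\longmapsto\sigma_{S_0}+\sigma_{S_1}$, and additivity of the $\Lambda$-pairing again gives strict compatibility. Because all the diagrams in the definition of a symmetric monoidal natural transformation (Definition~\ref{Def:2sym transformation}) commute already on generating fundamental cycles—where every structure map is either the identity or a single scalar—the coherence conditions follow by $\C$-linear and matrix extension, exactly as in the proof of Theorem~\ref{mainthm}. I would therefore present the object-level and morphism-level definitions of $T_\Lambda$ with the one boundary-duality lemma, and then state that the coherence checks are routine verifications on generators, mirroring the structure already established for $T_\theta$ itself.
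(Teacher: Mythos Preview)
Your proposal is correct and follows exactly the same approach as the paper: define $T_\Lambda(S,\xi)$ as the identity on objects and as $[\tau]\mapsto[\langle\xi^*\Lambda,\tau\rangle\cdot\tau]$ on morphisms, define the component at a $1$-morphism by $[\mu]\mapsto[\langle\varphi^*\Lambda,\mu\rangle\cdot\mu]$, and note that $T_{-\Lambda}$ is a weak inverse. You have in fact supplied more detail than the paper, which simply records these formulas and declares the remaining coherence checks to be ``a straightforward computation'' and ``clear''; your explicit boundary-duality argument and the additivity check for gluing and the monoidal structure are precisely what that computation amounts to.
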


\begin{proof}
	
	For all $(S,\xi)\in \ETCob$ we define linear functors 
	\begin{align}
	\begin{split} 
	T_\Lambda(S,\xi) : T_\theta(S,\xi) & \longrightarrow T_{\theta'}(S,\xi) \\
	\sigma & \longmapsto \sigma \\
	[\lambda] &\longmapsto [\langle \xi^*\Lambda, \lambda \rangle \cdot \lambda] \ \ . 
	\end{split} 
	\end{align}
	For a 1-morphism $(\Sigma,\varphi): (S_0,\xi_0)\longrightarrow (S_1,\xi_1)$ we get natural 
	linear maps between the vector spaces 
	\begin{align}
	\begin{split}
	\Sigma^\varphi_\theta(\sigma_1,\sigma_0)&\longrightarrow \Sigma^\varphi_{\theta'}(\sigma_1,\sigma_0) \\
	[\mu] &\longmapsto [\langle \varphi^*\Lambda, \mu \rangle \cdot \mu] \ \ , 
	\end{split} 
	\end{align}
	where we added the subscripts $\theta$ and $\theta'$ to indicate the respective cocycles that enter the definition of the vector spaces $\Sigma^\varphi(-,-)$. These maps 
	induce maps between the corresponding coends and combine into a natural transformations 
	\begin{align}
	T_\Lambda(\Sigma,\varphi): T_{\theta'}(\Sigma,\varphi)\circ T_\Lambda(S_0,\xi_0)\Longrightarrow  T_\Lambda(S_1,\xi_1)\circ T_{\theta}(\Sigma,\varphi)\ \ . 
	\end{align}
	A straightforward computation shows that this defines a natural transformation of 2-functors. 
	Furthermore, it is clear how to equip $T_\Lambda$ with the structure of a symmetric monoidal transformation.
	Finally, we observe that $T_\Lambda$ is even a symmetric monoidal equivalence, because $T_{-\Lambda}$ provides a weak inverse.       
\end{proof}
\begin{remark}
	In the same way an $n-2$-chain $\Omega$ satisfying $\operatorname{d}\Omega=\Lambda'-\Lambda$ induces symmetric monoidal 
	modifications between the natural transformations $T_\Lambda$ and $T_{\Lambda'}$. We do not spell 
	out the details here. 
\end{remark}

\begin{remark}\label{Rem: Non extended theory}
Restricting $T_\theta $ to the endomorphisms of the empty set induces a non-extended homotopy quantum field theory 
\begin{align}
T_\theta :  T\text{-}\mathbf{Cob}_{n} \longrightarrow \fvs \ ,
\end{align}
which admits the following concrete description
\begin{itemize}
\item 
To a closed $n-1$ dimensional manifold $\Sigma$ equipped with a map $\varphi : \Sigma \longrightarrow T$ it assigns the vector space $T_\theta(\Sigma, \varphi)=\Sigma^\varphi(\emptyset,\emptyset)= \C[\Fund(\Sigma)]/{\sim}$.

\item 
To a morphism $(M,\psi): (\Sigma_a,\varphi_a) \longrightarrow (\Sigma_b,\varphi_b)$ it assigns the linear map 
\begin{align}
\begin{split}
T_\theta(M,\psi): T_\theta(\Sigma_a,\varphi_a) & \longrightarrow T_\theta (\Sigma_b,\varphi_b) \\
[\sigma_{\Sigma_a}]& \longmapsto \langle \psi^*\theta, \sigma_M \rangle [\sigma_{\Sigma_b}] \ ,
\end{split}
\end{align}
with $\partial \sigma_M= \sigma_{\Sigma_b}- \sigma_{\Sigma_a}$.
\end{itemize}
This is the primitive homotopy quantum field theory constructed in \cite[I.2.1]{turaev2010homotopy}.
\end{remark}
\begin{remark}
For trivial $\theta=1$ we get a canonical equivalence $T_0 \longrightarrow \mbf1$
defined on objects by 
\begin{align}
 T_0(S,\xi) \ni \bigoplus_{i=1}^n V_i* \sigma_i  
\longmapsto \bigoplus_{i=1}^n V_i \in \fvs  \ \ .  
\end{align}
There is also a natural isomorphism  
in the other direction 
\begin{align}\label{Eq: Omega 0}
\begin{split} 
\Omega_0 \colon \mbf1 &\longrightarrow T_0 \\ 
 \mbf1(S,\xi) \ni \C &\longmapsto \lim_{\sigma\in \Fund(S)} \sigma \in T_0(S,\xi) \ \ .
\end{split}
\end{align}  
We will use the natural 
transformation constructed here in Section~\ref{Sec: State space}.
\end{remark}
\subsection{Classical Dijkgraaf-Witten theory}\label{Sec: Classical DW}
Dijkgraaf-Witten theories~\cite{DijkgraafWitten} are gauge theories with finite gauge group. 
We fix throughout this Section a finite group $G$. Every principal $G$-bundles carries
a unique flat connection. The stack of principal $G$ bundles is equivalent to the 
stack $[\cdot,BG]$ (see for example~\cite{OFK}) which sends a manifold $\Sigma$ to the fundamental groupoid 
of the mapping space $\Map(\Sigma,BG)$. The equivalence as stacks also includes the
statement that homotopy classes of homotopies between classifying maps can be identified
with gauge transformations. 
This is not true for arbitrary Lie groups as the following simple example shows:\footnote{This 
was pointed out to us by Lukas Woike.} the classifying space for $U(1)$ is the
Eilenberg-MacLane space $K(\Z,2)$. This implies $\Pi_1(\Map(\star, BU(1)))\cong \star \DS \star$, but $\Bun_{U(1)}(\star)\cong \star \DS U(1)$.    

The possible actions of $n$-dimensional topological gauge theories with finite gauge group $G$ are classified by the $n$-th cohomology group of the classifying space $BG$ with coefficients in $\R / \Z \cong U(1) $ \cite{DijkgraafWitten}. 
The singular homology of the topological space can be identified with the group 
cohomology~\cite{Weibel}
of $G$. We will switch freely between these
two perspectives. 
For a fixed representative $\theta'\in Z^n(BG;\R / \Z )$ of a cohomology class, the action for a $G$-bundle with 
classifying map $ \psi \colon M \longrightarrow BG$ on an oriented closed $n$-dimensional manifold $M$ is given by 
\begin{align}
S_{\theta'}(M,\psi)= \int_M \, \psi^* \theta' \ .
\end{align}
As a real number, this action is only well-defined modulo $\Z$ by definition. The quantity with physical relevance is the exponentiated action $\exp(2\pi \,\iu\, S_{\theta'}(M,\psi))$ which takes values in $U(1)$. 
For simplicity we work from now on with a cocycle $\theta \in Z^n(BG;U(1))$ and interpret its integration over the manifold as the exponentiated action. 

The corresponding classical field theory is a special case of the construction above
with $T=BG$. It is common to introduce a name for field theories with target $BG$:

\begin{definition}
Let $G$ be a finite group and $n$ a positive integer. We denote by $\EGCob := B\EGCob$ the \emph{symmetric monoidal category of $G$-bordisms} (there is a slight abuse of notation because $\EGCob$ could also describe bordisms with maps to the discrete space $G$; however that would not be interesting). We call an extended homotopy quantum field theory
\begin{align}
Z: \EGCob \longrightarrow \Tvs
\end{align} an \emph{extended $G$-equivariant topological quantum field theory}. 
\end{definition}
\begin{remark} In the non-extended case these appear as homotopy quantum field theories with aspherical targets in \cite{turaev2010homotopy}. Three-dimensional extended $G$-equivariant topological field theories are discussed in \cite{NMS} using the language of principal fiber bundles and with an emphasis on theories of Dijkgraaf-Witten type. A definition of extended $G$-equivariant topological field theories of arbitrary dimension and a detailed investigation of the three-dimensional case including a relation to equivariant modular categories is given in \cite{EOFK}.
\end{remark}
\begin{definition}
Let $G$ be a finite group and $\theta\in Z^n(BG;U(1))$ an $n$-cocycle on $G$ with values 
$U(1)$. The extended equivariant field theory $BG_\theta$ is called the 
\emph{classical Dijkgraaf-Witten theory with topological action $\theta$} and denoted
by \begin{align}
 E_\theta \colon \EGCob \longrightarrow \Tvs \ \ .
\end{align}
\end{definition} 

As explained in Section~\ref{Sec: P anomaly actions} an invertible field theory 
induces (higher) line bundles over the groupoid of background field configurations. 
We now describe the cocycles corresponding to these line bundles explicitly in terms of transgression. 
  
\subsubsection{Transgression}
Let us briefly recall the concept of transgression, see e.g.~\cite{TwistedDWandGerbs}: 
Let $M$ be an $\ell$-dimensional closed oriented manifold with fundamental class $\sigma$. For a topological space $T$ and a class in $H^k(T;\U(1))$ with $k\ge \ell$ represented by a cocycle $\theta$ we can define the class $\tau_M \theta \in H^{k-\ell} (T^M; \U(1))$ as being represented by the cocycle given by
\begin{align}
(\tau_M \theta) (\lambda) := (\text{ev}^* \theta) (\lambda \times \sigma) 
\end{align} for any $k-\ell$-simplex $\lambda : \Delta_{k-\ell} \longrightarrow T^M$, where $\text{ev} : T^M \times M \longrightarrow T$ is the evaluation map. Here $T^M$ is the space of maps $M\longrightarrow T$ equipped with the compact-open topology. 
This gives rise to a map 
\begin{align}
\tau_M : H^k(T;\U(1)) \longrightarrow H^{k-\ell} (T^M; \U(1)),
\end{align} the so-called \emph{transgression}. If $T$ is aspherical, then $T^M$ is equivalent to the groupoid $\Pi_1(M,T)$ of maps from $M \longrightarrow T$ with equivalence classes of homotopies as morphisms. In that case the transgression can be seen to take values in the group cohomology group $H^{k-\ell} (\Pi_1(M,T); \U(1))$. 

\subsubsection{Invariants of closed oriented manifolds equipped with bundles and transgression} 
Evaluating the field theory $E_\theta$ on closed manifolds of dimension $n$, $n-1$, and $n-2$ provides algebraic invariants 
for manifolds equipped with maps into $BG$.  
For a closed oriented $n$-dimensional manifold $M$, this invariant is a complex number given by the function
\begin{align}
E_\theta(M,-) : \Pi_1(M,BG) \longrightarrow\C, \quad \psi \longmapsto \langle \psi^*\theta,\sigma_M \rangle
\end{align} on the groupoid $\Pi_1(M,BG)$. This function is constant on isomorphism classes, i.e.\ it is a 0-cocycle in the cohomology of the groupoid $\Pi_1(M,BG)$. This cocycle is given by the transgression of $\theta$. More precisely, $E(M,-) \in H^0 ( \Pi_1(M,BG); \U(1))$ is the image of $\theta$ under the transgression map $\tau : H^n(M;\U(1)) \longrightarrow H^0 ( \Pi_1(M,BG); \U(1))$. 

We will now show that the invariant (higher line bundle) obtained from $E_\theta$ for manifolds of dimension $n-1$ and $n-2$ equipped with bundles can also be described by an appropriate transgression of $\theta$. 

Evaluating the primitive theory $E_\theta$ on a closed $n-1$-dimensional $\Sigma$ manifold gives a line bundle 
$E_\theta (\Sigma,-) : \Bun_G(\Sigma) \cong \Pi_1 (\Sigma , BG) \longrightarrow \fvs$. 
\begin{proposition}
Let $G$ be a finite group and $\theta \in Z^n(BG;\U(1))$. For any $n-1$-dimensional closed oriented manifold $\Sigma$ the class $\langle E_\theta(\Sigma,-)\rangle \in H^1 (\Pi_1(\Sigma,BG);\U(1))$ describing the line bundle $E_\theta (\Sigma,-) : \Pi _1(\Sigma , BG) \longrightarrow \fvs$ is given by
\begin{align} 
\langle E_\theta(\Sigma,-)\rangle = \tau_\Sigma \theta,
\end{align} i.e.\ by the transgression $\tau_\Sigma \theta$ of $\theta$ to $\Pi_1(\Sigma,BG)$. 
\end{proposition}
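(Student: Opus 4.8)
The plan is to compare two $1$-cocycles on the groupoid $\Pi_1(\Sigma, BG)$: the cocycle $\langle E_\theta(\Sigma,-)\rangle$ classifying the line bundle $E_\theta(\Sigma,-)$ produced by the field theory, and the transgression $\tau_\Sigma \theta$. Since line bundles over a groupoid $\cG$ are classified by $H^1(\cG;\U(1))$ as explained in the discussion surrounding \eqref{EQ: Classification of line bundles}, it suffices to produce an explicit $\U(1)$-valued $1$-cocycle representing each and check they agree (or differ by a coboundary). First I would unpack the non-extended description of $E_\theta=BG_\theta$ from Remark~\ref{Rem: Non extended theory}: on a closed $(n-1)$-manifold $\Sigma$ with map $\varphi$ it assigns $\Sigma^\varphi(\emptyset,\emptyset)=\C[\Fund(\Sigma)]/{\sim}$, and on a morphism $(M,\psi)$ of the form $[0,1]\times\Sigma$ realizing a homotopy $h$ it multiplies by $\langle\psi^*\theta,\sigma_M\rangle$.

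\textbf{Computing the field-theory cocycle explicitly.}
The key step is to trace through what $E_\theta$ assigns to a morphism in $\Pi_1(\Sigma,BG)$, i.e.\ a homotopy class of homotopies $h\colon \varphi \Rightarrow \varphi'$ between classifying maps. Such a homotopy is exactly a map $h\colon [0,1]\times\Sigma\longrightarrow BG$ restricting to $\varphi,\varphi'$ on the ends, hence a $1$-morphism (indeed an invertible one, coming from the mapping-cylinder construction) in $T\text{-}\mathbf{Cob}_n$ with $T=BG$. Fixing a fundamental cycle $\sigma_\Sigma$ of $\Sigma$, the induced linear isomorphism $E_\theta(\Sigma,\varphi)\longrightarrow E_\theta(\Sigma,\varphi')$ sends the generator $[\sigma_\Sigma]$ to $\langle h^*\theta,\, [0,1]\times\sigma_\Sigma\rangle\,[\sigma_\Sigma]$, where $[0,1]\times\sigma_\Sigma$ is the fundamental cycle of the cylinder $[0,1]\times\Sigma$ built via the cross-product on singular chains (as in the proof of Lemma~\ref{lemmahomotopyinvariance}). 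Thus the $1$-cocycle representing $\langle E_\theta(\Sigma,-)\rangle$ is
\begin{align}
h \longmapsto \langle h^*\theta,\, [0,1]\times\sigma_\Sigma\rangle \ .
\end{align}
I would then note that homotopy-invariance of this assignment (stationary-on-the-boundary homotopies of $h$) is already guaranteed by the argument of Lemma~\ref{lemmahomotopyinvariance}, so the expression descends to a genuine function on $\Pi_1(\Sigma,BG)$.

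\textbf{Identifying with the transgression.}
Finally I would recall the definition of transgression $\tau_\Sigma$ from the paragraph above the statement: for a $1$-simplex $\lambda\colon\Delta_1\longrightarrow (BG)^\Sigma$, one sets $(\tau_\Sigma\theta)(\lambda)=(\mathrm{ev}^*\theta)(\lambda\times\sigma_\Sigma)$ via the evaluation map $\mathrm{ev}\colon (BG)^\Sigma\times\Sigma\longrightarrow BG$. Under the identification of $(BG)^\Sigma$ with the groupoid $\Pi_1(\Sigma,BG)$ (valid since $BG$ is aspherical), a morphism $h$ is precisely such a $1$-simplex, the product $\lambda\times\sigma_\Sigma$ corresponds to the chain $[0,1]\times\sigma_\Sigma$ on $[0,1]\times\Sigma$, and $\mathrm{ev}\circ(\lambda\times\mathrm{id})=h$. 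Hence $(\tau_\Sigma\theta)(h)=\langle h^*\theta,\,[0,1]\times\sigma_\Sigma\rangle$, which is exactly the cocycle computed above. The two classes therefore coincide in $H^1(\Pi_1(\Sigma,BG);\U(1))$.

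\textbf{Expected main obstacle.}
The routine parts are bookkeeping, but the one genuinely delicate point is matching the chain $\lambda\times\sigma_\Sigma$ appearing in the definition of transgression with the cylinder fundamental cycle $[0,1]\times\sigma_\Sigma$ used to define $E_\theta$ on the mapping-cylinder morphism; I expect this to require care about orientation conventions and about the cross-product normalization (the reference to \cite[IV.16]{Bredon} in Lemma~\ref{lemmahomotopyinvariance} fixes these). A secondary subtlety is checking that the chosen fundamental cycle $\sigma_\Sigma$ does not affect the resulting cohomology class: changing $\sigma_\Sigma$ by a boundary alters the cocycle by a coboundary, which I would verify using the equivalence relation \eqref{defmorphismon0cells2} defining $E_\theta$, thereby confirming the statement at the level of cohomology classes rather than merely cocycles.
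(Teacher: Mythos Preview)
Your proposal is correct and follows essentially the same route as the paper: fix a fundamental cycle $\sigma_\Sigma$ to trivialise each fibre, compute the line-bundle cocycle on a homotopy $h$ as $\langle h^*\theta,\,[0,1]\times\sigma_\Sigma\rangle$, and then identify this with $\tau_\Sigma\theta$ via the factorisation $h=\mathrm{ev}\circ(h\times\id)$ through the adjunction $[0,1]\to BG^\Sigma$. The paper's proof is terser and omits your side remarks about independence of $\sigma_\Sigma$ and homotopy invariance, but the argument is the same.
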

\begin{proof}
By abuse of notation we will denote the non-extended theory that $E_\theta$ gives rise to in the sense of Remark~\ref{Rem: Non extended theory} also by $E_\theta$. 
We fix a fundamental cycle $\sigma_\Sigma$ of $\Sigma$. This induces a linear isomorphism 
\begin{align}
\begin{split} 
E_\theta(\Sigma,\varphi)& \longrightarrow \C \\
[\sigma_\Sigma]  &\longmapsto 1
\end{split} 
\end{align}
for all $\varphi : \Sigma \longrightarrow BG$. Consider a morphism $h: \varphi_1\longrightarrow \varphi_2 $ in $\Pi_1(\Sigma,BG)$, i.e.\ a map $h\colon  [0,1] \times \Sigma \longrightarrow BG$. We can factor $h$ as 
\begin{equation}
\begin{tikzcd}
\, [0,1] \times \Sigma \ar[swap]{d}{h \times \id } \ar{r}{h} & BG \\
BG^\Sigma \times \Sigma  \ar[swap]{ru}{\text{ev}} &
\end{tikzcd} 
\end{equation} 
where we denote the image of $h$ under the adjunction $\Sigma \times - \dashv \ (-)^\Sigma $ again by $h$.
The cocycle ${Z}(\Sigma, -)$ evaluated on $h$ is given by 
\begin{align}
\begin{split} 
{Z_\theta}(\Sigma, -)(h)&= \langle h^* \theta , [0,1] \times \sigma_\Sigma   \rangle \\
 &= \langle \text{ev}^* \theta ,h \times \sigma_\Sigma \rangle \\
 &= \tau_\Sigma \theta \ . 
\end{split} 
\end{align}
\end{proof}

\noindent We obtain by evaluation of $E_\theta$ on an $n-2$-dimensional closed oriented manifold $S$ a representation $Z_\theta(S,-) : \Pi_2(S,BG) \longrightarrow \Tvs$ of the second fundamental groupoid of the mapping space $BG^S$ of maps from $S$ to $BG$. Since $BG$ is aspherical, this reduces to a 2-line bundle \begin{align} Z_\theta(S,-) : \Pi_1(S,BG) \longrightarrow \Tvs\end{align} over the groupoid of maps $S \longrightarrow BG$ with equivalence classes of homotopies as morphisms, i.e.\ the groupoid of $G$-bundles over $S$. This is accomplished by pulling back along a fixed equivalence
\begin{align}
\widehat{-} : \Pi_1(S,BG) \longrightarrow \Pi_2(S,BG) \end{align} being the identity on objects and sending a class $h$ of homotopies to an arbitrary, but fixed representative $\widehat{h}$. The coherence isomorphisms for $\widehat{-}$ are unique. 
\begin{theorem}\label{Theorem: Transgression}
	Let $G$ be a finite group and $\theta \in Z^n(BG;\U(1))$. Then for any $n-2$-dimensional closed oriented manifold $S$ the class $\langle E_\theta(S,-)\rangle \in H^2 (\Pi_1(S,BG);\U(1))$ describing the 2-line bundle $E_\theta (S,-) : \Pi_1 (S , BG) \longrightarrow \Tvs$ is given by
	\begin{align} 
	\langle E_\theta(S,-)\rangle = \tau_S \theta, 
	\end{align} i.e.\ by the transgression $\tau_S \theta$ of $\theta$ to $\Pi_1(S,BG)$. 
	\end{theorem}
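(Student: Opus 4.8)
The plan is to mirror the structure of the preceding proposition (the $n-1$-dimensional case) and lift it one categorical level, from line bundles to $2$-line bundles and from $H^1$ to $H^2$. Recall that by Remark~\ref{rem:2cocycle} the cohomology class $\langle E_\theta(S,-)\rangle \in H^2(\Pi_1(S,BG);\U(1))$ is computed by extracting an explicit groupoid $2$-cocycle from the $2$-functor $E_\theta(S,-) : \Pi_1(S,BG) \longrightarrow \Tvs$: for each object $\varphi$ we trivialise the $2$-line $E_\theta(S,\varphi)$ by choosing a fundamental cycle $\sigma_S$ of $S$ and declaring $\sigma_S \mapsto \C$; for each morphism $h$ we get a linear functor which, after trivialisation, is a $1$-dimensional vector space; and for each pair of composable morphisms the coherence isomorphism $\Phi$ from Lemma~\ref{lemmacompof1mor} supplies a nonzero complex number $\alpha_{h_2,h_1}$. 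I would first fix such a fundamental cycle $\sigma_S$, exactly as in the $n-1$-dimensional proof, and use it to identify $E_\theta(S,\varphi)$ with $\fvs$ via the enriched coYoneda description of the coend defining $E_\theta$ on $(n-2)$-manifolds.

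Next I would unwind what the functor $E_\theta(S,-)$ assigns to a morphism in $\Pi_1(S,BG)$, which is a homotopy class $h$ of maps $[0,1]\times S \longrightarrow BG$ that we realise by the cylinder $\widehat{h}$. Evaluating $E_\theta$ on this cylinder $2$-morphism, and using the definition of $T_\theta$ on $1$-morphisms together with the explicit form of the gluing isomorphism, the action on the trivialising fundamental cycle is governed by the pairing $\langle \widehat{h}^*\theta, [0,1]\times \sigma_S\rangle$. The same factorisation trick used in the previous proof applies: write $\widehat h$ through the evaluation map $\mathrm{ev} : BG^S \times S \longrightarrow BG$ via the adjunction $S\times(-)\dashv(-)^S$, so that $\langle \widehat h^*\theta,[0,1]\times \sigma_S\rangle = \langle \mathrm{ev}^*\theta, \widehat h \times \sigma_S\rangle$, which is by definition the transgressed cocycle $\tau_S\theta$ evaluated on $h$. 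The key new ingredient relative to the $n-1$ case is that here $\widehat h$ is a $1$-simplex worth of data in $BG^S$ (since $S$ has codimension two), so transgression lands in $H^2$ rather than $H^0$ or $H^1$.

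The main obstacle, and the place deserving genuine care, is the cocycle level: I must show that the $2$-cocycle extracted from the coherence data $\Phi_{(\Sigma_a,\varphi_a),(\Sigma_b,\varphi_b)}$ of Lemma~\ref{lemmacompof1mor} agrees on the nose (not merely cohomologously) with the transgressed cocycle $\tau_S\theta$ coming from evaluating $\theta$ on products $\widehat{h_2} \ast \widehat{h_1} \times \sigma_S$. Concretely, when composing two homotopies the vertical concatenation of the two cylinders $\widehat{h_1}$ and $\widehat{h_2}$ need not be the chosen representative $\widehat{h_2 \circ h_1}$; the discrepancy is a homotopy relative boundary, and by Lemma~\ref{lemmahomotopyinvariance} the pairing with $\theta$ is insensitive to it. I would therefore invoke the Gluing Lemma~\ref{lemmagluing2man} (or its corollary~\ref{lemmamscgluingcyclescob}) to identify the fundamental cycle of the concatenated cylinder with the sum of the individual ones, so that $\langle \widehat{h_2}^*\theta,\nu_2\rangle\langle \widehat{h_1}^*\theta,\nu_1\rangle = \langle(\widehat h_2\cup\widehat h_1)^*\theta,\nu_2+\nu_1\rangle$, precisely matching the multiplicativity that defines $\tau_S\theta$ as a group $2$-cocycle on $\Pi_1(S,BG)$. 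Assembling these identifications shows the extracted cocycle equals $\tau_S\theta$, and independence of the choices of $\sigma_S$ and of the representatives $\widehat{h}$ follows from the coboundary analysis of Remark~\ref{rem:2cocycle}, giving the claimed equality of classes $\langle E_\theta(S,-)\rangle = \tau_S\theta$.
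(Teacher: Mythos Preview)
Your proposal has a dimensional error that makes the core computation ill-defined. You write that ``the action on the trivialising fundamental cycle is governed by the pairing $\langle \widehat{h}^*\theta, [0,1]\times \sigma_S\rangle$'' and similarly manipulate $\langle \widehat{h_i}^*\theta,\nu_i\rangle$ with $\nu_i$ a fundamental cycle of $[0,1]\times S$. But $\sigma_S$ is an $(n-2)$-cycle, so $[0,1]\times\sigma_S$ is an $(n-1)$-chain, and $\theta$ is an $n$-cocycle: the pairing does not exist. Relatedly, $\tau_S\theta$ is a $2$-cocycle on $\Pi_1(S,BG)$, so it is evaluated on \emph{pairs} of composable morphisms (a $2$-simplex in the nerve), not on a single $h$. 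The value of $E_\theta$ on the cylinder $([0,1]\times S,\widehat{h})$ is the one-dimensional vector space $([0,1]\times S)^{\widehat h}(\sigma_S,\sigma_S)$, and the trivialisation $[0,1]\times\sigma_S\mapsto 1$ involves no $\theta$ at all.

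The cocycle $\theta$ enters only at the level of $2$-morphisms. The number $\alpha(h,h')$ arises from two pieces of the coherence data: the gluing isomorphism $\Phi$ of Lemma~\ref{lemmacompof1mor} (addition of chains, no $\theta$) followed by the evaluation of $E_\theta$ on a homotopy $H:\widehat{h'}\widehat{h}\simeq \widehat{h'h}$ relative boundary. This $H$ is a map $[0,1]^2\times S\to BG$, and one obtains $\alpha(h,h')=\langle H^*\theta,\nu\rangle$ with $\nu$ a fundamental cycle of the \emph{$n$-dimensional} manifold $[0,1]^2\times S$. The substantive work, which your outline misses, is to construct $H$ from a chosen filling $2$-simplex $\widetilde H:\Delta_2\to BG^S$ with boundaries $\widehat{h},\widehat{h'},\widehat{h'h}$ (this exists because $BG^S$ is aspherical), to triangulate $[0,1]^2$ so that $\nu$ decomposes as a genuine $2$-simplex plus two degenerate pieces, and then to use a \emph{normalised} representative of $\theta$ to kill the degenerate contributions. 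What survives is $\langle \mathrm{ev}^*\theta,\widetilde H\times\sigma_S\rangle=\tau_S\theta(h,h')$. Your gluing-lemma step is used only to build $\nu$, not to produce the cocycle value itself.
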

\begin{proof}
As explained in Section~\ref{Sec: P anomaly actions} there is a 2-functor $\widetilde{E_\theta(S,-)} : \Pi_1 (S,BG) \longrightarrow \FinVect \DS \id \DS \mathbb{C}^\times$ such that 
\begin{equation}
\begin{tikzcd}
\Pi_1(S,BG)  \ar[rr,"{E_\theta(S,-)}"] \ar[rd, "{\widetilde{E_\theta(S,-)}}", swap] &  & \Tvs \\ 
 &  \FinVect \DS \id \DS \mathbb{C}^\times \ar[ru] & 
\end{tikzcd}
\end{equation}
commutes up to natural isomorphism. To compute $\widetilde{E_\theta(S,-)}$, we define for each $\xi \in \Pi_1(S,BG)$ the equivalence
\begin{align} 
\chi_ \xi : Z_\theta(S,\xi) \longrightarrow \FinVect, \quad V * \sigma \longmapsto V 
\end{align} 
and fix the choice of a fundamental cycle $\sigma_S$ of $S$ to obtain a weak inverse
\begin{align} 
\chi_\xi^{-1} : \FinVect \longrightarrow Z_\theta(S,\xi), \quad V \longmapsto V* \sigma_S.
\end{align}
Next for any morphism $h: \xi_0 \longrightarrow \xi_1$ in $\Pi_1(S,BG)$ we define the vector space $V_{\widehat{h}}$ by the (weak) commutativity of the square 
\begin{equation}
\begin{tikzcd}
Z_\theta(S,\xi_0) \ar{rr}{Z( [0,1] \times S,\hat{h})} \ar[swap]{dd}{\chi_{\xi_0}} & & Z_\theta(S,\xi_1)  \\
	& & \\
	\FinVect \ar{rr}{-\otimes V_{\widehat{h}}} & & \FinVect \ar[swap]{uu}{\chi_{\xi_1}^{-1}}
	\end{tikzcd}
\end{equation}
i.e.\
\begin{align}
V_{\widehat{h}} = \chi_{\xi_1} \int^{\sigma \in \Fund(S)} ( [0,1] \times S )^{\widehat h} (\sigma,\sigma_S) * \sigma \cong ( [0,1] \times S)^{\widehat h} (\sigma_S,\sigma_S).\label{eqnvsvh}
\end{align} 
Note that we have a canonical isomorphism
	\begin{align}
	V_{\widehat h} \longrightarrow \mathbb{C}, \ \  [0,1] \times \sigma_S  \longmapsto 1.\label{eqnisovhC}
	\end{align} For two composable morphisms $h$ and $h'$ in $\Pi_1(S,BG)$ we denote the composition by $h'h$ and obtain the 2-isomorphism
\begin{footnotesize}
	\begin{align} E_\theta( [0,1] \times S,\widehat{h'}) E_\theta([0,1] \times S ,\widehat{h})  \xrightarrow{\substack{\text{coherence} \\  \text{of $E_\theta$}}}  E_\theta([0,1] \times S ,\widehat{h'} \widehat{h}) 
	\xrightarrow{\substack{\text{evaluation of $E_\theta$ on} \\  \widehat{h'} \widehat{h} \simeq \widehat{h'h}}} 
	E_\theta([0,1] \times S ,\widehat{h'h}).
	\end{align}
\end{footnotesize}	
 By \eqref{eqnvsvh} this amounts to a map
	\begin{align}
	V_{\widehat h} \otimes V_{\widehat{h'}} \longrightarrow V_{\widehat{h'}\widehat{h}} \longrightarrow V_{\widehat{h'h}} ,\end{align} which by means of \eqref{eqnisovhC} can be seen as an automorphism of $\mathbb{C}$, i.e.\ an invertible complex number. By construction this is the number $\alpha_{\widetilde{E_\theta(S,-)}} (h,h')$, i.e.\ the evaluation of 
	$\alpha_{\widetilde{E_\theta(S,-)}} \in Z^2(\Pi_1(S,BG);\U(1))$ on the 2-simplex defined by the composable pair $(h,h')$. 
	
	By definition of $E_\theta$ we find
	\begin{align}
	\alpha_{\widetilde{E_\theta(S,-)}} (h,h') = \langle H^* \theta , \nu \rangle,
	\end{align} where
\begin{itemize}
\item $\nu$ is a fundamental cycle of $[0,1]^2\times S$ satisfying~\eqref{EQ: Condition Fundamental cycle on manifolds with corners}
		
\item and $H: \widehat{h'} \widehat{h} \longrightarrow \widehat{h'h}$ is a homotopy relative boundary. 
		
\end{itemize}
Using the triangulation of $[0,1]^2$ given by	
	
\begin{center}
	\begin{overpic}[width=8cm,
		scale=0.3]{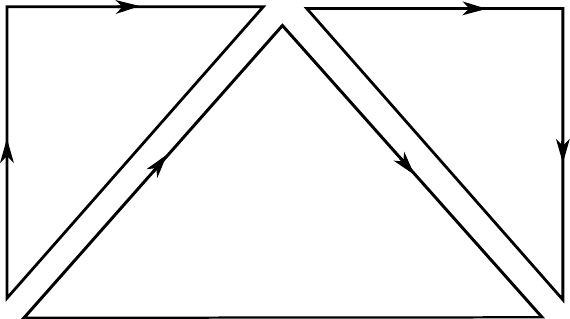}
\put(48,18){{$\nu_\Delta$}}
\put(10,40){{$\nu_-$}}
\put(85,40){{$\nu_+$}}
	\end{overpic}	 	
\end{center}
we get a fundamental cycle $\nu_\Box = \nu_- + v_\Delta + \nu_+$ of $[0,1]^2$.
Then $\nu :=  \nu_\Box \times \sigma_S $ satisfies \eqref{EQ: Condition Fundamental cycle on manifolds with corners}. To get a representative for $H$ we pick a 2-simplex $\widetilde H : \Delta_2 \longrightarrow BG^S$ such that $\partial_0\widetilde H = \widehat{h'}$, $\partial_1 \widetilde H = \widehat{h}$ and $\partial_2 \widetilde H = \widehat{h'h}$. This is possible since $BG^S$ is aspherical. Using the map $\Box : [0,1]^2 \longrightarrow BG^S$ sketched in Figure \ref{Fig:Sketch Box map} we define
\begin{align}
 H :  [0,1]^2  \times S \xrightarrow{\Box \times \id } BG^S \times S  \xrightarrow{\text{ev}} BG,
\end{align} where $\text{ev}: BG^S \times S \longrightarrow BG$ denotes the evaluation. 
\begin{figure}[h]
	\centering
	\begin{overpic}[width=12cm,
		scale=0.3]{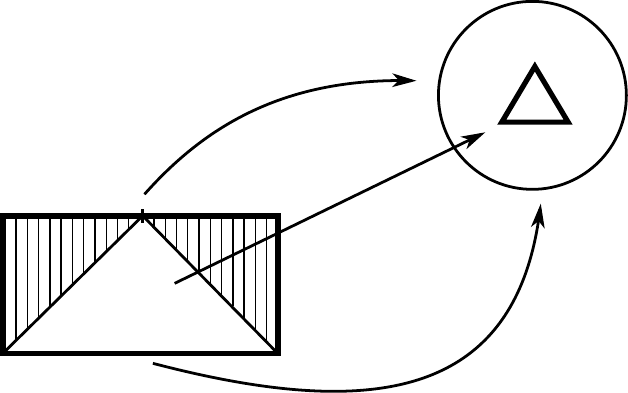}
		\put(35,50){{$\hat h_2 \circ \hat h_1$}}
		\put(50,3){{$\widehat{h_2 \circ  h_1}$}}
		\put(80,55){{$BG^S$}}
		\put(82,40){{$\Delta_{h_2, h_1}$}}
		\put(50,32){{$\tilde{H}$}}
	\end{overpic}
	\caption{Sketch for the definition of $\Box \colon [0,1^2]\longrightarrow BG^S$. The map is constant along the vertical lines.}
	\label{Fig:Sketch Box map}	
\end{figure} 

Now we find 
\begin{align}
\begin{split}
\alpha_{\widetilde{E_\theta(S,-)}} (h,h') &= \langle H^* \theta , \nu \rangle \\ &= \langle \theta,H_*( \nu_- \times \sigma_S )\rangle +  \langle \theta,H_*( \nu_\Delta \times \sigma_S)\rangle +\langle \theta,H_*( \nu_+ \times \sigma_S )\rangle\ \ .
\end{split} 	
\end{align} 
Without loss of generality we can work with a normalized representative for $\theta$ (see Proposition~\ref{Prop: Coboundary induces transformations}) which then vanishes on the degenerate simplices $H_*(\nu_-\times \sigma_S)$ and $H_*(\nu_+\times \sigma_S)$. Hence, we are left with
 \begin{align}\begin{split}
		\alpha_{\widetilde{E}_\theta(S,-)} (h,h') & = \langle \theta,H_*(\nu_\Delta \times \sigma_S)\rangle = \langle \text{ev}^* \theta, \Box_* \nu_\Delta \times \sigma_S \rangle \\
		 & =  \langle \text{ev}^* \theta, \widetilde H \times \sigma_S \rangle = \tau_S \theta (h,h')\ \ .
\end{split}
	\end{align}  This proves the assertion.
	\end{proof}

\subsection{Quantum Dijkgraaf-Witten theories}\label{Sec: DW via Orbifold}
The space of field configurations of a discrete gauge theory is the essentially finite 
groupoid of $G$-bundles. There exists a well-defined integration theory over such 
groupoids, which we review in Appendix~\ref{Sec: Integration over finite groupoids}.
The existence of such a measure makes the path integral quantization straightforward. 

The invariant assigned to an $n$-dimensional closed manifold $M$ is the path integral
\begin{align}
{\Za_{\text{DW}}}_\theta (M) \coloneqq \int_{\psi \in \Bun_G(M)}  E_\theta(M,\psi) \ \ .
\end{align}
We use the extended orbifold construction of Schweigert and Woike~\cite{EOFK} to define the quantum
theory as an extended topological field theory 
\begin{align}
{\Za_{\text{DW}}}_\theta \colon \ECob \longrightarrow \Tvs \ \ .
\end{align}  
The advantage of this approach is that it ensures that the result is a symmetric monoidal
2-functor. For a concrete construction of Dijkgraaf-Witten theories as extended field theories, see \cite{Morton} and \cite{FLHT} for the
fully extended field theory.  
One of the central results of \cite{EOFK} is the construction of an orbifoldization functor
$-/G$ from $n$-dimensional extended $G$-equivariant topological field theories to $n$-dimensional extended ordinary (i.e.\ non-equivariant) topological field theories. The orbifoldization combines a sum over twisted sectors with the computation of (homotopy) invariants by means of the parallel section functor developed in \cite{SWParallel}. 
Let $E\colon \EGCob \longrightarrow \Tvs$ be an extended quantum field theory.
We explain the orbifold construction on closed manifolds of dimension $n$, $n-1$ and 
$n-2$.  
The orbifold theory $E/G$ assigns to an $n$-dimensional manifold $M$ the  ``path integral"
\begin{align}
E/G(M) \coloneqq \int_{\psi \in \Bun_G(M)} E(M,\psi) \ \ . 
\end{align}
Restricting $E$ to the bundle groupoid over a closed $n-1$-dimensional manifold 
gives rise to a vector bundle $E(\Sigma, \cdot) \colon \Bun_G(\Sigma)\longrightarrow \fvs$.
The topological field theory $E/G$ assigns to $\Sigma$ the vector space of parallel sections
of this vector bundle. This space is the limit (in the categorical sense) of $E(\Sigma, \cdot)$.
We write this space suggestive as an end (see Example~\ref{Exa: limits as ends})
\begin{align}
E/G(\Sigma) \coloneqq \int_{\varphi \in \Bun_G(M)} E(\Sigma, \varphi) \ \ . 
\end{align}   
Similarly, the value on an $n-2$ dimensional manifold $S$ is the 2-vector space
of parallel sections of the 2-vector bundle $E(S,\cdot)\colon \Bun_G(S)\longrightarrow \Tvs$.
\begin{definition}\label{DW-theories}
For a finite group $G$ and $\theta\in H^n(BG; \U(1))$ the \emph{$n$-dimensional $\theta$-twisted Dijkgraaf-Witten theory 
\[
{{\Za_{\text{DW}}}} _\theta : \ECob \longrightarrow \Tvs
\]
with gauge group $G$} is defined to be the orbifold theory $E_\theta / G$ of the extended field theory $E_\theta : \EGCob \longrightarrow \Tvs$ associated to $\theta$.  
\end{definition}

We briefly explain the relation between Dijkgraaf-Witten theories and concepts from 
representation theory. We will highlight how topological arguments can be used
to prove results about algebraic objects. 
Let us first give the following formula for the number of simple objects in the category ${{\Za_{\text{DW}}}}_\theta(\mathbb{T}^{n-2})$ obtained by evaluation of the twisted Dijkgraaf-Witten theory on the $n-2$-dimensional torus $\mathbb{T}^{n-2}$. It follows as a special case from \cite[Theorem~4.21]{EOFK}: 
\begin{proposition}\label{satznumberofsimples}
	For a finite group $G$, $\theta\in H^n(BG; \U(1))$
	\begin{align}
	\# \{ \text{simple objects of ${{\Za_{\text{DW}}}}_\theta(\mathbb{T}^{n-2})$} \} = \frac{1}{|G|} \sum_{\substack{  g_1,\dots,g_n \in G \\ \text{mutually commuting}  }} \langle  \psi_{g_1,\dots,g_n}^* \theta, \sigma_{\mathbb{T}^n}   \rangle    \ , \label{eqnnumberofsimples}
	\end{align} where
	\begin{itemize}

		\item $\psi_{g_1,\dots,g_n}: \mathbb{T}^n \longrightarrow BG$ is a classifying map for the $G$-bundle $P$ over $\mathbb{T}^n$ specified by the holonomy values $g_1,\dots,g_n \in G$,
		
		\item $\sigma_{\mathbb{T}^n}$ is the fundamental class of the torus.
	
		\end{itemize}
	\end{proposition}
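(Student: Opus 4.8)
The statement to prove is Proposition~\ref{satznumberofsimples}, the formula for the number of simple objects of ${{\Za_{\text{DW}}}}_\theta(\mathbb{T}^{n-2})$. The author explicitly notes this is a special case of \cite[Theorem~4.21]{EOFK}, so the plan is to deduce it from the general orbifoldization machinery rather than to reconstruct everything from scratch.

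\textbf{Overall strategy.} By Definition~\ref{DW-theories}, ${{\Za_{\text{DW}}}}_\theta = E_\theta/G$, and the value on an $n-2$-dimensional manifold $S$ is the $2$-vector space of parallel sections of the $2$-vector bundle $E_\theta(S,-)\colon \Bun_G(S)\longrightarrow \Tvs$. The number of simple objects of a $2$-vector space of parallel sections is computed by a general ``categorical character'' formula from \cite{EOFK}: the dimension of the space of parallel sections of a $2$-line bundle (equivalently, the number of its simple objects, since the category of parallel sections of a $2$-vector bundle over an essentially finite groupoid is again a KV $2$-vector space) is a weighted groupoid cardinality, weighted by the transgressed cocycle. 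First I would invoke Theorem~\ref{Theorem: Transgression}, which identifies the cohomology class of the $2$-line bundle $E_\theta(S,-)$ with the transgression $\tau_S\theta \in H^2(\Pi_1(S,BG);\U(1))$. This is the key input linking the abstract orbifold output to an explicit cohomological quantity.

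\textbf{Reduction to the torus.} Specialize to $S=\mathbb{T}^{n-2}$. The crucial simplification is that $\Bun_G(\mathbb{T}^{n-2}) \simeq \Pi_1(\mathbb{T}^{n-2},BG)$ has objects (flat $G$-bundles on the $(n-2)$-torus) classified up to isomorphism by $(n-2)$-tuples of mutually commuting elements of $G$ modulo simultaneous conjugation, since $\pi_1(\mathbb{T}^{n-2})\cong \Z^{n-2}$ and a homomorphism $\Z^{n-2}\to G$ is an $(n-2)$-tuple of commuting elements. The number of simple objects of the parallel-section $2$-vector space is then a sum over isomorphism classes of objects of the dimension of the invariants of the corresponding transgressed line bundle under the automorphism group. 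Unwinding the general formula of \cite[Theorem~4.21]{EOFK} (a twisted groupoid cardinality of $\Bun_G(\mathbb{T}^{n-2})$ weighted by $\tau_{\mathbb{T}^{n-2}}\theta$), and using that $\dim$ of a one-dimensional space of invariants is detected by summing the character of the line over the relevant extra circle's holonomy, produces one further transgression, raising $\tau_{\mathbb{T}^{n-2}}\theta$ to an evaluation of $\theta$ itself against the fundamental class $\sigma_{\mathbb{T}^n}$ of the full $n$-torus. Concretely, the iterated transgression collapses the count to
\begin{align}
\frac{1}{|G|}\sum_{\substack{g_1,\dots,g_n\in G\\ \text{mutually commuting}}} \langle \psi_{g_1,\dots,g_n}^*\theta,\sigma_{\mathbb{T}^n}\rangle,
\end{align}
where the extra factor of $G$ in the prefactor $\tfrac{1}{|G|}$ records averaging over the holonomy $g_n$ of the adjoined circle (equivalently, taking invariants under the residual automorphisms), and the $n$-th commuting element parametrizes the loop along which one evaluates the character.

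\textbf{Main obstacle.} The genuinely nontrivial step is the bookkeeping of the two successive transgressions: Theorem~\ref{Theorem: Transgression} supplies $\tau_{\mathbb{T}^{n-2}}\theta$ as the class of the $2$-line bundle, but extracting the \emph{number of simple objects} requires applying the trace/dimension formula of \cite{EOFK} to the parallel-section category, which geometrically amounts to transgressing once more along the additional circle used to count invariant sections and then recognizing the resulting pairing as integration of $\psi^*\theta$ over $\mathbb{T}^n$. I would handle this by first verifying the elementary fact that $\dim\bigl(\Hom_{\Za(\mathbb{T}^{n-2})}(\mathbf{1},\mathbf{1})\bigr)$ equals the stated groupoid sum via the Mellin--Plancherel/character identity for the transgressed cocycle over $\Pi_1(\mathbb{T}^{n-1},BG)$, and then identifying the double transgression $\tau_{S^1}\tau_{\mathbb{T}^{n-2}}\theta$ with $\tau_{\mathbb{T}^{n-1}}\theta$ by naturality of transgression under the product decomposition $\mathbb{T}^{n-1}\cong S^1\times\mathbb{T}^{n-2}$, reducing everything to the closed-manifold invariant $\langle\psi^*\theta,\sigma_{\mathbb{T}^n}\rangle$ computed in the paragraph preceding Theorem~\ref{Theorem: Transgression}. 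The factor of $\tfrac1{|G|}$ and the mutual-commutativity constraint then emerge automatically from the groupoid cardinality of $\Bun_G(\mathbb{T}^n)$.
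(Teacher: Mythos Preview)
Your approach is plausible but takes a substantially more roundabout route than the paper. The paper's proof is two lines: it invokes the general fact that for \emph{any} extended TQFT $Z$, the number of simple objects of the 2-vector space $Z(\mathbb{T}^{n-2})$ equals the partition function $Z(\mathbb{T}^n)$ (this is the standard ``trace of the identity'' argument: $\dim Z(\mathbb{T}^{n-1}) = Z(\mathbb{T}^n)$, and $\dim Z(S^1\times\mathbb{T}^{n-2})$ counts simples of $Z(\mathbb{T}^{n-2})$). This reduces everything to computing ${{\Za_{\text{DW}}}}_\theta(\mathbb{T}^n)$, which the orbifold formula for closed top-dimensional manifolds \cite[Proposition~3.47]{EOFK} gives directly as the groupoid integral $\int_{\Bun_G(\mathbb{T}^n)} E_\theta(\mathbb{T}^n,\psi)$, i.e.\ the stated sum over commuting $n$-tuples divided by $|G|$.

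By contrast, you stay at the level of the 2-vector bundle on $\Bun_G(\mathbb{T}^{n-2})$, invoke the transgression theorem to identify its class, and then try to extract the number of simples from a twisted groupoid cardinality via \cite[Theorem~4.21]{EOFK}, which forces you into the ``two successive transgressions'' bookkeeping you flag as the main obstacle. That bookkeeping can be done, but it is exactly what the paper's argument sidesteps: the TQFT identity $\#\{\text{simples of }Z(\mathbb{T}^{n-2})\} = Z(\mathbb{T}^n)$ \emph{is} the repackaging of your double transgression, valid for any extended theory, so there is no need to open up the parallel-section description or Theorem~\ref{Theorem: Transgression} at all. Your route would work and is more explicit about where the formula comes from internally, but the paper's route is shorter and uses only the black-box TQFT axioms plus the top-dimensional orbifold formula.
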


\begin{proof}
	For any extended topological quantum field theory $Z$, the number of simple objects of the 2-vector space $Z(\mathbb{T}^{n-2})$ is given by the number $Z(\mathbb{T}^n)$ assigned to the $n$-torus. This proves
	\begin{align}
		\# \{ \text{simple objects in ${{\Za_{\text{DW}}}}_\theta(\mathbb{T}^{n-2})$} \} = {{\Za_{\text{DW}}}}_\theta (\mathbb{T}^n).
	\end{align} Now the definition of $Z_\theta$ and the formula for the orbifold theory on closed oriented top-dimensional manifolds given in \cite[Proposition~3.47]{EOFK} yield the result.
	\end{proof}

\noindent We now focus on Dijkgraaf-Witten theories (see Definition~\ref{DW-theories}) in
2-dimensions:

\begin{proposition}\label{dwtheorydim2}
For any finite group $G$ and $\theta\in H^2(BG; \U(1))$ the evaluation of the topological quantum field theory ${{\Za_{\text{DW}}}} _\theta = E_\theta / G: \Cob_{2,1,0} \longrightarrow \Tvs$ on the point is given by the category of $\theta$-twisted projective representations of $G$.
\end{proposition}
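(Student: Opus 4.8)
The plan is to combine the parallel-section description of the orbifold theory $E_\theta/G$ with the transgression computation of Theorem~\ref{Theorem: Transgression} and the dictionary between parallel sections and twisted projective representations furnished by Proposition~\ref{prop:projrep}. First I would identify the relevant background groupoid. Evaluated on the one-point object $\star$ (the unique connected $n-2=0$-dimensional object of $\EGCob$ when $n=2$), the orbifold theory assigns the $2$-vector space of parallel sections of the $2$-vector bundle $E_\theta(\star,-)\colon \Bun_G(\star)\longrightarrow\Tvs$. Since $\Bun_G(\star)\simeq \Pi_1(\star,BG)\simeq *\Ds G$, and, as recalled in the orbifold construction of Section~\ref{Sec: Parallel transport}, the parallel-section $2$-vector space is the categorical (here $2$-)limit, I would use the end description of limits from Example~\ref{Exa: limits as ends} together with the universal property of the $2$-limit to obtain a natural equivalence
\begin{align}
E_\theta/G(\star)\ \simeq\ \int_{\xi\in *\Ds G} E_\theta(\star,\xi)\ \simeq\ \big[\,*\Ds G,\Tvs\,\big]\big(\mbf1,\,E_\theta(\star,-)\big)\ ,
\end{align}
i.e.\ the category whose objects are natural $2$-transformations $\mbf1\Longrightarrow E_\theta(\star,-)$ and whose morphisms are modifications; here I use that $\fvs$ is the monoidal unit of $\Tvs$, so that $\mathrm{Hom}_\Tvs(\fvs,-)$ is an equivalence onto the identity.

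Next I would pin down the $2$-line bundle $E_\theta(\star,-)$. By Theorem~\ref{Theorem: Transgression} its class in $H^2(\Pi_1(\star,BG);\U(1))$ is the transgression $\tau_\star\theta$. Because $\star$ is $0$-dimensional its fundamental class is the generating $0$-cycle and the evaluation map $BG^\star\times\star\to BG$ is the canonical identification, so $\tau_\star$ is the identity map $H^2(BG;\U(1))\to H^2(*\Ds G;\U(1))\cong H^2(G;\U(1))$ and $\tau_\star\theta=\theta$. Invoking the classification of $2$-line bundles from Section~\ref{Sec: P anomaly actions} (the factorisation through $\Pic_2(\Tvs)\simeq \mathsf{B}^2\C^\times$), the $2$-functor $E_\theta(\star,-)$ is $2$-isomorphic to a $2$-cocycle $\alpha\colon *\Ds G\longrightarrow \mathsf{B}^2\C^\times\subset\Tvs$, i.e.\ a group $2$-cocycle on $G$ with $[\alpha]=\theta$; Proposition~\ref{Prop: Coboundary induces transformations} ensures the final answer is insensitive to the chosen representative of $\theta$.

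Finally I would apply Proposition~\ref{prop:projrep}: natural $2$-transformations $\mbf1\Longrightarrow\alpha$ are precisely the $\alpha$-twisted projective representations of $G$, and by the remark following that proposition the modifications between them are precisely the intertwiners. Composing the equivalences above then yields
\begin{align}
E_\theta/G(\star)\ \simeq\ \big[\,*\Ds G,\Tvs\,\big]\big(\mbf1,\,\alpha\big)\ \simeq\ \mathsf{Rep}^\theta(G)\ ,
\end{align}
the category of $\theta$-twisted projective representations of $G$, which is the assertion.

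I expect the main obstacle to be the first step: verifying that the parallel-section functor of~\cite{SWParallel,EOFK}, applied to the explicit fundamental-cycle model of $E_\theta(\star,-)$ built in Section~\ref{Sec: Parallel transport}, genuinely computes the $2$-limit and that the resulting identification with $\big[\,*\Ds G,\Tvs\,\big](\mbf1,-)$ is an equivalence of categories and not merely a bijection on isomorphism classes of objects. Tied to this is the need to check that the coherence data extracted from the fundamental-cycle model reproduces an honest group $2$-cocycle representing $\theta$, rather than a shifted or otherwise twisted variant; this is where the pairing $\langle\xi^*\theta,-\rangle$ in the definition of the morphism spaces must be matched against the standard cocycle of a projective representation. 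Once this identification is secured the remaining steps are formal. As a consistency check, the count of simple objects produced by Proposition~\ref{satznumberofsimples} for $\mathbb{T}^{n-2}=\star$ must coincide with the number of isomorphism classes of irreducible $\theta$-twisted projective representations of $G$.
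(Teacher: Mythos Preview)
Your proposal is correct and follows essentially the same route as the paper: identify $\Bun_G(\star)\simeq *\Ds G$, invoke Theorem~\ref{Theorem: Transgression} to see that the 2-line bundle $E_\theta(\star,-)$ is (up to equivalence) the 2-cocycle $\theta\colon *\Ds G\to \mathsf{B}^2\C^\times\subset\Tvs$, and then use that the orbifold value is the category of parallel sections, i.e.\ natural 2-transformations $\mbf1\Longrightarrow\theta$, which by Proposition~\ref{prop:projrep} is precisely the category of $\theta$-twisted projective representations. The obstacles you flag are not genuine difficulties here: the parallel-section 2-vector space is \emph{by definition} the hom-category $[\,*\Ds G,\Tvs](\mbf1,E_\theta(\star,-))$ in the orbifold construction, so no separate verification of a 2-limit equivalence is needed.
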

 
\begin{proof}
The groupoid $\Pi_1(\star, BG)$ is equivalent to the groupoid $\star \DS G$ with one object and automorphism group $G$. By Theorem \ref{Theorem: Transgression} (note that the transgression is the identity in that case) we find that the 2-vector bundle $E_\theta(\star,-)$ is given by
\begin{align}
 \star \DS G \overset{\theta}{\longrightarrow} \fvs \DS \id \DS \C^\times \longrightarrow \Tvs \ ,
\end{align} 
where $\theta$ is understood as a 2-functor. 
According to the definition of the orbifold,
 the 2-vector space ${{\Za_{\text{DW}}}} _\theta(\star)$ is given by the category of 1-morphisms from the trivial line bundle over $ \star \DS G$ to ${{\Za_{\text{DW}}}} _\theta(\star,-)$, i.e.\ by the parallel sections of ${{\Za_{\text{DW}}}} _\theta(\star,-)$. 
Spelling this out we see that ${{\Za_{\text{DW}}}} _\theta(\star)$ is the category of projective representation twisted by $\theta$, see Section~\ref{Sec: P anomaly actions}. 
\end{proof}

Given the explicit description of ${{\Za_{\text{DW}}}} _\theta(\star)$ provided by Proposition~\ref{dwtheorydim2} in the two-dimensional case, we can compute the number of irreducible $\theta$-twisted representation of $G$ by using Proposition~\ref{satznumberofsimples}. The right hand side of \eqref{eqnnumberofsimples}, i.e.\ the value of $\theta$-twisted Dijkgraaf-Witten theory on the torus, already appears in \cite[Equation (6.40)]{DijkgraafWitten}, although we should note that the reasoning in the proof of Proposition~\ref{satznumberofsimples} is only valid because we have described twisted two-dimensional Dijkgraaf-Witten theory as an \emph{extended} quantum field theory. 
Now  \eqref{eqnnumberofsimples} reduces to
	\begin{align}
	\# \{ \text{irreducible $\theta$-twisted representation of $G$} \}  = \frac{1}{|G|}\sum_{gh=hg} \frac{\theta(h,g)}{\theta(g,h)}
	\end{align} and hence to the result found in \cite[Corollary 13]{TwistedDWandGerbs} by algebraic methods. 
\begin{example}\label{Ex: 2D coycles}
We give a few concrete examples of 2-cocycles:
\begin{itemize}
\item[(a)] The group cohomology $H^2(\Z_N\times \Z_N;U(1))$ is $\Z_N$.
If we write the cyclic group $\Z_N$ additively then the non-trivial 2-cocycle corresponding to 
$k\in \{0,1,\dots , N-1\}$ is 
\begin{align}\label{EQ:Def 2 cocycle}
\omega_k\big((a_1,b_1)\,,\,(a_2,b_2)\big)= \exp\Big(\frac{2\pi \,\iu\, k}{N}\,a_1\, b_2 \Big)
\end{align}  
with $(a_1,b_1),(a_2,b_2)\in \Z_N\times \Z_N$.
For $N=2$, the partition function $Z_{\omega_1}(\mathbb{T}^2)$ on $\mathbb{T}^2$ for the non-trivial $\Z_2\times \Z_2$ cocycle is $1$ corresponding to the fact that there exists only one $\omega_1$-twisted irreducible representation of $\Z_2\times \Z_2$~\cite{TwistedDWandGerbs}. 

\item[(b)] The degree~2 group cohomology of the dihedral group $D_8=\langle a,b \mid a^4=b^2=1 \
  , \ b\,a\,b^{-1}=a^{-1}\rangle$ with values in $U(1)$ is $\Z_2$. The non-trivial 2-cocycle is given by~\cite[Section 3.7]{ProjectiveRep}
\begin{align}
\omega\big(a^i\, b^j, a^{i'}\,b^{j'}\big)=
\begin{cases}
1 &, \quad \mbox{$j=0$} \ , \\
\exp\big( \frac{2\pi \,\iu\, }{4} \, i' \big) &, \quad \mbox{$j=1$} \ .
\end{cases}
\end{align} 
\end{itemize}
\end{example}

Moving up to 3-dimensions one finds 
\begin{theorem}\label{Thm: 3D DW}
For any finite group $G$ and $\theta\in H^3(BG; \U(1))$ the evaluation of ${{\Za_{\text{DW}}}} _\theta = E_\theta / G$ on the circle is given by the representation category of the $\theta$-twisted Drinfeld double of $G$. 
\end{theorem}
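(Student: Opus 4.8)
The plan is to unwind the orbifold construction of Definition~\ref{DW-theories} on the circle and then recognise the resulting linear category algebraically. By definition ${{\Za_{\text{DW}}}}_\theta(S^1)=E_\theta/G(S^1)$ is the $2$-vector space of parallel sections of the $2$-vector bundle $E_\theta(S^1,-)\colon \Bun_G(S^1)\longrightarrow \Tvs$, i.e.\ the end $\int_{\varphi\in \Bun_G(S^1)} E_\theta(S^1,\varphi)$ (see Example~\ref{Exa: limits as ends}), which as in Proposition~\ref{dwtheorydim2} is the category of $1$-morphisms $\mbf1\Longrightarrow E_\theta(S^1,-)$ of $2$-vector bundles over $\Bun_G(S^1)$. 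First I would identify the indexing groupoid: since $BG$ is aspherical, $\Bun_G(S^1)\cong \Pi_1(S^1,BG)$ is equivalent to the \emph{loop (inertia) groupoid} $\Lambda G$, whose objects are elements $g\in G$ (holonomies of flat bundles on $S^1$) and whose morphisms $g\longrightarrow h\,g\,h^{-1}$ are group elements $h\in G$ acting by conjugation; in particular $\mathrm{Aut}_{\Lambda G}(g)=C_G(g)$, the centraliser.

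Next I would apply Theorem~\ref{Theorem: Transgression} with $n=3$ and $S=S^1$: the $2$-line bundle $E_\theta(S^1,-)$ is classified by the transgressed class $\tau_{S^1}\theta\in H^2(\Lambda G;\U(1))$. Combining this with the dictionary of Section~\ref{Sec: P anomaly actions} (Proposition~\ref{prop:projrep} together with the classification \eqref{EQ: Classification of 2-line bundles}), the category of parallel sections of a $2$-vector bundle over a groupoid $\mathcal{G}$ classified by a $2$-cocycle $\alpha$ is precisely the category of $\alpha$-twisted projective representations of $\mathcal{G}$. Hence ${{\Za_{\text{DW}}}}_\theta(S^1)$ is equivalent to the category of $\tau_{S^1}\theta$-twisted representations of $\Lambda G$: such an object assigns to each $g\in G$ a vector space $V_g$ carrying a $(\tau_{S^1}\theta)|_{C_G(g)}$-projective representation of $C_G(g)$, together with compatible intertwiners $V_g\longrightarrow V_{hgh^{-1}}$ for the conjugation morphisms.

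The final and most technical step is to match this with $\mathsf{Rep}\big(D^\theta(G)\big)$. Here I would compute the transgression $\tau_{S^1}\theta$ explicitly from the simplicial model of $BG$ and show that its restriction to the automorphism group $C_G(g)$ at each object is exactly the Dijkgraaf--Pasquier--Roche $2$-cocycle
\begin{align}
\tau_g\theta(h,k)=\frac{\theta(g,h,k)\,\theta(h,k,g)}{\theta(h,g,k)}\ , \qquad h,k\in C_G(g)\ ,
\end{align}
with the off-diagonal components encoding the conjugation data. These are precisely the structure constants defining the $\theta$-twisted Drinfeld double $D^\theta(G)$ \cite{Twisted_Drinfeld_double,TwistedDWandGerbs}. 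A module over $D^\theta(G)$ is, by the original analysis of \cite{Twisted_Drinfeld_double}, the same datum as a $G$-graded vector space $V=\bigoplus_{g\in G}V_g$ with a $\tau_g\theta$-projective $C_G(g)$-action on each $V_g$ intertwined compatibly under conjugation, i.e.\ exactly a $\tau_{S^1}\theta$-twisted representation of $\Lambda G$. Assembling these identifications yields the asserted equivalence ${{\Za_{\text{DW}}}}_\theta(S^1)\simeq \mathsf{Rep}\big(D^\theta(G)\big)$.

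I expect the main obstacle to be the cocycle bookkeeping in this last step: one must carefully track the simplicial transgression map $\tau_{S^1}\colon Z^3(BG;\U(1))\longrightarrow Z^2(\Lambda G;\U(1))$, verify that the resulting $2$-cocycle on $\Lambda G$ agrees on the nose (not merely up to coboundary, or else absorb the discrepancy into the chosen trivialisations as in Proposition~\ref{Prop: Coboundary induces transformations}) with the collection of DPR cocycles, and confirm that the algebraic equivalence between twisted loop-groupoid representations and $D^\theta(G)$-modules holds at the level of the underlying linear categories, as is all that is claimed here. The topological inputs, namely the identification of $\Bun_G(S^1)$ with $\Lambda G$ and the application of Theorem~\ref{Theorem: Transgression}, are routine given the machinery already developed.
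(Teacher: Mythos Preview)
Your proposal is correct and follows essentially the same route as the paper: identify ${{\Za_{\text{DW}}}}_\theta(S^1)$ as parallel sections of the $2$-line bundle $E_\theta(S^1,-)$ over the loop groupoid $G\DS G\simeq\Pi_1(S^1,BG)$, invoke Theorem~\ref{Theorem: Transgression} to see that this bundle is classified by $\tau_{S^1}\theta$, and then recognise the result as $\tau_{S^1}\theta$-twisted representations of $G\DS G$. The only difference is in the final step: where you propose to compute the transgression explicitly and match the DPR cocycles by hand, the paper simply cites \cite[Proposition~8 and Theorem~17]{TwistedDWandGerbs}, which already establish precisely the equivalence between $\tau_{S^1}\theta$-twisted representations of the loop groupoid and modules over the twisted Drinfeld double, so your ``main obstacle'' is in fact already done in the literature.
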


\begin{proof}
The 2-vector space associated to $S^1$ is the space of parallel sections of the 
2-line bundle $\tau \theta \colon G \DS G \longrightarrow \star \DS \star \DS \C^\times \longrightarrow \Tvs$. This 2-vector space is given, as explained in detail in
Section~\ref{Sec: P anomaly actions} by the category of $\tau_{\mathbb{S}^1}\theta$-twisted representations of the action groupoid $G\DS G$. By \cite[Proposition~8 and Theorem~17]{TwistedDWandGerbs} this category is the representation category of the twisted Drinfeld double \cite{DPR}. 
\end{proof}
\noindent This result implies for instance that we can get from Proposition~\ref{satznumberofsimples} an easy topological proof of the formula for the number of irreducible representations of the $\theta$-twisted Drinfeld double given in \cite[Theorem 21]{TwistedDWandGerbs}. 

Moreover, we see that the $\theta$-twisted Dijkgraaf-Witten theory of Definition
~\ref{DW-theories} generalizes the $\theta$-twisted Dijkgraaf-Witten theory constructed in~\cite{Morton} for the 3-2-1-dimensional case to arbitrary dimension because they yield the same modular category upon evaluation on the circle, which is sufficient by the classification result of \cite{bartlett2015modular}. 
\begin{example} 
The cohomology group $H^3(\Z_N;U(1))$ is $\Z_N$. The 3-cocycles have the
concrete form~\cite[Proposition~2.3]{Huang2014}  
\begin{align}\label{Eq: Zn 3-cocycle}
\omega_k(a,b,c)= \exp\Big(\frac{2\pi \,\iu\, k}{N} \, a \, \Big\lfloor\frac{b+c}{N}\Big\rfloor \Big) 
\end{align}  
for $a,b,c,k\in \Z_N=\lbrace 0,1,\dots, N-1 \rbrace$, where $\lfloor
r\rfloor$ denotes the integer part of the real number $r\in \R$,
i.e.~the largest integer less than or equal to $r$. These theories are
studied in~\cite{Kapustin:2014gua}. They have been extended to a product
of an arbitrary number of cyclic groups $\Z_{N_i}$ (i.e.\ a generic
finite abelian group) in~\cite{Chen:2011pg,Wang:2014tia}.
\end{example} 

\subsection{Equivariant Dijkgraaf-Witten theories}\label{Sec: Equivariant DW}
In Section~\ref{Sec: Gauging} we will use an equivariant version of the Dijkgraaf-Witten theories
generalizing work of \cite{NMS}:
as a generalization of the orbifold construction, we get for any morphism $\lambda : H \longrightarrow J$ of finite groups a \emph{pushforward map} $\lambda_*$ from $H$-equivariant to $J$-equivariant topological field theories, see \cite[Section~6]{OFK} for the non-extended case and \cite[Section~3.3]{EOFK} for the extended case needed here. The orbifold
construction is recovered as the push along the group homomorphism
$G\longrightarrow 1$ to the group $1$ with one element.

\begin{definition}\label{defgendwtheory}
Let $\lambda : H \longrightarrow J$ be a morphism of finite groups and $\theta\in H^n(BH;\U(1))$.
The \emph{$\theta$-twisted $J$-equivariant Dijkgraaf-Witten theory} ${{\Za_{\text{DW}}}}^\lambda_\theta := \lambda_* E_\theta : J\text{-}\Cob_{n,n-1,n-2} \longrightarrow \Tvs$ is defined to be the pushforward of $E_\theta$ along $\lambda$. 
\end{definition}

\begin{remark}
This construction can be generalized as follows: Given a sequence of finite groups 
\begin{align}
G_0 \overset{\lambda_1}{\longrightarrow} G_1 \overset{\lambda_2}{\longrightarrow} \dots \xrightarrow{\lambda_{n}} G_n
\end{align}
and 3-cocycles $\theta_j \in H^3(BG_j;\U(1))$ for $0\le j\le n$ we can construct the $G_n$-equivariant topological quantum field theory
\begin{align}
{\lambda_{n}}_*(\cdots ({\lambda_2}_*({\lambda_1}_* Z_{\theta_0}\otimes Z_{\theta_1})\otimes \cdots )\otimes Z_{\theta_{n-1}}) \otimes Z_{\theta_n} \ .
\end{align}
Corresponding to this theory there exists a potentially interesting $G_n$-multimodular tensor category.  
\end{remark}

We will now provide an explicit description of the corresponding 
non-extended field theory 
\begin{align}
{{\Za_{\text{DW}}}}^\lambda_\theta \colon \GCob_n \longrightarrow \fvs
\end{align}
for a surjective group homomorphism $\lambda \colon \widehat{G}\longrightarrow G$,
which will be crucial for the constructions 
in Section~\ref{Sec: Gauging}. 
Let $\Sigma$ be an $n{-}1$-dimensional closed manifold.
The group homomorphism $\lambda$ induces an extension functor 
\begin{align}
\lambda_* \colon \mathsf{Bun}_{\widehat{G}}(\Sigma) \longrightarrow \mathsf{Bun}_G(\Sigma) \ . 
\end{align}  
This functor acts on classifying maps by post-composition with the map $B\widehat{G}\longrightarrow B G$ induced by $\lambda \colon \widehat{G}\longrightarrow G$, which by a slight abuse of notation we denote again by $\lambda$. 
For a bundle $\varphi \in \BunG(\Sigma)$ we denote by $\lambda_*^{-1}[\varphi]$ the homotopy fibre (see Appendix~\ref{Sec: Homotop Groupoid})
\begin{equation}
\begin{tikzcd}
\lambda_*^{-1}[\varphi] \ar[r] \ar[d] & \mathsf{Bun}_{\widehat{G}}(\Sigma) \ar[d, "\lambda_*"] \\
* \ar[r,swap, "\varphi"] & \BunG(\Sigma)
\end{tikzcd}
\end{equation}
Concretely, objects of $\lambda_*^{-1}[\varphi]$ are pairs $(\widehat{\varphi},h)$ of a $\widehat{G}$-bundle $\widehat{\varphi}$ and a gauge transformation $h \colon \lambda_* \widehat{\varphi} \longrightarrow \varphi $. Morphisms are gauge transformations $\widehat{h} \colon \widehat{\varphi} \longrightarrow \widehat{\varphi}\,'$ such that the diagram
\begin{equation}
\begin{tikzcd}
\lambda_* \widehat{\varphi} \ar[rr, "\lambda_* \widehat{h}"] \ar[rd, " h ", swap] & & \lambda_* \widehat{\varphi}\,' \ar[ld, " h' "] \\
 & \varphi &
\end{tikzcd}
\end{equation}
commutes.
The theory ${{\Za_{\text{DW}}}}^\lambda_\theta$ is defined on an object $(\Sigma, \varphi \colon \Sigma \longrightarrow BG)$ as
\begin{align}
\label{EQ:EDW on Objects}
{{\Za_{\text{DW}}}}^\lambda_\theta(\Sigma, \varphi)= \int_{\lambda_*^{-1}[\varphi]} \, \Big( \lambda_*^{-1}[\varphi] \longrightarrow \mathsf{Bun}_{\widehat{G}}(\Sigma) \xrightarrow{L_{\widehat{\omega}}} \fvs \Big) \ .
\end{align}
This should be regarded as a quantization of the $D=\ker(\lambda)$-gauge fields
while leaving the $G$-sector classical. 
This limit can be realized as a vector space of parallel sections. In this case a parallel section $f\in {{\Za_{\text{DW}}}}^\lambda_\theta(\Sigma, \varphi)$ consists of an element $f(\widehat{\varphi},h)\in E_{\theta}(\Sigma,\widehat{\varphi})$ for all $(\widehat{\varphi},h) \in \lambda_*^{-1}[\varphi] $
satisfying 
\begin{align}
f(\widehat{\varphi}', h')= E_\theta ([0,1]\times \Sigma, \widehat{h}) f(\varphi, h)
\end{align} 
for all morphisms $\widehat{h}\colon (\varphi, h)\longrightarrow (\varphi', h')$.
Let $(M, \psi)\colon (\Sigma_a, \varphi_a)\longrightarrow (\Sigma_b,\varphi_b)$ be a morphism in $G\text{-}\Cob$.
To define the pushforward on a parallel section $f(\,\cdot\,)\in {{\Za_{\text{DW}}}}^\lambda_\theta(\Sigma_a, \varphi_a)$ we fix fundamental cycles $\sigma_a$ and $\sigma_b$ of $\Sigma_a$ and $\Sigma_b$, respectively, and write $f$ as $f(\,\cdot\,)=\mathbf{f}(\,\cdot\,)\,[\sigma_a]$. 
We define
\begin{align}
\label{EQ:EDW on morphisms}
\begin{split}
 {{\Za_{\text{DW}}}}^\lambda_\theta&(M, \psi)[f](\widehat{\varphi}_b,h_b) \\ 
& = \Big(\int_{(\widehat{\psi},h,\widehat{h}\,)\in \lambda_*^{-1}[\psi]|_{(\widehat{\varphi}_b,h_b)}} \, \big\langle \, \widehat{h}^* \widehat{\theta},[0,1] \times \sigma_b \big \rangle \, \big\langle  \widehat{\psi}^* \widehat{\theta}, \sigma_M \big \rangle \, \mathbf{f}\big(\widehat{\psi}|_{\Sigma_a}, h|_{\Sigma_a}\big)\Big) \ [\sigma_b]
\end{split} 
\end{align}
with $\sigma_M \in \Fund^{\sigma_b}_{\sigma_a}(M)$; here the homotopy pullback
$\lambda_*^{-1}[\psi]|_{(\widehat{\varphi}_b,h_b)}$ is the groupoid
with objects $(\widehat{\psi},h,\widehat{h}\,)$ where $
\widehat{\psi} \colon M \longrightarrow B\widehat{G}$ is a
$\widehat{G}$-bundle, $h \colon \lambda_* \widehat{\psi}
\longrightarrow \psi$ is a gauge transformation, and
$\widehat{h}\colon \widehat{\psi}\,|_{\Sigma_b} \longrightarrow
\widehat{\varphi}_b$ is a gauge transformation
such that the diagram
\begin{equation}
\begin{tikzcd}
\lambda_*\widehat{\psi}\,\big|_{\Sigma_b} \ar[rr, "\lambda_*\widehat{h}"] \ar[rd,"h|_{\Sigma_b}",swap] & & \lambda_*\widehat{\varphi}_b \ar[ld, "h_b"] \\
 & \varphi_b &
\end{tikzcd}
\end{equation}
commutes. 
\begin{remark}
For concrete computations it is sometimes helpful to note that for surjective $\lambda$ 
the induced map $\Bun_{\widehat{G}}(M)\longrightarrow \Bun_{G}(M)$ is a fibration of 
groupoids (see Section~\ref{Sec: Homotop Groupoid}) in most models for the bundle groupoid. This allows us to replace homotopy
fibres with ordinary fibres. For example to describe the ordinary Dijkgraaf-Witten theory
corresponding to the group homomorphism $G\longrightarrow 1$ we 
can describe the groupoid of principal $1$-bundles on
a manifold $\Sigma$
by the terminal groupoid with one object and one morphism. Every map into 
the terminal groupoid is a fibration. 
The description of the vector space ${{\Za_{\text{DW}}}}_\theta(\Sigma)$  by
parallel sections reduces to: a {parallel
  section} $f$ consists of an element
$f(\varphi)\in E_{\theta}(\Sigma,\varphi)$ for all $\varphi \in
\BunG(\Sigma)$ such that $E_\theta ([0,1]\times \Sigma, h)
\big(f(\varphi) \big)= f(\varphi')$ for all gauge transformations
$h\colon \varphi \longrightarrow \varphi'$. The space of parallel
sections can be regarded as the space of gauge-invariant functions on
the set of classical gauge field configurations.
For this reason the definition can be interpreted as an implementation of the Gauss Law in quantum gauge theory, which requires that physical states must be gauge-invariant.

Now consider a cobordism $M\colon \Sigma_1 \longrightarrow \Sigma_2$.  We fix representatives $\sigma_1$ and $\sigma_2$ of the fundamental classes of $\Sigma_1$ and $\Sigma_2$, respectively. This allows us to express the value of a parallel section $f\in {{\Za_{\text{DW}}}}_\theta(\Sigma_1)$ on a principal $G$-bundle $\varphi_1\in \BunG(\Sigma_1)$ as $f(\varphi_1)=\mathbf{f}(\varphi_1)\, [\sigma_1]$ with $\mathbf{f}(\varphi_1) \in \C$. The definition
of ${{\Za_{\text{DW}}}}_\theta$ in \eqref{EQ:EDW on morphisms} reduces to
\begin{align}
\label{Definition on Morphisms}
{{\Za_{\text{DW}}}}_\theta (M)(f)(\varphi_2)= \Big( \int_{(\psi,h)\in
  \BunG(M)|_{\varphi_2}} \, \langle  h^* \theta, [0,1] \times
  \sigma_{2} \rangle \, \langle \psi^* \theta , \sigma_M  \rangle \,
  \mathbf{f}(\psi|_{\Sigma_1}) \Big) \ [\sigma_2] \ ,
\end{align}  
where $\sigma_M$ is a representative for the fundamental class of $M$ satisfying $\partial \sigma_M = \sigma_2-\sigma_1$ and we consider the gauge transformation $h$ as a homotopy $h \colon [0,1]\times \Sigma_2 \longrightarrow BG$. This definition is independent of all choices involved.
\end{remark}

\section{Discrete symmetries and 't Hooft anomalies}\label{Sec: DW Sym}
In this section we study actions of a finite group $G$ as symmetries of a 
Dijkgraaf-Witten theory with gauge group $D$ and topological action 
$\omega \in Z^n(BD;(1))$. These actions are closely related to non-abelian group 
cohomology, which we review in Section~\ref{Sec: Non abelian group cohomology}.
Afterwards we study the gauging of the symmetry group $G$ in Section~\ref{Sec: Gauging}.
It is not always possible to gauge a given symmetry. A 't Hooft anomaly is an 
obstruction to the gauging of symmetries. The corresponding obstruction theory, which 
is the content of Section~\ref{Sec: Obstruction} is 
encoded by the Lyndon-Hochschild-Serre spectral sequence.

\subsection{Discrete symmetries of Dijkgraaf-Witten theories}\label{Sec: Finite symmetries}
So far all symmetries considered in this thesis where encoded as limit morphisms 
in the bicategory of cobordisms with background fields $\ECobF$. Symmetries of this
type correspond to the natural notion of isomorphisms of $\F$-background 
fields and are present in any quantum field theory defined on $\ECobF$.   

In addition, a particular theory can be invariant under additional transformations of 
background gauge fields. For example a field theory containing a complex valued scalar
field might be invariant under complex conjugation of the field; or might not. 
It is this type of symmetry we study in
this section for the theory ${E}_\omega$ and 
${{\Za_{\text{DW}}}}_\omega$ . 
 
Before asking if ${E}_\omega$ is invariant under a transformation
of background fields we first need to implement the action.
For the symmetry to be compatible with cutting and gluing of manifolds,
we describe it as an endofunctor of $\DCob$ acting by pullback along
the inverse on a field theory. There is a natural way to construct
endofunctors of $\DCob$ from homeomorphisms of $BD$ which is described by a 2-functor 
\begin{align}\label{End(BG)=>End(D-Cob)}
 \mathcal{R} \colon * \DS \Pi_1 [BD,BD] \longrightarrow * \DS  \End (\DCob) \ , 
\end{align}
where $\Pi_1[BD,BD]$ is the category with continuous maps $BD\longrightarrow BD$ as objects and equivalence classes of homotopies as morphisms. 
Concretely, $\mathcal{R}$ sends a continuous map $\chi \colon BD \longrightarrow BD$ to the endofunctor
\begin{align}
\begin{split}
\mathcal{R}(\chi) \colon \DCob &\longrightarrow \DCob \\
(\Sigma,\varphi \colon \Sigma \longrightarrow BD) & \longmapsto (\Sigma ,\chi \circ \varphi\colon \Sigma \longrightarrow BD) \\
\big( (M,\psi) \colon (\Sigma_1,\varphi_1) \longrightarrow 
  (\Sigma_2,\varphi_2) \big) & \longmapsto \big( (M,\chi \circ\psi) \colon (\Sigma_1,\chi \circ\varphi_1)\longrightarrow (\Sigma_2,\chi \circ \varphi_2) \big)
\end{split}
\end{align}
and a homotopy $h \colon \chi_1 \longrightarrow \chi_2$ to the natural transformation $\mathcal{R}(h)\colon \mathcal{R}(\chi_1)\Longrightarrow \mathcal{R}(\chi_2)$ with components
\begin{align}
\mathcal{R}(h)_{(\Sigma,\varphi)} = \left( [0,1] \times \Sigma , h \circ \id_\varphi \right)
\end{align}
where $h \circ \id_\varphi$ denotes the horizontal composition of homotopies. The naturality of $\mathcal{R}(h)$ follows from Lemma~\ref{Lem: non-constant morphism} adjusted 
to $\DCob$. By the bicategorical Yoneda Lemma, automorphisms of $BD$ correspond to automorphisms of the stack of principal $D$-bundles (which is represented by $BD$). Hence a symmetry corresponding to a homeomorphism of $BD$ acts on the space of field configurations.\footnote{
More generally, automorphisms of a stack $\F$ induce endofunctors of the 
category $\CobF$.}  

To define a symmetry, the group $G$ only has to act up to gauge transformations. 
Recall that for finite groups, gauge transformations and homotopies between classifying maps are in one-to-one correspondence. 
For this reason we expect a symmetry for every action of $G$ on $BD$ up 
to a `homotopy' which preserves $\omega$. Since $BD$ is a homotopy 1-type, 
we can work with the following concrete description.
\begin{definition}\label{Def: homotopy Coherent action}
An {action of $G$ on $BD$ up to (coherent) homotopy} is a 2-functor 
\begin{align}
\alpha \colon * \DS G \longrightarrow * \DS \Pi_1[BD,BD] \ ,
\end{align}
where $* \DS G$ is considered as a 2-category with one object, the group $G$ as 1-morphisms and only identity 2-morphisms. 
\end{definition}

\begin{remark}
To unpack this compact definition note that the category   
$\Pi_1[BD,BD]$ is equivalent to the action groupoid
\begin{align}
[\pi_1(BD),\pi_1(BD)]\DS D = \End_{\Grp}(D)\DS D \ ,
\end{align}
where the action of $D$ on a group homomorphism is by conjugation. Every action of $G$ up to homotopy takes values in the full subgroupoid $\Aut_{\Grp}(D)\DS D$ of automorphisms of $D$.
An arbitrary 2-functor ${* \DS G} \longrightarrow *\DS (\Aut_{\Grp}(D)\DS D)$ is called a \emph{non-abelian group cocycle} \cite{Baez2010, Blanco2005}. Hence, homotopy coherent actions on $BD$ are classified by non-abelian group cocycles. 
Non-abelian cocycles also appear in the construction of equivariant Dijkgraaf-Witten theories~\cite{NMS} under the name weak 2-cocycles.
We discuss non-abelian group cohomology in more detail in Section~\ref{Sec: Non abelian group cohomology}.  

If $D$ is abelian there are no morphisms between different objects in $\Aut_{\Grp}(D)\DS D$. This implies that an action up to homotopy of $G$ on $BD$ is given by a proper action of $G$ on $D$ and a group 2-cocycle in $H^2(BG;D)$ describing the coherence isomorphisms of the corresponding 2-functor. This agrees with the physical description in \cite{Kapustin:Symmetries}. 
\end{remark}
For every action $\alpha \colon {* \DS G} \longrightarrow *\DS \Pi_1[BD,BD]$ up to homotopy the 2-functor \eqref{End(BG)=>End(D-Cob)} induces via pullbacks a 2-functor
\begin{align}\label{EQ: Action of kinematic symmetries on TFTs}
\begin{split}
\rho \colon {* \DS G}&\longrightarrow \DTFT\DS \End_{\Cat}(\DTFT)\hookrightarrow \Cat \\
g &\longmapsto \mathcal{R}\big(\alpha(g^{-1})\big)^* \ ,
\end{split}
\end{align}  
where we denote by $\Cat$ the 2-category of categories and by $\DTFT$ the category 
of n-dimensional $D$-equivariant topological field theories.

The (exponentiated) action of a gauge theory can be considered as a gauge-invariant map from the space of field configurations on an $n$-dimensional manifold $M$, in our case $\BunD(M)$, to $U(1)$. An action of $G$ on the space of field configurations induces an action via pullbacks on the set of gauge-invariant functions from the space of field configurations to $U(1)$.
A theory admits the symmetry $G$ if its (exponentiated) action is invariant under this action, i.e.\ it is a fixed point. 
By categorification we arrive at the following description.

\begin{definition}\label{Def: Field theory with k. symmetry}  
A \emph{$D$-equivariant field theory with kinematical symmetry}\footnote{The name is taken from \cite{BrauerGroup} where similar symmetries of 3-dimensional Dijkgraaf-Witten theories are studied.} described by  
\begin{align}
\rho\colon \underline{* \DS G}\longrightarrow \DTFT \DS \End_{\Cat}(\DTFT) \ ,
\end{align}
as in~\eqref{EQ: Action of kinematic symmetries on TFTs}, is a homotopy fixed point of $\rho$, i.e.\ a natural 2-transformation $Z\colon 1 \Longrightarrow \rho$, 
where $1$ is the unique 2-functor sending $*$ to the category with one object and only identity morphisms. 
\end{definition}
\begin{remark}
Unpacking the definition, a $D$-equivariant field theory with kinematical symmetry consists of
\begin{itemize}
\item[(a)]
A functor $Z \colon 1 \longrightarrow \DTFT$; and

\item[(b)]
Natural transformations $\Upsilon_g \colon \rho(g) [Z] \Longrightarrow Z$ for all $g\in G$;
\end{itemize} 
satisfying natural coherence conditions. Since $1$ represents the identity 2-functor on $\Cat$ this is the same as a field theory $Z \in \DTFT$, together with coherent natural symmetric monoidal transformations $\Upsilon_g \colon  \mathcal{R}(\alpha(g^{-1}))^* Z \Longrightarrow Z$ for $g\in G$.
\end{remark}

An arbitrary Dijkgraaf-Witten theory with topological action $\omega\in Z^n(BD;U(1))$ does not admit a kinematical symmetry in general. On the other hand, there may be different ways to equip a given field theory with the structure of a homotopy fixed point. We give a sufficient condition for a kinematical symmetry to exist.
For this we need to introduce the following notion.

\begin{definition}\label{Def: Preserved}
An $n$-cocycle $\omega\in Z^n(BD;U(1))$ is \emph{preserved by the action $\alpha$} if it can be equipped with the structure of a homotopy fixed point for the induced action of $G$ via the pullback along $\alpha(g^{-1}) $ on the category $Z^n(BD;U(1))$ whose morphisms are $n{-}1$-cochains up to coboundaries.  
\end{definition} 
In general there are non-isomorphic choices for the fixed point structure. A necessary condition for such a fixed point to exist is $\alpha(g)^*[\omega]=[\omega]$ for all $g\in G$. 

\begin{remark}\label{Rem: Preserved}
Concretely, the additional structure consists of an equivalence class of cochains $\Phi_{g}\in C^{n-1}(BD;U(1))$ up to coboundary satisfying\footnote{Throughout we switch freely between the additive and multiplicative notation for $U(1)$-valued cocycles.} $\delta \Phi_{g} = \omega- \alpha(g^{-1})^* \omega$. These cochains have to satisfy the coherence relations
\begin{align}
\Phi_{g_1}+\alpha(g_1^{-1})^*\Phi_{g_2} = \Phi_{g_1\,g_2}+\sigma_{g_1,g_2}[\omega]  \ , 
\end{align}
up to coboundary terms, 
where $\sigma_{g_1,g_2}[\omega]$ is the $n{-}1$-cochain induced by the homotopy $\sigma_{g_1,g_2}\colon \alpha(g_2^{-1})\circ \alpha(g_1^{-1}) \longrightarrow \alpha(g_2^{-1}\,g_1^{-1})$. The difference between two homotopy fixed point structures can be described by a group homomorphism $G \longrightarrow H^{n-1}(BD;U(1))$.  
\end{remark}

\begin{proposition}\label{Prop: Classical symmetry}
Let $\omega \in Z^n(BD;U(1))$ be a topological action and $\alpha\colon {* \DS G} \longrightarrow * \DS \Pi_1[BD,BD]$ a homotopy coherent action of $G$ on $BD$. If $\alpha$ preserves $\omega$, then the classical Dijkgraaf-Witten theory $E_\omega \colon \DCob \longrightarrow \fvs$ admits a kinematical symmetry described by $\alpha$.
\end{proposition}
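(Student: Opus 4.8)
The plan is to exhibit the homotopy fixed point structure of Definition~\ref{Def: Field theory with k. symmetry} directly from the data witnessing that $\alpha$ preserves $\omega$, leaning on the naturality statement of Proposition~\ref{Prop: Coboundary induces transformations}. Unpacking the definition (as in the remark following it), I must produce coherent symmetric monoidal natural transformations $\Upsilon_g\colon \mathcal{R}(\alpha(g^{-1}))^* E_\omega \Longrightarrow E_\omega$ for all $g\in G$, satisfying the appropriate compatibilities. The first step is to identify the pulled-back theory: since $\mathcal{R}(\chi)$ acts on $\DCob$ by post-composition of classifying maps with $\chi$, and since $E_\omega=BD_\omega$ is built entirely from the holonomy pairings $\langle \varphi^*\omega,-\rangle$, a direct inspection of the formulas defining $T_\theta$ on objects, 1-morphisms and 2-morphisms shows $\mathcal{R}(\chi)^* E_\omega = E_{\chi^*\omega}$ strictly. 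In particular $\mathcal{R}(\alpha(g^{-1}))^* E_\omega = E_{\alpha(g^{-1})^*\omega}$.

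Next I would feed the preservation data into Proposition~\ref{Prop: Coboundary induces transformations}. By Definition~\ref{Def: Preserved} and Remark~\ref{Rem: Preserved}, a homotopy fixed point structure on $\omega$ supplies cochains $\Phi_g\in C^{n-1}(BD;U(1))$ with $\delta\Phi_g=\omega-\alpha(g^{-1})^*\omega$. Taking $\theta=\alpha(g^{-1})^*\omega$ and $\theta'=\omega$ in Proposition~\ref{Prop: Coboundary induces transformations}, the cochain $\Phi_g$ induces a symmetric monoidal natural equivalence $T_{\Phi_g}\colon E_{\alpha(g^{-1})^*\omega}\longrightarrow E_\omega$. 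Setting $\Upsilon_g:=T_{\Phi_g}$ then yields transformations of exactly the required type and direction $\mathcal{R}(\alpha(g^{-1}))^* E_\omega\Longrightarrow E_\omega$.

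The main work, and the main obstacle, is verifying the coherence conditions that make the family $\{\Upsilon_g\}$ a natural 2-transformation $Z\colon 1\Longrightarrow\rho$. These amount to comparing $\Upsilon_{g_1 g_2}$ with the composite of $\Upsilon_{g_1}$ and $\mathcal{R}(\alpha(g_1^{-1}))^*\Upsilon_{g_2}$, mediated by the coherence 2-cell of $\rho$ which is the image under $\mathcal{R}$ of the homotopy $\sigma_{g_1,g_2}$ belonging to the 2-functor $\alpha$. Because every transformation in sight is of the form $T_{(-)}$, and composites of such are again of this form with cochain the sum of the constituent cochains (by the functoriality of the construction in Proposition~\ref{Prop: Coboundary induces transformations}), the required equality of 2-morphisms reduces to the cochain identity $\Phi_{g_1}+\alpha(g_1^{-1})^*\Phi_{g_2}=\Phi_{g_1 g_2}+\sigma_{g_1,g_2}[\omega]$ recorded in Remark~\ref{Rem: Preserved}.

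Since this identity holds only up to a coboundary $\delta\Omega_{g_1,g_2}$ with $\Omega_{g_1,g_2}\in C^{n-2}(BD;U(1))$, the remaining discrepancy is absorbed by the symmetric monoidal modifications induced by $(n-2)$-cochains, as described in the remark following Proposition~\ref{Prop: Coboundary induces transformations}; these modifications supply precisely the coherence data needed for the homotopy fixed point, and checking the modification axioms reduces to the $U(1)$-valued relations among the $\Omega_{g_1,g_2}$. Compatibility with identities and with the monoidal structure follows from a normalization of $\Phi_e$ together with the symmetric monoidal naturality already established in Proposition~\ref{Prop: Coboundary induces transformations}. I expect the bookkeeping of the coherence 2-cells, rather than any genuinely new idea, to be the delicate part, so I would organize the argument around the observation that the whole construction $\Phi\mapsto T_\Phi$ is a symmetric monoidal functor from the relevant cochain complex into the 2-groupoid of invertible field theories, reducing all coherence to the cochain identities of Remark~\ref{Rem: Preserved}.
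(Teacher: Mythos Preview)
Your approach is essentially the same as the paper's: identify $\mathcal{R}(\alpha(g^{-1}))^* E_\omega$ with $E_{\alpha(g^{-1})^*\omega}$, set $\Upsilon_g = T_{\Phi_g}$ via Proposition~\ref{Prop: Coboundary induces transformations}, and reduce coherence to the cochain identity of Remark~\ref{Rem: Preserved}. The paper's proof is terser but follows the same line.

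One point deserves correction. The homotopy fixed point of Definition~\ref{Def: Field theory with k. symmetry} is a natural 2-transformation valued in $\Cat$, which is a strict 2-category; the coherence conditions for $\{\Upsilon_g\}$ are therefore \emph{equations}, not additional modification data. Your invocation of the $(n-2)$-cochain modifications from the remark after Proposition~\ref{Prop: Coboundary induces transformations} is unnecessary and slightly misleading here. The reason the coherence equations hold exactly, despite the identity $\Phi_{g_1}+\alpha(g_1^{-1})^*\Phi_{g_2}=\Phi_{g_1 g_2}+\sigma_{g_1,g_2}[\omega]$ only holding up to a coboundary $\delta\Omega_{g_1,g_2}$, is simpler: the components of $T_\Lambda$ on a closed $(n-1)$-manifold $\Sigma$ are multiplication by $\langle\varphi^*\Lambda,\sigma_\Sigma\rangle$, and for a coboundary one has $\langle\varphi^*\delta\Omega,\sigma_\Sigma\rangle=\langle\varphi^*\Omega,\partial\sigma_\Sigma\rangle=0$. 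So $T_\Lambda=T_{\Lambda'}$ on the nose whenever $\Lambda-\Lambda'$ is exact, and the required equality of natural transformations follows directly. The modification layer you describe would only become relevant if one were proving the extended analogue for $E_\omega\colon\EDCob\to\Tvs$.
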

\begin{proof}
We set $Z=E_{\omega}$ and define natural transformations $\Upsilon_g \colon  E_{\alpha(g^{-1})^* \omega} \Longrightarrow E_{\omega}$ by
\begin{align}
\begin{split} 
{\Upsilon_g}_{\,(\Sigma, \varphi)}\colon E_{\alpha(g^{-1})^* \omega}(\Sigma, \varphi)&\longrightarrow E_{ \omega}(\Sigma, \varphi) \\
 [\sigma_\Sigma] &\longmapsto \left\langle \varphi^* \Phi_g, \sigma_\Sigma \right\rangle \, [\sigma_\Sigma] \ , 
 \end{split}
\end{align}  
where $\Phi_g$ is the $n{-}1$-cochain of Remark~\ref{Rem: Preserved} satisfying $\delta \Phi_{g} = \omega-\alpha(g^{-1})^* \omega $. That this defines a natural isomorphism follows from Proposition~\ref{Prop: Coboundary induces transformations}.
The coherence conditions follow from the fact that the collection $\Phi_g$ corresponds to a homotopy fixed point structure.
\end{proof}

Now we study the incarnation of kinematic symmetries in the quantum theory ${{\Za_{\text{DW}}}}_\omega$.
We describe the symmetries of quantum Dijkgraaf-Witten theory following~\cite[Section~2.4]{Freed:2014eja} by the following notion.
\begin{definition}\label{Def: Internal symmetry}
Let $G$ be a finite group and denote by $G\text{-}\mathsf{Rep}$ the category of finite-dimensional $G$-representations.
Let $Z\colon \Cob_n \longrightarrow \fvs$ be a topological field theory. An \emph{internal $G$-symmetry} of $Z$ is a lift 
\begin{equation}
\begin{tikzcd}
 & G\text{-}\mathsf{Rep} \ar[dr] & \\
\Cob_n \ar[rr,swap,"Z"] \ar[ur,"Z_G"] & & \fvs
\end{tikzcd}
\end{equation}
of $Z$, where $G\text{-}\mathsf{Rep} \longrightarrow \fvs$ is the
forgetful functor. 
\end{definition}
\begin{remark}
This definition is equivalent to fixing a group homomorphism $G\longrightarrow \Aut_\otimes(Z)$ to the group of symmetric monoidal natural automorphisms of $Z$. 
\end{remark}

Kinematical symmetries of classical Dijkgraaf-Witten theories extend to the quantum theory: 
For a fixed manifold $\Sigma$, $\Upsilon_g$ induces a natural isomorphism $E_\omega \circ \mathcal{R}(\alpha(g^{-1}))|_{\BunD(\Sigma)}\longrightarrow E_{\omega}|_{\BunD(\Sigma)}$, which induces a linear map 
\begin{align}
 \int_{\BunD(\Sigma)} \, E_{\omega} \circ
 \mathcal{R}\big(\alpha(g^{-1})\big) \longrightarrow
 \int_{\BunD(\Sigma)} \, E_{\omega} = Z_\omega (\Sigma)\ .
\end{align} 
The equivalence $\mathcal{R}(\alpha(g^{-1}))$ induces a morphism 
\begin{align}
{{\Za_{\text{DW}}}}_\omega(\Sigma) = \int_{\BunD(\Sigma)} \, E_{\omega} \longrightarrow
  \int_{\BunD(\Sigma)} \, E_{\omega} \circ \mathcal{R}\big(\alpha(g^{-1})\big) \ .
\end{align}
The action of $G$ consisting of $\varphi_g \colon {{\Za_{\text{DW}}}}_\omega(\Sigma) \longrightarrow {{\Za_{\text{DW}}}}_\omega(\Sigma) $ is defined as the composition of these two maps. 
If the fixed point structure is the one constructed in Proposition~\ref{Prop: Classical symmetry} the action has the following concrete description:
fixing a fundamental class $\sigma_\Sigma$ of $\Sigma$ as in \eqref{Definition on Morphisms}, and a parallel section $f(\,\cdot\,)= \mathbf{f}(\,\cdot\,)\,[\sigma_\Sigma] \in {{\Za_{\text{DW}}}}_\omega $, the action of $G$ on ${{\Za_{\text{DW}}}}_\omega(\Sigma)$ takes the concrete form  
\begin{align}\label{Eq: Concrete form of the quantum symmetry}
g\triangleright \mathbf{f}(\Sigma,
  \varphi)=\mathbf{f}\big(\mathcal{R}(\alpha(g^{-1}))[\Sigma,
  \varphi]\big) \, \langle \varphi^* \Phi_{g} , \sigma_\Sigma \rangle
\end{align} 
with $\delta \Phi_{g}= \omega-\alpha(g^{-1})^* \omega$ as in the proof of Proposition~\ref{Prop: Classical symmetry}. 
\begin{proposition}\label{Prop: symmetry extends to quantum}
The collection $\varphi_g$ defines a representation of $G$ on the state spaces ${{\Za_{\text{DW}}}}_\omega (\Sigma)$ such that ${{\Za_{\text{DW}}}}_\omega$ is a functor into the category $G\text{-}\mathsf{Rep}$ of finite-dimensional $G$-representations. 
\end{proposition}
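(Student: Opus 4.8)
The plan is to verify that the maps $\varphi_g$ assemble into a genuine $G$-representation on each state space ${{\Za_{\text{DW}}}}_\omega(\Sigma)$ and that these representations are compatible with the cobordism maps, so that ${{\Za_{\text{DW}}}}_\omega$ lifts to a functor into $G\text{-}\mathsf{Rep}$. First I would fix a manifold $\Sigma$ and show that $\varphi_{g_1}\circ \varphi_{g_2} = \varphi_{g_1 g_2}$ using the explicit formula \eqref{Eq: Concrete form of the quantum symmetry}. Applying $\varphi_{g_2}$ and then $\varphi_{g_1}$ to a parallel section $\mathbf{f}(\,\cdot\,)\,[\sigma_\Sigma]$ produces the value $\mathbf{f}\big(\mathcal{R}(\alpha((g_1 g_2)^{-1}))[\Sigma,\varphi]\big)$ times a product of transgressed cochains, and the associativity of the $G$-action on $BD$ together with the coherence relation for the fixed point structure from Remark~\ref{Rem: Preserved},
\begin{align}
\Phi_{g_1}+\alpha(g_1^{-1})^*\Phi_{g_2} = \Phi_{g_1 g_2}+\sigma_{g_1,g_2}[\omega]\ ,
\end{align}
must combine to give exactly the cocycle factor appearing in $\varphi_{g_1 g_2}$. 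The contribution $\sigma_{g_1,g_2}[\omega]$ coming from the non-strict composition of the 2-functor $\alpha$ is precisely compensated by the coherence 2-isomorphisms of the homotopy fixed point structure, so the product is a strict group action rather than a projective one.

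Next I would check naturality with respect to morphisms in $\Cob_n$, i.e.\ that for a cobordism $M\colon \Sigma_1 \longrightarrow \Sigma_2$ the square relating $\varphi_g$ on $\Sigma_1$ and $\Sigma_2$ to ${{\Za_{\text{DW}}}}_\omega(M)$ commutes. This amounts to comparing the two ways of inserting the transgressed cochain $\langle \varphi^*\Phi_g, \sigma_\Sigma\rangle$ into the defining integral \eqref{Definition on Morphisms}. Using that $\Upsilon_g$ is a \emph{natural} transformation of classical field theories (established in Proposition~\ref{Prop: Classical symmetry}) together with the fact that the pushforward integral over $\BunD$ is built functorially, the map ${{\Za_{\text{DW}}}}_\omega(M)$ intertwines $\varphi_g$ on the two boundaries; concretely one rewrites $M$-integral with the $\mathcal{R}(\alpha(g^{-1}))$-twist and uses $\delta\Phi_g = \omega - \alpha(g^{-1})^*\omega$ to cancel the difference between the $\omega$-weight on $M$ and its twisted version, leaving the boundary cochain factors. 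This shows that each ${{\Za_{\text{DW}}}}_\omega(M)$ is $G$-equivariant, so ${{\Za_{\text{DW}}}}_\omega$ factors through the forgetful functor $G\text{-}\mathsf{Rep}\longrightarrow \fvs$. Finally, symmetric monoidality of the $G$-action follows from the multiplicativity of all the transgression pairings under disjoint union.

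The main obstacle I expect is bookkeeping the coherence data of the homotopy fixed point structure carefully enough to see that the action is \emph{strict} and not merely projective. The subtlety is that $\alpha$ is only a 2-functor into $*\DS\Pi_1[BD,BD]$, so $\alpha(g_1^{-1})\circ\alpha(g_2^{-1})$ and $\alpha((g_1 g_2)^{-1})$ differ by the homotopy $\sigma_{g_1,g_2}$; one must track how the associated transgressed cochain $\sigma_{g_1,g_2}[\omega]$ enters and confirm that it is exactly absorbed by the defining relation for the $\Phi_g$. This is essentially a matching of the two sides of the homotopy-fixed-point coherence diagram after transgression to $\Sigma$, and since the transgression $\tau_\Sigma$ is a homomorphism on cohomology (as used throughout Section~\ref{Sec: Classical DW}), the potential anomaly class vanishes precisely because $\alpha$ was assumed to preserve $\omega$ in the strong sense of Definition~\ref{Def: Preserved}. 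Once this cancellation is made explicit the remaining verifications are routine naturality and monoidality checks.
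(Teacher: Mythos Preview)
Your proposal is correct and identifies exactly the two ingredients the paper uses: the coherence conditions for the homotopy fixed point structure (to get a strict rather than projective $G$-action) and the naturality of $\Upsilon_g$ (to get compatibility with cobordism maps). The paper's own proof is a single sentence: it invokes the functoriality of the orbifold construction from \cite{OFK} together with the homotopy fixed point coherence, without unpacking either.

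The difference is one of packaging rather than substance. You propose to verify $\varphi_{g_1}\circ\varphi_{g_2}=\varphi_{g_1g_2}$ and the intertwining property by direct computation with the explicit formula \eqref{Eq: Concrete form of the quantum symmetry}, tracking the transgressed cochains and the coherence term $\sigma_{g_1,g_2}[\omega]$ by hand. The paper instead observes that the orbifold construction $Z\longmapsto Z/D$ is a functor on the category of $D$-equivariant field theories (this is the content of the cited remark), so the natural isomorphisms $\Upsilon_g:\rho(g)[E_\omega]\Longrightarrow E_\omega$ and their coherence data are transported automatically to natural isomorphisms of the orbifold theory satisfying the same coherence; no explicit cochain manipulation is needed. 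Your approach has the advantage of being self-contained and making visible exactly where each piece of the fixed point data is used, while the paper's approach is shorter but relies on the reader accepting the black-box functoriality statement from \cite{OFK}. Both arrive at the same conclusion by the same underlying mechanism.
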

\begin{proof}
This is a direct consequence of the functoriality of the orbifold construction \cite[Remark~3.43]{OFK} and the coherence conditions for the homotopy fixed point. 
\end{proof}

\begin{example}
The trivial action of $G$ on $BD$ is always an internal $G$-symmetry. 
Any action of $G$ is an internal $G$-symmetry for a theory with trivial topological Lagrangian. 
We will provide some more profound examples in Sections~\ref{Sec: Non abelian group cohomology} and~\ref{Sec: Gauging}.
\end{example}

\subsection{Non-abelian group cohomology}\label{Sec: Non abelian group cohomology}

Following \cite{Blanco2005} we review non-abelian group 2-cocycles and show how they classify extensions. For simplicity we only discuss groups, which is enough for the study of anomalies in Dijkgraaf-Witten theories. The generalisation to groupoids is straightforward. Let $G$ and $D$ be finite groups. Recall from Section~\ref{Sec: Finite symmetries} that a non-abelian 2-cocycle on $G$ with coefficients in $D$ is a 2-functor $\alpha \colon {* \DS G} \longrightarrow * \DS (\Aut_{\Grp}(D)\DS D) \subset \Grpd$, where $\Grpd$ is the 2-category of groupoids. 
The 2-category $* \DS (\Aut_{\Grp}(D)\DS D)$ can be considered as a full sub-2-category of $\Grpd$ by sending the only object to the groupoid $* \DS D$. We assume without loss of generality that $\alpha$ preserves identities strictly.  
Spelling out the definition, we see that $\alpha$ consists of  maps of sets $\alpha \colon G \longrightarrow \Aut_{\Grp}(D)$ and $\sigma_\alpha \colon G\times G \longrightarrow D$ satisfying
\begin{align}
\alpha(1)&=\id_D \ , \\[4pt]
\sigma_\alpha (1,1)&=1 \ , \\[4pt]
\alpha(g_1\,g_2)[d] &= \sigma_\alpha(g_1,g_2)^{-1} \, \alpha (g_1)
                      \big[\alpha(g_2) [d]\big] \,
                      \sigma_\alpha(g_1,g_2) \ , \\[4pt]
\sigma_\alpha(g_1,g_2) \, \sigma_\alpha(g_1\, g_2,g_3) & = \alpha
                                                         (g_1)
                                                         \big[\sigma_\alpha(g_2,g_3)\big]
                                                         \,
                                                         \sigma_\alpha(g_1,
                                                         g_2 \, g_3) \ . \label{Eq: Non abelian cocycle} 
\end{align}
Using the Grothendieck construction, the 2-functor $\alpha \colon {*\DS G}\longrightarrow \Grpd$ induces a fibration of groupoids
\begin{align}
\int \alpha \longrightarrow *\DS G \ 
\end{align}  
having the following concrete description: There is only one object which we denote by $*$, endomorphisms are given by pairs $(g,d)\in G\times D$ and composition is defined by 
\begin{align}
(g_2,d_2)\,(g_1,d_1)= \big(g_2\,g_1\,,\, d_2 \, \alpha (g_2)[d_1] \, \sigma_\alpha(g_2,g_1)\big) \ . 
\end{align} 
The fibration corresponds to an extension of $G$ by $D$, i.e.\ an exact sequence 
\begin{align} \label{Eq: Extension}
1\longrightarrow D \longrightarrow \widehat G \longrightarrow G
  \longrightarrow 1 \ ,
\end{align} 
with $\widehat{G}=\End_{\int \alpha}(*)$. 
It is a classical result that extensions of $G$ by $D$ are
classified by non-abelian 2-cocycles on $G$ with coefficients in
$D$~\cite{Schreier}. A proof using the Grothendieck construction can 
be found in \cite{Blanco2005}. 

We give a few explicit examples.
\begin{example}
For every pair of groups $(G,D)$ there is a trivial non-abelian 2-cocycle corresponding to the constant 2-functor ${* \DS G}\longrightarrow * \DS (\Aut_{\Grp}( D) \DS D)$. The corresponding extension is
\begin{align}
1\longrightarrow D \longrightarrow D\times G \longrightarrow G \longrightarrow 1 \ .
\end{align}
\end{example}
\begin{example}\label{Ex: Abelian cohomology}
If $D$ is abelian and $\alpha \colon G\longrightarrow \Aut_{\Grp}(D)$ is trivial, then a non-abelian 2-cocycle reduces to an ordinary 2-cocycle $\sigma\in H^2(G;D)$ and the corresponding extensions are the usual central extensions classified by the abelian 2-cocycle. 
From a physical point of view such a 2-cocycle can appear if the $G$-action on matter fields only closes up to a $D$-gauge transformation.
We give two concrete examples for later use. Let $N$ and $M$ be positive integers.
Identifying the cyclic groups $\Z_N$ and $\Z_M$ with $\{0,1, \dots , N-1 \} $ and $\{0,1, \dots , M-1 \} $ we define the 2-cocycle 
\begin{align}
\begin{split}
\sigma \colon \Z_M\times \Z_M &\longrightarrow \Z_N \\
(a, b)& \longmapsto \Big\lfloor \frac{a+b}{M} \Big\rfloor \text{ mod }N \ .
\end{split}
\end{align}  
The corresponding central extension is 
\begin{align}
0 \longrightarrow \Z_N \xrightarrow{M\,\cdot}  \Z_{N\,M}
  \longrightarrow \Z_M \longrightarrow 0 \ , 
\end{align}
where the first map is multiplication by $M$ and the second map is
reduction modulo $N$.
This example can be adapted to an arbitrary number of copies of $\Z_N$ and $\Z_M$. An example is the abelian 2-cocycle corresponding to
\begin{align}
(0,0) \longrightarrow \Z_N\times \Z_N \xrightarrow{(M ,M)\,
  \cdot}\Z_{N\,M} \times \Z_{N\,M} \longrightarrow \Z_M \times \Z_{M}
  \longrightarrow (0,0)
\end{align}
which is given by 
\begin{align}
\begin{split}
 (\Z_M\times \Z_M)^2 &\longrightarrow \Z_N\times \Z_N \\
\big( (a_1, b_1)\,,\,(a_2,b_2)\big)& \longmapsto \Big(\Big\lfloor
                                     \frac{a_1+a_2}{M} \Big\rfloor
                                     \text{ mod }N \,,\, \Big\lfloor
                                     \frac{b_1+b_2}{M} \Big\rfloor
                                     \text{ mod }N \Big) \ .
\end{split}
\end{align}
\end{example}
\begin{example}
Given a group homomorphism $\alpha\colon G\longrightarrow
\Aut_{\Grp}(D)$, we can consider $\alpha$ as a non-abelian 2-cocycle
with trivial map $\sigma_\alpha$. The corresponding extension is the semi-direct product
\begin{align}
1 \longrightarrow D \longrightarrow G \ltimes_\alpha D \longrightarrow G \longrightarrow 1 \ .
\end{align}
\end{example}

We give a second point of view on non-abelian group cohomology using the $(\infty, 1)$-topos of infinity groupoids or alternatively the 3-category of 2-groupoids following \cite{nlabNonAbelianGroupCohomology}.
A non-abelian 2-cocycle is a 1-morphism $\alpha \colon BG \longrightarrow B (\Aut(D) \Ds D)$ where $\Aut(D)\Ds D$ is the 2-group with automorphisms of $D$ as objects and conjugation by elements of $D$ as morphisms. The homotopy pullback of the diagram
\begin{equation}
\begin{tikzcd}
 & * \ar[d] \\
 BG \ar[r,"\alpha"] &  B (\Aut(D) \Ds D)
\end{tikzcd}
\end{equation} 
is given by 
\begin{equation}
\begin{tikzcd}
 \Aut(D)\Ds \widehat{G} \ar[r] \ar[d]& * \ar[d] \\
 BG \ar[r,"\alpha"] &  B (\Aut(D) \Ds D)
\end{tikzcd}
\end{equation} 
with $\widehat{G}$ as above. We can see this as follows:  The homotopy pullback can be calculated as an ordinary pullback of the universal $\Aut(D)\Ds D$ principal 2-bundle $E(\Aut(D)\Ds D)$ over $ B (\Aut(D) \Ds D)$, which can be explicitly construced \cite[Section 5]{3AutGroupof2Group} as follows

\begin{itemize}
\item Objects are elements of $\Aut(D)$;

\item A 1-morphism $f_1 \longrightarrow f_2$ is a triangle
\begin{equation}
\begin{tikzcd}
 & \ \ar[dd, Rightarrow, shorten <= 15, shorten >= 15, "d"] & \bullet \ar[dd, "f(d)"] \\
\bullet \ar[rru,bend left, "f_1"] \ar[rrd,bend right, "f_2",swap] & & \\
 & \ & \bullet
\end{tikzcd}
\end{equation}
in $\Aut(D)\Ds D$ with $d\in D$ and $f(d)\in \Aut(D)$. Horizontal composition of 1-morphisms is defined by
\begin{equation}
\begin{tikzcd}
 &  \ \ \ \ \ar[d, Rightarrow, shorten <= 5, shorten >= 5, "d_1"] & \bullet \ar[d, "f(d_1)"] \\
\bullet \ar[rru,bend left] \ar[rrd,bend right,swap] \ar[rr] & \ \ \ \ \ar[d, Rightarrow, shorten <= 5, shorten >= 5, "d_2"] &  \bullet \ar[d, "f(d_2)"] \\
 & \ & \bullet
\end{tikzcd}
=
\begin{tikzcd}
 & \ \ar[dd, Rightarrow, shorten <= 15, shorten >= 15, "{d_2f(d_2)[d_1]}"] & \bullet \ar[dd, "f(d_2)f(d_1)", bend left] \\
\bullet \ar[rru,bend left, "f_1"] \ar[rrd,bend right, "f_2",swap] & & \\
 & \ & \bullet
\end{tikzcd} 
\end{equation}
\item A 2 morphism $(d_1,f(d_1)) \longrightarrow (d_2,f(d_2))$ is an element $d \in D$ such that 
\begin{equation}
\begin{tikzcd}
 & \ \ar[dd, Rightarrow, shorten <= 15, shorten >= 15, "d_1",swap] & \bullet \ar[dd, "f(d_1)", bend right, swap] \ar[dd, "f(d_2)", bend left] \\
\bullet \ar[rru,bend left, "f_1"] \ar[rrd,bend right, "f_2",swap] & & \overset{d}{\longrightarrow}  \\
 & \ & \bullet
\end{tikzcd} 
=
\begin{tikzcd}
 & \ \ar[dd, Rightarrow, shorten <= 15, shorten >= 15, "d_2"] & \bullet \ar[dd, "f(d_2)"] \\
\bullet \ar[rru,bend left, "f_1"] \ar[rrd,bend right, "f_2",swap] & & \\
 & \ & \bullet
\end{tikzcd} 
\end{equation}
Note that $d=d_1^{-1}d_2$. Horizontal composition is given $d_2 \circ d_1 = d_2f_2[d_1]$ and vertical composition is defined by multiplication in $D$. 
\end{itemize}    
There is a natural fibration 
\begin{align}
\begin{split}
\pi \colon E(\Aut(D)\Ds D) &\longrightarrow B(\Aut(D)\Ds D) \\
f &\longmapsto * \\
(d,f(d)) &\longmapsto f(d) \\
d &\longmapsto d  \ .
\end{split}
\end{align}    
This allows us to calculate the homotopy pullback as a regular pullback:
\begin{itemize}
\item 
Objects are elements of $\Aut(D)$;

\item
Morphisms are pairs of morphisms 
\begin{equation}
\left(g,
\begin{tikzcd}
 & \ \ar[dd, Rightarrow, shorten <= 15, shorten >= 15, "d"] & \bullet \ar[dd, "f(d)"] \\
\bullet \ar[rru,bend left, "f_1"] \ar[rrd,bend right, "f_2",swap] & & \\
 & \ & \bullet
\end{tikzcd} \right)
\end{equation} 
such that $f(d)= \lambda (g)$;

\item
There are only identity 2-morphisms. 
\end{itemize}
Composition is defined by 
\begin{equation}
\left(g_2,
\begin{tikzcd}
 & \ \ar[dd, Rightarrow, shorten <= 15, shorten >= 15, "d_2"] & \bullet \ar[dd, "\lambda(g_2)"] \\
\bullet \ar[rru,bend left, "f_2"] \ar[rrd,bend right, "f_3",swap] & & \\
 & \ & \bullet
\end{tikzcd} \right) 
\circ 
\left(g_1,
\begin{tikzcd}
 & \ \ar[dd, Rightarrow, shorten <= 15, shorten >= 15, "d_1"] & \bullet \ar[dd, "\lambda(g_1)"] \\
\bullet \ar[rru,bend left, "f_1"] \ar[rrd,bend right, "f_2",swap] & & \\
 & \ & \bullet
\end{tikzcd} \right) 
\end{equation}
\begin{equation}
= 
\left(g_2g_1,
\begin{tikzcd}
 &  \ \ \ \ \ar[d, Rightarrow, shorten <= 5, shorten >= 5, "d_1"] & \bullet \ar[d, "\lambda(g_1)", swap] \ar[dd, "\lambda (g_2g_1)", bend left=50, in=80, out=100] & \\
\bullet \ar[rru,bend left] \ar[rrd,bend right,swap] \ar[rr] & \ \ \ \ \ar[d, Rightarrow, shorten <= 5, shorten >= 5, "d_2"] &  \bullet \ar[d, "\lambda(g_2)",swap] \ar[r, Leftarrow, shorten <= 2, shorten >= 11] & \ \\
 & \ & \bullet & 
\end{tikzcd} \right) 
\end{equation}
where the unlabelled 2-morphisms are part of the definition of $\alpha$. This reproduces the composition law in $\widehat{G}$. This gives a graphical way to think about the 
multiplication in $\widehat{G}$, which we found helpful for concrete computations. 

Taking repetitively homotopy fibres of the homotopy fibre computed with respect to the base point gives the fibre sequence to the left
\begin{equation}
\begin{tikzcd}
G \ar[r]\ar[d, "\Omega \alpha"] & * \ar[d] & \\
\mybox{$\Aut (D) \Ds D$} \ar[d] \ar[r] & \mybox{$\Aut(D)\Ds \widehat{G}$} \ar[r] \ar[d]& * \ar[d] \\
* \ar[r] & \mybox{$BG$} \ar[r,"\alpha"] &  B (\Aut(D) \Ds D)
\end{tikzcd}
\end{equation} 
where  $\Omega \alpha$ is the looping of $\alpha$. 
The new homotopy pullbacks in the fibre sequence can be calculated by the pasting property and looping.
There are natural functors $\iota_{\widehat{G}}  \colon B\widehat{G}\longrightarrow  \Aut(D)\Ds \widehat{G}$ and $\iota_{D}\colon BD \longrightarrow \Aut(D)\Ds D $. 
We get from the boxed elements the following commuting diagram
\begin{equation}
\label{Diagram: Fiber sequence implies exact sequence}
\begin{tikzcd}
BD \ar[r] \ar[d,"\iota_D",swap] & B\widehat{G} \ar[d,"\iota_{\widehat{G}}",swap] \ar[rd] & \\
\Aut(D)\Ds D \ar[r]  & \Aut(D)\Ds \widehat{G} \ar[r] & BG
\end{tikzcd} 
\end{equation}  
encoding the exact sequence of groups.

\subsection{Gauging discrete symmetries}\label{Sec: Gauging}
Now we define what it means to gauge the symmetries from 
Section~\ref{Sec: Finite symmetries}. 
There is an inclusion of categories $i\colon \Cob_n \hookrightarrow G\text{-}\Cob_n$ for every group $G$ by equipping every manifold with the trivial $G$-bundle. The pullback $i^* Z_G \colon \Cob_n \longrightarrow \fvs$ of a $G$-equivariant field theory $Z_G:G\text{-}\Cob_n\longrightarrow\fvs$ carries additional structure: 
By evaluating $Z_G$ on gauge transformations of the trivial $G$-bundle
on an $n{-}1$-dimensional manifold $\Sigma$ we get a representation of
$G$ on $i^* Z_G(\Sigma)$ which is compatible with the definition on
cobordisms. Hence $i^* Z_G$ is a quantum field theory with internal
$G$-symmetry in the sense of Definition~\ref{Def: Internal symmetry}, i.e.\ a symmetric monoidal functor
\begin{align}
i^* Z_G \colon \Cob_n \longrightarrow G\text{-}\mathsf{Rep} \ .
\end{align}   
Recall that we considered a $G$-equivariant field theory as a field theory coupled to classical $G$-gauge fields. Given a field theory $Z \colon \Cob_n \longrightarrow G\text{-}\mathsf{Rep}$ with internal $G$-symmetry we can ask if the symmetry can be gauged.
\begin{definition}\label{Def: Gauging}
Let $Z\colon \Cob_n \longrightarrow G\text{-}\mathsf{Rep}$ be a topological quantum field theory with internal $G$-symmetry. A $G$-equivariant field theory $Z_G\colon \GCob \longrightarrow \fvs$ \emph{gauges} the internal $G$-symmetry if $i^*Z_G=Z$ as functors $\Cob_n \longrightarrow G\text{-}\mathsf{Rep} $.
\end{definition}
In general it may be impossible to gauge a given symmetry due to cohomological obstructions. In this case we say that the symmetry has a 't Hooft anomaly. In the following we will study under which conditions the symmetries discussed in Section \ref{Sec: Finite symmetries} have 't Hooft anomalies.

\begin{remark}
In three dimensions the question of whether a given field theory can be gauged is related to an interesting algebraic problem~\cite{GaugingMC}.
A three-dimensional extended topological quantum field theory is described by a modular tensor category $\mathsf{M}$~\cite{bartlett2015modular}. An internal $G$-symmetry corresponds to a homotopy coherent action of $G$ on $\mathsf{M}$ via braided autoequivalences. The group of braided autoequivalences up to natural isomorphism is known as the Brauer-Picard group.   
The modular tensor category corresponding to the Dijkgraaf-Witten theory with gauge group $D$ and topological action $\omega\in Z^3(BD;U(1))$ is the category of finite-dimensional modules over the $\omega$-twisted Drinfeld double of the group algebra $\C[D]$, see 
Theorem~\ref{Thm: 3D DW}.
The corresponding Brauer-Picard group for $\omega=0$ is studied in detail in~\cite{Lentner:2015pla}. The more general case of the representation category of Hopf algebras which includes the case of non-trivial $\omega$ is studied in~\cite{2017arXiv170205133L}. The kinematical symmetries studied in this paper correspond to the subgroup of classical symmetries in~\cite{2017arXiv170205133L}.       
Three-dimensional $G$-equivariant extended field theories correspond to $G$-modular categories~\cite{TV14,EOFK}. 
The symmetry corresponding to a homotopy coherent action of $G$ on a modular tensor category $\mathsf{M}$ can be gauged if there exists a $G$-(multi) modular category $\mathsf{M}_G= \bigoplus_{g\in G}\, \mathsf{M}_g$ such that $\mathsf{M}_1=\mathsf{M}$ in a compatible way. The question of under which conditions such an extension exists is answered in~\cite{2009arXiv0909.3140E}, whereby the case relevant for Dijkgraaf-Witten theories is discussed in their appendix. 

In this algebraic framework the gauging of more complicated symmetries
of arbitrary three-dimensional extended topological field theories can
be addressed using the cobordism hypothesis and representation
theoretic techniques, see also~\cite{GaugingMC}. A detailed study of this would be interesting. However, we refrain from doing so in this thesis and focus instead on a largely dimension-independent discussion.   
\end{remark}

The non-abelian group 2-cocycle describing the action of $G$ on a Dijkgraaf-Witten theory with gauge group $D$ and topological action $\omega \in Z^n(BD;U(1))$ determines an (not necessarily central) extension 
\begin{align}
1\longrightarrow D \overset{\iota}{\longrightarrow} \widehat{G} \overset{\lambda }{\longrightarrow} G \longrightarrow 1\ .
\label{Group extension}
\end{align} 
The short exact sequence should be understood as a way to combine $D$- and $G$-gauge fields into a single $\widehat{G}$-gauge field. 
If there exists $\widehat{\omega} \in H^n(B\widehat{G};U(1))$ such that $\iota^* \widehat{\omega}= \omega$ we say that the symmetry $G$ is {anomaly-free}. 
In particular, the existence of $\widehat{\omega}$ ensures that $\omega$ is preserved by
the non-abelian 2-cocycle as we will now explain:

We define an $n{-}1$-cochain $\Phi_g$ on $D$ as follows: 
Let $\chi \colon \Delta^{n-1}\longrightarrow BD$ be an $n{-}1$-chain which we can include into $B\widehat{G}$ along $\iota$. Putting $\widehat{g}$ on the interval we get a map $[0,1]\times \Delta^{n-1} \longrightarrow B\widehat{G}$. Integration of the pullback of $\widehat{\omega}$ over $[0,1]\times \Delta^{n-1}$ gives the inverse of the value of $\Phi_g$ evaluated on the $n{-}1$-simplex. 
The value of $-\delta \Phi_g$ on an $n$-simplex $(d_1,\dots ,d_n) \colon \Delta^n \longrightarrow BD$ is given by 
\begin{align}
\begin{split}
\big\langle [\widehat{g}\times(d_1,\dots , d_n)]^*\widehat{\omega} & \,,\, [0,1]\times \partial \Delta^n \big\rangle \\ & = \big\langle [\widehat{g}\times(d_1,\dots ,d_n)]^*\widehat{\omega} \,,\, \partial[0,1]\times \Delta^n - \partial([0,1]\times \Delta^n) \big\rangle \\[4pt]
&= \big\langle [\widehat{g}\times(d_1,\dots ,d_n)]^*\widehat{\omega} \,,\, (\{1\}-\{0\})\times \Delta^n \big\rangle \\[4pt]
&= \alpha(g^{-1})^*\omega(d_1, \dots ,d_n)- \omega(d_1, \dots , d_n) \ .
\end{split}
\end{align}  
Hence the $\Phi_g$ provide a homotopy fixed point structure. 
In this case the equivariant Dijkgraaf-Witten theory ${{\Za_{\text{DW}}}}^\lambda_{\widehat{\omega}}$~\ref{Sec: Equivariant DW} can be used to gauge the symmetry.
   
\begin{theorem}\label{Theorem: Gauging}
Let ${{\Za_{\text{DW}}}}_{{\omega}}$ be a discrete gauge theory with topological action $\omega\in Z^n(BD;U(1))$ and kinematical $G$-symmetry described by an extension
\begin{align}
1\longrightarrow D \overset{\iota}{\longrightarrow} \widehat{G} \overset{\lambda }{\longrightarrow} G \longrightarrow 1
\end{align}
such that there exists $\widehat{\omega} \in Z^n(B\widehat{G};U(1))$ satisfying $\omega =\iota^* \widehat{\omega}$ and the 
fixed point structure is induced by $\widehat{\omega}$. Then the $G$-equivariant Dijkgraaf-Witten theory 
\begin{align}
{{\Za_{\text{DW}}}}^\lambda_{\widehat{\omega}} \colon \GCob \longrightarrow \fvs 
\end{align} 
gauges this symmetry. 
\end{theorem}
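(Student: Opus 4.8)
The plan is to prove Theorem~\ref{Theorem: Gauging} by directly verifying the defining condition of gauging from Definition~\ref{Def: Gauging}, namely that $i^* {{\Za_{\text{DW}}}}^\lambda_{\widehat{\omega}} = {{\Za_{\text{DW}}}}_\omega$ as functors $\Cob_n \longrightarrow G\text{-}\mathsf{Rep}$. This requires matching three pieces of data: (i) the underlying vector spaces assigned to an $(n-1)$-manifold $\Sigma$, (ii) the linear maps assigned to cobordisms $M$, and (iii) the $G$-action on state spaces coming from gauge transformations of the trivial $G$-bundle. The concrete formulas for the non-extended equivariant theory on objects \eqref{EQ:EDW on Objects} and on morphisms \eqref{EQ:EDW on morphisms} are already at my disposal, as are the formulas \eqref{Definition on Morphisms} and \eqref{Eq: Concrete form of the quantum symmetry} for ${{\Za_{\text{DW}}}}_\omega$ and its symmetry action.

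\textbf{Identification of state spaces.} First I would restrict ${{\Za_{\text{DW}}}}^\lambda_{\widehat{\omega}}$ along $i$ by evaluating on the trivial $G$-bundle $\varphi_{\mathrm{triv}}$ on $\Sigma$. The homotopy fibre $\lambda_*^{-1}[\varphi_{\mathrm{triv}}]$ has objects $(\widehat{\varphi},h)$ with $h\colon \lambda_*\widehat{\varphi}\longrightarrow \varphi_{\mathrm{triv}}$; since $\lambda$ is surjective, every $\widehat{\varphi}$ with $\lambda_*\widehat{\varphi}\cong \varphi_{\mathrm{triv}}$ corresponds precisely to a lift of the trivial $G$-bundle, i.e.\ to a $D=\ker(\lambda)$-bundle. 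The key geometric input is that the short exact sequence \eqref{Group extension} identifies $\widehat{G}$-bundles mapping to the trivial $G$-bundle with $D$-bundles, which follows from the fibre sequence $BD \longrightarrow B\widehat{G} \longrightarrow BG$ displayed in \eqref{Diagram: Fiber sequence implies exact sequence}. Using $\iota^*\widehat{\omega}=\omega$, the end over $\lambda_*^{-1}[\varphi_{\mathrm{triv}}]$ defining ${{\Za_{\text{DW}}}}^\lambda_{\widehat{\omega}}(\Sigma,\varphi_{\mathrm{triv}})$ reduces to the end over $\BunD(\Sigma)$ defining ${{\Za_{\text{DW}}}}_\omega(\Sigma)$, giving the desired isomorphism of parallel-section spaces.

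\textbf{Morphisms and the $G$-action.} Next I would check compatibility on cobordisms by comparing \eqref{EQ:EDW on morphisms} evaluated on trivial $G$-bundles with \eqref{Definition on Morphisms}: the pairing $\langle \widehat{\psi}^*\widehat{\omega},\sigma_M\rangle$ restricts via $\iota^*\widehat{\omega}=\omega$ to $\langle \psi^*\omega,\sigma_M\rangle$, and the boundary correction term $\langle \widehat{h}^*\widehat{\omega}, [0,1]\times\sigma_b\rangle$ matches the corresponding term in ${{\Za_{\text{DW}}}}_\omega$. The crucial and most delicate step is matching the $G$-action: on the equivariant side, the $G$-action on $i^*{{\Za_{\text{DW}}}}^\lambda_{\widehat{\omega}}(\Sigma)$ arises from evaluating on gauge transformations of the trivial $G$-bundle, which act on the homotopy-fibre data by translating the lifts $\widehat{\varphi}$; I must show this coincides with the explicit action \eqref{Eq: Concrete form of the quantum symmetry} whose cocycle twist is $\langle\varphi^*\Phi_g,\sigma_\Sigma\rangle$. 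Here I would invoke the calculation carried out in Section~\ref{Sec: Gauging} just before the theorem, where $\Phi_g$ is constructed by putting $\widehat{g}$ (a lift of $g$) on the interval and integrating $\widehat{\omega}$ over $[0,1]\times\Delta^{n-1}$, precisely so that $\delta\Phi_g = \omega - \alpha(g^{-1})^*\omega$ and the fixed-point structure is the one induced by $\widehat{\omega}$.

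\textbf{The main obstacle} will be the last point: verifying that the $G$-action induced by gauge transformations of the trivial $G$-bundle in the pushforward construction reproduces exactly the cocycle $\Phi_g$ determined by $\widehat{\omega}$, rather than merely a cohomologous one. This is where the hypothesis that \emph{the fixed point structure is induced by $\widehat{\omega}$} (and not some other homotopy fixed point differing by a homomorphism $G\longrightarrow H^{n-1}(BD;U(1))$, per Remark~\ref{Rem: Preserved}) is essential. I would handle this by tracking a lift $\widehat{g}\in\widehat{G}$ of each $g\in G$ through the homotopy pullback $\lambda_*^{-1}[\psi]|_{(\widehat{\varphi}_b,h_b)}$: the gauge transformation of the trivial $G$-bundle lifts to a homotopy on $B\widehat{G}$ given by $\widehat{g}$ on the cylinder, and the induced factor $\langle\widehat{h}^*\widehat{\omega},[0,1]\times\sigma\rangle$ then equals $\langle\varphi^*\Phi_g,\sigma_\Sigma\rangle$ by the very definition of $\Phi_g$ in Section~\ref{Sec: Gauging}. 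The functoriality and $G$-equivariance packaging (that these maps assemble into a functor to $G\text{-}\mathsf{Rep}$) follows from Proposition~\ref{Prop: symmetry extends to quantum} together with the functoriality of the pushforward \cite[Remark~3.43]{OFK}, so no separate coherence check is needed beyond what those results already provide.
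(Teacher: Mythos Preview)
Your proposal is correct and follows essentially the same approach as the paper: identify the trivial sector of ${{\Za_{\text{DW}}}}^\lambda_{\widehat{\omega}}$ with ${{\Za_{\text{DW}}}}_\omega$ via the fibre sequence $BD\to B\widehat{G}\to BG$ (so that homotopy fibres over the trivial $G$-bundle reduce to $\BunD$ and $\iota^*\widehat{\omega}=\omega$ collapses the formulas), and then match the $G$-action by lifting $g\in G$ to $\widehat{g}\in\widehat{G}$ and recognising the resulting factor on parallel sections as $\langle\varphi_D^*\Phi_g,\sigma_\Sigma\rangle$. The paper executes the second step by an explicit conjugation calculation in $\widehat{G}$: writing $\widehat{G}$-bundles as functors $\Pi_1(\Sigma)\to *\DS\widehat{G}$, it conjugates by $(g^{-1},1)$ and uses the non-abelian cocycle identity $\sigma_\alpha(g^{-1},g)^{-1}=\alpha(g^{-1})[\sigma_\alpha(g,g^{-1})^{-1}]$ to show the lifted gauge transformation sends $\iota_*\varphi_D$ to $\iota_*\mathcal{R}(\alpha(g^{-1}))\varphi_D$; you should be prepared to carry out this computation (or an equivalent one) rather than leave it at ``tracking a lift,'' since this is where the specific form of the extension enters.
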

\begin{proof}
First we show that the trivial sector, i.e.\ the evaluation on trivial bundles, of 
${{\Za_{\text{DW}}}}^\lambda_{\widehat{\omega}}$ is ${{\Za_{\text{DW}}}}_{{\omega}}$. 
From the exact sequence of groups we get a fibration
\begin{align}
BD \overset{\iota}{\longrightarrow} B\widehat{G} \overset{\lambda}{\longrightarrow} BG
\end{align}
of classifying spaces. Let $N$ be a manifold of dimension $n$ or $n-1$. We have to evaluate the homotopy fibre $\lambda_*^{-1}[\star]$ of the trivial bundle $\star \colon N \longrightarrow BG$. 
Using that $\lambda_*$ is a fibration we can replace the homotopy fibre with the 
regular fibre $\Pi_1[N, BD]$.
This shows that we can replace limits and integration over the homotopy fibre of the trivial bundle with limits and integration over $\BunD(N)$ for every manifold $N$. For this reason \eqref{EQ:EDW on Objects} and \eqref{EQ:EDW on morphisms} reduce the
corresponding formulas for ${{\Za_{\text{DW}}}}_{{\omega}}$, since $\widehat{\omega}$ pulls back to $\omega$.

Next we show that this gauges the symmetry, see~\eqref{Eq: Concrete form of the quantum symmetry}:
\begin{align}
{{\Za_{\text{DW}}}}^\lambda_{\widehat{\omega}}(\Sigma,g)f(\iota_* \varphi_D,\id)=f\big(\iota_* \mathcal{R}(\alpha(g^{-1}))\varphi_D, \id\big) \, \langle \varphi_D^*\Phi_g , \sigma_\Sigma \rangle
\end{align}  
for all closed $n{-}1$-dimensional manifolds $\Sigma$, $f(\,\cdot\,) \in {{\Za_{\text{DW}}}}^\lambda_{\widehat{\omega}}(\Sigma, \star \colon \Sigma \longrightarrow BG)$ and $\varphi_D \colon \Sigma \longrightarrow BD$, where we interpret $g\in G$ as a homotopy from the constant map $\star$ to itself. 
By~\cite[Proposition~4.2~(b)]{OFK} we have 
\begin{align}
{{\Za_{\text{DW}}}}^\lambda_{\widehat{\omega}}(\Sigma,g)f(\iota_* \varphi_D,\id)= f(\iota_* \varphi_D,g^{-1}) \ .
\end{align}  
We have to calculate a lift for the homotopy $g^{-1}$, i.e.\ a gauge transformation $\widehat{g}^{\,-1} \colon \iota_* \varphi_D \longrightarrow \iota_* \varphi'_D$ such that $\lambda (\widehat{g}^{\,-1}) = g^{-1}$. We use the concrete description of $\widehat{G}$-bundles as elements of the functor category $[\Pi_1(\Sigma),* \Ds \widehat G\,]$, where $\Pi_1(\Sigma)$ is the fundamental groupoid of $\Sigma$. A lift of the gauge transformation is then given by conjugation with $(g^{-1},1)\in \widehat G$. We calculate its action on the image $d\in D \subset \widehat{G}$ of a path in $\Sigma$. The inverse is given by \cite{Blanco2005}
\begin{equation}
\left( g \,,\, \sigma_\alpha(g,g^{-1})^{-1} \right ) \ .
\end{equation} 
Then
\begin{align}
\begin{split}
\big(g^{-1},1\big)\,\big(1,d\big)\,\big( g\,,\, \sigma_\alpha(g,g^{-1})^{-1}\big) &= \big(g^{-1},\alpha(g^{-1})[d]\big)\,\big( g\,,\, \sigma_\alpha(g,g^{-1})^{-1}\big) \\[4pt]
&= \big( 1\,,\,\alpha(g^{-1})[d]\,\alpha(g^{-1})[\sigma_\alpha(g,g^{-1})^{-1}]\,\sigma_\alpha(g^{-1},g)\big) \\[4pt]
&= \big( 1\,,\,\alpha(g^{-1})[d]\, \sigma_\alpha (g^{-1},g)^{-1} \, \sigma_\alpha (g^{-1},g) \big) \\[4pt]
&= \big( 1,\alpha(g^{-1})[d]\big) \ ,
\end{split}
\end{align}
where in the third equality we used $\sigma_\alpha(g^{-1},g)^{-1}= \alpha(g^{-1})[\sigma_\alpha(g,g^{-1})^{-1}]$, which follows from \eqref{Eq: Non abelian cocycle} with $g_1=g^{-1}$, $g_2=g$ and $g_3= g^{-1}$ using $\sigma_\alpha({1,g})=\sigma_\alpha({g,1})=1$ for all $g\in G$. This shows that $\varphi_D'= \mathcal{R}(\alpha(g^{-1}))\varphi_D$. 
That $f(\,\cdot\,)$ is a parallel section implies
\begin{align}
f(\iota_* \varphi_D,g^{-1})= E_{\widehat{\omega}}^{-1}(\Sigma,\widehat{g})f\big(\mathcal{R}(\alpha(g^{-1}))\varphi_D , \id \big) \ . \label{Eq: Proof Main theorem}
\end{align}  
By definition $E^{-1}_\omega(\Sigma,\widehat{g})= \langle \varphi_D^* \Phi_g, \sigma_\Sigma \rangle$ with $\Phi_g$ induced by $\widehat{\omega}$ as above. Inserting this into \eqref{Eq: Proof Main theorem} gives \eqref{Eq: Concrete form of the quantum symmetry} where $\Phi_g$ provide the homotopy fixed point structure.
\end{proof}

\begin{remark}
Theorem~\ref{Theorem: Gauging} provides a general mechanism for the gauging of
symmetries. However, we cannot show that it is impossible to gauge the
symmetry when the conditions of Theorem~\ref{Theorem: Gauging} are not satisfied, i.e.\
when no such $\widehat{\omega}$ exists. Also the homotopy fixed point 
structure which is gauge by this construction is induced by $\widehat{\omega}$. It 
is not clear to us whether all homotopy fixed point structures arise in this way. In general, we 
expect this not to be the case. \end{remark}

\begin{example}\label{Example: Gauging1}
We describe a discrete two-dimensional gauge theory with gauge group $D=\Z_N\times \Z_N$ and topological action $\omega_k \in H^2(\Z_N\times \Z_N; U(1))$ as defined in~\eqref{EQ:Def 2 cocycle}. The action of the symmetry group $G$ on $D$ can be encoded in a short exact sequence
\begin{align}
1\longrightarrow D \longrightarrow \widehat{G} \longrightarrow G \longrightarrow 1 \ .
\end{align}  
Set $G= \Z_M \times \Z_M$ and consider the extension
\begin{align}
(0,0)\longrightarrow \Z_N\times \Z_N \xrightarrow{(M , M)\,\cdot}
  \Z_{N\,M}\times \Z_{N\,M}\longrightarrow \Z_M \times \Z_M
  \longrightarrow (0,0) \ .
\end{align} 
In this case we can gauge the symmetry in the manner of Theorem~\ref{Theorem: Gauging} for the topological action
$\omega_k$ with $k\in \lbrace 0,1,\dots, N-1 \rbrace$ if and only if
$k$ is divisible by $M$ modulo $N$, i.e.\ there exists $k'\in \Z$ such that $k'\,M = k \text{ mod}\ N$. Concretely, $\widehat{\omega}\in H^2(\Z_{N\,M}\times \Z_{N\,M}; U(1))$ is given by $\omega_{k'}\in Z^2(\Z_{N\,M}\times \Z_{N\,M};U(1))$. 
This simple example already shows that we cannot gauge every symmetry using Theorem~\ref{Theorem: Gauging};
it is discussed in~\cite{Kapustin:2014gua,Gaiotto2014} in the context of
$0$-form and $1$-form global symmetries. We will discuss obstructions
to finding an appropriate lift $\widehat{\omega}$ in more detail and generality in Section~\ref{Sec: Obstruction}. 
\end{example}

\begin{example}\label{Ex: D8 Z2}
The cyclic group $\Z_2$ acts on the dihedral group $D_8$ by conjugation with the generator $a$. Since this is an action via inner automorphisms it preserves the non-trivial 2-cocycle $\omega\in H^2(BD_8;U(1))$ from Example~\ref{Ex: 2D coycles}.b. This action defines a non-abelian 2-cocycle with trivial map $\sigma$. The corresponding extension is
\begin{align}
1 \longrightarrow D_8 \longrightarrow D_8 \rtimes \Z_2 \longrightarrow \Z_2 \longrightarrow 1 \ . 
\end{align} 
The Pauli group is the subgroup 
\begin{align}
P_1 = \lbrace \pm\, \mathds{1}_2 , \pm \,\iu\, \mathds{1}_2 , \pm\, \sigma_x , \pm \,\iu\, \sigma_x , \pm\, \sigma_y , \pm \,\iu\, \sigma_y , \pm\, \sigma_z , \pm \,\iu\, \sigma_z \rbrace
\end{align}
of the unitary group $U(2)$ with the Pauli spin matrices
\begin{align}
\sigma_x=  \bigg( \begin{matrix}
0 & 1 \\ 
1 & 0
\end{matrix} 
\bigg) \ , \quad
\sigma_y= \bigg(
\begin{matrix}
0 & -\,\iu\, \\ 
\,\iu\, & 0
\end{matrix} 
\bigg) \qquad \mbox{and} \qquad
\sigma_z=  \bigg(
\begin{matrix}
1 & 0 \\ 
0 & -1
\end{matrix} 
\bigg) \ . 
\end{align}
There is an equivalence of extensions 
\begin{equation}
\begin{tikzcd}
 &  & D_8\rtimes \Z_2 \ar[dd, "\vartheta"] \ar[rd] & & \\
1 \ar[r] & D_8 \ar[rd] \ar[ru] & & \Z_2 \ar[r] & 1\\
 &  & P_1 \ar[ru] & &
\end{tikzcd}
\end{equation}
given by $\vartheta (a^i\,b^j,k)= (\,\iu\, \sigma_x)^i\,\sigma_y^j\,\sigma_x^k$, showing that this extension is non-trivial even though it comes from an inner automorphism. The intuitive reason for this is that conjugation by $a^2$ is the identity even though $a^2$ itself is not. 
We will show in Example~\ref{Ex: D8 Z2 v2} that this symmetry cannot be gauged in the manner of Theorem~\ref{Theorem: Gauging}.
\end{example}

\begin{example}\label{Example: Gauging2}
In three dimensions we can look at the extension 
\begin{align}
0\longrightarrow \Z_N \xrightarrow{M\,\cdot } \Z_{N\,M}
  \longrightarrow \Z_M \longrightarrow 0 \ .
\end{align}
The 3-cocycle $\omega_k$ defined in~\eqref{Eq: Zn 3-cocycle} can
always be gauged by the 3-cocycle $\widehat{\omega}_k \in
H^3(B\Z_{N\,M};U(1))$ corresponding to the same value of $k$. 
\end{example}

\subsection{Obstructions to gauging}\label{Sec: Obstruction}
We start by recalling some basic definitions related to spectral sequences
focusing on first quadrant spectral sequences for simplicity. 
A spectral sequence is a tool for the computation of (co)homology groups. 
They usually arise in contexts where the homology groups 
of different chain complexes are related, e.g.\ for a topological fibre bundle 
$F\longrightarrow E \longrightarrow B$ there exists a spectral sequence relating 
the cohomology of $B$ and $F$ to the cohomology $E$. 

\begin{definition}[See e.g.\ Definition 5.2.3 of~\cite{Weibel}]
A \emph{(cohomological) spectral sequence (starting at $E_2$)} consists of a family of abelian groups\footnote{More generally, one can define a spectral sequence with values in an arbitrary abelian 
category.}
$\{ E_r^{p,q} \}$ for all integers $p,q\in \Z$ and $r\geq 2$ together with differentials
\begin{align}
d_r^{p,q} \colon E_r^{p,q} \longrightarrow E_r^{p+r,q-r+1}
\end{align} 
and isomorphisms $E_{r+1}^{p,q}\cong \ker (d_r^{p,q})/ \im (d_r^{p-r,q+r-1})$.

A \emph{first quadrant spectral sequence} is a spectral sequence which satisfies 
$E_r^{p,q}= 0$ for $p< 0$ and or $q< 0$.
\end{definition}  
       
The collection of all terms for a fixed value of $r$ is called the \emph{$E_r$-page} of 
the spectral sequence. For example the $E_2$-page of a general first quadrant spectral sequence looks as follows
\begin{center}
\begin{tikzpicture}
  \matrix (m) [matrix of math nodes,
    nodes in empty cells,nodes={minimum width=5ex,
    minimum height=5ex,outer sep=-5pt},
    column sep=1ex,row sep=1ex]{
                & \vdots        &    \vdots      &     \vdots    &     \vdots      & \ddots \\
          2     &     E_2^{0,2} &   E_2^{1,2}   &   E_2^{2,2}  &    E_2^{3,2}    & \dots  \\
          1     &  E_2^{0,1} &   E_2^{1,1} &  E_2^{2,1} &  E_2^{3,1}       & \dots  \\
          0     &  E_2^{0,0}  & E_2^{1,0} &  E_2^{2,0} &  E_2^{3,0}     &  \dots \\
    \quad\strut &   0  &  1  &  2  &   3   & \strut \\};
  \draw[-stealth] (m-2-2.south east) -- (m-3-4.north west);
  \draw[-stealth] (m-3-3.south east) -- (m-4-5.north west);
\draw[thick] (m-1-1.east) -- (m-5-1.east) ;
\draw[thick] (m-5-1.north) -- (m-5-6.north) ;
\end{tikzpicture}
\end{center}
where we have only drawn two differentials.

For a first quadrant spectral sequence $( E_r^{p,q}, d_r^{p,q}) $ and fixed $p,q$ 
there exists an $r_0\in \N$ such that for all $r\geq r_0$ we have $q-r+1<0$ and $p-r<0$ implying 
$\ker (d_r^{p,q})=E_r^{p,q}$, $\im (d_r^{p-r,q+r-1})=0$ and hence $E_{r_0}^{p,q}\cong 
E_{r_0+1}^{p,q}\cong E_{r_0+2}^{p,q} \cong \dots $ . We set $E_\infty^{p,q}= \lim_{r\geq r_0} E_r \cong E_{r_0}$.      

\begin{definition}
Let $\{H^n\}_{n\geq 0}$ be a family of abelian groups. We say a 
first quadrant spectral sequence 
$( E_r^{p,q}, d_r^{p,q})$ \emph{converges to $H^n$} if there are filtrations 
\begin{align}
0=F^{n+1}H^n \subset F^{n}H^n \subset \dots \subset F^{0}H^n= H^n  
\end{align} 
for $n\geq 0$ and isomorphisms $E_\infty^{pq}\cong F^p H^{p+q} / F^{p+1} H^{p+q}$. For a convergent 
spectral sequence we write 
\begin{align}
E_2^{p,q}\Longrightarrow H^{p+q} \ \ .
\end{align}
\end{definition} 
\begin{example}
Let $f \colon E \longrightarrow B$ be a Serre fibration with fibre $F$. There
is a convergent spectral sequence of cohomology group~\cite[Section 5.3]{Weibel}. 
\begin{align}
E_2^{p,q}=H^p(B; H^q(F))\Longrightarrow H^{p+q}(E) \ \ .
\end{align} 
\end{example}
\begin{remark}
Let $E_2^{p,q}\Longrightarrow H^{p,q}$ be a convergent spectral sequence of vector 
space. The splitting of every exact sequence of vector spaces implies 
\begin{align}
H^n \cong \bigoplus_{p+q=n} E_\infty^{p,q} \ \ . 
\end{align}
In general the spectral sequence determines $H^n$ only up to the solution of iterated 
extensions problems. 
\end{remark}

Let $E_2^{p,q}\Longrightarrow H^{p+q}$ be a convergent first quadrant spectral sequence.
The terms $E_r^{0,n}$ and $E_r^{n,0}$ are called \emph{edge terms}. 
Note that all differentials reaching $E_r^{0,n}$ are zero implying 
$E_\infty^{0,n} \subset \dots \subset E_r^{0,n}\subset \dots \subset E_2^{0,n}$. 
We get induced \emph{edge maps}
$H^n \longrightarrow H^n/F^1H^n \cong E_\infty^{0,n}\longrightarrow E_2^{0,n}$. 
Dually, all the differentials  $d_r^{n,0}$ are zero and hence $E_{r+1}^{n,0}$ is 
a quotient of $E_{r}^{n,0}$ and there are natural edge maps 
$E_2^{n,0}\longrightarrow E_\infty^{n,0}= F^nH^n\subset H^n$.  

After this brief introduction to spectral sequences we can apply them to the 
problem at hand.
In this section we work with the group cohomology $H^n(G;U(1))$ instead of 
the cohomology of $BG$ with coefficients in $U(1)$.
Let 
\begin{align}
1 \longrightarrow D \overset{\iota}{\longrightarrow} \widehat{G} \overset{\lambda}{\longrightarrow} G \longrightarrow 1
\end{align}
be an exact sequence of finite groups and $\omega$ an $n$-cocycle on $D$. 
For the application of Theorem~\ref{Theorem: Gauging} the existence of an $n$-coclycle
$\widehat{\omega}$ on $\widehat{G}$ is required. 
There are obstructions for $\widehat{\omega}$ to exist:\footnote{For a physical perspective on these obstructions and the corresponding spectral sequence, see~\cite{Thorngren2015}.}
 there is an action of $G$ on $H^n(D;U(1))$ induced by conjugation in $\widehat G$. Every cocycle on $\widehat{G}$ is invariant under conjugation and hence the first obstruction for $\widehat \omega$ to exist is 
\begin{align}
\omega \in H^n\big(D;U(1)\big)^G \ .
\end{align}
By definition, the obstruction is always satisfied if the extension
corresponds to a kinematical symmetry. There are further obstructions
encoded by first quadrant
Lyndon-Hochschild-Serre spectral sequence corresponding to the exact
sequence of groups~\eqref{Group extension} which takes the form  
\begin{align}
E^{p,q}_2=H^p\big(G;H^q(D;U(1))\big)\Longrightarrow H^{p+q}\big(\widehat G ; U(1)\big) 
\end{align}
with edge maps $H^{n}(\widehat G ; U(1)) \twoheadrightarrow E^{ 0,n}_\infty = E^{0,n}_{n+2} \hookrightarrow H^n(D;U(1))^G $ given by the restriction to $D$ (see e.g.~\cite[Section~6.8]{Weibel}). Hence, we see that $\omega \in \im(\iota^*)=E^{0,n}_{n+2}$ if and only if 
\begin{align}\label{Eq: Obstruction}
{\rm d}^{0,n}_i \omega = 0 \ \in \ E^{i,n+1-i}_i 
\end{align}
for all $i \in \{ 2, \dots , n+1 \}$. Note that $\dd^{0,n}_i \omega$ is only well-defined if $\dd^{0,n}_{i-1}\omega = 0$ and $E^{i,n+1-i}_i$ is a sub-quotient of $H^i\big(G;H^{n+1-i}(D;U(1))\big)$. 

To understand these obstructions in more detail we introduce the algebraic model for the spectral sequence~\cite[Section 2]{HS53}. The group cohomology of $\widehat{G}$ can be computed from the normalised cochain complex $C^\bullet(\widehat{G};U(1))$:
\begin{align}
0 \longrightarrow C^0\big(\widehat{G};U(1)\big) \longrightarrow C^{1}\big(\widehat{G};U(1)\big) \longrightarrow \cdots \ .
\end{align}
We introduce a filtration 
\begin{align}
 C^\bullet\big(\widehat{G};U(1)\big)=F^0C^\bullet\big(\widehat{G};U(1)\big) \supseteq F^1 C^\bullet\big(\widehat{G};U(1)\big)\supseteq F^2 C^\bullet\big(\widehat{G};U(1)\big) \supseteq \cdots 
\end{align}
where $F^i C^n(\widehat{G};U(1))$ is $0$ for $i>n$ and otherwise consists of all normalized $n$-cochains which are $0$ as soon as $n-i+1$ entries are in the image of $D$.
This filtration is compatible with the coboundary operator $\delta$ and hence induces a spectral sequence, which is the Lyndon-Hochschild-Serre spectral sequence. Concretely we set
\begin{align}
Z^{p,q}_r & \coloneqq \text{ker} \Big(F^p
            C^{p+q}\big(\widehat{G};U(1)\big)\overset{\delta}{\longrightarrow}
            C^{p+q+1}\big(\widehat{G};U(1)\big)\big/F^{p+r}
            C^{p+q+1}\big(\widehat{G};U(1)\big) \Big) \ , \\[4pt]
B^{p,q}_r & \coloneqq \delta\Big(
            F^{p-r+1}C^{p+q-1}\big(\widehat{G};U(1)\big)\Big)\cap F^p
            C^{p+q}\big(\widehat{G};U(1)\big) \ , \\[4pt]
E^{p,q}_r &\coloneqq Z^{p,q}_r \big/ \big(B^{p,q}_r + Z_{r-1}^{p+1,q-1}\big) \ .
\end{align} 
The differential $\delta\colon  C^{p+q}(\widehat{G};U(1)) \longrightarrow C^{(p+r)+(q-r+1)}(\widehat{G};U(1))$ induces the corresponding differentials
\begin{align}
\dd^{p,q}_r \colon E^{p,q}_r \longrightarrow E^{p+r,q-r+1}_r 
\end{align}
in the spectral sequence.

We consider the two-dimensional case as a warm-up. We fix $\omega \in H^2(D;U(1))$. 
The corresponding element in $E^{0,2}_2$ is the 2-cochain 
\begin{align}
\begin{split} 
\tilde{\omega} \colon \widehat{G}\times \widehat{G} &\longrightarrow U(1) \\
\big((d,g)\,,\, (d',g')\big) & \longmapsto \omega(d,d') \ .
\end{split} 
\end{align}  
This is not a cocycle in general since the multiplication in $\widehat{G}$ is twisted by the corresponding non-abelian 2-cocycle. This cochain obviously pulls back to $\omega$. The ensuing calculation can be understood as trying to find a 2-cochain on $\widehat{G}$ which is $0$ when pulled back to $D$ such that its sum with $\tilde{\omega}$ is closed. 

The first obstruction $\dd^{0,2}_2 \tilde{\omega} = 0$ is equivalent to $\delta \tilde{\omega} \in B^{2,1}_2 + Z^{3,0}_1$. This implies that there exists $\gamma_1 \in F^1C^2(\widehat{G};U(1))$ such that 
\begin{align}
\delta \gamma_1 \in F^2 C^3\big(\widehat{G};U(1)\big) \qquad \mbox{and} \qquad \delta (\tilde{\omega}-\gamma_1) \in Z^{3,0}_1  \ . 
\end{align} 
This means that we can consider $\tilde{\omega}$ as an element of
$E^{0,2}_3 \cong \ker\,\dd^{0,2}_2$. 
Note that the identification is not the identity, rather we have to map $\tilde{\omega}$ to $\tilde{\omega}-\gamma_1$. 
We have thus shown that if the first obstruction vanishes, then there exists $\theta\in Z^{3,0}_1 = Z^{3}(G;U(1))$ and a cochain $\omega'=\tilde{\omega}-\gamma_1$ such that $\delta \omega' = \lambda^* \theta$ and $\iota^* \omega'= \omega$. 
 
The next obstruction is $\dd^{0,2}_3 \tilde{\omega}= 0$. This is equivalent to $\delta (\tilde{\omega}-\gamma_1) \in B^{3,0}_3$, hence there exists $\gamma_2\in F^1C^2(\widehat{G};U(1))$ such that $\delta \gamma_2= \delta(\tilde{\omega}-\gamma_1)\in F^3C^2(\widehat{G};U(1))$. This implies $\delta(\tilde{\omega}-\gamma_1 -\gamma_2)= 0$ and $\iota^*(\tilde{\omega}-\gamma_1 -\gamma_2) = \omega$ since $\gamma_1$ and $\gamma_2$ are elements in $F^1C^2(\widehat{G};U(1))$. This gives the desired 2-cocycle $\widehat{\omega}=\tilde{\omega}-\gamma_1 -\gamma_2$.
 
The discussion above readily generalises to arbitrary dimension $n$. If the first obstruction vanishes then there exists $\gamma_1 \in  F^1C^n(\widehat{G};U(1))$ such that
\begin{align}
\delta(\tilde{\omega}-\gamma_1)\in F^3C^{n+1}\big(\widehat{G};U(1)\big) \ .
\end{align}
More generally if the first $m\leq n$ obstructions vanish, there are elements $\gamma_1,\dots , \gamma_m \in F^1C^n(\widehat{G};U(1)) $ such that 
\begin{align}
\delta \gamma_i \ &\in \ F^i C^{n+1}\big(\widehat{G};U(1)\big) \ , \\[4pt]
\delta\Big(\tilde{\omega} -\sum_{i=1}^k\, \gamma_i\Big) \  &\in \ F^{k+2} C^{n+1}\big(\widehat{G};U(1)\big) \ , 
\end{align} 
for all $i,k=1,\dots,m$.
In particular, if all obstructions vanish then
\begin{align}
\delta\Big(\tilde{\omega} -\sum_{i=1}^n\, \gamma_i\Big) = 0 
\end{align}
and 
\begin{align}
\iota^* \Big(\tilde{\omega} -\sum_{i=1}^n\, \gamma_i\Big) = \omega \ \in \ H^n\big(D;U(1)\big) \ . 
\end{align}
We are mostly interested in the case when all obstructions except the last one vanish. In this case 
\begin{align}
\delta\Big(\tilde{\omega} -\sum_{i=1}^{n-1}\, \gamma_i\Big) = \lambda^* \theta  
\end{align}
with $\theta \in Z^{n+1}(G;U(1))$, since closed elements of $F^{n+1}C^{n+1}(\widehat{G};U(1))$ are in one-to-one correspondence with $Z^{n+1}(G;U(1))$. 
We summarize the present discussion in 

\begin{proposition}\label{Prop: Obstructions}
Let 
\begin{align}
1 \longrightarrow D \overset{\iota}{\longrightarrow} \widehat{G}
  \overset{\lambda}{\longrightarrow} G \longrightarrow 1
\end{align}
be a short exact sequence of groups, $n$ a natural number and $\omega$ an $n$-cocycle on $D$ with values in $U(1)$. 
\begin{itemize}
\item[{\rm (a)}]
When all obstructions in \eqref{Eq: Obstruction} vanish, then there exists $\widehat{\omega}\in Z^n(\widehat{G};U(1))$ satisfying $\iota^* \widehat{\omega}= \omega$. 

\item[{\rm (b)}]
When the first $n-1$ obstructions in \eqref{Eq: Obstruction} vanish, then there exist $\omega'\in C^n(\widehat{G};U(1))$ and $\theta\in Z^{n+1}(G;U(1))$ satisfying $\iota^* \omega'= \omega$ and $\delta \omega'=\lambda^*\theta$. 
\end{itemize} 
\end{proposition}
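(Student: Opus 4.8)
The plan is to establish both parts of Proposition~\ref{Prop: Obstructions} by working directly with the algebraic model of the Lyndon-Hochschild-Serre spectral sequence introduced above, namely the filtration $F^\bullet C^\bullet(\widehat{G};U(1))$ and the explicit cocycle and coboundary groups $Z^{p,q}_r$, $B^{p,q}_r$, $E^{p,q}_r$. The key geometric-free input is the lift $\tilde\omega\in F^0 C^n(\widehat{G};U(1))$ of $\omega$, defined on tuples $((d_1,g_1),\dots,(d_n,g_n))$ by $\omega(d_1,\dots,d_n)$, which visibly satisfies $\iota^*\tilde\omega=\omega$. The whole argument is an induction on the number of vanishing differentials, tracking how each vanishing $\dd^{0,n}_i\tilde\omega=0$ lets us correct $\tilde\omega$ by an element of $F^1 C^n(\widehat{G};U(1))$ (which pulls back trivially to $D$) so as to push $\delta\tilde\omega$ into a deeper stage of the filtration.

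First I would verify the base of the induction. Since $\omega$ is a cocycle on $D$ and lies in $H^n(D;U(1))^G$ (this is the zeroth obstruction, automatic in the kinematical setting), $\tilde\omega$ represents a well-defined class in $E^{0,n}_2$. The vanishing of $\dd^{0,n}_2\tilde\omega$ means $\delta\tilde\omega\in B^{2,n-1}_2+Z^{3,n-2}_1$, which, unwinding the definitions of $B$ and $Z$, produces $\gamma_1\in F^1 C^n(\widehat{G};U(1))$ with $\delta\gamma_1\in F^2 C^{n+1}(\widehat{G};U(1))$ and $\delta(\tilde\omega-\gamma_1)\in F^3 C^{n+1}(\widehat{G};U(1))$. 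This is exactly the two-dimensional warm-up computation already carried out in the text, so the step is to observe that it is formal and dimension-independent: nowhere did that argument use $n=2$.

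Next I would run the inductive step. Suppose the first $m\le n$ obstructions vanish and we have produced $\gamma_1,\dots,\gamma_m\in F^1 C^n(\widehat{G};U(1))$ with $\delta\gamma_i\in F^i C^{n+1}(\widehat{G};U(1))$ and $\delta\big(\tilde\omega-\sum_{i=1}^k\gamma_i\big)\in F^{k+2}C^{n+1}(\widehat{G};U(1))$ for $k\le m$. The vanishing of $\dd^{0,n}_{m+1}$ applied to the class now represented by $\tilde\omega-\sum_{i=1}^m\gamma_i$ yields, by the same $B$-plus-$Z$ decomposition, a new correction $\gamma_{m+1}\in F^1 C^n(\widehat{G};U(1))$ pushing $\delta\big(\tilde\omega-\sum_{i=1}^{m+1}\gamma_i\big)$ one filtration level deeper. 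For part~(a), running this to $m=n$ gives $\delta\big(\tilde\omega-\sum_{i=1}^n\gamma_i\big)\in F^{n+2}C^{n+1}=0$, so $\widehat\omega:=\tilde\omega-\sum_{i=1}^n\gamma_i$ is closed, and since each $\gamma_i\in F^1$ pulls back to zero on $D$, we get $\iota^*\widehat\omega=\iota^*\tilde\omega=\omega$.

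For part~(b), I would stop the induction at $m=n-1$, obtaining $\omega':=\tilde\omega-\sum_{i=1}^{n-1}\gamma_i$ with $\delta\omega'\in F^{n+1}C^{n+1}(\widehat{G};U(1))$ and $\iota^*\omega'=\omega$. The final point is to identify closed elements of $F^{n+1}C^{n+1}(\widehat{G};U(1))$ with $Z^{n+1}(G;U(1))$ via $\lambda^*$: an $(n{+}1)$-cochain in the top filtration level vanishes whenever any single entry lies in the image of $D$, hence descends along $\lambda$ to a cochain on $G$, and $\delta\omega'$ being such a descended cocycle gives $\theta\in Z^{n+1}(G;U(1))$ with $\delta\omega'=\lambda^*\theta$. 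The main obstacle I anticipate is purely bookkeeping rather than conceptual: carefully checking that the elementwise descriptions of $B^{p,q}_r$ and $Z^{p,q}_r$ really do package the vanishing of $\dd^{0,n}_i$ into a correction lying in $F^1$ (not merely in some larger subgroup), and that the identification $F^{n+1}C^{n+1}\cap\ker\delta\cong Z^{n+1}(G;U(1))$ is an isomorphism of cochain-level objects and not just on cohomology. Both reduce to unravelling the filtration definitions, so I expect no genuine difficulty, only the need for care with the normalisation conventions and the twist by $\sigma_\alpha$ in the multiplication of $\widehat G$.
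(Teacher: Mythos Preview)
Your proposal is correct and follows essentially the same approach as the paper: the proof in the text is precisely the inductive correction scheme you describe, using the lift $\tilde\omega$ and successively subtracting $\gamma_i\in F^1 C^n(\widehat G;U(1))$ to push $\delta\tilde\omega$ deeper into the filtration, stopping at $m=n$ for part~(a) and at $m=n-1$ for part~(b). The only cosmetic difference is that the paper presents the argument as a discussion culminating in the proposition rather than as a formal induction, but the content is identical.
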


\begin{remark}
If the first $n-1$ obstructions vanish we can realize the anomalous
field theory as a boundary state of a classical
$n{+}1$-dimensional Dijkgraaf-Witten theory with topological action
$\theta$. In Section~\ref{Sec: Boundary DW} we will
explain this point in more detail. 
\end{remark}

\begin{example}
We have seen in Example~\ref{Example: Gauging2} that for the extension 
\begin{align}
0\longrightarrow \Z_N \longrightarrow \Z_{N\,M}\longrightarrow \Z_M \longrightarrow 0
\end{align}
all 3-cocycles on $\Z_N$ arise as pullbacks of 3-cocycles on $\Z_{N\,M}$, hence all obstructions vanish in this case.  
The $E_2$-page of the corresponding spectral sequence is 
\begin{center}
	\begin{tikzpicture}
	\matrix (m) [matrix of math nodes,
	nodes in empty cells,nodes={minimum width=5ex,
		minimum height=5ex,outer sep=-5pt},
	column sep=1ex,row sep=1ex]{
		& \vdots        &    \vdots      &     \vdots    &     \vdots      & \ddots \\
		3     &     \Z_M & \Z_{\gcf(M,N)} &  \Z_{\gcf(M,N)} & \Z_{\gcf(M,N)}   & \dots  \\
		2     &     0 &   0   &   0  &   0    & \dots  \\
		1     &  \Z_m &   \Z_{\gcf(M,N)} &  \Z_{\gcf(M,N)} & \Z_{\gcf(M,N)}       & \dots  \\
		0     &  U(1) & \Z_N &  0 &  \Z_N      &  \dots \\
		\quad\strut &   0  &  1  &  2  &   3   & \strut \\};
	\draw[-stealth] (m-2-2.south east) -- (m-3-4.north west);
	\draw[-stealth] (m-4-3.south east) -- (m-5-5.north west);
	\draw[thick] (m-1-1.east) -- (m-6-1.east) ;
	\draw[thick] (m-6-1.north) -- (m-6-6.north) ;
	\end{tikzpicture}
\end{center}
where $\gcf(M,N)$ is the greatest common factor of $M$ and $N$. For $\gcf(M,N)=1$ this implies directly that 
all obstruction vanish. To conclude the existence of $\widetilde{\omega}$ for $\gcf(M,N)>1$ from the spectral 
sequence a more detailed analysis is needed. 
\end{example}

\begin{example}\label{Ex: D8 Z2 v2}
Following up on Example~\ref{Ex: D8 Z2} we show that for the symmetry described by 
\begin{align}\label{Eq: Exact sequence D8 Z2}
1 \longrightarrow D_8 \longrightarrow P_1 \longrightarrow \Z_2 \longrightarrow 1
\end{align} 
the non-trivial 2-cocycle $\omega\in H^2(D_8;U(1))$ cannot be
gauged. The cohomology groups of the Pauli group $P_1$ can be computed using a
computer algebra package such as {\tt GAP}~\cite{Joyner08aprimer} and
the universal coefficient theorem to get
\begin{align}
\begin{split}
H^0\big(P_1; U(1)\big)&= U(1) \ , \\[4pt]
H^1\big(P_1; U(1)\big)&= \Z_2 \times \Z_2 \times \Z_2 \ , \\[4pt]
H^2\big(P_1; U(1)\big)&= \Z_2\times \Z_2 \ , \\[4pt]
H^3\big(P_1; U(1)\big)&= \Z_2\times \Z_2\times \Z_8  \ .
\end{split}
\end{align}  
The $E_2$ page of the corresponding spectral sequence is 
\begin{center}
\begin{tikzpicture}
  \matrix (m) [matrix of math nodes,
    nodes in empty cells,nodes={minimum width=5ex,
    minimum height=5ex,outer sep=-5pt},
    column sep=1ex,row sep=1ex]{
                &         &          &         &           & \\
          2     &     \Z_2 &  \Z_2   &   \Z_2  &   \Z_2    &  \\
          1     &  \Z_2\times \Z_2 &   \Z_2\times \Z_2  &  \Z_2\times \Z_2 &  \Z_2\times \Z_2       &  \\
          0     &  U(1)  & \Z_2 &  0  &  \Z_2     &  \\
    \quad\strut &   0  &  1  &  2  &   3   & \strut \\};
  \draw[-stealth] (m-2-2.south east) -- (m-3-4.north west);
  \draw[-stealth] (m-3-3.south east) -- (m-4-5.north west);
\draw[thick] (m-1-1.east) -- (m-5-1.east) ;
\draw[thick] (m-5-1.north) -- (m-5-6.north) ;
\end{tikzpicture}
\end{center}
The two differentials drawn are $0$ as can be checked by using the concrete description of the differentials in~\cite{HUEBSCHMANN1981296} and the fact that \eqref{Eq: Exact sequence D8 Z2} is the extension of $\Z_2$ by $D_8$ corresponding to the inner automorphism of $D_8$ given by conjugation with $a\in D_8$.  
Hence the $E_3$ page is given by 
\begin{center}
\begin{tikzpicture}
  \matrix (m) [matrix of math nodes,
    nodes in empty cells,nodes={minimum width=5ex,
    minimum height=5ex,outer sep=-5pt},
    column sep=1ex,row sep=1ex]{
                &         &          &         &           & \\
          2     &     \Z_2 &     &     &       &  \\
          1     &  \Z_2\times \Z_2 &   \Z_2\times \Z_2  &   &   &  \\
          0     &  U(1)  & \Z_2 &  0  &  \Z_2     &  \\
    \quad\strut &   0  &  1  &  2  &   3   & \strut \\};
\draw[thick] (m-1-1.east) -- (m-5-1.east) ;
\draw[thick] (m-5-1.north) -- (m-5-6.north) ;
\end{tikzpicture}
\end{center}
From $E_3^{1,1}=\Z_2\times \Z_2=E_\infty^{1,1}$ and $H^2(P_1;U(1))=
\Z_2\times \Z_2$ we deduce that the differential $\dd_3^{0,2}\colon
\Z_2 \longrightarrow \Z_2$ is an isomorphism. 
This implies that the symmetry corresponding to \eqref{Eq: Exact sequence D8 Z2} of the non-trivial topological action $\omega \in H^2(D_8;U(1))$ cannot be gauged using Theorem~\ref{Theorem: Gauging}, since the second obstruction corresponding to $\dd^{0,2}_3$ does not vanish. However, since the first obstruction vanishes we can gauge the symmetry using the relative field theory constructed in Section~\ref{Sec: Boundary DW}.   
\end{example}

\begin{example}
We have seen in Example~\ref{Example: Gauging1} that for the extension 
\begin{align}
(0,0) \longrightarrow \Z_2\times \Z_2 \longrightarrow \Z_4\times \Z_4
  \longrightarrow \Z_2\times \Z_2 \longrightarrow (0,0) 
\end{align}
the 2-cocycle $\omega_1 \in H^2(\Z_2\times \Z_2;U(1))$ cannot be obtained as the pullback of a 2-cocycle on $\Z_4\times \Z_4$. 
The corresponding 2-cochain is given by 
\begin{align}
\begin{split}
\tilde{\omega}_1 \colon (\Z_4\times \Z_4)^2 & \longrightarrow U(1)\\
\big((a_1,b_1)\,,\, (a_2,b_2)\big) & \longmapsto \exp\Big(\pi \,\iu\,
                                     \Big\lfloor\frac{a_1}{2}\Big\rfloor\,
                                     \Big\lfloor\frac{b_2}{2}\Big\rfloor\Big)
                                     \ .
\end{split}
\end{align}
To find the corresponding obstructions we calculate using $\big\lfloor
\frac{a+b}{2} \big\rfloor = a\,b +\big\lfloor \frac{a}{2}
\big\rfloor+\big\lfloor \frac{b}{2} \big\rfloor$ mod~$2$ to get
\begin{align}
\delta \tilde{\omega}_1 \big((a_1,b_1)\,,\,
  (a_2,b_2)\,,\,(a_3,b_3)\big) = \exp\bigg(\pi \,\iu\, \Big(
  a_1\,a_2\,\Big\lfloor\frac{b_3}{2}\Big\rfloor
  +\Big\lfloor\frac{a_1}{2}\Big\rfloor \, b_2\,b_3 \Big) \bigg) \ .
\end{align}
Using the computer algebra program Maple we verified 
by checking all possibilities that there are no solutions to the equation
\begin{align}
\delta (\tilde{\omega}_1-\gamma_1)=\lambda^*\theta
\end{align}
with $\gamma_1 \in F^1C^2(\Z_4\times\Z_4;U(1))$ and $\theta\in Z^3(\Z_2\times\Z_2;U(1))$. Hence the first obstruction $\dd^{0,2}_2\omega_1$ does not vanish. 
\end{example}

\subsection{Fully extended TQFT's and defects}\label{Sec: Fully extended}
We conclude with a few general remarks on different interpretations of the obstructions discussed in the previous section. This part will be informal and conjectural. In particular, we will use $\infty$-categories in an intuitive way without choosing a model or giving 
any details.
 
In an extended field theory manifolds with boundaries can be decomposed into manifolds with corners, but we cannot
decompose complicated $n-2$-dimensional manifolds into simpler pieces. For this and other reasons it is desirable to 
include manifolds with corners of arbitrary codimension. This leads to the introduction of an $n$-category of 
$n$-dimensional cobordisms with structure. Defining these $n$-categories is a notoriously hard problem, which to 
the best of our knowledge has only been solved in the case of manifolds with a particular type of topological structures, called $G$-structures, using $(\infty,n)$-categories~\cite{Lurie2009a, CS}.
\begin{definition}
	Let $G$ be a Lie group equipped with a smooth group homomorphism $G\longrightarrow GL(n)$ and $M$ an $n$-dimensional 
	manifold. A \emph{$G$-structure on $M$} is a lift up to homotopy 
	\begin{equation}
	\begin{tikzcd}
	& BG \ar[d] \\
	M \ar[ru] \ar[r, "\tau_M", swap] & BGL(n)
	\end{tikzcd}	
	\end{equation} 
	of the map $\tau_M$ classifying the frame bundle of $M$.
\end{definition} 
In more geometric terms a $G$-structure is a reduction of the structure group of the frame bundle to $G$. However, the
homotopy theoretical description is important to define morphisms of $G$-structures in a way suitable for the application to topological
field theories.\footnote{Actually, manifolds with $G$-structure form an $\infty$-category, see e.g.~\cite{CS} for details.}  

\begin{example}\label{Exa: G-structure}
	\begin{itemize}
		\item
		For $G=*$ the definition of a $G$-structure on $M$ reduces to the definition of a framing of $M$, i.e.\ the choice of a trivialisation of 
		the tangent bundle. Framed manifolds play a central role in the classification of fully extended topological field 
		theories as we will explain below.
		
		\item For $G=SO(n)$ considered as a subgroup of $GL(n)$ a $G$-structure corresponds to the choice of an orientation. 
		A geometric inclined reader would probably have expected an $SO(n)$-structure to also involve the choice of a metric. 
		This is not the case, because of the homotopical definition of morphisms between $G$-structures.     
		
		\item 
		Let $D$ be a finite group, $G=D\times SO(n)$ and $\rho \colon D\times SO(n) \longrightarrow SO(n) \longrightarrow
		Gl(n)$ the composition of the projection onto $SO(n)$ composed with the inclusion of $SO(n)$ into $Gl(n)$ and 
		$M$ an $n$-dimensional manifold. 
		To describe $G$-structures we note that $B(D \times SO(n) )\cong BD \times BSO(n)$ and hence a $G$-structure can be described 
		by a pair of continuous maps $\psi_D\colon M \longrightarrow BD$ and $\psi_{SO(n)}\colon M \longrightarrow BSO(n)$ 
		together with a homotopy filling the diagram 
		\begin{equation}
		\begin{tikzcd}
		& BSO(n) \ar[d] \\
		M \ar[ru, "\psi_{SO(n)}"] \ar[r, "\tau_M", swap] & BGL(n)
		\end{tikzcd} 	
		\end{equation}    
		In summary a $G$-structure on $M$ consists of a principal $D$-bundle with classifying map $\psi_D$ and an orientation. 
 $D\times SO(n)$-structures are a topological way of describing the background fields considered in this chapter.   
	\end{itemize}
\end{example}
In~\cite{CS} a symmetric monoidal $(\infty,n)$-category $\Cob_{G,n}^\infty$ of cobordisms equipped with $G$-structures 
is constructed
based on a proposal by Lurie~\cite{Lurie2009a}. We will write $\DCob_n^\infty$ for $\Cob_{D\times SO(n),n}^\infty$. 
Let $\cC$ be a symmetric monoidal $(\infty,n)$-category. A \emph{fully extended topological field theory} with values in 
$\cC$ is a symmetric monoidal functor 
\begin{align}
Z \colon \Cob_{G,n}^\infty \longrightarrow \cC \ \ .
\end{align}  
There exist a conjectural classification of fully extended field theories, called the cobordism hypothesis going back 
to the work of~\cite{CH}. Lurie has given a detailed sketch of a proof~\cite{Lurie2009a}. There is another proposal for a proof 
using factorization homology due to Ayala and Francis~\cite{AFCobordism}. 
According to the cobordism hypothesis fully extended framed topological field theories $\Cob_{*,n}^\infty\longrightarrow \cC$
are classified by the $(\infty,0)$-category (space) $\cC^{f.d.}$ of fully dualizable objects in $\cC$. 
The precise definition will not play any importance in what follows and hence we refer the reader to~\cite{Lurie2009a} for a
definition (see also \cite[III.5]{NotesFH} for an informal discussion). 

Let $G$ be a group equipped with a group homomorphism $G\longrightarrow Gl(n)$. This induces an action of $G$ on 
$\Cob_{*,n}^\infty$ via rotation of the framing, which induces an action of $G$ on the space of framed topological 
field theories $\cC^{f.d.}$. The coboridms hypothesis conjectures that fully extended topological field theories 
$\Cob_{G,n}^\infty \longrightarrow \cC$ are classified by the space $\cC^{f.d.,G}$ of homotopy fixed points of this action~\cite{Lurie2009a}.

\begin{definition}		
	Let $X$ be a topological space with right $G$-action. A \emph{homotopy fixed point} of the $G$-action is a $G$-equivariant
	map $EG \longrightarrow X$. 
\end{definition}    

A corollary of the cobordism hypothesis is a classification of fully extended equivariant field theories
\begin{align}		
Z\colon \DCob_n^\infty \longrightarrow \cC \ \ .
\end{align} 

\begin{theorem}\label{Theo: Classification G-TFT}
Assuming that the cobordism hypothesis is true the following holds:
the $(\infty,0)$-category of $D$-equivariant fully extended field theories is equivalent to the $(\infty,0)$-category 
	of fully extended oriented topological field theories with $D$ action.  
\end{theorem}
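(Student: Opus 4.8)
The plan is to deduce the statement as a formal consequence of the cobordism hypothesis together with the behaviour of homotopy fixed points under a product of structure groups. Write $X:=\cC^{f.d.}$ for the $(\infty,0)$-category (space) of fully dualizable objects of $\cC$, and for a group $G$ mapping to $GL(n)$ let $X^{hG}$ denote the homotopy fixed points for the induced framing-rotation action, i.e.\ the space $\cC^{f.d.,G}$ appearing above. By the cobordism hypothesis, fully extended $G$-structured field theories with values in $\cC$ are classified by $X^{hG}$. Applying this with $G=D\times SO(n)$, so that a $G$-structure is a principal $D$-bundle together with an orientation as in Example~\ref{Exa: G-structure}, identifies $D$-equivariant fully extended field theories $\DCob_n^\infty\longrightarrow \cC$ with $X^{h(D\times SO(n))}$; applying it with $G=SO(n)$ identifies oriented fully extended field theories with $X^{hSO(n)}$.

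First I would record that the homomorphism $\rho\colon D\times SO(n)\longrightarrow SO(n)\hookrightarrow GL(n)$ defining the $G$-structure factors through the projection onto the second factor. Consequently the framing-rotation action of $D\times SO(n)$ on $X$ restricts to the standard $SO(n)$-action and to the \emph{trivial} action of $D$. Next I would invoke the general identity for homotopy fixed points of a product of groups, coming from the equivalences $E(D\times SO(n))\simeq ED\times ESO(n)$ and $B(D\times SO(n))\simeq BD\times BSO(n)$: for any $D\times SO(n)$-space $X$ one has a natural equivalence
\begin{align}
X^{h(D\times SO(n))}\;\simeq\;\bigl(X^{hSO(n)}\bigr)^{hD}\ ,
\end{align}
where $D$ acts on $X^{hSO(n)}$ through its action on $X$.

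Combining these, by the cobordism hypothesis for $SO(n)$-structures the space $X^{hSO(n)}$ is exactly the $(\infty,0)$-category of oriented fully extended field theories. Since $D$ acts trivially on $X$, the residual $D$-action on $X^{hSO(n)}$ is the one obtained by transport of structure, so that $\bigl(X^{hSO(n)}\bigr)^{hD}\simeq \mathrm{Map}(BD,X^{hSO(n)})$ is by definition the $(\infty,0)$-category of oriented fully extended field theories equipped with a (homotopy-coherent) $D$-action. This is precisely the right-hand side of the claimed equivalence, and it matches the physical expectation that coupling to background $D$-gauge fields is the same datum as a $D$-symmetry.

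The main obstacle is making the iterated homotopy fixed point formula, and above all the identification of the residual $D$-action, fully precise in the $(\infty,n)$-categorical setting: one must check that commuting the $SO(n)$- and $D$-fixed points is natural and homotopy-coherent, and that the $D$-action produced on oriented theories by forgetting the orientation agrees on the nose with the action one wishes to call ``oriented TFTs with $D$-action''. Since the present discussion is informal and the cobordism hypothesis is itself only assumed, I would treat these coherences at the level of the underlying spaces $EG$ and $BG$ rather than attempt a model-dependent verification; a rigorous treatment would require a careful model for $\Cob^\infty_{G,n}$ and for the associated $G$-action on fully dualizable objects, along the lines of the references cited for the cobordism hypothesis.
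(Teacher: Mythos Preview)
Your proposal is correct and follows essentially the same argument as the paper: apply the cobordism hypothesis for $G=D\times SO(n)$, use the iterated homotopy fixed point identification $X^{h(D\times SO(n))}\simeq(X^{hSO(n)})^{hD}$ (which the paper phrases as the adjunction $\Hom_{D\times SO(n)}(ED\times ESO(n),X)\cong\Hom_D(ED,\Hom_{SO(n)}(ESO(n),X))$), invoke the cobordism hypothesis again for $SO(n)$, and finally observe that the trivial $D$-action makes the remaining fixed points into $\mathrm{Map}(BD,X^{hSO(n)})$. Your closing caveat about coherences is appropriate and matches the informal spirit of that section of the paper.
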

\begin{proof}
	The cobordism hypothesis implies that the space of $D$-equivariant fully extended field theories is $\Hom_{D\times SO(n)}(ED\times ESO(n), \cC^{f.d.})$. Via adjunction this space can be described as $\Hom_{D}(ED,\Hom_{SO(n)}(ESO(n),\cC^{f.d.}))$. Again using the cobordism hypothesis the space 
	$\Hom_{SO(n)}(ESO(n),\cC^{f.d.})$ is the space of oriented fully extended topological field theories. We recall
	from Example~\ref{Exa: G-structure} that the $D$ action on the framing is trivial. In this case an equivariant map is the same as 
	a continuous map $BD=ED/D \longrightarrow \Hom_{SO(n)}(ESO(n),\cC^{f.d.})$, i.e.\ an oriented field theory with $D$-action.  
\end{proof}

\begin{example}
	One-dimensional oriented topological field theories with values in vector spaces are classified by finite-dimensional 
	vector spaces. Theorem~\ref{Theo: Classification G-TFT} implies now that 1-dimensional $D$-equivariant topological field
	theories are classified by finite dimensional representations of $D$. 
	This approach should also be helpful to study fully extended topological field theories in higher dimensions. 
	In 2-dimensions one would reproduce the results of~\cite{2DEHQFT}. 
	Furthermore, Theorem~\ref{Theo: Classification G-TFT} relates the results from~\cite{2009arXiv0909.3140E} directly to equivariant field 
	theories. We hope to develop this approach to the classification of equivariant topological field theories in more detail in the future.  
\end{example}

Theorem~\ref{Theo: Classification G-TFT} provides a different point of view on the absence of 't Hooft anomalies, 
since it implies
that any symmetry which can be extended to the point can be gauged in a canonical way. Dijkgraaf-Witten theories are believed
to be fully extended, see~\cite{FLHT} for arguments into this direction. This makes the 
following interpretation of the $n+1$ obstructions from Section~\ref{Sec: Obstruction} natural: the first obstruction 
correspondence to extending the symmetry from the partition function to the state space. 
Indeed, recall from Proposition~\ref{Prop: Classical symmetry} that a homotopy coherent action $\alpha \colon BG
\longrightarrow B\Aut(D)\DS D $ satisfying $[\alpha(g)^*\omega]=[\omega]$ can be extended to the state space if 
there are $\Phi_g \in C^{n-1}(G;U(1))$ such that $\delta \Phi_g = \omega -\alpha(g^{-1})^*\omega$ and 
\begin{align}\label{Eq: extension}
\Phi_{g_1}+\alpha(g_1^{-1})^*\Phi_{g_2} = \Phi_{g_1\,g_2}+\sigma_{g_1,g_2}[\omega]  \ , 
\end{align} 
up to coboundary terms. To see whether this equation admits a solution we pick a collection of $\Phi'_g$ 
satisfying the first condition\footnote{This is possible because $\alpha$ preserves the cohomology class of $\omega$.} 
and consider the 2-cocycle  
\begin{align}
U(g_1,g_2) \coloneqq \Phi'_{g_1^{-1}}+\alpha(g_1)^*\Phi'_{g_2^{-1}}-\Phi'_{(g_1\,g_2)^{-1}}+\sigma_{g_1^{-1},g_2^{-1}}[\omega]
\end{align}
on $G$ with values in $H^{n-1}(D;U(1))$. If this 2-cocycle is trivial in cohomology we can chose $\Phi_g$ satisfying
Equation~\eqref{Eq: extension}. In~\cite[Section 11.8.1]{2009arXiv0909.3140E} it is explained that this 2-cocycle is
the differential $d_2^{0,n}\omega \in H^2(G;H^{n-1}(D;U(1)))$ for $n=3$. We expect this relation to hold for arbitrary
$n$.  
Summarizing the discussion we have seen that the structure ensuring that we can extend the symmetry to the state space
exists if and only if the first obstruction from the spectral sequence vanishes.  
  
We conjecture that this correspondence continues, i.e.\ that the vanishing of the second obstruction in the spectral sequence 
allows us to extend the symmetry to the extended Dijkgraaf-Witten from Section~\ref{Sec: DW via Orbifold}, the third 
obstruction to the extension of a symmetry for the twice extended Dijkgraaf-Witten theory, and so on. Furthermore, 
we expect this to be related to equipping the cocycle $\omega$ with higher homotopy fixed point structures
analogues to Definition~\ref{Def: Preserved}.  

We now sketch an interpretation of 't Hooft anomalies in terms of defects. 
Defects can be considered as extended observables, consisting of a submanifold of the spacetime manifold together 
with a ``label" specifying the type of defect. In general defects can meet on lower dimensional submanifolds. 
Defect networks can be described by labelled stratified manifolds, see Figure~\ref{Fig: Defect} for an example. 
There are versions of the cobordism category $\CobF$ including defects~\cite{DefectTFT}. Every quantum field theory 
admits a trivial defect $1$ which does not change the partition function.      

\begin{figure}[h]
\centering
\includegraphics[scale=0.7]{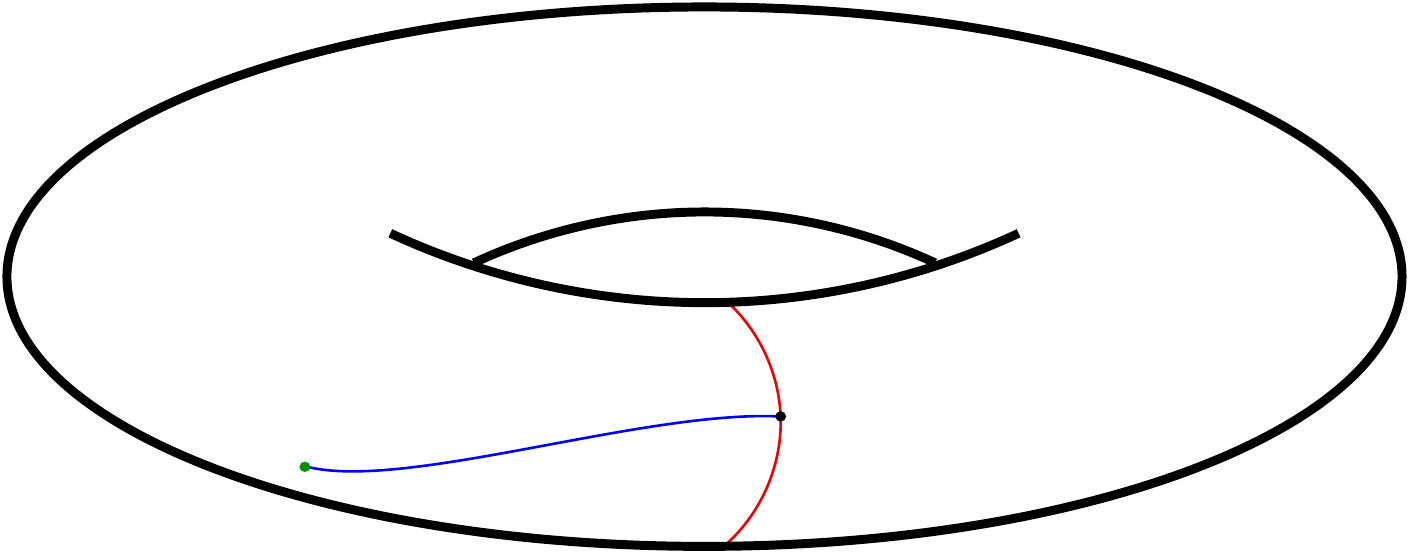}	
\caption{Sketch of a defect network on the 2-dimensional torus with two codimension one defects in blue and red and two 
point defects in black and green. The green point defect is between the trivial defect $1$, which we have not drawn, and the blue defect }
\label{Fig: Defect}	
\end{figure} 
A defect is called \emph{topological} 
if its expectation value is invariant under isotopies of the defect network. For topological defects there is a ``fusion
operation" which corresponds to bringing two defects close together. A topological defect $D$ is called \emph{invertible}
if there exists a defect $D^{-1}$ such that their fusion product $D\otimes D^{-1}$ is the trivial defect.  
The symmetries of a quantum field theory,
are closely related to the invertible codimension one topological defects in the theory: for every symmetry there should
be a corresponding invertible topological defect which acts on a field passing through the defect by applying the symmetry 
(See Figure~\ref{Fig:Sym_Defect} for a sketch). For more details on this approach we refer to~\cite{SymmetriesD} and     
for an application of defects to the description of symmetries of 2-dimensional Yang-Mills theory to~\cite{2DYM}. 
It is believed that (topological) defects of $n$-dimensional field theories assemble into an $n$-category with 
objects $n$-dimensional quantum field theories, 1-morphisms domain walls (codimension 1 defects) between quantum field 
theories, 2-morphisms codimension 2 defects between domain walls and so on up to $n$-morphisms which are given by point 
defects. This believe has been made prices in 2 and 3 dimensional topological field theories~\cite{2CatDefect,3CatDefect}.

\begin{figure}[htb]
\small
\begin{center}
\begin{overpic}[scale=1]
{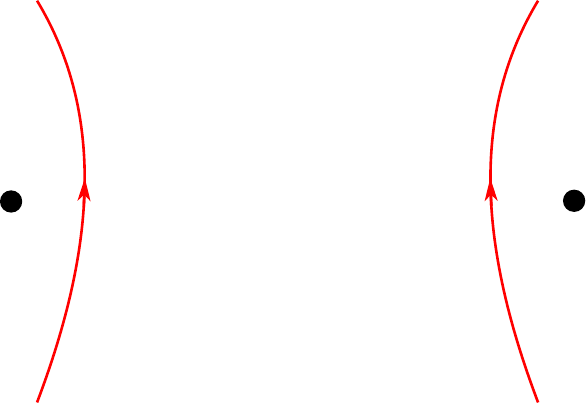}
\put(-1,38){$\Psi$}
\put(90,38){$\varphi\cdot\Psi$}
\put(40,30){{\Large$\longleftrightarrow$}}
\end{overpic}
\end{center}
\caption{\small A defect corresponding to the symmetry $\varphi$,
indicated by the directed lines. A field insertion $\Psi$ moving through 
the defect corresponds to the action of $\varphi$ on $\Psi$.}
\label{Fig:Sym_Defect}
\normalsize
\end{figure}

In the context of state sum models for 2-dimensional topological field theories the defect bicategory $\mathcal{B}$ has the following
algebraic description~\cite{2CatDefect, LecturenotesDefects}:
\begin{itemize}
	\item 
	\textbf{Objects:} The objects of $\mathcal{B}$ are separable symmetric Frobenius algebras. 
	
	\item 
	\textbf{1-Morphisms:} Let $A$ and $B$ be separable symmetric Frobenius-algebras. A 1-morphism $M\colon A \longrightarrow B$
	is a finite dimensional $A,B$-bimodule. Composition is given by the relative tensor product of bimodules. 
	
	\item 
	\textbf{2-Morphisms:} The 2-morphisms in $\mathcal{B}$ are given by bimodule maps.  	
\end{itemize}    

We now relate the discussion of 't Hooft anomalies presented above specialised to 2-dimensions to topological defects.
Let $D$ be a finite group and $\omega \in Z^2(D;U(1))$ a $D$ 2-cocycle with values in $U(1)$. The separable symmetric 
Frobenius algebra describing the corresponding 2-dimensional Dijkgraaf-Witten theory is the twisted group algebra
$\C[D]^{\omega}$ with basis $\{ \delta_d \}_{d\in D}$ and multiplication 
\begin{align}
\delta_d\cdot \delta_{d'} \coloneqq \omega(d,d') \delta_{d\cdot d'} \ \ .
\end{align} 
A gauge transformation corresponding to an element $d\in D$ acts on this algebra 
via 
\begin{align}
\delta_{d'} \longmapsto \theta_d(d')\delta_{d d'd^{-1}}, 
\end{align}
where $\theta_d = \omega(d,d (-) d^{-1})\cdot \omega(-,d)^{-1}$ is the canonical 1-chain satisfying 
$\delta \theta_d= \omega(-,-)-\omega(d(-)d^{-1},d(-)d^{-1})$. This is nothing else than the algebraic description of the singular 1-chain
constructed by crossing with an interval labelled by $d$ similar to the construction before Theorem~\ref{Theorem: Gauging} to
the algebraic setting.  

Let $G$ be a finite group and $\alpha \colon {*\DS G} \longrightarrow \Aut(D)\DS D $ a non-abelian 2-cocycle 
such that $[\alpha(g)^*\omega]=[\omega]$. Every choice of $\Phi_g\in C^{1}(D;U(1))$ such that $\delta \Phi_g = \omega -
\alpha(g^{-1})^*\omega$ induces an automorphism 
\begin{align}
\begin{split}
\varphi_g \colon \C[D]^\omega &\longrightarrow \C[D]^{\omega} \\
\delta_d & \longmapsto \Phi_g(d) \delta_{\alpha(g^{-1})(d)}  
\end{split}
\end{align}  
of $\C[D]^{\omega}$ implementing the symmetry. Every automorphism of $\C[D]^{\omega}$ induces an invertible bimodule with $\C[D]^{\omega}$ as 
underlying vector space and action given by
\begin{align}
\C[D]^{\omega} \otimes \C[D]^{\omega} \otimes \C[D]^{\omega} \xrightarrow{\varphi_{g}\otimes \id \otimes \id}  
\C[D]^{\omega} \otimes \C[D]^{\omega} \otimes \C[D]^{\omega} \longrightarrow \C[D]^{\omega} , 
\end{align} 
where the last arrow is the multiplication in $\C[D]^{\omega}$. We denote this bimodule by $\C[D]^\omega_{\varphi_g}$. 
This invertible defect implements the symmetry 
corresponding to an element $g\in G$ and the non-abelian 2-cocycle $\alpha$. The relative tensor product 
$\C[D]^\omega_{\varphi_{g_2}} \otimes_{\C[D]^\omega} \C[D]^\omega_{\varphi_{g_1}}$ can be identified with the bimodule
corresponding to the automorphism $\alpha(g_2g_1)^{-1}$ of $D$ and the 1-chain $\Psi= \alpha(g_1^{-1})^*\Psi_{g_2} + \Psi_{g_1} $. 
Note that in general $\Psi$ differs from $\Psi_{g_2g_1}$. To implement the composition law described by the non-abelian 
2-cocycle $\alpha$, the terms should agree up to a gauge transformation by the element $\alpha(g_2,g_1)\in D$.     
Concretely, this implies that we need to choose the $\Psi_{g}$ such that 
\begin{align}
\alpha(g_1^{-1})^*\Psi_{g_2} + \Psi_{g_1}= \Psi_{g_2,g_1} + \theta_{\alpha(g_2,g_1)}
\end{align}
holds. As explained above this equation only admits a solution if the first obstruction in the spectral sequence vanishes. 
In this situation we can implement the action of $G$ via a homotopy coherent action in the defect bicategory $\mathcal{B}$.

To gauge a symmetry realized by a homotopy coherent $G$ action in $\mathcal{B}$ one needs to construct so called orbifold data from the defects~\cite{DefectTFT, GOrb}. This is in general not possible and the obstruction is an element in $H^3(G;U(1))$. For 
Dijkgraaf-Witten theories we expect this cocycle to agree with the obstruction coming from the spectral sequence.    
In case all obstruction vanish we can gauge the theory by introducing defect networks 
implementing the $G$-bundle as for example in~\cite[Section 2.2]{2DYM} and computing
the partition function for this defect network. The result should agree with the one 
derived from the push construction. This is a discrete analogue of the constructions in 
\cite{2DYM}.

\section{Anomalous Dijkgraaf-Witten theories as boundary states}\label{Sec: Boundary DW}
Let ${\Za_{\text{DW}}}_\omega \colon \Cob \longrightarrow \fvs$ be a topological gauge theory with gauge group 
$D$, topological action $\omega\in Z^{n-1}(D;U(1))$ and a kinematic symmetry described 
by an exact sequence of finite groups
 \begin{align}
1\longrightarrow D \overset{\iota}{\longrightarrow} \widehat{G} \overset{\lambda }{\longrightarrow} G \longrightarrow 1\ .
\end{align} 
Assume that we are in the situation of Proposition~\ref{Prop: Obstructions}(b), i.e.\ 
that there exists $\omega'\in C^{n-1}(\widehat{G};U(1))$ and $\theta \in Z^n(G;U(1))$
such that $\iota^*\omega'=\omega$ and $\delta \omega'= \lambda^* \theta$.
In this section we explicitly construct the gauged theory as a relative field theory  
$Z_{\omega'}\colon 1 \Longrightarrow \tr E_\theta$. 
In more physical terms we realise the anomalous field theory as the boundary state 
of a symmetry protected topological phase described by $\theta$.
We start by constructing 
the theory at the level of partition functions in Section~\ref{Sec: partition function DW}. This construction is similar to the one 
in~\cite[Section~3.3]{Witten:2016cio}. However, we use the language of functorial field theories and homotopy fibres to describe the construction. The way boundary field theories 
appear here is to some extent reversed to the way they appear in 
\cite{Witten:2016cio}, where anomalous boundary field theories are
constructed starting from a bulk
Dijkgraaf-Witten theory. 
Instead we start from a field theory with anomaly and show how 
to realize this theory as a boundary field theory.

Afterwards, we construct the theory at the level of state spaces in 
Section~\ref{Sec: State space} using a general pushforward construction for relative field theories.

\subsection{Partition function}\label{Sec: partition function DW}

In this section we construct a natural transformation $Z_{\omega'} \colon 1 \Longrightarrow \text{tr}\, E_\theta$ (see Definition~\ref{Def: Anomalous partition function}) of 
non-extended field theories.

Following the general theory outlined in Section~\ref{Sec: General Theory} we have to specify an element $Z_{\omega'}(\Sigma, \varphi\colon M \longrightarrow BG)$ of $E_\theta(\Sigma,\varphi)$ for all objects $(\Sigma,\varphi)\in \GCob$. Let $\sigma_\Sigma$ be a representative for the fundamental class of $\Sigma$.
We set
\begin{align}\label{Eq: Partition function relative field theory}
Z_{\omega'}(\Sigma,\varphi)= \Big(\int_{(\widehat{\varphi},h)\in \lambda_*^{-1}[\varphi]} \, \langle {\widehat{\varphi}}^{\,*} \omega', \sigma_\Sigma \rangle \, \langle  h^* \theta, [0,1] \times \sigma_\Sigma \rangle\Big) \ [\sigma_\Sigma] \ \in \ E_\theta(\Sigma,\varphi) \ .
\end{align}
\begin{proposition}
$Z_{\omega'}$ is a partition function with anomaly
$E_\theta:\GCob\longrightarrow \fvs$ (see Definition~\ref{Def: Anomalous partition function}).
\end{proposition}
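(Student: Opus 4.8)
The plan is to unpack Definition~\ref{Def: Anomalous partition function}: I must show that the elements $Z_{\omega'}(\Sigma,\varphi)\in E_\theta(\Sigma,\varphi)$ assemble into a natural symmetric monoidal transformation $\mbf1\Longrightarrow \tr E_\theta$, where $E_\theta\colon\GCob\longrightarrow\fvs$ denotes the non-extended theory of Remark~\ref{Rem: Non extended theory} (which is invertible, being a line bundle over $\BunG$). Concretely this amounts to three things: (i) the defining formula makes sense and is independent of auxiliary choices; (ii) the square~\eqref{Eq: Diagram APF} commutes for every gauge transformation; and (iii) compatibility with disjoint unions.

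First I would establish well-definedness. Since $\widehat G$ and $G$ are finite, the homotopy fibre $\lambda_*^{-1}[\varphi]$ is an essentially finite groupoid, so the integral of Appendix~\ref{Sec: Homotop Groupoid} exists provided the integrand $L_{\omega'}(\widehat\varphi,h)=\langle\widehat\varphi^*\omega',\sigma_\Sigma\rangle\,\langle h^*\theta,[0,1]\times\sigma_\Sigma\rangle$ is constant on isomorphism classes. For a morphism $\widehat h\colon(\widehat\varphi,h)\to(\widehat\varphi',h')$, so that $h=h'\circ\lambda_*\widehat h$, the ratio of integrands is governed by $\langle\widehat h^*\delta\omega',[0,1]\times\sigma_\Sigma\rangle$; using $\delta\omega'=\lambda^*\theta$ and $\lambda\circ\widehat h=\lambda_*\widehat h$, a prism (Stokes) computation on $[0,1]\times\Sigma$ --- in which only the end-term survives because $\partial\sigma_\Sigma=0$ --- shows this cancels exactly against the change in the $\theta$-factor coming from the concatenation $h=h'\circ\lambda_*\widehat h$. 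Hence $L_{\omega'}$ is gauge invariant even though neither factor is individually. I would then check independence of the chosen fundamental cycle $\sigma_\Sigma$: replacing $\sigma_\Sigma$ by $\sigma_\Sigma+\partial\tau$ rescales the scalar by $\langle\varphi^*\theta,\tau\rangle$ (again via $\delta\omega'=\lambda^*\theta$ and closedness of $\theta$), which is precisely compensated by the relation $[\sigma_\Sigma+\partial\tau]=\langle\varphi^*\theta,\tau\rangle^{-1}[\sigma_\Sigma]$ in the line $E_\theta(\Sigma,\varphi)$, so that the element $Z_{\omega'}(\Sigma,\varphi)$ is canonical.

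Next I would verify naturality. A gauge transformation $g\colon\varphi\to\varphi'$ induces an isomorphism of groupoids $g_*\colon\lambda_*^{-1}[\varphi]\to\lambda_*^{-1}[\varphi']$, $(\widehat\varphi,h)\mapsto(\widehat\varphi,g\circ h)$, and groupoid integration is invariant under such equivalences. Additivity of the $\theta$-holonomy under concatenation of homotopies gives $L_{\omega'}(g_*(\widehat\varphi,h))=\langle g^*\theta,[0,1]\times\sigma_\Sigma\rangle\,L_{\omega'}(\widehat\varphi,h)$, whence the scalar for $\varphi'$ equals $\langle g^*\theta,[0,1]\times\sigma_\Sigma\rangle$ times that for $\varphi$. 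Since $E_\theta(g)$ acts on $E_\theta(\Sigma,\varphi)$ precisely by multiplication with $\langle g^*\theta,[0,1]\times\sigma_\Sigma\rangle$ (Remark~\ref{Rem: Non extended theory}), this is exactly the commutativity of~\eqref{Eq: Diagram APF}. The symmetric monoidal axioms then follow from the factorisation $\lambda_*^{-1}[\varphi_1\sqcup\varphi_2]\simeq\lambda_*^{-1}[\varphi_1]\times\lambda_*^{-1}[\varphi_2]$, the multiplicativity of the groupoid integration and of the cochain pairings under disjoint union, and the explicit form of the coherence equivalence $\chi$ sending $\sigma_{\Sigma_1}\boxtimes\sigma_{\Sigma_2}\mapsto\sigma_{\Sigma_1}+\sigma_{\Sigma_2}$; the unit condition is immediate since $\lambda_*^{-1}[\varnothing]$ is trivial.

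The main obstacle is the bookkeeping in steps (i)--(ii): because $\omega'$ is only a cochain, with $\delta\omega'=\lambda^*\theta\neq0$, none of the three ingredients --- the $\omega'$-pairing, the $\theta$-holonomy, and the choice of $\sigma_\Sigma$ --- is separately invariant, and invariance emerges only from their interplay. Getting the prism-operator signs and the orientation conventions for the concatenation of homotopies to line up correctly, consistently across the well-definedness, fundamental-cycle-independence, and naturality computations, is where the real care is needed; everything else is formal once this single Stokes-type identity is in hand.
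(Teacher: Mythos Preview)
Your proposal is correct and follows essentially the same route as the paper: the paper carries out exactly your computations (i) (gauge invariance of the integrand via the chain homotopy $H$ and $\delta\omega'=\lambda^*\theta$, and independence of $\sigma_\Sigma$), and then leaves naturality to the reader, while also remarking that the whole statement follows as a special case of the general relative pushforward construction in Section~\ref{Sec: State space}. Your explicit treatment of naturality and the monoidal axioms is more detailed than what the paper writes out but matches it in spirit.
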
     
\begin{proof}
We have to show that $Z_{\omega'}$ is a well-defined natural transformation. This is an immediate consequence of the construction in Section~\ref{Sec: State space}. To get a feeling on how to work concretely with the constructions involved, we present here a part of the proof.
We start by showing that $\langle  {\widehat{\varphi}}\,^* \omega', \sigma_\Sigma \rangle \, \langle  h^* \theta, [0,1] \times \sigma_\Sigma \rangle$ is well-defined on isomorphism classes of $\lambda_*^{-1}[\varphi]$.
Let $\widehat{h}\colon (\widehat{\varphi}_1,h_1) \longrightarrow (\widehat{\varphi}_2,h_2)$ be a morphism in $\lambda_*^{-1}[\varphi]$,
i.e.~a homotopy $\widehat{h}\colon \widehat{\varphi}_1 \longrightarrow \widehat{\varphi}_2$ such that the diagram
\begin{equation}
\begin{tikzcd}
\lambda_* \widehat{\varphi}_1 \ar[rd,"h_1", swap] \ar[rr,"\lambda_* \widehat{h}"] & & \lambda_*\widehat{\varphi}_2 \ar[ld,"h_2"] \\
 & \varphi & 
\end{tikzcd}
\end{equation}
commutes.  The homotopy induces a chain homotopy $H\colon \widehat{\varphi}_{1*} \longrightarrow \widehat{\varphi}_{2*}$ between the induced maps on singular chains given by $H(c)= \widehat{h}_* ( [0,1] \times c)$ for all chains $c \in C_\bullet (\Sigma)$. Hence, writing $U(1)=\R/\Z$ additively for the calculation, we find
\begin{align}
\begin{split}
\big\langle {\widehat{\varphi}_2}^{\,*} \omega', \sigma_\Sigma \big\rangle - \big\langle  {\widehat{\varphi}_1}^{\,*} \omega', \sigma_\Sigma \big\rangle &= \big\langle \omega', \partial H(\sigma_\Sigma) - H (\partial \sigma_\Sigma) \big\rangle\\[4pt]
& = \big\langle \omega' , \partial H (\sigma_\Sigma) \rangle\\[4pt]
& = \big\langle \, \widehat{h}^* \lambda^* \theta, [0,1]  \times \sigma_\Sigma \big\rangle \\[4pt]
&= \big\langle  h_1^* \theta - h_2^* \theta, [0,1] \times \sigma_\Sigma \big\rangle \ .
\end{split}
\end{align} 
This shows that the integration in \eqref{Eq: Partition function relative field theory} is well-defined. 

Let $\sigma_\Sigma'$ be a different representative for the fundamental class of $\Sigma$ and $\chi$ an $n$-chain satisfying $\partial \chi = \sigma_\Sigma' - \sigma_\Sigma $.
To show that \eqref{Eq: Partition function relative field theory} is an element of $E_\theta(\Sigma, \varphi)$ we calculate 
\begin{align}
\begin{split}
\langle {\widehat{\varphi}}^{\,*} \omega',\sigma_\Sigma' - \sigma_\Sigma \rangle \, \langle {h}^* \theta, [0,1] \times (\sigma_\Sigma' - \sigma_\Sigma) \rangle &= \langle  {\widehat{\varphi}}^{\,*} \omega',\partial \chi \rangle \, \langle {h}^* \theta, [0,1] \times \partial \chi \rangle \\[4pt]
& = \langle {\widehat{\varphi}}^{\,*} \lambda^* \theta, \chi  \rangle \, \langle {h}^* \theta, -\lbrace 0 \rbrace \times  \chi + \lbrace 1 \rbrace \times \chi   \rangle \\[4pt]
&= \langle  \varphi^* \theta, \chi    \rangle \ .
\end{split}
\end{align}
This is exactly the required transformation behaviour. We leave the verification of naturality to the reader. 
\end{proof}

\begin{remark}
Before extending the field theory we give the precise form of the composite partition function \eqref{Def: Combined partition function}. We fix an $n$-dimensional manifold $M$ with boundary $\partial M = -\Sigma$ and a principal $G$-bundle $\psi \colon M \longrightarrow BG$. Evaluating $E_\theta$ on $(M,\psi)$ gives a linear map $E_\theta(M,\psi)\colon E_\theta(\Sigma,\psi|_\Sigma)\longrightarrow \C$. The composite partition function is then
\begin{align}
\begin{split} 
Z_{\omega'\,{\rm bb}}(M,\psi, \Sigma)&= E_\theta(M,
                                       \psi)[Z_{\omega'}(\Sigma,\psi|_\Sigma)]
  \\[4pt] &= \Big(\int_{(\widehat{\varphi},h)\in
            \lambda_*^{-1}[\psi|_\Sigma]} \, \langle
            {\widehat{\varphi}}^{\,*} \omega', \partial \sigma_M
            \rangle^{-1} \, \langle  h^* \theta, [0,1] \times \partial
            \sigma_M \rangle^{-1}\Big) \ \langle \psi^* \theta ,
            \sigma_M \rangle \ ,
\end{split} 
\end{align}  
which is gauge-invariant according to the 
general theory outlined in Section~\ref{Sec: Anomaly inflow}. 
\end{remark}

\subsection{State space from a pushforward construction for relative field theories}\label{Sec: State space}
In this section we extend $Z_{\omega'}$ to an anomalous field theory 
$Z_{\omega'}\colon 1 \Longrightarrow \tr E_\theta$ using a general pushforward construction for 
relative field theories. This construction seems to be new to the best of our 
knowledge.\footnote{We thank the referee of \cite{tHooft} for suggesting this
way of interpreting our results.} Let $\lambda \colon \widehat{G}\longrightarrow G$
be a group homomorphism between finite groups and $\lambda \colon \EWGCob \longrightarrow 
\EGCob$ the induced functor between bicategories. Furthermore, let $\Za_1,\Za_2\colon 
\EGCob \longrightarrow \Tvs$ be extended functorial field theories and $Z \colon \tr \lambda^* \Za_1 \Longrightarrow \tr \lambda^*\Za_2$ a relative field theory. 
The relative pushforward construction produces a relative field theory $\lambda_*Z
\colon \tr \Za_1 \longrightarrow \tr \Za_2$. 

Let $(S,\xi\colon S \longrightarrow BG)$ be an object of $\EGCob$. 
Recall that the homotopy fibre $\lambda^{-1}[\xi]$ has pairs $(\widehat{\xi} \colon S \longrightarrow B\widehat{G}, h\colon
\lambda \widehat{\xi}  \longrightarrow \xi)$ as objects. 
From $Z$ we construct a diagram 
\begin{align}
\begin{split}
\widehat{Z}(S,\xi)\colon \lambda^{-1}[\xi] & \longrightarrow [\Za_1(S,\xi), \Za_2(S, \xi)] \\ 
(\widehat{\xi},h) & \longmapsto\left( \Za_1(S,\xi)\xrightarrow{\Za_1(h^{-1})}\lambda^*\Za_1(S,\widehat{\xi})\xrightarrow{Z(S,\widehat{\xi})} \lambda^*\Za_2(S,\widehat{\xi})\xrightarrow{\Za_2(h)} \Za_2(S, \xi) \right) \ \ 
\end{split}
\end{align}
in the functor category $ [\Za_1(S,\xi), \Za_2(S, \xi)]$. Concretely, the action on a morphism $\widehat{h}\colon (\widehat{\xi}, h)\longrightarrow (\widehat{\xi}', h')$ is 
\begin{equation}\label{Eq: Concrete description on morphisms}
\begin{tikzcd}[row sep= 1cm, column sep=1.6cm]
\Za_1(S,\xi)  \ar[rd, "{\Za_1({h'}^{-1})}",swap] \ar[r,"\Za_1(h^{-1})"] & \lambda^* \Za_1(S,\widehat{\xi}) \ar[ld, Rightarrow,shorten >= 5em]  
\ar[r, "{Z(S, \widehat{\xi})}"] \ar[d, "{\lambda^*\Za_1(\widehat{h})}"] & \lambda^* \Za_2(S,\widehat{\xi})
\ar[r, "\Za_2(h)"]\ar[d, "{\lambda^*\Za_2(\widehat{h})}"] \ar[ld, "{ Z(\widehat{h})}", Rightarrow, swap] \ar[rd, Rightarrow,shorten >= 5em]  
& \Za_2(S,\xi) 
\\
\  & \lambda^*\Za_1(S, \widehat{\xi}') \ar[r, "{Z(S,\widehat{\xi}')}", swap] &  \lambda^* \Za_2(S, \widehat{\xi}') \ar[ru, "\Za_2(h')", swap]   & \
\end{tikzcd}
\end{equation}
where the unlabelled natural transformations are part of the coherence isomorphisms for $\Za_1$ and $\Za_2$.  

We set 
\begin{align}
\lambda_*Z(S,\xi) \coloneqq \lim_{\lambda^{-1}[\xi]} \widehat{Z}(S,\xi) =  \int\displaylimits_{\lambda^{-1}[\xi]} \widehat{Z}(S,\xi) \in  [\Za_1(S,\xi), \Za_2(S, \xi)] 
\end{align}
writing the limit as an end in the second equality.  

Let $(\Sigma, \varphi\colon \Sigma \longrightarrow BG)\colon (S_-,\xi_-)\longrightarrow 
(S_+,\xi_+)$ be a 1-morphism in $\EGCob$. 
$(\Sigma, \varphi)$ induces a span of groupoids 
\begin{equation}
\begin{tikzcd}
 & \lambda^{-1}[\varphi] \ar[rd, "{r_+}"] \ar[ld, "{r_-}", swap] & \\ 
 \lambda^{-1}[\xi_-] & & \lambda^{-1}[\xi_+]
\end{tikzcd} \ \ 
\end{equation}
We can use the relative field theory
$Z$ to construct a natural transformation 
\begin{equation}\label{Eq: natural transformation for push}
\begin{tikzcd}
 & \lambda^{-1}[\varphi] \ar[rd] \ar[ld] & \\ 
 \lambda^{-1}[\xi_-] \ar[rd,"{\Za_2(\Sigma, \varphi )\circ \widehat{Z}(S_-,\xi_-)}",swap] \ar[rr,Rightarrow,"{\widehat{Z}(\Sigma, \varphi)}",shorten <= 2em, shorten >= 2em] & & \lambda^{-1}[\xi_+] \ar[ld, "{ \widehat{Z}(S_+,\xi_+) \circ \Za_1(\Sigma,\varphi )}"] \\ 
  & {[\Za_1(S_-,\xi_-), \Za_2(S_+, \xi_+)]} &
\end{tikzcd} \ \ 
\end{equation}   
with component 
\begin{equation}
\begin{scriptsize}
\begin{tikzcd}
\Za_1(S_-,\xi_-) \ar[d, swap, "{\Za_1(\Sigma, \varphi)}"] \ar[r,"\Za_1(h|^{-1}_{S_-})"] & \lambda^* \Za_1(S_-,\widehat{\varphi}|_{S_-}) \ar[ld, Rightarrow]  \ar[r, "{Z(S, \widehat{\varphi}|_{S_-})}"] \ar[d, "{\lambda^*\Za_1(\Sigma, \widehat{\varphi})}"] & \lambda^* \Za_2(S_-,\widehat{\varphi}|_{S_-})
\ar[r, "\Za_2(h|_{S_-})"] \ar[d, Rightarrow, shorten <= 0em, shorten >= 0.3em, swap, "{Z(\Sigma, \widehat{\varphi})}"] \ar[rd, "{\lambda^* \Za_2(\Sigma, \widehat{\varphi})}", swap] & \Za_2(S_-,\xi) \ar[r, "{\Za_2(\Sigma, \varphi)}"] \ar[d, Rightarrow] & \Za_2(S_+,\xi_+)  
\\
\Za_1(S_+,\xi_+) \ar[r,swap, "{\Za_1(h|^{-1}_{S_+})}"] & \lambda^*\Za_1(S_+, \widehat{\varphi}|_{S_+}) \ar[rr, "{Z(S_+,\widehat{\varphi}|_{S_+})}", swap] & \  & \lambda^* \Za_2(S_+, \widehat{\varphi }|_{S_+}) \ar[ru, "\Za_2(h|_{S_+})", swap] &  
\end{tikzcd}
\end{scriptsize}
\end{equation}
at an object $(\widehat{\varphi},h)\in \lambda^{-1}[\varphi]$, where the outer natural
transformations are construct from Lemma~\ref{Lem: non-constant morphism} adjusted to the situation at hand.
There is a general push and pull construction which can be applied in this situation~\cite{SWParallel} inducing a morphism 
\begin{align}
\lambda_*Z(\Sigma,\varphi) \colon \int\displaylimits_{\lambda^{-1}[\xi_-]} \Za_2(\Sigma, \varphi) \circ \widehat{Z}(S_-,\xi_-) \longrightarrow \int\displaylimits_{\lambda^{-1}[\xi_+]} \widehat{Z}(S_+,\xi_+)  \circ \Za_1(\Sigma, \varphi)    \ \ .
\end{align} 
Concretely, the morphism $\lambda_*Z(\Sigma,\varphi)$ is given by the composition of
the three natural morphisms
\begin{itemize}
\item 
The natural map \begin{equation}
 r_-^* \left(\int\displaylimits_{\lambda^{-1}[\xi_-]} \Za_2(\Sigma, \varphi)\circ \widehat{Z}(S_-,\xi_-) \right) \longrightarrow \int\displaylimits_{\lambda^{-1}[\varphi]} \Za_2(\Sigma, \varphi)\circ \widehat{Z}(S_-,\xi_-)\circ r_- 
\end{equation}
which is induced by noticing that $r_-^* \left(\int\displaylimits_{\lambda^{-1}[\xi_-]} \Za_2(\Sigma, \varphi)\circ \widehat{Z}(S_-,\xi_-) \right)$ is a cone over $\Za_2(\Sigma, \varphi)\circ \widehat{Z}(S_-,\xi_-)\circ r_- $.
 
\item The map 
\begin{equation}
\int\displaylimits_{\lambda^{-1}[\varphi]} \Za_2(\Sigma, \varphi)\circ \widehat{Z}(S_-,\xi_-)\circ r_-  \longrightarrow \int\displaylimits_{\lambda^{-1}[\varphi]} \widehat{Z}(S_+,\xi_+) \circ  \Za_1(\Sigma, \varphi) \circ r_+ 
\end{equation} 
induced by the natural transformation $\widehat{Z}(\Sigma, \varphi)$.

\item
The last map 
\begin{equation}
 \int\displaylimits_{\lambda^{-1}[\varphi]} \widehat{Z}(S_+,\xi_+) \circ  \Za_1(\Sigma, \varphi) \circ r_+ \longrightarrow \int\displaylimits_{\lambda^{-1}[\xi_+]} \widehat{Z}(S_+,\xi_+)  \circ \Za_1(\Sigma, \varphi)
\end{equation} 
is a bit more involved and only works because all groupoids involved are 
essentially finite. 
Let $(\widehat{\xi}_+, h\colon \lambda \widehat{\xi}_+ \longrightarrow \xi_+)$ be an element
of $\lambda^{-1}[\xi_+]$. The homotopy fibre $r_+^{-1}[(\widehat{\xi}_+,h)]$ has objects consisting of triples of a map
\begin{align}
\widehat{\varphi} \colon \Sigma \longrightarrow B\widehat{G} \ ,
\end{align}
a gauge transformation (homotopy)
\begin{align}
g \colon \lambda \widehat{\varphi} \longrightarrow \varphi \ ,
\end{align}
and a gauge transformation
\begin{align}
\widehat{h}\colon \widehat{\varphi}\,|_{S_+} \longrightarrow \widehat{\xi}_+
\end{align}
such that the diagram
\begin{equation}
\begin{tikzcd}
\lambda_*\widehat{\varphi}\,|_{S_+} \ar[rr, "\lambda_*\widehat{h}"] \ar[rd,"g|_{S_+}",swap] & & \lambda_*\widehat{\xi}_+ \ar[ld, "h"] \\
 & \xi_+ &
\end{tikzcd}
\end{equation}
commutes. An object $(\widehat{\varphi},g,\widehat{h})$ induce a morphism 
\begin{align}
\begin{split}
\nu_{\widehat{\varphi},g,\widehat{h}} \colon & \int\displaylimits_{\lambda^{-1}[\varphi]} \widehat{Z}(S_+,\xi_+) \circ  \Za_1(\Sigma, \varphi) \circ r_+ \longrightarrow 
\widehat{Z}(S_+,\xi_+) \circ  \Za_1(\Sigma, \varphi)[\widehat{\varphi}|_{S_+},g|_{S_+}]\\ 
&\xrightarrow{\widehat{Z}(S_+,\xi_+) \circ  \Za_1(\Sigma, \varphi)[\widehat{h}]} 
\widehat{Z}(S_+,\xi_+)  \circ \Za_1(\Sigma, \varphi)[\widehat{\xi}|_{+},h]
\end{split}
\end{align}
where the first morphism is part of the universal cone for the limit. A simple
 calculation~\cite{SWParallel} shows that the resulting morphism is well defined on isomorphism 
classes of  $r_+^{-1}[(\widehat{\xi}_+,h)]$ allowing us to define the morphism 
\begin{align}
\nu_{\widehat{\xi}_+,h}= \int\displaylimits_{(\widehat{\varphi},g,\widehat{h})\in r_+^{-1}[(\widehat{\xi}_+,h)]} \nu_{\widehat{\varphi},g,\widehat{h}} \coloneqq 
\sum_{(\widehat{\varphi},g,\widehat{h})\in \pi_0\left(r_+^{-1}[(\widehat{\xi}_+,h)]\right)} \frac{\nu_{\widehat{\varphi},g,\widehat{h}}}{|\Aut(\widehat{\varphi},g,\widehat{h})|}
\end{align}
from $\int\displaylimits_{\lambda^{-1}[\varphi]} \widehat{Z}(S_+,\xi_+) \circ  \Za_1(\Sigma, \varphi) \circ r_+$ to $ \widehat{Z}(S_+,\xi_+)  \circ \Za_1(\Sigma, \varphi)[\widehat{\xi}|_{+},h]$. It is straightforward to verify that these morphism form a cone for the diagram 
$\widehat{Z}(S_+,\xi_+)  \circ \Za_1(\Sigma, \varphi)$~\cite{SWParallel} inducing the desired morphism.

\end{itemize} 
The functors $\Za_1(\Sigma, \varphi)$ and $\Za_2(\Sigma, \varphi)$ are continuous 
and hence $\lambda_*Z(\Sigma,\varphi)$ can be seen as a morphism 
$ \Za_2(\Sigma, \varphi) \circ \lambda_*Z(S_-,\xi_-) \longrightarrow 
\lambda_*(S_+,\xi_+)  \circ \Za_1(\Sigma, \varphi) $.

We can now state the main theorem of this section 
\begin{theorem}\label{Thm: push relative}
Let $\lambda \colon \widehat{G}\longrightarrow G$
be a group homomorphism between finite groups and $\lambda \colon \EWGCob \longrightarrow 
\EGCob$ the induced functor between bicategories. Furthermore, let $\Za_1,\Za_2\colon 
\EGCob \longrightarrow \Tvs$ be extended functorial field theories. 
There is a relative pushforward construction $\lambda_* \colon [\tr \lambda^*\Za_1, \tr
 \lambda^* \Za_2] \longrightarrow [\tr \Za_1, \tr
  \Za_2]$ generalizing the pushforward construction of \cite{OFK}. 
\end{theorem}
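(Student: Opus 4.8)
The construction of $\lambda_* Z$ on objects and $1$-morphisms has been carried out above, so the plan is to verify that this data assembles into a natural symmetric monoidal $2$-transformation and that the assignment $Z \mapsto \lambda_* Z$ is functorial. The starting observation is that, since $\widehat{G}$ and $G$ are finite and all manifolds are compact, every homotopy fibre $\lambda^{-1}[\xi]$, and every iterated homotopy fibre appearing in the span construction, is an essentially finite groupoid. Limits and colimits over such groupoids taken in a $2$-vector space coincide, which is precisely what makes the three-step definition of $\lambda_* Z(\Sigma,\varphi)$ well-posed: the first and third maps are the comparison morphisms between the pullback of an end and the end over the homotopy fibre, and the genuinely delicate third map is the weighted sum $\nu_{\widehat{\xi}_+,h}$ over $\pi_0\big(r_+^{-1}[(\widehat{\xi}_+,h)]\big)$. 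First I would therefore check, following \cite{SWParallel}, that the morphisms $\nu_{\widehat{\varphi},g,\widehat{h}}$ descend to isomorphism classes and that $\nu_{\widehat{\xi}_+,h}$ forms a cone for the diagram $\widehat{Z}(S_+,\xi_+)\circ \Za_1(\Sigma,\varphi)$, so that the universal property of the end produces the desired morphism; this is the technical heart of the argument.

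Next I would verify that $\lambda_* Z(\Sigma,\varphi)$ is natural in $(\Sigma,\varphi)$, i.e.\ compatible with the invertible $2$-morphisms $(M,\psi)$ of $\tr \EGCob$. Since $Z$ is a relative field theory, the natural transformation $\widehat{Z}(\Sigma,\varphi)$ of \eqref{Eq: natural transformation for push}, whose outer components are built from Lemma~\ref{Lem: non-constant morphism}, is itself natural in $(\Sigma,\varphi)$; pulling this back along the span of homotopy fibres and invoking the universal properties of the ends defining $\lambda_* Z$ reduces the required identity to the analogous compatibility of $Z$ with $(M,\psi)$ together with the coherence isomorphisms of $\Za_1$ and $\Za_2$. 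The coherence conditions of Proposition~\ref{Lemma Compatibility conditions} are then handled one by one. The twisted functoriality \eqref{EQ1: Lemma Compatibility conditions} is the main computation: for composable $1$-morphisms I would use Fubini's theorem for ends (Theorem~\ref{Thm: Fubini}) to split the end over $\lambda^{-1}[\varphi_2 \circ \varphi_1]$ into nested ends over $\lambda^{-1}[\varphi_1]$ and $\lambda^{-1}[\varphi_2]$, whereupon the corresponding property for $Z$ and the associators $\Phi$ of $\Za_1,\Za_2$ close the argument; the unit condition \eqref{EQ2: Lemma Compatibility conditions} follows from the enriched coYoneda lemma of Example~\ref{Exa: limits as ends} exactly as in the non-relative case.

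For the symmetric monoidal structure I would construct the comparison equivalences $\chi$, $\iota$ and the modifications ${\mit\Pi}_{\lambda_* Z}$, $M_{\lambda_* Z}$ from those of $Z$, $\Za_1$ and $\Za_2$, using that disjoint union of manifolds induces an equivalence $\lambda^{-1}[\xi_1 \sqcup \xi_2] \simeq \lambda^{-1}[\xi_1]\times \lambda^{-1}[\xi_2]$ and that ends are monoidal; the diagrams \eqref{EQ3': Lemma Compatibility conditions}--\eqref{EQ7: Lemma Compatibility conditions} then commute because they already commute before passing to the (essentially finite) homotopy limits. Finally, to promote $\lambda_*$ to a functor $[\tr \lambda^* \Za_1, \tr \lambda^* \Za_2] \longrightarrow [\tr \Za_1, \tr \Za_2]$ I would push forward a modification $\Theta\colon Z \Rrightarrow Z'$ componentwise: its components induce maps between the defining ends by functoriality of the end (Remark~\ref{Rem: Coends are functorial}), and one checks directly that these respect vertical composition and identities. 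Specializing to $\Za_1 = \Za_2 = \mbf1$ recovers the pushforward of \cite{OFK}, establishing the claimed generalization. The main obstacle throughout is the bookkeeping around the weighted-sum map $\nu_{\widehat{\xi}_+,h}$ and its interaction with Fubini in \eqref{EQ1: Lemma Compatibility conditions}, where the coincidence of limits and colimits over essentially finite groupoids must be used carefully to ensure the push--pull comparison is an isomorphism.
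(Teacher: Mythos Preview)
Your overall outline matches the paper's structure (symmetric monoidal data, gauge invariance, composition, units), but there is a genuine gap in your treatment of the composition axiom \eqref{EQ1: Lemma Compatibility conditions}. You propose to use Fubini's theorem for ends to split the end over $\lambda^{-1}[\varphi_2\circ\varphi_1]$ into nested ends over $\lambda^{-1}[\varphi_1]$ and $\lambda^{-1}[\varphi_2]$. This does not work as stated: Fubini (Theorem~\ref{Thm: Fubini}) applies to \emph{products} of index categories, whereas the relevant decomposition of homotopy fibres is an equivalence
\[
\Xi\colon\ \lambda^{-1}[\varphi_2\circ\varphi_1]\ \xrightarrow{\ \simeq\ }\ \lambda^{-1}[\varphi_1]\times_{\lambda^{-1}[\xi]}\lambda^{-1}[\varphi_2],
\]
a homotopy \emph{pullback} over the intermediate fibre $\lambda^{-1}[\xi]$. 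More importantly, the left-hand side of the compatibility square is the composite of two push--pull operations, one over each span $\lambda^{-1}[\xi_{i-1}]\leftarrow\lambda^{-1}[\varphi_i]\to\lambda^{-1}[\xi_i]$, and the assertion that this composite agrees with the push--pull over the composite span is precisely the \emph{equivariant Beck--Chevalley condition} of \cite[Proposition~2.3]{SWParallel}. This is what lets you commute the push along $r_+$ for the first span past the pull along $r_-$ for the second; it is exactly where the weighted sums $\nu_{\widehat{\xi}_+,h}$ enter, and the coincidence of limits and colimits over essentially finite groupoids is an input to that proposition rather than a substitute for it. Once Beck--Chevalley reduces both sides to linearizations of a single span, the paper finishes by checking that the pullback along $\Xi$ of the natural transformation on the pullback span equals the one on $\lambda^{-1}[\varphi_2\circ\varphi_1]$; this requires an explicit pasting computation using the coherence of $\Za_1$, $\Za_2$ and the relative-field-theory axioms for $Z$.

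Your treatment of the unit axiom via coYoneda is also not how the paper proceeds: there one observes that $[0,1]\times(\,\cdot\,)\colon\lambda^{-1}[\xi]\to\lambda^{-1}[[0,1]\times\xi]$ is an equivalence, so that for the identity $1$-morphism all legs of the span are equivalences and the push--pull collapses to the map induced by the natural transformation alone, after which a short pasting argument using the unit coherences of $\Za_1$, $\Za_2$ and of $Z$ closes the triangle. The remaining parts of your plan (construction of $M_{\lambda_*Z}$ and $\Pi_{\lambda_*Z}$ via $\lambda^{-1}[\xi_1\sqcup\xi_2]\simeq\lambda^{-1}[\xi_1]\times\lambda^{-1}[\xi_2]$, and the gauge-invariance check) agree with the paper's proof.
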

\begin{proof}
The proof of this theorem consists of 4 largely independent parts.

\textbf{1. Construction of the missing natural isomorphisms:}
We start by providing the additional data required for an relative field theory,
see Definition~\ref{Def: relative field theory} and~\ref{Def:2sym transformation}. 
First we construct a natural isomorphism
\begin{equation}
\begin{tikzcd}
\fvs \ar[rr] \ar[rd] & \ar[d, Rightarrow, "M_{\lambda_*Z}"] & \Za_2(\emptyset) \\ 
 & \Za_1(\emptyset) \ar[ru,"\lambda_*Z(\emptyset)",swap]
\end{tikzcd} 
\end{equation}
Note that the homotopy fibre of the empty bundle over the empty set only contains 
one element and one morphism. For this reason we can set $M_{\lambda_*Z}=M_Z$, 
where $M_Z$ is the natural transformation which is part of the relative field 
theory $Z$. 

The final data missing to complete the construction of $\lambda_* Z$ are 
natural isomorphisms
\begin{equation}
\begin{tikzcd}
\Za_1(S_1,\xi_1)\boxtimes \Za_1(S_2,\xi_2) \ar[rr] \ar[dd, "{Z(S_1,\xi_1)\boxtimes Z(S_2,\xi_2)}",swap] & \ \ar[dd, "\ \Pi_{\lambda_* Z}", Rightarrow,shorten <= 0.3em, shorten >= 0.3em] & \Za_1(S_1\sqcup S_2, \xi_1 \sqcup \xi_2) \ar[dd,"{Z(S_1 \sqcup S_2,\xi_1 \sqcup \xi_2)}"] \\
 & & \\
\Za_2(S_1,\xi_1)\boxtimes \Za_2(S_2,\xi_2) \ar[rr] & \ & \Za_2(S_1\sqcup S_2, \xi_1 \sqcup \xi_2)  
\end{tikzcd} 
\end{equation}
To construct this map we use that 
\begin{align}
\int\displaylimits_{\lambda^{-1}[\xi_1]}\widehat{Z}(S_1,\xi_1)\boxtimes \int\displaylimits_{\lambda^{-1}[\xi_2]} \widehat{Z} (S_2,\xi_2) \cong \int\displaylimits_{\lambda^{-1}[\xi_1]\times \lambda^{-1}[\xi_2]}  \widehat{Z} (S_1,\xi_1) 
\boxtimes  \widehat{Z} (S_2,\xi_2) \ \ ,
\end{align}
and
$\lambda^{-1}[\xi_1]\times \lambda^{-1}[\xi_2] \cong \lambda^{-1}[\xi_1\sqcup \xi_2]$.
The continuity of all functors involved allows us to compute the limits in the 
functor category $[\Za_1(S_1,\xi_1)\boxtimes \Za_1(S_2,\xi_2), \Za_2(S_1\sqcup S_2, \xi_1 \sqcup \xi_2)  ]$. The natural isomorphism $\Pi_{Z}$ induces a natural transformation 
between the two diagrams under consideration which induces the map $\Pi_{\lambda_* Z}$
between the limits. The compatibility conditions for $M_Z$ and $\Pi_Z$ ensure that 
$M_{\lambda_* Z}$ and $\Pi_{\lambda_* Z}$ satisfy the appropriate compatibility 
conditions (Definition~\ref{Def:2sym transformation}).  

\textbf{2. Gauge invariance of $\lambda_*Z(\Sigma,\varphi)$:}
We now turn our attention to the definition of $\lambda_* Z$ on morphisms. 
Let $(\Sigma,\varphi_1 \colon \Sigma \longrightarrow BG)$ and 
$(\Sigma,\varphi_2 \colon \Sigma \longrightarrow BG)$ be morphisms in 
$\EGCob$ such that there exists a homotopy relative boundary $h\colon \varphi_1 
\longrightarrow \varphi_2$. The homotopy induces an equivalence 
\begin{align}
\begin{split} 
\lambda^{-1}[\varphi_1]& \longrightarrow \lambda^{-1}[\varphi_2] \\ 
 (\widehat{\varphi}, H \colon \lambda \widehat{\varphi}\longrightarrow \varphi_1) & 
 \longmapsto (\widehat{\varphi}, h \circ H\colon \lambda \widehat{\varphi}\longrightarrow \varphi_2)
\end{split} 
\end{align} 
such that 
\begin{equation}
\begin{tikzcd}
 & \lambda^{-1}[\varphi_1] \ar[rd, "r_{+1}"] \ar[ld,"r_{-1}",swap] \ar[dd] & \\ 
 \lambda^{-1}[\xi_-] & & \lambda^{-1}[\xi_+] \\ 
 & \lambda^{-1}[\varphi_2] \ar[ru, "r_{+2}",swap] \ar[lu,"r_{-2}"]  &
\end{tikzcd}
\end{equation}
commutes. Furthermore, the equivalence is compatible with the natural transformation from 
Equation~\ref{Eq: natural transformation for push} (this follows from the 
invariance of $Z$ under homotopies relative boundary) and hence the induced natural 
transformations $\lambda_*Z(\Sigma,\varphi_1)$ and $\lambda_*Z(\Sigma,\varphi_2)$
agree.

\textbf{3. Compatibility with composition:}
Let $(\Sigma_1,\varphi_1)\colon (S_-,\xi_-)\longrightarrow (S,\xi)$ and 
$(\Sigma_2,\varphi_2)\colon (S,\xi)\longrightarrow (S_+,\xi_+)$ be 
1-morphisms in $\EGCob$.  
The most involved part of the proof is to show that
\begin{equation}
\begin{footnotesize}
\begin{tikzcd}
  \Za_2(\Sigma_2,\varphi_2)\circ \Za_2(\Sigma_1,\varphi_1) \circ \lambda_* Z(S_-,\xi_-) \ar[rd, bend left=12, "{\lambda_* Z(\Sigma_1,\varphi_1)}"] \ar[d, "{\Phi_{\Za_2}}", swap]  & \\ 
 \Za_2(\Sigma_2\circ \Sigma_1, \varphi_2\circ \varphi_1) \circ \lambda_*Z(S_-,\xi_-) 
 \ar[d, "{\lambda_* Z(\Sigma_2\circ \Sigma_1, \varphi_2\circ \varphi_1)}", swap]& \Za_2(\Sigma_2,\varphi_2)\circ \lambda_* Z(S,\xi) \circ \Za_1(\Sigma_1,\varphi_1) 
 \ar[d, "{\lambda_* Z(\Sigma_2,\varphi_2)}"] \\
\lambda_* Z(S_+,\xi_+) \circ \Za_1(\Sigma_2\circ \Sigma_1, \varphi_2\circ \varphi_1)  & \lambda_* Z(S_+,\xi_+) \circ \Za_2(\Sigma_2,\varphi_2)\circ  \Za_1(\Sigma_1,\varphi_1)  
\ar[l, "{\Phi_{\Za_{1}}}"]
\end{tikzcd}
\end{footnotesize}
\end{equation}
commutes. Using continuity we see that the left composition is constructed from the span 
  \begin{equation}\label{Eq: 1 proof relative OFK}
\begin{tikzcd}
 & \lambda^{-1}[\varphi_2 \circ \varphi_1] \ar[dd] \ar[rd] \ar[ld] & \\ 
 \lambda^{-1}[\xi_-] \ar[rd,"{ \Za_2(\Sigma_2,\varphi_2)\circ \Za_2(\Sigma_1,\varphi_1) \circ \widehat{Z}(S_-,\xi_-)}",swap]\ar[r,Rightarrow,"{\Phi \bullet \id_{ \widehat{Z}}}",shorten <= 2em, shorten >= 2em]  & \ \ar[r,Rightarrow,"{\widehat{Z}(\Sigma, \varphi)}",shorten <= 2em, shorten >= 2em] & \lambda^{-1}[\xi_+] \ar[ld, "{ \widehat{Z}(S_+,\xi) \circ \Za_1(\Sigma_2\circ \Sigma_1, \varphi_2\circ \varphi_1) }"] \\ 
  & {[\Za_1(S_-,\xi_-), \Za_2(S_+, \xi_+)]} &
\end{tikzcd} 
\end{equation}
On the other hand the right composition is constructed via push and pull operations
from the diagram 
\begin{equation}
\begin{footnotesize}
\begin{tikzcd}[column sep=1cm]
 & \lambda^{-1}[ \varphi_1]  \ar[rd] \ar[ld] & & \lambda^{-1}[ \varphi_2]  \ar[rd] \ar[ld] & \\ 
 \lambda^{-1}[\xi_-] \ar[rrd,"",swap, bend right=15]\ar[rr,Rightarrow,"{\id_{\Za_2(\Sigma_2)}\bullet \widehat{Z}(\Sigma_1,\varphi_1)}",shorten <= 2em, shorten >= 2em]  &  & \lambda^{-1}[\xi]  \ar[d, ""] \ar[rr,Rightarrow,"{\id \bullet \Phi_{\Za_1} \circ \widehat{Z}(\Sigma_2, \varphi_2) \bullet \id }",shorten <= 2em, shorten >= 2em] & & \lambda^{-1}[\xi_+] \ar[lld, bend left=15] \\ 
  &  & {[\Za_1(S_-,\xi_-), \Za_2(S_+, \xi_+)]} &   &
\end{tikzcd} 
\end{footnotesize}
\end{equation}   
By the equivariant Beck-Chevalley condition~\cite[Proposition 2.3]{SWParallel}
this composition agrees with the linearisation of the span 
\begin{equation}\label{Eq: 2 proof relative OFK}
\begin{footnotesize}
\begin{tikzcd}[column sep=1cm] 
& & \lambda^{-1}[\varphi_1]\times_{\lambda^{-1}[\xi]}\lambda^{-1}[\varphi_2]\ar[rd] \ar[ld] \\ 
 & \lambda^{-1}[ \varphi_1] \ar[rr, Rightarrow, "\zeta", shorten <= 2em, shorten >= 2em] \ar[rd] \ar[ld] & & \lambda^{-1}[ \varphi_2]  \ar[rd] \ar[ld] & \\ 
 \lambda^{-1}[\xi_-] \ar[rrd,"",swap, bend right=15]\ar[rr,Rightarrow,"{\id_{\Za_2(\Sigma_2)}\bullet \widehat{Z}(\Sigma_1,\varphi_1)}",shorten <= 2em, shorten >= 2em]  &  & \lambda^{-1}[\xi]  \ar[d, ""] \ar[rr,Rightarrow,"{\id \bullet \Phi_{\Za_1} \circ \widehat{Z}(\Sigma_2, \varphi_2) \bullet \id }",shorten <= 2em, shorten >= 2em] & & \lambda^{-1}[\xi_+] \ar[lld, bend left=15] \\ 
  &  & {[\Za_1(S_-,\xi_-), \Za_2(S_+, \xi_+)]} &   &
\end{tikzcd} 
\end{footnotesize}
\end{equation}
where $\lambda^{-1}[\varphi_1]\times_{\lambda^{-1}[\xi]}\lambda^{-1}[\varphi_2]$ is the
homotopy pullback and $\zeta$ the canonical natural isomorphism corresponding to it, see Appendix~\ref{Sec: Homotop Groupoid}. 
There is an equivalence of groupoids 
\begin{align}
\begin{split}
\Xi \colon \lambda^{-1}[\varphi_1\circ \varphi_2] & \longrightarrow \lambda^{-1}[\varphi_1]\times_{\lambda^{-1}[\xi]}\lambda^{-1}[\varphi_2] \\ 
(\widehat{\varphi}_{1,2}, h\colon \lambda \widehat{\varphi}_{1,2} \longrightarrow \varphi_1\circ \varphi_2 ) & \longmapsto (\widehat{\varphi}_{1,2}|_{\Sigma_1},h|_{\Sigma_1}, \widehat{\varphi}_{1,2}|_{\Sigma_2},h|_{\Sigma_2} , \id_{\widehat{\varphi}_{1,2}|_{S}})
\end{split} \ \ .
\end{align}  
The statement now follows from checking that the pullback along $\Xi$ of 
the natural transformation in Equation~\ref{Eq: 2 proof relative OFK} agrees with
the natural transformation in Equation~\ref{Eq: 1 proof relative OFK}.
We check this explicitly by evaluation the natural transformation at an 
object $(\widehat{\varphi}_{1,2}, h\colon \lambda \widehat{\varphi}_{1,2} \longrightarrow \varphi_1\circ \varphi_2 ) $: 
\begin{landscape}
\begin{equation*}
\begin{small}
\begin{tikzcd}[column sep= 1.6cm, row sep=1.4cm]
\Za_1(S_-,\xi_-) \ar{dd}[rotate=-90, pos=0.5, swap, description] {\Za_1(\Sigma_2 \circ \Sigma_1, \varphi_2 \circ \varphi_1)}  \ar[r,"{\Za_1(h|^{-1}_{S_-})}"] \ar[rd,"{\Za_1(\Sigma_1,\varphi_{1})}",swap]& \lambda^* \Za_1(S_-,\widehat{\varphi}_{1,2}|_{S_-}) \ar[rd] \ar[d, Rightarrow] \ar[r, "{Z(S_-,\widehat{\varphi}_{1,2}|_{S_-})}"] & \lambda^* \Za_2(S_-,\widehat{\varphi}_{1,2}|_{S_-})  \ar[r,"{\Za_2(h|_{S_-})}"] \ar[rd] \ar[d,Rightarrow, "{Z(\Sigma_1, \widehat{\varphi}_{1,2}|_{\Sigma_1})}"] &  \Za_2(S_-,\xi_-)  \ar[d,Rightarrow] \ar[r,"{\Za_2(\Sigma_1,\varphi_{1})}"] & \Za_2(S,\xi)  \ar[r,"{\Za_2(\Sigma_2,\varphi_{2})}"] \ar[d, Rightarrow] & \Za_2(S_+,\xi_+)  \\ 
& \Za_1(S,\xi)  \ar[r,"{\Za_1(h|^{-1}_{S})}",swap] \ar[rd,"{\Za_1(\Sigma_2,\varphi_{2})}",swap] & \lambda^* \Za_1(S,\widehat{\varphi}_{1,2}|_{S}) \ar[d, Rightarrow] \ar[rd] \ar[r, "{Z(S,\widehat{\varphi}_{1,2}|_{S})}",swap]  \ & \lambda^* \Za_2(S,\widehat{\varphi}_{1,2}|_{S}) \ar[ru, "{\Za_2(h|_S)}", swap] \ar[r] \ar[d,Rightarrow, "{Z(\Sigma_2, \widehat{\varphi}_{1,2}|_{\Sigma_2})}"] & \lambda^* \Za_2(S_+,\widehat{\varphi}_{1,2}|_{S_+}) \ar[ru, "{\Za_2(h|_S)}", swap]  & \\
 \Za_1(S_+,\xi_+) \ar[rr, "{\id}", swap] \ar[ru, Leftarrow, "\Phi_{\Za_1}"] & &\Za_1(S_+,\xi_+) \ar[r,"{\Za_1(h|^{-1}_{S_+})}",swap] & \lambda^* \Za_1(S_+,\widehat{\varphi}_{1,2}|_{S_+}) \ar[ru, "{Z(S_+,\widehat{\varphi}_{1,2}|_{S_+})}",swap]  \ & 
\end{tikzcd}
\end{small}
\end{equation*} 
\begin{equation*}
\parallel \ \ \text{(Coherence of $\Za_1$ and $\Za_2$)}
\end{equation*}
\vspace{-1cm}
\begin{equation*}
\begin{small}
\begin{tikzcd}[column sep= 1.5cm, row sep=1.4cm]
 & & & & \Za_2(S,\xi) \ar[d, Rightarrow, "{\Phi_{\Za_2}}", shorten >= 1em] \ar[rd] & \\
\Za_1(S_-,\xi_-) \ar[r,"{\Za_1(h|^{-1}_{S_-})}"] \ar{ddr}[rotate=-45, pos=0.2] {\Za_1(\Sigma_2 \circ \Sigma_1, \varphi_2 \circ \varphi_1)} & \lambda^* \Za_1(S_-,\widehat{\varphi}_{1,2}|_{S_-}) \ar{ddr}[rotate=-40, pos=0.8, swap] {\lambda^*\Za_1(\Sigma_2 \circ \Sigma_1, \widehat{\varphi}_{1,2})}  \ar[rd] \ar[dd, Rightarrow] \ar[r, "{Z(S_-,\widehat{\varphi}_{1,2}|_{S_-})}"]  & \lambda^* \Za_2(S_-,\widehat{\varphi}_{1,2}|_{S_-})  \ar[r,"{\Za_2(h|_{S_-})}"] \ar[rd] \ar[rrd] \ar[d,Rightarrow, "{Z(\Sigma_1, \widehat{\varphi}_{1,2}|_{\Sigma_1})}"] &  \Za_2(S_-,\xi_-)  \ar[ru]  \ar[rr,"{\Za_2(\Sigma_2 \circ \Sigma_1, \varphi_2 \circ \varphi_{1})}"] & \ \ar[d, Rightarrow] & \Za_2(S_+,\xi_+)  \\ 
 & \ & \lambda^* \Za_1(S,\widehat{\varphi}_{1,2}|_{S}) \ar[d, Rightarrow, "{\Phi_{\Za_1}}"] \ar[rd] \ar[r, "{Z(S,\widehat{\varphi}_{1,2}|_{S})}",swap]  \ & \lambda^* \Za_2(S,\widehat{\varphi}_{1,2}|_{S}) \ar[rd] \ar[d,Rightarrow, "{Z(\Sigma_2, \widehat{\varphi}_{1,2}|_{\Sigma_2})}"] & \lambda^* \Za_2(S_+,\widehat{\varphi}_{1,2}|_{S_+}) \ar[ru, "{\Za_2(h|_S)}", swap] \ar[l, Rightarrow] &  \\
 & \Za_1(S_+,\xi_+) \ar[r,"{\Za_1(h|^{-1}_{S_+})}",swap] &\Za_1(S_+,\xi_+) \ar[r,"{\id}",swap] & \lambda^* \Za_1(S_+,\widehat{\varphi}_{1,2}|_{S_+}) \ar[r, "{Z(S_+,\widehat{\varphi}_{1,2}|_{S_+})}",swap]  \ &  \lambda^* \Za_2(S_+,\widehat{\varphi}_{1,2}|_{S_+}) \ar[u, "{\id}",swap]
\end{tikzcd}
\end{small}
\end{equation*} 
\begin{equation*}
\parallel \ \ \text{($Z$ is a relative field theory)}
\end{equation*}
\vspace{-1cm}
\begin{equation}
\begin{small}
\begin{tikzcd}[column sep= 1.5cm, row sep=1.8cm]
 & & & & \Za_2(S,\xi) \ar[d, Rightarrow, "{\Phi_{\Za_2}}", shorten >= 1em] \ar[rd] & \\
\Za_1(S_-,\xi_-) \ar[r,"{\Za_1(h|^{-1}_{S_-})}"] \ar{dr}[rotate=-30, pos=0.2] {\Za_1(\Sigma_2 \circ \Sigma_1, \varphi_2 \circ \varphi_1)} & \lambda^* \Za_1(S_-,\widehat{\varphi}_{1,2}|_{S_-}) \ar{dr}[rotate=-25, pos=0.8, swap] {\lambda^*\Za_1(\Sigma_2 \circ \Sigma_1, \widehat{\varphi}_{1,2})}  \ar[d, Rightarrow] \ar[r, "{Z(S_-,\widehat{\varphi}_{1,2}|_{S_-})}"]  & \lambda^* \Za_2(S_-,\widehat{\varphi}_{1,2}|_{S_-})  \ar[r,"{\Za_2(h|_{S_-})}"]  \ar[rd] \ar[d,Rightarrow, "{Z(\Sigma_1, \widehat{\varphi}_{1,2}|_{\Sigma_1})}"] &  \Za_2(S_-,\xi_-)  \ar[ru]  \ar[rr,"{\Za_2(\Sigma_2 \circ \Sigma_1, \varphi_2 \circ \varphi_{1})}"] \ar[d, Rightarrow] & \  & \Za_2(S_+,\xi_+)  \\ 
 & \Za_1(S_+,\xi_+) \ar[r,"{\Za_1(h|^{-1}_{S_+})}",swap] &\Za_1(S_+,\xi_+) \ar[r, "{Z(S_+,\widehat{\varphi}_{1,2}|_{S_+})}",swap]   \ &  \lambda^* \Za_2(S_+,\widehat{\varphi}_{1,2}|_{S_+})  \ar[rru]& \
\end{tikzcd}
\end{small}
\end{equation} 
\end{landscape}
This shows that the natural transformation agree and hence finishes this part of the proof.  

\textbf{4. Compatibility with identity morphisms:}
Finally, we have to show that 
\begin{equation}
\begin{tikzcd}
 & \lambda_* Z(S, \xi) \ar[rd, "{\id_{\lambda_* Z(S, \xi)} \bullet \Phi_{\Za_1}(S,\xi)}"] \ar[ld, "{\Phi_{\Za_2}(S,\xi)}  \bullet \id_{\lambda_* Z(S, \xi)} ", swap] & \\ 
 \Za_2(\id_{S,\xi}) \circ \lambda_* Z(S,\xi) \ar[rr, "\lambda_*Z(\id_{S,\xi})",swap] & & \lambda_* Z(S,\xi) \circ \Za_1(\id_{S,\xi}) 
\end{tikzcd}
\end{equation}
commutes. Using continuity the morphism $ \lambda_*Z(\id_{S,\xi}) \circ  \left( \Phi_{\Za_2}(S,\xi)  \bullet \id_{\lambda_* Z(S, \xi)} \right)$ is induced by the natural transformation
\begin{equation}
\begin{tikzcd}
 & {\lambda^{-1}[\xi] } \ar[d, "{[0,1] \times \cdot }"] \ar[rdd, bend left] \ar[ldd, bend right] & \\
 & {\lambda^{-1}[[0,1]\times \xi]} \ar[rd] \ar[ld]& \\
{\lambda^{-1}[\xi] } \ar[rd, swap, "{\widehat{Z}(S,\xi)}"] \ar[rr, Rightarrow, shorten <= 2em, shorten >= 2em, "{\widehat{Z}(\id_{S,\xi})\circ \Phi_{\Za_2}(S,\xi) }"] & &  {\lambda^{-1}[\xi] } \ar[ld, "{\widehat{Z}(S,\xi)\circ \Za_1(\id_{S,\xi})}"] \\ 
 & {[\Za_1(S,\xi),\Za_2(S,\xi)]} & 
\end{tikzcd}
\end{equation}
where $[0,1] \times \cdot \colon {\lambda^{-1}[\xi] } \longrightarrow {\lambda^{-1}[[0,1]\times \xi]} $ is an equivalence and
hence we can also use the upper span to compute the morphism. Since all maps in the span are equivalences the induced map via push and pull agrees with the map induced by the natural transformation. 
Hence we have to evaluate the natural transformation  
\begin{equation}
\begin{footnotesize}
\begin{tikzcd} 
 & & & & \ar[d, Rightarrow, shorten <= 0.3em, shorten >= 0.9em, "{\Phi_{\Za_2}}"] & \\
\Za_1(S,\xi) \ar[d, swap, "{\Za_1(\id_{S,\xi})}"] \ar[r,"\Za_1(h^{-1})"] & \lambda^* \Za_1(S,\widehat{\xi}) \ar[ld, Rightarrow]  \ar[r, "{Z(S,\widehat{\xi})}"] \ar[d] & \lambda^* \Za_2(S,\widehat{\xi})
\ar[r, "\Za_2(h)"] \ar[d, Rightarrow, shorten <= 0em, shorten >= 0.3em, swap, "{Z(\id_{S,\widehat{\xi}})}"] \ar[rd] & \Za_2(S,\xi) \ar[rr, "{\Za_2(\id_{S,\xi})}"] \ar[d, Rightarrow] \ar[rr, "{\id}", bend left =50] & \ & \Za_2(S,\xi)  
\\
\Za_1(S,\xi) \ar[r,swap, "{\Za_1(h^{-1})}"] & \lambda^*\Za_1(S, \widehat{\xi}) \ar[rr, "{Z(S,\widehat{\xi})}", swap] & \  & \lambda^* \Za_2(S, \widehat{\xi}) \ar[rru, "\Za_2(h)", swap] &  &
\end{tikzcd}
\end{footnotesize}
\end{equation}
Using the coherence of $\Za_2$ we can rewrite this as 
\begin{equation}
\begin{footnotesize}
\begin{tikzcd}[row sep = 1.2cm, column sep=1cm] 
\Za_1(S,\xi) \ar[d, swap, "{\Za_1(\id_{S,\xi})}"] \ar[r,"\Za_1(h^{-1})"] & \lambda^* \Za_1(S,\widehat{\xi}) \ar[ld, Rightarrow]  \ar[r, "{Z(S,\widehat{\xi})}"] \ar[d] & \lambda^* \Za_2(S,\widehat{\xi})
\ar[r, "\Za_2(h)"] \ar[dl, Rightarrow, shorten <= 0em, shorten >= 0.3em, swap, "{Z(\id_{S,\widehat{\xi}})}"] \ar[d, bend left,  ""{name=U}, "{\id}", pos=0.4] \ar[d, bend right, "{ \lambda^* \Za_2(\id)}",swap,  ""{name=U'},pos=0.7] & \Za_2(S,\xi)
\\
\Za_1(S,\xi) \ar[r,swap, "{\Za_1(h^{-1})}"] & \lambda^*\Za_1(S, \widehat{\xi}) \ar[r, "{Z(S,\widehat{\xi})}", swap] & \   \lambda^* \Za_2(S, \widehat{\xi}) \ar[ru, "\Za_2(h)", swap] &  
\arrow[Rightarrow, from=U, to=U', shorten <= 0.3em, shorten >= 0.3em, "{\Phi_{\Za_2}}", swap, pos=0.01]
\end{tikzcd}
\end{footnotesize}
\end{equation}
Using that $Z$ satisfies the condition we want to check for $\lambda_* Z$ we get 
\begin{equation}
\begin{footnotesize}
\begin{tikzcd}[row sep = 1.2cm, column sep=1cm] 
\Za_1(S,\xi) \ar[d, swap, "{\Za_1(\id_{S,\xi})}"] \ar[r,"\Za_1(h^{-1})"] & \lambda^* \Za_1(S,\widehat{\xi}) \ar[ld, Rightarrow]  \ar[r, "{Z(S,\widehat{\xi})}"] \ar[d, bend left,  ""{name=U}, "{\id}", pos=0.4] \ar[d, bend right, "{ \lambda^* \Za_1(\id)}",swap,  ""{name=U'},pos=0.7]  & \lambda^* \Za_2(S,\widehat{\xi})
\ar[r, "\Za_2(h^{-1})"] & \Za_2(S,\xi)
\\
\Za_1(S,\xi) \ar[r,swap, "{\Za_1(h)}"] & \lambda^*\Za_1(S, \widehat{\xi}) \ar[ru, "{Z(S,\widehat{\xi})}", swap] &  &  
\arrow[Rightarrow, from=U, to=U', shorten <= 0.3em, shorten >= 0.3em, "{\Phi_{\Za_1}}", swap, pos=0.01]
\end{tikzcd}
\end{footnotesize}
\end{equation}
Finally, we use the coherence of $\Za_1$ to simplify the expression to 
\begin{equation}
\begin{footnotesize}
\begin{tikzcd}[row sep = 1.2cm, column sep=1cm] 
\Za_1(S,\xi) \ar[r,bend left, "{\id}", ""{name=U}]  \ar[r,bend right, "{\Za_1(\id)}", ""{name=U'}] & \Za_1(S,\xi) \ar[r,"\Za_1(h^{-1})"] & \lambda^* \Za_1(S,\widehat{\xi})   \ar[r, "{Z(S,\widehat{\xi})}"]  & \lambda^* \Za_2(S,\widehat{\xi})
\ar[r, "\Za_2(h^{-1})"] & \Za_2(S,\xi)
\arrow[Rightarrow, from=U, to=U', shorten <= 0.3em, shorten >= 0.3em, "{\Phi_{\Za_1}}", swap,]
\end{tikzcd}
\end{footnotesize}
\end{equation}
Now the statement follows from the continuity of $\Za_1(\id_{S,\xi})$.  
\end{proof}

\begin{remark}
If both extended field theories $\Za_1$ and $\Za_2$ are trivial our construction 
reduces to the pushforward construction of~\cite{OFK} by applying Corollary~\ref{Coro: Relative for 1 induces ordinary}. 
\end{remark}

Using the pushforward construction of relative field theories we can now define the
anomalous quantum field theory $Z_{\omega'}\colon \mbf1 \Longrightarrow \tr E_\theta$: 
recall from Proposition~\ref{Prop: Coboundary induces transformations} that $\omega'$ induces a natural isomorphism 
$ \tr E_0 \Longrightarrow \tr E_{\delta \omega'} = \tr \lambda^* E_\theta$ composition
with the natural transformation $\Omega_0$ from Equation~\eqref{Eq: Omega 0} gives
the relative field theory $\widehat{Z}_{\omega'} \colon \mbf1 \Longrightarrow \tr E_{\delta \omega'} = \tr \lambda^* E_\theta$. 
We set $Z_{\omega'}\coloneqq \lambda_* \widehat{Z}_{\omega'}$.

Similarly to the proof of Theorem~\ref{Theorem: Gauging}, one can show that the relative field theory $Z_{\omega'}$ gauges the $G$-symmetry. 
Let us explain in more detail what this means: the pullback
$i^*E_\theta$ along the inclusion $i\colon \ECob \longrightarrow \EGCob$ is naturally isomorphic to
the trivial theory $\mbf1 \colon \ECob\longrightarrow \Tvs$. The pullback $i^* Z_{\omega'}\colon \mbf1
\Longrightarrow \tr i^*E_\theta\cong \mbf1$ is a field theory
relative to the trivial theory. From Corollary~\ref{Coro: Relative for 1 induces ordinary} it follows that $i^*Z_{\omega'}$ 
is an $n-1$ dimensional topological quantum field theory. This field theory comes with an internal
$G$-symmetry from the evaluation of $Z_{\omega'}$ on gauge transformations of the trivial bundle. 
Gauging the symmetry means that the field theory $i^*Z_{\omega'}$ recovers the Dijkgraaf-Witten theory
$Z_\omega$ together with its internal symmetry induced by the group extensions \begin{align}
1\longrightarrow D \overset{\iota}{\longrightarrow} \widehat{G}\overset{\lambda}{\longrightarrow} G \longrightarrow 1 \ .
\end{align}  
Next we will spell out explicitly the pushforward construction for $Z_\omega$ and recover
the formulas from \cite{tHooft}:
Let $(S,\xi)$ be an object in $\EGCob$. The pushforward construction of 
relative field theories evaluated at $(S,\xi)$ is
\begin{align}\label{Eq: Definition state space}
\begin{split}
Z_{\omega'}(S,\xi) \colon \fvs &\longrightarrow E_\theta \\ 
\C & \longmapsto \int_{(\widehat{\xi},h)\in \lambda^{-1}[\xi]}  \int^{\sigma_+\in \Fund(S)}
\int_{\sigma\in \Fund(S)}   ([0,1]\times S)^h(\sigma_+,\sigma)*\sigma_+
\end{split} \ \ .
\end{align} 
The limit over $\sigma$ can be computed by evaluation at $\sigma_+$. The limit over
$\lambda^{-1}[\xi]$ can be computed in vector spaces. We can identify $([0,1]\times S)^h(\sigma_+,\sigma_+)$
with $\C$ using the canonical basis element $[0,1]\times \sigma_+$ of $([0,1]\times S)^h(\sigma_+,\sigma_+)$. Under this
identification the natural transformation (see Equation~\eqref{Eq: Concrete description on morphisms}) corresponds to multiplication
with $\langle\,\widehat{h}^* \omega', [0,1] \times \sigma_+ \rangle \cdot \tau_S \theta (\lambda_* \widehat{h}, h')^{-1}$, 
where we used Theorem \ref{Theorem: Transgression} to identify the coherence isomorphism of $E_\theta$ with the 
transgression of $\theta$. 

The whole construction assembles into a functor
\begin{align}
L_{\xi, \omega'} \colon \Fund_\theta(S,\xi)^{\text{op}} \longrightarrow \big[\lambda_*^{-1}[\xi], \fvs\big]\ \ . 
\end{align} 
Let $\sigma_+$ and $\sigma_+'$ be representatives for the fundamental class of $S$ and $\Lambda \in C_{n-1}(S) $ an $n{-}1$-chain satisfying $\partial \Lambda = \sigma_+'-\sigma_+$, i.e.\ a morphism in $\Fund_\theta(S,\xi)$. 
The natural transformation corresponding to $\Lambda$ under the trivialisation picked above is 
\begin{align}
\label{Def: L natural}
\begin{split}
L_{\xi, \omega'}(\Lambda) \colon L_{\xi, \omega'}(\sigma_+') &\Longrightarrow L_{\xi, \omega'}(\sigma_+) \\
{L_{\xi, \omega'}}(\Lambda)_{ (\widehat\xi,h)} \colon \C
                                                             &\longrightarrow
                                                               \C \ ,
                                                               \quad 1
                                                               \longmapsto
                                                               \big\langle
                                                               {\widehat{\xi}}\,^*
                                                               \omega',\Lambda
                                                               \big\rangle^{-1}
                                                               \,
                                                               \big\langle
                                                               h^*
                                                               \theta
                                                               , [0,1]
                                                               \times
                                                               \Lambda
                                                               \big\rangle^{-1}
                                                               \ .
\end{split}
\end{align}
We denote by $\tilde{Z}^{(S,\xi)}\colon \Fund(S)^{\opp}\longrightarrow \fvs$ the limit
of $L_{\xi, \omega'}$ over $\lambda^{-1}[\xi]$. Using this notation Equation~\eqref{Eq: Definition state space} reduces to
\begin{align}
\int^{\sigma\in \Fund(S)} \tilde{Z}^{(S,\xi)}(\sigma) *\sigma \in E_\theta(S,\xi) \ \ .
\end{align} 
This reproduces the definition from \cite{tHooft} on objects\footnote{To be more precise in \cite{tHooft} the term $\tau_S \theta (\lambda_* \widehat{h}, h')^{-1}$ is not present. However, the term is trivial on automorphisms $\widehat{h}$ in $\lambda^{-1}[\xi]$ because they need to satisfy $\lambda_* \widehat{h}=1$ and hence its inclusion does not change the limit.
Note that this kind of argument works also in general}.

To evaluate the theory $Z_{\omega'}$ on a 1-morphisms $(\Sigma, \varphi)\colon 
(S_-,\xi_-)\longrightarrow (S_+,\xi_+) $ in $\EGCob$, we describe the functors in
Equation~\eqref{Eq: natural transformation for push} evaluated at $\C \in \fvs$. The source 
of the natural transformation $ \widehat{Z}_{\omega'}$ is 
\begin{align}
\begin{split}
\lambda^{-1}[\varphi] &\longrightarrow E_\theta(S_+,\xi_+) \\
(\widehat{\varphi},h) &\longmapsto  \int\displaylimits^{\sigma_+\in \Fund(S_+)} \int\displaylimits^{\sigma_-\in \Fund(S_-)}
\hspace{-0.8cm}\int\displaylimits_{\sigma\in \Fund(S_-)}   ([0,1]\times S_-)^{h|_{S_-}}(\sigma_-,\sigma) \otimes \Sigma^\varphi(\sigma_+,\sigma_-)* \sigma_+ \\ 
& \phantom{\longmapsto} \ \ \ \  \cong  \int\displaylimits^{\sigma_+\in \Fund(S_+)} \int\displaylimits^{\sigma_-\in \Fund(S_-)}  \Sigma^\varphi(\sigma_+,\sigma_-) \otimes L_{\xi_-,\omega'}(\sigma_-, \widehat{\varphi}|_{S_-},h|_{S_-}) * \sigma_+
\end{split}  .
\end{align}
The target is 
\begin{align}
\begin{split}
\lambda^{-1}[\varphi] &\longrightarrow E_\theta(S_+,\xi_+) \\
(\widehat{\varphi},h) &\longmapsto  \int\displaylimits^{\sigma_+\in \Fund(S_+)}
  L_{\xi_+,\omega'}(\sigma_+,\widehat{\varphi}|{S_+}, h|_{S_+})*\sigma_+
 \end{split} \  \  .
\end{align}
The natural transformation is induced by the linear maps 
\begin{align}
\begin{split}
\Sigma^\varphi(\sigma_+,\sigma_-) \otimes L_{\xi_-,\omega'}(\sigma_-, \widehat{\varphi}|_{S_-},h|_{S_-}) & \longrightarrow  L_{\xi_+,\omega'}(\sigma_+,\widehat{\varphi}|{S_+}, h|_{S_+}) \\ 
\Lambda \otimes 1 & \longmapsto \langle \widehat{\varphi}^*\omega', \Lambda \rangle \cdot \langle h^*\theta, [0,1]\times \Lambda \rangle
\end{split}
\end{align}
For a concrete realisation of the map induced by this natural transformation in terms of 
parallel sections we refer to~\cite{tHooft}. There we also show directly by long and explicit 
computations that the concrete formulas define a relative field theory. Here it is ensured 
abstractly by Theorem~\ref{Thm: push relative}.

\begin{remark}
Let $S$ be a closed oriented $n{-}2$-dimensional manifold and $\sigma_S$ a representative of its fundamental class.
The general theory outlined in Section~\ref{Sec: P anomaly actions}
implies that the vector spaces
$\tilde{Z}_\omega^{(S,\,\cdot\,)}(\sigma_S)$ form a projective
representation of $\BunG(S)$. The 2-cocycle $\alpha$ twisting the
projective representation is completely described by the coherence
isomorphisms for $E_\theta$. Theorem~\ref{Theorem: Transgression} shows that the 2-cocycle twisting this representation is given by the transgression of
$\theta\in Z^n(BG;U(1))$ to the groupoid $\BunG(S)$. This generalizes the low-dimensional descriptions of anomalies
and projective representations on state spaces discussed
in~\cite[Section~2.1]{Tachikawa2016}: In the simplest
$n=1$ case, with $S=\{\ast\}$ the 2-cocycles $\alpha$ and $\theta$ may be identified,
and describe the same 2-cocycle specifying both the two-dimensional bulk $G$-symmetry protected
phase and the class of the
projective $G$-representation on the one-dimensional boundary state,
whereas for $n=3$ with $S=\mathbb{S}^1$ transgression induces a homomorphism
$H^3(BG;U(1))\longrightarrow H^2(B(G\DS G);U(1))$ specifying the
two-dimensional $G$-symmetry protected phase on the
boundary of the three-dimensional $G$-symmetry protected phase.

In a more geometric language this means that the state spaces of the
gauged theory form a section of the transgression 2-line bundle of the
flat $n{-}1$-gerbe on the classifying space $BG$ described by $\theta$, as the classical
gauge theory corresponding to $\theta$ describes the parallel
transport for the $n{-}1$-gerbe. This 2-line bundle is trivial if and only if the corresponding 2-cocycle is a boundary. Hence the obstruction for the projective representation to form an honest representation is the non-triviality of the transgression 2-line bundle. 
\end{remark}

\begin{remark}
Let $(\Sigma,\varphi)\colon (S,\xi)\longrightarrow \varnothing$ be a 1-morphism in $\EGCob$. According to \eqref{Eq: Def combined state space} the state space of the composite system is given by
\begin{align}
Z_{\omega'\,{\rm bb}}(\Sigma,\varphi, S)= E_\theta(\Sigma,\varphi
  )[Z_{\omega'}(S,\varphi|_S)] \cong \Sigma^\varphi (\varnothing,
  \sigma_S)\otimes_\C \tilde{Z}^{(S,\varphi|_S)}_{\omega'}(\sigma_S) \ .
\end{align}  
It is independent of the choice of $\sigma_S$ up to unique isomorphism corresponding to the choice of a representative of the coend. The composite state space carries an honest representation of the gauge group $G$ described in~\eqref{Eq: Def action combined state space}. 
\end{remark}

\appendix 
\chapter{Symmetric monoidal bicategories}\label{Chap: Bicat}
In this appendix we provide detailed definitions related to symmetric monoidal bicategories, following~\cite{leinster:1998,schommer2011classification} for the most part.

\section{Basic definitions}
We list the basic definitions for bicategories following \cite{leinster:1998}.

\begin{definition}
\label{DefinitionBicategory}
A \emph{bicategory} $\mathscr{B}$ consists of the following data:
\begin{itemize}
\item[(a)] A class $\Obj(\mathscr{B})$ of objects.

\item[(b)] A category $\mathsf{Hom}_\Bscr(A,B)$ for all $A,B \in \Obj(\mathscr{B})$, whose objects $f:A \longrightarrow B$ we call 1-morphisms and whose morphisms $f\Longrightarrow g$ we call 2-morphisms.

\item[(c)] Composition functors \[\circ_{ABC} \colon \mathsf{Hom}_\Bscr (B,C)\times \mathsf{Hom}_\Bscr(A,B)\longrightarrow \mathsf{Hom}_\Bscr(A,C)\] for all $A,B,C\in \Obj(\mathscr{B})$.

\item[(d)] Identity functors \[\mathsf{Id}_A \colon \mathsf{1} = \star \, \big/\!\!\big/ \, \{\id_\star\}\longrightarrow \mathsf{Hom}_\Bscr(A,A)\] for all $A\in \Obj(\mathscr{B})$.

\item[(e)] Natural associator isomorphisms \[\mathsf{a}_{A,B,C,D} \colon \circ_{ACD} \circ \big(\id_{\mathsf{Hom}_\Bscr(C,D)} \times \circ_{ABC} \big) \Longrightarrow \circ_{ABD} \circ \big(\circ_{BCD} \times \id_{\mathsf{Hom}_\Bscr(A,B)}\big)\] for all $A,B,C,D\in \Obj(\mathscr{B})$, expressing associativity of the composition. 

\item[(f)] Natural right and left unitor isomorphisms 
\[ \mathsf{r}_A \colon \circ_{AAB} \circ \big(\id_{\mathsf{Hom}_\Bscr(A,B)}\times \mathsf{Id}_A\big)\Longrightarrow \id_{\mathsf{Hom}_\Bscr(A,B)}\] and 
\[\mathsf{l}_A \colon \circ_{AAB} \circ \big(\mathsf{Id}_B \times \id_{\mathsf{Hom}_\Bscr(A,B)}\big)\Longrightarrow \id_{\mathsf{Hom}_\Bscr(A,B)} \]
for all $A,B\in \Obj(\mathscr{B})$.
\end{itemize}  
These data are required to satisfy the following coherence axioms:
\begin{itemize}
\item[(C1)] The pentagon diagram
\small
\begin{equation}
\begin{tikzcd}[row sep=scriptsize, column sep=tiny]
& \big((k\!\circ\! h)\!\circ\! g\big)\!\circ\! f\arrow[dl, swap, Rightarrow, "\mathsf{a}"] \arrow[rr,Rightarrow,"\mathsf{a}\bullet \id"]& &\big(k\!\circ\!(h\!\circ\! g)\big)\!\circ\! f \arrow[rd,Rightarrow,"\mathsf{a}"] & \\
(k\!\circ\! h)\!\circ\!(g\!\circ\! f)\arrow[rrd,swap,Rightarrow,"\mathsf{a}"]& & & &\arrow[lld,Rightarrow,"\id\bullet \mathsf{a}"]k\!\circ\!\big((h\!\circ\! g)\!\circ\! f\big) \\
& &k\!\circ\!\big(h\!\circ\!(g\!\circ\! f)\big) & &
\end{tikzcd} 
\end{equation}
\normalsize
commutes for all composable 1-morphisms $k$, $h$, $g$ and $f$, where $\bullet$ denotes the horizontal composition of natural transformations.

\item[(C2)] The triangle diagram 
\begin{equation}
\begin{tikzcd}
(g\circ \mathsf{Id})\circ f\arrow[rr,Rightarrow,"\mathsf{a}"] \arrow[rd,swap,Rightarrow,"\mathsf{r}\bullet\id"]& &g\circ (\mathsf{Id}\circ f)\arrow[dl,Rightarrow,"\id\bullet\mathsf{l}"]\\
& g\circ f &
\end{tikzcd} 
\end{equation}
commutes for all composable 1-morphisms $f$ and $g$.
\end{itemize}
\end{definition}  

There are different definitions for functors between bicategories corresponding to different levels of strictness. We use the following definition.
\begin{definition}
\label{Definition Morphism Bicategory}
A \emph{2-functor} $\mathcal{F} \colon \mathscr{B}\longrightarrow  \mathscr{B}'$ between two bicategories $\Bscr$ and $\Bscr'$ consists of the following data:
\begin{itemize}
\item[(a)] A map $\mathcal{F} \colon \Obj(\mathscr{B})\longrightarrow  \Obj(\mathscr{B}'\,)$.

\item[(b)] A functor $\mathcal{F}_{AB} \colon \mathsf{Hom}_\Bscr(A,B)\longrightarrow  \mathsf{Hom}_{\Bscr'}\big(\Fa(A),\Fa(B)\big)$ for all $A,B\in \Obj(\mathscr{B})$.

\item[(c)] A natural isomorphism $\Phi_{ABC}$ given by
\begin{equation}
 \begin{tikzcd} 
\mathsf{Hom}_\Bscr(B,C)\times \mathsf{Hom}_\Bscr(A,B) \arrow{r}{\circ }\arrow[d,swap,"{\mathcal{F}_{BC}\times \mathcal{F}_{AB}}"] & \mathsf{Hom}_\Bscr(A,C) \arrow{d}{\mathcal{F}_{AC}}\\
\mathsf{Hom}_{\Bscr'}\big(\mathcal{F}(B),\mathcal{F}(C)\big)\times \mathsf{Hom}_{\Bscr'}\big(\mathcal{F}(A),\mathcal{F}(B)\big)\ar[ru, Rightarrow, shorten <= 2ex, shorten >= 2ex, "\Phi_{ABC}"] \arrow[r,swap,"\circ'"] & \mathsf{Hom}_{\Bscr'}\big(\mathcal{F}(A),\mathcal{F}(C)\big)
\end{tikzcd} 
\end{equation}
for all $A,B,C\in \Obj(\mathscr{B})$.

\item[(d)] A natural isomorphism $\Phi_A$ given by
\begin{equation}
 \begin{tikzcd} \mathsf{1} \arrow[d,swap,"\id"] \arrow{rr}{\mathsf{Id}_A} & & \mathsf{Hom}_{\Bscr}(A,A)\arrow{d}{\mathcal{F}_{AA}} \\ \mathsf{1} \arrow[rr,swap, "\mathsf{Id}'_{\mathcal{F}(A)}"] \ar[rru, shorten <= 2ex, shorten >= 2ex, Rightarrow, "\Phi_A", pos=0.7] & & \mathsf{Hom}_{\Bscr'}\big(\mathcal{F}(A),\mathcal{F}(A)\big)
\end{tikzcd}
\end{equation} for all $A\in \Obj(\mathscr{B})$.
\end{itemize} 
These data are required to satisfy the following coherence axioms:
\begin{itemize}
\item[(C1)] The diagram
\begin{equation}
\label{EQ1: Definition 2Functor}
\begin{tikzcd} 
\big(\mathcal{F}(h) \circ' \mathcal{F}(g)\big)\circ' \mathcal{F}(f) \arrow[d,swap,Rightarrow,"\mathsf{a}'"] \arrow[r,Rightarrow,"\Phi \bullet' \id"] & \mathcal{F}(h\circ g) \circ' \mathcal{F}(f) \arrow[r,Rightarrow,"\Phi"] & \mathcal{F}\big((h\circ g)\circ f\big)\arrow[d,Rightarrow,"\mathcal{F}(\mathsf{a})"] \\ \mathcal{F}(h)\circ' \big(\Fa(g)\circ' \Fa(f)\big) \arrow[r,swap,Rightarrow,"\id \bullet' \Phi"] & \Fa(h) \circ' \Fa(g\circ f) \arrow[r,,swap, Rightarrow,"\Phi"] & \Fa\big(h\circ(g\circ f)\big)
\end{tikzcd}
\end{equation}
commutes for all composable 1-morphisms.

\item[(C2)] The diagram
\begin{equation}
\label{EQ2: Definition 2Functor}
\begin{tikzcd}
\mathcal{F}(f) \circ' \mathsf{Id}'_{\mathcal{F}(A)} \arrow[r,Rightarrow,"\id\bullet' \Phi"] \arrow[dr,swap,Rightarrow,"\mathsf{r}'"] & \mathcal{F}(f) \circ' \mathcal{F}(\mathsf{Id}_A)\arrow[r,Rightarrow,"\Phi"] & \mathcal{F}(f\circ \mathsf{Id}_A) \arrow[dl,Rightarrow,"\mathcal{F} (\mathsf{r})"] \\ & \mathcal{F}(f) &
\end{tikzcd} 
\end{equation} 
commutes for all composable 1-morphisms.

\item[(C3)] A diagram analogous to \eqref{EQ2: Definition 2Functor} for the left unitors $\mathsf{l}$ and $\mathsf{l}'$ commutes.
\end{itemize}
\end{definition} 

Again there are different ways to define natural transformations between 2-functors. The following definition is suitable for our purposes.
\begin{definition}
\label{Definition transformation Bicategory}
Given two 2-functors $\mathcal{F},\mathcal{G}  \colon \mathscr{B}\longrightarrow  \mathscr{B'} $, a \emph{natural 2-transformation} $\sigma \colon \mathcal{F}\Longrightarrow  \mathcal{G}$ consists of the following data: 
\begin{itemize}
\item[(a)] A 1-morphism $\sigma_A \colon \mathcal{F}(A)\longrightarrow  \mathcal{G}(A)$ for all $A \in \mathrm{Obj}(\mathscr{B})$.

\item[(b)] A natural transformation $\sigma_{AB}$ given by\footnote{Here we use $\ast$ to denote pullbacks and pushforwards in the usual way.}
\begin{equation}
 \begin{tikzcd} 
\Hom_{\mathscr{B}}(A,B) \arrow{r}{\mathcal{F}_{AB} }\arrow[d,swap,"\mathcal{G}_{AB}"] & \Hom_{\mathscr{B}'}\big(\mathcal{F}(A),\mathcal{F}(B)\big)\arrow{d}{\sigma_{B\ast}}\\
\Hom_{\mathscr{B}'}\big(\mathcal{G}(A),\mathcal{G}(B)\big)\arrow[Rightarrow]{ru}{\sigma_{AB}} \arrow[r,swap,"\sigma_A^\ast"] & \Hom_{\mathscr{B}'}\big(\mathcal{F}(A),\mathcal{G}(B)\big)
\end{tikzcd}
\end{equation}
for all $A,B\in \mathrm{Obj}(\mathscr{B})$. In particular, these natural transformations comprise families of 2-morphisms $\sigma_f \colon \mathcal{G}_{AB}(f) \circ' \sigma_A \Longrightarrow  \sigma_{B}\circ' \mathcal{F}_{AB}(f)$ for all 1-morphisms $f:A\longrightarrow  B$ in $\Bscr$.
\end{itemize}
These data are required to satisfy the following coherence axioms:
\begin{itemize}
\item[(C1)] The diagram 
\begin{equation}
\label{Equation1: Definition Transformation}
\begin{tikzcd}
\big(\mathcal{G}(g) \circ' \mathcal{G}(f)\big)\circ' \sigma_A \arrow[r,Rightarrow,"\mathsf{a}'"] \arrow[d,swap,Rightarrow,"\Phi_\Ga\bullet'\id"] & \mathcal{G}(g) \circ' \big(\mathcal{G}(f)\circ' \sigma_A\big)\arrow[r,Rightarrow, "\id\bullet' \sigma_f"] & \mathcal{G}(g)\circ' \big(\sigma_{B}\circ' \mathcal{F}(f)\big)\arrow[d,Rightarrow,"\mathsf{a}'"] \\
\mathcal{G}(g\circ f)\circ' \sigma_A \arrow[d,swap,Rightarrow,"\sigma_{g\circ f}"] & & \big(\mathcal{G}(g)\circ' \sigma_{B}\big)\circ' \mathcal{F}(f) \arrow[d,Rightarrow,"\sigma_g\bullet'\id"] \\
\sigma_{C}\circ' \mathcal{F}(g\circ f) \arrow[r,swap,Leftarrow,"\id\bullet'\Phi_\Fa"] & \sigma_{C}\circ' \big(\mathcal{F}(g)\circ' \mathcal{F}(f)\big) \arrow[r,swap,Leftarrow,"\mathsf{a}'"] & \big(\sigma_{C}\circ' \mathcal{F}(g)\big)\circ' \mathcal{F}(f)
\end{tikzcd}
\end{equation}
commutes for all 1-morphisms $f \colon A\longrightarrow  B$ and $g \colon  B\longrightarrow  C$ in $\mathscr{B}$.

\item[(C2)] The diagram
\begin{equation}
\label{Equation2: Definition Transformation}
\begin{tikzcd}
\mathsf{Id}'_{\mathcal{G}(A)}\circ' \sigma_A\arrow[r,Rightarrow,"\mathsf{l}'"] \arrow[d,swap,Rightarrow,"\Phi_\Ga\bullet'\id"]&\sigma_A\arrow[r,Rightarrow,
"\mathsf{r}'{}^{-1}"] & \sigma_A \circ' \mathsf{Id}'_{\Fa(A)}\arrow[d,Rightarrow,"\id\bullet'\Phi_\Fa"] \\
\mathcal{G}(\mathsf{Id}_A)\circ' \sigma_A \arrow[rr,swap,Rightarrow,"\sigma_{\mathsf{Id}_A}"] & & \sigma_A \circ' \mathcal{F}(\mathsf{Id}_A) 
\end{tikzcd}
\end{equation}
commutes for all $A\in \mathrm{Obj}(\mathscr{B})$.
\end{itemize}
\end{definition}

\begin{remark}
We do not require the natural transformation $\sigma_{AB}$ to be invertible. There is a different
definition of natural transformation where $\sigma_{AB}$ goes in the other direction. In the 
literature these two versions are called lax and op-lax~\cite{Johnson-Freyd:2017ykw}. In this thesis we never use
op-lax transformations and hence refrain from introducing the term lax.    
\end{remark}  
In the context of bicategories there exist a natural way to compare natural transformations. 
\begin{definition}
\label{Definition Modification Bicategory}
Given two natural 2-transformations $\sigma,\tau \colon \mathcal{F}\Longrightarrow  \mathcal{G}$, a \emph{modification} $\mit\Gamma \colon \sigma \Rrightarrow \tau$ consists of a 2-morphism ${\mit\Gamma}\!_A \colon \sigma_A \Longrightarrow  \tau_A$ for each $A\in \mathrm{Obj}(\mathscr{B})$ such that the diagram
\begin{equation}
\label{EQ: Modification}
\begin{tikzcd}
\mathcal{G}(f)\circ' \sigma_A \arrow[rr,Rightarrow,"\id\bullet' {\mit\Gamma}\!_A"] \arrow[d,swap,Rightarrow,"\sigma_f"] & & \mathcal{G}(f)\circ' \tau_A \arrow[d,Rightarrow,"\tau_f"] \\
\sigma_B\circ' \mathcal{F}(f)\arrow[rr,swap,Rightarrow,"{\mit\Gamma}\!_{B}\bullet'\id"] & & \tau_B\circ' \mathcal{F}(f)
\end{tikzcd}
\end{equation} 
commutes for all 1-morphisms $f \colon A\longrightarrow  B$ in $\Bscr$.
\end{definition}
The collection of all bicategories, 2-functors, natural transformations and modifications 
forms a tricategory $\BiCat$~\cite{GPS}.
\section{Symmetric monoidal bicategories}

In this Section we define symmetric monoidal structures on bicategories 
following~\cite{schommer2011classification}.

\begin{definition}
A \emph{symmetric monoidal bicategory} consists of a bicategory $\mathscr{B}$ together with the following data:
\begin{itemize}
\item[(a)] A monoidal unit $1\in \mathrm{Obj}(\mathscr{B})$.

\item[(b)] A 2-functor $\otimes \colon \mathscr{B} \times \mathscr{B} \longrightarrow  \mathscr{B}$.

\item[(c)] Equivalence natural 2-transformations\footnote{Here `equivalence' means the natural 2-transformations in question have weak inverses.} $\alpha \colon \otimes \circ (\id \times \otimes) \Longrightarrow  \otimes \circ (\otimes \times \id),$ $\lambda \colon 1\otimes \, \cdot \, \Longrightarrow  \id$ and $\rho \colon \id \Longrightarrow  \, \cdot \, \otimes 1$.
We pick adjoint inverses which are part of the data and denoted them
by $^\star$, leaving the adjunction data implicit.

\item[(d)] An equivalence natural 2-transformation $\beta \colon a\otimes b \Longrightarrow  b\otimes a$.

\item[(e)] The four invertible modifications
\begin{equation}
\begin{tikzcd}[column sep=small]
 &  \otimes \circ (\otimes \times \otimes)\arrow[dr,Rightarrow,"\alpha"] & \\
\otimes \circ (\otimes \times \id) \circ (\otimes \times \id \times \id) \arrow[ur,Rightarrow,"\alpha"] \arrow[d,swap,Rightarrow,"\alpha\otimes\id"] &  & \otimes \circ (\id \times \otimes) \circ (\id \times \id \times \otimes)  \\
 \otimes \circ (\otimes \times \id) \circ (\id \times \otimes \times \id) \arrow[rr,swap,Rightarrow,"\alpha"] & 
\tarrow[swap,shorten <=15pt,shorten >=5pt," \ \mit\Xi "]{uu}  
  & \otimes \circ (\id \times \otimes) \circ (\id \times \otimes \times \id) \ar[u,swap,Rightarrow,"\id\otimes\alpha"]
\end{tikzcd}
\end{equation}
\begin{equation}
\begin{tikzcd}
\otimes \circ \big(\id \times (1 \otimes \, \cdot \, ) \big) \ar[rr,Rightarrow,"\alpha"] & \tarrow[shorten <=2pt,shorten >=5pt, "\ \mit\Theta"]{d} &\otimes \circ \big(( \, \cdot \, \otimes 1 ) \times \id\big) \ar[d,Leftarrow,"\rho\otimes\id"] 
\\
\otimes \ar[u,Leftarrow,"\id\otimes \lambda"] \ar[rr,swap, Rightarrow, "\id"] & \ & \otimes 
\end{tikzcd}
\end{equation}
\begin{equation}
\begin{tikzcd}
\otimes \circ \big((1 \otimes \, \cdot \, )\times \id\big) \ar[rr,Rightarrow,"\lambda\otimes\id"] \ar[rd,swap,Rightarrow,"\alpha"] & \tarrow["\ \mit\Lambda",shorten <=2pt,shorten >=2pt]{d} &\otimes \\
& (1 \otimes \, \cdot \, ) \circ (\id \times \otimes) \ar[ru,swap,Rightarrow,"\lambda"] & 
\end{tikzcd}
\end{equation}
and
\begin{equation}
\begin{tikzcd}
\otimes \ar[rr,Rightarrow,"\id\otimes \rho"] \ar[rd,swap,Rightarrow,"\rho"] & \tarrow[ "\ \mit\Psi ",shorten <=2pt,shorten >=2pt]{d} &\otimes \circ \big(\id \times ( \, \cdot \, \otimes 1) \big) \\
& ( \, \cdot \, \otimes 1) \circ (\id \times \otimes) \ar[ru,swap,Rightarrow,"\alpha"] & 
\end{tikzcd}
\end{equation}
\item[(f)] Further invertible modifications
\begin{equation}
\begin{tikzcd}
 & a\otimes (b\otimes c) \ar[r, Rightarrow, "\beta" {name=ar1}] & (b \otimes c) \otimes a \arrow[rd,Rightarrow,"\alpha"] & \\ 
(a\otimes b)\otimes c \arrow[ru,Rightarrow,"\alpha"] \arrow[rd,swap, Rightarrow,"\beta\otimes\id"] & & &b\otimes(c \otimes a) \\
 & (b\otimes a) \otimes c\ar[r,swap,Rightarrow, "\alpha" {name=ar2}] &b \otimes (a \otimes c) \arrow[ru,swap,Rightarrow,"\id\otimes\beta"] & 
 \tarrow[shorten <=10pt,shorten >=10pt, from=ar1, to=ar2, "\ R"]{}
\end{tikzcd} \end{equation}
and
\begin{equation}\begin{tikzcd}
 & (a\otimes b)\otimes c \ar[r,Rightarrow, "\beta"{name=ar1}] & c \otimes (a\otimes b) \arrow[rd,Rightarrow,"\alpha"] & \\ 
a\otimes (b\otimes c) \arrow[ru,Rightarrow,"\alpha"] \arrow[rd,swap,Rightarrow,"\alpha\circ(\beta\otimes\id)\circ\alpha"] & & &(c\otimes a) \otimes b  \\
 & b\otimes (a \otimes   c)\ar[r,swap,Rightarrow, "\beta"{name=ar2}] &(a\otimes c) \otimes b \arrow[ru,swap,Rightarrow,"\beta\otimes\id"] & 
\tarrow[shorten <=10pt,shorten >=10pt, from=ar1, to=ar2, "\ S"]{}
\end{tikzcd} \end{equation}
\item[(g)] An invertible modification
\begin{equation}
\begin{tikzcd}
a \otimes b  \ar[rr, Rightarrow, "\id"] \ar[rd,swap,Rightarrow,"\beta"] & \tarrow["\ \mit\Sigma ",shorten <=1pt,shorten >=1pt]{d} &a \otimes b  \\
& b \otimes a \ar[ru,swap,Rightarrow,"\beta"] & 
\end{tikzcd}
\end{equation}
\end{itemize}
These data are required to satisfy a long list of coherence diagrams, see~\cite[Appendix~C]{schommer2011classification} for details.
\end{definition}
\begin{definition}\label{Def: Symmetric monoidal 2-functor}
A \emph{symmetric monoidal 2-functor} between two symmetric monoidal bicategories $\mathscr{B}$ and $\mathscr{B}'$ consists of a 2-functor $\Ha \colon \mathscr{B} \longrightarrow  \mathscr{B}' $ of the underlying bicategories together with the following data:
\begin{itemize}
\item[(a)] Equivalence natural 2-transformations\footnote{We fix again adjoint inverses and the adjunction data.} $\chi \colon \otimes ' \circ \big(\Ha(\, \cdot\, )\times \Ha(\, \cdot\, )\big)\Longrightarrow  \Ha \circ \otimes $ and $\iota \colon 1' \Longrightarrow  \Ha(1)$, where here we consider $1$ as a 2-functor from the bicategory with one object, one 1-morphism and one 2-morphism to $\mathscr{B}$.

\item[(b)] The three invertible modifications
\end{itemize}
\scriptsize
\begin{equation}
\begin{tikzcd}
 &\Ha(a)\otimes' \big(\Ha(b)\otimes' \Ha(c)\big) \ar[r,Rightarrow,"\id\otimes'\chi\ "{name=ar1}] & \Ha(a)\otimes' \Ha(b\otimes c)\ar[rd,Rightarrow,"\chi"] & \\
\big(\Ha(a)\otimes' \Ha(b)\big)\otimes' \Ha(c) \ar[ur,Rightarrow,"\alpha'"] \ar[rd,swap,Rightarrow,"\chi\otimes'\id"]& & & \Ha\big(a\otimes (b\otimes c)\big) \\
 & \Ha(a\otimes b)\otimes' \Ha(c) \ar[r,swap,Rightarrow,"\chi"{name=ar2}] & \Ha\big((a\otimes b)\otimes c\big)\ar[ru,swap,Rightarrow,"\Ha(\alpha)"]
\tarrow[shorten <=10pt,shorten >=10pt, from=ar1, to=ar2, "\ \mit\Omega"]{}
\end{tikzcd}
\end{equation}
\normalsize

\begin{center}
$
\begin{tikzcd}
\Ha(1)\otimes' \Ha(a)\ar[r,Rightarrow,"\chi"{name=ar1}] & \Ha(1\otimes a) \ar[d,Rightarrow,"\Ha(\lambda)"]\\
1' \otimes' \Ha(a) \ar[u,Rightarrow,"\iota\otimes'\id"] \ar[r,swap,Rightarrow, "\lambda'"{name=ar2}] & \Ha(a)
\tarrow[shorten <=5pt,shorten >=5pt, from=ar1, to=ar2, "\ \mit\Gamma"]{}
\end{tikzcd}
$ \ \ \ \ \ and \ \ \ \ \ 
$
\begin{tikzcd}
\Ha(a)\otimes' 1'\ar[r,Rightarrow,"\id\otimes'\iota\ "{name=ar1}] & \Ha(a)\otimes' \Ha(1) \ar[d,Rightarrow,"\chi"]\\
\Ha(a) \ar[u,Rightarrow,"\rho'"]\ar[r,swap,Rightarrow,"\Ha(\rho)"{name=ar2}] & \Ha(a\otimes 1)
\tarrow[swap,shorten <=5pt,shorten >=5pt, from=ar2, to=ar1, "\ \mit\Delta "]{}
\end{tikzcd}
$
\end{center}

\begin{itemize}
\item[(c)] An invertible modification
\begin{equation}
\begin{tikzcd}
& \Ha(b \otimes a) \ar[rd,Rightarrow,"\Ha(\beta)"] \tarrow[shorten <=5pt,shorten >=5pt, " \ \mit\Upsilon"]{dd} & \\
\Ha(b)\otimes' \Ha(a)\ar[ru,Rightarrow,"\chi"] \ar[rd,swap,Rightarrow,"\beta'"] & & \Ha(a\otimes b) \\
 &\Ha(a)\otimes' \Ha(b)\ar[ru,swap,Rightarrow,"\chi"]  &
\end{tikzcd}
\end{equation}
\end{itemize} 
These data are required to satisfy a long list of coherence conditions, see \cite{schommer2011classification} and references therein for details.
\end{definition}  
Our definition of symmetric monoidal transformations differs slightly from~\cite{schommer2011classification}. The definition we give is tailored to the application 
in functorial field theories. 
In contrast to the definition given in~\cite{schommer2011classification}, we require the appearing modifications to be invertible. However, the 2-morphisms corresponding to the underlying natural transformations are not invertible in our definition, so our definition is also weaker than the definition given in~\cite{schommer2011classification}.  

\begin{definition}
\label{Def:2sym transformation}
A \emph{natural symmetric monoidal 2-transformation} between symmetric monoidal 2-functors $\Ha ,\Ka \colon \mathscr{B} \longrightarrow  \mathscr{B}'$ consists of a natural 2-transformation $\theta \colon \Ha  \Longrightarrow  \Ka $ of the underlying 2-functors together with invertible modifications
\begin{equation}
\begin{tikzcd}
 & \Ha (a\otimes b) \ar[rd,Rightarrow,"\theta"] & \\
\Ha (a)\otimes' \Ha (b) \ar[ru,Rightarrow,"\chi_\Ha"] \ar[d,swap,Rightarrow, "\theta\otimes'\id"] & & \Ka(a\otimes b) \\
 \Ka(a) \otimes' \Ha (b) \ar[rr,swap,Rightarrow,"\id\otimes'\theta"] & \tarrow[shorten <=5pt,shorten >=5pt,swap, " \ \mit\Pi"]{uu} & \Ka(a) \otimes' \Ka (b) \ar[u,swap,Rightarrow,"\chi_\Ka"]
\end{tikzcd}
\end{equation}
and
\begin{equation}
\begin{tikzcd}
1' \ar[rr,Rightarrow,"\iota_\Ka"] \ar[dr,swap,Rightarrow,"\iota_\Ha"] & \tarrow[shorten <=2pt,shorten >=2pt, " \ M"]{d} & \Ka(1) \\
 & \Ha (1) \ar[ur,swap, Rightarrow, "\theta"] &  
\end{tikzcd}
\end{equation}
which satisfy the following coherence conditions expressed as equalities between 2-morphisms (omitting tensor product symbols on objects and 1-morphisms to streamline the notation):
\footnotesize
\begin{equation*}
\begin{tikzcd}[column sep=small]
 &\Ka (a)\big(\Ka (b)\Ha (c)\big) \ar[r,"\theta"] & \Ka(a)\big(\Ka( b)\Ka( c)\big) \ar[r,"\alpha'"] & \big((\Ka( a)\Ka (b)\big)\Ka (c) \ar[dr,"\chi_\Ka"] & \\
\big(\Ka (a)\Ka (b)\big)\Ha (c) \ar[ur,"\alpha'"] \ar[urrr,bend right=10,"\theta"] \ar[rd,swap,"\chi_\Ka"] & \ar[u, Leftarrow,swap, "\;\; ",shorten <=2pt,shorten >=2pt] & & & \Ka (ab)\Ka (c)\ar[dd,"\chi_\Ka"]\\
 & \Ka (ab)\Ha (c) \ar[rrru, bend right=10,"\theta"] & \ar[ddd, Rightarrow, "\ \mit\Pi",shorten <=15pt,shorten >=15pt] \ar[uu, Leftarrow,swap, "\;\; ",shorten <=5pt,shorten >=30pt, pos=0.3] & & \\
\big(\Ka (a)\Ha( b)\big)\Ha (c) \ar[uu,"\theta"]  &\ar[l, Leftarrow,shorten <=10pt,shorten >=10pt, " \mit\Pi \otimes' \id"] & & & \Ka \big((ab) c\big) \ar[dd,"\Ka(\alpha)"] \\
 & \Ha (ab)\Ha (c) \ar[uu,swap,"\theta"]\ar[rd,"\chi_\Ha"] \ar[dd, Rightarrow, "\ \mit\Omega\!_\Ha",shorten <=10pt,shorten >=10pt] & &\ar[dd, Rightarrow, "\ \theta_{\alpha}",shorten <=10pt,shorten >=10pt] & \\
\big(\Ha (a) \Ha (b)\big)\Ha (c)\ar[uu,"\theta"] \ar[ur,"\chi_\Ha"] \ar[dr,swap,"\alpha'"] & & \Ha \big((ab) c\big)\ar[rd,"\Ha(\alpha)"]\ar[rruu,"\theta"] & & \Ka \big(a(bc)\big) \\
 &\Ha (a) \big(\Ha (b)\Ha (c)\big)\ar[r,swap,"\chi_\Ha"] & \Ha (c)\Ha (bc) \ar[r,swap,"\chi_\Ha"] & \Ha (a(bc))\ar[ru,swap,"\theta"] &
\end{tikzcd}
\end{equation*}
\normalsize
\LARGE
\[
\parallel
\]
\normalsize
\begin{equation}
\label{EQ:1 Definition s.m. transformation}
\begin{footnotesize}
\begin{tikzcd}[column sep=small]
 &\Ka (a)\big(\Ka (b)\Ha (c)\big) \ar[r,"\theta"] & \Ka (a)\big(\Ka (b)\Ka (c)\big) \ar[ddddd, Rightarrow,shorten <=20pt,shorten >=20pt,"\ \id\otimes'\mit\Pi"] \ar[r,"\alpha'"] \ar[rddd, bend right=15,swap,"\chi_\Ka"] & \big(\Ka( a)\Ka (b)\big)\Ka (c) \ar[dr,"\chi_\Ka"] \ar[d,"\alpha'"] & \\
\big(\Ka (a)\Ka (b)\big)\Ha (c) \ar[ur,"\alpha'"] &  &\ar[r,Leftarrow, swap,shorten <=30pt,shorten >=5pt] &\Ka (a)\big(\Ka( b)\Ka (c)\big)  \ar[dd,"\chi_\Ka"] & \Ka (ab)\Ka (c)\ar[dd,"\chi_\Ka"]\ar[ddl, Rightarrow,shorten <=10pt,shorten >=10pt,"\ \mit\Omega\!_\Ka"]\\
 & \ar[ld, Leftarrow,shorten <=10pt,shorten >=10pt,swap, "\alpha'{}^\star"] & &  & \\
\big(\Ka (a)\Ha (b)\big)\Ha (c) \ar[uu,"\theta"] \ar[rd,"\alpha'"]  & & &\Ka (a)\Ka (bc) \ar[rdd,"\chi_\Ka"] \ar[ddd,Rightarrow,shorten <=10pt, shorten >=10pt, "\ \mit\Pi"] & \Ka \big((ab)c\big) \ar[dd,"\Ka(\alpha)"] \\
 & \Ka (a)\big(\Ha (b)\Ha (c)\big)\ar[uuuu,"\theta"] \ar[rdd, Rightarrow,shorten <=20pt,shorten >=20pt,"\Phi_{\otimes'}\ ",swap] \ar[rd,"\chi_\Ha"]& &  & \\
\big(\Ha (a)\Ha (b)\big)\Ha (c) \ar[uu,"\theta"] \ar[dr,swap,"\alpha'"] \ar[ru, Rightarrow,shorten <=10pt,shorten >=10pt, "\alpha'{}^\star",pos=0.6] & &\Ka (a)\Ha (bc)\ar[uur,"\theta"] & & \Ka \big(a(bc)\big) \\
 &\Ha (a)\big(\Ha (b)\Ha (c)\big)\ar[r,swap,"\chi_\Ha"]\ar[uu,"\theta"] & \Ha (a)\Ha (bc)\ar[u,swap,"\theta"] \ar[r,swap,"\chi_\Ha"] & \Ha \big(a(bc)\big)\ar[ru,swap,"\theta"] & 
\end{tikzcd}
\end{footnotesize}
\end{equation}
\begin{equation*}
\begin{tikzcd}
& \Ka (1)\Ha (a) \ar[rr,"\theta"] & \ar[d, Rightarrow, "\ \mit\Pi"] &\Ka (1)\Ka (a) \ar[rd,"\chi_\Ka"]  & \\
\Ha (1)\Ha (a)\ar[ur,"\theta"]\ar[rr, "\chi_\Ha"{name=ar1}]& &\Ha (1a)\ar[rr, "\theta"{name=ar2}] \ar[d,"\Ha(\lambda)"]  & & \Ka (1a)\ar[d,"\Ka(\lambda)"] \\
1'\Ha (a) \ar[u,"\iota_\Ha"]\ar[rr,swap, "\lambda'"{name=ar3}] \ar[rrd,swap,"\theta"]& & \Ha (a) \ar[rr, swap, "\theta"{name=ar4}] \ar[d, Rightarrow, "\ \lambda'_{\theta_a}"]  & & \Ka( a) \\
 & &1'\Ka (a)\ar[rru,swap, "\lambda'"] & &
\ar[from=ar1, to=ar3, Rightarrow, "\ \mit\Gamma\!_\Ha",shorten <=2pt,shorten >=5pt]
\ar[from=ar2, to=ar4, Rightarrow, "\ \theta",shorten <=2pt,shorten >=5pt]
\end{tikzcd}
\end{equation*}
\LARGE
\[
\parallel
\]
\normalsize
\begin{equation}
\label{EQ:2 Definition s.m. transformation}
\begin{tikzcd}
& \Ka (1)\Ha (a) \ar[rr,"\theta"] \ar[rddd, Rightarrow, shorten <=10pt,shorten >=10pt, "\ \Phi_{\otimes'}"] & &\Ka (1)\Ka(a) \ar[rd,"\chi_\Ka"]  & \\
\Ha (1)\Ha (a)\ar[ur,"\theta"]&\ar[l,Leftarrow, "{\footnotesize \ \ \  M^{-1}\otimes'\id}",shorten <=13pt, pos=0.8] & & & \Ka (1a)\ar[d,"\Ka(\lambda)"] \\
1'\Ha (a) \ar[u,"\iota_\Ha"]\ar[ruu, swap,bend right,"\iota_\Ka"] \ar[rrd,swap,"\theta"]& &  &\ar[uu, Leftarrow,shorten >=10pt,swap, "\ \mit\Gamma\!_\Ka",pos=0.2] & \Ka (a) \\
 & &1'\Ka (a)\ar[rru,swap,"\lambda'"]\ar[ruuu,"\iota_\Ka"] & &
\end{tikzcd}
\end{equation}
\begin{equation*}
\begin{tikzcd}
& \Ka (a)1' \ar[r,"\iota_\Ha"{name=ar1}] & \Ka (a)\Ha (1)\ar[dr,"\theta"] & \\
\Ka (a) \ar[ru,"\rho'"] \ar[r,Rightarrow,shorten <=5pt,shorten >=5pt, "\rho'_\theta"] & \Ha (a) 1'\ar[u,"\theta"] \ar[r,swap,"\iota_\Ha"{name=ar2}] & \Ha (a)\Ha (1)\ar[rd, Rightarrow,shorten <=5pt,shorten >=5pt,"\ \mit\Pi"]  \ar[d,"\chi_\Ha"] \ar[u,swap,"\theta"]  & \Ka (a)\Ka (1) \ar[d,"\chi_\Ka"] \\
\Ha (a)\ar[u,"\theta"]\ar[rd,swap,"\theta"] \ar[ru,"\rho'"] \ar[rr,"\Ha(\rho)"]& \ar[u,Leftarrow,shorten <=10pt,shorten >=2pt,swap, "\ {\mit\Delta}_\Ha^{-1}",pos=0.6]\ar[d,Rightarrow,shorten <=2pt,shorten >=2pt,"\ \theta_\rho"] &\Ha (a1)\ar[r,"\theta"]  & \Ka (a1) \\
& \Ka (a) \ar[rru,swap,"\Ka(\rho)"]&  & 
\ar[from=ar1, to=ar2, Rightarrow,shorten <=5pt,shorten >=5pt,"\ \iota_\theta"]
\end{tikzcd}
\end{equation*}
\LARGE
\[
\parallel
\]
\normalsize
\begin{equation}
\label{EQ:3 Definition s.m. transformation}
\begin{tikzcd}
& \Ka (a)1' \ar[rrd, swap, bend right=15,"\iota_\Ka"{name=ar1}] \ar[r,"\iota_\Ha"] & \Ka (a)\Ha (1)\ar[dr,"\theta"]  & \\
\Ka (a) \ar[ru,"\rho'"]\ar[rrrd,swap,"\Ka(\rho)"{name=ar2}] & \ar[dd,Rightarrow,shorten <=10pt,shorten >=10pt,"\ \id", pos=0.6] & \ar[u,Leftarrow, swap,shorten <=5pt,shorten >=2pt,"\ \id\otimes'M^{-1}"] & \Ka (a)\Ka (1) \ar[d,"\chi_\Ka"] \\
\Ha (a)\ar[u,"\theta"]\ar[rd,swap,"\theta"]&  &  & \Ka (a1) \\
& \Ka (a) \ar[rru,swap,"\Ka(\rho)"]&  & 
\ar[from=ar1, to=ar2, Rightarrow,shorten <=2pt,shorten >=10pt," \!{\mit\Delta}_\Ka^{-1}", pos=0.2]
\end{tikzcd}
\end{equation}
and
\begin{equation}
\label{EQ:4 Definition s.m. transformation}
\small
\begin{tikzcd}
 \Ha (b) \Ha (a) \ar[rr,"\chi_\Ha"] & \ar[d, Leftarrow, swap,shorten <=2pt,shorten >=2pt, "{\mit\Upsilon}_\Ha^{-1}\ " ] & \Ha (ba)\ar[d,"\theta"]  \\
\Ha (b)\Ha (a) \ar[d,swap,"\theta\circ'\beta'"]\ar[u,"\id"]\ar[r,"\chi_\Ha\circ'\beta'"] & \Ha (ab)\ar[ur,"\Ha(\beta)"]\ar[dr,"\theta"] \ar[r, Rightarrow,shorten <=2pt,shorten >=2pt,"\theta_\beta"]  & \Ka (ba) \\
\Ka (a)\Ka (b) \ar[rr,swap,"\chi_\Ka"]& \ar[u, Rightarrow, shorten <=2pt,shorten >=2pt, "\mit\Pi\ "] & \Ka (ab)\ar[u,swap,"\Ka(\beta)"] 
\end{tikzcd}
 \ \ = \ \ 
\begin{tikzcd}
 \Ha (a)\Ha (b) \ar[rr,"\chi_\Ha\circ'\beta'"]\ar[dr,"\theta\circ'\beta'"] & \ar[d, Leftarrow,shorten <=3pt,shorten >=3pt, swap,"\mit\Pi\ "] & \Ha (ba)\ar[d,"\theta"]  \\
\Ha (a)\Ha (b) \ar[d,swap,"\theta"]\ar[u,"\id"] \ar[r, Rightarrow,shorten <=2pt,shorten >=2pt, "\beta'_{\theta\otimes' \theta} "] & \Ka (b)\Ka (a) \ar[r,"\chi_\Ka"] & \Ka (ba)   \\
\Ka (a)\Ka (b) \ar[rr,swap,"\chi_\Ka"]\ar[ur,"\beta'"]& \ar[u, Rightarrow,shorten <=2pt,shorten >=2pt, "{\mit\Upsilon}_\Ka^{-1}\ "] & \Ka (ab)\ar[u,swap,"\Ka(\beta)"] 
\end{tikzcd}
\end{equation}
\normalsize
In \eqref{EQ:1 Definition s.m. transformation}, the unlabelled 2-morphisms in the first diagram are constructed from naturality of $\alpha^\star$ and 2-functoriality of $\otimes$, while the unlabelled 2-morphism in the second diagram is induced by the equivalence $\alpha^\star \circ\alpha\Longrightarrow  \id$. 
\end{definition}

\begin{definition}
A \emph{symmetric monoidal modification} between two symmetric
monoidal 2-trans{-}formations $\theta, \theta'\colon
\mathcal{H}\Longrightarrow  \mathcal{K}$ consists of a modification $m
\colon \theta \Rrightarrow \theta'$ of the underlying natural 2-transformations satisfying
\begin{equation}
\begin{footnotesize}
\begin{tikzcd}
\mathcal{H}(a)\otimes' \mathcal{H}(b)\ar[dd, swap, "{\theta \otimes' \theta}"] \ar[rr,"{\chi_\mathcal{H}}"] & & \mathcal{H}(a\otimes b)\ar[dd, "{\theta}"] \ar[dd, bend left, "{\theta'}",out=90, in=90] & \\ 
\; \ar[rr, Rightarrow, "{\mit\Pi}",shorten >=0.5cm,shorten <=0.5cm] & \; & \; \ar[r, Rightarrow, "m",shorten >=0.3cm,shorten <=0.2cm, pos=0.45] & \; \\
\mathcal{K}(a)\otimes' \mathcal{K}(b) \ar[rr,swap,"{\chi_\mathcal{K}}"] & & \mathcal{K}(a\otimes b) &
\end{tikzcd}
 \ \ = \ \ 
\begin{tikzcd}
& \mathcal{H}(a)\otimes' \mathcal{H}(b)\ar[dd, "{\theta' \otimes' \theta'}"] \ar[dd, bend right ,swap,"{\theta \otimes' \theta}", out=-90, in=-90] \ar[rr,"{\chi_\mathcal{H}}"] & & \mathcal{H}(a\otimes b)\ar[dd, "{\theta'}"]   \\ 
\; \ar[r, Rightarrow, "m\otimes m",shorten <=0.4cm, pos=0.7] & \; \ar[rr, Rightarrow, "{\mit\Pi'}",shorten >=0.5cm,shorten <=0.8cm] & \; & \;  \\
& \mathcal{K}(a)\otimes' \mathcal{K}(b) \ar[rr,swap,"{\chi_\mathcal{K}}"] & & \mathcal{K}(a\otimes b) 
\end{tikzcd}
\end{footnotesize}
\end{equation}
and
\begin{equation}
\begin{tikzcd}
 & & \mathcal{H}(1) \ar[dd, "\theta"] \ar[dd, bend left, "{\theta'}",out=90, in=90] & \\
 1' \ar[rru, "{\iota_\mathcal{H}}"] \ar[rrd, swap,"{\iota_\mathcal{K}}"] \ar[rr, Rightarrow, "M",shorten >=0.2cm,shorten <=0.8cm, pos=0.65]& &\; \ar[r, Rightarrow, "m",shorten >=0.3cm,shorten <=0.2cm, pos=0.45] &\; \\
  & & \mathcal{K}(1) &
\end{tikzcd}
 \ \ = \ \ 
\begin{tikzcd}
 & & \mathcal{H}(1) \ar[dd, "\theta'"]  & \\
 1' \ar[rru, "{\iota_\mathcal{H}}"] \ar[rrd, swap,"{\iota_\mathcal{K}}"] \ar[rr, Rightarrow, "M'",shorten >=0.2cm,shorten <=0.8cm, pos=0.65]& &\;  &\; \\
  & & \mathcal{K}(1) &
\end{tikzcd}
\end{equation}
\end{definition}
\chapter{Homotopy theory for groupoids, stacks and integration}\label{Sec: Homotop Groupoid}
In this appendix we review some basic facts and definitions related to
groupoids used in this thesis. We start by defining a model structure 
on the category of groupoids. We give concrete formulas for homotopy (co)limits
in this model structure. 

Afterwards, we define stacks as a categorification of sheaves and state some of there basic properties. 
The final section of this appendix is dedicated to the integration of gauge invariant
functions over groupoids.     

\section{A model category for groupoids}
In many situation requiring two objects of a category $\cC$ to be 
isomorphic is to restrictive and there exists a class of morphisms called weak equivalences 
which should replace isomorphisms. 
For example from a homotopical point of view two topological spaces are `the same' if there exist 
a weak homotopy equivalence between them. 
Let $\W\subset \Mor(\cC)$ denote the collection of weak equivalences. In this case the 
localization $\cC[\W^{-1}]$ of $\cC$ at the weak equivalences is a natural object to consider. 
Furthermore, one should replace categorical structures, such as limits, Kan extensions and 
equivalences of categories with constructions which are compatible with weak equivalences.  

To get a better technical handle on weak equivalences the introduction of two additional classes
of morphisms called fibrations and cofibrations is useful. We are mostly interested in the 
case where $\cC$ is a category enriched over the category $\sSet$ of simplicial sets. The relevant
definition in this situation is that of an simplical model category, see for example \cite[Section 11.4]{Riehl}.
We do not spell out the definition in detail here, but rather explain to what it boils down in the 
case of groupoids. 
There is a simplicial model structure on the category of (small) groupoids $\Grpd$
which we review following~\cite{Hollander}.
The category $\Grpd$ becomes enriched over $\sSet$ as follows:
the simplex category $\Delta$ of finite ordered sets and order preserving maps embeds into the 
category $\Cat$ of categories by sending a finite ordered set $S$ to the category which has one object
for every element of $S$ and exactly one morphism from $s_1\in S$ to $s_2\in S$ if and only if $s_1\leq 
s_2$. The \emph{simplicial nerve} of a category $\cC$ is the simplicial set $N\cC_\bullet \coloneqq \Hom_{\Cat}(\bullet, C)$. The simplicial nerve functor admits a left adjoint $h\colon \sSet \longrightarrow \Cat$ which can be constructed via left Kan extension 
\begin{equation}
\begin{tikzcd}
\Delta \ar[r] \ar[d] & \Cat \\
\sSet \ar[ru, "h", swap] & 
\end{tikzcd}
\end{equation}  
We denote by $\Pi \colon \sSet \longrightarrow \Grpd$ the composition of $h$ with the functor $\Cat \longrightarrow \Grpd$ which sends a category to the groupoid constructed by inverting all morphisms.            
The simplicial mapping space between two groupoids $\cG$ and $\cG'$ is $N([\cG,\cG'])$ where we denote 
by $[\cG,\cG']$ the category of functors from $\cG$ to $\cG'$.
Now we can describe the simplicial model category on $\Grpd$.
We denote by $\Delta_1$ and $\Delta_0$ the image of $[1]$ and $[0]$ under the embedding $\Delta \hookrightarrow \Cat$, respectively. Furthermore, let $I$ be the groupoid with two objects and
one isomorphisms between them. 
\begin{definition}
A functor $\cG \longrightarrow \cG'$ between groupoids is a 
\begin{itemize}
\item \emph{weak equivalence} if it is an equivalence of categories,

\item \emph{cofibration} if it is injective on objects,

\item \emph{fibration} if every diagram of the form 
\begin{equation}\label{Diagram: B. lifts}
\begin{tikzcd}
\Delta_0 \ar[r] \ar[d,"0", hookrightarrow]& \cG \ar[d] \\
\Delta_1 \ar[r] \ar[ru, dotted] & \cG' 
\end{tikzcd}
\end{equation}
admits a lift $\Delta_1\longrightarrow \cG$. 
\end{itemize} 
\end{definition}
This defines a simplical model structure on $\Grpd$, see e.g.~\cite{Hollander}.
\begin{remark}
The lift in \eqref{Diagram: B. lifts} is not required to be unique. There is a special class of functors
between groupoids which admits unique lifts: 
A surjective fibration $\cG\longrightarrow \cG'$ of groupoids for which all lifts in \eqref{Diagram: B. lifts} are unique is called a \emph{covering of groupoids}. Coverings will play an essential role in the study of integration over finite groupoids in Section \ref{Sec: Integration over finite groupoids}. Let $n$ be a natural number. An \emph{$n$-folded} covering is a covering where the fibre over every point contains $n$ elements.
\end{remark}

Being a simplicial model category, implies in particular that $\Grpd$ is  tensored and cotensored (see \cite[Section 3.7]{Riehl}) over $\sSet$, i.e.\
there are functors
\begin{align}
\begin{split}
\otimes \colon \sSet \times \Grpd & \longrightarrow \Grpd \\ 
S\times \cG & \longmapsto S\otimes \cG \coloneqq \Pi(S)\times \cG
\end{split}
\end{align}
and 
\begin{align}
\begin{split}
-^\cdot \colon \sSet^{\opp}\times \Grpd& \longrightarrow \Grpd \\ 
S\times G & \longmapsto G^S \coloneqq [\Pi(S),G]
\end{split}
\end{align}
such that there are natural isomorphism
\begin{align}
N([S\otimes \cG, \cG'])\cong \Map_\sSet (S, N([\cG,\cG'])) \cong N([\cG, \cG'^S])
\end{align}
defining a two-variable adjunction (~see e.g.\ \cite[Definition 4.1.12]{Hovey} for the definition).

(Co)limits in $\Grpd$ are in general not compatible with equivalences of categories: consider the equivalence of diagrams
\begin{equation}
\begin{tikzcd}
 & \Delta_0 \sqcup \Delta_0 \ar[dd] \ar[rr] & & \Delta_0 \\
\Delta_0 \sqcup \Delta_0 \ar[rr] \ar[ru] \ar[dd] & & \Delta_0 \ar[ru] & \\ 
 & \Delta_0 & \\
I \ar[ru] & & &
\end{tikzcd}
\end{equation}
The pushout of the diagram containing $I$ is the groupoid $*\DS \Z$, while the pushout of the 
other diagram is $\Delta_0$.  
Homotopy (co)limits solve this problem\footnote{Alternatively, the problem
could be solved by working in a 2-categorical setting; replacing (co)limts 
with their 2-categorical analogues.} by replace the categorical concept of (co)limits with a homotopy invariant definition. One way to
define homotopy (co)limits is via the introduction of a model structure on diagram categories,
such as the projective, injective or Reedy model structure \cite{Hovey, Riehl}. 
These model structures only exist under certain
conditions on the model category or the shape of the diagram category. 

The approach we use is via a concrete definition using the 
two-sided bar construction~\cite{Riehl}.       
Let $\cD$ be a small category. We now construct homotopy limits and colimits over $\cD$ as functors 
$\holim, \hocolim \colon \Grpd^{\cD} \longrightarrow \Grpd$, where we denote by $\Grpd^{\cD}$ the category
of diagrams of shape $\cD$ in $\Grpd$. Every object
in $\Grpd$ is fibrant  and cofibrant, meaning that the unique map to the terminal object
$1$ is a fibration and the unique map from the initial object $\emptyset$ is a cofibration. 
For this reason our formulas will not involve pointwise (co)fibrant replacements of diagrams. Let 
$F\colon \cD \longrightarrow \Grpd$ be a diagram in $\Grpd$.
We introduce the simplicial groupoid 
\begin{align}
B_n(\star, \cD, F) \coloneqq \coprod_{\vec{d}\colon [n] \rightarrow \cD} F(d_0) \ \ ,
\end{align}
where $\vec{d}$ is a string of morphisms $d_0\rightarrow d_1 \rightarrow \dots \rightarrow d_n$ in $\cD$. The degeneracy maps are induced 
from the degeneracy maps of the nerve of $\cD$ and reindexing the 
coproduct. The face maps are constructed from the face maps in the
nerve of $\cD$ and the map $F(d_0\rightarrow d_1)\colon F(d_0)\longrightarrow
F(d_1)$ for $\partial_0$. The \emph{homotopy colimit} of $F$ 
is the geometric realisation of $B_\bullet(\star, \cD, F)$, i.e.\ the coend
\begin{align}
\hocolim F = \int^{n\in \Delta} \Delta_n \otimes B_n(\star, \cD, F)
\end{align}   
where $\Delta_n$ corresponds to the Yoneda embedding $\Delta \longrightarrow
[\Delta^{\opp},\Set]$. 

To define homotopy limits we introduce the cosimplicial groupoid 
\begin{align}
C^n(\star, \cD, F) \coloneqq \prod_{\vec{d}\colon [n] \rightarrow \cD} F(d_n)
\end{align} 
with face and degeneracy maps dual to those of $B_\bullet (\star, \cD,F)$.
The \emph{homotopy limit} is the totalization of $C^\bullet(\star, \cD, F)$, 
i.e.\ the end 
\begin{align}
\holim F \coloneqq \int_{n\in \Delta} C^n(\star, \cD, F)^{\Delta_n} \ \ .
\end{align}
We conclude this section with an example relevant in the main 
text.

Consider the diagram $F$
\begin{equation}\label{Eq: Diagram pullback}
\begin{tikzcd} 
 & \cG \ar[d,"f_2"] \\
 \cG' \ar[r,"f_1",swap] & \cB
\end{tikzcd}
\end{equation}
in $\Grpd$. The nerve of the underlying diagram category $\cD$ is 1-skeletal.
This implies that the cosimplical object $C^\bullet(\star,\cD,F)$ is 1-skeletal~\cite[Example 6.5.2.]{Riehl} and hence we can compute its totalization as the
end over $\Delta[1]\subset \Delta$. To compute the end we use
\begin{align}
C^0(\star, \cD, F) = \cG\times \cB \times \cG' \text{ and } C^1(\star, \cD, F) = \cG\times \cB \times \cG'\times \cB_{f_1} \times \cB_{f_2} \ \ ,  
\end{align}
where the subscript denotes the morphism indexing the product (components without index
correspond to identity maps). The homotopy limit is the equalizer of the following
diagram (using Proposition~\ref{Prop: End as equalizer}) 
\begin{align}
C^0(\star, \cD, F)\times C^1(\star, \cD, F)^I \rightrightarrows \prod_{d\colon [i]\rightarrow [j]}^{i,j\leq 1} C^j(\star, \cD, F)^{\Pi(\Delta_i)} 
\end{align}
An element of $C^0(\star, \cD, F)\times C^1(\star, \cD, F)^I$ consists of elements $g\in \cG$, $b \in \cB$, $g'\in \cG'$
and morphisms $h\colon g_1\longrightarrow g_2 \in \cG$, $h_b \colon b_1\longrightarrow b_2 \in \cB$, $h'\colon g'_1\longrightarrow g'_2 \in \cG'$, $h_{f_1}\colon b_3 \longrightarrow b_4 \in \cB$ and 
$h_{f_2}\colon b_5 \longrightarrow b_6 \in \cB$. 
There are 3 morphisms in $\Delta[1]$ contributing non-trivially to 
the equalizer: $s_0 \colon [1]\longrightarrow [0]$, $d_0 \colon [0]\longrightarrow [1]$ and $d_1 \colon [0]\longrightarrow [1]$.
Evaluating the two morphisms corresponding to $s_0$ implies the relations $h=\id_g$, $h'=\id_{g'}$ and $h_b=\id_b$. 
Evaluating the maps corresponding to $d_0$ and $d_1$ implies $b_6=b_4=b$,  $f_1(g)=b_3$ and 
$f_2(g')=b_5$. After imposing all these conditions an element in the equalizer can be represented by the diagram
\begin{equation}\label{Eq: Element in homotopy pullback}
\begin{tikzcd}
 & f_2(g') \ar[d, "{h_{f_2}}"] \\ 
 f_1(g) \ar[r, "{h_{f_1}}",swap] & b 
\end{tikzcd} 
\end{equation}
A morphisms consists of three maps $g_1\longrightarrow g_2 $, $g'_1\longrightarrow g'_2 $ and $b_1\longrightarrow b_2$ 
making the obvious diagram commute. The information contained in the element $b$ is obsolete. We can define an equivalent
groupoid with objects pairs of elements $g \in \cG$ and $g'\in \cG'$ together with an isomorphism $f_2(g')\longrightarrow f_1(g)$. The equivalence between the groupoids sends \eqref{Eq: Element in homotopy pullback} 
to $(g,g', h_{f_1}^{-1} \circ h_{f_2})$.
We call this simplified groupoid the \emph{homotopy pullback} of the diagram \eqref{Eq: Diagram pullback} and denote it by $\cG \times_B \cG'$. There are projection maps $\cG \times_B \cG' \longrightarrow \cG$ and $\cG \times_B \cG' \longrightarrow \cG'$ fitting into a homotopy commutative square
\begin{equation}
\begin{tikzcd}
\cG \times_B \cG' \ar[r] \ar[d] & \cG' \ar[d, "{f_2}"] \\
\cG \ar[r, "{f_1}",swap] \ar[ru, Leftarrow, "h"] & B
\end{tikzcd} 
\end{equation}
The homotopy (natural isomorphism) $h$ is constructed from the morphisms $f_2(g')\longrightarrow f_1(g)$. 

Homotopy fibres can be defined as special cases of homotopy pullbacks: let $F\colon \cG \longrightarrow \cG'$ be a functor 
between groupoids and $g'\in \cG'$. The \emph{homotopy fibre of $g'$} $F^{-1}[g']$ is the homotopy pullback 
\begin{equation}
\begin{tikzcd}
F^{-1}[g'] \ar[r] \ar[d] & \cG \ar[d, "{f_2}"] \\
\Delta_0 \ar[r, "{g'}",swap] \ar[ru, Leftarrow, "h"] & \cG'
\end{tikzcd} 
\end{equation}

\begin{definition}
	Let $\cG$ be a groupoid. $\cG$ is \emph{essentially finite} if the set of isomorphism classes of objects
	$\pi_0(\cG)$ and all automorphism groups are finite. We denote by $\FinGrpd$ the category of essentially
	finite groupoids.  
\end{definition}   
For later use we record the following observation.
\begin{lemma}\label{Lemma: Coverings}
Let $F\colon \cG \longrightarrow \cG'$ be a functor between essentially finite groupoids, $g'\in \cG'$ and $\cG_{g'}$ the subgroupoid consisting of elements $g$ in $\cG$ such that $\F(g)$ isomorphic to $g'$. Then $F^{-1}[g']\longrightarrow \cG_{g'}$ is an $|\Aut(g')|$-fold covering. 
\end{lemma}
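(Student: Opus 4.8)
The plan is to use the explicit model of the homotopy fibre derived immediately before the statement and then verify directly the three defining properties of an $n$-fold covering: surjectivity on objects, unique lifting, and the cardinality of the fibres. First I would unpack $F^{-1}[g']$ by specialising the homotopy pullback computation above to the cospan $\Delta_0 \xrightarrow{g'} \cG' \xleftarrow{F} \cG$. Its objects are then pairs $(g,h)$ consisting of an object $g\in\cG$ together with an isomorphism $h\colon F(g)\to g'$ in $\cG'$, and a morphism $(g_1,h_1)\to(g_2,h_2)$ is a morphism $\gamma\colon g_1\to g_2$ of $\cG$ satisfying $h_2\circ F(\gamma)=h_1$. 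The functor under consideration is the projection $p\colon F^{-1}[g']\to\cG$, $(g,h)\mapsto g$, $\gamma\mapsto\gamma$; since the existence of $h$ forces $F(g)\cong g'$, its image lies in the full subgroupoid $\cG_{g'}$, and conversely every $g\in\cG_{g'}$ admits some isomorphism $h\colon F(g)\to g'$, so $p$ is surjective on objects onto $\cG_{g'}$.

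Next I would check the unique lifting property that defines a covering. Given an object $(g,h)$ of $F^{-1}[g']$ and a morphism $\gamma\colon g\to g_2$ in $\cG_{g'}$, a lift of $\gamma$ along $p$ is a morphism $(g,h)\to(g_2,h_2)$ projecting to $\gamma$; by the description of morphisms this amounts to choosing $h_2$ with $h_2\circ F(\gamma)=h$, which forces $h_2=h\circ F(\gamma)^{-1}$. Thus the lift exists and is unique, with target $(g_2,\, h\circ F(\gamma)^{-1})$. Phrased in terms of the lifting diagram defining fibrations, this is exactly the existence-and-uniqueness of the lift, so $p$ is a covering of groupoids.

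Finally I would count the fibres. The fibre $p^{-1}(g)$ over an object $g\in\cG_{g'}$ has as objects the pairs $(g,h)$ with $h\in\Hom_{\cG'}(F(g),g')$ an isomorphism, and its only morphisms are identities, since a morphism over $\id_g$ must have $\gamma=\id_g$ and hence $h_2=h_1$; so the fibre is a discrete set. Post-composition exhibits the set of such isomorphisms as a torsor under $\Aut_{\cG'}(g')$, nonempty precisely because $g\in\cG_{g'}$, whence $|p^{-1}(g)|=|\Aut(g')|$ for every object $g$, which is finite by essential finiteness. Therefore $p$ is an $|\Aut(g')|$-fold covering. The only genuine subtlety — the step I would treat most carefully — is matching the abstract definition of covering (a surjective fibration with \emph{unique} lifts, the $n$-fold condition being that every object-fibre is a discrete set of size $n$) against the concrete model, in particular confirming that the fibres carry no nonidentity morphisms; the remaining verifications are a direct unwinding of the definitions.
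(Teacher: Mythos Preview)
Your proof is correct. The paper actually states this lemma without proof, presenting it merely as an ``observation'' following the explicit description of homotopy fibres; your argument is exactly the direct verification one would supply, unpacking the homotopy-fibre model and checking surjectivity, unique lifting, and the fibre cardinality via the $\Aut(g')$-torsor structure on the set of isomorphisms $F(g)\to g'$.
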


\section{Stacks}\label{Sec: Stacks}
Let $G$ be a Lie group and $M$ a manifold.
Principal $G$-bundles on $M$ are local in the following sense:
let $\{ \mathcal{U}_i \}_{i\in I}$ be an open covering of $M$. Then from
principal bundles $P_i$ on $\mathcal{U}_i$ together with gauge 
transformations $\varphi_{ij}\colon P_i|_{\mathcal{U}_{ij}} \longrightarrow 
P_j|_{\mathcal{U}_{ij}}$ satisfying $\varphi_{ik}= \varphi_{jk}\circ \varphi_{ij}$ on 
$\mathcal{U}_{ijk}$ with $\mathcal{U}_{ij}\coloneqq \mathcal{U}_i \cap \mathcal{U}_j$ and 
$\mathcal{U}_{ijk}\coloneqq \mathcal{U}_i \cap \mathcal{U}_j \cap \mathcal{U}_k$, 
we can construct a principal $G$-bundle on $M$. Furthermore, all principal bundles 
can be constructed up to gauge
transformation from local data. 
This can be captured by the statement that the natural functor
\begin{align}
\Bun_G(M) \longrightarrow \Desc (\{ \mathcal{U}_i \})
\end{align}
is an equivalence of groupoids, where $\Desc (\{ \mathcal{U}_i \})$ is the groupoid of
local data $\{ P_i, \varphi_{ij} \}$ with respect to the open cover $\mathcal{U}_i$.
A stack is an abstraction of this locality property. Stacks are defined over categories 
with a notion of ``coverings". These categories are called sites.   
\begin{definition}
Let $\cC$ be a category. A \emph{Grothendieck topology}\footnote{What we define is 
sometimes called a pretopology.} on $\cC$ is a class of morphisms
in $\cC$ called coverings such that
\begin{itemize}
\item every isomorphism is a covering,

\item 
the pullback of a covering along an arbitrary morphisms in $\cC$ exist and is again a covering, and

\item the composition of coverings is a covering.
\end{itemize} 
A category together with the choice of a Grothendieck topology is called a \emph{site}.
\end{definition}   

\begin{example}
We are mostly interested in the following example. Let $n$ be a positive integer and
$\Man_n$ the category of $n$-dimensional manifolds (with corners). 
There is a Grothendieck topology on $\Man_n$ where a covering of a manifold
$M\in \Man_n$ is a map of the form $Y\overset{\cong}{\longrightarrow} U=\coprod_{i\in I} \mathcal{U}_i \longrightarrow M$ where the first morphism is an isomorphism and $\{ \mathcal{U}_i \}_{i\in I}$ is an open covering of $M$. 

There is a different Grothendieck topology on $\Man_n$ where the coverings are
surjective submersions. These two Grothendieck topology are equivalent in an 
appropriate sense~\cite{Vistoli}. This implies in particular that the notion of a stack 
with respect to both Grothendieck topology agrees. 
\end{example}

Let $f\colon M \longrightarrow N$ be a smooth map between smooth manifolds. The map 
$f$ induces a pullback functor $f^* \colon \Bun_G(N) \longrightarrow \Bun_G(M)$. 
For two composable morphisms $f$ and $g$ the functors $(f\circ g)^*$ and $g^* \circ f^*$
are only isomorphic, or in different words 
$\Bun_G(\cdot)\colon \Man^{\opp}\longrightarrow \Grpd$ is only a 2-functor, where 
$\Man^{\opp}$ is considered as a 2-category with only identity 2-morphisms.

\begin{definition}
Let $\cC$ be a category. A \emph{pre-stack} $\F$ on $\cC$ is a 2-functor 
\begin{align}
\F \colon \cC^{\opp} \longrightarrow \Grpd \ \ .
\end{align}  
\end{definition} 
The following definition should be understood as a generalization of the 
category $\Desc (\{ \mathcal{U}_i\})$ to pre-stacks on a site.

\begin{definition}
Let $\cC$ be a site, $\F\colon \cC^{\opp}\longrightarrow \Grpd$ a pre-stack on $\cC$
and $\pi \colon Y\longrightarrow X$ a covering. The covering $\pi$ allows us to 
define the following simplicial object
\begin{equation}
\begin{tikzcd}
\cdots Y^{[3]} \ar[r, shift left] \ar[r, shift right] \ar[r] &  Y^{[2]}\ar[r, shift left, "\partial_0"] \ar[r, shift right, "\partial_1",swap]  & Y 
\end{tikzcd}
\end{equation}
where $Y^{[n]}$ is the iterated fibre product $Y\times_X Y \dots \times_X Y $. 
The \emph{descent category $\Desc_\F(Y)$} has as objects pairs $(f_Y, \varphi \colon 
\partial_1^* f_Y \longrightarrow \partial_0^*f_Y)$, where $f_Y$ is an element
of $\F(Y)$ and $\varphi$ is a morphism $\F(\partial_1)[f_Y]\longrightarrow \F(\partial_0)[f_Y]$
in $\F[Y^{[2]}]$ satisfying (suppressing coherence isomorphisms) $ \F(\partial_2)[\varphi] \circ \F(\partial_0)[\varphi] 
= \F(\partial_1)[\varphi] $ in $\F(Y^{[3]})$.

A morphism  $(f_Y, \varphi \colon 
\partial_1^* f_Y \longrightarrow \partial_0^*f_Y) \longrightarrow 
 (f'_Y, \varphi' \colon 
\partial_1^* f'_Y \longrightarrow \partial_0^*f'_Y)$ in $\Desc_\F(Y)$ consists of 
a morphism $g\colon f \longrightarrow f'$  such that (suppressing coherence isomorphisms)
$\F(\partial_0)[g] \circ \varphi= \varphi \circ  \F(\partial_1)[g]$ holds in $\F(Y^{[2]})$.
\end{definition}

\begin{remark}
For a pre-stack $\F \colon \cC^{opp} \longrightarrow \Grpd$, which is a strict 2-functor, the 
homotopy limit of the diagram 
\begin{equation}
\begin{tikzcd}
\cdots \F(Y^{[3]}) \ar[r, shift left, leftarrow] \ar[r, shift right, leftarrow] \ar[r, leftarrow] & \F( Y^{[2]})\ar[r, shift left, leftarrow] \ar[r, shift right, leftarrow]  & \F(Y) 
\end{tikzcd}
\end{equation}
is the descent category $\Desc_\F(Y)$. It is possible to strictify a pre-stack, hence for 
theoretical discussions one can restrict to strict functors. In this set up it is possible to 
define a model structure on $[\cC^{\opp}, \Grpd]$ related to stacks~\cite{Hollander}.
However, the disadvantage of working with strict pre-stacks is that most examples
do not naturally appear as strict functors. For this reason, we continue to work with 2-functors.
\end{remark}

Now we can define stacks.
\begin{definition}
Let $\cC$ be a site. A pre-stack $\F\colon \cC^{\opp} \longrightarrow \Grpd$ is a \emph{stack}
if for every covering $\pi \colon Y \longrightarrow X$ the canonical map
$\F(X) \longrightarrow \Desc(Y)$ is an equivalence of categories. 
\end{definition}

\begin{example}
\begin{itemize}
\item Every sheaf $\F\colon \cC^{\opp} \longrightarrow \Set$ is a stack by considering 
a set as a groupoid with only identity morphisms. 

\item 
Let $\Man_n$ be the category of $n$-dimensional manifolds and $G$ a Lie group.
Principal $G$-bundles with and without connections define stacks 
$\Bun_G(\cdot)\colon \Man_n^{\opp} \longrightarrow \Grpd$ and 
$\Bun_G^\nabla(\cdot)\colon \Man_n^{\opp} \longrightarrow \Grpd$.

\item For some types of geometric structure it might be necessary 
to restrict to a subclass of morphisms of $\Man_n$. For example, 
orientations can only be pulled back along local diffeomorphisms. 
Throughout this thesis we do not specify the kind of morphisms 
we restrict to explicitly. They should be clear from the context.  

\end{itemize}
\end{example}

\begin{remark}
When considering a stack $\F$ we implicitly pick the following
additional structure: for every surjective submersion $\pi \colon Y\longrightarrow M$, weak adjoint inverses to the canonical map $\mathscr{F}(M) \longrightarrow \mathsf{Desc}_\mathscr{F}(Y)$ where $\mathsf{Desc}_\mathscr{F}(Y)$ is the category of descent data associated to $\pi$. For every refinement 
\begin{equation}
\begin{tikzcd}
Y_1 \ar[dd,swap,"{f}"] \ar[dr,"{\pi_1}"]  & \\
 & M \\
Y_2 \ar[ru,swap, "{\pi_2}"] &
\end{tikzcd}
\end{equation} 
we get a natural functor $f^\ast \colon \mathsf{Desc}_\mathscr{F}(Y_2)\rightarrow \mathsf{Desc}_\mathscr{F}(Y_1)$ for which we pick a weak adjoint inverse. The adjointness condition is essential for ensuring naturality of constructions using descent properties.   
\end{remark}

\section{Integration over essentially finite groupoids}\label{Sec: Integration over finite groupoids}
The space of field configuration in a gauge theory is a groupoid. Morphisms correspond to gauge transformations. When performing the path integral one has to carefully take these internal symmetries
into account. For essentially finite groupoids, there exists a well defined integration theory, which we 
use in Chapter \ref{Chapter: t Hooft} to construct finite gauge theories. 
In the following we review the integration over essentially groupoids following Appendix A. of \cite{OFK}.

For essentially finite groupoids there exists a canonical notion of cardinality.
\begin{proposition/definition}
There exists a unique map $|\cdot|\colon \Obj(\FinGrpd)\longrightarrow \Q$ satisfying
\begin{itemize}
\item[(G1)] The equation $|\Delta_0|=1$ holds.
\item[(G2)] For equivalent essentially finite groupoids $\cG$ and $\cG'$. The equation $|\cG|=|\cG'|$ holds. 
\item[(G3)] For groupoids $\cG, \cG'\in \FinGrpd$ we denote by $\cG \sqcup \cG'$ there disjoint union. The equation $|\cG\sqcup \cG'|=|\cG|+|\cG'|$ holds. 

\item[(G4)]
Let $\cG\longrightarrow \cG'$ be an $n$-fold covering of groupoids. The equation $|\cG|=n\cdot |\cG'|$
holds.
\end{itemize}
We call $|\cdot|$ the \emph{groupoid cardinality}. Concretely, $|\cG|$ is given by 
\begin{align}\label{Eq: Formel Groupoid Card}
|\cG|= \sum_{x\in \pi_0(\cG)} \frac{1}{|\Aut(x)|} ,
\end{align}
where $|\Aut (x)|$ denotes the cardinality of the automorphism group of an arbitrary representative for the isomorphism class $x$. 
Moreover, the groupoid cardinality satisfies $|\cG\times \cG'|=|\cG|\times|\cG'|$.
\end{proposition/definition}
\begin{proof}
Condition $(G2)$ and $(G3)$ imply that $|\cdot|$ is completely determined by its values on groupoids of the from $\star \DS \G $ with one object $\star$ and a finite group $\G$ as automorphisms. Consider the
$|\G|$-fold covering $\EG \longrightarrow \star \DS \G$, where $\EG$ is the action groupoid corresponding
to the action of $\G$ on itself via left multiplication. $\EG$ is contractible and hence we find 
$|\star\DS G|= \frac{1}{|G|}$. \\
The equation $|\cG\times \cG'|=|\cG|\times|\cG'|$ follows from a direct calculation using \eqref{Eq: Formel Groupoid Card}.   
\end{proof}

The groupoid cardinality induces a natural counting measure which can be used to integrate gauge invariant functions over essentially finite groupoids.
\begin{definition}
Let $\cG$ be an essentially finite groupoid. A function $f\colon \Obj (\cG)\longrightarrow \C$ is \emph{gauge invariant} if it is constant on isomorphism classes. The \emph{integral of $f$ over $\cG$} is
\begin{align}
\int_\cG f = \int_\cG f(x) \dd x  \coloneqq \sum_{x\in \pi_0(\cG)}\frac{f(x)}{|\Aut (x)|} \ \ .
\end{align}
\end{definition}
There are results for integration over essentially finite groupoids which are analogues of statements in Lebesgue integration theory. We conclude this section by proving these results.
\begin{proposition}[Cavalieri’s principle]\label{Prop: Cavalieri principle}
Let $\phi\colon \cG \longrightarrow \cG'$ be a functor of essentially finite groupoids. Then 
\begin{align}
|\cG|= \int_{\cG'}|\phi^{-1}[x]| \dd x \ \ .
\end{align} 
\end{proposition}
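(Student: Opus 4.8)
The plan is to reduce both sides to a common sum indexed by the isomorphism classes of $\cG'$, using the covering statement of Lemma~\ref{Lemma: Coverings} to control the groupoid cardinality of the homotopy fibres. First I would unpack the right-hand side. By the definition of the integral over an essentially finite groupoid,
\begin{align}
\int_{\cG'}|\phi^{-1}[x]| \dd x = \sum_{x\in \pi_0(\cG')} \frac{|\phi^{-1}[x]|}{|\Aut(x)|} \ ,
\end{align}
where for each isomorphism class $x\in \pi_0(\cG')$ we fix a representative and form the homotopy fibre $\phi^{-1}[x]$.

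The key step is to evaluate $|\phi^{-1}[x]|$. Let $\cG_x\subseteq \cG$ denote the full subgroupoid on those objects $g$ with $\phi(g)$ isomorphic to $x$, exactly as in Lemma~\ref{Lemma: Coverings}. That lemma provides an $|\Aut(x)|$-fold covering $\phi^{-1}[x]\longrightarrow \cG_x$, so property (G4) of the groupoid cardinality gives $|\phi^{-1}[x]| = |\Aut(x)|\cdot |\cG_x|$. Substituting this into the sum above, the factors of $|\Aut(x)|$ cancel and we obtain
\begin{align}
\int_{\cG'}|\phi^{-1}[x]| \dd x = \sum_{x\in \pi_0(\cG')} |\cG_x| \ .
\end{align}

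It remains to identify this with $|\cG|$. Since $\phi$ is a functor, it sends isomorphic objects to isomorphic objects, so the assignment $g\mapsto [\phi(g)]\in \pi_0(\cG')$ is constant on isomorphism classes of $\cG$. Hence the subgroupoids $\cG_x$ (one for each $x\in \pi_0(\cG')$) are full subgroupoids with disjoint object sets whose union is all of $\Obj(\cG)$; that is, $\cG\cong \bigsqcup_{x\in\pi_0(\cG')}\cG_x$ as groupoids. Additivity (G3) of the groupoid cardinality then yields $|\cG| = \sum_{x}|\cG_x|$, which together with the previous display proves the claim.

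The only genuinely non-formal input is the covering statement: one must know that the homotopy fibre $\phi^{-1}[x]$, whose objects are pairs of an object $g\in\cG$ together with an isomorphism relating $\phi(g)$ and $x$, maps to $\cG_x$ by an $|\Aut(x)|$-fold covering — the fibre over $g$ being an $\Aut(x)$-torsor of such isomorphisms. This is precisely the content of Lemma~\ref{Lemma: Coverings}, which I am entitled to assume, so once it is invoked the remaining manipulations are purely formal consequences of the axioms (G3) and (G4). The main thing to be careful about is therefore bookkeeping: that the chosen representatives make $\phi^{-1}[x]$ and $\cG_x$ well defined up to equivalence and that the decomposition of $\cG$ into the $\cG_x$ is exhaustive and disjoint on isomorphism classes, both of which follow immediately from functoriality of $\phi$ and invariance of $|\cdot|$ under equivalence (G2).
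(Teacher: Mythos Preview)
Your proof is correct and follows essentially the same approach as the paper: both use Lemma~\ref{Lemma: Coverings} to identify $|\phi^{-1}[x]| = |\Aut(x)|\cdot|\cG_x|$ via property (G4), then decompose $\cG$ as $\bigsqcup_x \cG_x$ and apply additivity (G3). The only cosmetic difference is that the paper first reduces without loss of generality to the case $\cG' = \star\DS\G$ and then treats a single component, whereas you keep the full sum over $\pi_0(\cG')$ throughout; the content is identical.
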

\begin{proof}
Without loss of generality we can assume that $\cG'$ is $\star \DS \G$. By Lemma \ref{Lemma: Coverings} the forgetful functor $\phi^{-1}[\star]\longrightarrow \cG$ is an $|G|$ fold covering. (G3) now implies
\begin{align}
\int_{\star\DS \G} |\phi^{-1}[\star]| = \frac{|\phi^{-1}[\star]|}{|\G|}= \frac{|\cG||G|}{|\G|}=|\cG| \ \ .
\end{align}
\end{proof}

There is a slight generalisation of Cavalieri’s principle, which turns out to be useful in practice.

\begin{proposition}[Generalised Cavalieri’s principle]\label{Prop: Generalised Cavalieri principle}
Let $\phi\colon \cG \longrightarrow \cG'$ be a functor of essentially finite groupoids and $f\colon \Obj (\cG)\longrightarrow \C$ a gauge invariant function. Then 
\begin{align}
\int_\cG f(x) \dd x = \int_{\cG'}\left(\int_{\phi^{-1}[y]}q_y^*f(x)\dd x \right) \dd y \ \ ,
\end{align} 
where $q_y\colon \phi^{-1}[y] \longrightarrow \cG$ is the obvious forgetful functor.
\end{proposition}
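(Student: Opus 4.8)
The plan is to deduce the generalised statement from the ordinary Cavalieri's principle (Proposition~\ref{Prop: Cavalieri principle}) by decomposing $\cG$ into its connected components, on each of which the gauge-invariant function $f$ is constant. I would write $\cG = \bigsqcup_{x\in \pi_0(\cG)} \cG_x$, where $\cG_x$ is the full subgroupoid on the isomorphism class $x$, and set $\phi_x \coloneqq \phi|_{\cG_x}\colon \cG_x \longrightarrow \cG'$. Since the integral over an essentially finite groupoid is by definition a sum over $\pi_0$, it is additive under disjoint unions, so the left-hand side becomes $\sum_{x\in\pi_0(\cG)} \int_{\cG_x} f = \sum_x f(x)\,|\cG_x|$, using that $f\equiv f(x)$ on $\cG_x$.

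For the right-hand side I would first record two preliminary facts. First, $q_y^\ast f$ is again gauge invariant on $\phi^{-1}[y]$: an isomorphism $(g,h)\to(g',h')$ in the homotopy fibre restricts to an isomorphism $g\to g'$ in $\cG$, whence $f(g)=f(g')$; in particular the inner integral is well defined. Second, taking homotopy fibres is compatible with the decomposition into components: since every object $(g,h)$ of $\phi^{-1}[y]$ has $g$ in a unique component $\cG_x$, and morphisms preserve this, there is a canonical isomorphism of groupoids $\phi^{-1}[y] \cong \bigsqcup_{x\in\pi_0(\cG)} \phi_x^{-1}[y]$. I would also note that all homotopy fibres occurring here are essentially finite (their $\pi_0$ is finite, and all automorphism groups are subgroups of automorphism groups in $\cG$), so that every integral below is defined.

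Combining these, the inner integral splits as $\int_{\phi^{-1}[y]} q_y^\ast f\,\dd x = \sum_x \int_{\phi_x^{-1}[y]} q_y^\ast f\,\dd x = \sum_x f(x)\,|\phi_x^{-1}[y]|$, again using that $q_y^\ast f$ equals the constant $f(x)$ on $\phi_x^{-1}[y]$. Integrating over $\cG'$ and exchanging the finite sum with the integral then gives $\sum_x f(x)\int_{\cG'} |\phi_x^{-1}[y]|\,\dd y$, and applying the ordinary Cavalieri's principle (Proposition~\ref{Prop: Cavalieri principle}) to each $\phi_x$ turns this into $\sum_x f(x)\,|\cG_x|$, which matches the left-hand side.

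The routine calculations are the reductions just described; the only genuinely delicate point is the compatibility $\phi^{-1}[y]\cong\bigsqcup_x\phi_x^{-1}[y]$ of homotopy fibres with connected components, together with the verification that the integrands remain gauge invariant after pullback along the forgetful functors $q_y$. Here I expect to need the explicit model for the homotopy fibre (objects $(g,h\colon \phi(g)\to y)$ with the morphisms described in Appendix~\ref{Sec: Homotop Groupoid}), rather than any abstract property, in order to make the identification on the nose and to confirm that the maps $q_y$ intertwine the two descriptions.
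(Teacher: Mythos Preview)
Your proposal is correct and follows essentially the same route as the paper: the paper phrases the reduction by writing $f$ as a linear combination of delta functions $\delta_x$ and then observing that for $f=\delta_x$ both sides become the ordinary Cavalieri's principle applied to $\phi|_{\cG_x}\colon\cG_x\to\cG'$, which is exactly your component-wise decomposition viewed dually. Your version is more explicit about the compatibility $\phi^{-1}[y]\cong\bigsqcup_x\phi_x^{-1}[y]$ and the gauge invariance of $q_y^\ast f$, but the paper leaves these as implicit in the one-line reduction.
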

\begin{proof}
Every gauge invariant function can be written as a linear combination of delta functions 
\begin{align}
\delta_x\colon \cG \longrightarrow \C \ , \ \ \
g \longmapsto \begin{cases}
1 \ , \ \text{if } g\cong x \\
0 \ , \ \text{otherwise}
\end{cases} \ \ .
\end{align}
Hence it is enough to prove the statement for delta functions. In this case the statement reduces 
to Proposition \ref{Prop: Cavalieri principle} applied to the groupoid $\cG_x$ of elements isomorphic to $x$. 
\end{proof}

\bibliography{Quellen.bib}  

\bibliographystyle{latexeu}

\clearpage

\end{document}